\newcommand{\bR}{\mathbb{R}}
\newcommand{\mD}{\mathcal{D}}
\newcommand{\mL}{\mathcal{L}}
\newcommand{\mO}{\mathcal{O}}
\newcommand{\mG}{\mathcal{G}}
\newcommand{\mN}{\mathcal{N}}
\newcommand{\mP}{\mathcal{P}}
\newcommand{\mQ}{\mathcal{Q}}
\newcommand{\mR}{\mathcal{R}}
\newcommand{\mW}{\mathcal{W}}
\newcommand{\mU}{\mathcal{U}}
\newcommand{\TV}{d_{\textnormal{TV}}}
\newcommand{\KL}{d_{\textnormal{KL}}}
\newcommand{\bP}{\mathbb{P}}
\newcommand{\bE}{\mathbb{E}}
\newcommand{\pr}[1]{\textsc{#1}}
\newtheorem*{rep@theorem}{\rep@title}
\newcommand{\newreptheorem}[2]{%
\newenvironment{rep#1}[1]{%
 \def\rep@title{#2 \ref{##1}}%
 \begin{rep@theorem}}%
 {\end{rep@theorem}}}
\newtheorem{question}{Question}[section]
\newtheorem{condition}{Condition}[section]
\newtheorem{theorem}{Theorem}[section]
\newtheorem{definition}[theorem]{Definition}
\newtheorem{fact}[theorem]{Fact}
\newtheorem{proposition}[theorem]{Proposition}
\newtheorem{corollary}[theorem]{Corollary}
\newtheorem{conjecture}[theorem]{Conjecture}
\newtheorem{lemma}[theorem]{Lemma}
\newcommand{\la}{\langle}
\newcommand{\ra}{\rangle}
\newcommand{\cS}{\mathcal{S}}
\newcommand{\cF}{\mathcal{F}}
\newcommand{\cD}{\mathcal{D}}
\newcommand{\cZ}{\mathcal{Z}}
\newcommand{\cX}{\mathcal{X}}
\newcommand{\Dh}{\widehat{D}}
\renewcommand{\P}{\mathbb{P}}
\newcommand{\E}{\mathbb{E}}
\newcommand{\Ot}{\tilde{O}}
\newcommand{\inv}{^{-1}}
\newcommand{\lr}{\mathsf{LR}}
\newcommand{\lrd}{\lr^{\leq D}}
\newcommand{\kpc}{{$k$-\textsc{pc}}}
\newenvironment{fminipage}%
  {\begin{Sbox}\begin{minipage}}%
  {\end{minipage}\end{Sbox}\fbox{\TheSbox}}
\newenvironment{algbox}[0]{\vskip 0.2in
\noindent 
\begin{fminipage}{6.3in}
}{
\end{fminipage}
\vskip 0.2in
}
\begin{document}

\title{Reducibility and Statistical-Computational Gaps \\ from Secret Leakage}

\author{Matthew Brennan
\thanks{Massachusetts Institute of Technology. Department of EECS. Email: \texttt{brennanm@mit.edu}.}
\and 
Guy Bresler
\thanks{Massachusetts Institute of Technology. Department of EECS. Email: \texttt{guy@mit.edu}.}
}
\date{\today}

\maketitle

\begin{abstract}
Inference problems with conjectured statistical-computational gaps are ubiquitous throughout modern statistics, computer science, statistical physics and discrete probability. While there has been success evidencing these gaps from the failure of restricted classes of algorithms, progress towards a more traditional reduction-based approach to computational complexity in statistical inference has been limited. These average-case problems are each tied to a different natural distribution, high-dimensional structure and conjecturally hard parameter regime, leaving reductions among them technically challenging. Despite a flurry of recent success in developing such techniques, existing reductions have largely been limited to inference problems with similar structure -- primarily mapping among problems representable as a sparse submatrix signal plus a noise matrix, which is similar to the common starting hardness assumption of planted clique (\pr{pc}).

The insight in this work is that a slight generalization of the planted clique conjecture -- secret leakage planted clique ($\pr{pc}_\rho$), wherein a small amount of information about the hidden clique is revealed -- gives rise to a variety of new average-case reduction techniques, yielding a web of reductions relating statistical problems with very different structure. Based on generalizations of the planted clique conjecture to specific forms of $\pr{pc}_\rho$, we deduce tight statistical-computational tradeoffs for a diverse range of problems including robust sparse mean estimation, mixtures of sparse linear regressions, robust sparse linear regression, tensor PCA, variants of dense $k$-block stochastic block models, negatively correlated sparse PCA, semirandom planted dense subgraph, detection in hidden partition models and a universality principle for learning sparse mixtures. This gives the first reduction-based evidence supporting a number of statistical-computational gaps observed in the literature \cite{li2017robust, balakrishnan2017computationally, diakonikolas2017statistical, chen2016statistical, hajek2015computational, brennan2018reducibility, fan2018curse, liu2018high, richard2014statistical, hopkins2015tensor, wein2019kikuchi, azizyan2013minimax, verzelen2017detection}.

We introduce a number of new average-case reduction techniques that also reveal novel connections to combinatorial designs based on the incidence geometry of $\mathbb{F}_r^t$ and to random matrix theory. In particular, we show a convergence result between Wishart and inverse Wishart matrices that may be of independent interest. The specific hardness conjectures for $\pr{pc}_\rho$ implying our statistical-computational gaps all are in correspondence with natural graph problems such as $k$-partite, bipartite and hypergraph variants of $\pr{pc}$. Hardness in a $k$-partite hypergraph variant of $\pr{pc}$ is the strongest of these conjectures and sufficient to establish all of our computational lower bounds. We also give evidence for our $\pr{pc}_\rho$ hardness conjectures from the failure of low-degree polynomials and statistical query algorithms. Our work raises a number of open problems and suggests that previous technical obstacles to average-case reductions may have arisen because planted clique is not the right starting point. An expanded set of hardness assumptions, such as $\pr{pc}_\rho$, may be a key first step towards a more complete theory of reductions among statistical problems.
\end{abstract}

\pagebreak

\tableofcontents

\pagebreak

\part{Summary of Results}
\label{part:intro}

\section{Introduction}
\label{sec:1-intro}

Computational complexity has become a central consideration in statistical inference as focus has shifted to high-dimensional structured problems. A primary aim of the field of mathematical statistics is to determine how much data is needed for various estimation tasks, and to analyze the performance of practical algorithms. For a century, the focus has been on \emph{information-theoretic} limits. However, the study of high-dimensional structured estimation problems over the last two decades has revealed that the much more relevant quantity -- the amount of data needed by \emph{computationally efficient} algorithms -- may be significantly higher than what is achievable without computational constraints. These \emph{statistical-computational gaps} were first observed to exist more than two decades ago \cite{valiant1984theory,servedio1999computational,decatur2000computational} but only recently have emerged as a trend ubiquitous in problems throughout modern statistics, computer science, statistical physics and discrete probability \cite{bottou2008tradeoffs,chandrasekaran2013computational,jordan2015machine}. Prominent examples arise in estimating sparse vectors from linear observations, estimating low-rank tensors, community detection, subgraph and matrix recovery problems, random constraint satisfiability, sparse principal component analysis and robust estimation.

Because statistical inference problems are formulated with probabilistic models on the observed data, there are natural barriers to basing their computational complexity as average-case problems on worst-case complexity assumptions such as $\text{P}\neq \text{NP}$ \cite{feigenbaum1993random, bogdanov2006worst,applebaum2008basing}. To cope with this complication, a number of different approaches have emerged to provide evidence for conjectured statistical-computational gaps. These can be roughly classified into two categories:
\begin{enumerate}
\item \textbf{Failure of Classes of Algorithms:} Showing that powerful classes of efficient algorithms, such as statistical query algorithms, the sum of squares (SOS) hierarchy and low-degree polynomials, fail up to the conjectured computational limit of the problem. 
\item \textbf{Average-Case Reductions:} The traditional complexity-theoretic approach showing the existence of polynomial-time reductions relating statistical-computational gaps in problems to one another.
\end{enumerate}
The line of research providing evidence for statistical-computational gaps through the failure of powerful classes of algorithms has seen a lot of progress in the past few years. A breakthrough work of \cite{barak2016nearly} developed the general technique of pseudocalibration for showing SOS lower bounds, and used this method to prove tight lower bounds for planted clique (\pr{pc}). In \cite{hopkinsThesis}, pseudocalibration motivated a general conjecture on the optimality of low-degree polynomials for hypothesis testing that has been used to provide evidence for a number of additional gaps \cite{hopkins2017efficient,kunisky2019notes,bandeira2019computational}. There have also been many other recent SOS lower bounds \cite{grigoriev2001linear,deshpande2015improved,ma2015sum,meka2015sum,kothari2017sum,hopkins2018integrality,raghavendra2018high,hopkins2017power,mohanty2019lifting}. Other classes of algorithms for which there has been progress in a similar vein include statistical query algorithms \cite{feldman2013statistical,feldman2015complexity,diakonikolas2017statistical,diakonikolas2019efficient}, classes of circuits \cite{razborov1997natural,rossman2008constant,rossman2014monotone}, local algorithms \cite{gamarnik2017limits,linial1992locality} and message-passing algorithms \cite{zdeborova2016statistical,lesieur2015mmse,lesieur2016phase,krzakala2007gibbs,ricci2018typology,bandeira2018notes}. Another line of work has aimed to provide evidence for computational limits by establishing properties of the energy landscape of solutions that are barriers to natural optimization-based approaches \cite{achlioptas2008algorithmic, gamarnik2017high,arous2017landscape, arous2018algorithmic,ros2019complex,chen2019suboptimality, gamarnik2019landscape}. 

While there has been success evidencing statistical-computational gaps from the failure of these classes of algorithms, progress towards a traditional reduction-based approach to computational complexity in statistical inference has been more limited. This is because reductions between average-case problems are more constrained and overall very different from reductions between worst-case problems. Average-case combinatorial problems have been studied in computer science since the 1970's \cite{karp1977probabilistic,kuvcera1977expected}. In the 1980's, Levin introduced his theory of average-case complexity \cite{levin1986average}, formalizing the notion of an average-case reduction and obtaining abstract completeness results. Since then, average-case complexity has been studied extensively in cryptography and complexity theory. A survey of this literature can be found in \cite{bogdanov2006average} and \cite{goldreich2011notes}. As discussed in \cite{Barak2017} and \cite{goldreich2011notes}, average-case reductions are notoriously delicate and there is a lack of available techniques. Although technically difficult to obtain, average-case reductions have a number of advantages over other approaches. Aside from the advantage of being future-proof against new classes of algorithms, showing that a problem of interest is hard by reducing from $\pr{pc}$ effectively \emph{subsumes} hardness for classes of algorithms known to fail on $\pr{pc}$ and thus gives stronger evidence for hardness. Reductions preserving gaps also directly relate phenomena across problems and reveal insights into how parameters, hidden structures and noise models correspond to one another.

Worst-case reductions are only concerned with transforming the \emph{hidden structure} in one problem to another. For example, a worst-case reduction from $\pr{3-sat}$ to $k\pr{-independent-set}$ needs to ensure that the hidden structure of a satisfiable $\pr{3-sat}$ formula is mapped to a graph with an independent set of size $k$, and that an unsatisfiable formula is not. Average-case reductions need to not only transform the structure in one problem to that of another, but also precisely map between the \emph{natural distributions} associated with problems. In the case of the example above, all classical worst-case reductions use gadgets that map random $\pr{3-sat}$ formulas to a very unnatural distribution on graphs. Average-case problems in statistical inference are also fundamentally \emph{parameterized}, with parameter regimes in which the problem is information-theoretically impossible, possible but conjecturally computationally hard and computationally easy. To establish the strongest possible lower bounds, reductions need to exactly fill out one of these three parameter regimes -- the one in which the problem is conjectured to be computationally hard. These subtleties that arise in devising average-case reductions will be discussed further in Section \ref{subsec:1-desiderata}.

Despite these challenges, there has been a flurry of recent success in developing techniques for average-case reductions among statistical problems. Since the seminal paper of \cite{berthet2013complexity} showing that a statistical-computational gap for a distributionally-robust formulation of sparse PCA follows from the \pr{pc} conjecture, there have been a number of average-case reductions among statistical problems. Reductions from $\pr{pc}$ have been used to show lower bounds for RIP certification \cite{wang2016average, koiran2014hidden}, biclustering detection and recovery \cite{ma2015computational, cai2015computational, caiwu2018, brennan2019universality}, planted dense subgraph \cite{hajek2015computational,brennan2019universality}, testing $k$-wise independence \cite{alon2007testing}, matrix completion \cite{chen2015incoherence} and sparse PCA \cite{berthet2013optimal, berthet2013complexity, wang2016statistical, gao2017sparse, brennan2019optimal}. Several reduction techniques were introduced in \cite{brennan2018reducibility}, providing the first web of average-case reductions among a number of problems involving sparsity. More detailed surveys of these prior average-case reductions from $\pr{pc}$ can be found in the introduction section of \cite{brennan2018reducibility} and in \cite{wu2018statistical}. There also have been a number of average-case reductions in the literature starting with different assumptions than the \pr{pc} conjecture. Hardness conjectures for random CSPs have been used to show hardness in improper learning complexity \cite{daniely2014average}, learning DNFs \cite{daniely2016complexityDNF} and hardness of approximation \cite{feige2002relations}. Recent reductions also map from a 3-uniform hypergraph variant of the \pr{pc} conjecture to SVD for random 3-tensors \cite{zhang2017tensor} and between learning two-layer neural networks and tensor decomposition \cite{mondelli2018connection}.

A common criticism to the reduction-based approach to computational complexity in statistical inference is that, while existing reductions have introduced nontrivial techniques for mapping precisely between different natural distributions, they are not yet capable of transforming between problems with dissimilar \emph{high-dimensional structures}. In particular, the vast majority of the reductions referenced above map among problems representable as a \emph{sparse submatrix signal plus a noise matrix}, which is similar to the common starting hardness assumption \pr{pc}. Such a barrier would be fatal to a satisfying reduction-based theory of statistical-computational gaps, as the zoo of statistical problems with gaps contains a broad range of very different high-dimensional structures. This leads directly to the following central question that we aim to address in this work.

\begin{question}
Can statistical-computational gaps in problems with different high-dimensional structures be related to one another through average-case reductions?
\end{question}

\subsection{Overview}
\label{subsec:1-overview}

\begin{figure}[t!]
	\begin{center}
	\includegraphics[width=\textwidth]{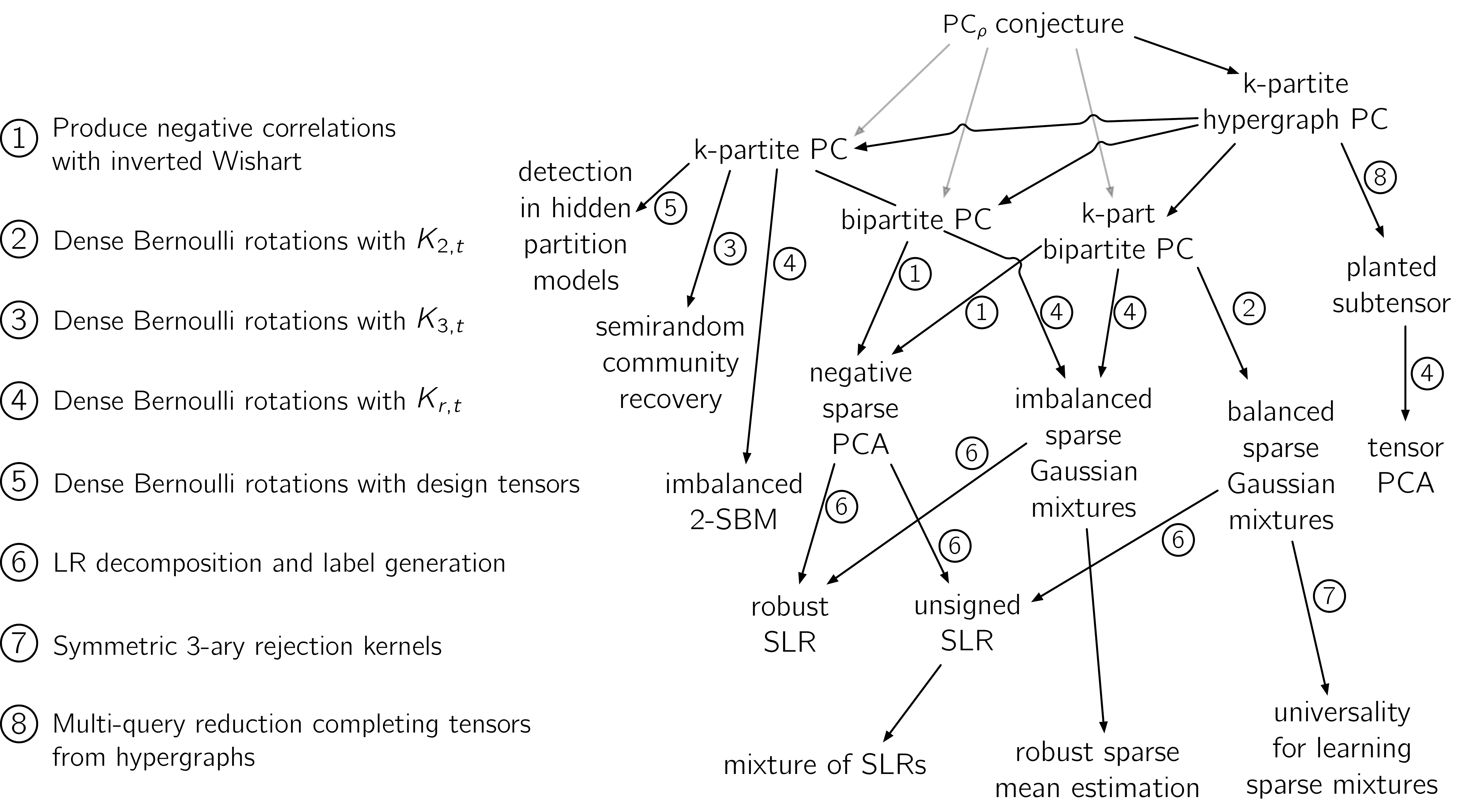}
	\end{center}
	\vspace{-4mm}
	\caption{The web of reductions carried out in this paper. An edge indicates existence of a reduction transferring computational hardness from the tail to the head. Edges are labeled with associated reduction techniques and unlabelled edges correspond to simple reductions or specializing a problem to a particular case.}
	\label{fig:web}
\end{figure}

The main objective of this paper is to provide the first evidence that relating differently structured statistical problems through reductions is possible. We show that mild generalizations of the \pr{pc} conjecture to $k$-partite and bipartite variants of \pr{pc} are naturally suited to a number of new average-case reduction techniques. These techniques map to problems breaking out of the sparse submatrix plus noise structure that seemed to constrain prior reductions. They thus show that revealing a tiny amount of information about the hidden clique vertices substantially increases the reach of the reductions approach, providing the first web of reductions among statistical problems with significantly different structure. Our techniques also yield reductions beginning from hypergraph variants of \pr{pc} which, along with the $k$-partite and bipartite variants mentioned above, can be unified under a single assumption that we introduce -- the secret leakage planted clique ($\pr{pc}_\rho$) conjecture. This conjecture makes a precise prediction of what information about the hidden clique can be revealed while $\pr{pc}$ remains hard.

A summary of our web of average-case reductions is shown in Figure \ref{fig:web}. Our reductions yield tight statistical-computational gaps for a range of differently structured problems, including robust sparse mean estimation, variants of dense stochastic block models, detection in hidden partition models, semirandom planted dense subgraph, negatively correlated sparse PCA, mixtures of sparse linear regressions, robust sparse linear regression, tensor PCA and a universality principle for learning sparse mixtures. This gives the first reduction-based evidence supporting a number of gaps observed in the literature \cite{li2017robust, balakrishnan2017computationally, diakonikolas2017statistical, chen2016statistical, hajek2015computational, brennan2018reducibility, fan2018curse, liu2018high, richard2014statistical, hopkins2015tensor, wein2019kikuchi, azizyan2013minimax, verzelen2017detection}. In particular, there are no known reductions deducing these gaps from the ordinary $\pr{pc}$ conjecture. Similar to \cite{brennan2018reducibility}, several average-case problems emerge as natural intermediates in our reductions, such as negative sparse PCA and imbalanced sparse Gaussian mixtures. The specific instantiations of the $\pr{pc}_\rho$ conjecture needed to obtain these lower bounds correspond to natural $k$-partite, bipartite and hypergraph variants of $\pr{pc}$. Among these hardness assumptions, we show that hardness in a $k$-partite hypergraph variant of $\pr{pc}$ ($k\pr{-hpc}^s$) is the strongest and sufficient to establish all of our computational lower bounds. We also give evidence for our hardness conjectures from the failure of low-degree polynomials and statistical query algorithms.

Our results suggest that \pr{pc} may not be the right starting point for average-case reductions among statistical problems. However, surprisingly mild generalizations of $\pr{pc}$ are all that are needed to break beyond the structural constraints of previous reductions. Generalizing to either $\pr{pc}_\rho$ or $k\pr{-hpc}^s$ unifies all of our reductions under a single hardness assumption, now capturing reductions to a range of dissimilarly structured problems including supervised learning tasks and problems over tensors. This suggests $\pr{pc}_\rho$ and $k\pr{-hpc}^s$ are both much more powerful candidate starting points than $\pr{pc}$ and, more generally, that these may be a key first step towards a more complete theory of reductions among statistical problems. Although we often will focus on providing evidence for statistical-computational gaps, we emphasize that our main contribution is more general -- our reductions give a new set of techniques for relating differently structured statistical problems that seem likely to have applications beyond the problems we consider here.

The rest of the paper is structured as follows. The next section gives general background on average-case reductions and several criteria that they must meet in order to show strong computational lower bounds for statistical problems. In Section \ref{sec:1-PC}, we introduce the $\pr{pc}_\rho$ conjecture and the specific instantiations of this conjecture that imply our computational lower bounds, such as $k\pr{-hpc}^s$. In Section \ref{sec:1-problems} we formally introduce the problems in Figure \ref{fig:web} and state our main theorems. In Section \ref{sec:1-techniques}, we describe the key ideas underlying our techniques and we conclude Part \ref{part:intro} by discussing a number of questions arising from these techniques in Section \ref{sec:1-open-problems}. Parts \ref{part:reductions} and \ref{part:lower-bounds} are devoted to formally introducing our reduction techniques and applying them, respectively. Part \ref{part:reductions} begins with Section \ref{sec:2-preliminaries}, which introduces reductions in total variation and the corresponding hypothesis testing formulation for each problem we consider that it will suffice to reduce to. In the rest of Part \ref{part:reductions}, we introduce our main reduction techniques and give several initial applications of these techniques to reduce to a subset of the problems that we consider. Part \ref{part:lower-bounds} begins with a further discussion of the $\pr{pc}_\rho$ conjecture, where we show that $k\pr{-hpc}^s$ is our strongest assumption and provide evidence for the $\pr{pc}_\rho$ conjecture from the failure of low-degree tests and the statistical query model. The remainder of Part \ref{part:lower-bounds} is devoted to our other reductions and deducing the computational lower bounds in our main theorems from Section \ref{sec:1-problems}. At the end of Part \ref{part:lower-bounds}, we discuss the implications of our reductions to estimation and recovery formulations of the problems that we consider. Reading Part \ref{part:intro}, Section \ref{sec:2-secret-leakage} and the pseudocode for our reductions gives an accurate summary of the theorems and ideas in this work. We note that a preliminary draft of this work containing a small subset of our results appeared in \cite{brennan2019average}.

\subsection{Desiderata for Average-Case Reductions}
\label{subsec:1-desiderata}

As discussed in the previous section, average-case reductions are delicate and more constrained than their worst-case counterparts. In designing average-case reductions between problems in statistical inference, the essential challenge is to reduce to instances that are \textit{hard up to the conjectured computational barrier}, without destroying the \textit{naturalness} of the distribution over instances. Dissecting this objective further yields four general criteria for a reduction between the problems $\mP$ and $\mP'$ to be deemed to show strong computational lower bounds for $\mP'$. These objectives are to varying degrees at odds with one another, which is what makes devising reductions a challenging task. To illustrate these concepts, our running example will be our reduction from $\pr{pc}_\rho$ to robust sparse linear regression (SLR). Some parts of this discussion are slightly simplified for clarity. The following are our four criteria.
\begin{enumerate}
\item \textbf{Aesthetics:} If $\mP$ and $\mP'$ each have a specific canonical distribution then a reduction must faithfully map these distributions to one another. In our example, this corresponds to mapping the independent $0$-$1$ edge indicators in a random graph to noisy Gaussian samples of the form $y = \langle \beta, X \rangle + \mN(0, 1)$ with $X \sim \mN(0, I_d)$ and where an $\epsilon$-fraction are corrupted.
\item \textbf{Mapping Between Different Structures:} A reduction must simultaneously map all possible latent signals of $\mP$ to that of $\mP'$. In our example, this corresponds to mapping each possible clique position in $\pr{pc}_\rho$ to a specific mixture over the hidden vector $\beta$. A reduction in this case would also need to map between possibly very differently structured data, e.g., in robust SLR the dependence of $(X, y)$ on $\beta$ is intricate and the $\epsilon$-fraction of corrupted samples also produces latent structure across samples. These are both very different than the planted signal plus noise form of the clique in $\pr{pc}_\rho$.
\item \textbf{Tightness to Algorithms:} A reduction showing computational lower bounds that are tight against what efficient algorithms can achieve needs to map the conjectured computational limits of $\mP$ to those of $\mP'$. In our example, $\pr{pc}_\rho$ in general has a conjectured limit depending on $\rho$, which may for instance be at $K = o(\sqrt{N})$ when the clique is of size $K$ in a graph with $N$ vertices. In contrast, robust SLR has the conjectured limit at $n = \tilde{o}(k^2 \epsilon^2/\tau^4)$, where $\tau$ is the $\ell_2$ error to which we wish to estimate $\beta$, $k$ is the sparsity of $\beta$ and $n$ is the number of samples. 
\item \textbf{Strong Lower Bounds for Parameterized Problems:} In order to show that a certain constraint $\mathcal{C}$ \emph{defines} the computational limit of $\mP'$ through this reduction, we need the reduction to fill out the possible parameter sequences within $\mathcal{C}$. For example, to show that $n = \tilde{o}(k^2 \epsilon^2/\tau^4)$ truly captures the correct dependence in our computational lower bound for robust SLR, it does not suffice to produce a single sequence of points $(n, k, d, \tau, \epsilon)$ for which this is true, or even a one parameter curve. There are four parameters in the conjectured limit and a reduction showing that this is the correct dependence needs to fill out any possible combination of growth rates in these parameters permitted by $n = \tilde{o}(k^2 \epsilon^2/\tau^4)$. The fact that the initial problem $\mP$ has a conjectured limit depending on only two parameters can make achieving this criterion challenging.
\end{enumerate}
We remark that the third criterion requires that reductions are \emph{information preserving} in the sense that they do not degrade the underlying level signal used by optimal efficient algorithms. This necessitates that the amount of additional randomness introduced in reductions to achieve aesthetic requirements is negligible. The fourth criterion arises from the fact that statistical problems are generally described by a tuple of parameters and are therefore actually an entire family of problems. A full characterization of the computational feasibility of a problem therefore requires addressing all possible scalings of the parameters. 

All of the reductions carried out in this paper satisfy all four desiderata. Several of the initial reductions from $\pr{pc}$ in the literature met most but not all of these criteria. For example, the reductions in \cite{berthet2013complexity, wang2016statistical} to sparse PCA map to a distribution in a distributionally robust formulation of the problem as opposed to the canonical Gaussian formulation in the spiked covariance model. Similarly \cite{cai2015computational} reduces to a distributionally robust formulation of submatrix localization. The reduction in \cite{gao2017sparse} only shows tight computational lower bounds for sparse PCA at a particular point in the parameter space when $\theta = \tilde{\Theta}(1)$ and $n = \tilde{\Theta}(k^2)$. However, a number of reductions in the literature have successfully met all of these four criteria \cite{ma2015computational, hajek2015computational, zhang2017tensor, brennan2018reducibility, brennan2019optimal, brennan2019universality}. 

We remark that it can be much easier to only satisfy some of these desiderata -- in particular, many natural reduction ideas meet a subset of these four criteria but fail to show nontrivial computational lower bounds. For instance, it is often straightforward to construct a reduction that degrades the level of signal. The simple reduction that begins with $\pr{pc}$ and randomly subsamples edges with probability $n^{-\alpha}$ yields an instance of planted dense subgraph with the correct distributional aesthetics. However, this reduction fails to be tight to algorithms and furthermore fails to show any meaningful tradeoff between the size of the planted dense subgraph and the sparsity of the graph.

Another natural reduction to robust sparse mean estimation first maps from $\pr{pc}$ to Gaussian biclustering using one of the reductions in \cite{ma2015computational, brennan2018reducibility, brennan2019universality}, computes the sum $v$ of all of the rows of this matrix, then uses Gaussian cloning as in \cite{brennan2018reducibility} to produce $n$ weak copies of $v$ and finally outputs these copies with an an $\epsilon$-fraction corrupted. This reduction can be verified to produce a valid instance of robust sparse mean estimation in its canonical Gaussian formulation, but fails to show any nontrivial hardness above its information-theoretic limit. Conceptually, this is because the reduction is generating the $\epsilon$-fraction of the corruptions itself. On applying a robust sparse mean estimation blackbox to solve $\pr{pc}$, the reduction could just as easily have revealed which samples it corrupted. This would allow the blackbox to only have to solve sparse mean estimation, which has no statistical-computational gap. In general, a reduction showing tight computational lower bounds cannot generate a non-negligible amount randomness that produces the hardness of the target problem. Instead, this $\epsilon$-fraction must come from the hidden clique in the input $\pr{pc}$ instance. In Section \ref{subsec:1-tech-encoding}, we discuss how our reductions obliviously encode cliques into the hidden structures in the problems we consider.

We also remark that many problems that appear to be similar from the perspective of designing efficient algorithms can be quite different to reduce to. This arises from differences in their underlying stochastic models that efficient algorithms do not have to make use of. For example, although ordinary sparse PCA and sparse PCA with a negative spike can be solved by the same efficient algorithms, the former has a signal plus noise decomposition while the latter does not and has negatively correlated as opposed to positively correlated planted entries. We will see that these subtle differences are significant in designing reductions.

\section{Planted Clique and Secret Leakage}
\label{sec:1-PC}

In this section, we introduce planted clique and our generalization of the planted clique conjecture. In the \emph{planted clique problem} ($\pr{pc}$), the task is to find the vertex set of a $k$-clique planted uniformly at random in an $n$-vertex Erd\H{o}s-R\'{e}nyi graph $G$. Planted clique can equivalently be formulated as a testing problem $\pr{pc}(n, k, 1/2)$ \cite{alon2007testing} between the two hypotheses
$$H_0: G \sim \mG(n, 1/2) \quad \text{and} \quad H_1: G \sim \mG(n, k, 1/2)$$
where $\mG(n, 1/2)$ denotes the $n$-vertex Erd\H{o}s-R\'{e}nyi graph with edge density $1/2$ and $\mG(n, k, 1/2)$ the distribution resulting from planting a $k$-clique uniformly at random in $\mG(n, 1/2)$. This problem can be solved in quasipolynomial time by searching through all vertex subsets of size $(2 + \epsilon) \log_2 n$ if $k > (2 + \epsilon) \log_2 n$. The \emph{Planted Clique Conjecture} is that there is no polynomial time algorithm solving $\pr{pc}(n, k, 1/2)$ if $k = o(\sqrt{n})$.

There is a plethora of evidence in the literature for the $\pr{pc}$ conjecture. Spectral algorithms, approximate message passing, semidefinite programming, nuclear norm minimization and several other polynomial-time combinatorial approaches all appear to fail to solve $\pr{pc}$ exactly when $k = o(\sqrt{n})$ \cite{alon1998finding, feige2000finding, mcsherry2001spectral, feige2010finding, ames2011nuclear, dekel2014finding, deshpande2015finding, chen2016statistical}. Lower bounds against low-degree sum of squares relaxations \cite{barak2016nearly} and statistical query algorithms \cite{feldman2013statistical} have also been shown up to $k = o(\sqrt{n})$.

\paragraph{Secret Leakage $\pr{pc}$.} We consider a slight generalization of the planted clique problem, where the input graph $G$ comes with some information about the vertex set of the planted clique. This corresponds to the vertices in the $k$-clique being chosen from some distribution $\rho$ other than the uniform distribution of $k$-subsets of $[n]$, as formalized in the following definition.

\begin{definition}[Secret Leakage $\pr{pc}_\rho$]
Given a distribution $\rho$ on $k$-subsets of $[n]$, let $\mG_\rho(n, k, 1/2)$ be the distribution on $n$-vertex graphs sampled by first sampling $G \sim \mG(n, 1/2)$ and $S \sim \rho$ independently and then planting a $k$-clique on the vertex set $S$ in $G$. Let $\pr{pc}_\rho(n, k, 1/2)$ denote the resulting hypothesis testing problem between $H_0: G \sim \mG(n, 1/2)$ and $H_1: G \sim \mG_\rho(n, k, 1/2)$.
\end{definition}

All of the $\rho$ that we will consider will be uniform over the $k$-subsets that satisfy some constraint. In the cryptography literature, modifying a problem such as $\pr{pc}$ with a promise of this form is referred to as information leakage about the secret. There is a large body of work on leakage-resilient cryptography recently surveyed in \cite{kalai2019survey}. The hardness of the Learning with Errors (LWE) problem has been shown to be unconditionally robust to leakage \cite{dodis2010public, goldwasser10}, and it is left as an interesting open problem to show that a similar statement holds true for \textsc{pc}.

Both $\pr{pc}$ and $\pr{pc}_\rho$ fall under the class of general parameter recovery problems where the task is to find $P_S$ generating the observed graph from a family of distributions $\{ P_S \}$. In the case of $\pr{pc}$, $P_S$ denotes the distribution $\mG(n, k, 1/2)$ conditioned on the $k$-clique being planted on $S$. Observe that the conditional distributions $\{ P_S \}$ are the same in $\pr{pc}$ and $\pr{pc}_\rho$. Secret leakage can be viewed as placing a prior on the parameter $S$ of interest, rather than changing the main average-case part of the problem -- the family $\{ P_S \}$. When $\rho$ is uniform over a family of $k$-subsets, secret leakage corresponds to imposing a worst-case constraint on $S$. In particular, consider the maximum likelihood estimator (MLE) for a general parameter recovery problem given by
$$\hat{S} = \arg \max_{S \in \text{supp}(\rho)} P_S(G)$$
As $\rho$ varies, only the search space of the MLE changes while the objective remains the same. We make the following precise conjecture of the hardness of $\pr{pc}_\rho(n, k, 1/2)$ for the distributions $\rho$ we consider. Given a distribution $\rho$, let $p_{\rho}(s) = \bP_{S, S' \sim \rho^{\otimes 2}}[|S \cap S'| = s]$ be the probability mass function of the size of the intersection of two independent random sets $S$ and $S'$ drawn from $\rho$.

\begin{conjecture}[Secret Leakage Planted Clique ($\pr{pc}_\rho$) Conjecture] \label{conj:sl-conj}
Let $\rho$ be one of the distributions on $k$-subsets of $[n]$ given below in Conjecture \ref{conj:hard-conj}. Suppose that there is some $p_0 = o_n(1)$ and constant $\delta > 0$ such that $p_{\rho}(s)$ satisfies the tail bounds
$$p_{\rho}(s) \le p_0 \cdot \left\{ \begin{array}{ll} 2^{-s^2} &\textnormal{if } 1 \le s^2 < d \\ s^{-2d-4} &\textnormal{if } s^2 \ge d \end{array} \right.$$
for any parameter $d = O_n((\log n)^{1 + \delta})$. Then there is no polynomial time algorithm solving $\pr{pc}_\rho(n, k, 1/2)$.
\end{conjecture}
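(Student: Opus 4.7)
Since this is explicitly labelled a conjecture, no unconditional polynomial-time lower bound is within reach; my plan is to reduce Conjecture \ref{conj:sl-conj} to the standard planted clique conjecture, case by case for each of the natural distributions $\rho$ specified in Conjecture \ref{conj:hard-conj}. The target is a reduction that, given an efficient algorithm $\mathcal{A}$ solving $\pr{pc}_\rho(n,k,1/2)$, produces an efficient algorithm for ordinary $\pr{pc}(n', k', 1/2)$ with $k' = o(\sqrt{n'})$. The skeleton would be: (i) take an instance $G$ of standard $\pr{pc}$ with a uniformly planted $k'$-clique on an unknown set $S^\star$; (ii) apply a carefully chosen randomised relabelling and/or padding of the vertices designed to make the induced distribution on clique locations coincide with $\rho$ on a scaled instance; (iii) feed the resulting graph to $\mathcal{A}$; and (iv) invert the relabelling to recover a candidate for $S^\star$ and check the clique condition directly.

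The technical role of the tail hypothesis on $p_\rho(s)$ is to ensure that, despite the prior being non-uniform, the chi-squared divergence between the planted and null distributions of $\pr{pc}_\rho$ remains $O(1)$ up to exactly the $k = o(\sqrt{n})$ threshold. The natural computation to carry out is
\[
\chi^2\!\left(\mG_\rho(n,k,1/2) \,\big\|\, \mG(n,1/2)\right) \;=\; \bE_{S,S' \sim \rho^{\otimes 2}}\left[\, 2^{\binom{|S\cap S'|}{2}}\,\right] - 1 \;=\; \sum_{s \ge 1} p_\rho(s)\, 2^{\binom{s}{2}},
\]
and split the sum at $s^2 \approx d$. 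The regime $s^2 < d$ uses the bound $p_\rho(s) \le p_0 \cdot 2^{-s^2}$, which beats the $2^{\binom{s}{2}} \le 2^{s^2/2}$ factor by a full $2^{-s^2/2}$; the regime $s^2 \ge d$ uses the polynomial tail $p_\rho(s) \le p_0 \cdot s^{-2d-4}$, which dominates the superpolynomial growth of $2^{\binom{s}{2}}$ once $d$ exceeds a polylogarithmic threshold. This confirms that $\pr{pc}_\rho$ is statistically indistinguishable from $\pr{pc}$ at the conjectured hard scale, so every information-theoretic lower bound known for $\pr{pc}$ transfers verbatim.

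The reduction step itself is where I would expect to be fully able to close the argument only for those $\rho$ whose support is large, essentially $\binom{n}{k}/\mathrm{poly}(n)$: in that regime, random relabelling lands the uniform clique inside $\mathrm{supp}(\rho)$ after polynomially many independent trials, and a hybrid argument over the trials converts an $\mathcal{A}$-oracle into a standard $\pr{pc}$-solver. For the $k$-partite, bipartite, and hypergraph instantiations in Conjecture \ref{conj:hard-conj} whose support is much smaller, I would instead pursue a sideways reduction from the analogous worst-case graph problem (e.g.\ $k$-partite $\pr{pc}$), leveraging the fact that these variants are themselves conjectured hard in their respective planted regimes by standard analogy with $\pr{pc}$.

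The main obstacle — and the reason this statement is stated as a conjecture rather than a theorem — is bridging between the statistical closeness delivered by the chi-squared calculation and an actual polynomial-time reduction: a worst-to-average-case connection for planted clique of the strength needed here is not currently known, and the relabelling strategy above genuinely fails when $\mathrm{supp}(\rho)$ is subexponentially smaller than $\binom{n}{k}$. A complete proof would require either such a new worst-to-average reduction, an unconditional circuit lower bound against polynomial-time algorithms for $\pr{pc}$, or a structural reduction exploiting the particular combinatorial geometry (e.g.\ the incidence structure of $\mathbb{F}_r^t$ alluded to in the abstract) that identifies $\pr{pc}_\rho$ with a graph problem whose hardness is an accepted hypothesis. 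Absent any of these, the plan can only deliver conditional hardness for the large-support cases and a precise statistical justification — via the two-regime tail bound — for why the conjecture's parameter choice is the natural one.
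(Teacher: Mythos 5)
You correctly recognize that this is a stated hardness assumption rather than a theorem, and that only supporting evidence is within reach. However, the central computation you propose does not work and is not the object the paper actually analyzes. The full chi-squared divergence $\sum_{s} p_\rho(s)\,2^{\binom{s}{2}} - 1$ diverges throughout the regime of interest: $\pr{pc}$ is information-theoretically solvable for any $k = \omega(\log n)$, so no tail bound consistent with a planted clique of size $k = n^{\Omega(1)}$ can make the planted and null distributions statistically close, and indeed your second regime fails arithmetically — a polynomial tail $p_\rho(s) \le p_0\, s^{-2d-4}$ with $d = O((\log n)^{1+\delta})$ cannot dominate $2^{\binom{s}{2}}$ (already at $s = \log n$ the product is $2^{(\log n)^2/2 - O((\log n)^{1+\delta}\log\log n)} \to \infty$ for $\delta < 1$). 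The conclusion you draw from it, that ``every information-theoretic lower bound for $\pr{pc}$ transfers verbatim,'' is therefore both unobtainable and beside the point, since the conjecture concerns computational rather than statistical hardness.

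What the paper computes instead (Proposition \ref{prop:sl-ld} and Theorem \ref{thm:sl-ld}) is the norm of the \emph{degree-$D$ truncated} likelihood ratio, $\|\lrd - 1\|_2^2 = \bE_{S,S'\sim\rho^{\otimes 2}}[f_D(|S\cap S'|)]$ where $f_D(s)$ counts nonempty edge subsets of $S\cap S'$ of size at most $D$ and satisfies $f_D(s) \le \min\{2^{\binom{s}{2}},\, s^{2(D+1)}\}$. The two-regime tail bound in the conjecture is tailored precisely to this truncation: for $s^2 < d$ one pairs $2^{-s^2}$ against $2^{\binom{s}{2}}$, and for $s^2 \ge d$ one pairs $s^{-2d-4}$ against the \emph{polynomial} bound $s^{2(D+1)} \le s^{2d+2}$, giving a summable $s^{-2}$. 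This is why the exponent in the tail bound is coupled to the degree parameter $d$ — a coupling your calculation has no use for. The paper's remaining evidence consists of verifying that the specific $\rho$ in Conjecture \ref{conj:hard-conj} satisfy these tail bounds (Theorem \ref{thm:verify}), SQ lower bounds for those $\rho$, and a reduction showing $k\pr{-hpc}^s$ implies the other assumptions; it does \emph{not} attempt a reduction from uniform $\pr{pc}$ to $\pr{pc}_\rho$, which, as you note, would fail for the small-support $\rho$ at issue and which, if it existed, would make the conjecture a theorem modulo the $\pr{pc}$ conjecture — something the paper explicitly leaves open.
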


While this conjecture is only stated for the specific $\rho$ corresponding to the hardness assumptions used in our reductions, we believe it should hold for a wide class of $\rho$ with sufficient symmetry. The motivation for the decay condition on $p_\rho$ in the $\pr{pc}_\rho$ conjecture is from low-degree polynomials, which we show in Section \ref{subsec:2-low-degree} fail to solve $\pr{pc}_\rho$ subject to this condition. The \textit{low-degree conjecture} -- that low-degree polynomials predict the computational barriers for a broad class of inference problems -- has been shown to match conjectured statistical-computational gaps in a number of problems \cite{hopkins2017efficient, hopkinsThesis, kunisky2019notes, bandeira2019computational}. We discuss this conjecture, the technical conditions arising in its formalizations and how these relate to $\pr{pc}_\rho$ in Section \ref{subsec:2-low-degree}. Specifically, we discuss the importance of symmetry and the requirement on $d$ in generalizing Conjecture \ref{conj:sl-conj} to further $\rho$. In contrast to low-degree polynomials, because the SQ model only concerns problems with a notion of samples, it seems ill-suited to accurately predict the computational barriers in $\pr{pc}_\rho$ for every $\rho$. However, in Section \ref{subsec:2-sq}, we show SQ lower bounds supporting the $\pr{pc}_\rho$ conjecture for specific $\rho$ related to our hardness assumptions. We also remark that the distribution $p_{\rho}$ is an overlap distribution, which has been linked to conjectured statistical-computational gaps using techniques from statistical physics \cite{zdeborova2016statistical}.

\paragraph{Hardness Conjectures for Specific $\rho$.} In our reductions, we will only need the $\pr{pc}_\rho$ conjecture for specific $\rho$, all of which are simple and correspond to their own hardness conjectures in natural mild variants of $\pr{pc}$. Secret leakage can be viewed as a way to conceptually unify these different assumptions. These $\rho$ all seem to avoid revealing enough information about $S$ to give rise to new polynomial time algorithms to solve $\pr{pc}_{\rho}$. In particular, spectral algorithms consistently seem to match our conjectured computational limits for $\pr{pc}_\rho$ for the different $\rho$ we consider.

We now introduce these specific hardness assumptions and briefly outline how each can be produced from an instance of $\pr{pc}_\rho$. This is more formally discussed in Section \ref{subsec:2-sl-verifying}. Let $\mG_{B}(m, n, 1/2)$ denote the distribution on bipartite graphs $G$ with parts of size $m$ and $n$ wherein each edge between the two parts is included independently with probability $1/2$.
\begin{itemize}
\item \textbf{$k$-partite \pr{pc}:} Suppose that $k$ divides $n$ and let $E$ be a partition of $[n]$ into $k$ parts of size $n/k$. Let $k\pr{-pc}_E(n, k, 1/2)$ be $\pr{pc}_\rho(n, k, 1/2)$ where $\rho$ is uniformly distributed over all $k$-sets intersecting each part of $E$ in exactly one element.
\item \textbf{bipartite \pr{pc}:} Let $\pr{bpc}(m, n, k_m, k_n, 1/2)$ be the problem of testing between $H_0 : G \sim \mG_B(m, n, 1/2)$ and $H_1$ under which $G$ is formed by planting a complete bipartite graph with $k_m$ and $k_n$ vertices in the two parts, respectively, in a graph sampled from $\mG_B(m, n, 1/2)$. This problem can be realized as a bipartite subgraph of an instance of $\pr{pc}_\rho$.
\item \textbf{$k$-part bipartite \pr{pc}:} Suppose that $k_n$ divides $n$ and let $E$ be a partition of $[n]$ into $k_n$ parts of size $n/k_n$. Let $k\pr{-bpc}_E(m, n, k_m, k_n, 1/2)$ be $\pr{bpc}$ where the $k_n$ vertices in the part of size $n$ are uniform over all $k_n$-sets intersecting each part of $E$ in exactly one element, as in the definition of $k\pr{-pc}_E$. As with $\pr{bpc}$, this problem can be realized as a bipartite subgraph of an instance of $\pr{pc}_\rho$, now with additional constraints on $\rho$ to enforce the $k$-part restriction.
\item \textbf{$k$-partite hypergraph \pr{pc}:} Let $k, n$ and $E$ be as in the definition of $k\pr{-pc}$. Let $k\pr{-hpc}^s_E(n, k, 1/2)$ where $s \ge 3$ be the problem of testing between $H_0$, under which $G$ is an $s$-uniform Erd\H{o}s-R\'{e}nyi hypergraph where each hyperedge is included independently with probability $1/2$, and $H_1$, under which $G$ is first sampled from $H_0$ and then a $k$-clique with one vertex chosen uniformly at random from each part of $E$ is planted in $G$. This problem has a simple correspondence with $\pr{pc}_\rho$: there is a specific $\rho$ that corresponds to unfolding the adjacency tensor of this hypergraph problem into a matrix. We will show more formally how to produce $k\pr{-hpc}^s_E(n, k, 1/2)$ from $\pr{pc}_\rho$ in Section \ref{subsec:2-sl-verifying}.
\end{itemize}

Since $E$ is always revealed in these problems, it can without loss of generality be taken to be any partition of $[n]$ into $k$ equally-sized parts. Consequently, we will often simplify notation by dropping the subscript $E$ from the above notation. We conjecture the following computational barriers for these graph problems, each of which matches the decay rate condition on of $p_{\rho}(s)$ in $\pr{pc}_\rho$ conjecture, as we will show in Section \ref{subsec:2-sl-verifying}.

\begin{conjecture}[Specific Hardness Assumptions] \label{conj:hard-conj}
Suppose that $m$ and $n$ are polynomial in one another. Then there is no $\textnormal{poly}(n)$ time algorithm solving the following problems:
\begin{enumerate}
\item $k\pr{-pc}(n, k, 1/2)$ when $k = o(\sqrt{n})$;
\item $\pr{bpc}(m, n, k_m, k_n, 1/2)$ when $k_n = o(\sqrt{n})$ and $k_m = o(\sqrt{m})$;
\item $k\pr{-bpc}(m, n, k_m, k_n, 1/2)$ when $k_n = o(\sqrt{n})$ and $k_m = o(\sqrt{m})$; and
\item $k\pr{-hpc}^s(n, k, 1/2)$ for $s \ge 3$ when $k = o(\sqrt{n})$.
\end{enumerate}
\end{conjecture}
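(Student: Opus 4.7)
The plan is to derive each of the four stated hardness assumptions from the secret leakage planted clique conjecture (Conjecture \ref{conj:sl-conj}). For each specific problem I will (i) exhibit a distribution $\rho$ on $k$-subsets of $[N]$ (with $N \in \{n, m+n\}$ as appropriate) under which the target problem is polynomial-time equivalent to $\pr{pc}_\rho(N, k, 1/2)$, and (ii) verify that the overlap distribution $p_\rho(s) = \bP_{S,S' \sim \rho^{\otimes 2}}[|S \cap S'| = s]$ satisfies the two-regime tail bound of Conjecture \ref{conj:sl-conj} at the claimed parameter threshold. Conjecture \ref{conj:sl-conj} then immediately transfers hardness of $\pr{pc}_\rho$ to the target problem.

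The first three cases are largely direct. For $k\pr{-pc}$ take $\rho$ uniform over transversals of $E$, so that $k\pr{-pc}(n, k, 1/2)$ is literally an instance of $\pr{pc}_\rho(n, k, 1/2)$. The overlap $|S \cap S'|$ of two independent transversals is a sum of $k$ i.i.d.\ $\mathrm{Bern}(k/n)$ indicators, giving $p_\rho(s) = \binom{k}{s}(k/n)^s(1-k/n)^{k-s}$, which by a standard Stirling estimate is bounded by $(ek^2/(sn))^s$. Under $k = o(\sqrt n)$, a separate check of the two regimes shows $p_\rho(s) \le p_0 \cdot 2^{-s^2}$ for $s^2 < d$ and $p_\rho(s) \le p_0 \cdot s^{-2d-4}$ for $s^2 \ge d$ for some $p_0 = o_n(1)$ and any $d = O_n((\log n)^{1+\delta})$ with $\delta$ small. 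For $\pr{bpc}$ and $k\pr{-bpc}$, fix a bipartition of $[N = m+n]$ into $L$ of size $m$ and $R$ of size $n$ and take $\rho$ uniform over $(k_m+k_n)$-subsets with $k_m$ vertices in $L$ and $k_n$ in $R$ (additionally transversal of $E$ on $R$ in the $k\pr{-bpc}$ case). A $\pr{pc}_\rho(N, k_m+k_n, 1/2)$ instance yields the target $\pr{bpc}$ or $k\pr{-bpc}$ instance by simply restricting to the induced bipartite subgraph between $L$ and $R$, and $p_\rho$ factors as a product of two independent analogues of the $k\pr{-pc}$ overlap distribution, each of which satisfies the requisite decay under $k_m = o(\sqrt m)$ and $k_n = o(\sqrt n)$.

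The delicate case is $k\pr{-hpc}^s$. Here one must unfold the $s$-uniform adjacency tensor of an $s$-graph on $[n]$ into a matrix indexed by (unordered) $s$-tuples, so that a planted $k$-transversal hyperclique becomes a planted ordinary clique of size $\binom{k}{s}$ on a prescribed subset of $\binom{[n]}{s}$. The corresponding $\rho$ is the image of a uniformly random $k$-transversal under this unfolding map, and the main obstacle is that the overlap distribution is now a nonlinear composition: two underlying transversals intersecting in $t$ coordinates contribute $\binom{t}{s}$ to the unfolded overlap, so the new $p_\rho(s')$ is a pushforward of the underlying $\mathrm{Bin}(k, k/n)$ distribution on $t$ through $t \mapsto \binom{t}{s}$. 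Verifying both tail regimes after this composition is the hardest step: the short-scale $2^{-(s')^2}$ bound corresponds to $t$ just large enough that $\binom{t}{s} \ge 1$, needing a tight tail on $t$, while the long-scale $(s')^{-2d-4}$ bound exploits the fact that $s' = \binom{t}{s}$ forces $t$ to grow polynomially with $s'$. The upshot is that under $k = o(\sqrt n)$ with $s \ge 3$ both bounds hold, so Conjecture \ref{conj:sl-conj} yields the claim. All other parts of the argument reduce to standard Chernoff/Stirling tail bounds, so the technical heart is precisely this combinatorial computation of the unfolded overlap distribution.
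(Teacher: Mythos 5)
The statement is a conjecture, not a theorem, so the paper does not prove it; its supporting evidence (Section~\ref{subsec:2-sl-verifying} and Theorem~\ref{thm:verify}) follows exactly your two-step plan of exhibiting $\rho$ and verifying the tail bound on $p_\rho$, and your treatment of cases (1)--(3) agrees with the paper. Your construction for case (4), $k\pr{-hpc}^s$, is however wrong. You take the $\pr{pc}_\rho$ vertices to be unordered $s$-subsets of $[n]$, with $\rho$ the pushforward of a random transversal $S$ under $S\mapsto\binom{S}{s}$, giving overlap $\binom{|S\cap S'|}{s}$. This cannot be realized as a polynomial-time equivalence: a $\pr{pc}_\rho$ instance on $\binom{n}{s}$ vertices carries $\binom{\binom{n}{s}}{2}$ independent edge bits while an $s$-uniform hypergraph carries only $\binom{n}{s}$, and to decide from the graph whether a given $s$-subset $e$ is a hyperedge one would need to probe an edge $(e,e')$ with $e'$ known a priori to lie in the hidden clique -- information you do not have. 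A pair of $s$-subsets naturally encodes a $2s$-subset, not an $s$-subset. The paper instead splits $s=2t$ (with a separate gadget for odd $s$) and takes the $\pr{pc}_\rho$ vertices to be ordered $t$-tuples, i.e.\ points of $[n^t]$, with clique $P^n_t(S)=\{I(a_1,\dots,a_t): a_i\in S\}$ of size $k^t$; each graph edge between $I(a_1,\dots,a_t)$ and $I(a_{t+1},\dots,a_{2t})$ then encodes exactly one $2t$-subset hyperedge, and the resulting overlap is $|S\cap S'|^t$, not $\binom{|S\cap S'|}{s}$.

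You also overstate the threshold. The tail bounds of Conjecture~\ref{conj:sl-conj} do \emph{not} hold at $k=o(\sqrt n)$: with, say, $k=\sqrt n/\log n$ and $d=\Theta((\log n)^{1+\delta})$, the ratio $(k^2/n)^s\cdot 2^{s^2}=(\log n)^{-2s}2^{s^2}$ diverges as $s$ approaches $\sqrt d$, so no $p_0=o_n(1)$ can serve in the first tail regime. This is why Theorem~\ref{thm:verify} assumes $k=O(n^{1/2-\epsilon})$, and why the paper is careful to say Conjecture~\ref{conj:sl-conj} implies Conjecture~\ref{conj:hard-conj} only ``up to arbitrarily small polynomial factors''; the exact $o(\sqrt n)$ thresholds require the stronger Conjecture~\ref{conj:inf-strong-slpc}, which imposes the tail condition only for $d$ up to an $\omega(1)$ function.
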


From an entropy viewpoint, the $k$-partite assumption common to these variants of $\pr{pc}_\rho$ only reveals a very small amount of information about the location of the clique. In particular, both the uniform distribution over $k$-subsets and over $k$-subsets respecting a given partition $E$ have $(1 + o(1))k \log_2 n$ bits of entropy. We also remark that the $\pr{pc}_\rho$ conjecture, as stated, implies the thresholds in the conjecture above up to arbitrarily small polynomial factors i.e. where the thresholds are $k = O(n^{1/2 - \epsilon})$, $k_n = O(n^{1/2 - \epsilon})$ and $k_m = O(m^{1/2 - \epsilon})$ for arbitrarily small $\epsilon > 0$. As we will discuss in \ref{subsec:2-low-degree}, the low-degree conjecture also supports the stronger thresholds in Conjecture \ref{conj:hard-conj}. We also note that our reductions continue to show tight hardness up to arbitrarily small polynomial factors even under these weaker assumptions. As mentioned in Section \ref{subsec:1-overview}, our hardness assumption for $k\pr{-hpc}^s$ is the strongest of those in Conjecture \ref{conj:hard-conj}. Specifically, in Section \ref{subsec:2-sl-verifying} we give simple reductions showing that (4) in Conjecture \ref{conj:hard-conj} implies (1), (2) and (3).

We remark that the discussion in this section also applies to planted dense subgraph ($\pr{pds}$) problems. In the $\pr{pds}$ variant of a $\pr{pc}$ problem, instead of planting a $k$-clique in $\mG(n, 1/2)$, a dense subgraph $\mG(k, p)$ is planted in $\mG(n, q)$ where $p > q$. We conjecture that all of the hardness assumptions remain true for $\pr{pds}$ with constant edge densities $0 < q < p \le 1$. Note that $\pr{pc}$ is an instance of $\pr{pds}$ with $p = 1$ and $q = 1/2$. All of the reductions beginning with $\pr{pc}_\rho$ in this work will also yield reductions beginning from secret leakage planted dense subgraph problems $\pr{pds}_\rho$. In particular, they will continue to apply with a small loss in the amount of signal when $q = 1/2$ and $p = 1/2 + n^{-\epsilon}$ for a small constant $\epsilon > 0$. As discussed in \cite{brennan2019optimal}, $\pr{pds}$ conjecturally has no quasipolynomial time algorithms in this regime and thus our reductions would transfer lower bounds above polynomial time. In this parameter regime, the barriers of $\pr{pds}$ also appear to be similar to those of detection in the sparsely spiked Wigner model, which also conjecturally has no quasipolynomial time algorithms \cite{hopkins2017power}. Throughout this work, we will denote the $\pr{pds}$ variants of the problems introduced above by $k\pr{-pds}(n, k, p, q)$, $\pr{bpds}(m, n, k_m, k_n, p, q)$, $k\pr{-bpds}(m, n, k_m, k_n, p, q)$ and $k\pr{-hpds}^s(n, k, p, q)$.

\section{Problems and Statistical-Computational Gaps}
\label{sec:1-problems}

In this section, we introduce the problems we consider and give informal statements of our main theorems, each of which is a tight computational lower bound implied by a conjecture in the previous section. These statistical-computational gaps follow from a variety of different average-case reduction techniques that are outlined in the next section and will be the focus in the rest of this work. Before stating our main results, we clarify precisely what we mean by \emph{solving} and showing a \emph{computational lower bound} for a problem. All of the computational lower bounds in this section are implied by one of the assumptions in Conjecture \ref{conj:hard-conj}. As mentioned previously, they also follow from $\pr{pds}$ variants of these assumptions or only from the hardness of $k\pr{-hpc}^s$, which is the strongest assumption.

\paragraph{Statistical Problems and Algorithms.} Every problem $\mP(n, a_1, a_2, \dots, a_t)$ we consider is parameterized by a natural parameter $n$ and has several other parameters $a_1(n), a_2(n), \dots, a_t(n)$, which will typically be implicit functions of $n$. If $\mP$ is a hypothesis testing problem with observation $X$ and hypotheses $H_0$ and $H_1$, an algorithm $\mathcal{A}$ is deemed to solve $\mP$ subject to the constraints $\mathcal{C}$ if it has asymptotic Type I$+$II error bounded away from $1$ when $(n, a_1, a_2, \dots, a_t) \in \mathcal{C}$ i.e. if $\bP_{H_0}\left[ \mathcal{A}(X) = H_1 \right] + \bP_{H_1}\left[ \mathcal{A}(X) = H_0 \right] = 1 - \Omega_n(1)$. Furthermore, we say that there is no algorithm solving $\mP$ in polynomial time under the constraints $\mathcal{C}$ if for any sequence of parameters $\{(n, a_1, a_2, \dots, a_t)\}_{n = 1}^\infty \subseteq \mathcal{C}$, there is no polynomial time algorithm solving $\mP(n, a_1, a_2, \dots, a_t)$ with Type I$+$II error bounded away from $1$ as $n \to \infty$. If $\mP$ is an estimation problem with a parameter $\theta$ of interest and loss $\ell$, then $\mathcal{A}$ solves $\mP$ subject to the constraints $\mathcal{C}$ if $\ell(\mathcal{A}(X), \theta) \le \epsilon$ is true with probability $1 - o_n(1)$ when $(n, a_1, a_2, \dots, a_t, \epsilon) \in \mathcal{C}$, where $\epsilon = \epsilon(n)$ is a function of $n$.

\paragraph{Computational Lower Bounds.} We say there is a computational lower bound for $\mathcal{P}$ subject to the constraint $\mathcal{C}$ if for any sequence of parameters $\{(n, a_1(n), a_2(n), \dots, a_t(n))\}_{n = 1}^\infty \subseteq \mathcal{C}$ there is another sequence given by $\{(n_i, a'_1(n_i), a'_2(n_i), \dots, a'_t(n_i))\}_{i = 1}^\infty \subseteq \mathcal{C}$ such that $\mP(n_i, a'_1(n_i), a'_2(n_i), \dots, a'_t(n_i))$ cannot be solved in $\text{poly}(n_i)$ time and $\lim_{i \to \infty} \log a_k'(n_i)/\log a_k(n_i) = 1$. In other words, there is a lower bound at $\mathcal{C}$ if, for any sequence $s$ in $\mathcal{C}$, there is another sequence of parameters that cannot be solved in polynomial time and whose growth matches the growth of a subsequence of $s$. Thus all of our computational lower bounds are \emph{strong lower bounds} in the sense that rather than show that a single sequence of parameters is hard, we show that parameter sequences filling out \emph{all possible growth rates} in $\mathcal{C}$ are hard.

The constraints $\mathcal{C}$ will typically take the form of a system of asymptotic inequalities. Furthermore, each of our computational lower bounds for estimation problems will be established through a reduction to a hypothesis testing problem which then implies the desired lower bound. The exact formulations for these intermediate hypothesis testing problems can be found in Section \ref{subsec:2-formulations} and how they also imply lower bounds for estimation and recovery variants of our problems is discussed in Section \ref{subsec:2-estimation}. Throughout this work, we will use the terms detection and hypothesis testing interchangeably. We say that two parameters $a$ and $b$ are polynomial in one another if there is a constant $C > 0$ such that $a^{1/C} \le b \le a^C$ as $a \to \infty$. Throughout the paper, we adopt the standard asymptotic notation $O(\cdot), \Omega(\cdot), o(\cdot), \omega(\cdot)$ and $\Theta(\cdot)$. We let $\tilde{O}(\cdot)$ and analogous variants denote these relations up to $\text{polylog}(n)$ factors. Here, $n$ is the natural parameter of the problem under consideration and will typically be clear from context. We remark that the argument of $\tilde{O}(\cdot)$ will often be polynomially large or small in $n$, in which case our notation recovers the typical definition of $\tilde{O}(\cdot)$. Furthermore, all of these definitions also apply to the discussion in the previous section.

\paragraph{Canonical Simplest Average-Case Formulations.} All of our reductions are to the canonical simplest average-case formulations of the problems we consider. For example, all $k$-sparse unit vectors in our lower bounds are binary and in $\{0, 1/\sqrt{k} \}^d$, and the rank-1 component in our lower bound for tensor PCA is sampled from a Rademacher prior. Our reductions are all also to the canonical simple vs. simple hypothesis testing formulation for each of our problems and, as discussed in \cite{brennan2018reducibility}, this yields strong computational lower bounds, is often technically more difficulty and crucially allows reductions to naturally be composed with one another.

\subsection{Robust Sparse Mean Estimation}
\label{subsec:1-problems-rsme}

The study of robust estimation began with Huber's contamination model \cite{huber1992robust, huber1965robust} and observations of Tukey \cite{tukey1975mathematics}. Classical robust estimators have typically either been computationally intractable or heuristic \cite{huber2011robust, tukey1975mathematics, yatracos1985rates}. Recent breakthrough works \cite{diakonikolas2016robust, lai2016agnostic} gave the first efficient algorithms for high-dimensional robust estimation, which sparked an active line of research into robust algorithms for other high-dimensional problems \cite{awasthi2014power, li2017robust, balakrishnan2017computationally, charikar2017learning, diakonikolas2018robustly, klivans2018efficient, diakonikolas2019efficient, hopkins2019hard, dong2019quantum}. The most canonical high-dimensional robust estimation problem is robust sparse mean estimation, which has an intriguing statistical-computational gap induced by robustness. 

In sparse mean estimation, the observations $X_1, X_2, \dots, X_n$ are $n$ independent samples from $\mN(\mu, I_d)$ where $\mu$ is an unknown $k$-sparse vector in $\mathbb{R}^d$ of bounded $\ell_2$ norm and the task is to estimate $\mu$ within an $\ell_2$ error of $\tau$. This is a gapless problem, as taking the largest $k$ coordinates of the empirical mean runs in $\text{poly}(d)$ time and achieves the information-theoretically optimal sample complexity of $n = \Theta(k \log d/\tau^2)$.

If an $\epsilon$-fraction of these samples are corrupted arbitrarily by an adversary, this yields the robust sparse mean estimation problem $\pr{rsme}(n, k, d, \tau, \epsilon)$. As discussed in \cite{li2017robust, balakrishnan2017computationally}, for $\| \mu - \mu' \|_2$ sufficiently small, it holds that $\TV\left( \mN(\mu, I_d), \mN(\mu', I_d) \right) = \Theta(\| \mu - \mu' \|_2)$. Furthermore, an $\epsilon$-corrupted set of samples can simulate distributions within $O(\epsilon)$ total variation from $\mN(\mu, I_d)$. Therefore $\epsilon$-corruption can simulate $\mN(\mu', I_d)$ if $\|\mu' - \mu\|_2 = O(\epsilon)$ and it is impossible to estimate $\mu$ with $\ell_2$ distance less than this $O(\epsilon)$. This implies that the minimax rate of estimation for $\mu$ is $O(\epsilon)$, even for very large $n$. As shown in \cite{li2017robust, balakrishnan2017computationally}, the information-theoretic threshold for estimating at this rate in the $\epsilon$-corrupted model remains at $n = \Theta(k \log d/\epsilon^2)$ samples. However, the best known polynomial-time algorithms from \cite{li2017robust, balakrishnan2017computationally} require $n = \tilde{\Theta}(k^2 \log d/\epsilon^2)$ samples to estimate $\mu$ within $\tau = \Theta(\epsilon \sqrt{\log \epsilon^{-1}})$ in $\ell_2$. In Sections \ref{subsec:3-rsme-reduction} and \ref{subsec:3-rsme}, we give a reduction showing that these polynomial time algorithms are optimal, yielding the first average-case evidence for the $k$-to-$k^2$ statistical-computational gap conjectured in \cite{li2017robust, balakrishnan2017computationally}. Our reduction applies to more general rates $\tau$ and obtains the following tradeoff.

\begin{theorem}[Lower Bounds for $\pr{rsme}$] \label{thm:rsme-lb}
If $k, d$ and $n$ are polynomial in each other, $k = o(\sqrt{d})$ and $\epsilon < 1/2$ is such that $(n, \epsilon^{-1})$ satisfies condition \pr{(t)}, then the $k\pr{-bpc}$ conjecture implies that there is a computational lower bound for $\pr{rsme}(n, k, d, \tau, \epsilon)$ at all sample complexities $n = \tilde{o}(k^2 \epsilon^2/\tau^4)$.
\end{theorem}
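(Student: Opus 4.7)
The plan is to construct an average-case reduction from $k\pr{-bpc}(m, n, k_m, k_n, 1/2)$ to the canonical hypothesis-testing formulation of $\pr{rsme}$, under the parameter correspondence $n_{\text{rsme}} = m$, $d = n$, $k = k_n$, and $\epsilon = k_m/m$. Under this correspondence the hard regime of $k\pr{-bpc}$ (namely $k_m = o(\sqrt m)$ and $k_n = o(\sqrt n)$) translates into $n_{\text{rsme}} = o(\epsilon^{-2})$ and $k = o(\sqrt d)$, and the theorem's target threshold $n = \tilde o(k^2\epsilon^2/\tau^4)$ will emerge once the signal strength is tuned so that these two constraints coincide; condition \pr{(t)} in the statement enforces that $\epsilon$ is polynomially small enough for this to be nontrivial.

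The first step I would take is to transform the adjacency matrix $G \in \{0,1\}^{m \times n}$ entry by entry using a Bernoulli-to-Gaussian rejection kernel $\mathcal{K}_\rho$ of the type developed in \cite{brennan2018reducibility}: $\mathcal{K}_\rho$ maps $\mathrm{Bernoulli}(1/2)$ to a distribution within TV distance $o(1/(mn))$ of $\mN(0,1)$ while sending the deterministic bit $1$ exactly to $\mN(\rho, 1)$, for a tunable scale $\rho$. A union bound then shows that under $H_0^{\text{bpc}}$ the resulting $m \times d$ output matrix is within $o(1)$ TV of $\mN(0, I_d)^{\otimes m}$, a clean null instance of $\pr{rsme}$ with $\mu^* = 0$.

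Under $H_1^{\text{bpc}}$ the $k_m = \epsilon m$ clique rows instead become samples from $\mN(\rho \mathbf{1}_S, I_d)$, where $S$ is the random $k$-partite support of the clique columns, while the other $(1-\epsilon)m$ rows are $\mN(0,I_d)$. Crucially the $\epsilon$-fraction of atypical samples is supplied by the planted clique itself rather than manufactured by the reduction's own randomness, which is exactly the desideratum from Section \ref{subsec:1-desiderata} whose violation caused the naive biclustering-plus-Gaussian-cloning reduction discussed there to fail. I would then argue that this output is within $o(1)$ TV of an $\epsilon$-corrupted sample from $\mN(\mu^*, I_d)^{\otimes m}$ for some $k$-sparse $\mu^*$ supported on $S$ with $\|\mu^*\|_2 \gtrsim \tau$, so that any efficient $\pr{rsme}$ estimator with $\ell_2$ error below $\tau$ distinguishes the two hypotheses by thresholding $\|\hat\mu\|_2$ and thereby solves the input instance of $k\pr{-bpc}$. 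The scale $\rho$ is then calibrated against the tensorized Pinsker bound $\text{TV}(\mN(0, I_d)^{\otimes N}, \mN(\mu^*, I_d)^{\otimes N}) \lesssim \sqrt N \, \|\mu^*\|_2$ and the signal requirement $\|\mu^*\|_2 \gtrsim \tau$, and combining the resulting restrictions with both $k\pr{-bpc}$ hardness conditions recovers the threshold $n = \tilde o(k^2\epsilon^2/\tau^4)$. Varying $(m, n, k_m, k_n)$ over the full hard regime then fills out every parameter sequence satisfying the theorem's hypotheses, matching the fourth desideratum of Section \ref{subsec:1-desiderata}.

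The main obstacle is the TV-coupling step under $H_1^{\text{bpc}}$. The obvious identification---treating the $\epsilon m$ clique rows as the adversarial contamination and the $(1-\epsilon) m$ non-clique rows as clean $\mN(0, I_d)$---forces $\mu^* = 0$ and yields no distinguisher from the null. To obtain a nontrivial $\mu^* \neq 0$ one must instead absorb the mean-zero-to-mean-$\mu^*$ shift of the non-clique rows into the $\epsilon$-corruption budget, which is feasible only when $\|\mu^*\|_2$ is small enough to be TV-coupled to a shifted clean distribution across all $(1-\epsilon) m$ non-clique samples. Balancing this TV budget against the signal requirement $\|\mu^*\|_2 \gtrsim \tau$ and the two $k\pr{-bpc}$ hardness constraints is what drives the precise polynomial dependence $n = \tilde o(k^2\epsilon^2/\tau^4)$, and verifying that this balance holds uniformly over all parameter sequences consistent with $(n,\epsilon^{-1})$ satisfying \pr{(t)} is the most delicate technical step of the argument.
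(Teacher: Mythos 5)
Your proposal takes a genuinely different route from the paper, and the route has a gap that cannot be closed without rediscovering the paper's central technique. You apply a Bernoulli-to-Gaussian rejection kernel entrywise to the adjacency matrix and treat each of the $m$ rows as a sample in $\mathbb{R}^n$, so that under $H_1$ the $k_m = \epsilon m$ clique rows become $\mN(\rho\mathbf{1}_{S_R}, I_d)$ while the remaining $(1-\epsilon)m$ rows become $\mN(0, I_d)$. But in the detection formulation of $\pr{rsme}$ adopted here, the \emph{clean} $(1-\epsilon)$-fraction of samples must carry the nonzero mean $\tau\mu_R$, and the $\epsilon$-fraction is the adversarial outlier part. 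Your output has the opposite structure: the majority of samples have mean zero and only the minority have a nonzero mean. You recognize this and propose to ``absorb the mean-zero-to-mean-$\mu^*$ shift of the non-clique rows into the $\epsilon$-corruption budget,'' but this cannot work at the scale the theorem claims. If you attempt a total-variation coupling shifting the $(1-\epsilon)m$ non-clique samples from $\mN(0, I_d)$ to $\mN(\mu^*, I_d)$, tensorization of KL with Pinsker forces $\sqrt{m}\,\|\mu^*\|_2 = o(1)$, i.e., $\tau = o(m^{-1/2})$, which contradicts the threshold $\tau = \tilde\Theta((k^2\epsilon^2/m)^{1/4})$ for any nontrivial choice of $k, \epsilon$. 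If instead you try to declare the $(1-\epsilon)m$ rows themselves as the corrupted part, this exceeds the $\epsilon$-budget since $1 - \epsilon > \epsilon$. And a global deterministic shift moves $H_0$ as well, leaving the problem unchanged.

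The paper closes exactly this gap with a different mechanism. It does not apply rejection kernels naively to the full matrix; instead it applies \emph{dense Bernoulli rotations} with the design matrix $K_{r,t}$ (Section \ref{sec:2-bernoulli-rotations}) to each row restricted to a column part of the $k$-partite structure, producing the intermediate distribution $\pr{isgm}$, whose samples are drawn from $\pr{mix}_{\epsilon}\bigl(\mN(\mu\mathbf{1}_S, I_d), \mN(\mu'\mathbf{1}_S, I_d)\bigr)$ with $(1-\epsilon)\mu + \epsilon\mu' = 0$. Here the $(1-\epsilon)$-fraction of samples carries the \emph{positive} mean $\mu$, and the $\epsilon$-fraction carries the large \emph{negative} mean $\mu' = -\frac{1-\epsilon}{\epsilon}\mu$, so the aggregate mean is zero and the structure matches Huber's model exactly. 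The near-orthogonality and nearly-square shape of $K_{r,t}$ (Lemma \ref{lem:Krtsv}) are what make this work at the tight rate. This construction is the essential ingredient your proposal lacks. A further difference is the parameter map: the paper takes $k = k_m$, $d \geq m$, $N$ (samples) derived from $n$ after rotation, and $\epsilon = 1/r$ for a freely chosen prime $r$; your map fixes $\epsilon = k_m/m$, which both flips the role of rows and columns and pins $\epsilon$ to roughly $\tilde\Theta(m^{-1/2})$ inside the hard regime of $k\pr{-bpc}$, so even if the coupling went through you would only cover a corner of the parameter space rather than filling out all sequences consistent with $n = \tilde{o}(k^2\epsilon^2/\tau^4)$ as the theorem requires.
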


For example, taking $\epsilon = 1/3$ and $\tau = \tilde{O}(1)$ shows that there is a $k$-to-$k^2$ gap between the information-theoretically optimal sample complexity of $n = \tilde{\Theta}(k)$ and the computational lower bound of $n = \tilde{o}(k^2)$. Note that taking $\tau = O(\epsilon)$ in Theorem \ref{thm:rsme-lb} recovers exactly the tradeoff in \cite{li2017robust, balakrishnan2017computationally}, with the dependence on $\epsilon$. Our reduction to $\pr{rsme}$ is based on dense Bernoulli rotations and constructions of combinatorial design matrices based on incidence geometry in $\mathbb{F}_r^t$, as is further discussed in Sections \ref{sec:1-techniques} and \ref{sec:2-bernoulli-rotations}.

In Theorem \ref{thm:rsme-lb}, \pr{(t)} denotes a technical condition arising from number-theoretic constraints in our reduction that require that $\epsilon^{-1} = n^{o(1)}$ or $\epsilon^{-1} = \tilde{\Theta}(n^{-1/2t})$ for some positive integer $t$. As $\epsilon^{-1} = n^{o(1)}$ is the primary regime of interest in the $\pr{rsme}$ literature, this condition is typically trivial. We discuss the condition \pr{(t)} in more detail in Section \ref{sec:3-robust-and-supervised} and give an alternate reduction removing it from Theorem \ref{thm:rsme-lb} in the case where $\epsilon = \tilde{\Theta}(n^{-c})$ for some constant $c \in [0, 1/2]$.

Our result also holds in the stronger Huber's contamination model where an $\epsilon$-fraction of the $n$ samples are chosen at random and replaced with i.i.d. samples from another distribution $\mathcal{D}$. The prior work of \cite{diakonikolas2017statistical} showed that SQ algorithms require $n = \tilde{\Omega}(k^2)$ samples to solve $\pr{rsme}$, establishing the conjectured $k$-to-$k^2$ gap in the SQ model. However, our work is the first to make a precise prediction of the computational barrier in $\pr{rsme}$ as a function of both $\epsilon$ and $\tau$. As will be discussed in Section \ref{subsec:3-rsme-reduction}, our reduction from $k\pr{-pc}$ maps to the instance of $\pr{rsme}$ under the adversary introduced in \cite{diakonikolas2017statistical}.

\subsection{Dense Stochastic Block Models}
\label{subsec:1-problems-sbm}

The stochastic block model (SBM) is the canonical model for community detection, having independently emerged in the machine learning and statistics \cite{holland1983stochastic}, computer science \cite{bui1987graph, dyer1989solution, boppana1987eigenvalues}, statistical physics \cite{decelle2011asymptotic} and mathematics communities \cite{bollobas2007phase}. It has been the subject of a long line of research, which has recently been surveyed in \cite{abbe2017community, moore2017computer}. In the $k$-block SBM, a vertex set of size $n$ is uniformly at random partitioned into $k$ latent communities $C_1, C_2, \dots, C_k$ each of size $n/k$ and edges are then included in the graph $G$ independently such that intra-community edges appear with probability $p$ while inter-community edges appear with probability $q < p$. The exact recovery problem entails finding $C_1, C_2, \dots, C_k$ and the weak recovery problem, also known as community detection, entails outputting nontrivial estimates $\hat{C}_1, \hat{C}_2, \dots, \hat{C}_k$ with $|C_i \cap \hat{C}_i| \ge (1 + \Omega(1))n/k$.

Community detection in the SBM is often considered in the sparse regime, where $p = a/n$ and $q = b/n$. In \cite{decelle2011asymptotic}, non-rigorous arguments from statistical physics were used to form the precise conjecture that weak recovery begins to be possible in $\text{poly}(n)$ time exactly at the \textit{Kesten-Stigum} threshold $\pr{snr} = (a - b)^2/k(a + (k - 1)b) > 1$. When $k = 2$, the algorithmic side of this conjecture was confirmed with methods based on belief propagation \cite{mossel2018proof}, spectral methods and non-backtracking walks \cite{massoulie2014community, bordenave2015non}, and it was shown to be information-theoretically impossible to solve weak recovery below the Kesten-Stigum threshold in \cite{mossel2015reconstruction, deshpande2015asymptotic}. The algorithmic side of this conjecture for general $k$ was subsequently resolved with approximate acyclic belief propagation in \cite{abbe2015detection, abbe2016achieving, abbe2018proof} and has also been shown using low-degree polynomials, tensor decomposition and color coding \cite{hopkins2017efficient}. A statistical-computational gap is conjectured to already arise at $k = 4$ \cite{abbe2018proof} and the information-theoretic limit for community detection has been shown to occur for large $k$ at $\pr{snr} = \Theta(\log k/k)$, which is much lower than the Kesten-Stigum threshold \cite{banks2016information}. Rigorous evidence for this statistical-computational gap has been much more elusive and has only been shown for low-degree polynomials \cite{hopkins2017efficient} and variants of belief propagation. Another related line of work has exactly characterized the thresholds for exact recovery in the regime $p, q = \Theta(\log n/n)$ when $k = 2$ \cite{abbe2015exact, hajek2016achieving, hajek2016achievingb}.

The $k$-block SBM for general edge densities $p$ and $q$ has also been studied extensively under the names graph clustering and graph partitioning in the statistics and computer science communities. A long line of work has developed algorithms recovering the latent communities in this regime, including a wide range of spectral and convex programming techniques \cite{boppana1987eigenvalues, dyer1989solution, condon2001algorithms, mcsherry2001spectral, bollobas2004max, coja2010graph, rohe2011spectral, chaudhuri2012spectral, nadakuditi2012graph, chen2012clustering, ames2014guaranteed, anandkumar2014tensor, chen2014improved, chen2016statistical}. A comparison and survey of these results can be found in \cite{chen2014improved}. As discussed in \cite{chen2016statistical}, for growing $k$ satisfying $k = O(\sqrt{n})$ and $p$ and $q$ with $p = \Theta(q)$ and $1 - p = \Theta(1 - q)$, the best known $\text{poly}(n)$ time algorithms all only work above
$$\frac{(p - q)^2}{q(1 - q)} \gtrsim \frac{k^2}{n}$$
which is an asymptotic extension of the Kesten-Stigum threshold to general $p$ and $q$. In contrast, the statistically optimal rate of recovery is again roughly a factor of $k$ lower at $\tilde{\Omega}(k/n)$. Furthermore, up to $\log n$ factors, the Kesten-Stigum threshold is both when efficient exact recovery algorithms begin to work and where the best efficient weak recovery algorithms are conjectured to fail \cite{chen2016statistical}. 

In this work, we show computational lower bounds matching the Kesten-Stigum threshold up to a constant factor in a mean-field analogue of recovering a first community $C_1$ in the $k$-SBM, where $p$ and $q$ are bounded away from zero and one. Consider a sample $G$ from the $k$-SBM restricted to the union of the other communities $C_2, \dots, C_k$. This subgraph has average edge density approximately given by $\hat{q} = (p - q) \cdot (k - 1) \cdot (n/k)^2 \cdot (n - n/k)^{-2} + q = (k - 1)^{-1} \cdot p + (1 - (k - 1)^{-1}) \cdot q$. Now consider the task of recovering the community $C_1$ in the graph $G'$ in which the subgraph on $C_2, \dots, C_k$ is replaced by the corresponding mean-field Erd\H{o}s-R\'{e}nyi graph $\mG(n - n/k, \hat{q})$. Formally, let $G'$ be the graph formed by first choosing $C_1$ at random and sampling edges as follows:
\begin{itemize}
\item include edges within $C_1$ with probability $P_{11} = p$;
\item include edges between $C_1$ and $[n]\backslash C_1$ with probability $P_{12} = q$; and
\item includes edges within $[n]\backslash C_1$ with probability $P_{22}$ where $P_{22} = (k - 1)^{-1} \cdot p + (1 - (k - 1)^{-1}) \cdot q$.
\end{itemize}
We refer to this model as the imbalanced SBM and let $\pr{isbm}(n, k, P_{11}, P_{12}, P_{22})$ denote the problem of testing between this model and Erd\H{o}s-R\'{e}nyi graphs of the form $\mG(n, P_0)$. As we will discuss in Section \ref{subsec:2-formulations}, lower bounds for this formulation also imply lower bounds for weakly and exactly recovering $C_1$. We remark that under our notation for $\pr{isbm}$, the hidden community $C_1$ has size $n/k$ and $k$ is the number of communities in the analogous $k$-block SBM described above.

As we will discuss in Section \ref{sec:3-community}, $\pr{isbm}$ can also be viewed as a model of single community detection with uniformly calibrated expected degrees. Note that the expected degree of a vertex in $C_1$ is $nP_{22} - p$ and the expected degree of a vertex in $C_1 \backslash [n]$ is $(n - 1)P_{22}$, which differ by at most $1$. Similar models with two imbalanced communities and calibrated expected degrees have appeared previously in \cite{neeman2014non, verzelen2015community, perry2017semidefinite, caltagirone2018recovering}. As will be discussed in Section \ref{subsec:1-problems-semicr}, the simpler planted dense subgraph model of single community recovery has a detection threshold that differs from the Kesten-Stigum threshold, even though the Kesten-Stigum threshold is conjectured to be the barrier for recovering the planted dense subgraph. This is because non-uniformity in expected degrees gives rise to simple edge-counting tests that do not lead to algorithms for recovering the planted subgraph. Our main result for $\pr{isbm}$ is the following lower bound up to the asymptotic Kesten-Stigum threshold.

\begin{theorem}[Lower Bounds for $\pr{isbm}$] \label{thm:isbm-lb}
Suppose that $(n, k)$ satisfy condition \pr{(t)}, that $k$ is prime or $k = \omega_n(1)$ and $k = o(n^{1/3})$, and suppose that $q \in (0, 1)$ satisfies $\min\{q, 1 - q \} = \Omega_n(1)$. If $P_{22} = (k - 1)^{-1} \cdot p + (1 - (k - 1)^{-1}) \cdot q$, then the $k\pr{-pc}$ conjecture implies that there is a computational lower bound for $\pr{isbm}(n, k, p, q, P_{22})$ at all levels of signal below the Kesten-Stigum threshold of $\frac{(p - q)^2}{q(1 - q)} = \tilde{o}(k^2/n)$.
\end{theorem}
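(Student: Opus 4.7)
The plan is to construct a polynomial-time reduction in total variation from $k\pr{-pc}(n, k, 1/2)$ to the simple-vs-simple detection formulation of $\pr{isbm}(n, k, p, q, P_{22})$ specified in Section~\ref{subsec:2-formulations}; hardness of recovery and estimation of $C_1$ then follows from Section~\ref{subsec:2-estimation}. The starting point is to set the $k\pr{-pc}$ vertex count to $n$ and identify each part $V_i$ (of size $n/k$) with a block $B_i$ of the target isbm, so that the part $V_1$ containing the planted clique vertex $v_1$ plays the role of the community $C_1$ under the alternative. Because the partition $E$ is revealed in $k\pr{-pc}$ and the isbm is symmetric under relabeling of blocks, this identification is lossless.

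The bulk of the work is to transform each of the $\binom{k+1}{2}$ sub-blocks --- the $V_i$-internal subgraphs for $1 \leq i \leq k$ and the $V_i \times V_j$ bipartite subgraphs for $i < j$ --- from its input law under $k\pr{-pc}$ into the correct sub-distribution of the target isbm. The target calibrated biases are $+(k-1)(p-q)/k$ inside $C_1 = V_1$, $-(p-q)/k$ across $V_1 \times V_j$, and $+(p-q)/(k(k-1))$ inside $V_i$ and between $V_i, V_j$ for $i, j \geq 2$; summing these with the appropriate combinatorial multiplicities recovers the degree calibration $P_{22} = (p + (k-2)q)/(k-1)$. The subtle point is that the within-part sub-blocks carry no planted edges in $k\pr{-pc}$ (the clique is $k$-partite), so the density boost within $C_1$ must be produced by a reduction that uses the cross-part sub-blocks --- which do carry the planted clique edges --- as its source of signal. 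This is accomplished by a \emph{dense Bernoulli rotations} step implemented with explicit combinatorial design matrices from the incidence geometry of $\mathbb{F}_r^t$, the same tool developed in Section~\ref{sec:2-bernoulli-rotations} and used in the proof of Theorem~\ref{thm:rsme-lb}. The designed linear transformation (followed by a threshold) takes the input vectors of cross-part Bernoulli$(1/2)$ indicators and produces output edge indicators with the correct target marginals, together with a rank-one density perturbation of exactly the Kesten-Stigum scale $(p - q) \asymp k \sqrt{q(1-q)/n}$.

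The main obstacle is bounding the total variation between the output distribution and the true isbm under both the null and the alternative. Independence of the input sub-blocks in $k\pr{-pc}$ makes the null analysis straightforward --- the rotations produce an output of law $\mG(n, P_0)$ with $P_0 = (p + (k-1)q)/k$ up to negligible error --- but under the alternative, the joint distribution of the rotated output edges must be shown close in total variation to the isbm alternative. This is carried out by a chi-squared / second-moment calculation exploiting the near-orthogonality of the $\mathbb{F}_r^t$ designs, in the style of the analysis of Section~\ref{subsec:3-rsme-reduction}. The technical condition \pr{(t)} in the theorem arises from the number-theoretic constraint that $\mathbb{F}_r^t$ designs exist at the correct scale, matching the role of \pr{(t)} in Theorem~\ref{thm:rsme-lb}; the assumption that $k$ is prime or $k = \omega_n(1)$ ensures that we can access the required designs, and the assumption $k = o(n^{1/3})$ ensures that the lower-order error terms in the joint total variation bound remain negligible against the targeted signal of order $k/\sqrt{n}$. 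Parameter flexibility in the design parameters $r$ and $t$ then enables the reduction to cover every sequence in the Kesten-Stigum regime $(p-q)^2/q(1-q) = \tilde{o}(k^2/n)$, yielding a strong lower bound in the sense of Section~\ref{subsec:1-desiderata}.
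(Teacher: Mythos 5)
Your high-level toolbox is right---dense Bernoulli rotations via $\mathbb{F}_r^t$ design matrices, thresholding at the end, condition \pr{(t)} and the primality of $r$ controlling the availability of designs, and $k = o(n^{1/3})$ absorbing lower-order Taylor error when converting $P_0 = 1/2$ to general ambient density---but the reduction architecture you propose has a structural error that would make the argument fail.

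You identify the input parts $V_i$ of $k\pr{-pc}$ with the community structure of the output $\pr{isbm}$, with $C_1$ being a single part $V_1$. This is not how the paper's reduction $k\pr{-pds-to-isbm}$ works, and more importantly it cannot work: under your identification, $C_1$ is deterministic rather than uniformly random over $(n/k)$-subsets of $[n]$, and the within-$C_1$ subgraph $V_1 \times V_1$ of the raw $k\pr{-pc}$ input carries \emph{no} planted signal (as you note), so there is no source from which to manufacture the elevated within-community edge density. Your suggestion that the ``cross-part blocks'' could be used as the source of this signal is a gesture at a mechanism, not a mechanism; dense Bernoulli rotations transform one block at a time and do not transfer correlation between blocks, so they cannot populate a signal-free diagonal block with a planted structure coming from other blocks.

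The paper resolves both issues via two ideas absent from your proposal. First, it pre-processes the input with $\pr{To-}k\textsc{-Partite-Submatrix}$ (Lemma~\ref{lem:submatrix}), which clones the adjacency matrix, fills in the missing diagonal, and embeds into a larger Bernoulli matrix whose planted structure is a full $k \times k$ sub-submatrix $S \times S$---\emph{including diagonal entries}. After this step every $F_i' \times F_j'$ block, including $i = j$, contains exactly one planted bit. Second, the rotations step (Lemma~\ref{lem:isbm-rotations}) applies $K_{r,t} \otimes K_{r,t}$ to each block, expanding each planted bit into the outer product of two $K_{r,t}$-columns. The concatenation is the rank-one mean $\mu(r-1)/r \cdot v(S)v(S)^\top + \mN(0,1)^{\otimes n \times n}$, where $v(S)$ is built from one $K_{r,t}$-column per output part. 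The community $C_1$ is the set of indices where $v(S)$ is negative; since each column of $K_{r,t}$ has exactly $\ell$ negative entries (Lemma~\ref{lem:suborthogonalmatrices}(3)), $C_1$ has $\ell$ vertices in every output part $F_i''$ and is \emph{spread across all parts}, not localized to one. The randomness of $C_1$ comes from the randomness of $S$ together with the random permutations inside $\pr{To-}k\textsc{-Partite-Submatrix}$ and Step~4 of the reduction. Without the diagonal-planting preprocessing and without the spread-out encoding of $C_1$ via $K_{r,t}$-columns, your plan has no way to produce the within-community signal or the correct marginal law of $C_1$, so the TV analysis you sketch cannot go through.
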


This directly provides evidence for the conjecture that $(p - q)^2/q(1 - q) = \tilde{\Theta}(k^2/n)$ defines the computational barrier for community recovery in general $k$-SBMs made in \cite{chen2016statistical}. While the statistical-computational gaps in $\pr{pc}$ and $k$-SBM are the two most prominent conjectured gaps in average-case problems over graphs, they are very different from an algorithmic perspective and evidence for computational lower bounds up to the Kesten-Stigum threshold has remained elusive. Our reduction yields a first step towards understanding the relationship between these gaps.


\subsection{Testing Hidden Partition Models}
\label{subsec:1-problems-hidden-partition}

We also introduce two testing problems we refer to as the Gaussian and bipartite hidden partition models. We give a reduction and algorithms that show these problems have a statistical-computational gap, and we tightly characterize their computational barriers based on the $k\pr{-pc}$ conjecture. The main motivation for introducing these problems is to demonstrate the versatility of our reduction technique dense Bernoulli rotations in transforming hidden structure. A description of dense Bernoulli rotations and the construction of a key design tensor used in our reduction can be found in Section \ref{sec:2-bernoulli-rotations}.

The task in the bipartite hidden partition model problem is to test for the presence of a planted $rK$-vertex subgraph, sampled from an $r$-block stochastic block model, in an $n$-vertex random bipartite graph. The Gaussian hidden partition model problem is a corresponding Gaussian analogue. These are both multi-community variants of the subgraph stochastic block model considered in \cite{brennan2018reducibility}, which corresponds to the setting in which $r = 2$. The multi-community nature of the planted subgraph yields a more intricate hidden structure, and the additional free parameter $r$ yields a more complicated computational barrier. The work of \cite{chen2016statistical} considered the related task of recovering the communities in the Gaussian and bipartite hidden partition models. We remark that conjectured computational limits for this recovery task differ from the detection limits we consider.

Formally, our hidden partition problems are defined as follows. Let $C = (C_1, C_2, \dots, C_r)$ and $D = (D_1, D_2, \dots, D_r)$ are chosen independently and uniformly at random from the set of all sequences of length $r$ consisting of disjoint $K$-subsets of $[n]$. Consider the random matrix $M$ sampled by first sampling $C$ and $D$ and then sampling
$$M_{ij} \sim \left\{ \begin{array}{ll} \mN(\gamma, 1) &\textnormal{if } i \in C_h \textnormal{ and } j \in D_h \textnormal{ for some } h \in [r] \\ \mN\left(-\frac{\gamma}{r - 1}, 1 \right) &\textnormal{if } i \in C_{h_1} \textnormal{ and } j \in D_{h_2} \textnormal{ where } h_1 \neq h_2 \\ \mN(0, 1) &\textnormal{otherwise} \end{array} \right.$$
independently for each $1 \le i, j \le n$. The problem $\pr{ghpm}(n, r, K, \gamma)$ is to test between $H_0 : M \sim \mN(0, 1)^{\otimes n \times n}$ and an alternative hypothesis $H_1$ under which $M$ is sampled as outlined above. The problem $\pr{bhpm}(n, r, K, P_0, \gamma)$ is a bipartite graph analogue of this problem with ambient edge density $P_0$, edge density $P_0 + \gamma$ within the communities in the subgraph and $P_0 - \frac{\gamma}{r - 1}$ on the rest of the subgraph.

As we will show in Section \ref{sec:3-hidden-partition}, an empirical variance test succeeds above the threshold $\gamma_{\text{comp}}^2 = \tilde{\Theta}(n/rK^2)$ and an exhaustive search succeeds above $\gamma_{\text{IT}}^2 = \tilde{\Theta}(1/K)$ in $\pr{ghpm}$ and $\pr{bhpm}$ where $P_0$ is bounded away from $0$ and $1$. Thus our main lower bounds for these two problems confirm that this empirical variance test is approximately optimal among efficient algorithms and that both problems have a statistical-computational gap assuming the $k\pr{-pc}$ conjecture.

\begin{theorem}[Lower Bounds for $\pr{ghpm}$ and $\pr{bhpm}$] \label{thm:ghpm-lb}
Suppose that $r^2 K^2 = \tilde{\omega}(n)$ and $(\lceil r^2 K^2/n \rceil, r)$ satisfies condition \pr{(t)}, suppose $r$ is prime or $r = \omega_n(1)$ and suppose that $P_0 \in (0, 1)$ satisfies $\min\{P_0, 1 - P_0 \} = \Omega_n(1)$. Then the $k\pr{-pc}$ conjecture implies that there is a computational lower bound for each of $\pr{ghpm}(n, r, K, \gamma)$ for all levels of signal $\gamma^2 = \tilde{o}(n/rK^2)$. This same lower bound also holds for $\pr{bhpm}(n, r, K, P_0, \gamma)$ given the additional condition $n = o(rK^{4/3})$.
\end{theorem}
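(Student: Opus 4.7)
The plan is to reduce from $k\pr{-pc}(N, r, 1/2)$ using the dense Bernoulli rotations technique introduced in Section~\ref{sec:2-bernoulli-rotations}, where the $r$-part structure of the hidden clique (one vertex $s_h$ per part of $[N]$) is what will be ``blown up'' into the $r$-block hidden-partition signal demanded by $\pr{ghpm}$. Concretely, given the input $k$-partite graph $G$, I would first center its edge indicators to form a mean-zero $\pm 1/2$ bipartite adjacency matrix $\tilde G$ (obtained by taking edges between two disjoint subsamples of $[N]$, so that the analogs of $C$ and $D$ in $\pr{ghpm}$ arise independently), and then form
$$M \;=\; A\, \tilde G\, B^T,$$
where $A, B \in \bR^{n \times N}$ are independent design matrices whose rows are grouped into $r$ blocks of size $K$ and are built from the incidence geometry of $\mathbb{F}_r^t$. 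The design is arranged so that (i) the rows of $A$ and $B$ are sparse, balanced across the $r$ parts of $[N]$, and have entries of order $1/\sqrt{N}$, and (ii) when contracted against a clique vector picking one vertex per part, the inner-product pattern between a row-block of $A$ and a row-block of $B$ produces weight $+1$ on matching blocks and $-1/(r-1)$ on mismatching blocks, matching exactly the $\pr{ghpm}$ signal pattern.

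The analysis has two parts. Under $H_0$, each entry $M_{ij}$ is a linear combination of $\Theta(N^2)$ independent centered Bernoulli variables of magnitude $O(1/N)$, so a quantitative multivariate central limit theorem combined with the approximate orthonormality of the rows of $A$ and $B$ yields that the joint law of $M$ is within $o(1)$ total variation of $\mN(0,1)^{\otimes n \times n}$. Under $H_1$, the planted clique contributes a rank-one perturbation $\tfrac12 \mathbf{1}_S \mathbf{1}_S^T$ to $\tilde G$, and the map $\tilde G \mapsto A \tilde G B^T$ transforms this into the rank-$r$ balanced block signal required by $\pr{ghpm}$, concentrated on a uniformly random $rK \times rK$ submatrix of $M$. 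Solving the parameter relation $N \asymp r^2 K^2 / n$ dictated by the design --- which is exactly why condition \pr{(t)} appears on $(\lceil r^2 K^2 / n \rceil, r)$ --- the $k\pr{-pc}$ hard regime $r = o(\sqrt{N})$ converts precisely into $\gamma^2 = \tilde o(n/(rK^2))$. The primality or divergence of $r$ is what makes $\mathbb{F}_r^t$ an actual finite field (or permits approximation by a nearby prime), so that the incidence geometry construction goes through.

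For the $\pr{bhpm}$ lower bound, I would post-compose the $\pr{ghpm}$ reduction with a Gaussian-to-Bernoulli conversion in the spirit of \cite{brennan2018reducibility}: given the approximately Gaussian $M$, output a bipartite graph whose edges are independent Bernoullis with success probabilities $P_0 + c\,\psi(M_{ij})/\sqrt{n}$ for an appropriate bounded kernel $\psi$. This preserves the $r$-block signal up to a constant factor while producing the correct Bernoulli marginals, and the extra hypothesis $n = o(rK^{4/3})$ is precisely what is needed to bound the total variation slack introduced by this final step at $o(1)$.

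The main obstacle is the simultaneous construction and analysis of the design matrices: $A$ and $B$ must have row $\ell_\infty$ and $\ell_2$ norms small enough that a Berry--Esseen estimate gives $o(1)$ joint TV error across all $n^2$ entries of $M$ under $H_0$, yet must have intersection patterns over $\mathbb{F}_r^t$ that reproduce the exact balanced $+1$ versus $-1/(r-1)$ block structure under $H_1$. Calibrating these two competing demands is exactly where condition \pr{(t)} enters: it is the number-theoretic divisibility constraint on the relative sizes of $n$, $r$, $K$, and the field extension degree $t$ that permits a suitable design to exist. Once the design is in hand, the rest of the argument --- CLT under $H_0$, signal computation under $H_1$, and the final Gaussian-to-Bernoulli step --- should be comparatively routine.
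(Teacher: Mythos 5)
Your proposal has two genuine gaps, the first of which is fatal.

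\textbf{The rank problem.} Under $H_1$, the centered $k\textsc{-pc}$ adjacency matrix has mean $\tfrac12\mathbf{1}_S\mathbf{1}_S^\top$, which is rank one. Consequently $A\tilde G B^\top$ has a rank-one expectation $\tfrac12(A\mathbf{1}_S)(B\mathbf{1}_S)^\top = \tfrac12 uv^\top$ under $H_1$. But the $\pr{ghpm}$ mean matrix is the pattern taking value $\gamma$ on $\bigcup_h C_h\times D_h$, value $-\gamma/(r-1)$ on the off-diagonal block pairs, and $0$ outside $(\cup_h C_h)\times(\cup_h D_h)$; this is a rank-$r$ (not rank-one) object, and a rank-one outer product $uv^\top$ cannot reproduce it: a rank-one matrix restricted to a product set $(\cup C_h)\times(\cup D_h)$ cannot simultaneously take one constant on the matched blocks and a different nonzero constant on the mismatched blocks. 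So the single global transform $M = A\tilde G B^\top$ cannot yield a $\pr{ghpm}$ instance for any choice of $A,B$. This is precisely why the paper's reduction $k\textsc{-pds-to-ghpm}$ does \emph{not} apply one big linear map; it first completes and symmetrizes the adjacency matrix via $\pr{To-}k\textsc{-Partite-Submatrix}$, then applies $\pr{Tensor-Bern-Rotations}$ \emph{separately} to each vectorized $(r-1)\ell\times(r-1)\ell$ block, with design matrix $M_{r,t}^\top$ (the unfolding of the design tensor $T_{r,t}$ from Definition~\ref{defn:unfolded-Trt}). The output mean for the $(i,j)$th block is then a full $r^t\times r^t$ slice $T_{r,t}^{(V_i,V_j,L)}$ of the tensor rather than a rank-one outer product, and the community alignment Lemma~\ref{lem:comm-align-tensors} is what guarantees that these per-block slices can be stitched into a single global $r$-block structure — a nontrivial combinatorial fact about the linear functions $L$ in the tensor construction that your proposal has no analogue of.

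\textbf{The total variation problem.} Independently of the rank issue, the CLT/Berry--Esseen argument under $H_0$ cannot close in total variation: $A\tilde G B^\top$ is a deterministic function of a discrete random object, so its distribution is supported on a Lebesgue-null set of $\mathbb{R}^{n\times n}$ and has TV distance exactly $1$ from $\mN(0,1)^{\otimes n\times n}$, no matter how good the weak-convergence bounds are. Lemma~\ref{lem:3a} and the entire Condition~\ref{cond:lb} machinery require $o(1)$ TV error, not a Kolmogorov or Wasserstein bound. The paper sidesteps this by running $\pr{Gaussianize}$ (Gaussian rejection kernels, Lemma~\ref{lem:5c}) entrywise \emph{before} the linear transform, so the input to the rotation is already a continuous $\mN(\mu_i\mathbf{1}, I)$ random vector up to exponentially small TV, and then adds the explicit correction term $(I_m-\lambda^{-2}AA^\top)^{1/2}U$ with fresh independent Gaussian $U$ so that the output covariance is exactly $I_m$. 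Both ingredients are necessary; neither is present in your sketch.

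For the $\pr{bhpm}$ step, the paper thresholds the Gaussian output at $0$ and then performs an edge-density rescaling as in Corollary~\ref{cor:bhpm}; your smooth kernel $P_0 + c\psi(M_{ij})/\sqrt n$ would have to be re-analyzed from scratch, and in particular it is not clear it hits the canonical Bernoulli marginals exactly rather than approximately. But this is secondary: the two issues above mean the $\pr{ghpm}$ reduction as proposed does not go through, and the $\pr{bhpm}$ step inherits the problem.
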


We also remark that the empirical variance and exhaustive search tests along with our lower bound do not support the existence of a statistical-computational gap in the case when the subgraph is the entire graph with $n = rK$, which is our main motivation for considering this subgraph variant. We remark that a number of the technical conditions in the theorem such as condition \pr{(t)} and $n = o(rK^{4/3})$ are trivial in the parameter regime where the number of communities is not very large with $r = n^{o(1)}$ and when the total size of the hidden communities is large with $rK = \tilde{\Theta}(n^{c})$ where $c > 3/4$. In this regime, these problems have a nontrivial statistical-computational gap that our result tightly characterizes.

\subsection{Semirandom Planted Dense Subgraph and the Recovery Conjecture}
\label{subsec:1-problems-semicr}

\newcommand{\colorImp}{gray}
\newcommand{\colorEasy}{green}
\newcommand{\colorHard}{blue}

\begin{figure*}[t!]
\centering
\begin{tikzpicture}[scale=0.45]
\tikzstyle{every node}=[font=\footnotesize]
\def\xmin{0}
\def\xmax{16}
\def\ymin{0}
\def\ymax{11}

\draw[->] (\xmin,\ymin) -- (\xmax,\ymin) node[right] {$\beta$};
\draw[->] (\xmin,\ymin) -- (\xmin,\ymax) node[above] {$\alpha$};

\node at (15, 0) [below] {$1$};
\node at (7.5, 0) [below] {$\frac{1}{2}$};
\node at (0, 0) [left] {$0$};
\node at (0, 10) [left] {$2$};
\node at (0, 5) [left] {$1$};
\node at (0, 3.33) [left] {$\frac{2}{3}$};
\node at (10, 0) [below] {$\frac{2}{3}$};

\filldraw[fill=\colorHard!25, draw=\colorHard] (0, 0) -- (7.5, 0) -- (10, 3.33) -- (0, 0);
\filldraw[fill=\colorEasy!25, draw=\colorEasy] (7.5, 0) -- (15, 10) -- (15, 0) -- (7.5, 0);
\filldraw[fill=\colorImp!25, draw=\colorImp] (0, 0) -- (10, 3.33) -- (15, 10) -- (0, 10) -- (0, 0);

\node at (3.75, 9.5) {\textit{Community Detection}};
\node at (12.2, 6)[rotate=54, anchor=south] {$\pr{snr} \asymp \frac{n^2}{k^4}$};
\node at (4, 1.2)[rotate=20, anchor=south] {$\pr{snr} \asymp \frac{1}{k}$};
\node at (6.5, 6) {IT impossible};
\node at (12, 2) {poly-time};
\node at (6.5, 1.25) {PC-hard};
\end{tikzpicture}
\begin{tikzpicture}[scale=0.45]
\tikzstyle{every node}=[font=\footnotesize]
\def\xmin{0}
\def\xmax{16}
\def\ymin{0}
\def\ymax{11}

\draw[->] (\xmin,\ymin) -- (\xmax,\ymin) node[right] {$\beta$};
\draw[->] (\xmin,\ymin) -- (\xmin,\ymax) node[above] {$\alpha$};

\node at (15, 0) [below] {$1$};
\node at (7.5, 0) [below] {$\frac{1}{2}$};
\node at (0, 0) [left] {$0$};
\node at (0, 10) [left] {$2$};
\node at (0, 5) [left] {$1$};

\filldraw[fill=red!25, draw=red] (7.5, 0) -- (15, 5) -- (10, 3.33) -- (7.5, 0);
\filldraw[fill=\colorHard!25, draw=\colorHard] (0, 0) -- (7.5, 0) -- (10, 3.33) -- (0, 0);
\filldraw[fill=\colorEasy!25, draw=\colorEasy] (7.5, 0) -- (15, 5) -- (15, 0) -- (7.5, 0);
\filldraw[fill=\colorImp!25, draw=\colorImp] (0, 0) -- (15, 5) -- (15, 10) -- (0, 10) -- (0, 0);

\node at (3.75, 9.5) {\textit{Community Recovery}};
\node at (11.3, 1.1)[rotate=33, anchor=south] {$\pr{snr} \asymp \frac{n}{k^2}$};
\node at (4, 1.2)[rotate=20, anchor=south] {$\pr{snr} \asymp \frac{1}{k}$};
\node at (7.5, 6) {IT impossible};
\node at (13, 0.75) {poly-time};
\node at (6.5, 1.25) {PC-hard};
\node at (11, 3) {open};
\end{tikzpicture}

\caption{Prior computational and statistical barriers in the detection and recovery of a single hidden community from the \pr{pc} conjecture \cite{hajek2015computational, brennan2018reducibility, brennan2019universality}. The axes are parameterized by $\alpha$ and $\beta$ where $\pr{snr} = \frac{(P_1 - P_0)^2}{P_0(1 - P_0)} = \tilde{\Theta}(n^{-\alpha})$ and $k = \tilde{\Theta}(n^\beta)$. The red region is conjectured to be hard but no $\textsc{pc}$ reductions showing this are known.}
\label{fig:pdsdetrecgap}
\end{figure*}

In the planted dense subgraph model of single community recovery, the observation is a sample from $\mG(n, k, P_1, P_0)$ which is formed by planting a random subgraph on $k$ vertices from $\mG(k, P_1)$ inside a copy of $\mG(n, P_0)$, where $P_1 > P_0$ are allowed to vary with $n$ and satisfy that $P_1 = O(P_0)$. Detection and recovery of the hidden community in this model have been studied extensively \cite{arias2014community, butucea2013detection, verzelen2015community, hajek2015computational,chen2016statistical, hajek2016information, montanari2015finding, candogan2018finding} and this model has emerged as a canonical example of a problem with a detection-recovery computational gap. While it is possible to efficiently detect the presence of a hidden subgraph of size $k=\tilde \Omega(\sqrt{n})$ if $(P_1 - P_0)^2/P_0(1 - P_0) = \tilde{\Omega}(n^2/k^4)$, the best known polynomial time algorithms to \emph{recover} the subgraph require a higher signal at the Kesten-Stigum threshold of $(P_1 - P_0)^2/P_0(1 - P_0) = \tilde{\Omega}(n/k^2)$.

In each of \cite{hajek2015computational, brennan2018reducibility} and \cite{brennan2019universality}, it has been conjectured that the recovery problem is hard below this threshold of $\tilde{\Theta}(n/k^2)$. This \pr{pds} Recovery Conjecture was even used in \cite{brennan2018reducibility} as a hardness assumption to show detection-recovery gaps in other problems including biased sparse PCA and Gaussian biclustering. A line of work has tightly established the conjectured detection threshold through reductions from the \pr{pc} conjecture \cite{hajek2015computational, brennan2018reducibility, brennan2019universality}, while the recovery threshold has remained elusive. Planted clique maps naturally to the detection threshold in this model, so it seems unlikely that the \pr{pc} conjecture could also yield lower bounds at the tighter recovery threshold, given that recovery and detection are known to be equivalent for \pr{pc} \cite{alon2007testing}. These prior lower bounds and the conjectured detection-recovery gap in $\pr{pds}$ are depicted in Figure \ref{fig:pdsdetrecgap}.

We show that the $k\pr{-pc}$ conjecture implies the \pr{pds} Recovery Conjecture for \textit{semirandom} community recovery in the regime where $P_0 = \Theta(1)$. Semirandom adversaries provide an alternate notion of robustness against constrained modifications that heuristically appear to increase the signal strength \cite{blum1995coloring}. Algorithms and lower bounds in semirandom problems have been studied for a number of problems, including the stochastic block model \cite{feige2001heuristics, moitra2016robust}, planted clique \cite{feige2000finding}, unique games \cite{kolla2011play}, correlation clustering \cite{mathieu2010correlation, makarychev2015correlation}, graph partitioning \cite{makarychev2012approximation}, 3-coloring \cite{david2016effect} and clustering mixtures of Gaussians \cite{vijayaraghavan2018clustering}. Formally we consider the problem $\pr{semi-cr}(n, k, P_1, P_0)$ where a semirandom adversary is allowed to remove edges outside of the planted subgraph from a graph sampled from $\mG(n, k, P_1, P_0)$. The task is to test between this model and an Erd\H{o}s-R\'{e}nyi graph $\mG(n, P_0)$ similarly perturbed by a semirandom adversary. As we will discuss in Section \ref{subsec:2-formulations}, lower bounds for this formulation extend to approximately recovering the hidden community under a semirandom adversary. In Section \ref{sec:semirandom}, we prove the following theorem -- that the computational barrier in the detection problem shifts to the recovery threshold in $\pr{semi-cr}$.

\begin{theorem}[Lower Bounds for $\pr{semi-cr}$] \label{thm:semi-cr-lb}
If $k$ and $n$ are polynomial in each other with $k = \Omega(\sqrt{n})$ and $0 < P_0 < P_1 \le 1$ where $\min\{P_0, 1 - P_0 \} = \Omega(1)$, then the $k\pr{-pc}$ conjecture implies that there is a computational lower bound for $\pr{semi-cr}(n, k, P_1, P_0)$ at $\frac{(P_1 - P_0)^2}{P_0(1 - P_0)} = \tilde{o}(n/k^2)$.
\end{theorem}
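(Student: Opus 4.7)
The plan is to reduce from $k\pr{-pc}(N, K, 1/2)$ at the conjectured threshold $K = o(\sqrt N)$ to $\pr{semi-cr}(n, k, P_1, P_0)$ at signal $(P_1 - P_0)^2/(P_0(1-P_0)) = \tilde{o}(n/k^2)$. The interesting regime is $k = \omega(\sqrt n)$, where the standard $\pr{pc}$-to-$\pr{pds}$ reductions of \cite{hajek2015computational, brennan2018reducibility} only reach the detection threshold $\tilde{\Theta}(n^2/k^4)$ and fall short of the recovery threshold $\tilde{\Theta}(n/k^2)$ that we want; see Figure \ref{fig:pdsdetrecgap}. The extra power we exploit comes from two places: (a) the revealed partition $E$ in $k\pr{-pc}_E$ supplies enough structural symmetry to balance degrees and edge counts in the output, and (b) in the semirandom model the reducer itself selects the adversary, so it can delete edges outside the community to cancel easy-to-detect counting signals while leaving only a recovery-type structural signal that is hard by $k\pr{-pc}$.

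First I would apply a vertex-blowup plus Bernoulli randomization gadget in the style of \cite{brennan2018reducibility}. Given $G \sim k\pr{-pc}(N, K, 1/2)$, replace each vertex $v$ by an equal-sized block $B_v$ of $m$ output vertices, with $m$ chosen so that $n = mN$ and the blown-up clique has size $k = mK$. For every pair $a \in B_i$, $b \in B_j$ with $i \neq j$, include the output edge independently with probability $p_{G_{ij}}$, where $p_1 = P_1$ and $p_0 = 2P_0 - P_1$. Since $G_{ij} \sim \text{Bern}(1/2)$ on non-clique pairs, this produces marginal edge probability $P_0$ outside the blown-up clique and $P_1$ inside it; intra-block edges are sampled independently with the same marginals. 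A standard tensorization of the cloning bounds of \cite{brennan2018reducibility} keeps the total variation to a true $\pr{pds}(n, k, P_1, P_0)$ instance $o(1)$.

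Next I would apply an oblivious semirandom deletion rule measurable only with respect to $E$. The rule deletes each non-community edge $(a,b)$ independently with a probability depending only on the parts of $E$ containing the underlying vertices of $G$, and never touches an edge inside the blown-up clique under the alternative, so the rule is admissible as a semirandom adversary for $\pr{semi-cr}$. The $k$-partite structure of $k\pr{-pc}_E$ is what makes such an $E$-measurable rule suffice: because the blown-up clique contains exactly $m$ vertices from every part of $E$, the degree and edge-count contribution from clique vertices factors through $E$ and can be cancelled by a deterministic per-part deletion schedule, equalizing the low-order marginals of the two hypotheses. Composing with the standard bookkeeping of Section \ref{sec:2-preliminaries} then verifies the four desiderata of Section \ref{subsec:1-desiderata} and checks that the parameter transformation $(N, K, 1/2) \mapsto (n, k, P_1, P_0)$ fills out every growth rate compatible with $k = \Omega(\sqrt n)$ and $(P_1 - P_0)^2/(P_0(1-P_0)) = \tilde{o}(n/k^2)$.

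The main obstacle will be the design and analysis of the oblivious deletion rule. One must construct an $E$-measurable rule that simultaneously (i) never deletes a community edge under the alternative, (ii) equalizes every polynomial-time-computable counting statistic between the null and alternative up to $o(1)$ total variation, and (iii) preserves a residual structural signal sufficient to reduce any $\pr{semi-cr}$ algorithm to a $k\pr{-pc}$ algorithm. This is precisely where the secret leakage of $E$ in $k\pr{-pc}_E$ is essential: no analogous oblivious rule exists starting from ordinary $\pr{pc}$, which is why prior $\pr{pc}$ reductions are stuck at the detection threshold and cannot cross the detection-recovery gap of Figure \ref{fig:pdsdetrecgap}. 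The tensorized total-variation analysis of the deletion step over $\Theta(n^2)$ pairs is the main technical ingredient.
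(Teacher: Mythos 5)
Your high-level intuition is sound — the semirandom adversary is chosen by the reducer and can cancel counting-based statistics while preserving a recovery-type signal, and the $k$-partite leakage is what makes this possible where plain $\pr{pc}$ gets stuck at the detection threshold — but the concrete reduction you propose has a fatal flaw, and the paper's proof is substantially different.

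The vertex-blowup step does not do what you claim. If you replace each vertex $v$ of $G$ by an $m$-block $B_v$ and put \emph{every} one of the $m^2$ edges between $B_i$ and $B_j$ in with the \emph{same} probability $p_{G_{ij}}$, then the output is a mixture over $G$ in which the $m^2$ edges within a block pair are heavily correlated: one can recover each bit $G_{ij}$ with error probability $\exp(-\Theta(m^2(p_1-p_0)^2))$ by a simple within-block edge count, and from there read off the planted clique of $G$ directly. So the output is not within $o(1)$ total variation of a genuine product-noise $\pr{pds}(n,k,P_1,P_0)$ instance; it carries a leaking degree-$2$ statistic that a generic $\pr{semi-cr}$ blackbox is not obligated to ignore. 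The subsequent $E$-measurable deletion rule acts only on marginal probabilities, not on the block-level correlation structure, so it cannot repair this. The cloning bounds of \cite{brennan2018reducibility} go the other direction: they split one edge into $t$ fresh independent ones with \emph{weaker} signal per copy (Lemma~\ref{lem:graphcloning}, Lemma~\ref{lem:bern-clone}); they do not license amplifying a constant-strength Bernoulli edge indicator into $m^2$ identically biased copies and calling the result a $\pr{pds}$ instance.

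The paper does something genuinely different. The reduction $k\pr{-pds-to-semi-cr}$ (Figure~\ref{fig:semirandreduction}) first applies $\pr{To-}k\textsc{-Partite-Submatrix}$ to complete the diagonal and asymmetrize, then refines the $k$-partition, and then applies dense Bernoulli rotations (Corollary~\ref{cor:tensor-bern-rotations}) with the $\mathbb{F}_3^t$ design matrix $K_{3,t}$, and thresholds. The crucial conceptual move that you are gesturing at but do not construct is the target distribution $\pr{tsi}(n,k,k',P_0,\mu_1,\mu_2,\mu_3)$: the output has a $k$-set $S$ with elevated density $P_0+\mu_3$, a \emph{second disjoint} $k'$-set $S'$ at the ambient density $P_0$, and depressed densities $P_0-\mu_1$, $P_0-\mu_2$ on the remaining pair types; the semirandom adversary realizes this from a $\pr{pds}(n,k,P_0+\mu_3,P_0)$ instance by deleting edges \emph{outside} $S$ according to the $(S,S')$ partition it has secretly chosen. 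The rank-one mean matrix $v_{S,F',F''}(K_{3,t})\,v_{S,F',F''}(K_{3,t})^\top$ produced by the Bernoulli rotations has exactly the $\pr{tsi}$ sign pattern (Lemma~\ref{lem:rotthres}), and the near-orthogonality of $K_{3,t}$ is what makes the signal strength come out at the Kesten--Stigum scaling $(P_1-P_0)^2/P_0(1-P_0) \asymp n/k^2$ rather than the detection scaling $n^2/k^4$; this is a quantitative property of the design matrix, not something a per-part deletion schedule can be counted on to reproduce without an explicit construction. Your proposal defers the entire technical content to the unspecified deletion rule, and in the one concrete step you do specify (the blowup) the total variation control you need is false.
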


A related reference is the reduction in \cite{cai2015computational}, which proves a detection-recovery gap in the context of sub-Gaussian submatrix localization based on the hardness of finding a planted $k$-clique in a random $n/2$-regular graph. The relationship between our lower bound and that of \cite{cai2015computational} is discussed in more detail in Section \ref{sec:semirandom}. From an algorithmic perspective, the convexified maximum likelihood algorithm from \cite{chen2016statistical} complements our lower bound -- a simple monotonicity argument shows that it continues to solve the community recovery problem above the Kesten-Stigum threshold under a semirandom adversary.

\subsection{Negatively Correlated Sparse Principal Component Analysis}
\label{subsec:1-problems-negspca}

In sparse principal component analysis (PCA), the observations $X_1, X_2, \dots, X_n$ are $n$ independent samples from $\mN(0, \Sigma)$ where the eigenvector $v$ corresponding to the largest eigenvalue of $\Sigma$ is $k$-sparse, and the task is to estimate $v$ in $\ell_2$ norm or find its support. Sparse PCA has many applications ranging from online visual tracking \cite{wang2013online} and image compression \cite{majumdar2009image} to gene expression analysis \cite{zou2006sparse, chun2009expression, parkhomenko2009sparse, chan2010using}. Showing lower bounds for sparse PCA can be reduced to analyzing detection in the spiked covariance model \cite{johnstoneSparse04}, which has hypotheses
$$H_0:X \sim \mN(0, I_d)^{\otimes n} \quad\text{ and }\quad  H_1:X \sim \mN(0, I_d + \theta vv^\top)^{\otimes n}$$
Here, $H_1$ is the composite hypothesis where $v \in \mathbb{R}^d$ is unknown and allowed to vary over all $k$-sparse unit vectors. The information-theoretically optimal rate of detection is at the level of signal $\theta = \Theta(\sqrt{k \log d/n})$ \cite{berthet2013optimal, cai2015optimal, wang2016statistical}. However, when $k = o(\sqrt{d})$, the best known polynomial time algorithms for sparse PCA require that $\theta = \Omega(\sqrt{k^2/n})$. Since the seminal paper of \cite{berthet2013complexity} initiated the study of statistical-computational gaps through the $\pr{pc}$ conjecture, this $k$-to-$k^2$ gap for sparse PCA has been shown to follow from the $\pr{pc}$ conjecture in a sequence of papers \cite{berthet2013optimal, berthet2013complexity, wang2016statistical, gao2017sparse, brennan2018reducibility, brennan2019optimal}.

In negatively correlated sparse PCA, the eigenvector $v$ of interest instead corresponds to the \emph{smallest eigenvalue} of $\Sigma$. Negative sparse PCA can similarly be formulated as a hypothesis testing problem $\pr{neg-spca}(n, k, d, \theta)$, where the alternative hypothesis is instead given by $H_1: X \sim \mN(0, I_d - \theta vv^\top)^{\otimes n}$. Similar algorithms as in ordinary sparse PCA continue to work in the negative setting -- the information-theoretic limit of the problem remains at $\theta = \Theta(\sqrt{k \log d/n})$ and the best known efficient algorithms still require $\theta = \Omega(\sqrt{k^2/n})$. However, negative sparse PCA is stochastically a \emph{very differently structured} problem than ordinary sparse PCA. A sample from the ordinary spiked covariance model can be expressed as
$$X_i = \sqrt{\theta} \cdot gv + \mN(0, I_d)$$
where $g \sim \mN(0, 1)$ is independent of the $\mN(0, I_d)$ term. This signal plus noise representation is a common feature in many high-dimensional statistical models and is crucially used in the reductions showing hardness for sparse PCA in \cite{berthet2013optimal, berthet2013complexity, wang2016statistical, gao2017sparse, brennan2018reducibility, brennan2019optimal}. Negative sparse PCA does not admit a representation of this form, making it an atypical planted problem and different from ordinary sparse PCA, despite the deceiving similarity between their optimal algorithms. The lack of this representation makes reducing to Negative sparse PCA technically challenging. Negatively spiked PCA was also recently related to the hardness of finding approximate ground states in the Sherrington-Kirkpatrick model \cite{bandeira2019computational}. However, ordinary PCA does not seem to share this connection. In Section \ref{sec:2-neg-spca}, we give a reduction obtaining the following computational lower bound for $\pr{neg-spca}$ from the $\pr{bpc}$ conjecture.

\begin{theorem}[Lower Bounds for $\pr{neg-spca}$] \label{thm:neg-spca-lb}
If $k, d$ and $n$ are polynomial in each other, $k = o(\sqrt{d})$ and $k = o(n^{1/6})$, then the $\pr{bpc}$ conjecture implies a computational lower bound for $\pr{neg-spca}(n, k, d, \theta)$ at all levels of signal $\theta = \tilde{o}(\sqrt{k^2/n})$.
\end{theorem}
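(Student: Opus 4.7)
The plan is to reduce from bipartite planted clique to negative sparse PCA by first routing through the positive spiked covariance model and then flipping the sign of the spike via a Wishart--inverse Wishart convergence argument. Starting from an instance of $\pr{bpc}(n, d, k_m, k, 1/2)$ with $k_m \asymp k$, the first step is to apply an existing $\pr{bpc}$-to-sparse-PCA reduction (in the spirit of those developed for ordinary sparse PCA from $\pr{pc}$ or $\pr{bpc}$, combined with the Bernoulli rotations technique developed earlier in this paper) to produce $n$ vectors $X_1, \dots, X_n \in \mathbb{R}^d$ that in total variation are distributed as iid draws from $\mN(0, I_d + \theta_+ vv^\top)$, where $v \in \{0, 1/\sqrt{k}\}^d$ is supported on the planted column-clique and the positive spike satisfies $\theta_+ \asymp k/\sqrt{n}$ at the conjectured hardness boundary of $\pr{bpc}$.

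The key observation is that the sample second-moment matrix $S = X^\top X$ is Wishart $W_d(n, I_d + \theta_+ vv^\top)$, and by Sherman--Morrison $(I_d + \theta_+ vv^\top)^{-1} = I_d - \theta_- vv^\top$ with $\theta_- = \theta_+/(1+\theta_+) \asymp \theta_+$. Thus $S^{-1}$ is inverse Wishart with exactly the negative spike we want. To turn this into honest Gaussian samples from $\mN(0, I_d - \theta_- vv^\top)$ the plan is to (i) prove that a suitably normalized version of $S^{-1}$ is close in total variation to a Wishart $W_d(n', I_d - \theta_- vv^\top)$ for some $n' \asymp n$, and (ii) decompose the resulting Wishart into $n'$ honest iid samples via a Bartlett-type decomposition augmented with a uniformly random Stiefel frame supplying the orthogonal randomness missing conditional on the Wishart. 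The composed reduction maps $\pr{bpc}$ hardness at $k_m = o(\sqrt{n})$ and $k = o(\sqrt{d})$ precisely to $\pr{neg-spca}$ hardness at $\theta_- = \tilde{o}(\sqrt{k^2/n})$.

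The main obstacle is step (i), the Wishart--inverse Wishart convergence in total variation. The natural approach is to write the explicit densities of both laws, exploit that $I_d - \theta_- vv^\top$ is a rank-one sparse perturbation of the identity, and expand the log-density difference around $\theta_- = 0$. The dangerous terms are traces of powers of $\theta_- vv^\top$ and ratios of multivariate Gamma normalizers; after the linear and quadratic pieces cancel one expects the surviving cross terms to scale like $n \theta_-^3 \asymp k^3/\sqrt{n}$, which is $o(1)$ exactly when $k = o(n^{1/6})$, matching the hypothesis of the theorem. Fluctuations of the spectrum of $S$ in the $d-1$ bulk directions produce only sub-leading corrections under $k = o(\sqrt{d})$. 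Step (ii) is mostly bookkeeping, relying on the fact that, conditioned on $S$, the original samples are determined only up to a uniformly random orthogonal group factor that we can generate freshly without knowledge of the data, so no additional signal is lost in translating between the Wishart and the required iid sample format expected by the $\pr{neg-spca}$ oracle.
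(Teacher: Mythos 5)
Your high-level skeleton is right: route through positive spiked covariance, observe via Sherman--Morrison that the inverse of a positively spiked Wishart looks like a Wishart with a negative spike, prove a Wishart--inverse Wishart total variation comparison, and regenerate honest Gaussian samples using the Haar--conditional-on-$\hat\Sigma$ decomposition. Step (ii) is exactly how the paper proceeds (Lemma~\ref{lem:invariance}).

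There are, however, two genuine gaps in step (i). First, you work with the $d\times d$ Wishart $S = X^\top X$ formed from your $d$-dimensional sparse PCA samples. The Wishart--inverse Wishart comparison has KL divergence of order $m^3/n$ in the matrix dimension $m$ (this is Theorem~\ref{thm:inverse-wishart}), so taking $m=d$ requires $n \gg d^3$. The theorem's hypotheses only guarantee $k = o(n^{1/6})$ with $d$ polynomial in $n$ and $k = o(\sqrt d)$, which is compatible with, say, $d = n^{0.9}$ -- in which case $n \gg d^3$ fails badly. The paper's reduction avoids this by first producing a positive sparse PCA instance in a \emph{small} intermediate dimension $m = \omega(k^2)$ rather than in dimension $d$, inverting the $m \times m$ empirical covariance, and only at the very last step padding the output columns with i.i.d.\ $\mN(0,1)$ entries to reach dimension $d$. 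Then $n \gg m^3$ reduces to $n \gg k^6$ (taking $m/k^2 \to \infty$ slowly), which is exactly $k = o(n^{1/6})$. Without this intermediate dimension reduction, your argument does not close.

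Second, your heuristic for where $k = o(n^{1/6})$ comes from is incorrect. You attribute it to the cubic cross term $n\theta_-^3 \asymp k^3/\sqrt n$ in a spike-strength expansion. In fact the general-covariance comparison reduces exactly to the isotropic case ($\Sigma = I_m$) by a simple push-forward: if $A \sim \mW_m(n,I_m)$ and $C \sim \mW_m^{-1}(n,\beta I_m)$, then $\Sigma^{1/2}A\Sigma^{1/2}$ and $\Sigma^{1/2}C\Sigma^{1/2}$ have the correct laws, so the TV distance is independent of $\theta_-$ altogether. The dominant KL term is $m^3/6n$ (plus lower-order corrections depending on the degrees-of-freedom mismatch and the choice of scaling constant $\beta$), coming from Stirling-type expansions of the multivariate log-gamma and digamma functions. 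The $\theta$-dependence you're tracking cancels entirely; the constraint $k = o(n^{1/6})$ is purely a statement about the dimension of the Wishart matrix you need to invert, which is why the intermediate dimension trick in the paragraph above is not optional bookkeeping but the crux of the argument.
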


We deduce this theorem and discuss its conditions in detail in Section \ref{subsec:3-neg-spca}. A key step in our reduction to $\pr{neg-spca}$ involves randomly rotating the positive semidefinite square root of the inverse of an empirical covariance matrix. In analyzing this step, we  prove a novel convergence result in random matrix theory, which may be of independent interest. Specifically, we characterize when a Wishart matrix and its inverse converge in KL divergence. This is where the parameter constraint $k = o(n^{1/6})$ in the theorem above arises. We believe that this is an artefact of our techniques and extending the theorem to hold without this condition is an interesting open problem. A similar condition arose in the strong lower bounds of \cite{brennan2019optimal}. We remark that conditions of this form \emph{do not affect the tightness} of our lower bounds, but rather only impose a constraint on the level of sparsity $k$. More precisely, for each fixed level of sparsity $k = \tilde{\Theta}(n^{\alpha})$, there is conjectured statistical-computational gap in $\theta$ between the information-theoretic barrier of $\theta = \Theta(\sqrt{k \log d/n})$ and computational barrier of $\theta = \tilde{o}(\sqrt{k^2/n})$. Our reduction tightly establishes this gap for all $\alpha \in (0, 1/6]$. Our main motivation for considering $\pr{neg-spca}$ is that it seems to have a fundamental connection to the structure of \emph{supervised problems} where ordinary sparse PCA does not. In particular, our reduction to $\pr{neg-spca}$ is a crucial subroutine in reducing to mixtures of sparse linear regressions and robust sparse linear regression. This is discussed further in Sections \ref{sec:1-techniques}, \ref{sec:2-neg-spca} and \ref{sec:2-supervised}.

\subsection{Unsigned and Mixtures of Sparse Linear Regressions}
\label{subsec:1-problems-mslr}

In learning mixtures of sparse linear regressions (SLR), the task is to learn $L$ sparse linear functions capturing the relationship between features and response variables in heterogeneous samples from $L$ different sparse regression problems. Formally, the observations $(X_1, y_1), (X_2, y_2), \dots, (X_n, y_n)$ are $n$ independent sample-label pairs given by $y_i = \langle \beta, X_i \rangle + \eta_i$ where $X_i \sim \mN(0, I_d)$, $\eta_i \sim \mN(0, 1)$ and $\beta$ is chosen from a mixture distribution $\nu$ over a finite set $k$-sparse vectors $\{\beta_1, \beta_2, \dots, \beta_L\}$ of bounded $\ell_2$ norm. The task is to estimate the components $\beta_j$ that are sufficiently likely under $\nu$ in $\ell_2$ norm i.e. to within an $\ell_2$ distance of $\tau$.

Mixtures of linear regressions, also known as the hierarchical mixtures of experts model in the machine learning community \cite{jordan1994hierarchical}, was first introduced in \cite{quandt1978estimating} and has been studied extensively in the past few decades \cite{de1989mixtures, wedel1995mixture, mclachlan2004finite, zhu2004hypothesis, faria2010fitting}. Recent work on mixtures of linear regressions has focussed on efficient algorithms with finite-sample guarantees \cite{chaganty2013spectral, chen2014convex, yi2014alternating, balakrishnan2017statistical, chen2017convex, li2018learning}. The high-dimensional setting of mixtures of SLRs was first considered in \cite{stadler2010l}, which proved an oracle inequality for an $\ell_1$-regularization approach, and variants of the EM algorithm for mixtures of SLRs were analyzed in \cite{wang2014high, yi2015regularized}. Recent work has also studied a different setting for mixtures of SLRs where the covariates $X_i$ can be designed by the learner \cite{yin2018learning, krishnamurthy2019sample}.

We show that a statistical-computational gap emerges for mixtures of SLRs even in the simplest case where there are $L = 2$ components, the mixture distribution $\nu$ is known to sample each component with probability $1/2$ and the task is to estimate even just one of the components $\{ \beta_1, \beta_2\}$ to within $\ell_2$ norm $\tau$. We refer to this simplest setup for learning mixtures of SLRs as $\pr{mslr}(n, k, d, \tau)$. The following computational lower bound is deduced in Section \ref{subsec:3-slr} and is a consequence of the reduction in Section \ref{sec:2-supervised}.

\begin{theorem}[Lower Bounds for $\pr{mslr}$] \label{thm:mslr-lb}
If $k, d$ and $n$ are polynomial in each other, $k = o(\sqrt{d})$ and $k = o(n^{1/6})$, then the $k\pr{-bpc}$ conjecture implies that there is a computational lower bound for $\pr{mslr}(n, k, d, \tau)$ at all sample complexities $n = \tilde{o}(k^2/\tau^4)$.
\end{theorem}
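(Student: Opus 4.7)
The plan is to deduce Theorem \ref{thm:mslr-lb} by composing two reductions: (i) the reduction from $k\pr{-bpc}$ to $\pr{neg-spca}$ encapsulated in Theorem \ref{thm:neg-spca-lb}, and (ii) a total variation reduction from $\pr{neg-spca}$ to the simple-vs-simple hypothesis testing formulation of $\pr{mslr}$. The motivation for routing through $\pr{neg-spca}$ (as opposed to ordinary sparse PCA) is the structural match highlighted in Section \ref{subsec:1-problems-negspca}: in a symmetric two-component mixture $\beta \in \{\pm\beta_*\}$ with $v = \beta_*/\|\beta_*\|$, the conditional law of $X$ given $(y,\epsilon)$ is precisely a negatively spiked Gaussian $\mN\!\bigl(\mu(y,\epsilon),\, I_d - \theta\, vv^\top\bigr)$ with $\theta = \|\beta_*\|^2/(1+\|\beta_*\|^2)$. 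Thus each $\pr{mslr}$ sample naturally decomposes as a $\pr{neg-spca}$-distributed residual plus a translation along $v$, and the reduction attempts to invert this decomposition.

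Concretely, given a $\pr{neg-spca}$ sample $Z \sim \mN(0, I_d - \theta vv^\top)$, the reduction would adjoin fresh independent randomness $(\epsilon, y)$ with $\epsilon \sim \{\pm 1\}$ uniform and $y$ drawn from the correct $\pr{mslr}$ marginal law of the label, and assemble a pair $(X, y)$ whose conditional distribution $X\mid(y,\epsilon)$ matches $Z$ shifted by $\mu(y,\epsilon)$. A direct $2\times 2$ block-Gaussian check then verifies that the resulting joint law $(X,y)\mid\epsilon$ has covariance $\begin{pmatrix} I_d & \epsilon\beta_* \\ \epsilon\beta_*^\top & 1+\|\beta_*\|^2\end{pmatrix}$ with $\beta_* = \sqrt{\theta/(1-\theta)}\,v$, so $\tau := \|\beta_*\|_2 \asymp \sqrt\theta$. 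Feeding $\theta \asymp \tau^2$ into the $\pr{neg-spca}$ threshold $\theta = \tilde{o}(\sqrt{k^2/n})$ yields $n = \tilde{o}(k^2/\tau^4)$, exactly matching the theorem, while the conditions $k = o(\sqrt d)$ and $k = o(n^{1/6})$ pass through unchanged from Theorem \ref{thm:neg-spca-lb}. Subadditivity of total variation across the $n$ samples then transfers detection error from the $\pr{mslr}$ tester back to an efficient $\pr{neg-spca}$ tester, and thence to a $k\pr{-bpc}$ tester, contradicting Conjecture \ref{conj:hard-conj}.

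The hardest step, and the reason the construction needs the dedicated treatment of Section \ref{sec:2-supervised}, is producing this shift without knowing $v$. The conditional mean $\mu(y,\epsilon)\propto v$ formally involves the unknown spike direction, so the reduction cannot explicitly translate $Z$ along $v$; the resolution must instead leverage the rotational structure already built into the $\pr{neg-spca}$ sample to push the $v$-dependence into the correlations between $X$ and $y$ implicitly, for example by defining $y$ as a carefully calibrated linear functional of $Z$ plus fresh Gaussian noise. Compounding this is the null-hypothesis matching problem: under $Z \sim \mN(0,I_d)$, any naive version of this construction inflates the marginal covariance of $X$ to $I_d + \Theta(\theta)\, vv^\top$, which over $n \asymp k^2/\tau^4$ samples gives $\Omega(k^2/\tau^2) \gg 1$ total variation gap to the canonical $\pr{mslr}$ null, so a direct TV argument fails. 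Handling both obstacles together---typically by reducing to a judiciously modified simple-vs-simple formulation of $\pr{mslr}$ whose null construction mirrors the alternative on the $\pr{neg-spca}$ side, and then certifying equivalence with the canonical formulation via an auxiliary argument---constitutes the main technical content of the proof.
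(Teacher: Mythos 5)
Your structural observation is exactly the right one: in a symmetric two-component mixture, the conditional law $X \mid (R, y)$ is a negatively spiked Gaussian shifted along $v$, and indeed the paper's Section~\ref{subsec:1-tech-decomposing} opens with precisely this decomposition. You also correctly flag the central difficulty: the shift $\mu(y,\epsilon) \propto v$ cannot be produced obliviously. However, your proposed resolution---producing the shift ``by defining $y$ as a carefully calibrated linear functional of $Z$ plus fresh Gaussian noise''---does not work, and this is not a technicality that a clever auxiliary argument can patch. If $y = \langle a, Z\rangle + \eta$ for a fixed (or $v$-oblivious random) $a$, then $\mathrm{Cov}(Z, y) = (I_d - \theta vv^\top)a$, and the conditional mean $\E[Z \mid y]$ is proportional to $(I_d - \theta vv^\top)a$, which is $\propto v$ if and only if $a$ is itself along $v$---precisely the information the reduction must not have. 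More fundamentally, a reduction receiving only $\pr{neg-spca}$ samples $Z_1,\dots,Z_n \sim \mN(0, I_d - \theta vv^\top)$ and outputting pairs $(X_i, y_i)$ computed obliviously from the $Z_j$ and fresh randomness cannot inject a $v$-proportional mean shift into the $X_i$: the only $v$-dependence available is through the (negatively spiked) covariance. Extracting from $Z$ a $v$-proportional mean would amount to estimating $v$, i.e., solving $\pr{neg-spca}$, which defeats the purpose of the reduction.

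The missing idea is that the paper does not reduce from a single $\pr{neg-spca}$ instance. The reduction $k\textsc{-bpds-to-mslr}$ (Figure~\ref{fig:mixtures-slr-reduction}) begins by applying \textsc{Bernoulli-Clone} to the input $k\pr{-bpds}$ instance to produce \emph{two independent} copies sharing the same hidden clique $S$. One copy is mapped via $k\pr{-bpds-to-isgm}$ to an $\pr{isgm}$ instance $(Z_1,\dots,Z_N)$ with mean $\pm\mu\cdot\mathbf{1}_S$, which supplies the $v$-proportional shift; the other is mapped via $\pr{bpds-to-neg-spca}$ to $(W_1,\dots,W_n)$ with covariance $I_d - \theta v_S v_S^\top$, which supplies the spiked covariance. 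After rescaling the $\pr{isgm}$ instance by a label-dependent factor (Lemma~\ref{lem:planted-label}), the outputs are merged as $X_i = \frac{1}{\sqrt{2}}(Z_i' + W_i)$, and the variance decomposition you wrote down for $X \mid (R,y)$ is then verified exactly. Your concern about the null distribution is also preempted by the paper, but not by an equivalence argument as you suggest: the detection formulation $\pr{mslr}(n,k,d,\tau)$ in Section~\ref{subsec:2-formulations} deliberately takes $H_0 = (\mN(0,I_d)\otimes\mN(0,1+\tau^2))^{\otimes n}$, matching what the reduction naturally produces, rather than $\tau=0$ of the alternative; the implication for estimation is then recovered separately in Section~\ref{subsec:2-estimation}. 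Your parameter accounting ($\theta \asymp \tau^2$, so $\theta = \tilde o(\sqrt{k^2/n})$ gives $n = \tilde o(k^2/\tau^4)$) is correct, as is the claim that $k = o(n^{1/6})$ is inherited from the Wishart comparison underlying $\pr{neg-spca}$.
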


As we will discuss in Section \ref{subsec:2-formulations}, we will prove this theorem by reducing to the problem of testing between the mixtures of SLRs model when $\beta_1 = - \beta_2$ and a null hypothesis under which $y$ and $X$ are independent. A closely related work \cite{fan2018curse} studies a nearly identical testing problem in the statistical query model. They tightly characterize the information-theoretic limit of this problem, showing that it occurs at the sample complexity $n = \tilde{\Theta}(k \log d /\tau^4)$. Therefore our reduction establishes a $k$-to-$k^2$ statistical-computational gap in this model of learning mixtures of SLRs. In \cite{fan2018curse}, it is also shown that efficient algorithms in the statistical query model suffer from this same $k$-to-$k^2$ gap.

Our reduction to the hypothesis testing formulation of $\pr{mslr}$ above is easily seen to imply that the same computational lower bound holds for an unsigned variant $\pr{uslr}(n, k, d, \tau)$ of SLR, where the $n$ observations $(X_1, y_1), (X_2, y_2), \dots, (X_n, y_n)$ now of the form $y_i = |\langle \beta, X_i \rangle + \eta_i|$ for a fixed unknown $\beta$. Note that by the symmetry of $\mN(0, 1)$, $y_i$ is equidistributed to $||\langle \beta, X_i \rangle | + \eta_i|$ and thus is a noisy observation of $|\langle \beta, X_i \rangle |$. In general, noisy observations of the phaseless modulus $|\langle \beta, X_i \rangle |$ from some conditional link distribution $\bP( \cdot \, | \, |\langle \beta, X_i \rangle | )$ yields a general instance of phase retrieval \cite{mondelli2018fundamental, celentano2020estimation}. As observed in \cite{fan2018curse}, the problem $\pr{uslr}$ is close to the canonical formulation of sparse phase retrieval (SPR) where $\bP( \cdot \, | \, |\langle \beta, X_i \rangle | )$ is $\mN(|\langle \beta, X_i \rangle |^2, \sigma^2)$, which has been studied extensively and has a conjectured $k$-to-$k^2$ statistical-computational gap \cite{li2013sparse, schniter2014compressive, candes2015phase, cai2016optimal, wang2017sparse, hand2018phase, barbier2019optimal, celentano2020estimation}. Our lower bounds provide partial evidence for this conjecture and it is an interesting open problem to give a reduction to the canonical formulation of SPR and other sparse GLMs through average-case reductions.

The reduction to $\pr{mslr}$ showing Theorem \ref{thm:mslr-lb} in Section \ref{sec:2-supervised} is our capstone reduction. It showcases a wide range of our techniques including dense Bernoulli rotations, constructions of combinatorial design matrices from $\mathbb{F}_r^t$, our reduction to $\pr{neg-spca}$ and its connection to random matrix theory, and an additional technique of combining instances of different unsupervised problems into a supervised problem. We give an overview of these techniques in Section \ref{sec:1-techniques}. Furthermore, $\pr{mslr}$ is a very differently structured problem from any of our variants of $\pr{pc}$ and it is surprising that the tight statistical-computational gap for $\pr{mslr}$ can be derived from their hardness. We remark that our lower bounds for $\pr{mslr}$ inherit the technical condition that $k = o(n^{1/6})$ from our reduction to $\pr{neg-spca}$. As before, this does not affect the fact that we show tight hardness and it is an interesting open problem to remove this condition.

\subsection{Robust Sparse Linear Regression}
\label{subsec:1-problems-robust-slr}

In ordinary SLR, the observations $(X_1, y_1), (X_2, y_2), \dots, (X_n, y_n)$ are independent sample-label pairs given by $y_i = \langle \beta, X_i \rangle + \eta_i$ where $X_i \sim \mN(0, \Sigma)$, $\eta_i \sim \mN(0, 1)$ and $\beta$ is an unknown $k$-sparse vector with bounded $\ell_2$ norm. The task is to estimate $\beta$ to within $\ell_2$ norm $\tau$. When $\Sigma$ is well-conditioned, SLR is a gapless problem with the computationally efficient LASSO attaining the information-theoretically optimal sample complexity of $n = \Theta(k \log d/\tau^2)$ \cite{tibshirani1996regression,bickel2009simultaneous,raskutti2010restricted}. When $\Sigma$ is not well-conditioned, SLR has a statistical-computational gap based on its restricted eigenvalue constant \cite{zhang2014lower}. As with robust sparse mean estimation, the robust SLR problem $\pr{rslr}(n, k, d, \tau, \epsilon)$ is obtained when a computationally-unbounded adversary corrupts an arbitrary $\epsilon$-fraction of the observed sample-label pairs. In this work, we consider the simplest case of $\Sigma = I_d$ where SLR is gapless but, as we discuss next, robustness seems to induce a statistical-computational gap.

Robust regression is a well-studied classical problem in statistics \cite{rousseeuw2005robust}. Efficient algorithms remained elusive for decades, but recent breakthroughs in sum of squares algorithms \cite{klivans2018efficient, karmalkar2019list, raghavendra2020list}, filtering approaches \cite{diakonikolas2019efficient} and robust gradient descent \cite{chen2017distributed, prasad2018robust, diakonikolas2019sever} have led to the first efficient algorithms with provable guarantees. A recent line of work has also studied efficient algorithms and barriers in the high-dimensional setting of robust SLR \cite{chen2013robust, balakrishnan2017computationally, liu2018high, liu2019high}. Even in the simplest case of $\Sigma = I_d$ where the covariates $X_i$ have independent entries, the best known polynomial time algorithms suggest robust SLR has a $k$-to-$k^2$ statistical-computational gap. As shown in \cite{gao2020robust}, similar to $\pr{rsme}$, robust SLR is only information-theoretically possible if $\tau = \Omega(\epsilon)$. In \cite{balakrishnan2017computationally, liu2018high}, it is shown that polynomial-time ellipsoid-based algorithms solve robust SLR with $n = \tilde{\Theta}(k^2 \log d/\epsilon^2)$ samples when $\tau = \tilde{\Theta}(\epsilon)$. Furthermore, \cite{liu2018high} shows that an $\pr{rsme}$ oracle can be used to solve robust SLR with only a $\tilde{\Theta}(1)$ factor loss in $\tau$ and the required number of samples $n$. As noted in \cite{li2017robust}, $n = \Omega(k \log d/\epsilon^2)$ samples suffice to solve $\pr{rsme}$ inefficiently when $\tau = \Theta(\epsilon)$. Combining these observations yields an inefficient algorithm for robust SLR with sample complexity $n = \tilde{\Theta}(k \log d/\epsilon^2)$ samples when $\tau = \tilde{\Theta}(\epsilon)$, confirming that the best known efficient algorithms suggest a $k$-to-$k^2$ statistical-computational gap. In \cite{chen2013robust, liu2019high}, efficient algorithms are shown to succeed in an alternative regime where $n = \tilde{\Theta}(k \log d)$, $\epsilon = \tilde{O}(1/\sqrt{k})$ and $\tau = \tilde{O}(\epsilon \sqrt{k})$.

All of these algorithms suggest that the correct computational sample complexity for robust SLR is $n = \tilde{\Omega}(k^2 \epsilon^2/\tau^4)$. In Section \ref{subsec:3-slr}, we deduce the following tight computational lower bound for $\pr{rslr}$ providing evidence for this conjecture.

\begin{theorem}[Lower Bounds for $\pr{rslr}$] \label{thm:rslr-lb}
If $k, d$ and $n$ are polynomial in each other, $k = o(n^{1/6})$, $k = o(\sqrt{d})$ and $\epsilon < 1/2$ is such that $\epsilon = \tilde{\Omega}(n^{-1/2})$, then the $k\pr{-bpc}$ conjecture implies that there is a computational lower bound for $\pr{rslr}(n, k, d, \tau, \epsilon)$ at all sample complexities $n = \tilde{o}(k^2 \epsilon^2/\tau^4)$.
\end{theorem}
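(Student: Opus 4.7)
The plan is to adapt the capstone reduction of Section \ref{sec:2-supervised} that proves Theorem \ref{thm:mslr-lb}, modifying its final stage so that the produced linear regression instance has a $(1-\epsilon)$-fraction of samples faithfully following a single linear model and the remaining $\epsilon$-fraction acting as adversarial outliers. As with the other estimation problems in this work, it suffices to reduce to the associated hypothesis testing formulation of $\pr{rslr}$ defined in Section \ref{subsec:2-formulations} and then invoke the detection-to-estimation passage of Section \ref{subsec:2-estimation}.

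Starting from an instance of $k\pr{-bpc}(m, N, k_m, k_N, 1/2)$, I would first run the early stages of the $\pr{mslr}$ pipeline essentially verbatim: apply dense Bernoulli rotations together with the combinatorial designs over $\mathbb{F}_r^t$ from Section \ref{sec:2-bernoulli-rotations} to a biclustering reformulation of the bipartite input graph, then pass through the $\pr{neg-spca}$ reduction of Section \ref{sec:2-neg-spca} to convert the planted sparse structure into Gaussian covariates whose population covariance carries the target direction $\beta$ as a negative $k$-sparse spike. At this point we have covariates $X_i$ together with an internal latent sign bit per sample that is controlled jointly by the hidden clique partition rather than by fresh randomness introduced in the reduction. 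I would then replace the symmetric $\pm$ mixture used in the $\pr{mslr}$ reduction by an $\epsilon$-biased label assignment: for each produced $X_i$ the label $y_i$ is set to $\langle \beta, X_i \rangle + \eta_i$ with probability $1-\epsilon$ and to the output of a fixed Huber-type alternative distribution with probability $\epsilon$. As stressed in Section \ref{subsec:1-desiderata}, the randomness selecting which samples are corrupted must not be generated freshly by the reduction -- otherwise the reduction could just as well reveal the corruption mask, collapsing $\pr{rslr}$ to gapless SLR -- so the mask has to be read off from the bipartite partition of the input $k\pr{-bpc}$ instance, exploiting the $k$-part structure to provide independent combinatorial degrees of freedom for encoding $\beta$ and for specifying the $\epsilon$-fraction of corrupted indices.

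After bounding the total variation between the output distribution and the target $\pr{rslr}(n, k, d, \tau, \epsilon)$ testing instance under both the planted and null hypotheses of $k\pr{-bpc}$, I would optimize parameters so that the $k\pr{-bpc}$ quadruple $(m, N, k_m, k_N)$ maps to the $\pr{rslr}$ quintuple $(n, k, d, \tau, \epsilon)$ with the conjectured hard regime $k_N = o(\sqrt N)$ and $k_m = o(\sqrt m)$ corresponding exactly to the target sample complexity $n = \tilde{o}(k^2 \epsilon^2/\tau^4)$. The main obstacle will be the second step of the reduction: ensuring simultaneously that (i) the joint law lies within $o(1)$ TV of a genuine Huber contaminated SLR instance, (ii) the reduction does not effectively leak the identity of the corrupted samples, and (iii) $\epsilon$ can be tuned independently of $\tau$ and $k$. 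This is the essential additional difficulty beyond the $\pr{mslr}$ reduction, which only has to produce a balanced mixture. The condition $\epsilon = \tilde{\Omega}(n^{-1/2})$ is expected to arise because $\epsilon n$ has to be at least of the order of the bipartite clique's signal contribution, and the condition $k = o(n^{1/6})$ is inherited from the internal $\pr{neg-spca}$ step, whose Wishart to inverse-Wishart convergence argument requires it.
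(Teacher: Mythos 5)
Your high-level plan is essentially right: the paper does prove this theorem via the unified reduction $k\pr{-bpds-to-mslr}$ of Section 6.2, and the $\pr{rslr}$ lower bound is obtained by running the same pipeline with $\epsilon<1/2$ (equivalently $r>2$) rather than with $r=2$. You also correctly attribute the condition $k=o(n^{1/6})$ to the Wishart versus inverse-Wishart step inside the $\pr{neg-spca}$ subroutine. However, there are two concrete mismatches with what the theorem as stated actually requires.

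First, the way you describe the corruption mechanism -- ``replace the symmetric $\pm$ mixture by an $\epsilon$-biased label assignment'' in which $y_i$ alone is resampled from a Huber alternative with probability $\epsilon$ -- does not match the paper's construction and is unlikely to yield a tight gap. In the paper, the corruption is \emph{not} generated by a fresh $\epsilon$-biased coin at the labelling stage; the label generation in Step 4 of Figure~6 is applied identically to every sample. The $\epsilon$-fraction of outliers emerges upstream, in the $\pr{isgm}$ subroutine: running $k\pr{-bpds-to-isgm}$ with $r>2$ produces an imbalanced sparse Gaussian mixture in which a $(1/r)$-fraction of the \emph{covariates} $X_i$ have a larger (rescaled, negative) mean shift, coming directly from the $(1/r)$-fraction of negative entries per column of the design matrix. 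After merging with the independent $\pr{neg-spca}$ copy and labelling, the corrupted samples inherit the specific jointly Gaussian $(X_i,y_i)$ law of Definition~6.3, whose $X$-covariance picks up the additional term $\frac{(a^2-1)\gamma^2}{1+\gamma^2}v_Sv_S^\top$ and whose $X$--$y$ covariance flips sign. An adversary corrupting only $y_i$ cannot realize this law, and a reduction that generates a fresh corruption mask would violate the very ``information preserving'' desideratum you cite from Section~1.3. You would need to verify that the joint $(X_i,y_i)$ law on the corrupted indices (not just the marginal of $y_i$) is indeed reproducible by a Huber contamination of $\pr{lr}_d(\tau v)$, which is exactly the content of Definition~6.3 and Lemma~6.6.

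Second, you propose to use the $\mathbb{F}_r^t$ design matrices $K_{r,t}$ throughout. Those give a theorem conditional on the number-theoretic assumption $\pr{(t)}$ on the pair $(n,\epsilon^{-1})$, which is what the version stated immediately before the theorem in Section~9.3 proves. The theorem you are asked to prove replaces $\pr{(t)}$ by the weaker hypothesis $\epsilon=\tilde\Omega(n^{-1/2})$, and for that the paper switches from $K_{r,t}$ to the random design matrix $R_{L,\epsilon}$ of Section~4.4 (yielding Corollary~6.7, $k\pr{-bpds-to-mslr}_R$). The bound $\epsilon=\tilde\Omega(n^{-1/2})$ then comes from the constraint $\epsilon\gg L^{-1}\log L$ needed for the spectral concentration of the sparse Erd\H{o}s--R\'enyi adjacency matrix underlying $R_{L,\epsilon}$, rather than from the heuristic signal-counting argument you give. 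Your proposal as written would prove a technically weaker statement; to obtain the theorem verbatim you must incorporate the random design matrix variant.
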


We present the reductions to $\pr{mslr}$ and $\pr{rslr}$ together as a single unified reduction $k\pr{-pds-to-mslr}$ in Section \ref{sec:2-supervised}. As is discussed in Section \ref{subsec:3-slr}, $\pr{mslr}$ and $\pr{rslr}$ are obtained by setting $r = \epsilon^{-1} = 2$ and $\epsilon < 1/2$, respectively. The theorem above follows from a slightly modified version of this reduction, $k\pr{-pds-to-mslr}_R$, that removes the technical condition \pr{(t)} that otherwise arises in applying $k\pr{-pds-to-mslr}$ with $r = n^{\Omega(1)}$. This turns out to be more important here than in the context of $\pr{rsme}$ because, as in the reduction to $\pr{mslr}$, this reduction to $\pr{rslr}$ inherits the technical condition that $k = o(n^{1/6})$ from our reduction to $\pr{neg-spca}$. This condition implicitly imposes a restriction on $\epsilon$ to satisfy that $\epsilon = \tilde{O}(n^{-1/3})$, since $\tau = \Omega(\epsilon)$ must be true for the problem to not be information-theoretically impossible. Thus our regime of interest for $\pr{rslr}$ is a regime where the technical condition \pr{(t)} is nontrivial. 

As in the case of $\pr{mslr}$ and $\pr{neg-spca}$, we emphasize that the condition $k = o(n^{1/6})$ does not affect the tightness of our lower bounds, merely restricting their regime of application. In particular, the theorem above yields a tight nontrivial statistical-computational gap in the entire parameter regime when $k = o(n^{1/6})$, $\tau = \Omega(\epsilon)$ and $\epsilon = \tilde{\Theta}(n^{-c})$ where $c$ is any constant in the interval $[1/3, 1/2]$. We remark that the condition $k = o(n^{1/6})$ seems to be an artefact of our techniques rather than necessary.

In the context of $\pr{rslr}$, we view our main contribution as a set of reduction techniques relating $\pr{pc}_\rho$ to the very differently structured problem $\pr{rslr}$, rather than the resulting computation lower bound itself. A byproduct of our reduction is the explicit construction of an adversary modifying an $\epsilon$-fraction of the samples in robust SLR that produces the $k$-to-$k^2$ statistical-computational gap in the theorem above. This adversary turns out to be surprisingly nontrivial on its own, but is a direct consequence of the structure of the reduction. This is discussed in more detail in Sections \ref{subsec:2-mixtures-slr} and \ref{subsec:3-slr}.

\subsection{Tensor Principal Component Analysis}
\label{subsec:1-problems-tpca}

In Tensor PCA, the observation is a single order $s$ tensor $T$ with dimensions $n^{\otimes s} = n \times n \times \cdots \times n$ given by $T \sim \theta v^{\otimes s} + \mN(0, 1)^{\otimes n^{\otimes s}}$, where $v$ has a Rademacher prior and is distributed uniformly over $\{-1, 1\}^n$ \cite{richard2014statistical}. The task is to recover $v$ within nontrivial $\ell_2$ error $o(\sqrt{n})$ and is only information-theoretically possible if $\theta = \tilde{\omega}\left(n^{(1 - s)/2}\right)$ \cite{richard2014statistical, lesieur2017statistical, chen2018phase, jagannath2018statistical, chen2019phase, perry2020statistical}, in which case $v$ can be recovered through exhaustive search. The best known polynomial-time algorithms all require the higher signal strength $\theta = \tilde{\Omega}(n^{-s/4})$, at which point $v$ can be recovered through spectral algorithms \cite{richard2014statistical}, the sum of squares hierarchy \cite{hopkins2015tensor, hopkins2016fast} and spectral algorithms based on the Kikuchi hierarchy \cite{wein2019kikuchi}. Lower bounds up to this conjectured computational barrier have been shown in the sum of squares hierarchy \cite{hopkins2015tensor, hopkins2017power} and for low-degree polynomials \cite{kunisky2019notes}. A number of natural ``local'' algorithms have also been shown to fail given much stronger levels of signal up to $\theta = \tilde{o}(n^{-1/2})$, including approximate message passing, the tensor power method, Langevin dynamics and gradient descent \cite{richard2014statistical, anandkumar2014tensor, arous2018algorithmic}.

We give a reduction showing that the $\pr{pc}_\rho$ conjecture implies an optimal computational lower bound at $\theta = \tilde{\Omega}(n^{-s/4})$ for tensor PCA. We show this lower bound against efficient algorithms with a low false positive probability of error in the hypothesis testing formulation of tensor PCA where $T \sim \mN(0, 1)^{\otimes n^{\otimes s}}$ under $H_0$ and $T$ is sampled from the tensor PCA distribution described above under $H_1$. More precisely, we prove the following theorem in Sections \ref{sec:2-hypergraph-planting} and \ref{sec:3-tensor}.

\begin{theorem}[Lower Bounds for $\pr{tpca}$] \label{thm:tpca-lb}
Let $n$ be a parameter and $s \ge 3$ be a constant, then the $k\pr{-hpc}^s$ conjecture implies a computational lower bound for $\pr{tpca}^s(n, \theta)$ when $\theta = \tilde{o}(n^{-s/4})$ against $\textnormal{poly}(n)$ time algorithms $\mathcal{A}$ solving $\pr{tpca}^s(n, \theta)$ with a low false positive probability of $\bP_{H_0}[\mathcal{A}(T) = H_1] = O(n^{-s})$.
\end{theorem}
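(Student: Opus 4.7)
The plan is to construct a polynomial-time reduction from $k\pr{-hpc}^s$ to the simple-vs-simple hypothesis testing formulation of $\pr{tpca}^s$ using the hypergraph planting technique of Section \ref{sec:2-hypergraph-planting}, combined with a densifying step. The natural alignment between the $s$-uniform adjacency tensor arising in $k\pr{-hpc}^s$ and the order-$s$ Gaussian tensor in $\pr{tpca}^s$ suggests starting from an instance of $k\pr{-hpc}^s(N, k, 1/2)$ with $N = n$ and $k = \tilde{\Theta}(\sqrt{n})$, sitting at the hpc hardness threshold. A quick Frobenius-norm check confirms this is the right parameter choice: a sparse tensor PCA instance with a $\sqrt{n}$-sparse $\pm 1$ spike and signal $\Theta(1)$ has total squared signal $\Theta(n^{s/2})$, exactly matching the total squared signal of dense tensor PCA with a Rademacher spike at the conjectured threshold $\theta = n^{-s/4}$.

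The construction would proceed in three stages. First, I would center the $\{0,1\}$-valued adjacency tensor of the hpc instance to $\{-1,+1\}$ entries, apply a univariate Bernoulli-to-Gaussian rejection kernel entrywise, and introduce independent vertex-level Rademacher sign flips $\sigma_i \in \{-1,+1\}$ at each $i \in [n]$. This produces a tensor $T^{(1)} \in \bR^{n^{\otimes s}}$ that is close in total variation to $\mN(0,1)^{\otimes n^{\otimes s}}$ under the hpc null (the sign flips preserve iid Gaussianity by symmetry) and to a rank-one spiked Gaussian model with mean $\eta\, u^{\otimes s}$ under the alternative, where $u \in \{-1,0,+1\}^n$ is the vertex-signed indicator of the clique vertices, supported on $k$ coordinates, and $\eta = \Theta(1)$. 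Second, I would apply a dense orthogonal transformation at each mode of the tensor, built from combinatorial designs over $\mathbb{F}_r^t$ in the spirit of Section \ref{sec:2-bernoulli-rotations}, chosen so that the image of the sparse vector $u$ is close in total variation to a uniform Rademacher vector in $\{-1,+1\}^n$. Orthogonality preserves the iid Gaussian noise, and the densifying map converts the $k$-sparse $\pm 1$ signal into an approximately dense rank-one signal with per-entry strength $\theta = \tilde{\Theta}(n^{-s/4})$, as required.

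The low false positive assumption $\bP_{H_0}[\mathcal{A}(T) = H_1] = O(n^{-s})$ is then used to absorb the cumulative TV slack from the rejection kernel and the densifying rotation: combining this small $H_0$ error with the TV distance of the reduction, any $\textnormal{poly}(n)$ time tpca oracle meeting the stated guarantee has advantage bounded away from zero on our reduction output, which yields a $\textnormal{poly}(n)$ time distinguisher for $k\pr{-hpc}^s$ at $k = \tilde{\Theta}(\sqrt{n})$, contradicting Conjecture \ref{conj:hard-conj}(4).

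The principal obstacle is tight signal matching under the Rademacher prior. A straightforward entrywise rejection from the hpc adjacency gives only a sparse spike in $\{-1,0,+1\}^n$, which is not in the support of the $\pr{tpca}^s$ prior, so some form of densification is essential. The densifying step must simultaneously (i) produce a genuinely $\pm 1$-valued (or TV-close) dense spike and (ii) route the $\Theta(1)$-strength sparse signal into a $\tilde{\Theta}(n^{-s/4})$-strength dense signal without losing a polynomial factor in signal-to-noise or introducing correlations that would pull the null away from iid Gaussian. Getting both (i) and (ii) simultaneously, with the total variation slack absorbed by the low false positive rate, is the crucial technical step and is the main reason the hypergraph planting machinery needs to be combined with a dense-rotation-style transformation in this reduction.
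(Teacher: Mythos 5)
Your high-level picture of the second half of the reduction — rejection kernel the entries to Gaussian, apply a near-orthogonal dense design matrix at each mode so that the sparse $\pm 1$ indicator of the clique maps to (essentially) a Rademacher vector while the noise stays isotropic, and lose only logarithmic factors in signal strength — matches the $K_{2,t}$-based dense Bernoulli rotation step $k\textsc{-pst-to-tpca}$ in the paper, and your Frobenius-norm heuristic for setting $k = \tilde\Theta(\sqrt n)$ and $\theta = \tilde\Theta(n^{-s/4})$ is the same heuristic the paper uses. But there are two gaps that would make your reduction fail as written, and you have misdiagnosed why the low false positive hypothesis is needed.

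First, the adjacency tensor of a $k\pr{-hpc}^s$ instance is \emph{missing} every entry with a repeated index: a $3$-uniform hypergraph gives you a bit for each \emph{set} $\{i,j,k\}$ with distinct $i,j,k$, but $\pr{tpca}^s$ requires a Gaussian entry for every ordered tuple, including $(i,i,j)$. When you write ``apply a univariate Bernoulli-to-Gaussian rejection kernel entrywise'' you implicitly assume a complete tensor of independent Bernoullis. In the matrix case ($s=2$) the missing diagonal can be filled with independent $\text{Bern}(1/2)$'s and hidden by random embedding (this is $\pr{To-}k\textsc{-Partite-Submatrix}$), because the $O(n)$ missing bits carry roughly one clique vertex's worth of information. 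For $s \ge 3$ this fails: the set of entries with two coinciding indices is itself a full $(s-1)$-order adjacency tensor sharing the same planted clique, so its information content alone suffices to solve $k\pr{-hpc}^s$, and you cannot generate it yourself without destroying the hardness of the instance. The paper's solution is \textsc{Advice-Complete-Tensor}: given $s-1$ vertices known to be in the clique, the missing entries can be completed using hyperedges through those advice vertices (after Bernoulli cloning to get independent copies for each of the $(s!)^2$ orderings). Since no such advice is available, \textsc{Iterate-and-Reduce} brute-forces over all $O(N^{s-1})$ candidate advice tuples and queries the $\pr{tpca}$ blackbox on each. A related but distinct issue you also skip is asymmetrization: a single hyperedge bit must populate $s!$ ordered positions, so without cloning the off-diagonal entries of your tensor are exactly equal across permutations of indices, hence very far from $\mN(0,1)^{\otimes n^{\otimes s}}$ under $H_0$.

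Second, you claim the low false positive hypothesis $\bP_{H_0}[\mathcal A(T) = H_1] = O(n^{-s})$ is there to ``absorb the cumulative TV slack from the rejection kernel and the densifying rotation.'' That is not its role: the TV slack from $k\textsc{-pst-to-tpca}$ is already $o(1)$ (in fact $O(k^{-2s}2^{-2st})$, see Theorem \ref{thm:tpca}) and a single-query reduction in total variation with slack $o(1)$ transfers Type I$+$II error via Lemma \ref{lem:3a} with no need for a one-sided guarantee. The low false positive rate is forced by the \emph{multi-query} structure described above: \textsc{Iterate-and-Reduce} calls $\mathcal A$ on $\Theta(N^{s-1})$ tensors under $H_0$, and only the union bound over these $\Theta(N^{s-1})$ queries (Lemma \ref{cor:one-side-reduction}) requires $\bP_{H_0}[\mathcal A(T)=H_1]=O(N^{-s})$. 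Your proposal, by omitting the iteration over advice tuples, has no reason to invoke the one-sided hypothesis at all — which is itself a signal that a key step is missing, since the theorem statement would otherwise be stronger than stated.

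So the fix is not to tune your densifying rotation further — that part of your plan is essentially correct and the ``principal obstacle'' you name (getting a genuine $\pm 1$ dense spike) is solved cleanly by $K_{2,t}$. The missing idea is the completing-tensors-with-advice machinery of Section \ref{sec:2-hypergraph-planting}, which introduces the multi-query structure and is what makes the one-sided error hypothesis necessary.
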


Lemma \ref{lem:one-side-estimation} in Section \ref{sec:3-tensor} shows that any $\text{poly}(n)$ time algorithm solving the recovery formulation of tensor PCA yields such an algorithm $\mathcal{A}$, and thus this theorem implies our desired computational lower bound. This low false positive probability of error condition on $\mathcal{A}$ arises from the fact that our reduction to $\pr{tpca}$ is a \textit{multi-query} average-case reduction, requiring multiple calls to a tensor PCA blackbox to solve $k\pr{-hpc}^s$. This feature is a departure from the rest of our reductions and the other average-case reductions to statistical problems in the literature, all of which are reductions in total variation, as will be described in Section \ref{subsec:2-tvreductions}, and thus only require a single query. This feature is a requirement of our technique for completing hypergraphs that will be described further in Sections \ref{subsec:1-tech-completing} and \ref{sec:2-hypergraph-planting}. 

We note that most formulations of tensor PCA in the literature also assume that the noise tensor of standard Gaussians is symmetric \cite{richard2014statistical, wein2019kikuchi}. However, given that the planted rank-1 component $v^{\otimes s}$ is symmetric as it is in our formulation, the symmetric and asymmetric noise models have a simple equivalence up to a constant factor loss in $\theta$. Averaging the entries of the asymmetric model over all permutations of its $s$ coordinates shows one direction of this equivalence, and the other is achieved by reversing this averaging procedure through Gaussian cloning as in Section 10 of \cite{brennan2018reducibility}. A closely related work is that of \cite{zhang2017tensor}, which gives a reduction from $\pr{hpc}^3$ to the problem of detecting a planted rank-1 component in a 3-tensor of Gaussian noise. Aside being obtained through different techniques, their result differs from ours in two ways: (1) the rank-1 components they considered were sparse, rather than sampled from a Rademacher prior; and (2) their reduction necessarily produces asymmetric rank-1 components. Although the limits of tensor PCA when $s \ge 3$ with sparse and Rademacher priors are similar, they can be very different in other problems. For example, in the matrix case when $s = 2$, a sparse prior yields a problem with a statistical-computational gap while a Rademacher prior does not. We also remark that ensuring the symmetry of the planted rank-1 component is a technically difficult step and part of the motivation for our completing hypergraphs technique in Section \ref{sec:2-hypergraph-planting}.

\subsection{Universality for Learning Sparse Mixtures}
\label{subsec:1-problems-universality}

When $\epsilon = 1/2$, our reduction to robust sparse mean estimation also implicitly shows tight computational lower bounds at $n = \tilde{o}(k^2/\tau^4)$ for learning sparse Gaussian mixtures. In this problem the task is to estimate two vectors $\mu_1, \mu_2$ up to $\ell_2$ error $\tau$, where the $\mu_i$ have bounded $\ell_2$ norms and a $k$-sparse difference $\mu_1 - \mu_2$, given samples from an even mixture of $\mN(\mu_1, I_d)$ and $\mN(\mu_2, I_d)$. In general, learning in Gaussian mixture models with sparsity has been studied extensively over the past two decades \cite{raftery2006variable, pan2007penalized, maugis2009variable, maugis2011non, azizyan2013minimax, azizyan2015efficient, malsiner2016model, verzelen2017detection, fan2018curse}. Recent work has established finite-sample guarantees for efficient and inefficient algorithms and proven information-theoretic lower bounds for the two-component case \cite{azizyan2013minimax, verzelen2017detection, fan2018curse}. These works conjectured that this problem has the $k$-to-$k^2$ statistical-computational gap shown by our reduction. In \cite{fan2018curse}, a tight computational lower bound matching ours was established in the statistical query model.

So far, despite having a variety of different hidden structures, the problems we have considered have all had either Gaussian or Bernoulli noise distributions. As we will describe in Section \ref{sec:1-techniques}, our techniques also crucially use a number of properties of the Gaussian distribution. This naturally raises the question: do our techniques have implications beyond simple noise distributions? Our final reduction answers this affirmatively, showing that our lower bound for learning sparse Gaussian mixtures implies computational lower bounds for a wide universality class of noise distributions. This lower bound includes the optimal gap in learning sparse Gaussian mixtures and the optimal gaps in \cite{berthet2013optimal, berthet2013complexity, wang2016statistical, gao2017sparse, brennan2018reducibility} for sparse PCA as special cases. This reduction requires introducing a new type of rejection kernel, that we refer to as symmetric 3-ary rejection kernels, and is described in Sections \ref{subsec:1-tech-universality} and \ref{subsec:srk}.

In Section \ref{sec:universality}, we show computational lower bounds for the \textit{generalized learning sparse mixtures} problem $\pr{glsm}$. In $\pr{glsm}(n, k, d, \mU)$ where $\mathcal{U} = (\mD, \mQ, \{ \mP_{\nu} \}_{\nu \in \mathbb{R}})$, the elements of the family $\{\mP_{\nu}\}_{\nu \in \mathbb{R}}$ and $\mQ$ are distributions on a measurable space, such that the pairs $(\mP_{\nu}, \mQ)$ all satisfy mild conditions permitting efficient computation outlined in Section \ref{subsec:srk}, and $\mD$ is a mixture distribution on $\mathbb{R}$. The observations in $\pr{glsm}$ are $n$ independent samples $X_1, X_2, \dots, X_n$ formed as follows:
\begin{itemize}
\item for each sample $X_i$, draw some latent variable $\nu_i \sim \mD$ and
\item sample $(X_i)_j \sim \mP_{\nu_i}$ if $j \in S$ and $(X_i)_j \sim \mQ$ otherwise, independently
\end{itemize}
where $S$ is some unknown subset containing $k$ of the $d$ coordinates. The task is to recover $S$ or distinguish from an $H_0$ in which all of the data is drawn i.i.d. from $\mQ$. Given a collection of distributions $\mU$, we define $\mU$ to be in our universality class $\pr{uc}(N)$ with level of signal $\tau_{\mU}$ if it satisfies the following conditions.

\begin{definition}[Universality Class and Level of Signal] \label{defn:univ-signal}
Given a parameter $N$, define the collection of distributions $\mathcal{U} = (\mD, \mQ, \{ \mP_{\nu} \}_{\nu \in \mathbb{R}})$ implicitly parameterized by $N$ to be in the universality class $\pr{uc}(N)$ if
\begin{itemize}
\item the pairs $(\mP_{\nu}, \mQ)$ are all computable pairs, as in Definition \ref{def:computable}, for all $\nu \in \mathbb{R}$;
\item $\mD$ is a symmetric distribution about zero and $\bP_{\nu \sim \mD}[\nu \in [-1, 1]] = 1 - o(N^{-1})$; and
\item there is a level of signal $\tau_{\mathcal{U}} \in \mathbb{R}$ such that for all $\nu \in [-1, 1]$ such that for any fixed constant $K > 0$, it holds that
$$\left| \frac{d\mP_{\nu}}{d\mQ} (x) - \frac{d\mP_{-\nu}}{d\mQ} (x) \right| = O_N\left(\tau_{\mathcal{U}} \right) \quad \textnormal{and} \quad \left|\frac{d\mP_{\nu}}{d\mQ} (x) + \frac{d\mP_{-\nu}}{d\mQ} (x) - 2 \right| = O_N\left( \tau_{\mathcal{U}}^2 \right)$$
with probability at least $1 - O\left(N^{-K}\right)$ over each of $\mP_{\nu}, \mP_{-\nu}$ and $\mQ$.
\end{itemize}
\end{definition}

Our main result establishes a computational lower bound for $\pr{glsm}$ instances with $\mU \in \pr{uc}(n)$ in terms of the level of signal $\tau_{\mU}$. As mentioned above, this theorem implies optimal lower bounds for learning sparse mixtures of Gaussians, sparse PCA and many more natural problem formulations described in Section \ref{subsec:universalitydiscussion}.

\begin{theorem}[Computational Lower Bounds for \pr{glsm}] \label{thm:glsm-lb}
Let $n, k$ and $d$ be polynomial in each other and such that $k = o(\sqrt{d})$. Suppose that the collections of distributions $\mU = (\mD, \mQ, \{ \mP_{\nu} \}_{\nu \in \mathbb{R}})$ is in $\pr{uc}(n)$. Then the $k\pr{-bpc}$ conjecture implies a computational lower bound for $\pr{glsm}\left(n, k, d, \mU \right)$ at all sample complexities $n = \tilde{o}\left(\tau_{\mU}^{-4}\right)$.
\end{theorem}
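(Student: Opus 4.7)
The plan is to prove Theorem \ref{thm:glsm-lb} by first reducing $k\pr{-bpc}$ to the special case of learning sparse Gaussian mixtures through the chain leading to $\pr{rsme}$ at $\epsilon = 1/2$, and then transforming the Gaussian instance entrywise into a general $\pr{glsm}$ instance via a symmetric 3-ary rejection kernel calibrated to the level of signal $\tau_\mU$.

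First, I would invoke the reduction underlying Theorem \ref{thm:rsme-lb} specialized to $\epsilon = 1/2$, which produces the sparse Gaussian mixture testing problem with hypotheses $H_0 : X_i \sim \mN(0, I_d)$ i.i.d. and $H_1 : X_i = \epsilon_i \mu + Z_i$ where $\epsilon_i$ is uniform on $\{\pm 1\}$, $Z_i \sim \mN(0, I_d)$, and $\mu$ is a $k$-sparse vector with entries $\pm \tau_\mU$ on an unknown set $S \subseteq [d]$ of size $k$. Under the $k\pr{-bpc}$ conjecture this instance is hard at any sample size $n = \tilde{o}(k^2/(\tau_\mU\sqrt{k})^4) = \tilde{o}(\tau_\mU^{-4})$, which is exactly the target regime since the per-coordinate separation $\tau_\mU$ is tuned to the level of signal of the universality class.

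Second, I would apply entrywise a symmetric 3-ary rejection kernel $K$ that, on input $X \in \mathbb{R}$ and an independent latent $\nu \sim \mD$, produces an output in the sample space of $\mQ$ so that: on $X \sim \mN(0, 1)$ the output is distributed exactly as $\mQ$ (via the inverse CDF, using the computable pairs assumption); on $X \sim \mN(\tau_\mU, 1)$ the output is distributed approximately as $\mP_\nu$; and on $X \sim \mN(-\tau_\mU, 1)$ approximately as $\mP_{-\nu}$. Sampling one latent $\nu_i \sim \mD$ per row and coupling $\epsilon_i$ to $\nu_i$ by multiplication preserves the law of $\mD$ by its symmetry about zero, so the resulting row is marginally a $\pr{glsm}$ row with shared latent $\epsilon_i|\nu_i| \sim \mD$; since coordinates outside $S$ receive a pure $\mN(0,1)$ input, the kernel produces an exact $\mQ$ sample there, so no information about $S$ is revealed by the reduction itself.

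The main obstacle is controlling the total variation budget of $K$. The universality class conditions say that $d\mP_\nu/d\mQ$ equals $1 + a(\nu, x) + b(\nu, x)$, where $a$ is an odd-in-$\nu$ term of size $O(\tau_\mU)$ and $b$ is an even-in-$\nu$ term of size $O(\tau_\mU^2)$. The kernel $K$ should cancel the odd term exactly using the sign $\epsilon_i$ and match the even term on average via the symmetry of $\mD$, yielding per-entry $\chi^2$-divergence of order $\tau_\mU^8$. Tensorizing $\chi^2$ across the $k$ independent signal entries of a single row and across the $n$ rows yields a total $\chi^2$ of $O(nk\tau_\mU^8)$ and hence total variation $O(\sqrt{nk}\,\tau_\mU^4) = o(1)$ throughout the regime $n = \tilde{o}(\tau_\mU^{-4})$ with $k = o(\sqrt{d})$ polynomial in $n$. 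The principal technical work is the explicit construction of $K$ respecting the computable pairs assumption and achieving this higher-order cancellation, which is the role of the symmetric 3-ary rejection kernel machinery developed in Section \ref{subsec:srk}.
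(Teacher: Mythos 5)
Your high-level architecture matches the paper: reduce $k\pr{-bpc}$ to a sparse Gaussian mixture by running the $\pr{isgm}$ chain at $\epsilon=1/2$, then convert entrywise to a general $\pr{glsm}$ instance via a symmetric 3-ary rejection kernel, exploiting the symmetry of $\mD$ to couple the Rademacher sign of the mixture to the latent $\nu_i$. The parameter accounting $n=\tilde o(k^2/(\tau_{\mU}\sqrt k)^4)=\tilde o(\tau_{\mU}^{-4})$ is also the right calculation. However, two steps in your proposal do not work as written.

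First, the kernel $3\pr{-srk}$ requires a \emph{ternary} input in $\{-1,0,1\}$, not a real number. The paper obtains this by first applying the truncation map $\pr{tr}_\tau$ at a \emph{constant} threshold $\tau$ (Lemma \ref{lem:truncgauss} shows $\pr{tr}_\tau(\mN(\pm\mu,1))\sim\textnormal{Tern}(a,\pm\mu_1,\mu_2)$ with $a=\Theta(1)$, $\mu_1=\Theta(\mu)$, $\mu_2=\Theta(\mu^2)$) and only then applying $3\pr{-srk}$. You describe the kernel as acting directly on $X\in\mathbb R$ with an ``inverse CDF'' used for the null case — that deterministic route is incompatible with a single map that simultaneously produces approximate $\mP_\nu$, approximate $\mP_{-\nu}$, and approximate $\mQ$ on the three inputs. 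The truncation is not optional; it is what supplies the three discrete inputs to the rejection sampler.

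Second, your total-variation budget is not how a rejection kernel is analyzed. You claim the kernel ``cancels the odd term exactly and matches the even term on average,'' leaving a per-entry $\chi^2$ of order $\tau_{\mU}^8$ that is then tensorized across $nk$ signal entries. But $3\pr{-srk}$ is \emph{not} a Taylor-expansion/moment-matching gadget: when the sampled candidate lands in the good set $S$ defined by the $\pr{uc}(n)$ inequalities, the output is distributed as the \emph{exact} target, and the error is the probability $\delta$ of the bad set plus a geometrically decaying non-acceptance probability. Lemma \ref{lem:srk} gives a per-entry TV bound of $O(\delta(1+|\mu_1|^{-1}+|\mu_2|^{-1}))+(\tfrac12+o(1))^{N_{\textnormal{it}}}$, which the $\pr{uc}(n)$ tail condition makes $O(n^{-K})$ for any constant $K$; one then takes a union bound over the $Nd$ entries (plus a small term from truncating $\mD$ to $[-1,1]$). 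There is no residual $\chi^2$ scaling like $\tau_{\mU}^8$. Moreover, the $\pr{uc}(n)$ hypotheses only control two specific combinations of likelihood ratios on a high-probability event; they do not supply the odd/even Taylor decomposition or the higher-moment control your $\chi^2$ argument implicitly assumes, so that route would need hypotheses strictly stronger than $\pr{uc}(n)$.
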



\section{Technical Overview}
\label{sec:1-techniques}

We now outline our main technical contributions and the central ideas behind our reductions. These techniques will be formally introduced in Part \ref{part:reductions} and applied in our problem-specific reductions to deduce our main theorems stated in the previous section in Part \ref{part:lower-bounds}.

\subsection{Rejection Kernels}
\label{subsec:1-tech-rk}

Rejection kernels are a reduction primitive introduced in \cite{brennan2018reducibility, brennan2019universality} for \textit{algorithmic changes of measure}. Related reduction primitives for changes of measure to Gaussians and binomial random variables appeared earlier in \cite{ma2015computational, hajek2015computational}. Given two input Bernoulli probabilities $0 < q < p \le 1$, a rejection kernel simultaneously maps $\text{Bern}(p)$ and $\text{Bern}(q)$ approximately in total variation to samples from two arbitrary distributions $\mP$ and $\mQ$. Note that in this setup, the rejection kernel primitive is oblivious to whether the true distribution of its input is $\text{Bern}(p)$ or $\text{Bern}(q)$. The main idea behind rejection kernels is that, under suitable conditions on $\mP$ and $\mQ$, this can be achieved through a rejection sampling scheme that samples $x \sim \mQ$ and rejects with a probability that depends on $x$ and on whether the input was $0$ or $1$. Rejection kernels are discussed in more depth in Section \ref{sec:2-rejection-kernels}. In this work, we will need the following two instantiations of the framework developed in \cite{brennan2018reducibility, brennan2019universality}:
\begin{itemize}
\item \textit{Gaussian Rejection Kernels:} Rejection kernels mapping $\text{Bern}(p)$ and $\text{Bern}(q)$ to within $O(R_{\pr{rk}})$ total variation of $\mN(\mu, 1)$ and $\mN(0, 1)$ where $\mu = \Theta\left(1/\sqrt{\log R_{\text{rk}}^{-1}}\right)$ and $p, q$ are fixed constants.
\item \textit{Bernoulli Cloning:} A rejection kernel mapping $\text{Bern}(p)$ and $\text{Bern}(q)$ exactly to $\text{Bern}(P)^{\otimes t}$ and $\text{Bern}(Q)^{\otimes t}$ where
$$\frac{1 - p}{1 - q} \le \left( \frac{1 - P}{1 - Q} \right)^t \quad \text{and} \quad \left( \frac{P}{Q} \right)^t \le \frac{p}{q}$$
\end{itemize}
By performing computational changes of measure, these primitives are crucial in mapping to desired distributional aesthetics. However, they also play an important role in transforming hidden structure. Gaussian rejection kernels grant access to an arsenal of measure-preserving transformations of high-dimensional Gaussian vectors for mapping between different hidden structures while preserving independence in the noise distribution. Bernoulli cloning is crucial in removing the symmetry in adjacency matrices of $\pr{pc}$ instances and adjacency tensors of $\pr{hpc}$ instances, as in the $\pr{To-Submatrix}$ procedure in \cite{brennan2019universality}. We introduce a $k$-partite variant of this procedure that maps the adjacency matrix of $k\pr{-pds}$ to a matrix of independent Bernoulli random variables while respecting the constraint that there is one planted entry per block of the $k$-partition. This procedure is discussed in more detail in Section \ref{subsec:1-tech-completing} and will serve as a crucial preprocessing step for dense Bernoulli rotations, which involves taking linear combinations of functions of entries of this matrix that crucially must be independent.

\subsection{Dense Bernoulli Rotations}
\label{subsec:1-tech-dbr}

This technique is introduced in Section \ref{sec:2-bernoulli-rotations} and is one of our main primitives for \textit{transforming hidden structure} that will be applied repeatedly throughout our reductions. Let $\pr{pb}(n, i, p, q)$ denote the planted bit distribution over $V \in \{0, 1\}^n$ with independent entries satisfying that $V_j \sim \text{Bern}(q)$ unless $j = i$, in which case $V_i \sim \text{Bern}(p)$. Given an input vector $V \in \{0, 1\}^n$, the goal of dense Bernoulli rotations is to output a vector $V' \in \mathbb{R}^m$ such that, for each $i \in [n]$, $V'$ is close in total variation to $\mN(c \cdot A_i, I_m)$ if $V \sim \pr{pb}(n, i, p, q)$. Here, $A_1, A_2, \dots, A_n \in \mathbb{R}^m$ are a given sequence of target mean vectors, $p$ and $q$ are fixed constants and $c$ is a scaling factor with $c = \tilde{\Theta}(1)$. The reduction must satisfy these approximate Markov transition conditions oblivious to the planted bit $i$ and also preserve independent noise, by mapping $\text{Bern}(q)^{\otimes n}$ to $\mN(0, I_m)$ approximately in total variation.

Let $A \in \mathbb{R}^{m \times n}$ denote the matrix with columns $A_1, A_2, \dots, A_n$. If the rows of $A$ are orthogonal unit vectors, then the goal outlined above can be achieved using the isotropy of the distribution $\mN(0, I_n)$. More precisely, consider the reduction that form $V_1 \in \mathbb{R}^n$ by applying Gaussian rejection kernels entrywise to $V$ and then outputs $AV_1$. If $V \sim \pr{pb}(n, i, p, q)$, then the rejection kernels ensure that $V_1$ is close in total variation to $\mN(\mu \cdot \mathbf{1}_i, I_n)$ and thus $V' = AV_1$ is close to $\mN(\mu \cdot A_i, I_m)$. However, if the rows of $A$ are not orthogonal, then the entries of the output are potentially very dependent and have covariance matrix $AA^\top$ instead of $I_m$. This can be remedied by adding a \textit{noise-correction term} to the output: generate $U \sim \mN(0, I_m)$ and instead output
$$V' = \lambda^{-1} \cdot AV_1 + \left( I_m - \lambda^{-2} \cdot AA^\top \right)^{1/2} \cdot U$$
where $\lambda$ is an upper bound on the largest singular value of $A$ and $\left( I_m - \lambda^{-2} AA^\top \right)^{1/2}$ is the positive semidefinite square root of $I_m - \lambda^{-2} \cdot AA^\top$. If $V \sim \pr{pb}(n, i, p, q)$, it now follows that $V'$ is close in total variation to $\mN(\mu \lambda^{-1} \cdot A_i, I_m)$ where $\mu$ can be taken to be $\mu = \Theta(1/\sqrt{\log n})$. This reduction also preserves independent noise, mapping $\text{Bern}(q)^{\otimes n}$ approximately to $\mN(0, I_m)$.

Dense Bernoulli rotations thus begin with a random vector of independent entries and one unknown elevated bit and produce a vector with independent entries and an unknown elevated \textit{pattern} from among an arbitrary prescribed set $A_1, A_2, \dots, A_n$. Furthermore, the dependence of the signal strength $\mu \lambda^{-1}$ in the output instance $V'$ on these $A_1, A_2, \dots, A_n$ is entirely through the singular values of $A$. This yields a general structure-transforming primitive that will be used throughout our reductions. Each such use will consist of many local applications of dense Bernoulli rotations that will be stitched together to produce a target distribution. These local applications will take three forms:
\begin{itemize}
\item \textit{To Rows Restricted to Column Parts:} The adjacency matrix of $k\pr{-bpc}$ consists of $k_n k_m$ blocks each consisting of the edge indicators in $E_i \times F_j$ for each pair of the parts $E_i, F_j$ from the given partitions of $[n]$ and $[m]$. In our reductions to robust sparse mean estimation, mixtures of SLRs, robust SLR and universality for learning sparse mixtures, we apply dense Bernoulli rotations separately to each row in each of these blocks.
\item \textit{To Vectorized Adjacency Matrix Blocks:} In our reductions to dense stochastic block models, testing hidden partition models and semirandom single community detection, we first pre-process the adjacency matrix of $k\pr{-pc}$ with $\pr{To-}k\pr{-Partite-Submatrix}$. We then apply dense Bernoulli rotations to $\mathbb{R}^{h^2}$ vectorizations of each $h \times h$ block in this matrix corresponding a pair of parts in the given partition i.e. of the form $E_i \times E_j$.
\item \textit{To Vectorized Adjacency Tensor Blocks:} In our reduction to tensor PCA with order $s$, after completing the adjacency tensor of the input $k\pr{-hpc}$ instance, we apply dense Bernoulli rotations to $\mathbb{R}^{h^s}$ vectorizations of each $h \times h \times \cdots \times h$ block corresponding to an $s$-tuple of parts. 
\end{itemize}
We remark that while dense Bernoulli rotations heavily rely on distributional properties of isotropic Gaussian vectors, their implications extend far beyond statistical problems with Gaussian noise. Entrywise thresholding produces planted graph problems and we will show that multiple thresholds followed by applying 3-ary symmetric rejection kernels maps to a large universality class of noise distributions. These applications of dense Bernoulli rotations generally reduce the problem of transforming hidden structure to a constrained combinatorial construction problem -- the task of designing a set of mean output vectors $A_1, A_2, \dots, A_n$ that have nearly orthogonal rows and match the combinatorial structure in the target statistical problem.

\subsection{Design Matrices and Tensors}
\label{subsec:1-tech-design-matrices}

\paragraph{Design Matrices.} To construct these vectors $A_1, A_2, \dots, A_n$ for our applications of dense Bernoulli rotations, we introduce several families of matrices based on the incidence geometry of finite fields. In our reduction to robust sparse mean estimation, we will show that the adversary that corrupts an $\epsilon$-fraction of the samples by resampling them from $\mN(-c \cdot \mu, I_d)$ produces the desired $k$-to-$k^2$ statistical-computational gap. This same adversarial construction was used in \cite{diakonikolas2017statistical}. Here, $\mu \in \mathbb{R}^d$ denotes the $k$-sparse mean of interest. As will be further discussed at the beginning of Section \ref{sec:2-bernoulli-rotations}, on applying dense Bernoulli rotations to rows restricted to parts of the partition of column partition, our desiderata for the mean vectors $A_1, A_2, \dots, A_n$ reduce to the following:
\begin{itemize}
\item $A$ contains two distinct values $\{x, y\}$, and an $\epsilon'$-fraction of each column is $y$ where $\epsilon \ge \epsilon' = \Theta(\epsilon)$;
\item the rows of $A$ are unit vectors and nearly orthogonal with $\lambda = O(1)$; and
\item $A$ is nearly an isometry as a linear transformation from $\mathbb{R}^n \to \mathbb{R}^m$.
\end{itemize}
The first criterion above is enough to ensure the correct distributional aesthetics and hidden structure in the output of our reduction. The second and third criteria turn out to be necessary and sufficient for the reduction to show tight computational lower bounds up to the conjectured barrier of $n = \tilde{o}(k^2 \epsilon^2/\tau^4)$. We remark that the third criterion also is equivalent to $m = \tilde{\Theta}(n)$ given the second. Thus our task is to design nearly square, nearly orthogonal matrices containing two distinct entries with an $\epsilon'$-fraction of one present in each column. Note that if $\epsilon = 1/2$, this is exactly achieved by Hadamard matrices. For $\epsilon < 1/2$, our desiderata are nearly met by the following natural generalization of Hadamard matrices that we introduce. Note that the rows of a Hadamard matrix can be generated as a reweighted incidence matrix between the hyperplanes and points of $\mathbb{F}_2^t$. Let $r$ be a prime number with $\epsilon^{-1} \le r = O(\epsilon^{-1})$ and consider the $\ell \times r^t$ matrix $A$ where $\ell = \frac{r^t - 1}{r - 1}$ with entries given by
$$A_{ij} = \frac{1}{\sqrt{r^t(r - 1)}} \cdot \left\{ \begin{matrix} 1 & \textnormal{if } P_j \not \in V_i \\ 1 - r & \textnormal{if } P_j \in V_i \end{matrix} \right.$$
where $V_1, V_2, \dots, V_{\ell}$ is an enumeration of the $(t - 1)$-dimensional subspaces of $\mathbb{F}_r^t$ and $P_1, P_2, \dots, P_{r^t}$ is an enumeration of the points in $\mathbb{F}_r^t$. This construction nearly meets our three criteria, with one minor issue that the column corresponding to $0 \in \mathbb{F}_r^t$ only contains one entry. A more serious issue is that $\ell = \Theta(r^{t - 1})$ and $A$ is far from an isometry if $r \gg 1$, which leads to a suboptimal computational lower bound for $\pr{rsme}$.

These issues are both remedied by adding in additional rows for all affine shifts of the hyperplanes $V_1, V_2, \dots, V_{\ell}$. The resulting matrix has dimensions $r\ell \times r^t$ and, although its rows are no longer orthogonal, its largest singular value is $\sqrt{1 + (r - 1)^{-1}}$. The resulting matrix $K_{r, t}$ is used in our applications of dense Bernoulli rotations to reduce to robust sparse mean estimation, mixtures of SLRs, robust SLR and to show universality for learning sparse mixtures. Note that for any two rows $r_i$ and $r_j$ of $K_{r, t}$, the outer product $r_i r_j^\top$ is a zero-centered mean adjacency matrix of an imbalanced 2-block stochastic block model. This observation suggests that the Kronecker product $K_{r, t} \otimes K_{r, t}$ can be used in dense Bernoulli rotations to map to these SBMs. Surprisingly, this overall reduction yields tight computational lower bounds up to the Kesten-Stigum threshold for dense SBMs, and using the matrix $(K_{3, t} \otimes I_s) \otimes (K_{3, t} \otimes I_s)$ yields tight computational lower bounds for semirandom single community detection. We remark that, in this case, it is again crucial that $K_{r, t}$ is approximately square -- if the matrix $A$ defined above were used in place of $K_{r, t}$, our reduction would show a lower bound suboptimal to the Kesten-Stigum threshold by a factor of $r$. Our reduction to order $s$ tensor PCA applies dense Bernoulli rotations to vectorizations of each tensor block with the $s$th order Kronecker product $K_{2, t} \otimes K_{2, t} \otimes \cdots \otimes K_{2, t}$. We remark that these instances of $K_{2, t}$ in this Kronecker product could be replaced by Hadamard matrices in dimension $2^t$.

In Section \ref{subsec:2-Rne}, we introduce a natural alternative to $K_{r, t}$ -- a random matrix $R_{n, \epsilon}$ that \textit{approximately} satisfies the three desiderata above. In our reductions to $\pr{rsme}$ and $\pr{rslr}$, this random matrix has the advantage of eliminating the number-theoretic condition \pr{(t)} arising from applying dense Bernoulli rotations with $K_{r, t}$, which has nontrivial restrictions in the very small $\epsilon$ regime when $\epsilon = n^{-\Omega(1)}$. However, the approximate properties of $R_{n, \epsilon}$ are insufficient to map exactly to our formulations of $\pr{isbm}, \pr{semi-cr}, \pr{ghpm}$ and $\pr{bhpm}$, where the sizes of the hidden communities are known. A more detailed comparison of $K_{r, t}$ and $R_{n, \epsilon}$ can be found in Section \ref{subsec:2-Rne}. The random matrix $R_{n, \epsilon}$ is closely related to the adjacency matrices of sparse random graphs, and establishing $\lambda = O(1)$ requires results on their spectral concentration from the literature. For a consistent and self-contained exposition, we present our reductions with $K_{r, t}$, which has a comparatively simple analysis, and only outline extensions of our reductions using $R_{n, \epsilon}$.

\paragraph{Design Tensors.}  Our final reduction using dense Bernoulli rotations is to testing hidden partition models. This reduction requires a more involved construction for $A$ that we only sketch here and defer a detailed discussion to Section \ref{subsec:2-design-tensors}. Again applying dense Bernoulli rotations to vectorizations of each block of the input $k\pr{-pc}$ instance, our goal is to construct a tensor $T_{r, t}$ such that each slice has the same block structure as an $r$-block SBM and the slices are approximately orthogonal under the matrix inner product. A natural construction is as follows: index each slice by a pair of hyperplanes $(V_i, V_j)$, label the rows and columns of each slice by $\mathbb{F}_r^t$ and plant $r$ communities on the entries with indices in $(V_i + au_i) \times (V_j + au_j)$ for each $a \in \mathbb{F}_r$. Here $u_i$ and $u_j$ are arbitrary vectors not in $V_i$ and $V_j$, respectively, and thus $V_i + au_i$ ranges over all affine shifts of $V_i$ for $a \in \mathbb{F}_r$. An appropriate choice of weights $x$ and $y$ on and off of these communities yields slices that are exactly orthogonal.

However, this construction suffers from the same issue as the construction of $A$ above -- there are $O(r^{2t - 2})$ slices each of which has $r^{2t}$ entries, making the matrix formed by vectorizing the slices of this tensor far from square. This can be remedied by creating additional slices further indexed by a nonconstant linear function $L : \mathbb{F}_r \to \mathbb{F}_r$ such that communities are now planted on $(V_i + au_i) \times (V_j + L(a) \cdot u_j)$ for each $a \in \mathbb{F}_r$. There are $r(r - 1)$ such linear functions $L$, making the vectorization of this tensor nearly square. Furthermore, it is shown in Section \ref{subsec:2-design-tensors} that this matrix has largest singular value $\sqrt{1 + (r - 1)^{-1}}$. We remark that this property is quite brittle, as substituting other families of bijections for $L$ can cause this largest singular value to increase dramatically. Taking the Kronecker product of each slice of this tensor $T_{r, t}$ with $I_s$ now yields the family of matrices used in our reduction to testing hidden partition models.

We remark that in all of these reductions with both design matrices and design tensors, dense Bernoulli rotations are applied locally within the blocks induced by the partition accompanying the $\pr{pc}_\rho$ instance. In all cases, our constructions ensure that the fact that the planted bits within these blocks take the form of a submatrix is sufficient to stitch together the outputs of these local applications of dense Bernoulli rotations into a single instance with the desired hidden structure. While we did not discuss this constraint in choosing the design matrices $A$ for each of our reductions, it will be a key consideration in the proofs throughout this work. Surprisingly, the linear functions $L$ in the construction of $T_{r, t}$ directly lead to a community alignment property proven in Section \ref{subsec:2-design-tensors} that allow slices of this tensor to be consistently stitched together. Furthermore, we note that unlike $K_{r, t}$, the tensor $T_{r, t}$ does not seem to have a random matrix analogue that is tractable to bound in spectral norm.

\paragraph{Parameter Correspondence with Dense Bernoulli Rotations.} In several of our reductions using dense Bernoulli rotations, a simple heuristic predicts our computational lower bound in the target problem. Let $X$ be a data tensor, normalized and centered so that each entry has mean zero and variance $1$, and then consider the $\ell_2$ norm of the expected tensor $\bE[X]$. Our applications of rejection kernels typically preserve this $\ell_2$ norm up to $\text{polylog}(n)$ factors. Since our design matrices are approximate isometries, most of our applications of dense Bernoulli rotations also approximately preserve this $\ell_2$ norm. Thus comparing the $\ell_2$ norms of the input $\pr{pc}_\rho$ instance and output instance in our reductions yields a heuristic for predicting the resulting computational lower bound. For example, our adversary in $\pr{rsme}$ produces a matrix $\bE[X] \in \mathbb{R}^{d \times n}$ consisting of columns of the form $\tau \cdot k^{-1/2} \cdot \mathbf{1}_S$ and $\epsilon^{-1} (1 - \epsilon)\tau \cdot k^{-1/2} \cdot \mathbf{1}_S$, up to constant factors where $S$ is the hidden support of $\mu$. The $\ell_2$ norm of this matrix is $\Theta(\tau \sqrt{n/\epsilon})$. The $\ell_2$ norm of the matrix $\bE[X]$ corresponding to the starting $k\pr{-bpc}$ instance can be verified to be just below $o(k^{1/2} n^{1/4})$, when the $k\pr{-bpc}$ instance is nearly at its computational barrier. Equating these two $\ell_2$ norms yields the relation $n = \Theta(k^2 \epsilon^2/\tau^4)$, which is exactly our computational barrier for $\pr{rsme}$. Similar heuristic derivations of our computational barriers are produced for $\pr{isbm}$, $\pr{ghpm}$, $\pr{bhpm}$, $\pr{semi-cr}$ and $\pr{tpca}$ at the beginnings of Sections \ref{sec:3-all-community} and \ref{sec:3-tensor}. We remark that for some of our problems with central steps other than dense Bernoulli rotations, such as $\pr{mslr}$, $\pr{rslr}$ and $\pr{glsm}$, this heuristic does not apply. 

\subsection{Decomposing Linear Regression and Label Generation}
\label{subsec:1-tech-decomposing}

Our reductions to mixtures of SLRs and robust SLR in Section \ref{sec:2-supervised} are motivated by the following simple initial observation. Suppose $(X, y)$ is a single sample from unsigned SLR with $y = \gamma R \cdot \langle v, X \rangle + \mN(0, 1)$ where $R \in \{-1, 1\}$ is a Rademacher random variable, $v \in \mathbb{R}^d$ is a $k$-sparse unit vector, $X \sim \mN(0, I_d)$ and $\gamma \in (0, 1)$. A standard conditioning property of Gaussian vectors yields that the conditional distribution of $X$ given $R$ and $y$ is another jointly Gaussian vector, as shown below. Our observation is that this conditional distribution can be decomposed into a sum of our adversarial construction for robust sparse mean estimation and an independent instance of negative sparse PCA. More formally, we have that
\begin{align*}
X | R, y &\sim \mN\left( \frac{R\gamma \cdot y}{1 + \gamma^2} \cdot v, \, I_d - \frac{\gamma^2}{1 + \gamma^2} \cdot vv^\top \right) \\
&\sim \underbrace{\frac{1}{\sqrt{2}} \cdot \mN\left( R\tau \cdot v, \, I_d \right)}_{\text{Our } \pr{rsme} \text{ adversary with } \epsilon \, = \, 1/2} + \, \, \, \, \underbrace{\frac{1}{\sqrt{2}} \cdot \mN\left( 0, \, I_d - \theta vv^\top \right)}_{\text{Negative Sparse PCA}}
\end{align*}
where $\tau = \tau(y) = \frac{\gamma \sqrt{2}}{1 + \gamma^2} \cdot y$ and $\theta = \frac{2\gamma^2}{1 + \gamma^2}$. Note that the marginal distribution of $y$ is $\mN(0, 1 + \gamma^2)$ and thus it typically holds that $|y| = \Theta(1)$. When this unsigned SLR instance is at its computational barrier of $n = \tilde{\Theta}(k^2/\gamma^4)$ and $|y| = \Theta(1)$, then $n = \tilde{\Theta}(k^2/\tau^4)$ and $\theta = \tilde{\Theta}(\sqrt{k^2/n})$. Therefore surprisingly, both of the $\pr{rsme}$ and $\pr{neg-spca}$ in the decomposition above are also at their computational barriers. 

Now consider task of instead reducing from $k\pr{-bpc}$ to the problem of estimating $v$ from $n$ independent samples from the conditional distribution $\mL(X | \, |y| = 1)$. In light of the observations above, it suffices to first use Bernoulli cloning to produce two independent copies of $k\pr{-bpc}$, reduce these two copies as outlined below and then take the sum of the two outputs of these reductions.
\begin{itemize}
\item \textit{Producing Our} \pr{rsme} \textit{Adversary}: One of the two copies of $k\pr{-bpc}$ should be mapped to a tight instance of our adversarial construction for $\pr{rsme}$ with $\epsilon = 1/2$ through local applications of dense Bernoulli rotations with design matrix $K_{r, t}$ or $R_{n, \epsilon}$, as described previously.
\item \textit{Producing} \pr{neg-spca}: The other copy should be mapped to a tight instance of negative sparse PCA. This requires producing negatively correlated data from positively correlated data, and will need new techniques that we discuss next.
\end{itemize}
We remark that while these two output instances must be independent, it is important that they share the same latent vector $v$. Bernoulli cloning ensures that the two independent copies of $k\pr{-pc}$ have the same clique vertices and thus the output instances have this desired property.

This reduction can be extended to reduce to the true joint distribution of $(X, y)$ as follows. Consider replacing each sample $X_1$ of the output $\pr{rsme}$ instance by
$$X_2 = cy \cdot X_1 + \sqrt{1 - c^2y^2} \cdot \mN(0, I_d)$$
where $c$ is some scaling factor and $y$ is independently sampled from $\mN(0, 1 + \gamma^2)$, truncated to lie in the interval $[-T, T]$ where $cT \le 1$. Observe that if $X_1 \sim \mN(R\tau \cdot v, I_d)$, then $X_2 \sim \mN(cR\tau y \cdot v, I_d)$ conditioned on $y$. In Section \ref{subsec:2-mixtures-slr}, we show that a suitable choice of $c, T$ and tweaking $\tau$ in the reduction above tightly maps to the desired distribution of mixtures of SLRs. Analogous observations and performing the $\pr{rsme}$ sub-reduction with $\epsilon < 1/2$ can be used to show tight computational lower bounds for robust SLR. We remark that this produces a more complicated adversarial construction for robust SLR that may be of independent interest. The details of this adversary can be found in Section \ref{subsec:2-mixtures-slr}.

\subsection{Producing Negative Correlations and Inverse Wishart Matrices}
\label{subsec:1-tech-inverse-wishart}

To complete our reductions to mixtures of SLRs and robust SLR, it suffices to give a tight reduction from $k\pr{-bpc}$ to $\pr{neg-spca}$. Although $\pr{neg-spca}$ and ordinary $\pr{spca}$ share the same conjectured computational barrier at $\theta = \Theta(\sqrt{k^2/n})$ and can be solved by similar efficient algorithms above this barrier, as stochastic models, the two are very different. As discussed in Section \ref{subsec:1-problems-negspca}, ordinary $\pr{spca}$ admits a signal plus noise representation while $\pr{neg-spca}$ does not. This representation was crucially used in prior reductions showing optimal computational lower bounds for $\pr{spca}$ in \cite{berthet2013optimal, berthet2013complexity, wang2016statistical, gao2017sparse, brennan2018reducibility, brennan2019optimal}. Furthermore, the planted entries in a $\pr{neg-spca}$ sample are \textit{negatively correlated}. In contrast, the edge indicators of $\pr{pc}_\rho$ are positively correlated and all prior reductions from $\pr{pc}$ have only produced hidden structure that is also positively correlated.

We first simplify the task of reducing to $\pr{neg-spca}$ with an observation used in the reduction to $\pr{spca}$ in \cite{brennan2019optimal}. Suppose that $n \ge m + 1$ and let $m$ be such that $m/k^2$ tends slowly to infinity. If $X$ is an $m \times n$ matrix with columns $X_1, X_2, \dots, X_n \sim_{\text{i.i.d.}} \mN(0, \Sigma)$ where $\Sigma \in \mathbb{R}^{m \times m}$ is positive semidefinite, then the conditional distribution of $X$ given its rescaled empirical covariance matrix $\hat{\Sigma} = \sum_{i = 1}^n X_i X_i^\top$ is $\hat{\Sigma}^{1/2} R$ where $R$ is an independent $m \times n$ matrix sampled from Haar measure on the Stiefel manifold. This implies that it suffices to reduce to $\hat{\Sigma}$ in the case where $\Sigma = I_d - \theta vv^\top$ in order to map to $\pr{neg-spca}$, as $X$ can be generated from $\hat{\Sigma}$ by randomly sampling this Haar measure. This measure can then be sampled efficiently by applying Gram-Schmidt to the rows of an $m \times n$ matrix of independent standard Gaussians.

Let $\mW_m(n, \Sigma)$ be the law of $\hat{\Sigma}$, or in other words the Wishart distribution with covariance matrix $\Sigma$, and let $\mW_m^{-1}(n, \Sigma)$ denote the distribution of its inverse. The matrices $\mW_m(n, \Sigma)$ and $\mW_m^{-1}(n, \beta \cdot \Sigma^{-1})$ where $\beta^{-1} = n(n - m - 1)$ have a number of common properties including close low-order moments. Furthermore, if $\Sigma = I_d - \theta vv^\top$ then $\Sigma^{-1} = I_d + \theta' vv^\top$ where $\theta' = \frac{\theta}{1 - \theta}$, which implies that $\mW_m^{-1}(n, \beta \cdot \Sigma^{-1})$ is a rescaling of the inverse of the empirical covariance matrix of a set of samples from ordinary $\pr{spca}$. This motivates our main reduction to $\pr{neg-spca}$ in Section \ref{subsec:2-neg-spca-reduction}, which roughly proceeds in the following two steps.
\begin{enumerate}
\item Begin with a small instance of $\pr{bpc}$ with $m = \omega(k^2)$ vertices on the left and $n$ on the right. Apply either the reduction of \cite{brennan2018reducibility} or \cite{brennan2019optimal} to reduce to an ordinary $\pr{spca}$ instance $(X_1, X_2, \dots, X_n)$ in dimension $m$ with $n$ samples and signal strength $\theta'$.
\item Form the rescaled empirical covariance matrix $\hat{\Sigma} = \sum_{i = 1}^n X_i X_i^\top$ and
$$Y = \sqrt{n(n - m - 1)} \cdot \hat{\Sigma}^{-1/2} R$$
Output the columns of $Y$ after padding them to be $d$-dimensional with i.i.d. $\mN(0, 1)$ random variables.
\end{enumerate}
The key detail in this reduction is that $\hat{\Sigma}^{1/2}$ in process of regenerating $X$ from $\hat{\Sigma}$ described above has been replaced by the positive semidefinite square root $\hat{\Sigma}^{-1/2}$ of a rescaling of the empirical precision matrix. As we will show in Section \ref{subsec:2-neg-spca-reduction}, establishing total variation guarantees for this reduction amounts to answering the following nonasymptotic question from random matrix theory that may be of independent interest: when do $\mW_m(n, \Sigma)$ and $\mW_m^{-1}(n, \beta \cdot \Sigma^{-1})$ converge in total variation for all positive semidefinite matrices $\Sigma$? A simple reduction shows that the general case is equivalent to the isotropic case when $\Sigma = I_m$. In Section \ref{subsec:2-inverse-wishart}, we answer this question, showing that these two matrices converge in KL divergence if and only if $n \gg m^3$. Our result is of the same flavor as a number of recent results in random matrix theory showing convergence in total variation between Wishart and $\pr{goe}$ matrices \cite{jiang2015approximation, bubeck2016testing, bubeck2016entropic, racz2019smooth}. This condition amounts to constraining our reduction to the low-sparsity regime $k \ll n^{1/6}$. As discussed in Section \ref{subsec:1-problems-negspca}, this condition does not affect the tightness of our lower bounds and seems to be an artefact of our techniques that possibly can be removed.

\subsection{Completing Tensors from Hypergraphs and Tensor PCA}
\label{subsec:1-tech-completing}

As alluded to in the above discussion of rejection kernels, it is important that the entries in the vectors to which we apply dense Bernoulli rotations are independent and that none of these entries is missing. In the context of reductions beginning with $k\pr{-pc}$, $k\pr{-hpc}$, $\pr{pc}$ and $\pr{hpc}$, establishing this entails pre-processing steps to remove the symmetry of the input adjacency matrix and add in missing entries. As discussed in Section 1.1 of \cite{brennan2019universality}, these missing entries in the matrix case have led to technical complications in the prior reductions in \cite{hajek2015computational, brennan2018reducibility, brennan2019universality, brennan2019optimal}. In reductions to tensor PCA, completing these pre-processing steps in the tensor case seems unavoidable in order to produce the canonical formulation of tensor PCA with a symmetric rank-1 spike $v^{\otimes s}$ as discussed in Section \ref{subsec:1-problems-tpca}.

In order to motivate our discussion of the tensor case, we first consider the matrix case. Asymmetrizing the adjacency matrix of an input $\pr{pc}$ instance can be achieved through a simple application of Bernoulli cloning, but adding in the missing diagonal entries is more subtle. Note that the desired diagonal entries contain nontrivial information about the vertices in the planted clique -- they are constrained to be $1$ along the vertices of the clique and independent $\text{Bern}(1/2)$ random variables elsewhere. This is roughly the information gained on revealing a single vertex from the planted clique. In the matrix case, the following trick effectively produces an instance of $\pr{pc}$ with the diagonal entries present. Add in $1$'s along the entire diagonal and randomly embed the resulting matrix as a principal minor in a larger matrix with off-diagonal entries sampled from $\text{Bern}(1/2)$ and on-diagonal entries sampled so that the total number of $1$'s on the diagonal has the correct binomial distribution. This trick appeared in the $\pr{To-Submatrix}$ procedure in \cite{brennan2019universality} for general $\pr{pds}$ instances, and is adapted in this work for $k\pr{-pds}$ as the reduction $\pr{To-}k\pr{-Partite-Submatrix}$ in Section \ref{sec:2-rejection-kernels}. This reduction is an important pre-processing step in mapping to dense stochastic block models, testing hidden partition models and semirandom planted dense subgraph.

The tensor case is not as simple as the matrix case. While asymmetrizing can be handled similarly with Bernoulli cloning, the missing entries of the adjacency tensor of $\pr{hpc}$ are now more numerous and correspond to any entry with two equal indices. Unlike in the matrix case, the information content in these entries alone is enough to solve $\pr{hpc}$. For example, in 3-uniform $\pr{hpc}$, the missing set of entries $(i, i, j)$ should have the same distribution as the completed adjacency matrix of an entire instance of planted clique with the same hidden clique vertices. Thus a reduction that randomly generates these missing entries as in the matrix case is no longer possible without knowing the solution to the input $\pr{hpc}$ instance. However, if an oracle were to have revealed a single vertex of the hidden clique, we would be able to use the hyperedges containing this vertex to complete the missing entries of the adjacency tensor. In general, given an $\pr{hpc}$ instance of arbitrary order $s$, a more involved cloning and embedding procedure detailed in Section \ref{sec:2-hypergraph-planting} completes the missing entries of the adjacency tensor given oracle access to $s - 1$ vertices of the hidden clique. Our reduction to tensor PCA in Sections \ref{sec:2-hypergraph-planting} and \ref{sec:3-tensor} iterates over all $(s - 1)$-tuples of vertices in the input $\pr{hpc}$ instance, uses this procedure to complete the missing entries of the adjacency tensor, applies dense Bernoulli rotations as described previously and then feeds the output instance to a blackbox solving tensor PCA. The reduction only succeeds in mapping to the correct distribution on tensor PCA in iterations that successfully guess $s - 1$ vertices of the planted clique. However, we show that this is sufficient to deduce tight computational lower bounds for tensor PCA. We remark that this reduction is the first reduction in total variation from $\pr{pc}_{\rho}$ that seems to require multiple calls to a blackbox solving the target problem.

\subsection{Symmetric 3-ary Rejection Kernels and Universality}
\label{subsec:1-tech-universality}

So far, all of our reductions have been to problems with Gaussian or Bernoulli data and our techniques have often relied heavily on the properties of jointly Gaussian vectors. Our last reduction technique shows that the consequences of these reductions extend far beyond Gaussian and Bernoulli problems. We introduce a new rejection kernel in Section \ref{subsec:srk} and show in Section \ref{sec:universality} that, when applied entrywise to the output of our reduction to $\pr{rsme}$ when $\epsilon = 1/2$, this rejection kernel yields a universal computational lower bound for a general variant of learning sparse mixtures with nearly arbitrary marginals.

Because sparse mixture models necessarily involve at least three distinct marginal distributions, a deficit in degrees of freedom implies that the existing framework for rejection kernels with binary entries cannot yield nontrivial hardness. We resolve this issue by considering rejection kernels with a slightly larger input space, and introduce a general framework for 3-ary rejection kernels with entries in $\{-1, 0, 1\}$ in Section \ref{subsec:srk}. We show in Section \ref{sec:universality} that first mapping each entry of our $\pr{rsme}$ instance with $\epsilon = 1/2$ into $\{-1, 0, 1\}$ by thresholding at intervals of the form $(-\infty, -T], (-T, T)$ and $[T, \infty)$ with $T = \Theta(1)$ and then applying 3-ary rejection kernels entrywise is a nearly lossless reduction. In particular, it yields new computational lower bounds for a wide universality class that tightly recover optimal computational lower bounds for sparse PCA, learning mixtures of exponentially distributed data, the original $\pr{rsme}$ instance with $\epsilon = 1/2$ and many other sparse mixture formulations. The implications of this reduction are discussed in detail in Section \ref{subsec:universalitydiscussion}.

\subsection{Encoding Cliques as Structural Priors}
\label{subsec:1-tech-encoding}

As discussed in Section \ref{subsec:1-desiderata}, reductions from $\pr{pc}_\rho$ showing tight computational lower bounds cannot generate a non-negligible part of the hidden structure in the target problem themselves, but instead must encode the hidden clique of the input instance into this structure. In this section, we outline how our reductions implicitly encode hidden cliques. Note that the hidden subset of vertices corresponding to a clique in $\pr{pc}_{\rho}$ has $\Theta(k \log n)$ bits of entropy while the distribution over the hidden structure in the target problems that we consider can have much higher entropy. For example, the Rademacher prior on the planted vector $v$ in Tensor PCA has $n$ bits of entropy and the distribution over hidden partitions in testing partition models has entropy $\Theta(r^2 K^2 \log n \log r)$.

Although our reductions inject randomness to produce the desired noise distributions of target problems, the induced maps encoding the clique in $\pr{pc}_\rho$ as a new hidden structure typically do not inject randomness. Consequently, our reductions generally show hardness for priors over the hidden structure in our target problems with entropy $\Theta(k \log n)$. This then implies a lower bound for our target problems, because the canonical uniform priors with which they are defined are the \textit{hardest priors}. For example, every instance of $\pr{pc}_\rho$ reduces to uniform prior over cliques as in $\pr{pc}$ by randomly relabelling nodes. Similarly, a tensor PCA instance with a fixed planted vector $v$ reduces to the formulation in which $v$ is uniformly distributed on $\{-1, 1\}^n$ by taking the entrywise product of the tensor PCA instance with $u^{\otimes s}$ where $u$ is chosen u.a.r. from $\{-1, 1\}^n$. Thus our reductions actually show slightly stronger computational lower bounds than those stated in our main theorems -- they show lower bounds for our target problems with \textit{nonuniform} priors on their hidden structures. These nonuniform priors arise from the encodings of planted cliques into target hidden structure implicitly in our reductions, several of which we summarize below. Our reductions often involve aesthetic pre-processing and post-processing steps to reduce to canonical uniform priors and often subsample the output instance. To simplify our discussion, we omit these steps in describing the clique encodings induced by our reductions.
\begin{itemize}
\item \textbf{Robust Sparse Mean Estimation and SLR:} Let $S_L$ and $S_R$ be the sets of left and right clique vertices of the input $k\pr{-bpc}$ instance and let $[N] = E_1 \cup E_2 \cup \cdots \cup E_{k_N}$ be the given partition of the right vertices. The support of the $k$-sparse vector in our output $\pr{rsme}$ and $\pr{rslr}$ instances is simply $S_L$. Let $r$ be a prime and let $E_1' \cup E_2' \cup \cdots \cup E_{k_N}'$ be a partition of the output $n$ samples into parts of size $r\ell$ where $\ell = \frac{r^t - 1}{r - 1}$. Label each of element of $E_i'$ with a affine shift of a hyperplane in $\mathbb{F}_r^t$ and each element of $E_i$ with a point of $\mathbb{F}_r^t$. For each $i$, our adversary corrupts each sample in $E_i'$ corresponding to an affine shift of a hyperplane containing the point corresponding to the unique element in $S_R \cap E_i$.
\item \textbf{Dense Stochastic Block Models:} Let $S$ be the set of clique vertices of the input $k\pr{-pc}$ instance and let $E$ be the given partition of the its vertices $[N]$. Let $E'$ be a partition of the output $n$ vertices again into parts of size $r\ell$. Label elements in each part as above. Our output $\pr{isbm}$ instance has its smaller community supported on the union of the vertices across all $E_i'$ corresponding to affine shifts containing the points in $\mathbb{F}_r^t$ corresponding to the vertices $S$.
\item \textbf{Mixtures of SLRs and Generalized Learning Sparse Mixtures:} Let $S_L, S_R, k, k_N, N, n$ and $E$ be as above. The support of the $k$-sparse vector in our output $\pr{mslr}$ and $\pr{glsm}$ instances is again simply $S_L$. Let $H_1, H_2, \dots, H_{2^t - 1} \in \{-1, 1\}^{2^t}$ be the zero-sum rows of a Hadamard matrix and let $E'$ be a partition of the output $n$ samples into $k_N$ blocks of size $2^t$. The output instance sets the $j$th sample in $E_i'$ to be from the first part of the mixture if and only if the $j$th entry of $H_{s}$ is $1$ where $s$ is the unique element in $S_R \cap E_i$. In other words, the mixture pattern along $E_i'$ is given by the $(S_R \cap E_i)$th row of a Hadamard matrix.
\item \textbf{Tensor PCA:} Let $S$ be the set of clique vertices of the input $k\pr{-hpc}$ instance and let $E$ and $N$ be as above. Similarly to $\pr{mslr}$ and $\pr{glsm}$, the planted vector $v$ of our output $\pr{tpca}$ instance is the concatenation of the $(S \cap E_i)$th rows of a Hadamard matrix.
\end{itemize}
Our reduction to testing hidden partition models induces a more intricate encoding of cliques similar to that of dense stochastic block models described above. We remark that each of these encodings arises directly from design matrices and tensors based on $K_{r, t}$ used in the dense Bernoulli rotation step of our reductions.

\section{Further Directions and Open Problems}
\label{sec:1-open-problems}

In this section, we describe several further directions and problems left open in this work. These directions mainly concern the $\pr{pc}_\rho$ conjecture and our reduction techniques.

\paragraph{Further Evidence for $\pr{pc}_\rho$ Conjectures.} In this work, we give evidence for the $\pr{pc}_\rho$ conjecture from the failure of low-degree polynomials and for specific instantiations of the $\pr{pc}_\rho$ conjecture from the failure of SQ algorithms. An interesting direction for future work is to show sum of squares lower bounds for $\pr{pc}_\rho$ and $k\pr{-hpc}^s$ supporting this conjecture. A priori, this seems to be a technically difficult task as the sum of squares lower bounds in \cite{barak2016nearly} only apply to the prior in planted clique where every vertex is included in the clique independently with probability $k/n$. Thus it even remains open to extend these lower bounds to the uniform prior over $k$-subsets of $[n]$.

\paragraph{How do Priors on Hidden Structure Affect Hardness?} In this work, we showed that slightly altering the prior over the hidden structure of $\pr{pc}$ gave rise to a problem much more amenable to average-case reductions. This raises a broad question: for general problems $\mP$ with hidden structure, how does changing the prior over this hidden structure affect its hardness? In other words, for natural problems other than $\pr{pc}$, how does the conjectured computational barrier change with $\rho$? Another related direction for future work is whether other choices of $\rho$ in the $\pr{pc}_\rho$ conjecture give meaningful assumptions that can be mapped to more natural problems than the ones we consider here. Furthermore, it would be interesting to study how reductions carry ensembles of problems with a general prior $\rho$ to one another. For instance, is there a reduction between $\pr{pc}$ and another problem, such as $\pr{spca}$, such that every hard prior in $\pr{pc}_\rho$ is mapped to a corresponding hard prior in $\pr{spca}$?

\paragraph{Generalizations of Dense Bernoulli Rotations.} In this work, dense Bernoulli rotations were an extremely important subroutine, serving as our simplest primitive for transforming hidden structure. An interesting technical direction for future work is to find similar transformations mapping to other distributions. More concretely, dense Bernoulli rotations approximately mapped from $\pr{pb}(n, i, 1, 1/2)$ to the $n$ distributions $\mD_i = \mN(c \cdot A_i, I_m)$, respectively, and mapped from $\text{Bern}(1/2)^{\otimes m}$ to $\mD = \mN(0, I_m)$. Are there other similar reductions mapping from these planted bit distributions to different ensembles of $\mD, \mD_1, \mD_2, \dots, \mD_n$? Furthermore, can these maps be used to show tight computational lower bounds for natural problems? For example, two possibly interesting ensembles of $\mD, \mD_1, \mD_2, \dots, \mD_n$ are:
\begin{enumerate}
\item $\mD_i = \otimes_{j = 1}^m \text{Bern}(P_{ij} n^{-\alpha})$ and some $\mD$ where $P \in [0, 1]^{n \times m}$ is a fixed matrix of constants and $\alpha > 0$.
\item $\mD_i = \mN(c \cdot A_i, I_m - c^2 A_i A_i^\top)$ and $\mD = \mN(0, I_m)$.
\end{enumerate}
The first example above corresponds to whether or not there is a \textit{sparse} analogue of Bernoulli rotations that can be used to show tight computational lower bounds. A natural approach to (1) is to apply dense Bernoulli rotations and map each entry into $\{0, 1\}$ by thresholding at some large real number $T = \Theta(\sqrt{\log n})$. While this maps to an ensemble of the form in (1), this reduction seems \textit{lossy}, in the sense that it discards signal in the input instance, and it does not appear to show tight computational lower bounds for any natural problem. The second example above presents a set of $\mD_i$ with the same expected covariance matrices as $\mD$. Note that in ordinary dense Bernoulli rotations the expected covariance matrices for each $i$ are $I_m + c^2 \cdot A_i A_i^\top$ and often a degree-2 polynomial suffices to distinguish them from $\mD$. More generally, a natural question is: are there analogues of dense Bernoulli rotations that are tight to algorithms given by polynomials of degree higher than 2?

\paragraph{General Reductions to Supervised Problems.} Our last open problem is more concrete than the previous two. In our reductions to $\pr{mslr}$ and $\pr{rslr}$, we crucially use a subroutine mapping to $\pr{neg-spca}$. This subroutine requires that $k = \tilde{o}(n^{1/6})$ in order to show convergence in KL divergence between the Wishart and inverse Wishart distributions. Is there a reduction that relaxes this requirement to $k = \tilde{o}(n^{\alpha})$ where $1/6 < \alpha < 1/2$? Providing a reduction for $\alpha$ arbitrarily close to $1/2$ would essentially fill out all parameter regimes of interest in our computational lower bounds for $\pr{mslr}$ and $\pr{rslr}$. Any reduction relaxing this constraint to some $\alpha$ with $\alpha > 1/6$ seems as though it would require new techniques and be technically interesting. Another question related to our reductions to $\pr{mslr}$ and $\pr{rslr}$ is: can our label generation technique be generalized to handle more general link functions $\sigma$ i.e. generalized linear models where each sample-label pair $(X, y)$ satisfies $y = \sigma(\langle \beta, X \rangle) + \mN(0, 1)$? In particular, is there a reduction mapping to the canonical formulation of sparse phase retrieval with $\sigma(t) = t^2$? Although the statistical-computational gap for this formulation of sparse phase retrieval seems closely related to our computational lower bound for $\pr{mslr}$, any such reduction seems as though it would be interesting from a technical viewpoint.

\pagebreak

\part{Average-Case Reduction Techniques}
\label{part:reductions}

\section{Preliminaries and Problem Formulations}
\label{sec:2-preliminaries}

In this section, we establish notation and some preliminary observations for proving our main theorems from Section \ref{sec:1-problems}. We already defined our notion of computational lower bounds and solving detection and recovery problems in Section \ref{sec:1-problems}. In this section, we begin by stating our conventions for detection problems and adversaries. In Section \ref{subsec:2-tvreductions}, we introduce the framework for reductions in total variation to show computational lower bounds for detection problems. In Section \ref{subsec:2-formulations}, we then state detection formulations for each of our problems of interest that it will suffice to exhibit reductions to. Finally, in Section \ref{subsec:2-notation}, we introduce the key notation that will be used throughout the paper. Later in Section \ref{subsec:2-estimation}, we discuss how our reductions and lower bounds for the detection formulations in Section \ref{subsec:2-formulations} imply lower bounds for natural estimation and recovery variants of our problems.

\subsection{Conventions for Detection Problems and Adversaries}
\label{subsec:2-definitions}

We begin by describing our general setup for detection problems and the notions of robustness and types adversaries that we consider.

\paragraph{Detection Problems.} In a detection task $\mP$, the algorithm is given a set of observations and tasked with distinguishing between two hypotheses:
\begin{itemize}
\item a \emph{uniform} hypothesis $H_0$ corresponding to the natural noise distribution for the problem; and
\item a \emph{planted} hypothesis $H_1$, under which observations are generated from this distribution but with a latent planted structure.
\end{itemize}
Both $H_0$ and $H_1$ can either be simple hypotheses consisting of a single distribution or a composite hypothesis consisting of multiple distributions. Our problems typically are such that either: (1) both $H_0$ and $H_1$ are simple hypotheses; or (2) both $H_0$ and $H_1$ are composite hypotheses consisting of the set of distributions that can be induced by some constrained adversary. 

As discussed in \cite{brennan2018reducibility} and \cite{hajek2015computational}, when detection problems need not be composite by definition, average-case reductions to natural simple vs. simple hypothesis testing formulations are stronger and technically more difficult. In these cases, composite hypotheses typically arise because a reduction gadget precludes mapping to the natural simple vs. simple hypothesis testing formulation. We remark that simple vs. simple formulations are the hypothesis testing problems that correspond to average-case decision problems $(L, \mathcal{D})$ as in Levin's theory of average-case complexity. A survey of average-case complexity can be found in \cite{bogdanov2006average}.

\paragraph{Adversaries.} The robust estimation literature contains a number of adversaries capturing different notions of model misspecification. We consider the following three central classes of adversaries:
\begin{enumerate}
\item \textbf{$\epsilon$-corruption}: A set of samples $(X_1, X_2, \dots, X_n)$ is an $\epsilon$-corrupted sample from a distribution $\mD$ if they can be generated by giving a set of $n$ samples drawn i.i.d. from $\mD$ to an adversary who then changes at most $\epsilon n$ of them arbitrarily.
\item \textbf{Huber's contamination model}: A set of samples $(X_1, X_2, \dots, X_n)$ is an $\epsilon$-contamination of $\mD$ in Huber's model if
$$X_1, X_2, \dots, X_n \sim_{\text{i.i.d.}} \pr{mix}_{\epsilon}(\mD, \mD_O)$$
where $\mD_O$ is an unknown outlier distribution chosen by an adversary. Here, $\pr{mix}_{\epsilon}(\mD, \mD_O)$ denotes the $\epsilon$-mixture distribution formed by sampling $\mD$ with probability $(1 - \epsilon)$ and $\mD_O$ with probability $\epsilon$.
\item \textbf{Semirandom adversaries}: Suppose that $\mD$ is a distribution over collections of observations $\{ X_i \}_{i \in I}$ such that an unknown subset $P \subseteq I$ of indices correspond to a planted structure. A sample $\{ X_i \}_{i \in I}$ is semirandom if it can be generated by giving a sample from $\mD$ to an adversary who is allowed decrease $X_i$ for any $i \in I \backslash P$. Some formulations of semirandom adversaries in the literature also permit increases in $X_i$ for any $i \in P$. Our lower bounds apply to both adversarial setups.
\end{enumerate}
All adversaries in these models of robustness are computationally unbounded and have access to randomness -- meaning that they also have access to any hidden structure in a problem that can be recovered information theoretically. Given a single distribution $\mD$ over a set $X$, any one of these three adversaries produces a set of distributions $\pr{adv}(\mD)$ that can be obtained after corruption. When formulated as detection problems, the hypotheses $H_0$ and $H_1$ are of the form $\pr{adv}(\mD)$ for some $\mD$. We remark that $\epsilon$-corruption can simulate contamination in Huber's model at a slightly smaller $\epsilon'$ within $o(1)$ total variation. This is because a sample from Huber's model has $\text{Bin}(n, \epsilon')$ samples from $\mD_O$. An adversary resampling $\min\{\text{Bin}(n, \epsilon'), \epsilon n\}$ samples from $\mD_O$ therefore simulates Huber's model within a total variation distance bounded by standard concentration for the Binomial distribution.


\subsection{Reductions in Total Variation and Computational Lower Bounds}
\label{subsec:2-tvreductions}

In this section, we introduce our framework for reductions in total variation, state a general condition for deducing computational lower bounds from reductions in total variation and state a number of properties of total variation that we will use in analyzing our reductions.

\paragraph{Average-Case Reductions in Total Variation.} We give approximate reductions in total variation to show that lower bounds for one hypothesis testing problem imply lower bounds for another. These reductions yield an exact correspondence between the asymptotic Type I$+$II errors of the two problems. This is formalized in the following lemma, which is Lemma 3.1 from \cite{brennan2018reducibility} stated in terms of composite hypotheses $H_0$ and $H_1$. The main quantity in the statement of the lemma can be interpreted as the smallest total variation distance between the reduced object $\mathcal{A}(X)$ and the closest mixture of distributions from either $H_0'$ or $H_1'$. The proof of this lemma is short and follows from the definition of total variation. Given a hypothesis $H_i$, we let $\Delta(H_i)$ denote the set of all priors over the set of distributions valid under $H_i$.

\begin{lemma}[Lemma 3.1 in \cite{brennan2018reducibility}] \label{lem:3a}
Let $\mP$ and $\mP'$ be detection problems with hypotheses $H_0, H_1$ and  $H_0', H_1'$, respectively. Let $X$ be an instance of $\mathcal{P}$ and let $Y$ be an instance of $\mP'$. Suppose there is a polynomial time computable map $\mathcal{A}$ satisfying
$$\sup_{P \in H_0} \inf_{\pi \in \Delta(H_0')} \TV\left( \mL_{P}(\mathcal{A}(X)), \bE_{P' \sim \pi} \, \mL_{P'}(Y) \right) + \sup_{P \in H_1} \inf_{\pi \in \Delta(H_1')} \TV\left( \mL_{P}(\mathcal{A}(X)), \bE_{P' \sim \pi} \, \mL_{P'}(Y) \right) \le \delta$$
If there is a randomized polynomial time algorithm solving $\mP'$ with Type I$+$II error at most $\epsilon$, then there is a randomized polynomial time algorithm solving $\mP$ with Type I$+$II error at most $\epsilon + \delta$.
\end{lemma}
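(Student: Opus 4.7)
The plan is to compose any hypothetical tester $\phi$ for $\mP'$ with $\mathcal{A}$, defining $\psi(X) := \phi(\mathcal{A}(X))$, and show that $\psi$ is a valid randomized polynomial time tester for $\mP$ with Type I$+$II error at most $\epsilon + \delta$. The proof rests on two elementary facts: (a) the data-processing form of total variation, namely that for any event $E$ and any two distributions $\mu, \nu$, one has $|\bP_\mu[E] - \bP_\nu[E]| \le \TV(\mu, \nu)$; and (b) the mixture identity $\bP_{Y \sim \bE_{P' \sim \pi} \mL_{P'}(Y)}[E] = \bE_{P' \sim \pi}\, \bP_{Y \sim \mL_{P'}(Y)}[E]$, so worst-case error guarantees for $\phi$ over $H_0'$ (resp.\ $H_1'$) automatically lift to mixtures in $\Delta(H_0')$ (resp.\ $\Delta(H_1')$).

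For each $P \in H_0$, write $\delta_0(P) := \inf_{\pi \in \Delta(H_0')} \TV(\mL_P(\mathcal{A}(X)), \bE_{P' \sim \pi} \mL_{P'}(Y))$ and, for an arbitrarily small slack $\eta > 0$, pick a prior $\pi_0(P) \in \Delta(H_0')$ witnessing this infimum within $\eta$. Applying (a) to the event $\{\phi = H_1'\}$ and the two laws $\mL_P(\mathcal{A}(X))$ and $\bE_{P' \sim \pi_0(P)} \mL_{P'}(Y)$, then applying (b), I obtain
\[
\bP_P[\psi(X) = H_1'] \;\le\; \bE_{P' \sim \pi_0(P)}\, \bP_{P'}[\phi(Y) = H_1'] + \delta_0(P) + \eta \;\le\; \sup_{P' \in H_0'} \bP_{P'}[\phi(Y) = H_1'] + \delta_0(P) + \eta.
\]
Taking $\sup_{P \in H_0}$ bounds the Type I error of $\psi$ by $\sup_{P'\in H_0'}\bP_{P'}[\phi(Y) = H_1'] + \sup_{P \in H_0}\delta_0(P) + \eta$. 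The identical argument with $H_1$ in place of $H_0$ and the event $\{\phi = H_0'\}$ bounds the Type II error of $\psi$.

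Summing the two bounds, the $\phi$-contributions combine to the assumed Type I$+$II error $\epsilon$ of $\phi$ for $\mP'$, while the TV-contributions combine to at most $\delta + 2\eta$ by the hypothesis of the lemma. Sending $\eta \to 0$ (note $\psi$ itself does not depend on $\eta$) yields the claimed bound $\epsilon + \delta$ on the Type I$+$II error of $\psi$. Polynomial runtime is immediate: $\mathcal{A}$ is polynomial-time by assumption and $\phi$ is polynomial-time by hypothesis, and composition preserves polynomial runtime even when both algorithms use internal randomness.

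There is essentially no technical obstacle here; the argument is a mechanical consequence of (a) and (b). The only mildly subtle point is that the infimum over $\pi \in \Delta(H_0')$ need not be attained, which is why the lemma is stated with an infimum rather than a minimum; this is handled cleanly by the $\eta$-slack device above and then taking $\eta \to 0$ at the end.
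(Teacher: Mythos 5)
Your proof is correct and takes essentially the same approach the paper has in mind: compose the $\mP'$-tester with $\mathcal{A}$, invoke the bounded-function (data-processing) characterization of total variation to transfer error guarantees from the $\pi$-mixture of $\mP'$-distributions to $\mL_P(\mathcal{A}(X))$, and absorb the non-attained infimum with an $\eta$-slack that is sent to zero at the end. The one minor point worth stating explicitly is that for a randomized $\phi$ the ``event'' $\{\phi = H_1'\}$ should really be the bounded function $y \mapsto \bP[\phi(y)=H_1']$, to which the TV bound applies verbatim; this is a standard abuse and does not affect correctness.
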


If $\delta = o(1)$, then given a blackbox solver $\mathcal{B}$ for $\mathcal{P}'_D$, the algorithm that applies $\mathcal{A}$ and then $\mathcal{B}$ solves $\mathcal{P}_D$ and requires only a single query to the blackbox. We now outline the computational model and conventions we adopt throughout this paper. An algorithm that runs in randomized polynomial time refers to one that has access to $\text{poly}(n)$ independent random bits and must run in $\text{poly}(n)$ time where $n$ is the size of the instance of the problem. For clarity of exposition, in our reductions we assume that explicit real-valued expressions can be exactly computed and that we can sample a biased random bit $\text{Bern}(p)$ in polynomial time. We also assume that the sampling and density oracles described in Definition \ref{def:computable} can be computed in $\text{poly}(n)$ time. For simplicity of exposition, we assume that we can sample $\mN(0, 1)$ in $\text{poly}(n)$ time. 

\paragraph{Deducing Strong Computational Lower Bounds for Detection from Reductions.} Throughout Part \ref{part:lower-bounds}, we will use the guarantees for our reductions to show computational lower bounds. For clarity and to avoid redundancy, we will outline a general recipe for showing these hardness results. All lower bounds that will be shown in Part \ref{part:lower-bounds} are \emph{computational lower bounds} in the sense introduced in the beginning of Section \ref{subsec:2-definitions}. Consider a problem $\mP$ with parameters $(n, a_1, a_2, \dots, a_t)$ and hypotheses $H_0$ and $H_1$ with a conjectured computationally hard regime captured by the constraint set $\mathcal{C}$. In order to show a computational lower bound at $\mathcal{C}$ based on one of our hardness assumptions, it suffices to show that the following is true:

\begin{condition}[Computational Lower Bounds from Reductions] \label{cond:lb}
\textup{
For all sequences of parameters satisfying the lower bound constraints $\{ (n, a_1(n), a_2(n), \dots, a_t(n)) \}_{n = 1}^\infty \subseteq \mathcal{C}$, there are:
\begin{enumerate}
\item another sequence of parameters $\{(n_i, a'_1(n_i), a'_2(n_i), \dots, a'_t(n_i))\}_{i = 1}^\infty \subseteq \mathcal{C}$ such that
$$\lim_{i \to \infty} \frac{\log a_k'(n_i)}{\log a_k(n_i)} = 1$$
\item a sequence of instances $\{G_i\}_{i = 1}^\infty$ of a problem $\pr{pc}_\rho$ with hypotheses $H_0'$ and $H_1'$ that cannot be solved in polynomial time according to Conjecture \ref{conj:hard-conj}; and
\item a polynomial time reduction $\mathcal{R}$ such that if $\mP(n_i, a'_1(n_i), a'_2(n_i), \dots, a'_t(n_i))$ has an instance denoted by $X_i$, then
$$\TV\left( \mathcal{R}(G_i | H_0'), \mL(X_i | H_0) \right) = o_{n_i}(1) \quad \text{and} \quad \TV\left( \mathcal{R}(G_i | H_1'), \mL(X_i | H_1) \right) = o_{n_i}(1)$$
\end{enumerate}
}
\end{condition}

This can be seen to suffice as follows. Suppose that $\mathcal{A}$ solves $\mP$ for some possible growth rate in $\mathcal{C}$ i.e. there is a sequence $\{(n_i, a'_1(n_i), a'_2(n_i), \dots, a'_t(n_i))\}_{i = 1}^\infty \subseteq \mathcal{C}$ with this growth rate such that $\mathcal{A}$ has Type I$+$II error $1 - \Omega_{n_i}(1)$ on $\mP(n_i, a'_1(n_i), a'_2(n_i), \dots, a'_t(n_i))$. By Lemma \ref{lem:3a}, it follows that $\mathcal{A} \circ \mathcal{R}$ also has Type I$+$II error $1 - \Omega_{n_i}(1)$ on the sequence of inputs $\{G_i\}_{i = 1}^\infty$, which contradicts the conjecture that they are hard instances. The three conditions above will be verified in a number of theorems in Part \ref{part:lower-bounds}.

\paragraph{Remarks on Deducing Computational Lower Bounds.} We make several important remarks on the recipe outlined above. In all of our applications of Condition \ref{cond:lb}, the second sequence of parameters $(n_i, a'_1(n_i), a'_2(n_i), \dots, a'_t(n_i))$ will either be exactly a subsequence of the original parameter sequence $(n, a_1(n), a_2(n), \dots, a_t(n))$ or will have one parameter $a_i' \neq a_i$ different from the original. However, the ability to pass to a subsequence will be crucial in a number of cases where number-theoretic constraints on parameters impact the tightness of our computational lower bounds. These constraints will arise in our reductions to robust sparse mean estimation, robust SLR and dense stochastic block models. They are discussed more in Section \ref{sec:3-robust-and-supervised}.

\paragraph{Properties of Total Variation.} The analysis of our reductions will make use of the following well-known facts and inequalities concerning total variation distance.

\begin{fact} \label{tvfacts}
The distance $\TV$ satisfies the following properties:
\begin{enumerate}
\item (Tensorization) Let $P_1, P_2, \dots, P_n$ and $Q_1, Q_2, \dots, Q_n$ be distributions on a measurable space $(\mathcal{X}, \mathcal{B})$. Then
$$\TV\left( \prod_{i = 1}^n P_i, \prod_{i = 1}^n Q_i \right) \le \sum_{i = 1}^n \TV\left( P_i, Q_i \right)$$
\item (Conditioning on an Event) For any distribution $P$ on a measurable space $(\mathcal{X}, \mathcal{B})$ and event $A \in \mathcal{B}$, it holds that
$$\TV\left( P(\cdot | A), P \right) = 1 - P(A)$$
\item (Conditioning on a Random Variable) For any two pairs of random variables $(X, Y)$ and $(X', Y')$ each taking values in a measurable space $(\mathcal{X}, \mathcal{B})$, it holds that
$$\TV\left( \mL(X), \mL(X') \right) \le \TV\left( \mL(Y), \mL(Y') \right) + \bE_{y \sim Y} \left[ \TV\left( \mL(X | Y = y), \mL(X' | Y' = y) \right)\right]$$
where we define $\TV\left( \mL(X | Y = y), \mL(X' | Y' = y) \right) = 1$ for all $y \not \in \textnormal{supp}(Y')$.
\end{enumerate}
\end{fact}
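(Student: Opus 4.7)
The plan is to handle each of the three properties in turn via short standard arguments, none of which requires anything beyond elementary manipulations.

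For part (1), I would run a hybrid argument. Define intermediate product measures
\[
H_k = \Bigl(\bigotimes_{i=1}^k P_i\Bigr) \otimes \Bigl(\bigotimes_{i=k+1}^n Q_i\Bigr),
\]
so $H_0 = \prod_i Q_i$ and $H_n = \prod_i P_i$. By the triangle inequality for $\TV$, it suffices to bound each $\TV(H_{k-1}, H_k)$ by $\TV(P_k, Q_k)$ and sum. Consecutive hybrids differ only in the $k$th factor, so this reduces to the auxiliary identity $\TV(P \otimes R, Q \otimes R) = \TV(P, Q)$, which is immediate from integrating $|dP - dQ|$ against $dR$ and applying Fubini's theorem.

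For part (2), I would argue from the variational definition $\TV(\mu,\nu) = \sup_{B \in \mathcal{B}} |\mu(B) - \nu(B)|$. For any event $B$, splitting $B = (B \cap A) \sqcup (B \cap A^c)$ and computing
\[
P(B \mid A) - P(B) = \frac{P(A^c)}{P(A)} \, P(B \cap A) - P(B \cap A^c),
\]
one sees the right-hand side is a difference of two nonnegative quantities, the first at most $P(A^c) = 1 - P(A)$ (since $P(B \cap A) \le P(A)$) and the second at most $1 - P(A)$ trivially. Hence $|P(B \mid A) - P(B)| \le 1 - P(A)$ for every $B$, and the choice $B = A$ attains equality, so the supremum equals $1 - P(A)$.

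For part (3), I would use the coupling characterization $\TV(\mL(X), \mL(X')) = \inf_\pi \bP_\pi[X \neq X']$. Construct a coupling in two stages: first, couple $Y$ and $Y'$ optimally so that $\bP[Y \neq Y'] = \TV(\mL(Y), \mL(Y'))$; then, conditional on the event $\{Y = Y' = y\}$ for some $y \in \textnormal{supp}(Y')$, draw $(X, X')$ from an optimal coupling of $\mL(X \mid Y = y)$ and $\mL(X' \mid Y' = y)$, which by optimality satisfies $\bP[X \neq X' \mid Y = Y' = y] = \TV(\mL(X \mid Y = y), \mL(X' \mid Y' = y))$. On the complementary event, use any coupling and simply declare $X \neq X'$. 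A union bound then gives
\[
\bP[X \neq X'] \le \bP[Y \neq Y'] + \bE_{y \sim Y}\bigl[\TV(\mL(X \mid Y = y), \mL(X' \mid Y' = y))\bigr],
\]
with the stated convention that the conditional $\TV$ equals $1$ for $y \notin \textnormal{supp}(Y')$ absorbing any measure-theoretic issue with where $\mL(X' \mid Y' = y)$ is defined. Taking the infimum over couplings on the left gives the claimed inequality.

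There is no genuine obstacle in any of the three parts; the only point deserving mild care is the support convention in (3), which is exactly why the statement builds it into the definition. I would simply verify each identity and apply the triangle/union bounds as above.
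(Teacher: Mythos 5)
The paper states these three properties as a ``Fact'' and gives no proof, treating them as standard background; so there is nothing in the paper itself to compare your argument against. Your three arguments are all correct and are the standard ones: the hybrid/telescoping argument for tensorization, the variational-definition computation with $B=A$ extremizing for conditioning on an event, and the two-stage coupling (first couple $Y$ with $Y'$, then $X$ with $X'$ conditionally) for conditioning on a random variable. The only step you compress slightly in part (3) is that the union bound gives $\bP[X\neq X']\le \bP[Y\neq Y']+\bP[X\neq X',\,Y=Y']$, and the second term is an integral against the ``diagonal'' measure $\min(\mL(Y),\mL(Y'))$ rather than against $\mL(Y)$ itself; since $\min(\mL(Y),\mL(Y'))\le \mL(Y)$ setwise and the integrand is nonnegative, this is dominated by $\bE_{y\sim Y}\bigl[\TV(\mL(X\mid Y=y),\mL(X'\mid Y'=y))\bigr]$, which is exactly the stated bound. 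With that one sentence filled in, the proof is complete.
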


Given an algorithm $\mathcal{A}$ and distribution $\mP$ on inputs, let $\mathcal{A}(\mP)$ denote the distribution of $\mathcal{A}(X)$ induced by $X \sim \mP$. If $\mathcal{A}$ has $k$ steps, let $\mathcal{A}_i$ denote the $i$th step of $\mathcal{A}$ and $\mathcal{A}_{i\text{-}j}$ denote the procedure formed by steps $i$ through $j$. Each time this notation is used, we clarify the intended initial and final variables when $\mathcal{A}_{i}$ and $\mathcal{A}_{i\text{-}j}$ are viewed as Markov kernels. The next lemma from \cite{brennan2019universality} encapsulates the structure of all of our analyses of average-case reductions. Its proof is simple and included in Appendix \ref{subsec:appendix-2-tv} for completeness.

\begin{lemma}[Lemma 4.2 in \cite{brennan2019universality}] \label{lem:tvacc}
Let $\mathcal{A}$ be an algorithm that can be written as $\mathcal{A} = \mathcal{A}_m \circ \mathcal{A}_{m-1} \circ \cdots \circ \mathcal{A}_1$ for a sequence of steps $\mathcal{A}_1, \mathcal{A}_2, \dots, \mathcal{A}_m$. Suppose that the probability distributions $\mP_0, \mP_1, \dots, \mP_m$ are such that $\TV(\mathcal{A}_i(\mP_{i-1}), \mP_i) \le \epsilon_i$ for each $1 \le i \le m$. Then it follows that
$$\TV\left( \mathcal{A}(\mP_0), \mP_m \right) \le \sum_{i = 1}^m \epsilon_i$$
\end{lemma}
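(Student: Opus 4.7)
The plan is to prove this by induction on $m$, peeling off the final step $\mathcal{A}_m$ at each stage. The argument will rely on only two well-known properties of total variation: the triangle inequality, and the data processing inequality stating that for any (possibly randomized) map $K$, $\TV(K(P), K(Q)) \le \TV(P, Q)$. The latter follows either from the coupling characterization of $\TV$, or by writing $\TV$ as a scaled $L^1$ distance on densities and noting that any Markov kernel is a non-expansive operator in $L^1$.

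For the base case $m = 1$, the claim is exactly the hypothesis $\TV(\mathcal{A}_1(\mP_0), \mP_1) \le \epsilon_1$. For the inductive step, let $\mathcal{B} = \mathcal{A}_{m-1} \circ \cdots \circ \mathcal{A}_1$, so $\mathcal{A}(\mP_0) = \mathcal{A}_m(\mathcal{B}(\mP_0))$, and insert the intermediate distribution $\mathcal{A}_m(\mP_{m-1})$ via the triangle inequality:
$$\TV(\mathcal{A}(\mP_0), \mP_m) \le \TV\bigl(\mathcal{A}_m(\mathcal{B}(\mP_0)), \mathcal{A}_m(\mP_{m-1})\bigr) + \TV(\mathcal{A}_m(\mP_{m-1}), \mP_m).$$
The second term is at most $\epsilon_m$ by assumption. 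The first term is bounded using the data processing inequality applied to the kernel $\mathcal{A}_m$, giving $\TV(\mathcal{B}(\mP_0), \mP_{m-1})$, which is at most $\sum_{i=1}^{m-1} \epsilon_i$ by the inductive hypothesis applied to the algorithm $\mathcal{B}$ with the distributions $\mP_0, \mP_1, \dots, \mP_{m-1}$. Summing these two bounds yields $\sum_{i=1}^{m} \epsilon_i$, completing the induction.

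There is no real obstacle here; this is a textbook hybrid/telescoping argument. The only point worth flagging is a notational one: each $\mathcal{A}_i$ must be interpreted as a Markov kernel acting on distributions (which is implicit in the notation $\mathcal{A}_i(\mP_{i-1})$ used in the statement), so that the data processing inequality is legitimately applicable to the composition $\mathcal{A}_m$ at each inductive step.
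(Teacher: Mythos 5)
Your proof is correct and is essentially identical to the paper's: both proceed by induction on $m$, setting $\mathcal{B} = \mathcal{A}_{m-1}\circ\cdots\circ\mathcal{A}_1$, applying the triangle inequality with the intermediate distribution $\mathcal{A}_m(\mP_{m-1})$, and then bounding the first term via the data processing inequality for total variation applied to the kernel $\mathcal{A}_m$. The remarks you add about interpreting each $\mathcal{A}_i$ as a Markov kernel and about the two standard justifications for non-expansiveness are sensible but not a different route.
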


The next lemma bounds the total variation between unplanted and planted samples from binomial distributions. This will serve as a key computation in the proof of correctness for the reduction primitive $\pr{To-}k\textsc{-Partite-Submatrix}$. We remark that the total variation upper bound in this lemma is tight in the following sense. When all of the $P_i$ are the same, the expected value of the sum of the coordinates of the first distribution is $k(P_i - Q)$ higher than that of the second. The standard deviation of the second sum is $\sqrt{kmQ(1 - Q)}$ and thus when $k(P_i - Q)^2 \gg mQ(1 - Q)$, the total variation below tends to one. The proof of this lemma can be found in Appendix \ref{subsec:appendix-2-tv}.

\begin{lemma} \label{lem:bernproduct}
If $k, m \in \mathbb{N}$, $P_1, P_2, \dots, P_k \in [0, 1]$ and $Q \in (0, 1)$, then
$$\TV\left( \otimes_{i = 1}^k \left( \textnormal{Bern}(P_i) + \textnormal{Bin}(m - 1, Q) \right), \textnormal{Bin}(m, Q)^{\otimes k} \right) \le \sqrt{\sum_{i = 1}^k \frac{(P_i - Q)^2}{2mQ(1 - Q)}}$$
\end{lemma}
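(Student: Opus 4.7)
The plan is to control the overall total variation by a chi-squared/KL calculation on a single coordinate. By Pinsker's inequality and tensorization of KL divergence,
\[
\TV\!\left( \otimes_{i=1}^k \mu_i, \, \nu^{\otimes k} \right)^2 \le \tfrac12 \KL\!\left( \otimes_{i=1}^k \mu_i \,\big\|\, \nu^{\otimes k} \right) = \tfrac12 \sum_{i=1}^k \KL(\mu_i \| \nu),
\]
where $\mu_i = \textnormal{Bern}(P_i) + \textnormal{Bin}(m{-}1, Q)$ and $\nu = \textnormal{Bin}(m, Q)$. So it suffices to show the per-coordinate bound $\KL(\mu_i \| \nu) \le (P_i - Q)^2 / (mQ(1-Q))$. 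Since $\KL \le \chi^2$, I will actually prove $\chi^2(\mu_i \| \nu) = (P_i - Q)^2 / (mQ(1-Q))$, which then plugs in to give exactly the claimed bound.

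The key structural observation is that both $\mu_i$ and $\nu$ are mixtures of the \emph{same} two distributions. Writing $h(j) = \binom{m-1}{j} Q^j (1-Q)^{m-1-j}$ for the $\textnormal{Bin}(m{-}1,Q)$ pmf and conditioning on the Bernoulli summand, we have $\mu_i(j) = P_i h(j{-}1) + (1 - P_i) h(j)$ and $\nu(j) = Q h(j{-}1) + (1 - Q) h(j)$, so that $\mu_i(j) - \nu(j) = (P_i - Q)(h(j{-}1) - h(j))$. Therefore
\[
\chi^2(\mu_i \| \nu) = (P_i - Q)^2 \sum_j \frac{(h(j{-}1) - h(j))^2}{\nu(j)}.
\]

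The hard step—and it is really the only nontrivial calculation—is to simplify this sum to $1/(mQ(1-Q))$. Using the elementary binomial identities $\binom{m-1}{j-1} / \binom{m}{j} = j/m$ and $\binom{m-1}{j}/\binom{m}{j} = (m-j)/m$, I would compute
\[
\frac{h(j{-}1)}{\nu(j)} = \frac{j}{mQ}, \qquad \frac{h(j)}{\nu(j)} = \frac{m - j}{m(1-Q)}, \qquad \frac{h(j{-}1) - h(j)}{\nu(j)} = \frac{j - mQ}{mQ(1-Q)}.
\]
Multiplying by $(h(j{-}1) - h(j))/\nu(j)$ and summing gives
\[
\sum_j \frac{(h(j{-}1) - h(j))^2}{\nu(j)} = \frac{1}{m^2 Q^2 (1-Q)^2} \sum_j (j - mQ)^2 \nu(j) = \frac{\textnormal{Var}(\textnormal{Bin}(m,Q))}{m^2 Q^2 (1-Q)^2} = \frac{1}{mQ(1-Q)},
\]
which finishes the chi-squared calculation and, together with the opening paragraph, yields the lemma.

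I expect the main potential pitfall to be a naive tensorization: directly summing $\TV(\mu_i, \nu)$ (via the data-processing bound $\TV(\mu_i, \nu) \le |P_i - Q| \cdot \TV(h, h_{+1}) \lesssim |P_i - Q| / \sqrt{mQ(1-Q)}$) loses a $\sqrt{k}$ factor in the worst case, since the desired bound has $\sqrt{\sum_i (P_i - Q)^2}$ rather than $\sum_i |P_i - Q|$. Going through KL/$\chi^2$ and only applying Pinsker at the very end is precisely what buys back this factor.
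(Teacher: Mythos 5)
Your proof is correct and follows essentially the same route as the paper's: both apply Pinsker's inequality, tensorize KL, pass to $\chi^2$ via $\KL \le \chi^2$, and then compute $\chi^2\bigl(\textnormal{Bern}(P_i) + \textnormal{Bin}(m-1,Q),\,\textnormal{Bin}(m,Q)\bigr)$ exactly to be $(P_i - Q)^2/(mQ(1-Q))$. The only difference is cosmetic: you organize the $\chi^2$ computation cleanly around the mixture identity $\mu_i(j) - \nu(j) = (P_i - Q)(h(j{-}1) - h(j))$ and the ratios $h(j{-}1)/\nu(j) = j/(mQ)$, $h(j)/\nu(j) = (m-j)/(m(1-Q))$, whereas the paper expands the ratio of pmfs directly and recognizes the same $1 + \frac{X-mQ}{m}\cdot\frac{P-Q}{Q(1-Q)}$ inside the square — both ultimately reduce to $\textnormal{Var}(\textnormal{Bin}(m,Q))/(m^2Q^2(1-Q)^2)$.
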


Here, $\mL_1 + \mL_2$ denotes the convolution of two given probability measures $\mL_1$ and $\mL_2$. The next lemma bounds the total variation between two binomial distributions. Its proof can be found in Appendix \ref{subsec:appendix-2-tv}.

\begin{lemma} \label{lem:bintv}
Given $P \in [0, 1]$, $Q \in (0, 1)$ and $n \in \mathbb{N}$, it follows that
$$\TV\left( \textnormal{Bin}(n, P), \textnormal{Bin}(n, Q) \right) \le |P - Q| \cdot \sqrt{\frac{n}{2Q(1 - Q)}}$$
\end{lemma}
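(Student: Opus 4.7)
The plan is to invoke Pinsker's inequality $\TV(\mu,\nu) \le \sqrt{\KL(\mu\|\nu)/2}$, which reduces the problem to showing $\KL(\textnormal{Bin}(n,P) \,\|\, \textnormal{Bin}(n,Q)) \le n(P-Q)^2/[Q(1-Q)]$. The quadratic-in-$(P-Q)$ shape of the desired right-hand side matches precisely the output of a KL bound followed by Pinsker, and the $\sqrt{n}$ scaling is the CLT-like scaling one expects from such an argument; a naive coupling of coordinate-wise Bernoullis would only yield the much weaker bound $\TV \le n|P-Q|$.

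First I would observe that $\textnormal{Bin}(n,p)$ is the pushforward of $\textnormal{Bern}(p)^{\otimes n}$ under the summation map $(x_1,\ldots,x_n) \mapsto \sum_i x_i$. The data processing inequality for KL together with tensorization of KL on product measures then gives
$$\KL(\textnormal{Bin}(n,P) \,\|\, \textnormal{Bin}(n,Q)) \le \KL(\textnormal{Bern}(P)^{\otimes n} \,\|\, \textnormal{Bern}(Q)^{\otimes n}) = n \cdot \KL(\textnormal{Bern}(P) \,\|\, \textnormal{Bern}(Q)).$$

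The final step is the Bernoulli KL bound. I would use the standard inequality $\KL(\mu\|\nu) \le \chi^2(\mu,\nu)$, which follows from $\log t \le t - 1$ applied to $t = d\mu/d\nu$. A direct computation gives
$$\chi^2(\textnormal{Bern}(P), \textnormal{Bern}(Q)) = \frac{(P-Q)^2}{Q} + \frac{(Q-P)^2}{1-Q} = \frac{(P-Q)^2}{Q(1-Q)}.$$
Chaining the three inequalities above delivers the claimed bound. The proof has no substantive obstacle; the only minor subtlety is the degenerate cases $P \in \{0,1\}$, where the Bernoulli KL formula involves $\log 0$, but these are absorbed by the convention $0 \log 0 = 0$, or can be handled by a trivial separate check at the boundary.
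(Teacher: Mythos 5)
Your proof is correct and follows essentially the same route as the paper: reduce from $\mathrm{Bin}(n,\cdot)$ to the product $\mathrm{Bern}(\cdot)^{\otimes n}$ via data processing, apply Pinsker, tensorize KL, upper bound KL by $\chi^2$, and compute $\chi^2$ for Bernoullis. The only cosmetic difference is the order of the first two steps — the paper applies data processing to $\TV$ and then Pinsker, whereas you apply Pinsker first and then data processing to $\KL$ — which leads to the same bound.
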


\subsection{Problem Formulations as Detection Tasks}
\label{subsec:2-formulations}

In this section, we formulate each problem for which we will show computational lower bounds as a detection problem. More precisely, for each problem $\mP$ introduced in Section \ref{sec:1-problems}, we introduce a detection variant $\mP'$ such that a blackbox for $\mP$ also solves $\mP'$. Some of these formulations were already implicitly introduced or will be reintroduced in future sections. We gather all of these formulations here for convenience. Throughout this work, to simplify notation, we will refer to problems $\mP$ and their detection formulations $\mP'$ introduced in this section using the same notation. Furthermore, we will often denote the distribution over instances under the alternative hypothesis $H_1$ of the detection formulation for $\mP$ with the notation $\mP_D$, when $H_1$ is a simple hypothesis. We will also often parameterize $\mP_D$ by $\theta$ to denote $\mP_D$ conditioned on the latent hidden structure $\theta$. When $H_1$ is composite, $\mP_D$ denotes the set of distributions permitted under $H_1$. These general conventions are introduced on a per problem basis in this section. In Section \ref{subsec:2-estimation}, we show that our reductions and lower bounds for these detection formulations also imply lower bounds for analogous estimation and recovery variants.

\paragraph{Robust Sparse Mean Estimation.} Our hypothesis testing formulation for the problem $\pr{rsme}(n, k, d, \tau, \epsilon)$ has hypotheses given by
\begin{align*}
H_0 : (X_1, X_2, \dots, X_n) &\sim_{\textnormal{i.i.d.}} \mN(0, I_d) \\
H_1 : (X_1, X_2, \dots, X_n) &\sim_{\textnormal{i.i.d.}} \pr{mix}_{\epsilon}\left( \mN(\tau \cdot \mu_R, I_d), \mD_O \right)
\end{align*}
where $\mD_O$ is any adversarially chosen outlier distribution on $\mathbb{R}^d$, where $\mu_R \in \mathbb{R}^d$ is a random $k$-sparse unit vector chosen uniformly at random from all such vectors with entries in $\{0, 1/\sqrt{k}\}$. Note that $H_1$ is a composite hypothesis here since $\mD_O$ is arbitrary. Note also that this is a formulation of $\pr{rsme}$ in Huber's contamination model, and therefore lower bounds for this detection problem imply corresponding lower bounds under stronger $\epsilon$-corruption adversaries.

As discussed in Section \ref{subsec:1-problems-rsme}, $\pr{rsme}$ is only information-theoretically feasible when $\tau = \Omega(\epsilon)$. Consider any algorithm that produces some estimate $\hat{\mu}$ satisfying that $\| \hat{\mu} - \mu \|_2 < \tau/2$ with probability $1/2 + \Omega(1)$ in the estimation formulation for $\pr{rsme}$ with hidden $k$-sparse vector $\mu$, as described in Section \ref{subsec:1-problems-rsme}. This algorithm would necessarily output some $\hat{\mu}$ with $\| \hat{\mu} \|_2 < \tau/2$ under $H_0$ and some $\hat{\mu}$ with $\| \hat{\mu} \|_2 > \tau/2$ under $H_1$ with probability $1/2 + \Omega(1)$ in the hypothesis testing formulation above, thus solving it in the sense of Section \ref{sec:1-problems}. Thus any computational lower bounds for this hypothesis testing formulation also implies a lower bound for the typical estimation formulation of $\pr{rsme}$.

\paragraph{Dense Stochastic Block Models.} Given a subset $C_1 \subseteq [n]$ of size $n/k$, let $\pr{isbm}_D(n, C_1, P_{11}, P_{12}, P_{22})$ denote the distribution on $n$-vertex graphs $G'$ introduced in Section \ref{subsec:1-problems-sbm} conditioned on $C_1$. Furthermore, let $\pr{isbm}_D(n, k, P_{11}, P_{12}, P_{22})$ denote the mixture of these distributions induced by choosing $C_1$ uniformly at random from the $(n/k)$-subsets of $[n]$. The problem $\pr{isbm}(n, k, P_{11}, P_{12}, P_{22})$ introduced in Section \ref{subsec:1-problems-sbm} is already a hypothesis testing problem, with hypotheses
$$H_0 : G \sim \mG\left(n, P_0 \right) \quad \text{and} \quad H_1 : G \sim \pr{isbm}_D(n, k, P_{11}, P_{12}, P_{22})$$
where $H_0$ is a composite hypothesis and $P_0$ can vary over all edge densities in $(0, 1)$. As we will discuss at the end of this section, computational lower bounds for this hypothesis testing problem imply lower bounds for the problem of recovering the hidden community $C_1$.

\paragraph{Testing Hidden Partition Models.} Let $C = (C_1, C_2, \dots, C_r)$ and $D = (D_1, D_2, \dots, D_r)$ be two fixed sequences, each consisting of disjoint $K$-subsets of $[n]$. Let $\pr{ghpm}_D(n, r, C, D, \gamma)$ denote the distribution over random matrices $M \in \mathbb{R}^{n \times n}$ introduced in Section \ref{subsec:1-problems-hidden-partition} conditioned on the fixed sequences $C$ and $D$. We denote the mixture over these distributions induced by choosing $C$ and $D$ independently and uniformly at random from all admissible such sequences as $\pr{ghpm}_D(n, r, K, \gamma)$. Similarly, we let $\pr{bhpm}_D(n, r, C, P_0, \gamma)$ denote the distribution over bipartite graphs $G$ with two parts of size $n$, each indexed by $[n]$ with edges included independently with probability
$$\bP\left[ (i, j) \in E(G) \right] = \left\{ \begin{array}{ll} P_0 + \gamma &\textnormal{if } i \in C_h \textnormal{ and } j \in D_h \textnormal{ for some } h \in [r] \\ P_0 - \frac{\gamma}{r - 1} &\textnormal{if } i \in C_{h_1} \textnormal{ and } j \in D_{h_2} \textnormal{ where } h_1 \neq h_2 \\ P_0 &\textnormal{otherwise} \end{array} \right.$$
where $P_0, \gamma \in (0, 1)$ be such that $\gamma/r \le P_0 \le 1 - \gamma$. Then let $\pr{bhpm}_D(n, r, K, P_0, \gamma)$ denote the mixture formed by choosing $C$ and $D$ randomly as in $\pr{ghpm}_D$. The problems $\pr{ghpm}(n, r, C, D, \gamma)$ and $\pr{bhpm}(n, r, K, P_0, \gamma)$ are simple hypothesis testing problems given by
$$\begin{array}{lll}
H_0: M \sim \mN(0, 1)^{\otimes n \times n} &\text{and} &H_1: M \sim \pr{ghpm}_D(n, r, K, \gamma) \\
H_0: G \sim \mG_B(n, n, P_0) &\text{and} &H_1: G \sim \pr{bhpm}_D(n, r, K, P_0, \gamma)
\end{array}$$
where $\mG_B(n, n, P_0)$ denotes the Erd\H{o}s-R\'{e}nyi distribution over bipartite graphs with two parts each indexed by $[n]$ and where each edge is included independently with probability $P_0$.

\paragraph{Semirandom Planted Dense Subgraph.} Our hypothesis testing formulation for $\pr{semi-cr}(n, k, P_1, P_0)$ has observation $G \in \mG_n$ and two composite hypotheses given by
\begin{align*}
&H_0 : G \sim \mathbb{P}_0 \quad \textnormal{for some } \mathbb{P}_0 \in \pr{adv}\left(\mG(n, P_0)\right) \\
&H_1 : G \sim \mathbb{P}_1 \quad \textnormal{for some } \mathbb{P}_1 \in \pr{adv}\left(\mG(n, k, P_1, P_0)\right)
\end{align*}
Here, $\pr{adv}\left(\mG(n, k, P_1, P_0)\right)$ denotes the set of distributions induced by a semirandom adversary that can only remove edges outside of the planted dense subgraph $S$. Similarly, the set $\pr{adv}\left(\mG(n, P_0)\right)$ corresponds to an adversary that can remove any edges from the Erd\H{o}s-R\'{e}nyi graph $\mG(n, P_0)$. We will discuss at the end of this section, how computational lower bounds for this hypothesis testing formulation imply lower bounds for the problem of approximately recovering the vertex subset corresponding to the planted dense subgraph.

\paragraph{Negative Sparse PCA.} Our hypothesis testing formulation for $\pr{neg-spca}(n, k, d, \theta)$ is the spiked covariance model introduced in \cite{johnstoneSparse04} and used to formulate ordinary $\pr{spca}$ in \cite{gao2017sparse, brennan2018reducibility, brennan2019optimal}. This problem has hypotheses given by
\begin{align*}
H_0 : (X_1, X_2, \dots, X_n) &\sim_{\textnormal{i.i.d.}} \mN(0, I_d) \\
H_1 : (X_1, X_2, \dots, X_n) &\sim_{\textnormal{i.i.d.}} \mN\left( 0, I_d - \theta vv^\top \right)
\end{align*}
where $v \in \mathbb{R}^d$ is a $k$-sparse unit vector with entries in $\{0, 1/\sqrt{k}\}$ chosen uniformly at random.

\paragraph{Unsigned and Mixtures of SLRs.} Given a vector $v \in \mathbb{R}^d$, let $\pr{lr}_d(v)$ be the distribution of a single sample-label pair $(X, y) \in \mathbb{R}^d \times \mathbb{R}$ given by
$$y = \langle v, X \rangle + \eta \quad \text{where } X \sim \mN(0, I_d) \text{ and } \eta \sim \mN(0, 1) \text{ are independent}$$
Given a subset $S \subseteq [n]$, let $\pr{mslr}_D(n, S, d, \tau, 1/2)$ denote the distribution over $n$ independent sample-label pairs $(X_1, y_1), (X_2, y_2), \dots, (X_n, y_n)$ each distributed as
$$(X_i, y_i) \sim \pr{lr}_d(\tau s_i v_S) \quad \text{where } s_i \sim_{\text{i.i.d.}} \text{Rad}$$
where $v_S = |S|^{-1/2} \cdot \mathbf{1}_S$ and $\text{Rad}$ denotes the Rademacher distribution which is uniform over $\{-1, 1\}$. Note that this is a even mixture of sparse linear regressions with hidden unit vectors $v_S$ and $-v_S$ and signal strength $\tau$. Let $\pr{mslr}_D(n, k, d, \tau, 1/2)$ denote the mixture of these distributions induced by choosing $S$ uniformly at random from all $k$-subsets of $[n]$. Our hypothesis testing formulation for $\pr{mslr}(n, k, d, \tau)$ has two simple hypotheses given by
\begin{align*}
H_0 : \left\{ (X_i, y_i) \right\}_{i \in [n]} &\sim \left( \mN(0, I_d) \otimes \mN\left(0, 1 + \tau^2\right) \right)^{\otimes n} \\
H_1 : \left\{ (X_i, y_i) \right\}_{i \in [n]} &\sim \pr{mslr}_D(n, k, d, \tau, 1/2)
\end{align*}
Our hypothesis testing formulation of $\pr{uslr}(n, k, d, \tau)$ is a simple derivative of this formulation obtained by replacing each observation $(X_i, y_i)$ with $(X_i, |y_i|)$. We remark that, unlike $\pr{rsme}$ where an estimation algorithm trivially solved the hypothesis testing formulation, the hypothesis $H_0$ here is not an instance of $\pr{mslr}$ corresponding to a hidden vector of zero. This is because the labels $y_i$ under $H_0$ have variance $1 + \tau^2$, whereas they would have variance $1$ if they were this instance of $\pr{mslr}$. However, this detection problem still yields hardness for the estimation variants of $\pr{mslr}$ and $\pr{uslr}$ described in Section \ref{subsec:1-problems-mslr}, albeit with a slightly more involved argument. This is discussed in Section \ref{subsec:2-estimation}.

\paragraph{Robust SLR.} Our hypothesis testing formulation for $\pr{rslr}(n, k, d, \tau, \epsilon)$ has hypotheses given by
\begin{align*}
H_0 : \left\{ (X_i, y_i) \right\}_{i \in [n]} &\sim \left( \mN(0, I_d) \otimes \mN\left(0, 1 + \tau^2\right) \right)^{\otimes n} \\
H_1 : \left\{ (X_i, y_i) \right\}_{i \in [n]} &\sim_{\textnormal{i.i.d.}} \pr{mix}_{\epsilon}\left( \pr{lr}_d(\tau v), \mD_O \right)
\end{align*}
where $\mD_O$ is any adversarially chosen outlier distribution on $\mathbb{R}^d \times \mathbb{R}$, where $v \in \mathbb{R}^d$ is a random $k$-sparse unit vector chosen uniformly at random from all such vectors with entries in $\{0, 1/\sqrt{k}\}$. As with the other formulations of SLR, we defer discussing the implications of lower bounds in this formulation for the estimation task described in Section \ref{subsec:1-problems-robust-slr} to Section \ref{subsec:2-estimation}.

\paragraph{Tensor PCA.} Let $\pr{tpca}^s_D(n, \theta)$ denote the distribution on order $s$ tensors $T \in \mathbb{R}^{n^{\otimes s}}$ with dimensions all equal to $n$ given by $T = v^{\otimes s} + G$ where $G \sim \mN(0, 1)^{\otimes n^{\otimes s}}$ and $v \in \{-1, 1\}^n$ is chosen independently and uniformly at random. As already introduced in Section \ref{subsec:1-problems-tpca}, our hypothesis testing formulation for $\pr{tpca}^s(n, \theta)$ is given by
$$H_0: T \sim \mN(0, 1)^{\otimes n^{\otimes s}} \quad\text{ and }\quad H_1: T \sim \pr{tpca}^s_D(n, \theta)$$
Unlike the other problems we consider, our reductions only show computational lower bounds for blackboxes solving this hypothesis testing problem with a low false positive probability. As we will show in Section \ref{sec:3-tensor}, this implies a lower bound for the canonical estimation formulation for tensor PCA.

\paragraph{Generalized Learning Sparse Mixtures.} Let $\{\mP_{\mu}\}_{\mu \in \mathbb{R}}$ and $\mQ$ be distributions on an arbitrary measurable space $(\mathcal{X}, \mathcal{B})$ and let $\mD$ be a mixture distribution on $\mathbb{R}$. Let $\pr{glsm}_D(n, S, d, \{\mP_{\mu}\}_{\mu \in \mathbb{R}}, \mQ, \mD)$ denote the distribution over $X_1, X_2, \dots, X_n \in \mathcal{X}^d$ introduced in Section \ref{subsec:1-problems-universality} and let $\pr{glsm}_D(n, k, d, \{\mP_{\mu}\}_{\mu \in \mathbb{R}}, \mQ, \mD)$ denote the mixture over these distributions induced by sampling $S$ uniformly at random from the family of $k$-subsets of $[n]$. Our general sparse mixtures detection problem $\pr{glsm}(n, S, d, \{\mP_{\mu}\}_{\mu \in \mathbb{R}}, \mQ, \mD)$ is the following simple vs. simple hypothesis testing formulation
$$H_0 : (X_1, X_2, \dots, X_n) \sim_{\textnormal{i.i.d.}} \mQ^{\otimes d} \quad \text{and} \quad H_1 : (X_1, X_2, \dots, X_n) \sim \pr{glsm}_D\left(n, k, d, \{\mP_{\mu}\}_{\mu \in \mathbb{R}}, \mQ, \mD\right)$$
Lower bounds for this formulation directly imply lower bounds for algorithms that return an estimate $\hat{S}$ of $S$ given samples from $\pr{glsm}_D(n, S, d, \{\mP_{\mu}\}_{\mu \in \mathbb{R}}, \mQ, \mD)$ with $|\hat{S} \Delta S| < k/2$ with probability $1/2 + \Omega(1)$ for all $|S| \le k$. Note that under $H_0$, such an algorithm would output some set $\hat{S}$ of size less than $k/2$ and, under $H_1$, it would output a set of size greater than $k/2$, each with probability $1/2 + \Omega(1)$. Thus thresholding $|\hat{S}|$ at $k/2$ solves this detection formulation in the sense of Section \ref{sec:1-problems}.

\subsection{Notation}
\label{subsec:2-notation}

In this section, we establish notation that will be used repeatedly throughout this paper. Some of these definitions are repeated later upon use for convenience. Let $\mL(X)$ denote the distribution law of a random variable $X$ and given two laws $\mL_1$ and $\mL_2$, let $\mL_1 + \mL_2$ denote $\mL(X + Y)$ where $X \sim \mL_1$ and $Y \sim \mL_2$ are independent. Given a distribution $\mathcal{P}$, let $\mathcal{P}^{\otimes n}$ denote the distribution of $(X_1, X_2, \dots, X_n)$ where the $X_i$ are i.i.d. according to $\mathcal{P}$. Similarly, let $\mathcal{P}^{\otimes m \times n}$ denote the distribution on $\mathbb{R}^{m \times n}$ with i.i.d. entries distributed as $\mathcal{P}$. We let $\mathbb{R}^{n^{\otimes s}}$ denote the set of all order $s$ tensors with dimensions all $n$ in size that contain $n^s$ entries. The distribution $\mP^{\otimes n^{\otimes s}}$ denotes a tensor of these dimensions with entries independently sampled from $\mP$. We say that two parameters $a$ and $b$ are polynomial in one another if there is a constant $C > 0$ such that $a^{1/C} \le b \le a^C$ as $a \to \infty$. In this paper, we adopt the standard asymptotic notation $O(\cdot), \Omega(\cdot), o(\cdot), \omega(\cdot)$ and $\Theta(\cdot)$. We let $a \asymp b$, $a \lesssim b$ and $a \gtrsim b$ be shorthands for $a = \Theta(b), a = O(b)$ and $a = \Omega(b)$, respectively. In all problems that we consider, our main focus is on the polynomial order of growth at computational barriers, usually in terms of a natural parameter $n$. Given a natural parameter $n$ that will usually be clear from context, we let $a = \tilde{O}(b)$ be a shorthand for $a = O\left(b \cdot (\log n)^c \right)$ for some constant $c > 0$, and define $\tilde{\Omega}(\cdot), \tilde{o}(\cdot), \tilde{\omega}(\cdot)$ and $\tilde{\Theta}(\cdot)$ analogously. Oftentimes, it will be true that $b$ is polynomial in $n$, in which case $n$ can be replaced by $b$ in the definition above.

Given a finite or measurable set $\mathcal{X}$, let $\text{Unif}[\mathcal{X}]$ denote the uniform distribution on $\mathcal{X}$. Let $\text{Rad}$ be shorthand for $\text{Unif}[\{-1, 1\}]$, corresponding to the special case of a Rademacher random variable. Let $\TV$, $\KL$ and $\chi^2$ denote total variation distance, KL divergence and $\chi^2$ divergence, respectively. Let $\mN(\mu, \Sigma)$ denote a multivariate normal random vector with mean $\mu \in \mathbb{R}^d$ and covariance matrix $\Sigma$, where $\Sigma$ is a $d \times d$ positive semidefinite matrix, and let $\text{Bern}(p)$ denote the Bernoulli distribution with probability $p$. Let $[n] = \{1, 2, \dots, n\}$ and $\mG_n$ be the set of simple graphs on $n$ vertices. Let $\mG(n, p)$ denote the Erd\H{o}s-R\'{e}nyi distribution over $n$-vertex graphs where each edge is included independently with probability $p$. Let $\mG_B(m, n, p)$ denote the Erd\H{o}s-R\'{e}nyi distribution over $(m + n)$-vertex bipartite graphs with $m$ left vertices, $n$ right vertices and such that each of the $mn$ possible edges included independently with probability $p$. Throughout this paper, we will refer to bipartite graphs with $m$ left vertices and $n$ right vertices and matrices in $\{0, 1\}^{m \times n}$ interchangeably. Let $\mathbf{1}_S$ denote the vector $v \in \mathbb{R}^n$ with $v_i = 1$ if $i \in S$ and $v_i = 0$ if $i \not \in S$ where $S \subseteq [n]$. Let $\pr{mix}_{\epsilon}(\mD_1, \mD_2)$ denote the $\epsilon$-mixture distribution formed by sampling $\mD_1$ with probability $(1 - \epsilon)$ and $\mD_2$ with probability $\epsilon$. Given a partition $E$ of $[N]$ with $k$ parts, let $\mU_N(E)$ denote the uniform distribution over all $k$-subsets of $[N]$ containing exactly one element from each part of $E$.

Given a matrix $M \in \mathbb{R}^{n \times n}$, the matrix $M_{S, T} \in \mathbb{R}^{k \times k}$ where $S, T$ are $k$-subsets of $[n]$ refers to the minor of $M$ restricted to the row indices in $S$ and column indices in $T$. Furthermore, $(M_{S, T})_{i, j} = M_{\sigma_S(i), \sigma_T(j)}$ where $\sigma_S : [k] \to S$ is the unique order-preserving bijection and $\sigma_T$ is analogously defined. Given an index set $I$, subset $S \subseteq I$ and pair of distributions $(\mP, \mQ)$, let $\mathcal{M}_I(S, \mP, \mQ)$ denote the distribution of a collection of independent random variables $(X_i : i \in I)$ with $X_i \sim \mP$ if $i \in S$ and $X_i \sim \mQ$ if $i \not \in S$. When $S$ is a random set, this $\mathcal{M}_I(S, \mP, \mQ)$ denotes a mixture over the randomness of $S$ e.g. $\mathcal{M}_{[N]}(\mU_N(E), \mP, \mQ)$ denotes a mixture of $\mathcal{M}_{[N]}(S, \mP, \mQ)$ over $S \sim \mU_N(E)$. Generally, given an index set $I$ and $|I|$ distributions $\mP_1, \mP_2, \dots, \mP_{|I|}$, let $\mathcal{M}_I(\mP_i : i \in I)$ denote the distribution of independent random variables $(X_i : i \in I)$ with $X_i \sim \mP_i$ for each $i \in I$. The planted Bernoulli distribution $\pr{pb}(n, i, p, q)$ is over $V \in \{0, 1\}^n$ with independent entries satisfying that $V_j \sim \text{Bern}(q)$ unless $j = i$, in which case $V_i \sim \text{Bern}(p)$. In other words, $\pr{pb}(n, i, p, q)$ is a shorthand for $\mathcal{M}_{[n]}\left(\{ i \}, \text{Bern}(p), \text{Bern}(q)\right)$. Similarly, the planted dense subgraph distribution $\mG(n, S, p, q)$ can be written as $\mathcal{M}_I\left(\binom{S}{2}, \text{Bern}(p), \text{Bern}(q) \right)$ where $I = \binom{[n]}{2}$.

\section{Rejection Kernels and Reduction Preprocessing}
\label{sec:2-rejection-kernels}

In this section, we present several average-case reduction primitives that will serve as the key subroutines and preprocessing steps in our reductions. These include pre-existing subroutines from the rejection kernels framework introduced in \cite{brennan2018reducibility, brennan2019universality, brennan2019optimal}, such as univariate rejection kernels from binary inputs and $\pr{Gaussianize}$. We introduce the primitive $\pr{To-}k\textsc{-Partite-Submatrix}$, which is a generalization of $\pr{To-Submatrix}$ from \cite{brennan2019universality} that maps from the $k$-partite variant of planted dense subgraph to Bernoulli matrices, by filling in the missing diagonal and symmetrizing. We also introduce a new variant of rejection kernels called symmetric 3-ary rejection kernels that will be crucial in our reductions showing universality of lower bounds for sparse mixtures.

\subsection{Gaussian Rejection Kernels}
\label{subsec:2-gaussian}

\begin{figure}[t!]
\begin{algbox}
\textbf{Algorithm} $\textsc{rk}_G(\mu, B)$

\vspace{2mm}

\textit{Parameters}: Input $B \in \{0, 1\}$, Bernoulli probabilities $0 < q < p \le 1$, Gaussian mean $\mu$, number of iterations $N$, let $\varphi_\mu(x) = \frac{1}{\sqrt{2\pi}} \cdot \exp\left(- \frac{1}{2}(x - \mu)^2 \right)$ denote the density of $\mN(\mu, 1)$
\begin{enumerate}
\item Initialize $z \gets 0$.
\item Until $z$ is set or $N$ iterations have elapsed:
\begin{enumerate}
\item[(1)] Sample $z' \sim \mN(0, 1)$ independently.
\item[(2)] If $B = 0$, if the condition
$$p \cdot \varphi_0(z') \ge q \cdot \varphi_{\mu}(z')$$
holds, then set $z \gets z'$ with probability $1 - \frac{q \cdot \varphi_\mu(z')}{p \cdot \varphi_0(z')}$.
\item[(3)] If $B = 1$, if the condition
$$(1 - q) \cdot \varphi_\mu(z' + \mu) \ge (1 - p) \cdot \varphi_0(z' + \mu)$$
holds, then set $z \gets z' + \mu$ with probability $1 - \frac{(1 - p) \cdot \varphi_0(z' + \mu)}{(1 - q) \cdot \varphi_\mu(z' + \mu)}$.
\end{enumerate}
\item Output $z$.
\end{enumerate}
\vspace{1mm}
\textbf{Algorithm} $\textsc{Gaussianize}$

\vspace{2mm}

\textit{Parameters}: Collection of variables $X_i \in \{0, 1\}$ for $i \in I$ where $I$ is some index set with $|I| = n$, rejection kernel parameter $R_{\pr{rk}}$, Bernoulli probabilities $0 < q < p \le 1$ with $p - q = R_{\pr{rk}}^{-O(1)}$ and $\min(q, 1 - q) = \Omega(1)$ and a target means $0 \le \mu_{i} \le \tau$ for each $i \in I$ where $\tau > 0$ is a parameter
\begin{enumerate}
\item Form the collection of variables $Y \in \mathbb{R}^{I}$ by setting
$$Y_i \gets \textsc{rk}_{G}(\mu_i, X_i)$$
for each $i \in I$ where each $\textsc{rk}_{G}$ is run with parameter $R_{\pr{rk}}$ and $N_{\text{it}} = \lceil 6\delta^{-1} \log R_{\pr{rk}} \rceil$ iterations where $\delta = \min \left\{ \log \left( \frac{p}{q} \right), \log \left( \frac{1 - q}{1 - p} \right) \right\}$.
\item Output the collection of variables $(Y_i : i \in I)$.
\end{enumerate}
\vspace{1mm}
\end{algbox}
\caption{Gaussian instantiation of the rejection kernel algorithm from \cite{brennan2018reducibility} and the reduction $\textsc{Gaussianize}$ for mapping from Bernoulli to Gaussian planted problems from \cite{brennan2019optimal}.}
\label{fig:rej-kernel}
\end{figure}

Rejection kernels are a framework in \cite{brennan2018reducibility, brennan2019universality, brennan2019optimal} for algorithmic changes of measure based on rejection sampling. Related reduction primitives for changes of measure to Gaussians and binomial random variables appeared earlier in \cite{ma2015computational, hajek2015computational}. Rejection kernels mapping a pair of Bernoulli distributions to a target pair of scalar distributions were introduced in \cite{brennan2018reducibility}. These were extended to arbitrary high-dimensional target distributions and applied to obtain universality results for submatrix detection in \cite{brennan2019universality}. A surprising and key feature of both of these rejection kernels is that they are not lossy in mapping one computational barrier to another. For instance, in \cite{brennan2019universality}, multivariate rejection kernels were applied to increase the relative size $k$ of the planted submatrix, faithfully mapping instances tight to the computational barrier at lower $k$ to tight instances at higher $k$. This feature is also true of the scalar rejection kernels applied in \cite{brennan2018reducibility}.

In this work, we will only need a subset of prior results on rejection kernels. In this section, we give an overview of the key guarantees for Gaussian rejection kernels with binary inputs from \cite{brennan2018reducibility} and for $\textsc{Gaussianize}$ from \cite{brennan2019optimal}. We will also need a new ternary input variant of rejection kernels that will be introduced in Section \ref{subsec:srk}. We begin by introducing the Gaussian rejection kernel $\pr{rk}_G(\mu, B)$ which maps $B \in \{0, 1\}$ to a real valued output and is parameterized by some $0 < q < p \le 1$. The map $\pr{rk}_G(\mu, B)$ transforms two Bernoulli inputs approximately into Gaussians. Specifically, it satisfies the two Markov transition properties
$$\pr{rk}_G(\mu, B) \approx \mN(0, 1) \quad \text{if } B \sim \text{Bern}(q) \quad \quad \text{and} \quad \quad \pr{rk}_G(\mu, B) \approx \mN(\mu, 1) \quad \text{if } B \sim \text{Bern}(p)$$
where $\pr{rk}_G(\mu, B)$ can be computed in $\text{poly}(n)$ time, the $\approx$ above are up to $O_n(n^{-3})$ total variation distance and $\mu = \Theta(1/\sqrt{\log n})$. The maps $\pr{rk}_G(\mu, B)$ can be implemented with the rejection sampling scheme shown in Figure \ref{fig:rej-kernel}. The total variation guarantees for Gaussian rejection kernels are captured formally in the following theorem.

\begin{lemma}[Gaussian Rejection Kernels -- Lemma 5.4 in \cite{brennan2018reducibility}] \label{lem:5c}
Let $R_{\pr{rk}}$ be a parameter and suppose that $p = p(R_{\pr{rk}})$ and $q = q(R_{\pr{rk}})$ satisfy that $0 < q < p \le 1$, $\min(q, 1 - q) = \Omega(1)$ and $p - q \ge R_{\pr{rk}}^{-O(1)}$. Let $\delta = \min \left\{ \log \left( \frac{p}{q} \right), \log \left( \frac{1 - q}{1 - p} \right) \right\}$. Suppose that $\mu = \mu(R_{\pr{rk}}) \in (0, 1)$ satisfies that
$$\mu \le \frac{\delta}{2 \sqrt{6\log R_{\pr{rk}} + 2\log (p-q)^{-1}}}$$
Then the map $\textsc{rk}_{\text{G}}$ with $N = \left\lceil 6\delta^{-1} \log R_{\pr{rk}} \right\rceil$ iterations can be computed in $\text{poly}(R_{\pr{rk}})$ time and satisfies
$$\TV\left(\textsc{rk}_{\text{G}}(\mu, \textnormal{Bern}(p)), \mN(\mu, 1) \right) = O\left(R_{\pr{rk}}^{-3}\right) \quad \text{and} \quad \TV\left(\textsc{rk}_{\text{G}}(\mu, \textnormal{Bern}(q)), \mN(0, 1) \right) = O\left(R_{\pr{rk}}^{-3}\right)$$
\end{lemma}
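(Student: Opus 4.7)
The plan is to recognize the algorithm as a rejection sampler against two carefully constructed \emph{signed} densities whose linear combinations are exactly the target Gaussians. Define
$$K_0(z) \;=\; \frac{p\varphi_0(z) - q\varphi_\mu(z)}{p-q}, \qquad K_1(z) \;=\; \frac{(1-q)\varphi_\mu(z) - (1-p)\varphi_0(z)}{p-q},$$
which integrate to $1$ and satisfy the pivotal identities $(1-q)K_0 + qK_1 = \varphi_0$ and $(1-p)K_0 + pK_1 = \varphi_\mu$. Each $K_i$ is nonnegative precisely on the acceptance region $S_i$ used in the algorithm, and the truncated-and-renormalized versions $\tilde K_i := K_i \mathbf{1}_{S_i} / \int_{S_i} K_i$ are genuine probability densities. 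The key observation is that $a_0(z) = 1 - q\varphi_\mu(z)/(p\varphi_0(z))$ is exactly the acceptance ratio $K_0(z) / \bigl(\tfrac{p}{p-q}\varphi_0(z)\bigr)$ arising from rejection sampling of $K_0$ with proposal $\varphi_0$ and envelope constant $p/(p-q)$; likewise $a_1$ produces $\tilde K_1$ from proposal $\varphi_\mu$. Hence, conditional on input $B = i$ and on accepting in at least one iteration, the algorithm outputs a sample from $\tilde K_i$.

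First I would bound the per-iteration acceptance probability from below. A direct calculation yields $P_0 := \int \varphi_0(z) a_0(z) \mathbf{1}_{S_0}(z)\,dz = (p-q)/p - O(\epsilon)$ and $P_1 := (p-q)/(1-q) - O(\epsilon)$, where $\epsilon$ is a tail mass bounded in the next step. From the elementary inequality $\log(1+x) \le x$ one has $\delta \le \min\bigl((p-q)/q,\,(p-q)/(1-p)\bigr)$, and combined with the hypothesis $\min(q,1-q) = \Omega(1)$ this yields $P_0/\delta,\, P_1/\delta = \Omega(1)$. Consequently, for $N = \lceil 6\delta^{-1}\log R_{\pr{rk}}\rceil$, the probability of never accepting is at most $(1-P_i)^N \le \exp(-NP_i) = R_{\pr{rk}}^{-\Omega(1)}$, which can be arranged to be $O(R_{\pr{rk}}^{-3})$ by adjusting the constant in $N$.

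Next I would control the truncation error $\|\tilde K_i - K_i\|_1$. The cutoffs $S_0^c = \{z : z > \mu/2 + \log(p/q)/\mu\}$ and $S_1^c = \{z : z < \mu/2 - \log((1-q)/(1-p))/\mu\}$ both satisfy $z^2/2 \ge \delta^2/(2\mu^2)$ at the boundary. The hypothesis $\mu \le \delta/(2\sqrt{6\log R_{\pr{rk}} + 2\log(p-q)^{-1}})$ forces $z^2/2 \ge 12\log R_{\pr{rk}} + 4\log(p-q)^{-1}$ on these tail regions, so standard Gaussian tail bounds give $\int_{S_i^c}\varphi_0\, dz$ and $\int_{S_i^c}\varphi_\mu\, dz$ both at most $R_{\pr{rk}}^{-12}(p-q)^4$. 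A short computation using $\int K_i = 1$ and the signed decomposition $K_i = K_i\mathbf{1}_{S_i} + K_i\mathbf{1}_{S_i^c}$ then yields $\|\tilde K_i - K_i\|_1 = O(R_{\pr{rk}}^{-12}(p-q)^3) = O(R_{\pr{rk}}^{-12})$, far below the target bound.

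Finally I assemble the pieces. For input $B \sim \textnormal{Bern}(q)$, the output distribution is the mixture $(1-q)\bigl[(1-(1-P_0)^N)\tilde K_0 + (1-P_0)^N\delta_0\bigr] + q\bigl[(1-(1-P_1)^N)\tilde K_1 + (1-P_1)^N\delta_0\bigr]$. Invoking the identity $(1-q)K_0 + qK_1 = \varphi_0$ and the triangle inequality, the $L^1$ distance between this mixture and $\varphi_0$ is at most $(1-q)\|\tilde K_0 - K_0\|_1 + q\|\tilde K_1 - K_1\|_1 + 2\bigl[(1-q)(1-P_0)^N + q(1-P_1)^N\bigr]$, each summand being $O(R_{\pr{rk}}^{-3})$. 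The Bern$(p)$ case is symmetric via the companion identity $(1-p)K_0 + pK_1 = \varphi_\mu$, and the polynomial runtime is immediate since each iteration performs $O(1)$ density evaluations and coin flips. The main obstacle is the competing role of the cutoffs $S_0, S_1$: they must be wide enough that $\tilde K_i$ approximates $K_i$ to within $R_{\pr{rk}}^{-3}$, yet narrow enough that the per-iteration acceptance probability is not crushed below $\Omega(\delta)$. The precise form of the bound on $\mu$ in the lemma statement is exactly what balances these constraints, and carefully tracking constants through the tail bound (in particular the factor $2\log(p-q)^{-1}$ in the denominator under the square root, which ensures the truncation error is suppressed by an additional $(p-q)^4$) is the delicate step.
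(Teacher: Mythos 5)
Your overall architecture—the signed densities $K_0, K_1$, the mixture identities, the rejection-sampling interpretation of $a_0, a_1$, and the truncation analysis—is correct and is essentially the same decomposition the cited lemma uses (the paper describes the proof as conditioning the outputs on the region where $\tfrac{1-p}{1-q}\le\tfrac{\varphi_\mu}{\varphi_0}\le\tfrac pq$ and showing that region has probability close to one). The truncation step is carried out correctly: the bound $z^2/2\ge\delta^2/(2\mu^2)$ on the boundaries, the resulting tail mass $R_{\pr{rk}}^{-12}(p-q)^4$, and the conclusion $\|\tilde K_i - K_i\|_1=O(R_{\pr{rk}}^{-12})$ all check out, including the absorption of the $1/(p-q)$ normalization.

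The gap is in your bound on the never-accept probability. You derive $P_0/\delta \ge q/p$ and $P_1/\delta\ge q/(1-q)$ from $\log(1+x)\le x$, and conclude $P_i/\delta=\Omega(1)$; but the implicit constant there is $q$ (or $\min(q,1-q)$), which the hypothesis only guarantees to be a constant bounded away from $0$, \emph{not} to be $\ge 1/2$. With $N=\lceil 6\delta^{-1}\log R_{\pr{rk}}\rceil$ fixed, this gives $(1-P_i)^N\le R_{\pr{rk}}^{-6q(1-o(1))}$, which is \emph{not} $O(R_{\pr{rk}}^{-3})$ when, say, $q=0.1$. Your escape hatch — ``which can be arranged to be $O(R_{\pr{rk}}^{-3})$ by adjusting the constant in $N$'' — is not available: the constant $6$ is part of the lemma statement. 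The correct route is the sharper bound
$$P_0=\frac{p-q}{p}(1-\text{tail})=\left(1-\frac qp\right)(1-\text{tail})\ge\bigl(1-e^{-\delta}\bigr)(1-\text{tail}),$$
and similarly $P_1\ge(1-e^{-\delta})(1-\text{tail})$ via $\tfrac{1-p}{1-q}\le e^{-\delta}$. For $\delta$ in a bounded range one has $1-e^{-\delta}\ge\delta-\delta^2/2\ge\delta/2$, which combined with $N\delta\ge 6\log R_{\pr{rk}}$ yields $NP_i\ge 3(1-o(1))\log R_{\pr{rk}}$ and hence $(1-P_i)^N=O(R_{\pr{rk}}^{-3})$; for large $\delta$ one uses instead that $1-P_i\le e^{-\delta}+\text{tail}$ is itself already tiny (and $N\ge1$). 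Your loose bound $P_0\ge q\delta$ — obtained by treating $p-q\ge q\delta$ — is off by a factor of $q/p$ from the sharp $1-q/p\approx\delta$ precisely in the hard regime where $p$ is close to $q$, and that factor is what breaks the exponent.
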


The proof of this lemma consists of showing that the distributions of the outputs $\pr{rk}_G(\mu, \text{Bern}(p))$ and $\pr{rk}_G(\mu, \text{Bern}(q))$ are close to $\mN(\mu, 1)$ and $\mN(0, 1)$ when conditioned to lie in the set of $x$ with $\frac{1 - p}{1 - q} \le \frac{\varphi_\mu(x)}{\varphi_0(x)} \le \frac{p}{q}$ and then showing that this event occurs with probability close to one. The original framework in \cite{brennan2018reducibility} mapped binary inputs to more general pairs of target distributions than $\mN(\mu, 1)$ and $\mN(0, 1)$, however we will only require binary-input rejection kernels in the Gaussian. A multivariate extension of this framework appeared in \cite{brennan2019universality}.

Given an index set $I$, subset $S \subseteq I$ and pair of distributions $(\mP, \mQ)$, let $\mathcal{M}_I(S, \mP, \mQ)$ denote the distribution of a collection of independent random variables $(X_i : i \in I)$ with $X_i \sim \mP$ if $i \in S$ and $X_i \sim \mQ$ if $i \not \in S$. More generally, given an index set $I$ and $|I|$ distributions $\mP_1, \mP_2, \dots, \mP_{|I|}$, let $\mathcal{M}_I(\mP_i : i \in I)$ denote the distribution of independent random variables $(X_i : i \in I)$ with $X_i \sim \mP_i$ for each $i \in I$. For example, a planted clique in $\mG(n, 1/2)$ on the set $S \subseteq [n]$ can be written as $\mathcal{M}_I\left(\binom{S}{2}, \text{Bern}(1), \text{Bern}(1/2) \right)$ where $I = \binom{[n]}{2}$.

We now review the guarantees for the subroutine $\textsc{Gaussianize}$. The variant presented here is restated from \cite{brennan2019optimal} to be over a general index set $I$ rather than matrices, and with the rejection kernel parameter $R_{\pr{rk}}$ decoupled from the size $n$ of $I$, as shown in Figure \ref{fig:rej-kernel}. $\textsc{Gaussianize}$ maps a set of planted Bernoulli random variables to a set of independent Gaussian random variables with corresponding planted means. The procedure applies a Gaussian rejection kernel entrywise and its total variation guarantees follow by a simple application of the tensorization property of $\TV$ from Fact \ref{tvfacts}.

\begin{lemma}[Gaussianization -- Lemma 4.5 in \cite{brennan2019optimal}] \label{lem:gaussianize}
Let $I$ be an index set with $|I| = n$ and let $R_{\pr{rk}}$, $0 < q < p \le 1$ and $\delta$ be as in Lemma \ref{lem:5c}. Let $\mu_i$ be such that $0 \le \mu_i \le \tau$ for each $i \in I$ where the parameter $\tau > 0$ satisfies that
$$\tau \le \frac{\delta}{2 \sqrt{6\log R_{\pr{rk}} + 2\log (P - Q)^{-1}}}$$
The algorithm $\mathcal{A} = \textsc{Gaussianize}$ runs in $\textnormal{poly}(n, R_{\pr{rk}})$ time and satisfies that
\begin{align*}
\TV\left( \mathcal{A}(\mathcal{M}_I(S, \textnormal{Bern}(P), \textnormal{Bern}(Q))), \, \mathcal{M}_I\left( \mN(\mu_i \cdot \mathbf{1}(i \in S), 1) : i \in I \right) \right) &= O\left(n \cdot R_{\pr{rk}}^{-3}\right)
\end{align*}
for all subsets $S \subseteq I$.
\end{lemma}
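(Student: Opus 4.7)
The plan is to derive this result as a direct tensorization of the univariate guarantee in Lemma \ref{lem:5c}. The key structural observation is that $\textsc{Gaussianize}$ simply applies the univariate rejection kernel $\textsc{rk}_G(\mu_i, \cdot)$ coordinatewise, each with independent randomness, so the output is a product over $i \in I$ whenever the input is. Thus the problem reduces to a per-coordinate comparison plus a sum of errors.

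First, I would verify that for each $i \in I$, the assumed bound on $\tau$ yields the hypothesis of Lemma \ref{lem:5c} for the target mean $\mu_i$. Since $0 \le \mu_i \le \tau$ and
\[
\tau \le \frac{\delta}{2\sqrt{6 \log R_{\pr{rk}} + 2 \log (P - Q)^{-1}}},
\]
each $\mu_i$ satisfies exactly the univariate hypothesis, so Lemma \ref{lem:5c} gives
\[
\TV\bigl(\textsc{rk}_G(\mu_i, \textnormal{Bern}(P)),\, \mN(\mu_i, 1)\bigr) = O(R_{\pr{rk}}^{-3}), \qquad \TV\bigl(\textsc{rk}_G(\mu_i, \textnormal{Bern}(Q)),\, \mN(0, 1)\bigr) = O(R_{\pr{rk}}^{-3}).
\]
In either case, the $i$th output coordinate is within $O(R_{\pr{rk}}^{-3})$ of $\mN(\mu_i \cdot \mathbf{1}(i \in S), 1)$ in total variation, uniformly over whether $i \in S$ or $i \notin S$.

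Next, I would combine these per-coordinate bounds. Because the $|I| = n$ invocations of $\textsc{rk}_G$ use independent random bits, both the algorithm's output $\mathcal{A}(\mathcal{M}_I(S, \textnormal{Bern}(P), \textnormal{Bern}(Q)))$ and the target $\mathcal{M}_I(\mN(\mu_i \cdot \mathbf{1}(i \in S), 1) : i \in I)$ are product measures over $i \in I$. Applying the tensorization bound from Fact \ref{tvfacts}(1) then yields
\[
\TV\Bigl(\mathcal{A}(\mathcal{M}_I(S, \textnormal{Bern}(P), \textnormal{Bern}(Q))),\, \mathcal{M}_I(\mN(\mu_i \cdot \mathbf{1}(i \in S), 1) : i \in I)\Bigr) \le \sum_{i \in I} O(R_{\pr{rk}}^{-3}) = O(n \cdot R_{\pr{rk}}^{-3}),
\]
as claimed. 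The runtime bound is immediate: each of the $n$ calls to $\textsc{rk}_G$ runs in $\textnormal{poly}(R_{\pr{rk}})$ time by Lemma \ref{lem:5c}, so the total cost is $\textnormal{poly}(n, R_{\pr{rk}})$.

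There is no genuine obstacle here; the content is entirely in the univariate Lemma \ref{lem:5c}. The only care needed is bookkeeping, specifically verifying that the single uniform condition on $\tau$ really does imply the per-coordinate hypothesis for every $\mu_i$, and noting that the bound is uniform in $S$ because the coordinatewise estimate does not depend on which branch of $\{P, Q\}$ the input takes.
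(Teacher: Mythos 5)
Your proof is correct and matches the paper's intended argument: the paper explicitly notes that $\textsc{Gaussianize}$ applies $\textsc{rk}_G$ entrywise and that the bound follows from Lemma \ref{lem:5c} together with the tensorization property of total variation in Fact \ref{tvfacts}. Your bookkeeping of the $\tau$ condition and the uniformity over $S$ is exactly the right care to take.
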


%

\subsection{Cloning and Planting Diagonals}
\label{subsec:2-planting-diagonals}

We begin by reviewing the subroutine $\textsc{Graph-Clone}$, shown in Figure \ref{fig:clone}, which was introduced in \cite{brennan2019universality} and produces several independent samples from a planted subgraph problem given a single sample. Its properties as a Markov kernel are stated in the next lemma, which is proven by showing the two explicit expressions for $\bP[x^{ij} = v]$ in Step 1 define valid probability distributions and then explicitly writing the mass functions of $\mathcal{A}\left( \mG(n, q) \right)$ and $\mathcal{A}\left( \mG(n, S, p, q) \right)$.

\begin{figure}[t!]
\begin{algbox}
\textbf{Algorithm} \textsc{Graph-Clone}

\vspace{1mm}

\textit{Inputs}: Graph $G \in \mG_n$, the number of copies $t$, parameters $0 < q < p \le 1$ and $0 < Q < P \le 1$ satisfying $\frac{1 - p}{1 - q} \le \left( \frac{1 - P}{1 - Q} \right)^t$ and $\left( \frac{P}{Q} \right)^t \le \frac{p}{q}$

\begin{enumerate}
\item Generate $x^{ij} \in \{0, 1\}^t$ for each $1 \le i < j \le n$ such that:
\begin{itemize}
\item If $\{i, j \} \in E(G)$, sample $x^{ij}$ from the distribution on $\{0, 1\}^t$ with
$$\bP[x^{ij} = v] = \frac{1}{p - q} \left[ (1 - q) \cdot P^{|v|_1} (1 - P)^{t - |v|_1} - (1 - p) \cdot Q^{|v|_1} (1 - Q)^{t - |v|_1} \right]$$
\item If $\{i, j \} \not \in E(G)$, sample $x^{ij}$ from the distribution on $\{0, 1\}^t$ with
$$\bP[x^{ij} = v] = \frac{1}{p - q} \left[ p \cdot Q^{|v|_1} (1 - Q)^{t - |v|_1} - q \cdot P^{|v|_1} (1 - P)^{t - |v|_1} \right]$$
\end{itemize}
\item Output the graphs $(G_1, G_2, \dots, G_t)$ where $\{i, j\} \in E(G_k)$ if and only if $x^{ij}_k = 1$.
\end{enumerate}
\vspace{1mm}

\end{algbox}
\caption{Subroutine $\textsc{Graph-Clone}$ for producing independent samples from planted graph problems from \cite{brennan2019universality}.}
\label{fig:clone}
\end{figure}

\begin{lemma}[Graph Cloning -- Lemma 5.2 in \cite{brennan2019universality}] \label{lem:graphcloning}
Let $t \in \mathbb{N}$, $0 < q < p \le 1$ and $0 < Q < P \le 1$ satisfy that
$$\frac{1 - p}{1 - q} \le \left( \frac{1 - P}{1 - Q} \right)^t \quad \text{and} \quad \left( \frac{P}{Q} \right)^t \le \frac{p}{q}$$
Then the algorithm $\mathcal{A} = \textsc{Graph-Clone}$ runs in $\textnormal{poly}(t, n)$ time and satisfies that for each $S \subseteq [n]$,
$$\mathcal{A}\left( \mG(n, q) \right) \sim \mG(n, Q)^{\otimes t} \quad \text{and} \quad \mathcal{A}\left( \mG(n, S, p, q) \right) \sim \mG(n, S, P, Q)^{\otimes t}$$
\end{lemma}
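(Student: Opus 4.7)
The plan is to verify three things in order: (i) the expressions defining $\mathbb{P}[x^{ij}=v]$ in both cases of Step 1 are legitimate probability distributions on $\{0,1\}^t$; (ii) for each pair $\{i,j\}$, after averaging over the edge indicator of $G$, the variable $x^{ij}$ has exactly the required Bernoulli tensor-product law; and (iii) independence across pairs and efficient sampling. The key identities are essentially algebraic, so once the bookkeeping is set up the computation will be brief.

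For (i), nonnegativity is precisely where the two hypotheses enter. In the edge case we need $(1-q)P^{|v|_1}(1-P)^{t-|v|_1}\ge (1-p)Q^{|v|_1}(1-Q)^{t-|v|_1}$; factoring this becomes $(1-p)/(1-q)\le (P/Q)^{|v|_1}((1-P)/(1-Q))^{t-|v|_1}$, whose minimum over $v$ is attained at $|v|_1=0$ and reduces to the first hypothesis (the other extreme $|v|_1=t$ is automatic since $(1-p)/(1-q)\le 1\le (P/Q)^t$). In the non-edge case the analogous manipulation gives $(P/Q)^{|v|_1}((1-P)/(1-Q))^{t-|v|_1}\le p/q$, whose maximum over $v$ at $|v|_1=t$ reduces to the second hypothesis. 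Summing over $v\in\{0,1\}^t$ collapses via the binomial theorem: $\sum_v P^{|v|_1}(1-P)^{t-|v|_1}=1=\sum_v Q^{|v|_1}(1-Q)^{t-|v|_1}$, so in the edge case the total is $((1-q)-(1-p))/(p-q)=1$ and in the non-edge case the total is $(p-q)/(p-q)=1$.

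For (ii), I will compute $\mathbb{P}[x^{ij}=v]$ by conditioning on whether $\{i,j\}\in E(G)$. Under $G\sim\mathcal{G}(n,q)$ the edge has probability $q$, and combining the two conditional formulas yields
\begin{align*}
\mathbb{P}[x^{ij}=v] &= \tfrac{q}{p-q}\bigl[(1-q)P^{|v|_1}(1-P)^{t-|v|_1}-(1-p)Q^{|v|_1}(1-Q)^{t-|v|_1}\bigr] \\
&\quad+ \tfrac{1-q}{p-q}\bigl[pQ^{|v|_1}(1-Q)^{t-|v|_1}-qP^{|v|_1}(1-P)^{t-|v|_1}\bigr].
\end{align*}
The $P^{|v|_1}(1-P)^{t-|v|_1}$ coefficient is $(q(1-q)-q(1-q))/(p-q)=0$, and the $Q^{|v|_1}(1-Q)^{t-|v|_1}$ coefficient is $((1-q)p-q(1-p))/(p-q)=1$, giving $Q^{|v|_1}(1-Q)^{t-|v|_1}$, i.e.\ $x^{ij}\sim\mathrm{Bern}(Q)^{\otimes t}$. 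For a pair $\{i,j\}\in\binom{S}{2}$ in the planted case the edge probability is $p$ instead of $q$, and an identical calculation produces coefficients $0$ on the $Q$-term and $1$ on the $P$-term, yielding $\mathrm{Bern}(P)^{\otimes t}$. Pairs $\{i,j\}\notin\binom{S}{2}$ still give $\mathrm{Bern}(Q)^{\otimes t}$ by the null-case computation.

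For (iii), the variables $\{x^{ij}\}$ are generated independently across pairs conditional on $E(G)$, and the edges of $G$ are themselves independent in both the Erd\H{o}s-R\'enyi and planted models, so the collection $(x^{ij})_{i<j}$ is independent and the $t$ coordinates of each $x^{ij}$ combine across pairs to give independence of $G_1,\dots,G_t$; together with the marginal computations this is precisely $\mathcal{G}(n,Q)^{\otimes t}$ or $\mathcal{G}(n,S,P,Q)^{\otimes t}$ as claimed. The only potential obstacle to $\mathrm{poly}(t,n)$ runtime is that each $x^{ij}$ lives in a set of size $2^t$, but because both probability mass functions depend on $v$ only through $|v|_1$, sampling decomposes into first drawing the Hamming weight $k$ from the appropriate distribution on $\{0,1,\dots,t\}$ (which can be computed and sampled in $\mathrm{poly}(t)$ time) and then choosing a uniformly random subset of $[t]$ of size $k$; repeating this $\binom{n}{2}$ times is polynomial overall.
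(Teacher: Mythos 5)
Your proof is correct and follows essentially the same approach as the paper: verify that the two explicit formulas for $\mathbb{P}[x^{ij}=v]$ are valid probability mass functions (nonnegativity reducing to the two hypotheses via the monotonicity of $(P/Q)^{|v|_1}((1-P)/(1-Q))^{t-|v|_1}$ in $|v|_1$, and normalization via the binomial theorem), and then marginalize over the edge indicator to recover $\mathrm{Bern}(Q)^{\otimes t}$ or $\mathrm{Bern}(P)^{\otimes t}$ exactly. The remarks on independence across pairs and on $\mathrm{poly}(t,n)$ sampling via the Hamming weight are correct and complete the argument.
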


Graph cloning more generally produces a method to clone a set of Bernoulli random variables indexed by a general index set $I$ instead of the possible edges of a graph on the vertex set $[n]$. The guarantees for this subroutine are stated in the following lemma. We remark that both of these lemmas will always be applied with $t = O(1)$, resulting in a constant loss in signal strength.

\begin{lemma}[Bernoulli Cloning] \label{lem:bern-clone}
Let $I$ be an index set with $|I| = n$, let $t \in \mathbb{N}$, $0 < q < p \le 1$ and $0 < Q < P \le 1$ satisfy that
$$\frac{1 - p}{1 - q} \le \left( \frac{1 - P}{1 - Q} \right)^t \quad \text{and} \quad \left( \frac{P}{Q} \right)^t \le \frac{p}{q}$$
There is an algorithm $\mathcal{A} = \textsc{Bernoulli-Clone}$ that runs in $\textnormal{poly}(t, n)$ time and satisfying
\begin{align*}
&\mathcal{A}\left( \mathcal{M}_I(\textnormal{Bern}(q)) \right) \sim \mathcal{M}_I(\textnormal{Bern}(Q))^{\otimes t} \quad \text{and} \\
&\mathcal{A}\left( \mathcal{M}_I(S, \textnormal{Bern}(p), \textnormal{Bern}(q)) \right) \sim \mathcal{M}_I(S, \textnormal{Bern}(P), \textnormal{Bern}(Q))^{\otimes t}
\end{align*}
for each $S \subseteq I$.
\end{lemma}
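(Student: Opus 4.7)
The plan is to mirror the proof of Lemma \ref{lem:graphcloning} for $\textsc{Graph-Clone}$ essentially verbatim, replacing edges $\{i,j\}$ by general indices $i \in I$. Concretely, I will define $\textsc{Bernoulli-Clone}$ to process each coordinate of the input independently: for each $i \in I$, sample a vector $x^i \in \{0,1\}^t$ whose conditional distribution depends only on the input bit $X_i$, using the same two conditional probability mass functions as in $\textsc{Graph-Clone}$. The output is then the $t$ collections $(y^{(1)}, \dots, y^{(t)})$ where $y^{(k)}_i = x^i_k$.

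The first key step is to verify that the two formulas for $\bP[x^i = v]$ define valid probability distributions on $\{0,1\}^t$: nonnegativity follows from the hypotheses $\frac{1-p}{1-q} \le \left(\frac{1-P}{1-Q}\right)^t$ and $\left(\frac{P}{Q}\right)^t \le \frac{p}{q}$ (these are exactly what make the two bracketed expressions nonnegative for every $v \in \{0,1\}^t$), and summation to $1$ follows from the binomial theorem $\sum_{v \in \{0,1\}^t} r^{|v|_1}(1-r)^{t-|v|_1} = 1$ for any $r$. This step is identical to the corresponding step in the proof of Lemma \ref{lem:graphcloning}.

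The second step is to compute the marginal law of $x^i$ when $X_i \sim \textnormal{Bern}(p)$ and when $X_i \sim \textnormal{Bern}(q)$. A direct calculation gives, for $X_i \sim \textnormal{Bern}(p)$,
\begin{align*}
\bP[x^i = v] &= p \cdot \tfrac{1}{p-q}\!\left[(1-q) P^{|v|_1}(1-P)^{t-|v|_1} - (1-p) Q^{|v|_1}(1-Q)^{t-|v|_1}\right] \\
&\quad + (1-p) \cdot \tfrac{1}{p-q}\!\left[p \, Q^{|v|_1}(1-Q)^{t-|v|_1} - q \, P^{|v|_1}(1-P)^{t-|v|_1}\right] \\
&= P^{|v|_1}(1-P)^{t-|v|_1},
\end{align*}
so $x^i \sim \textnormal{Bern}(P)^{\otimes t}$, and analogously $X_i \sim \textnormal{Bern}(q)$ yields $x^i \sim \textnormal{Bern}(Q)^{\otimes t}$. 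This calculation is precisely the one carried out in \cite{brennan2019universality} for the graph version; nothing in it depends on the index structure being $\binom{[n]}{2}$.

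The final step is to assemble the coordinates. Since the subroutine processes each $i \in I$ independently, and since in both the planted and unplanted input laws the coordinates $(X_i : i \in I)$ are mutually independent, the outputs $(x^i : i \in I)$ are mutually independent. Collecting the $k$-th coordinate across $i$ to form $y^{(k)}$, independence across $i$ combined with the marginal computation above shows that under $\mathcal{M}_I(S,\textnormal{Bern}(p),\textnormal{Bern}(q))$ the tuple $(y^{(1)},\dots,y^{(t)})$ has the law $\mathcal{M}_I(S,\textnormal{Bern}(P),\textnormal{Bern}(Q))^{\otimes t}$, and likewise in the unplanted case. Polynomial running time is immediate since each of the $|I| = n$ rejection samples is over $2^t$ atoms and can be drawn by computing at most $2^t \le \textnormal{poly}(t,n)$ probabilities (or, if $t$ is large, by a straightforward sequential sampling of the $t$ bits one at a time via conditional probabilities). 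There is no real obstacle here — the lemma is a routine generalization, and the only thing to be slightly careful about is that the identities verified for a single index extend by independence, which is exactly why the algorithm can treat each $i \in I$ separately.
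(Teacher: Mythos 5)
Your proposal is correct and takes essentially the same approach the paper intends: the paper gives no separate proof of Lemma \ref{lem:bern-clone}, instead remarking that $\textsc{Graph-Clone}$ "more generally produces a method to clone a set of Bernoulli random variables indexed by a general index set $I$," and your proof simply makes that generalization explicit by running the same per-coordinate rejection sampling with the edge set $\binom{[n]}{2}$ replaced by $I$, with the marginal cancellation and nonnegativity checks carried over verbatim.
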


\begin{figure}[t!]
\begin{algbox}
\textbf{Algorithm} $\pr{To-}k\textsc{-Partite-Submatrix}$

\vspace{1mm}

\textit{Inputs}: $k$\pr{-pds} instance $G \in \mG_N$ with clique size $k$ that divides $N$ and partition $E$ of $[N]$, edge probabilities $0 < q < p \le 1$ with $q = N^{-O(1)}$ and target dimension $n \ge \left(\frac{p}{Q} + 1 \right)N$ where $Q = 1 - \sqrt{(1 - p)(1 - q)} + \mathbf{1}_{\{p = 1\}} \left( \sqrt{q} - 1 \right)$ and $k$ divides $n$
\begin{enumerate}
\item Apply $\textsc{Graph-Clone}$ to $G$ with edge probabilities $P = p$ and $Q = 1 - \sqrt{(1 - p)(1 - q)} + \mathbf{1}_{\{p = 1\}} \left( \sqrt{q} - 1 \right)$ and $t = 2$ clones to obtain $(G_1, G_2)$.
\item Let $F$ be a partition of $[n]$ with $[n] = F_1 \cup F_2 \cup \cdots \cup F_k$ and $|F_i| = n/k$. Form the matrix $M_{\text{PD}} \in \{0, 1\}^{n \times n}$ as follows:
\begin{enumerate}
\item[(1)] For each $t \in [k]$, sample $s_1^t \sim \text{Bin}(N/k, p)$ and $s_2^t \sim \text{Bin}(n/k, Q)$ and let $S_t$ be a subset of $F_t$ with $|S_t| = N/k$ selected uniformly at random. Sample $T_1^t \subseteq S_t$ and $T_2^t \subseteq F_t \backslash S_t$ with $|T_1^t| = s_1^t$ and $|T_2^t| = \max\{s_2^t - s_1^t, 0 \}$ uniformly at random.
\item[(2)] Now form the matrix $M_{\text{PD}}$ such that its $(i, j)$th entry is
$$(M_{\text{PD}})_{ij} = \left\{ \begin{array}{ll} \mathbf{1}_{\{\pi_t(i), \pi_t(j)\} \in E(G_1)} & \text{if } i < j \text{ and } i, j \in S_t \\ \mathbf{1}_{\{\pi_t(i), \pi_t(j)\} \in E(G_2)} & \text{if } i > j \text{ and } i, j \in S_t \\ \mathbf{1}_{\{ i \in T_1^t \}} & \text{if } i = j \text{ and } i, j \in S_t \\ \mathbf{1}_{\{i \in T_2^t\}} & \text{if } i = j \text{ and } i, j \in F_t \backslash S_t \\ \sim_{\text{i.i.d.}} \text{Bern}(Q) & \text{if } i \neq j \text{ and } (i, j) \not \in S_t^2 \text{ for a } t \in [k] \end{array} \right.$$
where $\pi_t : S_t \to E_t$ is a bijection chosen uniformly at random.
\end{enumerate}
\item Output the matrix $M_{\text{PD}}$ and the partition $F$.
\end{enumerate}
\vspace{1mm}

\end{algbox}
\caption{Subroutine $\pr{To-}k\textsc{-Partite-Submatrix}$ for mapping from an instance of $k$-partite planted dense subgraph to a $k$-partite Bernoulli submatrix problem.}
\label{fig:tosubmatrix}
\end{figure}

We now introduce the procedure $\pr{To-}k\textsc{-Partite-Submatrix}$, which is shown in Figure \ref{fig:tosubmatrix} and will be crucial in our reductions to dense variants of the stochastic block model. This reduction clones the upper half of the adjacency matrix of the input graph problem to produce an independent lower half and plants diagonal entries while randomly embedding into a larger matrix to hide the diagonal entries in total variation. $\pr{To-}k\textsc{-Partite-Submatrix}$ is similar to $\textsc{To-Submatrix}$ in \cite{brennan2019universality} and $\textsc{To-Bernoulli-Submatrix}$ in \cite{brennan2019optimal} but ensures that the random embedding step accounts for the $k$-partite promise of the input $k$\pr{-pds} instance. Completing the missing diagonal entries in the adjacency matrix will be crucial to apply one of our main techniques, Bernoulli rotations, which will be introduced in the next section.

The next lemma states the total variation guarantees of $\pr{To-}k\textsc{-Partite-Submatrix}$ and is a $k$-partite variant of Theorem 6.1 in \cite{brennan2019universality}. Although technically more subtle than the analysis of $\textsc{To-Submatrix}$ in \cite{brennan2019universality}, this proof is tangential to our main reduction techniques and deferred to Appendix \ref{subsec:appendix-2-k-partite}. Given a partition $E$ of $[N]$ with $k$ parts, let $\mU_N(E)$ denote the uniform distribution over all $k$-subsets of $[N]$ containing exactly one element from each part of $E$.

\begin{lemma}[Reduction to $k$-Partite Bernoulli Submatrix Problems] \label{lem:submatrix}
Let $0 < q < p \le 1$ and $Q = 1 - \sqrt{(1 - p)(1 - q)} + \mathbf{1}_{\{p = 1\}} \left( \sqrt{q} - 1 \right)$. Suppose that $n$ and $N$ are such that
$$n \ge \left( \frac{p}{Q} + 1 \right) N \quad \text{and} \quad k \le QN/4$$
Also suppose that $q = N^{-O(1)}$ and both $N$ and $n$ are divisible by $k$. Let $E = (E_1, E_2, \dots, E_k)$ and $F = (F_1, F_2, \dots, F_k)$ be partitions of $[N]$ and $[n]$, respectively. Then it follows that the algorithm $\mathcal{A} = \pr{To-}k\textsc{-Partite-Submatrix}$ runs in $\textnormal{poly}(N)$ time and satisfies
\begin{align*}
\TV\left( \mathcal{A}(\mG(N, \mU_N(E), p, q)), \, \mathcal{M}_{[n] \times [n]} \left(\mU_n(F), \textnormal{Bern}(p), \textnormal{Bern}(Q) \right) \right) &\le 4k \cdot \exp \left( - \frac{Q^2N^2}{48pkn} \right) + \sqrt{\frac{C_Q k^2}{2n}} \\
\TV\left( \mathcal{A}(\mG(N, q)), \, \textnormal{Bern}\left( Q \right)^{\otimes n \times n} \right) &\le 4k \cdot \exp \left( - \frac{Q^2N^2}{48pkn} \right)
\end{align*}
where $C_Q = \max \left\{ \frac{Q}{1 - Q}, \frac{1 - Q}{Q} \right\}$.
\end{lemma}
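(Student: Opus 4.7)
The plan is to decompose the analysis of $\pr{To-}k\textsc{-Partite-Submatrix}$ into its two major steps -- graph cloning (Step 1) and the $k$-partite random embedding with diagonal planting (Step 2) -- and combine the resulting TV bounds via Lemma \ref{lem:tvacc}. Step 1 will contribute zero TV loss by design, so all the error in the final bound accumulates from Step 2.

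First, I would verify that the hypotheses of Lemma \ref{lem:graphcloning} are satisfied with $t = 2$, $P = p$, and $Q = 1 - \sqrt{(1-p)(1-q)} + \mathbf{1}_{\{p=1\}}(\sqrt{q}-1)$. The defining identity $(1-Q)^2 = (1-p)(1-q)$ (when $p < 1$) turns the first required inequality into the equality $\frac{(1-P)^2}{(1-Q)^2} = \frac{1-p}{1-q}$; the second inequality $(P/Q)^2 \le p/q$ reduces to $pq \le Q^2$, which follows from AM-GM applied to $\sqrt{(1-p)(1-q)} \le 1 - (p+q)/2$, with the $p=1$ edge case absorbed by the additive correction to $Q$. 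Invoking Lemma \ref{lem:graphcloning} then gives that Step 1 exactly produces $(G_1, G_2) \sim \mG(N, Q)^{\otimes 2}$ under $H_0$ and $(G_1, G_2) \sim \mG(N, S^\ast, p, Q)^{\otimes 2}$ under $H_1$ with $S^\ast \sim \mU_N(E)$, incurring no TV loss.

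Second, I would analyze the off-diagonal entries of $M_{\mathrm{PD}}$ after the embedding step. The random selections $(S_t, \pi_t)_{t \in [k]}$ together induce a random injection from the cloned vertex set $[N]$ into $[n]$ respecting the partitions $E$ and $F$. Under $H_1$, the hidden $k$-clique of each $G_i$ (with one vertex per $E_t$) is thereby mapped to a uniformly random $k$-subset of $[n]$ with exactly one vertex per $F_t$, which is precisely the distribution $\mU_n(F)$. Placing $G_1$ in the strict upper triangle, the independent $G_2$ in the strict lower triangle, and iid $\text{Bern}(Q)$ fill-ins for entries outside $\cup_t S_t^2$ ensures that the off-diagonal marginal of $M_{\mathrm{PD}}$ matches the target distribution exactly in both hypotheses (using independence of $G_1, G_2$ to handle symmetrization).

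Third, the TV loss is carried entirely by the diagonal-planting step. I would define the good event $\mathcal{E} = \bigcap_{t \in [k]} \{s_1^t \le s_2^t\}$. Under the hypothesis $n \ge (p/Q + 1)N$, we have $\mathbb{E}[s_2^t - s_1^t] \ge NQ/k$; a Bernstein/Chernoff estimate on the difference $s_2^t - s_1^t$ combined with a union bound over $t \in [k]$ yields $\Pr(\mathcal{E}^c) \le 4k \exp(-Q^2 N^2 / 48 p k n)$, which contributes the exponential term present in both bounds. Conditioned on $\mathcal{E}$, the total diagonal count in each $F_t$ equals $s_2^t \sim \text{Bin}(n/k, Q)$, with ones distributed conditionally uniformly within $S_t$ and within $F_t \setminus S_t$. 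In the null case, integrating over the auxiliary randomness of $(s_1^t, T_1^t, T_2^t)$ collapses this to the iid $\text{Bern}(Q)^{\otimes n/k}$ target marginal and contributes no further TV cost. In the planted case, the per-part target marginal on the diagonal of $F_t$ is $\text{Bern}(p) + \text{Bin}(n/k - 1, Q)$ (one planted $\text{Bern}(p)$ at the clique vertex), whereas the algorithm's distribution has total count $\text{Bin}(n/k, Q)$; applying Lemma \ref{lem:bernproduct} over the $k$ independent parts with $m = n/k$ and each $P_i = p$ produces the additive $\sqrt{C_Q k^2/(2n)}$ term, where $C_Q$ absorbs the $Q(1-Q)$ denominator.

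The main obstacle will be the careful analysis of the diagonal distribution under the conditioning on $\mathcal{E}$: one must verify that the algorithm's joint sampling mechanism for $(s_1^t, s_2^t, T_1^t, T_2^t)$, though it splits the count between $S_t$ and $F_t \setminus S_t$ differently from the hypergeometric split induced by a genuinely iid-$\text{Bern}(Q)$ diagonal, nonetheless yields the correct marginal under $H_0$ modulo only $\Pr(\mathcal{E}^c)$, while under $H_1$ the discrepancy collapses cleanly into the binomial-convolution form governed by Lemma \ref{lem:bernproduct}. Combining the zero cost of Step 1, the exact off-diagonal match, and these diagonal bounds via Lemma \ref{lem:tvacc} then yields the two claimed TV estimates.
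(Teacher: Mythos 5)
Your overall structure is right and matches the paper's: Step 1 is exact by Lemma \ref{lem:graphcloning} (and your AM--GM verification of the cloning inequalities is correct), the off-diagonals are exact conditioned on the random embedding, and all the TV cost comes from the diagonal-planting step, where the good event $\mathcal{E}$ and Lemma \ref{lem:bernproduct} drive both bounds. Your Chernoff computation for $\Pr[\mathcal{E}^c]$ is also essentially right (though the paper actually shows $\Pr[\mathcal{E}^c] \le 2k\exp(-\cdot)$ and picks up the extra factor of $2$ from a triangle inequality when conditioning both measures on $\mathcal{E}$; your attribution of $4k$ directly to $\Pr[\mathcal{E}^c]$ is an imprecision but not fatal to the constant).

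The genuine gap is in the planted diagonal analysis. You propose to apply Lemma \ref{lem:bernproduct} with $P_i = p$, comparing the \emph{total-count} marginal $\textnormal{Bern}(p) + \textnormal{Bin}(n/k-1, Q)$ against $\textnormal{Bin}(n/k, Q)$. This bounds the wrong quantity. The diagonal vector in $F_t$ is not determined by its support size, because the clique-vertex coordinate is distinguished; the vector distribution is (after exchangeability within $F_t \setminus \{R' \cap F_t\}$) determined by the \emph{joint} law of the pair (clique-vertex bit $z_1^t$, non-clique support count $z_2^t + z_4^t$). On $\mathcal{E}$ the algorithm realizes this joint as $(z_1^t, \, z_3^t - z_1^t)$ with $z_1^t \sim \textnormal{Bern}(p)$ independent of $z_3^t \sim \textnormal{Bin}(n/k, Q)$, whereas the target joint is $\textnormal{Bern}(p) \otimes \textnormal{Bin}(n/k-1, Q)$. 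These have identical marginals for both coordinates individually, but different couplings: conditioned on the total count, the algorithm's clique bit is still $\textnormal{Bern}(p)$, while the target's clique bit depends on the count. The correct move --- and what the paper does in Lemma \ref{lem:plantingdiagonals} --- is to condition on $z_1 = v \in \{0,1\}^k$ and then apply Lemma \ref{lem:bernproduct} with $P_t = v_t$ to compare $v_t + \textnormal{Bin}(n/k-1, Q)$ with $\textnormal{Bin}(n/k, Q)$ for each fixed $v$, then marginalize over $v \sim \textnormal{Bern}(p)^{\otimes k}$ via the conditioning property of $\TV$. Numerically $P_t = v_t$ and $P_i = p$ both land under $\sqrt{C_Q k^2/(2n)}$, so the constant is unaffected, but the step as you wrote it does not control the TV of the actual diagonal vectors.
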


For completeness, we give an intuitive summary of the technical subtleties arising in the proof of this lemma. After applying $\textsc{Graph-Clone}$, the adjacency matrix of the input graph $G$ is still missing its diagonal entries. The main difficulty in producing these diagonal entries is to ensure that entries corresponding to vertices in the planted subgraph are properly sampled from $\text{Bern}(p)$. To do this, we randomly embed the original $N \times N$ adjacency matrix in a larger $n \times n$ matrix with i.i.d. entries from $\text{Bern}(Q)$ and sample all diagonal entries corresponding to entries of the original matrix from $\text{Bern}(p)$. The diagonal entries in the new $n - N$ columns are chosen so that the supports on the diagonals within each $F_t$ each have size $\text{Bin}(n/k, Q)$. Even though this causes the sizes of the supports on the diagonals in each $F_t$ to have the same distribution under both $H_0$ and $H_1$, the randomness of the embedding and the fact that $k = o(\sqrt{n})$ ensures that this is hidden in total variation.

\subsection{Symmetric 3-ary Rejection Kernels}
\label{subsec:srk}

\begin{figure}[t!]
\begin{algbox}
\textbf{Algorithm} \textsc{3-srk}$(B, \mP_+, \mP_-, \mQ)$

\vspace{2mm}

\textit{Parameters}: Input $B \in \{-1, 0, 1\}$, number of iterations $N$, parameters $a \in (0, 1)$ and sufficiently small nonzero $\mu_1, \mu_2 \in \mathbb{R}$, distributions $\mP_+, \mP_-$ and $\mQ$ over a measurable space $(X, \mathcal{B})$ such that $(\mP_+, \mQ)$ and $(\mP_-, \mQ)$ are computable pairs
\begin{enumerate}
\item Initialize $z$ arbitrarily in the support of $\mQ$.
\item Until $z$ is set or $N$ iterations have elapsed:
\begin{enumerate}
\item[(1)] Sample $z' \sim \mQ$ independently and compute the two quantities
$$\mL_1(z') = \frac{d\mP_+}{d\mQ} (z') - \frac{d\mP_-}{d\mQ} (z') \quad \text{and} \quad \mL_2(z') = \frac{d\mP_+}{d\mQ} (z') + \frac{d\mP_-}{d\mQ} (z') - 2$$
\item[(2)] Proceed to the next iteration if it does not hold that
$$2|\mu_1| \ge \left| \mL_1(z') \right| \quad \text{and} \quad \frac{2|\mu_2|}{\max\{a, 1 - a\}} \ge |\mL_2(z')|$$
\item[(3)] Set $z \gets z'$ with probability $P_A(x, B)$ where
$$P_A(x, B) = \frac{1}{2} \cdot \left\{ \begin{array}{ll} 1+ \frac{a}{4\mu_2} \cdot \mL_2(z') + \frac{1}{4\mu_1} \cdot \mL_1(z') &\text{if } B = 1 \\ 1 - \frac{1 - a}{4\mu_2} \cdot \mL_2(z') &\text{if } B = 0 \\ 1+ \frac{1}{4\mu_2} \cdot \mL_2(z') - \frac{a}{4\mu_1} \cdot \mL_1(z') &\text{if } B = -1 \end{array} \right.$$
\end{enumerate}
\item Output $z$.
\end{enumerate}
\vspace{1mm}
\end{algbox}
\caption{3-ary symmetric rejection kernel algorithm.}
\label{fig:srej-kernel}
\end{figure}

In this section, we introduce symmetric 3-ary rejection kernels, which will be the key gadget in our reduction showing universality of lower bounds for learning sparse mixtures in Section \ref{sec:universality}. In order to map to universal formulations of sparse mixtures, it is crucial to produce a nontrivial instance of a sparse mixture with multiple planted distributions. Since previous rejection kernels all begin with binary inputs, they do not have enough degrees of freedom to map to three output distributions. The symmetric 3-ary rejection kernels $3\pr{-srk}$ introduced in this section overcome this issue by mapping from distributions supported on $\{-1, 0, 1\}$ to three output distributions $\mP_+, \mP_-$ and $\mQ$. In order to produce clean total variation guarantees, these rejection kernels also exploit symmetry in their three input distributions on $\{-1, 0, 1\}$.

Let $\text{Tern}(a, \mu_1, \mu_2)$ where $a \in (0, 1)$ and $\mu_1, \mu_2 \in \mathbb{R}$ denote the probability distribution on $\{-1, 0, 1\}$ such that if $B \sim \text{Tern}(a, \mu_1, \mu_2)$ then
$$\bP[X = -1] = \frac{1 - a}{2} - \mu_1 + \mu_2, \quad \bP[X = 0] = a - 2\mu_2, \quad \bP[X = 1] = \frac{1 - a}{2} + \mu_1 + \mu_2$$
if all three of these probabilities are nonnegative. The map $3\pr{-srk}(B)$, shown in Figure \ref{fig:srej-kernel}, sends an input $B \in \{-1, 0, 1\}$ to a set $X$ simultaneously satisfying three Markov transition properties:
\begin{enumerate}
\item if $B \sim \text{Tern}(a, \mu_1, \mu_2)$, then $3\textsc{-srk}(B)$ is close to $\mP_+$ in total variation;
\item if $B \sim \text{Tern}(a, -\mu_1, \mu_2)$, then $3\textsc{-srk}(B)$ is close to $\mQ$ in total variation; and
\item if $B \sim \text{Tern}(a, 0, 0)$, then $3\textsc{-srk}(B)$ is close to $\mP_-$ in total variation.
\end{enumerate}
In order to state our main results for $3\pr{-srk}(B)$, we will need the notion of computable pairs from \cite{brennan2019universality}. The definition below is that given in \cite{brennan2019universality}, without the assumption of finiteness of KL divergences. This assumption was convenient for the Chernoff exponent analysis needed for multivariate rejection kernels in \cite{brennan2019universality}. Since our rejection kernels are univariate, we will be able to state our universality conditions directly in terms of tail bounds rather than Chernoff exponents.

\begin{definition}[Relaxed Computable Pair \cite{brennan2019universality}] \label{def:computable}
Define a pair of sequences of distributions $(\mP, \mQ)$ over a measurable space $(X, \mathcal{B})$ where $\mP = (\mP_n)$ and $\mQ = (\mQ_n)$ to be computable if:
\begin{enumerate}
\item there is an oracle producing a sample from $\mQ_n$ in $\textnormal{poly}(n)$ time;
\item for all $n$, $\mP_n$ and $\mQ_n$ are mutually absolutely continuous and the likelihood ratio satisfies
$$\bE_{x \sim \mQ_n} \left[\frac{d\mP_n}{d\mQ_n}(x) \right] = \bE_{x \sim \mP_n}\left[\left( \frac{d\mP_n}{d\mQ_n}(x) \right)^{-1} \right] = 1$$
where $\frac{d\mP_n}{d\mQ_n}$ is the Radon-Nikodym derivative; and
\item there is an oracle computing $\frac{d\mP_n}{d\mQ_n} (x)$ in $\textnormal{poly}(n)$ time for each $x \in X$.
\end{enumerate}
\end{definition}

We remark that the second condition above always holds for discrete distributions and generally for most well-behaved distributions $\mP$ and $\mQ$. We now state our main total variation guarantees for $3\pr{-srk}$. The proof of the next lemma follows a similar structure to the analysis of rejection sampling as in Lemma 5.1 of \cite{brennan2018reducibility} and Lemma 5.1 of \cite{brennan2019universality}. However, the bounds that we obtain are different than those in \cite{brennan2018reducibility, brennan2019universality} because of the symmetry of the three input $\text{Tern}$ distributions. The proof of this lemma is deferred to Appendix \ref{subsec:appendix-3-ary}.

\begin{lemma}[Symmetric 3-ary Rejection Kernels] \label{lem:srk}
Let $a \in (0, 1)$ and $\mu_1, \mu_2 \in \mathbb{R}$ be nonzero and such that $\textnormal{Tern}(a, \mu_1, \mu_2)$ is well-defined. Let $\mP_+, \mP_-$ and $\mQ$ be distributions over a measurable space $(X, \mathcal{B})$ such that $(\mP_+, \mQ)$ and $(\mP_-, \mQ)$ are computable pairs with respect to a parameter $n$. Let $S \subseteq X$ be the set
$$S = \left\{x \in X : 2|\mu_1| \ge \left| \frac{d\mP_+}{d\mQ} (x) - \frac{d\mP_-}{d\mQ} (x) \right| \quad \textnormal{and} \quad \frac{2|\mu_2|}{\max\{a, 1 - a\}} \ge \left|\frac{d\mP_+}{d\mQ} (x) + \frac{d\mP_-}{d\mQ} (x) - 2 \right| \right\}$$
Given a positive integer $N$, then the algorithm $3\textsc{-srk} : \{-1, 0, 1\} \to X$ can be computed in $\textnormal{poly}(n, N)$ time and satisfies that
$$\left. \begin{array}{r} \TV\left( 3\textsc{-srk}(\textnormal{Tern}(a, \mu_1, \mu_2)), \mP_+ \right) \\ \TV\left( 3\textsc{-srk}(\textnormal{Tern}(a, -\mu_1, \mu_2)), \mP_- \right) \\\TV\left( 3\textsc{-srk}(\textnormal{Tern}(a, 0, 0)), \mQ \right) \end{array} \right\} \le 2\delta \left(1 + |\mu_1|^{-1} + |\mu_2|^{-1} \right) + \left( \frac{1}{2} + \delta \left( 1 + |\mu_1|^{-1} + |\mu_2|^{-1} \right) \right)^N$$
where $\delta > 0$ is such that $\bP_{X \sim \mP_+}[X \not \in S]$, $\bP_{X \sim \mP_-}[X \not \in S]$ and $\bP_{X \sim \mQ}[X \not \in S]$ are upper bounded by $\delta$.
\end{lemma}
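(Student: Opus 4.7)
The plan is to follow the standard rejection-sampling template from Lemma 5.1 of \cite{brennan2018reducibility} and Lemma 5.1 of \cite{brennan2019universality}, but extended to handle a ternary input symbol and three target distributions simultaneously. The argument decomposes into three steps: first, verify that $P_A(\cdot,\cdot)$ is a valid acceptance probability; second, compute the conditional law of a single accepted iterate for each of the three input distributions; and third, bound the probability that all $N$ iterations reject. Step one is immediate: on the set $S$ the bounds $|\mL_1(z')| \le 2|\mu_1|$ and $|\mL_2(z')| \le 2|\mu_2|/\max\{a,1-a\}$ imply that each branch of $P_A(z',B)$ equals $\tfrac{1}{2}$ plus a perturbation of absolute value at most $\tfrac{1}{2}$, hence lies in $[0,1]$; outside $S$ the algorithm rejects automatically.

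Step two is the heart of the argument. For fixed $z' \in S$ I would expand $\bE_B[P_A(z',B)]$ separately under $B \sim \textnormal{Tern}(a,\mu_1,\mu_2)$, $B \sim \textnormal{Tern}(a,-\mu_1,\mu_2)$, and $B \sim \textnormal{Tern}(a,0,0)$. Substituting $\mL_1 = \tfrac{d\mP_+}{d\mQ} - \tfrac{d\mP_-}{d\mQ}$ and $\mL_2 = \tfrac{d\mP_+}{d\mQ} + \tfrac{d\mP_-}{d\mQ} - 2$, the coefficients in $P_A$ are chosen precisely so that these three expectations collapse to $\tfrac{1}{2} \tfrac{d\mP_+}{d\mQ}(z')$, $\tfrac{1}{2} \tfrac{d\mP_-}{d\mQ}(z')$, and $\tfrac{1}{2}$ respectively, up to an additive error of order $\delta(|\mu_1|^{-1}+|\mu_2|^{-1})$ that records how asymmetries in the ternary input masses interact with the constant-plus-linear form of $P_A$. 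Integrating against $\mQ$ over $z' \in S$ then shows (a) that the per-iteration acceptance probability equals $\tfrac{1}{2}$ up to an error of the same order, and (b) that the conditional law of the accepted point, given acceptance and $z' \in S$, agrees with the intended target within that same error.

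Step three combines the per-iteration bound with a geometric tail argument. Since each iteration accepts with probability at least $\tfrac{1}{2} - \delta(1+|\mu_1|^{-1}+|\mu_2|^{-1})$, the probability of exhausting all $N$ iterations without acceptance is at most the second term in the stated bound. A final application of the triangle inequality for total variation, using Fact~\ref{tvfacts} and conditioning on whether acceptance ever occurs and on the event $\{z' \in S\}$, yields the claimed estimate after also absorbing the mass outside $S$ under the target (which is at most $\delta$ by definition of $S$).

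The main obstacle is the algebraic identity in Step two: the coefficients in $P_A(z',B)$ are deliberately asymmetric in $a$ versus $1-a$, and they must combine with the three probability masses of $\textnormal{Tern}(a,\pm\mu_1,\mu_2)$ and $\textnormal{Tern}(a,0,0)$ to simultaneously match all three targets. This requires careful tracking of the cross terms multiplying $\mL_1$ and $\mL_2$, exploiting the symmetry of the ternary distributions under $\mu_1 \mapsto -\mu_1$. The $(1+|\mu_1|^{-1}+|\mu_2|^{-1})$ factor appearing in the final TV bound is exactly the amplification incurred when any residual mismatch in these coefficients is normalized by the small tilt parameters $|\mu_1|$, $|\mu_2|$ that separate the three input distributions from the uniform ternary distribution.
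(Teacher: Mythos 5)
Your proposal follows essentially the same approach as the paper's proof: verify $P_A \in [0,1]$ on $S$, exploit the algebraic identity tying the Tern weights to the branch acceptance probabilities, and finish with a geometric bound on the rejection probability plus a triangle inequality. The core identity you use in Step two is exactly the one the paper records --- the paper writes it as $\tfrac{d\mP_+}{d\mQ}(x)$ being the Tern-weighted mixture of the three branch densities $2P_A(x,1), 2P_A(x,0), 2P_A(x,-1)$, while you write the equivalent fact $\bE_B[P_A(z',B)] = \tfrac{1}{2}\tfrac{d\mP_+}{d\mQ}(z')$. The paper's bookkeeping differs in that it first conditions on $B = b$, computes the conditional output law $\mL(Z_b \mid B_1^b)$ and its Radon--Nikodym derivative, and only then forms the Tern-weighted mixture $\mD^*$; this makes explicit the fact that the per-$b$ acceptance probability $\bP[A_i^b]$, and hence the per-$b$ normalization, deviates from $\tfrac{1}{2}$ by $O(\delta(1+|\mu_1|^{-1}+|\mu_2|^{-1}))$. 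Your route of averaging $P_A$ over $B$ first implicitly needs this same observation (the per-$b$ normalizations must be approximately equal for the marginalized density to be proportional to $\bE_B[P_A(z',B)]\mathbf{1}_S(z')$), and you do note in passing that the per-iteration acceptance probability is $\tfrac{1}{2}$ up to that error, so the pieces are there.

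One small slip in your Step two: you say the collapse of $\bE_B[P_A(z',B)]$ to $\tfrac{1}{2}\tfrac{d\mP_+}{d\mQ}(z')$ holds "up to an additive error of order $\delta(|\mu_1|^{-1}+|\mu_2|^{-1})$." For fixed $z' \in S$ the identity is in fact exact --- there is no residual mismatch in the coefficients. The $\delta(1+|\mu_1|^{-1}+|\mu_2|^{-1})$ factors enter only through the restriction to $S$: the branch densities contain $\tfrac{1}{4\mu_1}\mL_1$ and $\tfrac{1}{4\mu_2}\mL_2$, and when the density is set to zero on $S^C$ (mass $\le \delta$ under each of $\mP_+,\mP_-,\mQ$), the discarded contribution is amplified by those $|\mu_i|^{-1}$ coefficients; the same amplification shows up in the bound on $|\bP[A_i^b] - \tfrac{1}{2}|$. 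This reattribution does not change the structure of the argument, but it is worth getting right since it explains where the bound actually comes from.
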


\section{Dense Bernoulli Rotations}
\label{sec:2-bernoulli-rotations}

In this section, we formally introduce dense Bernoulli rotations and constructions for their design matrices and tensors, which will play an essential role in all of our reductions. For an overview of the main high level ideas underlying these techniques, see Sections \ref{subsec:1-tech-dbr} and \ref{subsec:1-tech-design-matrices}. As mentioned in Sections \ref{subsec:1-tech-dbr}, dense Bernoulli rotations map $\pr{pb}(T, i, p, q)$ to $\mN\left( \mu \lambda^{-1} \cdot A_{i}, I_m\right)$  for each $i \in [T]$ and $\textnormal{Bern}(q)^{\otimes T}$ to $\mN\left( 0, I_m\right)$ approximately in total variation, where $\mu = \tilde{\Theta}(1)$, the vectors $A_1, A_2, \dots, A_T \in \mathbb{R}^m$ are for us to design and $\lambda$ is an upper bound on the singular values of the matrix with columns $A_i$.

Simplifying some technical details, our reduction to $\pr{rsme}$ in Section \ref{subsec:3-rsme-reduction} roughly proceeds as follows: (1) its input is a $k\pr{-bpc}$ instance with parts of size $M$ and $N$ and biclique dimensions $k = k_M$ and $k_N$; (2) it applies dense Bernoulli rotations with $p = 1$ and $q = 1/2$ to the $Mk_N$ vectors of length $T = N/k_N$ representing the adjacency patterns in $\{0, 1\}^{N/k_N}$ between each of the $M$ left vertices and each part in the partition of the right vertices; and (3) it pads the resulting matrix with standard normals so that it has $d$ rows. Under $H_1$, the result is a $d \times k_N m$ matrix $\mathbf{1}_S u^\top + \mN(0, 1)^{\otimes d \times k_N m}$ where $S$ is the left vertex set of the biclique and $u$ consists of scaled concatenations of the $A_i$. We design the adversary so that the target data matrix $D$ in $\pr{rsme}$ is roughly of the form
$$D_{ij} \sim \left\{ \begin{array}{ll} \mN\left(\tau k^{-1/2}, 1\right) &\text{if } i \in S \text{ and } j \text{ is not corrupted} \\ \mN\left(\epsilon^{-1}(1 - \epsilon) \tau k^{-1/2}, 1\right) &\text{if } i \in S \text{ and } j \text{ is corrupted} \\ \mN(0, 1) &\text{otherwise} \end{array} \right.$$
for each $i \in [d]$ and $j \in [n]$ where $n = k_N m$. Matching the two distributions above, we arrive at the following desiderata for the $A_i$.
\begin{itemize}
\item We would like each $\lambda^{-1} A_i$ to consist of $(1 - \epsilon') m$ entries equal to $\tau k^{-1/2}$ and $\epsilon' m$ entries equal to $\epsilon'^{-1}(1 - \epsilon') \tau k^{-1/2}$ where $\tau$ is just below the desired computational barrier $\tau = \tilde{\Theta}(k^{1/2} \epsilon^{1/2} n^{-1/4})$ and $\epsilon' \le \epsilon$ where $\epsilon' = \Theta(\epsilon)$.
\item Now observe that the norm of any such $\lambda^{-1} A_i$ is $\Theta\left( \tau \epsilon^{-1/2} m^{1/2} k^{-1/2} \right)$ which is just below a norm of $\tilde{\Theta}(m^{1/2} n^{-1/4})$ at the computational barrier for $\pr{rsme}$. Note that the normalization by $\lambda^{-1}$ ensures that each $\lambda^{-1} A_i$ has $\ell_2$ norm at most $1$. To be as close to the computational barrier as possible, it is necessary that $m^{1/2} n^{-1/4} = \tilde{\Theta}(1)$ which rearranges to $m = \tilde{\Theta}(k_N)$ since $n = k_N m$.
\item When the input is an instance of $k\pr{-bpc}$ nearly at its computational barrier, we have that $N  = \tilde{\Theta}(k_N^2)$ and thus our necessary condition above implies that $m = \tilde{\Theta}(N/k_N) = \tilde{\Theta}(T)$, and hence that $A$ is nearly square. Furthermore, if we take the $A_i$ to be unit vectors, our desiderata that the $\lambda^{-1} A_i$ have norm $\tilde{\Theta}(m^{1/2} n^{-1/4})$ reduces to $\lambda = \tilde{\Theta}(1)$.
\end{itemize}
Summarizing this discussion, we arrive at exactly the three conditions outlined in Section \ref{subsec:1-tech-design-matrices}. We remark that while these desiderata are tailored to $\pr{rsme}$, they will also turn out to be related to the desired properties of $A$ in our other reductions. We now formally introduce dense Bernoulli rotations.

\subsection{Mapping Planted Bits to Spiked Gaussian Tensors}
\label{subsec:2-planted-bits}

%

\begin{figure}[t!]
\begin{algbox}
\textbf{Algorithm} \textsc{Bern-Rotations}

\vspace{1mm}

\textit{Inputs}: Vector $V \in \{0, 1\}^n$, rejection kernel parameter $R_{\pr{rk}}$, Bernoulli probability parameters $0 < q < p \le 1$, output dimension $m$, an $m \times n$ matrix $A$ with singular values all at most $\lambda > 0$, intermediate mean parameter $\mu > 0$

\begin{enumerate}
\item Form $V_1 \in \{0, 1\}^n$ by applying $\pr{Gaussianize}$ to the entries in the vector $V$ with rejection kernel parameter $R_{\pr{rk}}$, Bernoulli probabilities $q$ and $p$ and target mean parameters all equal to $\mu$.
\item Sample a vector $U \sim \mN(0, 1)^{\otimes m}$ and let $\left(I_m - \lambda^{-2} \cdot AA^\top\right)^{1/2}$ be the positive semidefinite square root of $I_m - \lambda^{-2} \cdot AA^\top$. Now form the vector
$$V_2 = \lambda^{-1} \cdot AV_1 +\left(I_m - \lambda^{-2} \cdot AA^\top\right)^{1/2}U$$
\item Output the vector $V_2$.
\end{enumerate}
\vspace{1mm}

\textbf{Algorithm} \textsc{Tensor-Bern-Rotations}

\vspace{1mm}

\textit{Inputs}: Order $s$ tensor $T \in \mathcal{T}_{s, n}(\{0, 1\})$, rejection kernel parameter $R_{\pr{rk}}$, Bernoulli probability parameters $0 < q < p \le 1$, output dimension $m$, an $m \times n$ matrices $A_1, A_2, \dots, A_s$ with singular values less than or equal to $\lambda_1, \lambda_2, \dots, \lambda_s > 0$, respectively, mean parameter $\mu > 0$

\begin{enumerate}
\item Flatten $T$ into the vector $V_1 \in \{0, 1\}^{n^s}$, form the Kronecker product $A = A_1 \otimes A_2 \otimes \cdots \otimes A_s$ and set $\lambda = \lambda_1 \lambda_2 \cdots \lambda_s$.
\item Let $V_2$ be the output of $\pr{Bern-Rotations}$ applied to $V_1$ with parameters $R_{\pr{rk}}$, $0 < q < p \le 1, A, \lambda, \mu$ and output dimension $m^s$.
\item Rearrange the entries of $V_2$ into a tensor $T_1 \in \mathcal{T}_{s, m}(\mathbb{R})$ and output $T_1$.
\end{enumerate}
\vspace{1mm}

\end{algbox}
\caption{Subroutines $\textsc{Bern-Rotations}$ and $\textsc{Tensor-Bern-Rotations}$ for producing spiked Gaussian vectors and tensors, respectively, from the planted bits distribution.}
\label{fig:bern-rotations}
\end{figure}

Let $\pr{pb}(n, i, p, q)$ and $\pr{pb}(S, i, p, q)$ denote the planted bit distributions defined in Sections \ref{subsec:1-tech-dbr} and \ref{subsec:2-notation}. The procedures $\textsc{Bern-Rotations}$ and its derivative $\textsc{Tensor-Bern-Rotations}$ are shown in Figure \ref{fig:bern-rotations}. Recall that the subroutine $\textsc{Gaussianize}$ was introduced in Figure \ref{fig:rej-kernel}. Note that positive semidefinite square roots of $n \times n$ matrices can be computed in $\text{poly}(n)$ time. The two key Markov transition properties for these procedures that will be used throughout the paper are as follows.

\begin{lemma}[Dense Bernoulli Rotations] \label{lem:bern-rotations}
Let $m$ and $n$ be positive integers and let $A \in \mathbb{R}^{m \times n}$ be a matrix with singular values all at most $\lambda > 0$. Let $R_{\pr{rk}}$, $0 < q < p \le 1$ and $\mu$ be as in Lemma \ref{lem:5c}. Let $\mathcal{A}$ denote $\textsc{Bern-Rotations}$ applied with rejection kernel parameter $R_{\pr{rk}}$, Bernoulli probability parameters $0 < q < p \le 1$, output dimension $m$, matrix $A$ with singular value upper bound $\lambda$ and mean parameter $\mu$. Then $\mathcal{A}$ runs in $\textnormal{poly}(n, R_{\pr{rk}})$ time and it holds that
\begin{align*}
\TV\left( \mathcal{A}\left( \pr{pb}(n, i, p, q) \right), \, \mN\left( \mu \lambda^{-1} \cdot A_{\cdot, i}, I_m\right) \right) &= O\left(n \cdot R_{\pr{rk}}^{-3}\right) \\
\TV\left( \mathcal{A}\left( \textnormal{Bern}(q)^{\otimes n} \right), \, \mN\left( 0, I_m\right) \right) &= O\left(n \cdot R_{\pr{rk}}^{-3}\right)
\end{align*}
for all $i \in [n]$, where $A_{\cdot, i}$ denotes the $i$th column of $A$.
\end{lemma}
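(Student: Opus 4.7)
The plan is to decompose $\pr{Bern-Rotations}$ into its two stages and bound the total variation incurred at each. The first stage is the entrywise $\pr{Gaussianize}$ applied to the input, and the second stage is the deterministic-plus-independent-Gaussian-noise linear map that produces $V_2$ from $V_1$.

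For the first stage, I would apply Lemma~\ref{lem:gaussianize} with index set $I = [n]$, Bernoulli probabilities $(p,q)$, rejection kernel parameter $R_{\pr{rk}}$, and target means all equal to $\mu$. Under the planted hypothesis with $V \sim \pr{pb}(n, i, p, q)$, the hidden support is $S = \{i\}$, and Lemma~\ref{lem:gaussianize} gives that the output $V_1$ is within $O(n R_{\pr{rk}}^{-3})$ total variation of $\mN(\mu \cdot e_i, I_n)$, where $e_i$ denotes the $i$th standard basis vector. Under the null hypothesis $V \sim \textnormal{Bern}(q)^{\otimes n}$, we take $S = \emptyset$ and the same lemma gives $V_1$ within $O(n R_{\pr{rk}}^{-3})$ of $\mN(0, I_n)$.

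For the second stage, I would view Step 2 of $\pr{Bern-Rotations}$ as a randomized Markov kernel $K$ acting on $V_1$, namely $K(v) = \lambda^{-1} A v + (I_m - \lambda^{-2} AA^\top)^{1/2} U$ where $U \sim \mN(0, I_m)$ is independent of $v$. The matrix square root is well-defined because the singular values of $A$ are bounded by $\lambda$, so $\lambda^{-2} AA^\top \preceq I_m$. By the data-processing inequality for total variation, $\TV(K(\mathcal{D}_1), K(\mathcal{D}_2)) \le \TV(\mathcal{D}_1, \mathcal{D}_2)$, so it suffices to compute $K$ applied to the two idealized Gaussian distributions from the first stage. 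Since $K$ is affine in $v$ with independent Gaussian noise $U$, the images are Gaussian. A direct computation gives
\[
\bE[K(\mN(\mu e_i, I_n))] = \lambda^{-1} A (\mu e_i) = \mu \lambda^{-1} A_{\cdot, i},
\]
and covariance
\[
\lambda^{-2} A \cdot I_n \cdot A^\top + (I_m - \lambda^{-2} AA^\top)^{1/2} \cdot I_m \cdot (I_m - \lambda^{-2} AA^\top)^{1/2} = \lambda^{-2} AA^\top + (I_m - \lambda^{-2} AA^\top) = I_m.
\]
Hence $K(\mN(\mu e_i, I_n)) = \mN(\mu \lambda^{-1} A_{\cdot, i}, I_m)$ exactly, and analogously $K(\mN(0, I_n)) = \mN(0, I_m)$ exactly.

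Composing the two stages via Lemma~\ref{lem:tvacc} (or the triangle inequality for $\TV$) yields the claimed $O(n R_{\pr{rk}}^{-3})$ bound. The polynomial running time follows from that of $\pr{Gaussianize}$ together with standard $\textnormal{poly}(m, n)$-time computations for the matrix square root and the matrix-vector products in Step 2. The only substantive content beyond bookkeeping is the observation that the noise-correction term $(I_m - \lambda^{-2} AA^\top)^{1/2} U$ is designed precisely to cancel the anisotropic contribution of $\lambda^{-1} A V_1$ to the covariance, producing exactly $I_m$; there is no real obstacle here -- once the Gaussianization step is in hand, the remainder is a one-line covariance calculation.
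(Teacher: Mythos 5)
Your proof is correct and follows essentially the same approach as the paper's: apply Lemma~\ref{lem:gaussianize} to the first step, observe that the second step is an exact Markov transition from $\mN(\mu e_i, I_n)$ to $\mN(\mu\lambda^{-1} A_{\cdot,i}, I_m)$ by a covariance computation, and combine via data processing. The explicit remark that $\lambda^{-2} A A^\top \preceq I_m$ (making the square root well-defined) is a useful detail the paper leaves implicit, but the argument is the same.
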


\begin{proof}
Let $\mathcal{A}_1$ denote the first step of $\mathcal{A} = \pr{Bern-Rotations}$ with input $V$ and output $V_1$, and let $\mathcal{A}_2$ denote the second step of $\mathcal{A}$ with input $V_1$ and output $V_2$. Fix some index $i \in [n]$. Now Lemma \ref{lem:gaussianize} implies
\begin{equation} \label{eqn:gaussianize1}
\TV\left( \mathcal{A}_1\left( \pr{pb}(n, i, p, q) \right), \, \mN\left( \mu \cdot e_i, I_n \right) \right) = O\left(n \cdot R_{\pr{rk}}^{-3}\right)
\end{equation}
where $e_i \in \mathbb{R}^n$ is the $i$th canonical basis vector. Suppose that $V_1 \sim \mN\left( \mu \cdot e_i, I_n \right)$ and let $V_1 = \mu \cdot e_i + W$ where $W \sim \mN(0, I_n)$. Note that the entries of $AW$ are jointly Gaussian and $\text{Cov}(AW) = AA^\top$. Therefore, we have that
$$AV_1 = \mu \cdot A_{\cdot, i} + AW \sim \mN\left( \mu \cdot A_{\cdot, i}, AA^\top \right)$$
If $U \sim \mN(0, 1)^{\otimes m}$ is independent of $W$, then the entries of $AW + \left(\lambda^2 \cdot I_m - \cdot AA^\top\right)^{1/2}U$ are jointly Gaussian. Furthermore, since both terms are mean zero and independent the covariance matrix of this vector is given by
\begin{align*}
\text{Cov}\left( AW + \left(\lambda^2 \cdot I_m - AA^\top\right)^{1/2}U \right) &= \text{Cov}\left( AW \right) + \text{Cov}\left( \left(\lambda^2 \cdot I_m - AA^\top\right)^{1/2}U \right) \\
&= AA^\top + (\lambda^2 \cdot I_m - AA^\top) = \lambda^2 \cdot I_m
\end{align*}
Therefore it follows that $AW + \left(\lambda^2 \cdot I_m - AA^\top\right)^{1/2}U \sim \mN(0, \lambda^2 \cdot I_m)$ and furthermore that
$$V_2 = \lambda^{-1} \cdot AV_1 +\left(I_m - \lambda^{-2} \cdot AA^\top\right)^{1/2}U \sim \mN\left( \mu \lambda^{-1} \cdot A_{\cdot, i}, I_m\right)$$
Where $V_2 \sim \mathcal{A}_2\left( \mN\left( \mu \cdot e_i, I_n \right) \right)$. Now applying $\mathcal{A}_2$ to both distributions in Equation (\ref{eqn:gaussianize1}) and the data-processing inequality prove that $\TV\left( \mathcal{A}\left( \pr{pb}(n, i, p, q) \right), \, \mN\left( \mu \lambda^{-1} \cdot A_{\cdot, i}, I_m\right) \right) = O\left(n \cdot R_{\pr{rk}}^{-3}\right)$. This argument analyzing $\mathcal{A}_2$ applied with $\mu = 0$ yields that $\mathcal{A}_2\left( \mN(0, I_n) \right) \sim \mN(0, I_m)$. Combining this with
$$\TV\left( \mathcal{A}_1\left( \textnormal{Bern}(q)^{\otimes n} \right), \, \mN\left( 0, I_n \right) \right) = O\left(n \cdot R_{\pr{rk}}^{-3}\right)$$
from Lemma \ref{lem:gaussianize} now yields the bound $\TV\left( \mathcal{A}\left( \textnormal{Bern}(q)^{\otimes n} \right), \, \mN\left( 0, I_n\right) \right) = O\left(n \cdot R_{\pr{rk}}^{-3}\right)$, which completes the proof of the lemma. 
\end{proof}

\begin{corollary}[Tensor Bernoulli Rotations] \label{cor:tensor-bern-rotations}
Let $s, m$ and $n$ be positive integers, let $A_1, A_2, \dots, A_s \in \mathbb{R}^{m \times n}$ be matrices with singular values less than or equal to $\lambda_1, \lambda_2, \dots, \lambda_s > 0$, respectively. Let $R_{\pr{rk}}$, $0 < q < p \le 1$ and $\mu$ be as in Lemma \ref{lem:5c}. Let $\mathcal{A}$ denote $\textsc{Tensor-Bern-Rotations}$ applied with parameters $0 < q < p \le 1$, output dimension $m$, matrix $A = A_1 \otimes A_2 \otimes \cdots \otimes A_s$ with singular value upper bound $\lambda = \lambda_1 \lambda_2 \cdots \lambda_s$ and mean parameter $\mu$. If $s$ is a constant, then $\mathcal{A}$ runs in $\textnormal{poly}(n, R_{\pr{rk}})$ time and it holds that for each $e \in [n]^s$,
\begin{align*}
\TV\left( \mathcal{A}\left( \pr{pb}_s(n, e, p, q) \right), \, \mN\left( \mu (\lambda_1\lambda_2 \cdots \lambda_s)^{-1} \cdot A_{\cdot, e_1} \otimes A_{\cdot, e_2} \otimes \cdots \otimes A_{\cdot, e_s}, I_m^{\otimes s} \right) \right) &= O\left( n^s \cdot R_{\pr{rk}}^{-3} \right) \\
\TV\left( \mathcal{A}\left( \textnormal{Bern}(q)^{\otimes n^{\otimes s}} \right), \, \mN\left( 0, I_m^{\otimes s} \right) \right) &= O\left( n^s \cdot R_{\pr{rk}}^{-3} \right)
\end{align*}
where $A_{\cdot, i}$ denotes the $i$th column of $A$.
\end{corollary}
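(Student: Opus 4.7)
The plan is to reduce the tensor statement directly to Lemma~\ref{lem:bern-rotations} by identifying the tensor $\pr{Tensor-Bern-Rotations}$ with an instance of $\pr{Bern-Rotations}$ applied to the flattened input. First I would observe that the procedure, by construction, flattens $T \in \{0,1\}^{n^{\otimes s}}$ into a vector $V_1 \in \{0,1\}^{n^s}$ via a fixed bijection $\phi : [n]^s \to [n^s]$, and that under $\phi$, the planted-bit tensor $\pr{pb}_s(n, e, p, q)$ corresponds exactly to the planted-bit vector $\pr{pb}(n^s, \phi(e), p, q)$, while $\textnormal{Bern}(q)^{\otimes n^{\otimes s}}$ corresponds to $\textnormal{Bern}(q)^{\otimes n^s}$.

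Next I would invoke the standard Kronecker product identities: the singular values of $A = A_1 \otimes A_2 \otimes \cdots \otimes A_s$ are exactly the products $\sigma_{i_1}(A_1) \sigma_{i_2}(A_2) \cdots \sigma_{i_s}(A_s)$ over all $s$-tuples, and so the maximum singular value is bounded by $\lambda_1 \lambda_2 \cdots \lambda_s = \lambda$; moreover the column of $A$ indexed by $\phi(e)$ is $A_{1, \cdot, e_1} \otimes A_{2, \cdot, e_2} \otimes \cdots \otimes A_{s, \cdot, e_s}$, which once reshaped back into an order-$s$ tensor via the inverse bijection $\phi^{-1}$ becomes the outer product tensor with this index structure. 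Applying Lemma~\ref{lem:bern-rotations} with dimensions $m^s \times n^s$, matrix $A$, singular value bound $\lambda$, and mean parameter $\mu$, yields
\begin{align*}
\TV\!\left( \pr{Bern-Rotations}(\pr{pb}(n^s, \phi(e), p, q)), \, \mN\!\left(\mu \lambda^{-1} A_{\cdot, \phi(e)}, I_{m^s}\right)\right) &= O\!\left(n^s R_{\pr{rk}}^{-3}\right), \\
\TV\!\left( \pr{Bern-Rotations}(\textnormal{Bern}(q)^{\otimes n^s}), \, \mN(0, I_{m^s}) \right) &= O\!\left(n^s R_{\pr{rk}}^{-3}\right).
\end{align*}
Both the input bijection $\phi$ and the output reshaping $\phi^{-1}$ are deterministic measurable bijections, so the total variation distance is preserved under them. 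This directly gives the two stated bounds.

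For the runtime, the dominant costs are forming the explicit $m^s \times n^s$ Kronecker product, computing the positive semidefinite square root of $I_{m^s} - \lambda^{-2} A A^\top$, and running $\pr{Gaussianize}$ on $n^s$ Bernoulli entries. All of these are polynomial in $n^s$ and $R_{\pr{rk}}$, and since $s$ is a constant, they are polynomial in $n$ and $R_{\pr{rk}}$ as required.

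I do not expect any genuine obstacle here: the content is entirely the Kronecker product identity for singular values and columns, together with the fact that a fixed invertible reshaping preserves total variation. The only care needed is to verify that the indexing convention used when flattening and unflattening is consistent with the Kronecker product convention in the identity $(A_1 \otimes \cdots \otimes A_s)_{\cdot, \phi(e)} = A_{1,\cdot,e_1} \otimes \cdots \otimes A_{s,\cdot,e_s}$; this is a matter of fixing $\phi$ to be the lexicographic (or Kronecker-standard) bijection, which the procedure is free to do.
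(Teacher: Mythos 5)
Your proposal is correct and follows the same route as the paper: both observe that the singular values of the Kronecker product $A_1 \otimes \cdots \otimes A_s$ are the products of the factors' singular values (hence bounded by $\lambda_1\cdots\lambda_s$), and then apply Lemma~\ref{lem:bern-rotations} with matrix $A$, input dimension $n^s$, and output dimension $m^s$. You spell out a few details the paper leaves implicit — the flattening bijection, the fact that columns of a Kronecker product are Kronecker products of columns, and TV-preservation under the reshaping — but there is no substantive difference in approach.
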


\begin{proof}
Let $\sigma_i^j$ for $1 \le i \le r_j$ be the nonzero singular values of $A_j$ for each $1 \le j \le s$. Then the nonzero singular values of the Kronecker product $A = A_1 \otimes A_2 \otimes \cdots \otimes A_s$ are all of the products $\sigma_{i_1}^1 \sigma_{i_2}^2 \cdots \sigma_{i_s}^s$ for all $(i_1, i_2, \dots, i_s)$ with $1 \le i_j \le r_j$ for each $1 \le j \le s$. Thus if $\sigma_i^j \le \lambda_j$ for each $1 \le j \le s$, then $\lambda = \lambda_1 \lambda_2 \cdots \lambda_s$ is an upper bound on the singular values of $A$. The corollary now follows by applying Lemma \ref{lem:bern-rotations} with parameters $p, q, \mu$ and $\lambda$, matrix $A$, output dimension $m^s$ and input dimension $n^s$.
\end{proof}

%

\subsection{$\mathbb{F}_r^t$ Design Matrices}
\label{subsec:2-design-matrices}


In this section, we introduce a family of matrices $K_{r, t}$ that plays a key role in constructing the matrices $A$ in our applications of dense Bernoulli rotations. Throughout this section, $r$ will denote a prime number and $t$ will denote a fixed positive integer. As outlined in the beginning of this section and in Section \ref{subsec:1-tech-design-matrices}, there are three desiderata of the matrices $K_{r, t}$ that are needed for our applications of dense Bernoulli rotations. In the context of $K_{r, t}$, these three properties are:
\begin{enumerate}
\item The rows of $K_{r, t}$ are unit vectors and close to orthogonal in the sense that the largest singular value of $K_{r, t}$ is bounded above by a constant.
\item The matrices $K_{r, t}$ both contain exactly two distinct real values as entries.
\item The matrices $K_{r, t}$ contain a fraction of approximately $1/r$ negative entries per column.
\end{enumerate}
The matrices $K_{r, t}$ are constructed based on the incidence structure of the points in $\mathbb{F}_r^t$ with the Grassmanian of hyperplanes in $\mathbb{F}_r^t$ and their affine shifts. The construction of $K_{r, t}$ is motivated by the projective geometry codes and their applications to constructing 2-block designs. We remark that a classic trick counting the number of ordered $d$-tuples of linearly independent vectors in $\mathbb{F}_r^t$ shows that the number of $d$-dimensional subspaces of $\mathbb{F}_r^t$ is
$$|\text{Gr}(d, \mathbb{F}_r^t)| = \frac{(r^t - 1)(r^t - r) \cdots (r^t - r^{d - 1})}{(r^d - 1)(r^d - r) \cdots (r^d - r^{d - 1})}$$
This implies that the number of hyperplanes in $\mathbb{F}_r^t$ is $\ell = \frac{r^t - 1}{r - 1}$. We now give the definition of the matrix $K_{r, t}$ as a weighted incidence matrix between the points of $\mathbb{F}_r^t$ and affine shifts of the hyperplanes in the Grassmanian $\text{Gr}(t - 1, \mathbb{F}_r^t)$.

\begin{definition}[Design Matrices $K_{r, t}$] \label{defn:Krt}
Let $P_1, P_2, \dots, P_{r^t}$ be an enumeration of the points in $\mathbb{F}_r^t$ and $V_1, V_2, \dots, V_\ell$, where $\ell = \frac{r^t - 1}{r - 1}$, be an enumeration of the hyperplanes in $\mathbb{F}_r^t$. For each $V_i$, let $u_i \neq 0$ denote a vector in $\mathbb{F}_r^t$ not contained in $V_i$. Define $K_{r, t} \in \mathbb{R}^{r\ell \times r^t}$ to be the matrix with the following entries
$$(K_{r, t})_{r(i-1) + a + 1, j} = \frac{1}{\sqrt{r^t(r - 1)}} \cdot \left\{ \begin{matrix} 1 & \textnormal{if } P_j \not \in V_i + au_i \\ 1 - r & \textnormal{if } P_j \in V_i + au_i \end{matrix} \right.$$
for each $a \in \{0, 1, \dots, r - 1\}$ where $V_i + v$ denotes the affine shift of $V_i$ by $v$.
\end{definition}

We now establish the key properties of $K_{r, t}$ in the following simple lemma. Note that the lemma implies that the submatrix consisting of the rows of $K_{r, t}$ corresponding to hyperplanes in $\mathbb{F}_r^t$ has rows that are exactly orthogonal. However, the additional rows of $K_{r, t}$ corresponding to affine shifts of these hyperplanes will prove crucial in preserving \emph{tightness to algorithms} in our average-case reductions. As established in the subsequent lemma, these additional rows only mildly perturb the largest singular value of the matrix.

\begin{lemma}[Sub-orthogonality of $K_{r,t}$] \label{lem:suborthogonalmatrices}
If $r \ge 2$ is prime, then $K_{r, t}$ satisfies that:
\begin{enumerate}
\item for each $1 \le i \le kr\ell$, it holds that $\|(K_{r, t})_i\|_2 = 1$;
\item the inner product between the rows $(K_{r, t})_i$ and $(K_{r, t})_j$ where $i \neq j$ are given by
$$\langle (K_{r, t})_i, (K_{r, t})_j \rangle = \left\{ \begin{array}{ll} -(r - 1)^{-1} & \textnormal{if } \lfloor (i-1)/r\rfloor = \lfloor (j-1)/r\rfloor \\ 0 & \textnormal{otherwise} \end{array} \right.$$
\item each column of $K_{r, t}$ contains exactly $\frac{r^t - 1}{r - 1}$ entries equal to $\frac{1 - r}{\sqrt{r^t(r - 1)}}$.
\end{enumerate}
\end{lemma}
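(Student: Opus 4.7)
The plan is to verify all three properties by direct computation, exploiting two combinatorial facts about the incidence geometry of $\mathbb{F}_r^t$: (i) for each hyperplane $V_i$, the $r$ affine shifts $\{V_i + au_i : a \in \mathbb{F}_r\}$ partition $\mathbb{F}_r^t$ into $r$ disjoint blocks of size $r^{t-1}$; and (ii) if $V_h \neq V_{h'}$ are two distinct hyperplanes, then $V_h \cap V_{h'}$ is a subspace of dimension $t-2$, so that for any $a, a' \in \mathbb{F}_r$ the affine intersection $(V_h + au_h) \cap (V_{h'} + a'u_{h'})$ has size exactly $r^{t-2}$ (this follows because these $r^2$ cosets partition $\mathbb{F}_r^t$ into equally sized pieces totalling $r^t$).

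For property 1, I would fix any row indexed by the pair $(V_i, a)$. By (i), exactly $r^{t-1}$ of the $r^t$ points $P_j$ lie in $V_i + au_i$, contributing entries $(1-r)/\sqrt{r^t(r-1)}$, while the remaining $r^{t-1}(r-1)$ points contribute entries $1/\sqrt{r^t(r-1)}$. The squared norm is then
\[
\frac{1}{r^t(r-1)}\Bigl[r^{t-1}(r-1)^2 + r^{t-1}(r-1)\Bigr] = \frac{r^{t-1}(r-1) \cdot r}{r^t(r-1)} = 1.
\]

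For property 2, I would split into the two cases described by the claim. In the \emph{same-hyperplane} case, where the two rows correspond to $(V_i, a)$ and $(V_i, b)$ with $a \neq b$, fact (i) guarantees that the two shifts $V_i + au_i$ and $V_i + bu_i$ are disjoint. Counting the three kinds of points (in the first shift only, in the second only, or in neither) gives $r^{t-1}$, $r^{t-1}$, and $r^t - 2r^{t-1}$ points respectively, so the inner product evaluates (after pulling out the common factor $1/(r^t(r-1))$) to $2r^{t-1}(1-r) + (r^t - 2r^{t-1}) = r^{t-1}(-r)$, giving $-1/(r-1)$ as required. In the \emph{distinct-hyperplane} case with rows $(V_h, a)$ and $(V_{h'}, a')$ for $h \neq h'$, I use fact (ii): the four regions (in both shifts, in exactly one, or in neither) have sizes $r^{t-2}$, $r^{t-2}(r-1)$, $r^{t-2}(r-1)$, and $r^{t-2}(r-1)^2$. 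Plugging in and using $1-r = -(r-1)$, the numerator becomes $r^{t-2}[(r-1)^2 - 2(r-1)^2 + (r-1)^2] = 0$, giving orthogonality.

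For property 3, the column indexed by $P_j$ picks up the value $(1-r)/\sqrt{r^t(r-1)}$ in row $(V_i, a)$ exactly when $P_j \in V_i + au_i$. By fact (i), for each hyperplane $V_i$ there is exactly one value of $a \in \mathbb{F}_r$ with $P_j \in V_i + au_i$. Summing over the $\ell = (r^t-1)/(r-1)$ hyperplanes yields exactly $\ell$ such entries per column. The only potential obstacle is getting the intersection count in case 2 correct, but this follows directly from the partition/pigeonhole argument in fact (ii) above, so no technical difficulty should arise. The entire proof is thus a routine verification once the two combinatorial facts are isolated.
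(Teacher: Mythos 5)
Your proof is correct and takes essentially the same approach as the paper: both compute row norms and inner products by counting how many of the $r^t$ points land in the relevant affine shifts, using the facts that the $r$ shifts of a single hyperplane partition $\mathbb{F}_r^t$ and that shifts of two distinct hyperplanes intersect in a set of size $r^{t-2}$. The paper phrases the inner-product computation via inclusion--exclusion on $|A_i \cup A_j|$ and $|A_i \cap A_j|$ whereas you enumerate the four Venn regions directly, but these are the same counting argument.
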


\begin{proof}
Let $r_i$ denote the $i$th row $(K_{r, t})_i$ of $K_{r, t}$. Fix a pair $1 \le i < j \le r\ell$ and let $1 \le i' \le j' \le \ell$ and $a, b \in \{0, 1, \dots, r - 1\}$ be such that $i = r(i' - 1) + a$ and $j = r(j' - 1) + b$. The affine subspaces of $\mathbb{F}_r^t$ corresponding to $r_i$ and $r_j$ are then $A_i = V_{i'} + au_{i'}$ and $A_j = V_{j'} + bu_{j'}$. Observe that
$$\| r_i \|_2^2 = (r^t - |A_i|) \cdot \frac{1}{r^t(r - 1)} + |A_i| \cdot \frac{(1 - r)^2}{r^t(r - 1)} = 1$$
Similarly, we have that
$$\langle r_i, r_j \rangle = (r^t - |A_i \cup A_j|) \cdot \frac{1}{r^t(r - 1)} + (|A_i \cup A_j| - |A_i \cap A_j|) \cdot \frac{1 - r}{r^t(r - 1)} + |A_i \cap A_j| \cdot \frac{(1 - r)^2}{r^t(r - 1)}$$
for each $1 \le i, j \le r\ell$. Since the size of a subspace is invariant under affine shifts, we have that $|A_i| = |V_{i'}| = |A_j| = |V_{j'}| = r^{t - 1}$. Furthermore, since $A_i \cap A_j$ is the intersection of two affine shifts of subspaces of dimension $t - 1$ of $\mathbb{F}_r^t$, it follows that $A_i \cap A_j$ is either empty, an affine shift of a $(t - 2)$-dimensional subspace or equal to both $A_i$ and $A_j$. Note that if $i \neq j$, then $A_i$ and $A_j$ are distinct. We remark that when $t = 1$, each $A_i$ is an affine shift of the trivial hyperplane $\{0\}$ and thus is a singleton. Now note that the intersection $A_i \cap A_j$ is only empty if $A_i$ and $A_j$ are affine shifts of one another which occurs if and only if $\lfloor (i-1)/r\rfloor = i' = j' = \lfloor (j-1)/r\rfloor$. In this case, it follows that $|A_i \cup A_j| = |A_i| + |A_j| = 2r^{t - 1}$. In this case, we have
\begin{align*}
\langle r_i, r_j \rangle &= (r^t - 2r^{t - 1}) \cdot \frac{1}{r^t(r - 1)} + 2r^{t - 1} \cdot \frac{1 - r}{r^t(r - 1)} = - (r - 1)^{-1}
\end{align*}
If $i' \neq j'$, then $A_i \cap A_j$ is the affine shift of a $(t - 2)$-dimensional subspace which implies that $|A_i \cap A_j| = r^{t - 1}$. Furthermore, $|A_i \cup A_j| = |A_i| + |A_j| - |A_i \cap A_j| = 2r^{t - 1} - r^{t - 2}$. In this case, we have that
$$\langle r_i, r_j \rangle = (r - 1)^2 \cdot \frac{1}{r^2(r - 1)} - 2(r - 1) \cdot \frac{1}{r^2} + \frac{r - 1}{r^2} = 0$$
This completes the proof of (2). We remark that this last case never occurs if $t = 1$. Now note that any point is in exactly one affine shift of each $V_i$. Therefore each column contains exactly $\ell$ negative entries, which proves (3).
\end{proof}

The next lemma uses the computation of $\langle (K_{r, t})_i, (K_{r, t})_j \rangle$ above to compute the singular values of $K_{r, t}$.

\begin{lemma} \label{lem:Krtsv}
The nonzero singular values of $K_{r, t}$ are $\sqrt{1 + (r - 1)^{-1}}$ with multiplicity $(r - 1)\ell$.
\end{lemma}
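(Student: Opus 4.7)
The plan is to diagonalize the Gram matrix $K_{r,t} K_{r,t}^\top$ using the inner product structure already established in Lemma \ref{lem:suborthogonalmatrices}, and then take square roots of its nonzero eigenvalues. I would first observe from parts (1) and (2) of that lemma that, after permuting the row indices to group the $r$ rows corresponding to the $r$ affine shifts of each hyperplane $V_{i'}$ together, the matrix $K_{r,t} K_{r,t}^\top$ is block-diagonal with $\ell$ identical blocks of size $r \times r$, each of the form
\[
B \;=\; I_r + (r-1)^{-1}(I_r - J_r) \;=\; \tfrac{r}{r-1}\, I_r - \tfrac{1}{r-1}\, J_r,
\]
where $J_r$ denotes the all-ones matrix. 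This uses that diagonal entries are $\|(K_{r,t})_i\|_2^2 = 1$, that inner products within a hyperplane-block equal $-(r-1)^{-1}$, and that inner products across distinct hyperplane-blocks vanish.

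Next I would diagonalize $B$ by reading off the spectrum of $J_r$: its eigenvalues are $r$ (with eigenvector $\mathbf{1}$) and $0$ (with multiplicity $r-1$ on $\mathbf{1}^\perp$). Consequently $B$ has eigenvalues $\tfrac{r}{r-1} - \tfrac{r}{r-1} = 0$ on the span of $\mathbf{1}$, and $\tfrac{r}{r-1} = 1 + (r-1)^{-1}$ on $\mathbf{1}^\perp$, i.e.\ with multiplicity $r-1$. Aggregating over the $\ell$ blocks, $K_{r,t} K_{r,t}^\top$ has eigenvalue $1 + (r-1)^{-1}$ with multiplicity $(r-1)\ell$ and eigenvalue $0$ with multiplicity $\ell$. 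Taking square roots of the nonzero eigenvalues yields that the nonzero singular values of $K_{r,t}$ are $\sqrt{1 + (r-1)^{-1}}$ with multiplicity $(r-1)\ell$, as claimed.

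No step is particularly delicate: once the block decomposition is extracted from Lemma \ref{lem:suborthogonalmatrices}, the rest is elementary linear algebra on $r \times r$ circulant-type matrices. The only minor bookkeeping point is to explicitly verify that the total multiplicity $(r-1)\ell + \ell = r\ell$ matches the number of rows of $K_{r,t}$, so that we have accounted for the full spectrum of $K_{r,t} K_{r,t}^\top$.
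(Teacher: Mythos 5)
Your proof is correct and follows essentially the same route as the paper: both extract the $\ell$ identical $r\times r$ blocks of $K_{r,t}K_{r,t}^\top$ from Lemma \ref{lem:suborthogonalmatrices}, identify each block as $\left(1+(r-1)^{-1}\right)I_r - (r-1)^{-1}\mathbf{1}\mathbf{1}^\top$, and read off eigenvalues $1+(r-1)^{-1}$ (multiplicity $r-1$) and $0$ (multiplicity $1$). Your closing sanity check on the total multiplicity is a nice touch but not otherwise a departure from the paper's argument.
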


\begin{proof}
Lemma \ref{lem:suborthogonalmatrices} shows that $(K_{r, t})(K_{r, t})^\top$ is block-diagonal with $\ell$ blocks of dimension $r \times r$. Furthermore, each block is of the form $\left(1 + (r - 1)^{-1} \right) I_r - (r - 1)^{-1} \mathbf{1} \mathbf{1}^\top$. The eigenvalues of each of these blocks are $1 + (r - 1)^{-1}$ with multiplicity $r - 1$ and $0$ with multiplicity $1$. Thus the eigenvalues of $(K_{r, t})(K_{r, t})^\top$ are $1 + (r - 1)^{-1}$ and $0$ with multiplicities $(r - 1)\ell$ and $\ell$, respectively, implying the result.
\end{proof}

\subsection{$\mathbb{F}_r^t$ Design Tensors}
\label{subsec:2-design-tensors}

In this section, we introduce a family of tensors $T_{r, t}^{(V_i, V_j, L)}$ that will be used in $\pr{Tensor-Bern-Rotations}$ in the matrix case with $s = 2$ to map to hidden partition models in Section \ref{sec:3-hidden-partition}. An overview of how these tensors will be used in dense Bernoulli rotations was given in Section \ref{subsec:1-tech-design-matrices}. Similar to the previous section, the $T_{r, t}^{(V_i, V_j, L)}$ are constructed to have the following properties:
\begin{enumerate}
\item Given a pair of hyperplanes $(V_i, V_j)$ and a linear function $L : \mathbb{F}_r \to \mathbb{F}_r$, the slice $T_{r, t}^{(V_i, V_j, L)}$ of the constructed tensor is an $r^t \times r^t$ matrix with Frobenius norm $\left\| T_{r, t}^{(V_i, V_j, L)} \right\|_F = 1$.
\item These slices are approximately orthogonal in the sense that the Gram matrix with entries given by the matrix inner products $\text{Tr}\left( T_{r, t}^{(V_i, V_j, L)} \cdot T_{r, t}^{(V_{i'}, V_{j'}, L')} \right)$ has a bounded spectral norm.
\item Each slice $T_{r, t}^{(V_i, V_j, L)}$ contains two distinct entries and is an average signed adjacency matrix of a hidden partition model i.e. has these two entries arranged into an $r$-block community structure.
\item Matrices formed by specific concatentations of $T_{r, t}^{(V_i, V_j, L)}$ into larger matrices remain the average signed adjacency matrices of hidden partition models. This will be made precise in Lemma \ref{lem:comm-align-tensors} and will be important in our reduction from $k\pr{-pc}$.
\end{enumerate}
The construction of the family of tensors $T_{r, t}^{(V_i, V_j, L)}$ is another construction using the incidence geometry of $\mathbb{F}_r^t$, but is more involved than the two constructions in the previous section. Throughout this section, we let $V_1, V_2, \dots, V_\ell$ and $P_1, P_2, \dots, P_{r^t}$ be an enumeration of the hyperplanes and points of $\mathbb{F}_r^t$ as in Definition \ref{defn:Krt}. Furthermore, for each $V_i$, we fix a particular point $u_i \neq 0$ of $\mathbb{F}_r^t$ not contained in $V_i$. In order to introduce the family $T_{r, t}^{(V_i, V_j, L)}$, we first define the following important class of bipartite graphs.

\begin{definition}[Affine Block Graphs $G_{r, t}$] \label{defn:Grt}
For each $1 \le i \le \ell$, let $A^i_0 \cup A_1^i \cup \cdots \cup A_{r - 1}^i$ be the partition of $\mathbb{F}_r^t$ given by the affine shifts $A^i_x = (V_i + xu_i)$ for each $x \in \mathbb{F}_r$. Given two hyperplanes $V_i, V_j$ and linear function $L : \mathbb{F}_r \to \mathbb{F}_r$, define the bipartite graph $G_{r, t}(V_i, V_j, L)$ with two parts of size $r^t$, each indexed by points in $\mathbb{F}_r^t$, as follows:
\begin{enumerate}
\item all of the edges between the points with indices in $A^i_x$ in the left part of $G_{r, t}(V_i, V_j, L)$ and the points with indices in $A^j_y$ in the right part are present if $L(x) = y$; and
\item none of the edges between the points of $A^i_x$ on the left and $A^j_y$ on the right are present if $L(x) \neq y$.
\end{enumerate}
\end{definition}

We now define the slices of the tensor $T_{r, t}$ to be weighted adjacency matrices of the bipartite graphs $G_{r, t}(V_i, V_j, L)$ as in the following definition.

\begin{definition}[Design Tensors $T_{r, t}$] \label{defn:Trt}
For any two hyperplanes $V_i, V_j$ and linear function $L : \mathbb{F}_r \to \mathbb{F}_r$, define the $r^t \times r^t$ matrix $T_{r, t}^{(V_i, V_j, L)}$ to have entries given by
$$\left( T_{r, t}^{(V_i, V_j, L)} \right)_{k,l} = \frac{1}{r^t \sqrt{r - 1}} \cdot \left\{ \begin{array}{ll} r - 1 & \textnormal{if } ( P_k, P_l ) \in E\left( G_{r, t}(V_i, V_j, L) \right) \\ -1 & \textnormal{otherwise} \end{array} \right.$$
for each $1 \le k, l \le r^t$.
\end{definition}

The next two lemmas establish that the tensor $T_{r, t}$ satisfies the four desiderata discussed above, which will be crucial in our reduction to hidden partition models.

\begin{lemma}[Sub-orthogonality of $T_{r,t}$] \label{lem:suborthogonaltensors}
If $r \ge 2$ is prime, then $T_{r, t}$ satisfies that:
\begin{enumerate}
\item for each $1 \le i, j \le r^t$ and linear function $L$, it holds that $\left\| T_{r, t}^{(V_i, V_j, L)} \right\|_F = 1$;
\item the inner product between the slices $T_{r, t}^{(V_i, V_j, L)}$ and $T_{r, t}^{(V_{i'}, V_{j'}, L')}$ where $(V_i, V_j, L) \neq (V_{i'}, V_{j'}, L')$ is
$$\textnormal{Tr}\left( T_{r, t}^{(V_i, V_j, L)} \cdot T_{r, t}^{(V_{i'}, V_{j'}, L')} \right) = \left\{ \begin{array}{ll} -(r - 1)^{-1} & \textnormal{if } (V_i, V_j) = (V_{i'}, V_{j'}) \textnormal{ and } L = L' + a \textnormal{ for some } a \neq 0 \\ 0 & \textnormal{if } (V_i, V_j) \neq (V_{i'}, V_{j'}) \textnormal{ or } L \neq L' + a \textnormal{ for all } a \in \mathbb{F}_r \end{array} \right.$$
\end{enumerate}
\end{lemma}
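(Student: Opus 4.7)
The plan is to reduce both parts of the lemma to counting edges in the bipartite graphs $G_{r,t}(V_i, V_j, L)$. Since each slice $T_{r,t}^{(V_i,V_j,L)}$ has only two distinct entries depending on whether $(P_k,P_l) \in E(G_{r,t}(V_i,V_j,L))$, interpreting the ``matrix inner product'' as the Frobenius inner product (consistent with Part 1 since $\|T\|_F^2 = \sum_{k,l} T_{k,l}^2$), the inner product $\langle T_{r,t}^{(V_i,V_j,L)}, T_{r,t}^{(V_{i'},V_{j'},L')} \rangle_F$ simplifies, after a short algebraic manipulation using $|E \Delta E'| = |E|+|E'|-2|E\cap E'|$, to
\[
\frac{1}{r^{2t}(r-1)}\Bigl[r^{2}|E\cap E'| - r(|E|+|E'|) + r^{2t}\Bigr],
\]
where $E, E'$ are the edge sets of the corresponding bipartite graphs. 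Both parts of the lemma then reduce to simple edge counts.

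For Part 1, observe that $|E(G_{r,t}(V_i,V_j,L))|$ equals $\sum_{x \in \mathbb{F}_r} |A_x^i|\cdot|A_{L(x)}^j| = r \cdot (r^{t-1})^2 = r^{2t-1}$. Substituting $|E| = |E'| = r^{2t-1}$ and $|E \cap E'| = |E| = r^{2t-1}$ (for the self-inner product) into the formula above gives $1$, establishing $\|T_{r,t}^{(V_i,V_j,L)}\|_F = 1$.

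For Part 2, the formula above shows that the inner product vanishes precisely when $|E\cap E'| = r^{2t-2}$, and equals $-(r-1)^{-1}$ precisely when $|E\cap E'| = 0$. So it suffices to compute $|E\cap E'|$ in each case. When $(V_i,V_j) = (V_{i'},V_{j'})$ and $L = L'+a$ with $a \neq 0$, for any $x \in \mathbb{F}_r$ the affine shifts $A_{L(x)}^j$ and $A_{L'(x)}^j$ are distinct and therefore disjoint, so no edges are shared and $|E\cap E'| = 0$. When $(V_i,V_j) = (V_{i'},V_{j'})$ and $L,L'$ have distinct slopes (so $L-L'$ is nonconstant), the equation $L(x) = L'(x)$ has a unique solution $x^\ast \in \mathbb{F}_r$, contributing $|A_{x^\ast}^i|\cdot|A_{L(x^\ast)}^j| = r^{2t-2}$ shared edges.

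The remaining case, where $V_i \neq V_{i'}$ or $V_j \neq V_{j'}$, is the main obstacle and requires a slightly more careful incidence-geometry argument. The key fact is that two distinct hyperplanes of $\mathbb{F}_r^t$ necessarily intersect in a $(t-2)$-dimensional subspace, so for $V_i \neq V_{i'}$ the affine shifts satisfy $|A_x^i \cap A_{x'}^{i'}| = r^{t-2}$ for every pair $(x,x') \in \mathbb{F}_r^2$ (since their sum is the full space, the affine shifts always meet). Writing $|E\cap E'| = \sum_{x,x'} |A_x^i \cap A_{x'}^{i'}|\cdot|A_{L(x)}^j \cap A_{L'(x')}^{j'}|$ and splitting into subcases based on whether $V_j = V_{j'}$: if both pairs of hyperplanes differ, each factor is $r^{t-2}$ and the $r^2$ choices of $(x,x')$ give $r^{2t-2}$; if only one pair differs, use that nonconstant linear $L,L'$ are bijections on $\mathbb{F}_r$, so $|\{(x,x') : L(x)=L'(x')\}| = r$, and the product of factors still yields $r^{2t-2}$. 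In every subcase $|E\cap E'| = r^{2t-2}$, producing the claimed vanishing inner product, and the case partition exhausts the condition $(V_i,V_j,L) \neq (V_{i'},V_{j'},L')$ since any two nonconstant affine functions on $\mathbb{F}_r$ with the same slope differ by a nonzero constant.
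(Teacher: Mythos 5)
Your proof is correct and follows essentially the same approach as the paper's: both reduce the inner products to counting $|E(G_1)\cap E(G_2)|$ via the decomposition $\sum_{x,x'}|A_x^i\cap A_{x'}^{i'}|\cdot|A_{L(x)}^j\cap A_{L'(x')}^{j'}|$ and the incidence-geometry facts that distinct hyperplanes meet in codimension $2$ and that the affine maps $L$ are bijections. Your interpretation of the stated $\mathrm{Tr}(T\cdot T')$ as the Frobenius inner product $\sum_{k,l}T_{kl}T'_{kl}$ is the right one and matches what the paper actually computes; the one minor thing the paper notes and you omit is that the subcase $V_i\neq V_{i'}$ or $V_j\neq V_{j'}$ is vacuous when $t=1$, though this does not affect the argument.
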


\begin{proof}
Fix two triples $(V_i, V_j, L)$ and $(V_{i'}, V_{j'}, L')$ and let $G_1 = G_{r, t}(V_i, V_j, L)$ and $G_2 = G_{r, t}(V_{i'}, V_{j'}, L')$. Now observe that
\begin{align}
\textnormal{Tr}\left( T_{r, t}^{(V_i, V_j, L)} \cdot T_{r, t}^{(V_{i'}, V_{j'}, L')} \right) &= \frac{1}{r^{2t}(r - 1)} \cdot (r - 1)^2 \cdot |E(G_1) \cap E(G_2)| \nonumber \\
&\quad \quad - \frac{1}{r^{2t}(r - 1)} \cdot (r - 1) \cdot \left( |E(G_1) \cup E(G_2)| - |E(G_1) \cap E(G_2)| \right) \nonumber \\
&\quad \quad + \frac{1}{r^{2t}(r - 1)} \cdot \left( r^{2t} - |E(G_1) \cup E(G_2)| \right) \label{eqn:inner-matrix}
\end{align}
Now note that since $L$ is a function, there are exactly $r$ pairs $(x, y) \in \mathbb{F}_r^2$ such that $L(x) = y$ and thus exactly $r$ pairs of left and right sets $(A^i_x, A^j_y)$ that are completely connected by edges in $G_1$. This implies that there are $|E(G_1)| = |E(G_2)| = r^{2t - 1}$ edges in both $G_1$ and $G_2$. We now will show that
\begin{equation} \label{eqn:int-sizes}
|E(G_1) \cap E(G_2)| = \left\{ \begin{array}{ll} r^{2t - 1} & \textnormal{if } (V_i, V_j, L) = (V_{i'}, V_{j'}, L') \\ r^{2t - 2} & \textnormal{if } (V_i, V_j) \neq (V_{i'}, V_{j'}) \textnormal{ or } L \neq L' + a \textnormal{ for all } a \in \mathbb{F}_r \\ 0 & \textnormal{if } (V_i, V_j) = (V_{i'}, V_{j'}) \textnormal{ and } L = L' + a \textnormal{ for some } a \neq 0 \end{array} \right.
\end{equation}
We remark that, as in the proof of Lemma \ref{lem:suborthogonalmatrices}, it is never true that $(V_i, V_j) \neq (V_{i'}, V_{j'})$ if $t = 1$. The first case follows immediately from the fact that $|E(G_1)| = r^{2t - 1}$. Now consider the case in which $V_i \neq V_{i'}$ and $V_j \neq V_{j'}$. As in the proof of Lemma \ref{lem:suborthogonalmatrices}, any pair of affine spaces $A^{i}_x$ and $A^{i'}_{x'}$ either intersects in an affine space of dimension $t - 2$, an affine space of dimension $t - 1$ if $A^{i}_x = A^{i'}_{x'}$ are equal and in the empty set if $A^{i}_x$ and $A^{i'}_{x'}$ are affine shifts of one another. Since $V_i \neq V_{i'}$, only the first of these three options is possible. Therefore, for all $x, x', y, y' \in \mathbb{F}_r$, it follows that $(A_x^i \times A_y^j) \cap (A_{x'}^{i'} \times A_{y'}^{j'}) = (A_x^i \cap A_{x'}^{i'}) \times (A_y^j \times A_{y'}^{j'})$ has size $r^{2t - 4}$ since both $A_x^i \cap A_{x'}^{i'}$ and $A_y^j \times A_{y'}^{j'}$ are affine spaces of dimension $t - 2$. Now observe that
$$|E(G_1) \cap E(G_2)| = \sum_{L(x) = y} \sum_{L'(x') = y'} \left| \left(A_x^i \times A_y^j\right) \cap \left(A_{x'}^{i'} \times A_{y'}^{j'}\right) \right| = r^2 \cdot r^{2t - 4} = r^{2t - 2}$$
since there are exactly $r$ pairs $(x, y)$ with $L(x) = y$. Now suppose that $V_i = V_{i'}$ and $V_j \neq V_{j'}$. In this case, we have that $A_x^i \cap A_{x'}^{i'}$ is empty if $x \neq x'$ and otherwise has size $|A_x^i| = r^{t - 1}$. Thus it follows that
$$\left| \left(A_x^i \times A_y^j\right) \cap \left(A_{x'}^{i'} \times A_{y'}^{j'}\right) \right| = \left\{ \begin{array}{ll} r^{2t - 3} & \textnormal{if } x = x' \\ 0 &\textnormal{otherwise} \end{array} \right.$$
This implies that
$$|E(G_1) \cap E(G_2)| = \sum_{L(x) = y} \sum_{L'(x') = y'} \left| \left(A_x^i \times A_y^j\right) \cap \left(A_{x'}^{i'} \times A_{y'}^{j'}\right) \right| = r \cdot r^{2t - 3} = r^{2t - 2}$$
since for each fixed $x = x'$, there is a unique pair $(y, y')$ with $L(x) = y$ and $L(x') = y'$. The case in which $V_i \neq V_{i'}$ and $V_j = V_{j'}$ is handled by a symmetric argument. Now suppose that $(V_i, V_j) = (V_{i'}, V_{j'})$. It follows that $(A_x^i \times A_y^j) \cap (A_{x'}^{i'} \times A_{y'}^{j'})$ has size $r^{2t - 2}$ if $x = x'$ and $y = y'$, and is empty otherwise. The formula above therefore implies that $|E(G_1) \cap E(G_2)|$ is $r^{2t - 2}$ times the number of solutions to $L(x) = L'(x)$. Since $L - L'$ is linear, the number of solutions is $0$ if $L - L'$ is constant and not equal to zero, $1$ if $L - L'$ is not constant or $r$ if $L = L'$. This completes the proof of Equation (\ref{eqn:int-sizes}). Now observe that $|E(G_1) \cup E(G_2)| = |E(G_1)| + |E(G_2)| - |E(G_1) \cap E(G_2)| = 2r^{2t - 1} - |E(G_1) \cap E(G_2)|$. Substituting this expression for $|E(G_1) \cup E(G_2)|$ into Equation (\ref{eqn:inner-matrix}) yields that
$$\textnormal{Tr}\left( T_{r, t}^{(V_i, V_j, L)} \cdot T_{r, t}^{(V_{i'}, V_{j'}, L')} \right) = \frac{r^2}{r^{2t}(r - 1)} \cdot |E(G_1) \cap E(G_2)| - \frac{1}{r - 1}$$
Combining this with the different cases of Equation (\ref{eqn:int-sizes}) shows part (2) of the lemma. Part (1) of the lemma follows from this computation and fact that
$$\left\| T_{r, t}^{(V_i, V_j, L)} \right\|_F^2 = \textnormal{Tr}\left( \left( T_{r, t}^{(V_i, V_j, L)} \right)^2 \right)$$
This completes the proof of the lemma.
\end{proof}

We now define an unfolded matrix variant of the tensor $T_{r, t}$ that will be used in our applications of $\pr{Tensor-Bern-Rotations}$ to map to hidden partition models. The row indexing in $M_{r, t}$ will be important and related to the community alignment property of $T_{r, t}$ that will be established in Lemma \ref{lem:comm-align-tensors}.

\begin{definition}[Unfolded Matrix $M_{r,t}$] \label{defn:unfolded-Trt}
Let $M_{r, t}$ be an $(r - 1)^2 \ell^2 \times r^{2t}$ matrix with entries given by
$$\left( M_{r, t} \right)_{a(r - 1)\ell^2 + i'(r - 1)\ell + b\ell + j' + 1, ir^{t} + j + 1} = \left( T_{r, t}^{(V_{i' + 1}, V_{j' + 1}, L_{a + 1,b + 1})} \right)_{i, j}$$
for each $0 \le i', j' \le (r - 1)\ell - 1$, $0 \le a, b \le r - 2$ and $0 \le i, j \le r^t - 1$, where $L_{c, d} : \mathbb{F}_r \to \mathbb{F}_r$ denotes the linear function given by $L_{c, d}(x) = cx + d$.
\end{definition}

The next lemma is similar to Lemma \ref{lem:Krtsv} and deduces the singular values of $M_{r, t}$ from Lemma \ref{lem:suborthogonaltensors}. The proof is very similar to that of Lemma \ref{lem:Krtsv}.

\begin{lemma}[Singular Values of $M_{r, t}$] \label{lem:Mrtsv}
The nonzero singular values of $M_{r, t}$ are $\sqrt{1 + (r - 1)^{-1}}$ with multiplicity $(r - 1)(r - 2)\ell^2$ and $(r - 1)^{-1/2}$ with multiplicity $(r - 1)\ell^2$.
\end{lemma}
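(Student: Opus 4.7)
The plan is to mirror the proof of Lemma \ref{lem:Krtsv} by computing $M_{r,t} M_{r,t}^\top$ explicitly using the inner product formula from Lemma \ref{lem:suborthogonaltensors}. Each row of $M_{r,t}$ is, by Definition \ref{defn:unfolded-Trt}, the vectorization of some slice $T_{r,t}^{(V_{i'+1}, V_{j'+1}, L_{a+1, b+1})}$, so the $(\text{row}, \text{row}')$ entry of $M_{r,t} M_{r,t}^\top$ is exactly the matrix inner product $\operatorname{Tr}\bigl(T_{r,t}^{(V_i, V_j, L)} \cdot T_{r,t}^{(V_{i'}, V_{j'}, L')}\bigr)$ evaluated on the two corresponding triples. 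By part (1) of Lemma \ref{lem:suborthogonaltensors} the diagonal entries all equal $1$, and by part (2) the off-diagonal entry is $-(r-1)^{-1}$ when the two triples share the same hyperplane pair $(V_i, V_j) = (V_{i'}, V_{j'})$ and $L - L'$ is a nonzero constant function, and $0$ otherwise.

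Next I would read off the block structure. Grouping the rows first by the pair $(V_{i'+1}, V_{j'+1})$ and then, within such a block, by the slope parameter $a+1$ of $L_{a+1, b+1}(x) = (a+1)x + (b+1)$, the condition ``$L - L'$ is a nonzero constant'' becomes exactly ``same slope, different intercept''. Hence $M_{r,t} M_{r,t}^\top$ is block diagonal with $\ell^2 \cdot (r-1)$ identical blocks, one for each choice of $(i', j', a)$, and each block is an $(r-1) \times (r-1)$ matrix $B$ of the form
\[
B \;=\; \bigl(1 + (r-1)^{-1}\bigr) I_{r-1} \;-\; (r-1)^{-1}\, \mathbf{1}\mathbf{1}^\top,
\]
whose diagonal entries are $1$ and whose off-diagonal entries are $-(r-1)^{-1}$, exactly as in Lemma \ref{lem:Krtsv}.

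Finally, I would diagonalize $B$. Since $B = \alpha I + \beta \mathbf{1}\mathbf{1}^\top$ with $\alpha = 1 + (r-1)^{-1}$ and $\beta = -(r-1)^{-1}$, its eigenvalues are $\alpha + (r-1)\beta = (r-1)^{-1}$ with eigenvector $\mathbf{1}$ and $\alpha = 1 + (r-1)^{-1}$ with multiplicity $r-2$. Summing over all $\ell^2(r-1)$ blocks, the nonzero eigenvalues of $M_{r,t} M_{r,t}^\top$ are $1 + (r-1)^{-1}$ with multiplicity $(r-1)(r-2)\ell^2$ and $(r-1)^{-1}$ with multiplicity $(r-1)\ell^2$; taking square roots gives the claimed singular values. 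Since every step is an explicit computation justified by Lemma \ref{lem:suborthogonaltensors}, the only real task is careful bookkeeping of the index map in Definition \ref{defn:unfolded-Trt} to verify the block decomposition; there is no substantive analytic obstacle.
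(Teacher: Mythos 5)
Your proposal is correct and follows the paper's own proof essentially verbatim: both vectorize slices to identify $M_{r,t}M_{r,t}^\top$ with a Gram matrix of the slices, invoke Lemma \ref{lem:suborthogonaltensors} to read off a block-diagonal structure with $(r-1)\ell^2$ identical blocks of the form $(1+(r-1)^{-1})I_{r-1} - (r-1)^{-1}\mathbf{1}\mathbf{1}^\top$, and then diagonalize each block. The only difference is that you spell out the ``same slope, different intercept'' interpretation of the condition $L = L' + a$, which the paper leaves implicit.
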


\begin{proof}
Observe that the rows of $M_{r, t}$ are formed by vectorizing the slices of $T_{r, t}$. Thus Lemma \ref{lem:suborthogonaltensors} implies that $(M_{r, t})(M_{r, t})^\top$ is block-diagonal with $(r - 1)\ell^2$ blocks of dimension $(r - 1) \times (r - 1)$, where each block corresponds to slices with indices $(V_i, V_j, L_{c, d})$ where $i, j$ and $c$ are fixed on over each block while $d$ ranges over $\{1, 2, \dots, r - 1\}$. Furthermore, each block is of the form $\left(1 + (r - 1)^{-1} \right) I_{r - 1} - (r - 1)^{-1} \mathbf{1} \mathbf{1}^\top$. The eigenvalues of each of these blocks are $1 + (r - 1)^{-1}$ with multiplicity $r - 2$ and $(r - 1)^{-1}$ with multiplicity $1$. Thus the eigenvalues of $(M_{r, t})(M_{r, t})^\top$ are $1 + (r - 1)^{-1}$ and $(r - 1)^{-1}$ with multiplicities $(r - 1)(r - 2)\ell^2$ and $(r - 1)\ell^2$, respectively, which implies the result.
\end{proof}

Given $m^2$ matrices $M^{1,1}, M^{1,2}, \dots, M^{k,k} \in \mathbb{R}^{n \times n}$, let $\mathcal{C}\left(M^{1,1}, M^{1,2}, \dots, M^{k,k}\right)$ denote the matrix $X \in \mathbb{R}^{kn \times kn}$ formed by concatenating the $M^{i,j}$ with
$$X_{an + b + 1, cn + d + 1} = M^{a + 1, c + 1}_{b+1, d+1} \quad \text{for all } 0 \le a, c \le k - 1 \text{ and } 0 \le b, d \le n - 1$$
We refer to a matrix $M \in \mathbb{R}^{n \times n}$ as a $k$-block matrix for some $k$ that divides $n$ if there are two values $x_1, x_2 \in \mathbb{R}$ and two partitions $[n] = E_1 \cup E_2 \cup \cdots \cup E_k = F_1 \cup F_2 \cup \cdots \cup F_k$ both into parts of size $n/k$ such that
$$M_{ij} = \left\{ \begin{array}{ll} x_1 & \text{if } (i, j) \in E_h \times F_h \text{ for some } 1 \le h \le k \\ x_2 &\text{otherwise} \end{array} \right.$$
The next lemma shows an alignment property of different slices of $T_{r, t}$ that will be crucial in stitching together the local applications of $\pr{Tensor-Bern-Rotations}$ with $M_{r, t}$ in our reduction to hidden partition models. This lemma will use indexing the in $M_{r, t}$ and the role of linear functions $L$ in defining the affine block graphs $G_{r, t}$.

\begin{lemma}[Community Alignment in $T_{r, t}$] \label{lem:comm-align-tensors}
Let $1 \le s_1, s_2, \dots, s_k \le (r - 1)\ell$ be arbitrary indices and
$$M^{i, j} = T_{r, t}^{(V_{i'}, V_{j'}, L)} \quad \textnormal{for each } 1 \le i, j \le k$$
where $i'$ and $j'$ are the unique $1 \le i', j' \le \ell$ such that $i' \equiv s_i \pmod{\ell}$ and $j' \equiv s_j \pmod{\ell}$ and $L(x) = ax + b$ where $a = \lceil s_i/\ell \rceil$ and $b = \lceil s_j/\ell \rceil$. Then it follows that $\mathcal{C}\left(M^{1,1}, M^{1,2}, \dots, M^{k,k}\right)$ is an $r$-block matrix.
\end{lemma}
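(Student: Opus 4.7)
The plan is to exhibit explicit partitions of the row and column indices of the concatenated matrix $\mathcal{C}(M^{1,1},\ldots,M^{k,k})$ into $r$ equal parts each, such that a position takes the in-community value exactly when its row and column lie in correspondingly indexed parts. Throughout, I index rows and columns of the concatenated matrix by pairs $(a,p)$ with $a \in [k]$ and $p \in [r^t]$, where $a$ denotes the block and $p$ the position within the block.

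First, I would unpack Definitions~\ref{defn:Grt} and~\ref{defn:Trt}. Write $s_a = (\alpha_a - 1)\ell + i'_a$ with $\alpha_a = \lceil s_a/\ell \rceil \in \{1,\ldots,r-1\}$ and $1 \le i'_a \le \ell$, so that $M^{a,c} = T_{r,t}^{(V_{i'_a},\, V_{i'_c},\, L_{\alpha_a,\alpha_c})}$ with $L_{\alpha_a,\alpha_c}(x) = \alpha_a x + \alpha_c$. For each $a \in [k]$, define $\chi_a : [r^t] \to \mathbb{F}_r$ by letting $\chi_a(p)$ be the unique $x \in \mathbb{F}_r$ with $P_p \in V_{i'_a} + x\,u_{i'_a}$; this is well defined since the affine shifts of $V_{i'_a}$ partition $\mathbb{F}_r^t$. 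By Definitions~\ref{defn:Grt} and~\ref{defn:Trt}, the $(p,q)$-entry of $M^{a,c}$ equals the larger value $\tfrac{r-1}{r^t\sqrt{r-1}}$ precisely when $\chi_c(q) = \alpha_a \chi_a(p) + \alpha_c$, and equals the smaller value $-\tfrac{1}{r^t\sqrt{r-1}}$ otherwise.

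The crucial algebraic observation is to rewrite the condition $\chi_c(q) = \alpha_a \chi_a(p) + \alpha_c$ as
$$\alpha_a\, \chi_a(p) \;=\; \chi_c(q) - \alpha_c \qquad \text{in } \mathbb{F}_r,$$
whose left side depends only on the row index $(a,p)$ and whose right side depends only on the column index $(c,q)$. This is precisely the feature that the separable family $L_{\alpha,\beta}(x) = \alpha x + \beta$ in Definition~\ref{defn:unfolded-Trt} was designed to enable. I therefore define the row and column colorings $\Phi_R(a,p) := \alpha_a \chi_a(p) \in \mathbb{F}_r$ and $\Phi_C(c,q) := \chi_c(q) - \alpha_c \in \mathbb{F}_r$, and set $E_v := \Phi_R^{-1}(v)$ and $F_v := \Phi_C^{-1}(v)$ for each $v \in \mathbb{F}_r$. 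By the rewriting above, the $((a,p),(c,q))$-entry of $\mathcal{C}(M^{1,1},\ldots,M^{k,k})$ takes the larger value if and only if $(a,p) \in E_v$ and $(c,q) \in F_v$ for the same $v \in \mathbb{F}_r$.

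It remains to verify that $\{E_v\}_{v\in\mathbb{F}_r}$ and $\{F_v\}_{v\in\mathbb{F}_r}$ partition $[k] \times [r^t]$ into $r$ parts of equal size $kr^{t-1}$. For each fixed $a$, since $r$ is prime and $\alpha_a \in \{1,\ldots,r-1\}$ is nonzero in $\mathbb{F}_r$, the map $p \mapsto \alpha_a\chi_a(p)$ hits each $v \in \mathbb{F}_r$ exactly $|\chi_a^{-1}(\alpha_a^{-1} v)| = r^{t-1}$ times, because every affine shift of $V_{i'_a}$ has cardinality $r^{t-1}$. Summing over $a \in [k]$ gives $|E_v| = kr^{t-1}$; the same count applied to $q \mapsto \chi_c(q) - \alpha_c$ gives $|F_v| = kr^{t-1}$. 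No step should present a serious obstacle; the entire argument reduces to noticing that the separable form of $L_{\alpha_a,\alpha_c}$ lets the block-membership condition factor into independent row and column conditions, which is exactly the reason Definition~\ref{defn:unfolded-Trt} enumerates slices by such affine linear functions in the first place.
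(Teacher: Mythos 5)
Your proposal is correct and follows essentially the same route as the paper's proof: both identify the $r$ row parts as level sets of $\alpha_a\chi_a(p)$ and the $r$ column parts as level sets of $\chi_c(q)-\alpha_c$, and your partitions $E_v, F_v$ coincide with the paper's $E_i, F_i$ after the relabeling $v \mapsto \alpha_1 i$ (the paper parameterizes via $x_{ij}=L_{j1}^{-1}(L_{11}(i))$ and $y_{ij}=L_{1j}(i)$, which unwind to the same conditions). Making the separability of $\chi_c(q)=\alpha_a\chi_a(p)+\alpha_c$ explicit is a slightly cleaner presentation of the same idea, and your direct count of $|E_v|=|F_v|=kr^{t-1}$ replaces the paper's indirect counting argument but establishes the same fact.
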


\begin{proof}
Let $t_i = i'$ be the unique $1 \le i' \le \ell$ such that $i' \equiv s_i \pmod{\ell}$ and let $a_i = \lceil s_i/\ell \rceil \in \{1, 2, \dots,  r - 1\}$ for each $1 \le i \le \ell$. Furthermore, let $L_{ij}(x) = a_i x + a_j$ for $1 \le i, j \le k$ and, for each $x \in \mathbb{R}$ and $1 \le i \le \ell$, let $A^i_x$ be the affine spaces as in Definition \ref{defn:Grt}. Note that since $0 < a_i < r$, it follows that each $L_{ij}$ is a non-constant and hence invertible linear function. Given a subset $S \subseteq \mathbb{F}_r^t$ and some $s \in \mathbb{N}$, let $I(s, S)$ denote the set of indices $I(s, S) = \{ s + i : P_i \in S\}$.

Now define the partition $[kr^t] = E_0 \cup E_2 \cup \cdots \cup E_{r-1}$ as follows
$$E_i = \bigcup_{j = 1}^k I\left((j - 1)r^t, A^{t_j}_{x_{ij}}\right) \quad \text{where } x_{ij} = L_{j1}^{-1}(L_{11}(i))$$
and similarly define the partition $[kr^t] = F_0 \cup F_2 \cup \cdots \cup F_{r-1}$ as follows
$$F_i = \bigcup_{j = 1}^k I\left((j - 1)r^t, A^{t_j}_{y_{ij}}\right) \quad \text{where } y_{ij} = L_{1j}(i)$$
Let $X \in \mathbb{R}^{kr^t \times kr^t}$ denote the matrix $X = \mathcal{C}\left(M^{1,1}, M^{1,2}, \dots, M^{k,k}\right)$. We will show that
\begin{equation} \label{eqn:blockstruct}
X_{a, b} = \frac{r - 1}{r^t\sqrt{r - 1}} \quad \text{if } (a, b) \in E_i \times F_i \text{ for some } 0 \le i \le r - 1
\end{equation}
Suppose that $(a, b) \in E_i \times F_i$ and specifically that $(j_a - 1)r^t  + 1 \le a \le j_a r^t$ and $(j_b - 1)r^t  + 1 \le b \le j_b r^t$ for some $1 \le j_a, j_b \le k$. The definitions of $E_i$ and $F_i$ imply that $z_a \in A_{x_{ij_a}}^{t_{j_a}}$ where $z_a = P_{a - (j_a - 1)r^t}$ and $z_b \in A_{y_{ij_b}}^{t_{j_b}}$ where $z_b = P_{b - (j_b - 1)r^t}$. Note that
$$X_{a, b} = M^{j_a, j_b}_{a - (j_a - 1)r^t, b - (j_b - 1)r^t}$$
by the definition of $\mathcal{C}$. Therefore by Definition \ref{defn:Trt}, it suffices to show that $(z_a, z_b)$ is an edge of the bipartite graph $G_{r, t}(V_{t_{j_a}}, V_{t_{j_b}}, L_{j_a j_b})$ for all such $(a, b)$ to establish (\ref{eqn:blockstruct}). By Definition \ref{defn:Grt}, $(z_a, z_b)$ is an edge if and only if $L_{j_a j_b}(x_{ij_a}) = y_{ij_b}$. Observe that the definitions of $x_{ij_a}$ and $y_{ij_b}$ yield that
\begin{align}
a_{j_a} x_{ij_a} + a_1 &= L_{j_a 1}(x_{ij_a}) = L_{11}(i) = a_1 \cdot i + a_1 \label{eqn:linear-cons} \\
y_{ij_b} &= L_{1j_b}(i) = a_1 \cdot i + a_{j_b} \nonumber \\
L_{j_a j_b}(x) &= a_{j_a} x + a_{j_b} \nonumber
\end{align}
Adding $a_{j_b} - a_1$ to both sides of Equation (\ref{eqn:linear-cons}) therefore yields that
$$L_{j_a j_b}(x_{ij_a}) = a_{j_a} x_{ij_a} + a_{j_b} = a_1 \cdot i + a_{j_b} = y_{ij_b}$$
which completes the proof of (\ref{eqn:blockstruct}). Now note that each $M^{i, j}$ contains exactly $r^{2t - 1}$ entries equal to $(r - 1)/r^t\sqrt{r - 1}$ and thus $X$ contains exactly $k^2 r^{2t - 1}$ such entries. The definitions of $E_i$ and $F_i$ imply that they each contain exactly $kr^{t - 1}$ elements. Thus $\cup_{i = 0}^{r - 1} E_i \times F_i$ contains $k^2 r^{2t - 1}$ elements. Therefore (\ref{eqn:blockstruct}) also implies that $X_{a, b} = - 1/r^t\sqrt{r - 1}$ for all $(a, b) \not \in \cup_{i = 0}^{r - 1} E_i \times F_i$. This proves that $X$ is an $r$-block matrix and completes the proof of the lemma.
\end{proof}

The community alignment property shown in this lemma is directly related to the indexing of rows in $M_{r, t}$. More precisely, the above lemma implies that for any subset $S \subseteq [(r - 1)\ell]$, the rows of $M_{r, t}$ indexed by elements in the support of $\mathbf{1}_S \otimes \mathbf{1}_S$ can be arranged as sub-matrices of an $|S| r^t \times |S| r^t$ matrix that is an $r$-block matrix. This property will be crucial in our reduction from $k\pr{-pc}$ and $k\pr{-pds}$ to hidden partition models in Section \ref{sec:3-hidden-partition}.

\subsection{A Random Matrix Alternative to $K_{r, t}$}
\label{subsec:2-Rne}

In this section, we introduce the random matrix analogue $R_{n, \epsilon}$ of $K_{r, t}$ defined below. Rather than have all independent entries, $R_{n, \epsilon}$ is constrained to be symmetric. This ends up being technically convenient, as it suffices to bound the eigenvalues of $R_{n, \epsilon}$ in order to upper bound its largest singular value. This symmetry also yields a direct connection between the eigenvalues of $R_{n, \epsilon}$ and the eigenvalues of sparse random graphs, which have been studied extensively.

\begin{definition}[Random Design Matrix $R_{n, \epsilon}$] \label{defn:Rne}
If $\epsilon \in (0, 1/2]$, let $R_{n, \epsilon} \in \mathbb{R}^{n \times n}$ denote the random symmetric matrix with independent entries sampled as follows
$$(R_{n, \epsilon})_{ij} = (R_{n, \epsilon})_{ji} \sim \left\{ \begin{array}{ll} -\sqrt{\frac{1 - \epsilon}{\epsilon n}} & \textnormal{with prob. } \epsilon \\ \sqrt{\frac{\epsilon}{(1 - \epsilon)n}} & \textnormal{with prob. } 1 - \epsilon \end{array} \right.$$
for all $1 \le i < j \le n$, and $(R_{n, \epsilon})_{ii} = \sqrt{\frac{\epsilon}{(1 - \epsilon)n}}$ for each $1 \le i \le n$.
\end{definition}

We now establish the key properties of the matrix $R_{n, \epsilon}$. Consider the graph $G$ where $\{i, j\} \in E(G)$ if and only if $(R_{n, \epsilon})_{ij}$ is negative. By definition, we have that $G$ is an $\epsilon$-sparse Erd\H{o}s-R\'{e}nyi graph with $G \sim \mG(n, \epsilon)$. Furthermore, if $A$ is the adjacency matrix of $G$, a direct calculation yields that $R_{n, \epsilon}$ can be expressed as
\begin{equation} \label{eqn:Rne-decomp}
R_{n, \epsilon} = \sqrt{\frac{\epsilon}{(1 - \epsilon)n}} \cdot I_n + \frac{1}{\sqrt{\epsilon(1 - \epsilon)n}} \cdot \left( \bE[A] - A \right)
\end{equation}
A line of work has given high probability upper bounds on the largest eigenvalue of $\bE[A] - A$ in order to study concentration of sparse Erd\H{o}s-R\'{e}nyi graphs in the spectral norm of their adjacency matrices \cite{furedi1981eigenvalues, vu2005spectral, feige2005spectral, lu2013spectra, bandeira2016sharp, le2017concentration}. As outlined in \cite{le2017concentration}, the works \cite{furedi1981eigenvalues, vu2005spectral, lu2013spectra} apply Wigner's trace method to obtain spectral concentration results for general random matrices that, in this context, imply with high probability that
$$\left\| \bE[A] - A \right\| = 2\sqrt{d} \left( 1 + o_n(1) \right) \quad \text{for } d \gg (\log n)^4$$
where $d = \epsilon n$ and $\| \cdot \|$ denotes the spectral norm on $n \times n$ symmetric matrices. In \cite{feige2005spectral, bandeira2016sharp, le2017concentration}, it is shown that this requirement on $d$ can be relaxed and that it holds with high probability that
$$\left\| \bE[A] - A \right\| = O_n(\sqrt{d}) \quad \text{for } d = \Omega_n(\log n)$$
These results, the fact that $R_{n, \epsilon}$ is symmetric and the above expression for $R_{n, \epsilon}$ in terms of $A$ are enough to establish our main desired properties of $R_{n, \epsilon}$, which are stated formally in the following lemma.

\begin{lemma}[Key Properties of $R_{n, \epsilon}$] \label{lem:Rne}
If $\epsilon \in (0, 1/2]$ satisfies that $\epsilon n = \omega_n(\log n)$, there is a constant $C > 0$ such that the random matrix $R_{n, \epsilon}$ satisfies the following two conditions with probability $1 - o_n(1)$:
\begin{enumerate}
\item the largest singular value of $R_{n, \epsilon}$ is at most $C$; and
\item every column of $R_{n, \epsilon}$ contains between $\epsilon n - C\sqrt{\epsilon n \log n}$ and $\epsilon n + C\sqrt{\epsilon n \log n}$ negative entries.
\end{enumerate}
\end{lemma}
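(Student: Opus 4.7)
The plan is to exploit the decomposition in equation (\ref{eqn:Rne-decomp}) together with the established spectral concentration results for sparse Erd\H{o}s-R\'{e}nyi graphs cited in the paragraph preceding the lemma, and to handle the column-sum statement by a standard Bernstein--Chernoff argument followed by a union bound.

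For part (1), I would invoke the triangle inequality on the decomposition $R_{n,\epsilon} = \sqrt{\epsilon/((1-\epsilon)n)} \cdot I_n + (\epsilon(1-\epsilon)n)^{-1/2} \cdot (\bE[A] - A)$, where $A$ is the adjacency matrix of $G \sim \mG(n,\epsilon)$. The first summand has spectral norm $\sqrt{\epsilon/((1-\epsilon)n)} = o_n(1)$ since $\epsilon \le 1/2$ and $n \to \infty$. For the second summand, the results of \cite{feige2005spectral, bandeira2016sharp, le2017concentration} guarantee that, in the regime $d := \epsilon n = \omega_n(\log n) \supseteq \Omega_n(\log n)$, there is an absolute constant $C_0 > 0$ such that $\| \bE[A] - A \| \le C_0 \sqrt{d}$ with probability $1 - o_n(1)$. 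Dividing by $\sqrt{\epsilon(1-\epsilon)n}$ gives a bound of $C_0 / \sqrt{1 - \epsilon} \le C_0 \sqrt{2}$, since $\epsilon \le 1/2$. Since $R_{n,\epsilon}$ is symmetric, its largest singular value equals its spectral norm, and combining these bounds yields the claim with $C := C_0 \sqrt{2} + 1$.

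For part (2), I would observe that the number $N_j$ of negative entries in column $j$ of $R_{n,\epsilon}$ equals the degree of vertex $j$ in $G$, which is distributed as $\text{Bin}(n-1, \epsilon)$ (the diagonal is fixed positive by definition). A standard Bernstein inequality gives
\[
\bP\Bigl[ \bigl| N_j - (n-1)\epsilon \bigr| \ge t \Bigr] \le 2 \exp\!\left( - \frac{t^2}{2(n-1)\epsilon(1-\epsilon) + 2t/3} \right).
\]
Setting $t = C_1 \sqrt{\epsilon n \log n}$ for a sufficiently large constant $C_1$, the right-hand side becomes $o_n(n^{-1})$ in the regime $\epsilon n = \omega_n(\log n)$, so that after a union bound over the $n$ columns, every column simultaneously satisfies $|N_j - \epsilon n| \le C_1 \sqrt{\epsilon n \log n} + 1 \le C \sqrt{\epsilon n \log n}$ with probability $1 - o_n(1)$, after adjusting $C$ if necessary.

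The two statements hold simultaneously with probability $1 - o_n(1)$ by a final union bound, which proves the lemma. I expect no substantial technical obstacle: both ingredients (matrix spectral concentration for sparse Erd\H{o}s--R\'enyi graphs and Bernstein's inequality for binomials) are off-the-shelf, and the main things to check are just the arithmetic of the normalization constants and that the cited spectral concentration result applies in the regime $\epsilon n = \omega_n(\log n)$, which is stronger than what \cite{le2017concentration} requires.
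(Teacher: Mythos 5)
Your proposal is correct and follows essentially the same approach as the paper's proof: both use the decomposition $R_{n,\epsilon} = \sqrt{\epsilon/((1-\epsilon)n)}\,I_n + (\epsilon(1-\epsilon)n)^{-1/2}(\bE[A]-A)$ together with the spectral concentration result from \cite{le2017concentration} for part (1), and a binomial tail bound with a union bound over columns for part (2). The only cosmetic difference is that you invoke Bernstein's inequality where the paper uses a multiplicative Chernoff bound; both give the same $\sqrt{\epsilon n \log n}$ deviation window in the regime $\epsilon n = \omega_n(\log n)$.
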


\begin{proof}
The number of negative entries in the $i$th column of $R_{n, \epsilon}$ is distributed as $\text{Bin}(n - 1, \epsilon)$. A standard Chernoff bound for the binomial distribution yields that if $X \sim \text{Bin}(n - 1, \epsilon)$, then
$$\bP\left[ |X - (n - 1)\epsilon | \ge \delta (n - 1)\epsilon \right] \le 2 \exp\left( - \frac{\delta^2 (n - 1)\epsilon}{3} \right)$$
for all $\delta \in (0, 1)$. Setting $\delta = C' \sqrt{n^{-1} \epsilon^{-1} \log n}$ for a sufficiently large constant $C' > 0$ and taking a union bound over all columns $i$ implies that property (2) in the lemma statement holds with probability $1 - o_n(1)$. We now apply Theorem 1.1 in \cite{le2017concentration} as in the first example in Section 1.4, where the graph is not modified. Since $\epsilon n = \omega_n(\log n)$, this yields with probability $1 - o_n(1)$ that
$$\left\| \bE[A] - A \right\| \le C''\sqrt{d}$$
for some constant $C'' > 0$, where $A$ and $d$ are as defined above. The decomposition of $R_{n, \epsilon}$ in Equation (\ref{eqn:Rne-decomp}) now implies that with probability $1 - o_n(1)$
$$\| R_{n, \epsilon} \| \le \sqrt{\frac{\epsilon}{(1 - \epsilon)n}} + \frac{1}{\sqrt{\epsilon(1 - \epsilon)n}} \cdot C'' \sqrt{d} = O_n(1)$$
since $\epsilon \in (0, 1/2]$ and $d = \epsilon n$. This establishes that property (1) holds with probability $1 - o_n(1)$. A union bound over (1) and (2) now completes the proof of the lemma.
\end{proof}

While $R_{n, \epsilon}$ and $K_{r, t}$ satisfy similar conditions needed by our reductions, they also have differences that will dictate when one is used over the other. The following highlights several key points in comparing these two matrices.
\begin{itemize}
\item $R_{n, \epsilon}$ and $K_{r, t}$ are analogous when $n = r^t$ and $\epsilon = 1/r$. In this case, both matrices contain the same two values $1/\sqrt{r^t(r - 1)}$ and $-\sqrt{(r - 1)/r^t}$. The rows of $K_{r, t}$ are unit vectors and the rows of $R_{n, \epsilon}$ are approximately unit vectors -- property (2) in Lemma \ref{lem:Rne} implies that the norm of each row is $1 \pm O_n(\sqrt{(\epsilon n)^{-1} \log n})$. Like $K_{r, t}$, Lemma \ref{lem:Rne} implies that $R_{n, \epsilon}$ is also approximately orthogonal with largest singular value bounded above by a constant.
\item While $K_{r, t}$ has exactly a $(1/r)$-fraction of entries in each column that are negative, $R_{n, \epsilon}$ only has \textit{approximately} an $\epsilon$-fraction of entries in each of its columns that are negative. For some of our reductions, such as our reductions to $\pr{rsme}$ and $\pr{rslr}$, having approximately an $\epsilon$-fraction of its entries equal to the negative value in Definition \ref{defn:Rne} is sufficient. In our reductions to $\pr{isbm}$, $\pr{ghpm}$, $\pr{bhpm}$ and $\pr{semi-cr}$, it will be important that $K_{r, t}$ contains exactly $(1/r)$-fraction of negative entries per column. The approximate guarantee of $R_{n, \epsilon}$ would correspond to only showing lower bounds against algorithms that are \textit{adaptive} and do not need to know the sizes of the hidden communities.
\item As is mentioned in Section \ref{subsec:1-problems-sbm} and will be discussed in Section \ref{sec:3-robust-and-supervised}, our applications of dense Bernoulli rotations with $K_{r, t}$ will generally be tight when a natural parameter $n$ in our problems satisfies that $\sqrt{n} = \tilde{\Theta}(r^t)$. This imposes a number theoretic condition (\pr{t}) on the pair $(n, r)$, arising from the fact that $t$ must be an integer, which generally remains a condition in the computational lower bounds we show for $\pr{isbm}$, $\pr{ghpm}$ and $\pr{bhpm}$. In contrast, $R_{n, \epsilon}$ places no number-theoretic constraints on $n$ and $\epsilon$, which can be arbitrary, and thus the condition (\pr{t}) can be removed from our computational lower bounds for $\pr{rsme}$ and $\pr{rslr}$. We remark that when $r = n^{o(1)}$, which often is the regime of interest in problems such as $\pr{rsme}$, then the condition \pr{(t)} is trivial and places no further constraints on $(n, r)$ as will be shown in Lemma \ref{lem:propT}.
\item $R_{n, \epsilon}$ is random while $K_{r, t}$ is fixed. In our reductions, it is often important that the same design matrix is used throughout multiple applications of dense Bernoulli rotations. Since $R_{n, \epsilon}$ is a random matrix, this requires generating a single instance of $R_{n, \epsilon}$ and using this one instance throughout our reductions. In each of our reductions, we will rejection sample $R_{n, \epsilon}$ until it satisfies the two criteria in Lemma \ref{lem:Rne} for a maximum of $O((\log n)^2)$ rounds, and then use the resulting matrix throughout all applications of dense Bernoulli rotations in that reduction. The probability bounds in Lemma \ref{lem:Rne} imply that the probability no sample from $R_{n, \epsilon}$ satisfying these criteria is found is $n^{-\omega_n(1)}$. This is a failure mode for our reductions and contributes a negligible $n^{-\omega_n(1)}$ to the total variation distance between the output of our reductions and their target distributions.
\item For some of our reductions, applying dense Bernoulli rotations with either of the two matrices $R_{n, \epsilon}$ or $K_{r, t}$ yields the same guarantees. This is the case for our reductions to $\pr{mslr}$, $\pr{glsm}$ and $\pr{tpca}$, where $r = 2$ and the condition (\pr{t}) is trivial and mapping to columns with approximately half of their entries negative is sufficient. As mentioned above, this is also the case when $r \asymp \epsilon^{-1} = n^{o(1)}$ in $\pr{rsme}$.
\item Some differences between $R_{n, \epsilon}$ and $K_{r, t}$ that are unimportant for our reductions include that $R_{n, \epsilon}$ is exactly square while $K_{r, t}$ is only approximately square and that $R_{n, \epsilon}$ is symmetric while $K_{r, t}$ is not.
\end{itemize}
For consistency, the pseudocode and analysis for all of our reductions are written with $K_{r, t}$ rather than $R_{n, \epsilon}$. Modifying our reductions to use $R_{n, \epsilon}$ is straightforward and consists of adding the rejection sampling step to sample $R_{n, \epsilon}$ discussed above. In Sections \ref{subsec:3-rsme-reduction}, \ref{subsec:2-mixtures-slr} and \ref{sec:3-robust-and-supervised}, we discuss in more detail how to make these modifications to our reductions to $\pr{rsme}$ and $\pr{rslr}$ and the computational lower bounds they yield.

There are several reasons why we present our reductions with $K_{r, t}$ rather than $R_{n, \epsilon}$. The analysis of $K_{r, t}$ in Section \ref{subsec:2-design-matrices} is simple and self-contained while the analysis of $R_{n, \epsilon}$ requires fairly involved results from random matrix theory. The construction of $K_{r, t}$ naturally extends to $T_{r, t}$ while a random tensor analogue $T_{r, t}$ seems as though it would be prohibitively difficult to analyze. Reductions with $K_{r, t}$ give an explicit encoding of cliques into the hidden structure of our target problems as discussed in Section \ref{subsec:1-tech-encoding}, yielding slightly stronger and more explicit computational lower bounds in this sense.

\section{Negatively Correlated Sparse PCA}
\label{sec:2-neg-spca}

This section is devoted to giving a reduction from bipartite planted dense subgraph to negatively correlated sparse PCA, the high level of which was outlined in Section \ref{subsec:1-tech-inverse-wishart}. This reduction will be used in the next section as a crucial subroutine in reductions to establish conjectured statistical-computational gaps for two supervised problems: mixtures of sparse linear regressions and robust sparse linear regression. The analysis of this reduction relies on a result on the convergence of the Wishart distribution and its inverse. This result is proven in the second half of this section.

\subsection{Reducing to Negative Sparse PCA}
\label{subsec:2-neg-spca-reduction}

\begin{figure}[t!]
\begin{algbox}
\textbf{Algorithm} $\chi^2\textsc{-Random-Rotation}$

\vspace{1mm}

\textit{Inputs}: Matrix $M \in \{0, 1\}^{m \times n}$, Bernoulli probabilities $0 < q < p \le 1$, planted subset size $k_n$ that divides $n$ and a parameter $\tau > 0$
\begin{enumerate}
\item Sample $r_1, r_2, \dots, r_n \sim_{\text{i.i.d.}} \sqrt{\chi^2(n/k_n)}$ and truncate the $r_j$ with $r_j \gets \min\left\{ r_j, 2 \sqrt{n/k_n} \right\}$ for each $j \in [n]$.
\item Compute $M'$ by applying $\textsc{Gaussianize}$ to $M$ with Bernoulli probabilities $p$ and $q$, rejection kernel parameter $R_{\pr{rk}} = mn$, parameter $\tau$ and target mean values $\mu_{ij} = \frac{1}{2} \tau \cdot r_j \cdot \sqrt{k_n/n}$ for each $i \in [m]$ and $j \in [n]$.
\item Sample an orthogonal matrix $R \in \mathbb{R}^{n\times n}$ from the Haar measure on the orthogonal group $\mathcal{O}_n$ and output the columns of the matrix $M'R$.
\end{enumerate}
\vspace{1mm}

\textbf{Algorithm} $\textsc{bpds-to-neg-spca}$

\vspace{1mm}

\textit{Inputs}: Matrix $M \in \{0, 1\}^{m \times n}$, Bernoulli probabilities $0 < q < p \le 1$, planted subset size $k_n$ that divides $n$ and a parameter $\tau > 0$, target dimension $d \ge m$
\begin{enumerate}
\item Compute $X = (X_1, X_2, \dots, X_n)$ where $X_i \in \mathbb{R}^m$ as the columns of the matrix output by $\chi^2\textsc{-Random-Rotation}$ applied to $M$ with parameters $p, q, k_n$ and $\tau$.
\item Compute $\hat{\Sigma} = \sum_{i = 1}^n X_i X_i^\top$ and let $R \in \mathbb{R}^{m \times n}$ be the top $m$ rows of an orthogonal matrix sampled from the Haar measure on the orthogonal group $\mathcal{O}_n$ and compute the matrix
$$M' = \sqrt{n(n - m - 1)} \cdot \hat{\Sigma}^{-1/2} R$$
where $\hat{\Sigma}^{-1/2}$ is the positive semidefinite square root of the inverse of $\hat{\Sigma}$.
\item Output the columns of the $d \times n$ matrix with upper left $m \times n$ submatrix $M'$ and all remaining entries sampled i.i.d. from $\mN(0, 1)$.
\end{enumerate}
\vspace{1mm}

\end{algbox}
\caption{Subroutine $\chi^2\textsc{-Random-Rotation}$ for random rotations to instances of sparse PCA from \cite{brennan2019optimal} and our reduction from bipartite planted dense subgraph to negative sparse PCA.}
\label{fig:neg-spca-reduction}
\end{figure}

In this section, we give our reduction $\textsc{bpds-to-neg-spca}$ from bipartite planted dense subgraph to negatively correlated sparse PCA, which is shown in Figure \ref{fig:neg-spca-reduction}. This reduction is described with the input bipartite graph as its adjacency matrix of Bernoulli random variables. A key subroutine in this reduction is the procedure $\chi^2\textsc{-Random-Rotation}$ from \cite{brennan2019optimal}, which is also shown in Figure \ref{fig:neg-spca-reduction}. The lemma below provides total variation guarantees for $\chi^2\textsc{-Random-Rotation}$ and is adapted from Lemma 4.6 from \cite{brennan2019optimal} to be in our notation and apply to the generalized case where the input matrix $M$ is rectangular instead of square.

This lemma can be proven with an identical argument to that in Lemma 4.6 from \cite{brennan2019optimal}, with the following adjustment of parameters to the rectangular case. The first two steps of $\chi^2\textsc{-Random-Rotation}$ maps $\mathcal{M}_{[m] \times [n]}(S \times T, p, q)$ approximately to
$$\frac{\tau}{2} \sqrt{\frac{k_n}{n}} \cdot \mathbf{1}_S u_T^\top + \mN(0, 1)^{\otimes m \times n}$$
where $u_T$ is the vector with $(u_T)_i = r_i$ if $i \in T$ and $(u_T)_i = 0$ otherwise. The argument in Lemma 4.6 from \cite{brennan2019optimal} shows that the final step of $\chi^2\textsc{-Random-Rotation}$ maps this distribution approximately to
$$\frac{\tau}{2} \sqrt{\frac{k_n}{n}} \cdot \mathbf{1}_S w^\top + \mN(0, 1)^{\otimes m \times n}$$
where $w \sim \mN(0, I_n)$. Now observe that the entries of this matrix are zero mean and jointly Gaussian. Furthermore, the columns are independent and have covariance matrix $I_m + \frac{\tau^2 k_n |S|}{4n} \cdot v_S v_S^\top$ where $v_S = |S|^{-1/2} \cdot \mathbf{1}_S$. Summarizing the result of this argument, we have the following lemma. 

\begin{lemma}[$\chi^2$ Random Rotations -- Adapted from Lemma 4.6 in \cite{brennan2019optimal}] \label{lem:randomrotations}
Given parameters $m, n$, let $0 < q < p \le 1$ be such that $p - q = (mn)^{-O(1)}$ and $\min(q, 1 - q) = \Omega(1)$, let $k_n \le n$ be such that $k_n$ divides $n$ and let $\tau > 0$ be such that
$$\tau \le \frac{\delta}{2 \sqrt{6\log (mn) + 2\log (p - q)^{-1}}} \quad \text{where} \quad \delta = \min \left\{ \log \left( \frac{p}{q} \right), \log \left( \frac{1 - q}{1 - p} \right) \right\}$$
The algorithm $\mathcal{A} = \chi^2\textsc{-Random-Rotation}$ runs in $\textnormal{poly}(m, n)$ time and satisfies that
\begin{align*}
\TV\left( \mathcal{A}\left(\mathcal{M}_{[m] \times [n]}(S \times T, p, q)\right), \, \mN\left(0, I_m + \frac{\tau^2 k_n |S|}{4n} \cdot v_S v_S^\top \right)^{\otimes n} \right) &\le O\left((mn)^{-1}\right) + k_n(4e^{-3})^{n/2k_n} \\
\TV\left( \mathcal{A}\left( \textnormal{Bern}(q)^{\otimes m \times n}\right), \, \mN(0, 1)^{\otimes m \times n} \right) &= O\left((mn)^{-1}\right)
\end{align*}
where $v_S = \frac{1}{\sqrt{|S|}} \cdot \mathbf{1}_S \in \mathbb{R}^m$ for all subsets $S \subseteq [m]$ and $T \subseteq [n]$ with $|T| = k_n$.
\end{lemma}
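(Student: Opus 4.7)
The plan is to track the distribution produced by $\chi^2\textsc{-Random-Rotation}$ across its three steps, conditioning on the auxiliary randomness at each stage and invoking Lemma \ref{lem:gaussianize} together with rotational invariance of the Gaussian distribution. Set $R_{\pr{rk}} = mn$ and note that the hypotheses on $\tau$, $p$, $q$ exactly match those of Lemma \ref{lem:gaussianize}.

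First, I would handle the truncation step. The $r_j^2$ are i.i.d. $\chi^2(n/k_n)$, and the truncation $r_j \gets \min\{r_j, 2\sqrt{n/k_n}\}$ alters a particular $r_j$ only when $r_j^2 > 4 \cdot (n/k_n)$. The standard tail bound $\bP[\chi^2_d > td] \le (te^{1-t})^{d/2}$ with $t=4$ and $d=n/k_n$ gives $\bP[r_j^2 > 4n/k_n] \le (4e^{-3})^{n/2k_n}$, and taking a union bound (grouping the $n$ samples into blocks of size $n/k_n$, i.e.\ $k_n$ groups) yields a truncation failure probability of at most $k_n (4e^{-3})^{n/2k_n}$. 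Since the target distribution in the lemma assumes no truncation, this quantity enters as an additive loss in total variation via Fact~\ref{tvfacts}(2).

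Next, I would condition on the (truncated) values $r_1, \ldots, r_n$, which makes the target mean matrix $\{\mu_{ij}\}$ deterministic with $\mu_{ij} = \tfrac{\tau}{2}r_j\sqrt{k_n/n}$ if $(i,j) \in S \times T$ and $\mu_{ij}=0$ otherwise (where for the null case the support is empty). Lemma \ref{lem:gaussianize} applied to the $mn$ Bernoulli entries, with $R_{\pr{rk}}=mn$, produces after Step~2 a matrix within total variation $O(mn \cdot (mn)^{-3}) = O((mn)^{-2})$ of
\begin{equation*}
\tfrac{\tau}{2}\sqrt{k_n/n}\cdot \mathbf{1}_S u_T^\top + \mN(0,1)^{\otimes m \times n}
\end{equation*}
under $H_1$, where $(u_T)_j = r_j \mathbf{1}\{j \in T\}$, and within the same total variation of $\mN(0,1)^{\otimes m \times n}$ under $H_0$. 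Using Fact~\ref{tvfacts}(3) to integrate out the conditioning on the $r_j$ preserves these bounds, and this is where the $O((mn)^{-1})$ term in the null statement comes from.

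The key final step is the Haar rotation $R \in \mathcal{O}_n$. For the null case, $\mN(0,1)^{\otimes m \times n} \cdot R$ is distributionally identical to $\mN(0,1)^{\otimes m \times n}$ by rotational invariance applied row-wise, giving the second bound. For the planted case, write the matrix after Step~2 as $\tfrac{\tau}{2}\sqrt{k_n/n}\cdot \mathbf{1}_S u_T^\top + G$ with $G \sim \mN(0,1)^{\otimes m \times n}$; then
\begin{equation*}
M'R = \tfrac{\tau}{2}\sqrt{k_n/n}\cdot \mathbf{1}_S (R^\top u_T)^\top + GR,
\end{equation*}
and $GR \stackrel{d}{=} G$ again by Haar invariance. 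The crux is to identify $\mL(R^\top u_T)$. Absent truncation, $\|u_T\|_2^2 = \sum_{j \in T} r_j^2 \sim \chi^2(n)$ since $|T|=k_n$ and each $r_j^2 \sim \chi^2(n/k_n)$ independently; and conditional on $\|u_T\|_2$, the vector $R^\top u_T / \|u_T\|_2$ is uniform on $S^{n-1}$ by Haar invariance applied to the fixed direction $u_T/\|u_T\|_2$. Combining these is exactly the polar decomposition of an isotropic Gaussian, so $R^\top u_T \sim \mN(0, I_n)$ (the truncation event has already been absorbed into the $k_n(4e^{-3})^{n/2k_n}$ loss above). Consequently, conditional on $R^\top u_T = w$, the columns of $M'R$ are independent with column $j$ distributed as $\mN(\tfrac{\tau}{2}\sqrt{k_n/n}\, w_j \mathbf{1}_S, I_m)$; integrating over $w \sim \mN(0,I_n)$ gives i.i.d.\ columns distributed as $\mN(0, I_m + \tfrac{\tau^2 k_n |S|}{4n} v_S v_S^\top)$, which is exactly the target.

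The main obstacle I anticipate is not conceptual but bookkeeping: assembling the four sources of total variation loss (the $O((mn)^{-2})$ per-coordinate rejection-kernel error summed to $O((mn)^{-1})$; the truncation bound $k_n(4e^{-3})^{n/2k_n}$; the exactness of the rotation step; and the conditioning arguments via Fact~\ref{tvfacts}) in the right order through Lemma~\ref{lem:tvacc}, and verifying that the condition $p-q = (mn)^{-O(1)}$ makes the hypotheses of Lemma~\ref{lem:gaussianize} hold with $R_{\pr{rk}} = mn$. The nontrivial distributional identity $R^\top u_T \sim \mN(0,I_n)$ when $\|u_T\|^2 \sim \chi^2(n)$ is the one place where the specific choice of $\chi^2(n/k_n)$ for the $r_j^2$ is essential, and accordingly where the constraint $k_n | n$ with $|T|=k_n$ is used.
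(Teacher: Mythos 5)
Your proposal is correct and follows essentially the same route as the paper's sketch (which itself defers to Lemma 4.6 of the cited prior work): apply $\textsc{Gaussianize}$ conditionally on the $r_j$'s to get the intermediate law $\frac{\tau}{2}\sqrt{k_n/n}\cdot \mathbf{1}_S u_T^\top + \mN(0,1)^{\otimes m\times n}$, then use Haar invariance and the polar decomposition $\|u_T\|_2^2 \sim \chi^2(n)$ (because $|T|\cdot (n/k_n) = n$) to replace $u_T$ by $w\sim\mN(0,I_n)$, and read off the column-wise covariance $I_m + \frac{\tau^2 k_n |S|}{4n}v_S v_S^\top$. The only place your writeup slips is the parenthetical ``grouping the $n$ samples into blocks of size $n/k_n$'' when deriving the factor $k_n$: the correct reason the union bound is over only $k_n$ events (rather than all $n$ of the sampled $r_j$'s) is that the target mean $\mu_{ij}$ is applied only on $(i,j) \in S \times T$, so the output law depends only on $r_j$ for $j\in T$, a set of size $k_n$; there is no grouping of samples involved. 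Also note that $\textsc{Gaussianize}$ with $R_{\pr{rk}} = mn$ actually gives a TV error of $O((mn)^{-2})$, not $O((mn)^{-1})$ — your claim that this is ``where the $O((mn)^{-1})$ term comes from'' is therefore off by a factor, but harmlessly so since the lemma's stated bound is weaker and still follows.
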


Throughout the remainder of this section, we will need to use properties of the Wishart and inverse Wishart distributions. These distributions on random matrices are defined as follows.

\begin{definition}[Wishart Distribution] \label{defn:wishart}
Let $n$ and $d$ be positive integers and $\Sigma \in \mathbb{R}^{d \times d}$ be a positive semidefinite matrix. The Wishart distribution $\mathcal{W}_d(n, \Sigma)$ is the distribution of the matrix $\hat{\Sigma} = \sum_{i = 1}^n X_i X_i^\top$ where $X_1, X_2, \dots, X_n \sim_{\textnormal{i.i.d.}} \mN(0, \Sigma)$.
\end{definition}

\begin{definition}[Inverse Wishart Distribution] \label{defn:inverted-wishart}
Let $n, d$ and $\Sigma$ be as in Definition \ref{defn:wishart}. The inverted Wishart distribution $\mathcal{W}^{-1}_d(n, \Sigma)$ is the distribution of $\hat{\Sigma}^{-1}$ where $\hat{\Sigma} \sim \mathcal{W}_d(n, \Sigma)$.
\end{definition}

In order to analyze $\textsc{bpds-to-neg-spca}$, we also will need the following observation from \cite{brennan2019optimal}. This is a simple consequence of the fact that the distribution $\mN(0, I_n)$ is isotropic and thus invariant under multiplication by elements of the orthogonal group $\mO_n$.

\begin{lemma}[Lemma 6.5 in \cite{brennan2019optimal}] \label{lem:invariance}
Suppose that $n \ge d$ and let $\Sigma \in \mathbb{R}^{d \times d}$ be a fixed positive definite matrix and let $\Sigma_e \sim \mathcal{W}_d(n, \Sigma)$. Let $R \in \mathbb{R}^{d \times n}$ be the matrix consisting of the first $d$ rows of an $n \times n$ matrix chosen randomly and independently of $\Sigma_e$ from the Haar measure $\mu_{\mathcal{O}_n}$ on $\mathcal{O}_n$. Let $(Y_1, Y_2, \dots, Y_n)$ be the $n$ columns of $\Sigma_e^{1/2} R$, then $Y_1, Y_2, \dots, Y_n \sim_{\textnormal{i.i.d.}} \mN(0, \Sigma)$.
\end{lemma}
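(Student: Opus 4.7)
The plan is to reduce the assertion to a standard ``polar-type'' decomposition of a Gaussian matrix. First I would re-express the target. Let $X \in \mathbb{R}^{d \times n}$ be the matrix whose columns are i.i.d. $\mN(0, \Sigma)$ draws, so that $\Sigma_e = XX^\top \sim \mW_d(n, \Sigma)$ by definition. Since $\Sigma$ is positive definite and $n \ge d$, the matrix $\Sigma_e$ is almost surely invertible, so we may define $U = \Sigma_e^{-1/2} X$. By construction $UU^\top = \Sigma_e^{-1/2}(XX^\top)\Sigma_e^{-1/2} = I_d$, so $U$ lies in the Stiefel manifold $V_{d,n} = \{M \in \mathbb{R}^{d \times n} : MM^\top = I_d\}$, and $X = \Sigma_e^{1/2} U$.

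The next step is to show that, under the joint distribution of $(\Sigma_e, U)$ just defined, $U$ is independent of $\Sigma_e$ and distributed according to the unique $\mathcal{O}_n$-right-invariant probability measure on $V_{d,n}$. The key observation is rotational invariance: for every fixed $O \in \mathcal{O}_n$, the matrix $XO$ has the same distribution as $X$, because its columns are jointly Gaussian with mean zero and covariance
\begin{equation*}
\mathrm{Cov}\bigl((XO)_{ij}, (XO)_{i'j'}\bigr) = \Sigma_{ii'} (O^\top O)_{jj'} = \Sigma_{ii'} \, \delta_{jj'}.
\end{equation*}
Under $X \mapsto XO$, both $\Sigma_e = XX^\top$ and $\Sigma_e^{1/2}$ are unchanged, while $U \mapsto UO$. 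Hence the joint law of $(\Sigma_e, U)$ is invariant under $(\Sigma_e, U) \mapsto (\Sigma_e, UO)$ for every $O \in \mathcal{O}_n$. Conditionally on $\Sigma_e$, the distribution of $U$ is therefore an $\mathcal{O}_n$-right-invariant probability measure on $V_{d,n}$; by uniqueness of Haar measure on $V_{d,n}$, this conditional law equals the uniform distribution on $V_{d,n}$ and does not depend on $\Sigma_e$. Consequently $U$ is independent of $\Sigma_e$ and uniform on the Stiefel manifold.

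The final step is to identify the distribution of $R$ and conclude. The first $d$ rows of a Haar-distributed element of $\mathcal{O}_n$ form a uniformly distributed element of $V_{d,n}$ (this is again by uniqueness of Haar measure: the distribution of these rows is $\mathcal{O}_n$-right-invariant on $V_{d,n}$). Therefore $R$ and $U$ have the same distribution, and both are independent of $\Sigma_e$. It follows that
\begin{equation*}
(\Sigma_e, R) \stackrel{d}{=} (\Sigma_e, U),
\end{equation*}
and hence $\Sigma_e^{1/2} R \stackrel{d}{=} \Sigma_e^{1/2} U = X$. Reading off the columns yields $Y_1, \dots, Y_n \sim_{\textnormal{i.i.d.}} \mN(0, \Sigma)$.

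There is essentially no serious obstacle here; the only delicate point is invoking uniqueness of Haar measure on $V_{d,n}$ to pass from ``right-$\mathcal{O}_n$-invariant'' to ``uniform, independent of $\Sigma_e$.'' This is a clean measure-theoretic fact, but it is the step that actually does the work, and it is the reason the hypothesis that $R$ is Haar and independent of $\Sigma_e$ cannot be relaxed.
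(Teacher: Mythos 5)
Your proof is correct and takes the standard route: re-express $\Sigma_e^{1/2}R$ as a polar-type decomposition of the Gaussian data matrix $X$, use right-$\mathcal{O}_n$-invariance of $X$ to show $U = \Sigma_e^{-1/2}X$ is independent of $\Sigma_e$ and uniform on the Stiefel manifold, and identify the law of $R$ with that of $U$ by uniqueness of the invariant measure on $V_{d,n}$. This is essentially the argument given in \cite{brennan2019optimal}, from which the lemma is cited.
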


We now will state and prove the main total variation guarantees for $\textsc{bpds-to-neg-spca}$ in the theorem below. The proof of the theorem below crucially relies on the upper bound in Theorem \ref{thm:inverse-wishart} on the KL divergence between Wishart matrices and their inverses. Proving this KL divergence bound is the focus of the next subsection.

\begin{theorem}[Reduction to Negative Sparse PCA] \label{thm:neg-spca}
Let $m, n, p, q, k_n$ and $\tau$ be as in Lemma \ref{lem:randomrotations} and suppose that $d \ge m$ and $n \gg m^3$ as $n \to \infty$. Fix any subset $S \subseteq [m]$ and let $\theta_S$ be given by
$$\theta_S = \frac{\tau^2 k_n |S|}{4n + \tau^2 k_n |S|}$$
Then algorithm $\mathcal{A} = \pr{bpds-to-neg-spca}$ runs in $\textnormal{poly}(m, n)$ time and satisfies that
\begin{align*}
\TV\left( \mathcal{A}\left(\mathcal{M}_{[m] \times [n]}(S \times T, p, q)\right), \, \mN\left(0, I_d - \theta_S v_S v_S^\top \right)^{\otimes n} \right) &\le O\left(m^{3/2} n^{-1/2} \right) + k_n(4e^{-3})^{n/2k_n} \\
\TV\left( \mathcal{A}\left( \textnormal{Bern}(q)^{\otimes m \times n}\right), \, \mN(0, 1)^{\otimes d \times n} \right) &= O\left(m^{3/2} n^{-1/2}\right)
\end{align*}
where $v_S = \frac{1}{\sqrt{|S|}} \cdot \mathbf{1}_S \in \mathbb{R}^d$ for all subsets $S \subseteq [m]$ and $T \subseteq [n]$ with $|T| = k_n$.
\end{theorem}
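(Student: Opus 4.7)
The plan is to decompose the reduction into three conceptual steps and bound the total variation of each via Lemma~\ref{lem:tvacc}. The key algebraic observation driving the argument is that for $\Sigma = I_m + \alpha v_S v_S^\top$ with $\alpha = \tau^2 k_n |S|/(4n)$, the Sherman--Morrison formula yields
$$\Sigma^{-1} \;=\; I_m - \frac{\alpha}{1+\alpha}\, v_S v_S^\top \;=\; I_m - \theta_S v_S v_S^\top,$$
so the target covariance in $\pr{neg-spca}$ agrees with $\Sigma^{-1}$ on the upper $m\times m$ block. This explains why inverting the empirical covariance of a positively-spiked Gaussian ensemble should, up to rescaling, produce samples from the negatively-spiked model.

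First, I would invoke Lemma~\ref{lem:randomrotations} to conclude that the output $X = (X_1, \dots, X_n)$ of Step~1 is within $O((mn)^{-1}) + k_n(4e^{-3})^{n/2k_n}$ total variation of $\mN(0, \Sigma)^{\otimes n}$ under the planted input, and within $O((mn)^{-1})$ of $\mN(0,1)^{\otimes m\times n}$ under the null input. Conditional on $X \sim \mN(0, \Sigma)^{\otimes n}$, the empirical covariance $\hat{\Sigma} = \sum_i X_i X_i^\top$ is exactly Wishart-distributed as $\mathcal{W}_m(n, \Sigma)$. Next I would apply Theorem~\ref{thm:inverse-wishart} (established in the following subsection) to obtain $\KL(\mathcal{W}_m(n, \Sigma) \,\|\, n(n-m-1)\cdot \mathcal{W}_m^{-1}(n, \Sigma^{-1})) = O(m^3/n)$ under the hypothesis $n \gg m^3$, which by Pinsker's inequality yields an $O(m^{3/2} n^{-1/2})$ total variation bound. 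Since the map $A \mapsto n(n-m-1) A^{-1}$ is deterministic, the data processing inequality gives
$$\TV\bigl(\mathcal{L}(n(n-m-1)\hat{\Sigma}^{-1}),\; \mathcal{W}_m(n, \Sigma^{-1})\bigr) \;=\; O(m^{3/2} n^{-1/2}).$$

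Finally, I would apply Lemma~\ref{lem:invariance} with covariance $\Sigma^{-1}$: if $\Sigma_e \sim \mathcal{W}_m(n, \Sigma^{-1})$ is independent of the random rotation $R$ drawn from the Haar measure on $\mathcal{O}_n$, then the columns of $\Sigma_e^{1/2} R$ are i.i.d.\ $\mN(0, \Sigma^{-1})$. Our matrix $M' = (n(n-m-1)\hat{\Sigma}^{-1})^{1/2} R$ is precisely the image of $n(n-m-1)\hat{\Sigma}^{-1}$ under this map, with $R$ sampled freshly in Step~2 and hence independent of $\hat{\Sigma}$, so another data processing step transfers the $O(m^{3/2} n^{-1/2})$ bound to the column law, giving closeness to $\mN(0, \Sigma^{-1})^{\otimes n} = \mN(0, I_m - \theta_S v_S v_S^\top)^{\otimes n}$. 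Padding the remaining $d - m$ rows with independent standard Gaussians extends this to $\mN(0, I_d - \theta_S v_S v_S^\top)^{\otimes n}$ since $v_S$ is supported in the first $m$ coordinates. The null case follows identically with $\Sigma = I_m$, where both ensembles are the same isotropic Wishart up to the convergence bound. Aggregating the per-step total variation via Lemma~\ref{lem:tvacc} and absorbing the $O((mn)^{-1})$ contribution into the dominant $O(m^{3/2} n^{-1/2})$ term yields the stated bounds.

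The main obstacle is the novel random matrix input, Theorem~\ref{thm:inverse-wishart}, which quantifies when the Wishart and inverse Wishart ensembles converge in KL divergence; the cubic scaling $n \gg m^3$ it requires is precisely what forces the parameter restriction $k = o(n^{1/6})$ in Theorem~\ref{thm:neg-spca-lb}. The other subtleties are routine: independence of $R$ from $\hat{\Sigma}$ is built into the reduction, and as noted in Section~\ref{subsec:1-tech-inverse-wishart} the anisotropic KL comparison reduces to the isotropic case via a change of variables by $\Sigma^{1/2}$, so the random matrix step itself need only be carried out for $\Sigma = I_m$.
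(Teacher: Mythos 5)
Your proposal is correct and follows essentially the same route as the paper's proof: apply Lemma~\ref{lem:randomrotations} to map to a positively-spiked Gaussian ensemble, identify the empirical covariance as Wishart, compare to the (rescaled) inverse Wishart via Theorem~\ref{thm:inverse-wishart} reduced to the isotropic case, regenerate samples through Lemma~\ref{lem:invariance}, pad, and aggregate via Lemma~\ref{lem:tvacc}. The only cosmetic difference is that you phrase the inverse-Wishart step as an extra data-processing application of the bijection $A \mapsto n(n-m-1)A^{-1}$, whereas the paper directly records that $Y = n(n-m-1)\hat{\Sigma}^{-1}$ has law $\mathcal{W}_m^{-1}(n, \beta\Sigma^{-1})$ and applies the comparison with $\Lambda = I_m - \theta_S v_S v_S^\top$; these are trivially equivalent.
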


\begin{proof}
Let $\mathcal{A}_{\text{1}}$ denote the application of $\chi^2\textsc{-Random-Rotation}$ with input $M$ and output $X$ in Step 1 of $\mathcal{A}$. Let $\mathcal{A}_{\text{2a}}$ denote the Markov transition with input $X$ and output $n(n - m - 1) \cdot \hat{\Sigma}^{-1}$, as defined in Step 2 of $\mathcal{A}$, and let $\mathcal{A}_{\text{2b-3}}$ denote the Markov transition with input $Y = n(n - m - 1) \cdot \hat{\Sigma}^{-1}$ and output $Z$ formed by padding $Y^{1/2} R$ with i.i.d. $\mN(0, 1)$ random variables to be $d \times n$ i.e. the output of $\mathcal{A}$. Furthermore, let $\mathcal{A}_{\text{2-3}} = \mathcal{A}_{\text{2b-3}} \circ \mathcal{A}_{\text{2a}}$ denote Steps 2 and 3 with input $X$ and output $Z$.

Now fix some positive semidefinite matrix $\Sigma \in \mathbb{R}^{m \times m}$ and observe that if $A = \sum_{i = 1}^n Z_i Z_i^\top \sim \mathcal{W}_m(n, I_m)$ where $Z_1, Z_2, \dots, Z_n \sim_{\text{i.i.d.}} \mN(0, I_m)$, then it also follows that
$$\Sigma^{1/2} A \Sigma^{1/2} = \sum_{i = 1}^n \left(\Sigma^{1/2}Z_i\right) \left(\Sigma^{1/2}Z_i\right)^\top \sim \mathcal{W}_m(n, \Sigma)$$
since $\Sigma^{1/2}Z_i \sim \mN(0, \Sigma)$. Now observe that $(\Sigma^{1/2} A \Sigma^{1/2})^{-1} = \Sigma^{-1/2} A^{-1} \Sigma^{-1/2}$ and thus if $B \sim \mathcal{W}^{-1}_m(n, I_m)$ then $\Sigma^{-1/2} B \Sigma^{-1/2} \sim \mathcal{W}^{-1}_m(n, \Sigma)$. Let $\beta^{-1} = n(n - m - 1)$ and $C \sim \mathcal{W}^{-1}_m(n, \beta \cdot I_m)$. Therefore we have by the data processing inequality for total variation in Fact \ref{tvfacts} that
\begin{align*}
\TV\left( \mathcal{W}_m(n, \Sigma), \, \mathcal{W}^{-1}_m\left(n, \beta \cdot \Sigma^{-1}\right) \right) &= \TV\left( \mL\left( \Sigma^{1/2} A \Sigma^{1/2} \right), \, \mL\left( \Sigma^{1/2} C \Sigma^{1/2} \right) \right) \\
&\le \TV\left( \mL\left( A \right), \, \mL\left( C \right) \right) \\
&\le \sqrt{\frac{1}{2} \cdot \KL\left( \mathcal{W}_m(n, I_m) \, \Big\| \, \mathcal{W}^{-1}_m(n, \beta \cdot I_m) \right)} \\
&= O\left( m^{3/2} n^{-1/2} \right)
\end{align*}
where the last inequality follows from the fact that $n \gg m^3$, Theorem \ref{thm:inverse-wishart} and Pinsker's inequality.

Suppose that $X \sim \mN\left(0, I_m + \theta_S' v_S v_S^\top \right)^{\otimes n}$ where $\theta_S' = \frac{\tau^2 k_n |S|}{4n}$. Then we have that the output $Y$ of $\mathcal{A}_{\text{2a}}$ satisfies $Y = n(n - m - 1) \cdot \hat{\Sigma}^{-1} \sim \mathcal{W}^{-1}_m\left(n, \beta \cdot \Sigma^{-1}\right)$ where
$$\Sigma = \left( I_m + \theta_S' v_S v_S^\top \right)^{-1} = I_m - \frac{\theta_S'}{1 + \theta_S'} \cdot v_S^\top v_S^\top = I_m - \theta_S v_S v_S^\top$$
Therefore it follows from the inequality above that
$$\TV\left( \mathcal{A}_{\text{2a}}\left( \mN\left(0, I_m + \theta_S' v_S v_S^\top \right)^{\otimes n} \right) , \, \mathcal{W}_m\left(n, I_m - \theta_S v_S v_S^\top\right) \right) = O\left( m^{3/2} n^{-1/2} \right)$$
Similarly, if $X \sim \mN\left(0, I_m \right)^{\otimes n}$ then we have that
$$\TV\left( \mathcal{A}_{\text{2a}}\left( \mN\left(0, I_m \right)^{\otimes n} \right) , \, \mathcal{W}_m\left(n, I_m \right) \right) = O\left( m^{3/2} n^{-1/2} \right)$$
applying the same argument with $\Sigma = I_m$. Now note that if $Y \sim \mathcal{W}_m\left(n, I_m - \theta_S v_S v_S^\top\right)$ then Lemma \ref{lem:invariance} implies that $\mathcal{A}_{\text{2b-3}}$ produces $Z \sim \mN\left(0, I_d - \theta_S v_S v_S^\top \right)^{\otimes n}$. Similarly, it follows that if $Y \sim \mathcal{W}_m(n, I_m)$ then Lemma \ref{lem:invariance} implies that $Z \sim \mN\left(0, I_d \right)^{\otimes n}$.

We now will use Lemma \ref{lem:tvacc} applied to the steps $\mathcal{A}_i$ above and the following sequence of distributions
\allowdisplaybreaks
\begin{align*}
\mathcal{P}_0 &= \mathcal{M}_{[m] \times [n]}(S \times T, p, q) \\
\mathcal{P}_1 &= \mN\left(0, I_m + \theta'_S v_S v_S^\top \right)^{\otimes n}\\
\mathcal{P}_{\text{2a}} &= \mathcal{W}_m\left(n, I_m - \theta_S v_S v_S^\top\right) \\
\mathcal{P}_{\text{2b-3}} &= \mN\left(0, I_d - \theta_S v_S v_S^\top \right)^{\otimes n}
\end{align*}
As in the statement of Lemma \ref{lem:tvacc}, let $\epsilon_i$ be any real numbers satisfying $\TV\left( \mathcal{A}_i(\mP_{i-1}), \mP_i \right) \le \epsilon_i$ for each step $i$. A direct application of Lemma \ref{lem:randomrotations}, shows that we can take $\epsilon_1 = O(m^{-1} n^{-1}) + k(4e^{-3})^{n/2k}$. The arguments above show we can take $\epsilon_{\text{2a}} = O(m^{3/2} n^{-1/2})$ and $\epsilon_{\text{2b-3}} = 0$. Lemma \ref{lem:tvacc} now implies the first bound in the theorem statement. The second bound follows from an analogous argument for the distributions
$$\mathcal{P}_0 = \text{Bern}(q)^{\otimes m \times n}, \quad \mathcal{P}_1 = \mN\left(0, I_m \right)^{\otimes n}, \quad \mathcal{P}_{\text{2a}} = \mathcal{W}_m\left(n, I_m \right) \quad \text{and} \quad \mathcal{P}_{\text{2b-3}} = \mN\left(0, I_d \right)^{\otimes n}$$
with $\epsilon_1 = O(m^{-1} n^{-1})$, $\epsilon_{\text{2a}} = O(m^{3/2} n^{-1/2})$ and $\epsilon_{\text{2b-3}} = 0$. This completes the proof of the theorem.
\end{proof}

\subsection{Comparing Wishart and Inverse Wishart}
\label{subsec:2-inverse-wishart}

This section is devoted to proving the upper bound on the KL divergence between Wishart matrices and their inverses in Theorem \ref{thm:inverse-wishart} used in the proof of Theorem \ref{thm:neg-spca}. As noted in the previous subsection, the next theorem also implies total variation convergence between Wishart and inverse Wishart when $n \gg d^3$ by Pinsker's inequality. This theorem is related to a line of recent research examining the total variation convergence between ensembles of random matrices in the regime where $n \gg d$. A number of recent papers have investigated the total variation convergence between the fluctuations of the Wishart and Gaussian orthogonal ensembles, also showing these converge when $n \gg d^3$ \cite{jiang2015approximation, bubeck2016testing, bubeck2016entropic, racz2019smooth}, convergence with other matrix ensembles at intermediate asymptotic scales of $d \ll n \ll d^3$ \cite{chetelat2019middle} and applications of these results to random geometric graphs \cite{bubeck2016testing, eldan2016information, brennan2019phase}.

Let $\Gamma_d(x)$ and $\psi_d(x)$ denote the multivariate gamma and digamma functions given by
$$\Gamma_d(a) = \pi^{d(d-1)/4} \cdot \prod_{i = 1}^d \Gamma\left( a - \frac{i - 1}{2} \right) \quad \text{and} \quad \psi_d(a) = \frac{\partial \log \Gamma_d(a)}{\partial a} = \sum_{i = 1}^d \psi\left( a - \frac{i - 1}{2} \right)$$
where $\Gamma(z)$ and $\psi = \Gamma'(z)/\Gamma(z)$ denote the ordinary gamma and digamma functions. We will need several approximations to the log-gamma and digamma functions to prove our desired bound on KL divergence. The classical Stirling series for the log-gamma function is
$$\log \Gamma(z) \sim \frac{1}{2} \log(2\pi) + \left( z - \frac{1}{2} \right) \log z - z + \sum_{k = 1}^\infty \frac{B_{2k}}{2k(2k-1)z^{2k - 1}}$$
where $B_m$ denotes the $m$th Bernoulli number. While this series does not converge absolutely for any $z$ because of the growth rate of the coefficients $B_{2k}$, its partial sums are increasingly accurate. More precisely, we have the following series approximation to the log-gamma function (see e.g. pg. 67 of \cite{remmert2013classical}) up to second order
$$\log \Gamma(z) = \frac{1}{2} \log(2\pi) + \left( z - \frac{1}{2} \right) \log z - z + \frac{1}{12z} + O(z^{-3})$$
as $z \to \infty$. A similar series expansion exists for the digamma function, given by
$$\psi(z) \sim \log z - \frac{1}{2z} - \sum_{k = 1}^\infty \frac{B_{2k}}{2kz^{2k}}$$
This series also exhibits the phenomenon that, while not converge absolutely for any $z$, its partial sums are increasingly accurate. We have the following third order expansion of $\psi(z)$ given by
$$\psi(z) = \log z - \frac{1}{2z} - \frac{1}{12z^2} + \frac {2}{z^{2}} \int _{0}^{\infty }\frac {t^{3}}{(t^{2}+z^{2})(e^{2\pi t}-1)} dt = \log z - \frac{1}{2z} - \frac{1}{12z^2} + O(z^{-4})$$
as $z \to \infty$. We now state and prove the main theorem of this section.

\begin{theorem}[Comparing Wishart and Inverse Wishart] \label{thm:inverse-wishart}
Let $n \ge d + 1$ and $m \ge d$ be positive integers such that $n = \Theta(m)$, $|m - n| = o(n)$ and $n - d = \Omega(n)$ as $m, n, d \to \infty$, and let $\beta = \frac{1}{m(n - d - 1)}$. Then
\begin{align*}
\KL\left( \mathcal{W}_d(n, I_d) \, \Big\| \, \mathcal{W}^{-1}_d(m, \beta \cdot I_d) \right) &= \frac{d^3}{6n} + \frac{s^2d(d + 1)}{8n^2} - \frac{5sd^3}{24n^2} + \frac{sd^3}{12mn} \\
&\quad \quad + O\left( d^2 n^{-3} |s|^3 + d^4 n^{-2} + d^2 n^{-1} \right)
\end{align*}
where $s = n - m$. In particular, when $m = n$ and $n \gg d^3$ it follows that
$$\KL\left( \mathcal{W}_d(n, I_d) \, \Big\| \, \mathcal{W}^{-1}_d(n, \beta \cdot I_d) \right) = o(1)$$
\end{theorem}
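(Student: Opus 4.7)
The plan is to obtain the KL divergence in closed form via the explicit densities of the Wishart and inverse Wishart distributions, and then expand the resulting expression asymptotically using the Stirling and digamma series. The density of $\mathcal{W}_d(n, I_d)$ at a positive definite $X$ is
$$p(X) = \frac{|X|^{(n-d-1)/2} \exp(-\Tr(X)/2)}{2^{nd/2}\, \Gamma_d(n/2)},$$
and for the inverse Wishart $\mathcal{W}_d^{-1}(m, \beta I_d)$, I would apply the change of variables $Y = \hat{\Sigma}^{-1}$ (with Jacobian $|Y|^{-(d+1)}$) to the density of $\hat{\Sigma} \sim \mathcal{W}_d(m, \beta I_d)$, giving
$$q(Y) = \frac{|Y|^{-(m+d+1)/2} \exp(-\Tr(Y^{-1})/(2\beta))}{2^{md/2}\, \beta^{md/2}\, \Gamma_d(m/2)}.$$
Subtracting the log-densities and integrating against $X \sim \mathcal{W}_d(n, I_d)$ then expresses $\KL$ as a linear combination of the three scalar moments $\E[\log|X|]$, $\E[\Tr(X)]$, and $\E[\Tr(X^{-1})]$, together with the deterministic terms $\log \Gamma_d(m/2) - \log \Gamma_d(n/2)$, $\log \beta$, and a multiple of $\log 2$.

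Next I would substitute the three classical Wishart moment identities, namely $\E[\Tr(X)] = nd$, $\E[\Tr(X^{-1})] = d/(n-d-1)$ (valid since $n \geq d+1$ by hypothesis), and $\E[\log|X|] = d \log 2 + \psi_d(n/2)$ (from the Bartlett decomposition, using $\E[\log \chi^2_k] = \psi(k/2) + \log 2$). The choice $\beta = 1/(m(n-d-1))$ is designed precisely so that $\E[\Tr(X^{-1})]/(2\beta) = md/2$, which cancels $\E[\Tr(X)]/2 = nd/2$ up to the remainder $-sd/2$. After also using $\log\beta = -\log m - \log(n-d-1)$, the KL divergence collapses to the closed form
$$\KL = \tfrac{n+m}{2}\psi_d(n/2) + md \log 2 - \tfrac{sd}{2} - \tfrac{md}{2}\log m - \tfrac{md}{2}\log(n-d-1) + \log\Gamma_d(m/2) - \log\Gamma_d(n/2),$$
depending only on elementary functions of $m, n, d$ and on the special functions $\psi_d$ and $\log \Gamma_d$.

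The final step is the asymptotic expansion in powers of $d/n$ and $s/n$. I would first Taylor-expand $\log \Gamma_d(m/2)$ about $n/2$ in powers of $s = n-m$, using $\tfrac{d}{dn}\log\Gamma_d(n/2) = \tfrac{1}{2}\psi_d(n/2)$ and its higher derivatives. Combining this with $\tfrac{n+m}{2}\psi_d(n/2)$ collapses those two terms to $m\,\psi_d(n/2) + \tfrac{s^2}{8}\psi_d'(n/2) + O(|s|^3\,\psi_d''(n/2))$. Each of $\psi_d(n/2)$, $\psi_d'(n/2)$, $\psi_d''(n/2)$ is a sum of $d$ terms $\psi^{(k)}((n-i+1)/2)$, which I would expand term-by-term via $\psi(z) = \log z - \tfrac{1}{2z} - \tfrac{1}{12 z^2} + O(z^{-4})$ together with the power series for $\log(1 - (i-1)/n)$. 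The logarithms $\log m = \log n + \log(1 - s/n)$ and $\log(n-d-1) = \log n + \log(1 - (d+1)/n)$ are expanded identically. Collecting terms by order in $d/n$ and $s/n$ then yields the four explicit summands stated in the theorem.

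The main obstacle is the bookkeeping in this last step. The individual pieces $m\,\psi_d(n/2)$, $-\tfrac{md}{2}\log m$, $-\tfrac{md}{2}\log(n-d-1)$, and $md\log 2$ each contain very large contributions of orders $nd\log n$, $d^2$, and $sd$, all of which must conspire to cancel so as to leave only the $d^3/n$, $s^2 d^2/n^2$, and $sd^3/n^2$ terms in the statement. Ensuring no stray contributions of lower order survive requires tracking the $s^0$, $s^1$, and $s^2$ coefficients of the Taylor expansion of $\log\Gamma_d(m/2) - \log\Gamma_d(n/2)$ to the relevant order in $d/n$, and carefully bounding the remainders of the digamma and Stirling series — this is precisely what produces the error term $O(d^4 n^{-2} + d^2 n^{-3}|s|^3 + d^2 n^{-1})$ in the theorem.
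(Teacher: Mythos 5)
Your plan is correct and follows essentially the same route as the paper: same densities, same reduction of the KL to the three Wishart moments $\E[\log|X|]$, $\E[\Tr X]$, $\E[\Tr X^{-1}]$, same observation that the choice of $\beta$ forces $-\E[\Tr X]/2 + \beta^{-1}\E[\Tr X^{-1}]/2 = -sd/2$, and same toolbox (Stirling and digamma asymptotic series) for the final expansion; your intermediate closed form for the KL matches the one the paper derives. The one place you diverge organizationally is in the last step: the paper expands $\log\Gamma_d(m/2)$, $\log\Gamma_d(n/2)$, and $\psi_d(n/2)$ each directly via their asymptotic series, whereas you first Taylor-expand $\log\Gamma_d(m/2)$ about $n/2$ in powers of $s=n-m$ so that the two gamma terms together with $\frac{n+m}{2}\psi_d(n/2)$ collapse to $m\,\psi_d(n/2) + \frac{s^2}{8}\psi_d'(n/2) + O(|s|^3)$, after which only expansions of $\psi_d$ and its derivatives at a single point are needed. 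That is a modest but genuine simplification of the bookkeeping — it isolates the $s$-dependence up front rather than carrying $m$ and $n$ through parallel Stirling expansions — and it produces the same error structure in the end. Both approaches are sound; yours is a cleaner organization of the same argument.
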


\begin{proof}
Note that the given conditions also imply that $m - d = \Omega(m)$. Let $X \sim \mathcal{W}_d(n, I_d)$ and $Y \sim \mathcal{W}^{-1}_d(m, \beta \cdot I_d)$. Throughout this section, $A \in \mathbb{R}^{d \times d}$ will denote a positive semidefinite matrix. It is well known that the Wishart distribution $\mathcal{W}_d(n, I_d)$ is absolutely continuous with respect to the Lebesgue measure on the cone $\mathcal{C}^{\text{PSD}}_d$ of positive semidefinite matrices in $\mathbb{R}^{d \times d}$ \cite{wishart1928generalised}. Furthermore the density of $X$ with respect to the Lebesgue measure can be written as
$$f_X(A) = \frac{1}{2^{nd/2} \cdot \Gamma_d\left( \frac{n}{2} \right)} \cdot |A|^{(n - d - 1)/2} \cdot \exp\left( - \frac{1}{2} \text{Tr}(A) \right)$$
A change of variables from $A \to \beta^{-1} \cdot A^{-1}$ shows that the distribution $\mathcal{W}_d^{-1}(m, \beta \cdot I_d)$ is also absolutely continuous with respect to the Lebesgue measure on $\mathcal{C}^{\text{PSD}}_d$. It is well-known (see e.g. \cite{gelman2013bayesian}) that the density of $Y$ can be written as
$$f_Y(A) = \frac{\beta^{-md/2}}{2^{md/2} \cdot \Gamma_d\left( \frac{m}{2} \right)} \cdot |A|^{-(m + d + 1)/2} \cdot \exp\left( - \frac{\beta^{-1}}{2} \cdot \text{Tr}\left(A^{-1}\right) \right)$$
Now note that
\begin{align*}
\log f_X(A) - \log f_Y(A) &= \frac{(m - n)d}{2} \cdot \log 2 + \log \Gamma_d \left( \frac{m}{2} \right) - \log \Gamma_d \left( \frac{n}{2} \right) + \frac{md}{2} \cdot \log \beta \\
&\quad \quad + \frac{m + n}{2} \cdot \log |A| - \frac{1}{2} \text{Tr}(A) + \frac{\beta^{-1}}{2} \cdot \text{Tr}\left(A^{-1}\right)
\end{align*}
The expectation of $\log |A|$ where $A \sim \mathcal{W}_d(n, I_d)$ is well known (e.g. see pg. 693 of \cite{bishop2006pattern}) to be equal to
$$\bE_{A \sim \mathcal{W}_d(n, I_d)} \left[ \log |A| \right] = \psi_d\left( \frac{n}{2} \right) + d \log 2$$
Furthermore, it is well known (e.g. see pg. 85 \cite{mardia1979multivariate}) that the mean of $A^{-1}$ if $A \sim \mathcal{W}_d(n, I_d)$ is
$$\bE_{A \sim \mathcal{W}_d(n, I_d)} \left[ A^{-1} \right] = \frac{I_d}{n - d - 1}$$
Therefore we have that $\bE_{A \sim \mathcal{W}_d(n, I_d)} \left[ \text{Tr}\left(A^{-1}\right) \right] = d/(n - d - 1)$. Similarly, we have that $\bE_{A \sim \mathcal{W}_d(n, I_d)} \left[ A \right] = n \cdot I_d$ and thus $\bE_{A \sim \mathcal{W}_d(n, I_d)} \left[ \text{Tr}(A) \right] = nd$. Combining these identities yields that
\begin{align}
\KL\left( \mathcal{W}_d(n, I_d) \, \Big\| \, \mathcal{W}^{-1}_d(m, \beta \cdot I_d) \right) &= \bE_{A \sim \mathcal{W}_d(n, I_d)} \left[ \log f_X(A) - \log f_Y(A) \right] \nonumber \\
&= \frac{(m - n)d}{2} \cdot \log 2 + \log \Gamma_d \left( \frac{m}{2} \right) - \log \Gamma_d \left( \frac{n}{2} \right) + \frac{md}{2} \cdot \log \beta \nonumber \\
&\quad \quad + \frac{m + n}{2} \cdot \left( \psi_d\left( \frac{n}{2} \right) + d \log 2 \right) - \frac{nd}{2} + \frac{\beta^{-1}d}{2(n - d - 1)} \label{eqn:klequation}
\end{align}
We now use the series approximations for $\Gamma(z)$ and $\psi(z)$ mentioned above to approximate each of these terms. Note that since $m - d = \Omega(m)$, we have that
\allowdisplaybreaks
\begin{align*}
\log \Gamma_d \left( \frac{m}{2} \right) &= \frac{d(d - 1)}{4} \log \pi + \sum_{i = 1}^d \log \Gamma\left( \frac{m - i + 1}{2} \right) \\
&= \frac{d(d - 1)}{4} \log \pi + \sum_{i = 1}^d \left( \frac{1}{2} \log(2\pi) + \left( \frac{m - i}{2} \right) \log \left( \frac{m - i + 1}{2} \right) \right. \\
&\quad \quad \quad \quad \quad \quad \quad \quad \quad \quad \quad \quad \left. - \left( \frac{m - i + 1}{2} \right) + \frac{1}{6(m - i + 1)} + O(m^{-3}) \right) \\
&= \frac{d(d - 1)}{4} \log \pi + \frac{d}{2} \log(2\pi) - \frac{dm}{2} + \frac{d(d - 1)}{4} + O(dm^{-3}) \\
&\quad \quad + \sum_{i = 1}^d \left( \left( \frac{m - i}{2} \right) \log \left( \frac{m}{2} \right) + \left( \frac{m - i}{2} \right) \log \left( 1 - \frac{i - 1}{m} \right) + \frac{1}{6(m - i + 1)} \right)
\end{align*}
using the fact that $\sum_{i = 1}^d (i - 1) = d(d - 1)/2$. Let $H_n$ denote the harmonic series $H_n = \sum_{i =1 }^n 1/i$. Using the well-known fact that $\psi(n + 1) = H_n - \gamma$ where $\gamma$ is the Euler-Mascheroni constant, we have that
\allowdisplaybreaks
\begin{align*}
\sum_{i = 1}^d \frac{1}{m - i + 1} &= H_m - H_{m - d} \\
&= \log(m + 1) - \log(m - d + 1) + O(m^{-1}) \\
&= \frac{d}{m + 1} + \frac{d^2}{2(m + 1)^2} + O(d^3 m^{-3}) + O(m^{-1}) \\
&=  O\left(dm^{-1}\right)
\end{align*}
where the second last estimate follows applying the Taylor approximation $\log(1 - x) = - x - \frac{1}{2} x^2 + O(x^{3})$ for $x = \frac{d}{m + 1} \in (0, 1)$. Applying this Taylor approximation again, we have that
\allowdisplaybreaks
\begin{align*}
&\sum_{i = 1}^d \left( \frac{m - i}{2} \right) \log \left( 1 - \frac{i - 1}{m} \right) \\
&\quad \quad = - \frac{1}{2} \sum_{i = 1}^d \left( \frac{(m - i)(i - 1)}{m} + \frac{(m - i)(i - 1)^2}{2m^2} + O\left(i^3m^{-2}\right) \right) \\
&\quad \quad = O(d^4 m^{-2}) - \frac{1}{2} \sum_{i = 1}^d \left( \frac{(m - 1)(i - 1)}{m} - \frac{(i - 1)^2}{m} + \frac{(m - 1)(i - 1)^2}{2m^2} - \frac{(i - 1)^3}{2m^2}\right) \\
&\quad \quad = O(d^4 m^{-2}) - \frac{(m - 1)d(d - 1)}{4m} + \frac{d(d - 1)(2d - 1)}{12m} - \frac{(m - 1)d(d - 1)(2d - 1)}{24m^2} + \frac{d^2(d - 1)^2}{16m^2} \\
&\quad \quad = O(d^4 m^{-2}) - \frac{d(d - 1)}{4} + \frac{d(d - 1)(2d + 5)}{24m}
\end{align*}
using the identities $\sum_{i = 1}^d (i - 1)^2 = d(d - 1)(2d - 1)/6$ and $\sum_{i = 1}^d (i - 1)^3 = d^2(d - 1)^2/4$. Combining all of these approximations and simplifying using the fact that $m - d = \Omega(m)$ yields that
\begin{align*}
\log \Gamma_d \left( \frac{m}{2} \right) &= \frac{d(d - 1)}{4} \log \pi + \frac{d}{2} \log(2\pi) - \frac{dm}{2} + \frac{dm}{2} \log \left( \frac{m}{2} \right) - \frac{d(d + 1)}{4} \log \left( \frac{m}{2} \right) \\
&\quad \quad + \frac{d(d - 1)(2d + 5)}{24m} + O\left(d^4 m^{-2} + dm^{-1} \right)
\end{align*}
as $m, d \to \infty$ and $m - d = \Omega(m)$. 
An analogous estimate is also true for $\log \Gamma_d \left( \frac{n}{2} \right)$. Similar approximations now yield since $n - d = \Omega(n)$, we have that
\begin{align*}
\psi_d\left( \frac{n}{2} \right) &= \sum_{i = 1}^d \left( \log \left( \frac{n - i + 1}{2} \right) - \frac{1}{n - i + 1} + O(n^{-2}) \right) \\
&= d \log \left( \frac{n}{2} \right) + \sum_{i = 1}^d \log \left( 1 - \frac{i - 1}{n} \right) - H_{n} + H_{n - d} + O(dn^{-2}) \\
&= d \log \left( \frac{n}{2} \right) - \sum_{i = 1}^d \left( \frac{i - 1}{n} + \frac{(i - 1)^2}{2n^2} + O\left(i^3 n^{-3}\right) \right) - \frac{d}{n + 1} - \frac{d^2}{2(n + 1)^2} \\
&\quad \quad + O\left(d^3 n^{-3} + dn^{-2} \right) \\
&= d \log \left( \frac{n}{2} \right) - \frac{d(d - 1)}{2n} - \frac{d(d - 1)(2d - 1)}{12n^2} - \frac{d}{n + 1} + O\left(d^4 n^{-3} + d^2 n^{-2} \right)
\end{align*}
Here we have expanded $\psi(n + 1) = H_n - \gamma$ to an additional order with the approximation
\begin{align*}
H_{n} - H_{n - d} &= \log(n + 1) - \log(n - d + 1) - \frac{1}{2(n + 1)} + \frac{1}{2(n - d + 1)} + O(n^{-2}) \\
&= \frac{d}{n + 1} + \frac{d^2}{2(n + 1)^2} + O(dn^{-2})
\end{align*}
Combining all of these estimates and simplifying with $\beta^{-1} = m(n - d - 1)$ now yields that
\allowdisplaybreaks
\begin{align*}
&\KL\left( \mathcal{W}_d(n, I_d) \, \Big\| \, \mathcal{W}^{-1}_d(m, \beta \cdot I_d) \right) \\
&\quad \quad = md \log 2 + \frac{md}{2} \log \beta - \frac{nd}{2} + \frac{\beta^{-1}d}{2(n - d - 1)} + \log \Gamma_d \left( \frac{m}{2} \right) - \log \Gamma_d \left( \frac{n}{2} \right) + \frac{m + n}{2} \cdot \psi_d\left( \frac{n}{2} \right) \\
&\quad \quad = md \log 2 + \frac{md}{2} \log \beta - \frac{nd}{2} + \frac{\beta^{-1}d}{2(n - d - 1)} - \frac{d(m - n)}{2} + \frac{dm}{2} \log \left( \frac{m}{2} \right) - \frac{dn}{2} \log \left( \frac{n}{2} \right) \\
&\quad \quad \quad \quad - \frac{d(d + 1)}{4} \log \left( \frac{m}{n} \right) + \frac{d(d - 1)(2d + 5)}{24} \cdot (m^{-1} - n^{-1}) + \frac{(m + n)d}{2} \log \left( \frac{n}{2} \right) \\
&\quad \quad \quad \quad - \frac{(m + n)d(d - 1)}{4n} - \frac{(m + n)d(d - 1)(2d - 1)}{24n^2} - \frac{(m + n)d}{2(n + 1)} + O\left( d^4 n^{-2} + d^2 n^{-1} \right) \\
&\quad \quad = - \frac{(m + n)d(d - 1)}{4n} - \frac{(m + n)d}{2(n + 1)} - \frac{d(d + 1)}{4} \log \left( \frac{m}{n} \right) - \frac{dm}{2} \log \left( 1 - \frac{d + 1}{n} \right) \\
&\quad \quad \quad \quad - \frac{(m + n)d(d - 1)(2d - 1)}{24n^2} + \frac{(n - m)d(d - 1)(2d + 5)}{24mn} + O\left( d^4 n^{-2} + d^2 n^{-1} \right) \\
&\quad \quad = - \frac{(m + n)d(d + 1)}{4n} + \frac{d(d + 1)}{4} \left( \frac{n - m}{n} + \frac{(n - m)^2}{2n^2} + O\left(n^{-3} |s|^3\right) \right) \\
&\quad \quad \quad \quad + \frac{dm}{2} \left( \frac{d + 1}{n} + \frac{(d + 1)^2}{2n^2} + O(d^3 n^{-3}) \right) - \frac{(m + n)d(d - 1)(2d - 1)}{24n^2} \\
&\quad \quad \quad \quad + \frac{sd(d - 1)(2d + 5)}{24mn} + O\left( d^4 n^{-2} + d^2 n^{-1} \right) \\
&\quad \quad = \frac{d^3}{6n} + \frac{s^2d(d + 1)}{8n^2} - \frac{5sd^3}{24n^2} + \frac{sd^3}{12mn} + O\left( d^2 n^{-3} |s|^3 + d^4 n^{-2} + d^2 n^{-1} \right)
\end{align*}
In the fourth equality, we used the fact that $1/(n + 1) = 1/n + O(n^{-2})$, that $s = n - m = o(n)$ and the Taylor approximation $\log(1 - x) = - x - \frac{1}{2} x^2 + O(x^{3})$ for $|x| < 1$. The last line follows from absorbing small terms into the error term. The second part of the theorem statement follows immediately from substituting $m = n$ and $s = 0$ into the bound above and noting that the dominant term is $d^3/6n$ when $n \gg d^3$.
\end{proof}

We now make two remarks on the theorem above. The first motivates the choice of the parameter $\beta$ to satisfy $\beta^{-1} = m(n - d - 1)$. Note that the KL divergence in Equation (\ref{eqn:klequation}) depends on $\beta$ through the terms
$$\frac{md}{2} \log \beta + \frac{\beta^{-1}d}{2(n - d - 1)}$$
which is minimized at the stationary point $\beta^{-1} = m(n - d - 1)$. Thus the KL divergence in Equation (\ref{eqn:klequation}) is minimized for a fixed pair $(m, n)$ at this value of $\beta$. We also remark that the distributions $\mathcal{W}_d(n, I_d)$ and $\mathcal{W}^{-1}_d(m, \beta \cdot I_d)$ only converge in KL divergence if $d \gg n^3$ as the expression in Theorem \ref{thm:inverse-wishart} is easily seen to not converge to zero if $d = O(n^3)$.

\section{Negative Correlations, Sparse Mixtures and Supervised Problems}
\label{sec:2-supervised}

In the first part of this section, we introduce and give a reduction to the intermediate problem imbalanced sparse Gaussian mixtures, as outlined in Section \ref{subsec:1-tech-design-matrices} and the beginning of Section \ref{sec:2-bernoulli-rotations}. This reduction is then used in the second part of this section, along with the reduction to negative sparse PCA in the previous section, as a subroutine in a reduction to robust sparse linear regression and mixtures of sparse linear regressions, as outlined in Section \ref{subsec:1-tech-decomposing}. Our reduction to imbalanced sparse Gaussian mixtures will also be used in Section \ref{sec:3-robust-and-supervised} to show computational lower bounds for robust sparse mean estimation.

\subsection{Reduction to Imbalanced Sparse Gaussian Mixtures}
\label{subsec:3-rsme-reduction}

\begin{figure}[t!]
\begin{algbox}
\textbf{Algorithm} $k$\textsc{-bpds-to-isgm}

\vspace{1mm}

\textit{Inputs}: Matrix $M \in \{0, 1\}^{m \times n}$, dense subgraph dimensions $k_m$ and $k_n$ where $k_n$ divides $n$ and the following parameters
\begin{itemize}
\item partition $F$ of $[n]$ into $k_n$ parts of size $n/k_n$, edge probabilities $0 < q < p \le 1$ and a slow growing function $w(n) = \omega(1)$
\item target $\pr{isgm}$ parameters $(N, d, \mu, \epsilon)$ satisfying that $\epsilon = 1/r$ for some prime number $r$,
$$wN \le k_nr\ell, \quad m \le d, \quad n \le k_nr^t \le \textnormal{poly}(n) \quad \text{and} \quad \mu \le \frac{c}{\sqrt{r^t(r - 1) \log(k_nmr^t)}}$$
for some $t \in \mathbb{N}$, a sufficiently small constant $c > 0$ and where $\ell = \frac{r^t - 1}{r - 1}$
\end{itemize}

\begin{enumerate}
\item \textit{Pad}: Form $M_{\text{PD}} \in \{0, 1\}^{m \times k_nr^t}$ by adding $k_nr^t - n$ new columns sampled i.i.d. from $\text{Bern}(q)^{\otimes m}$ to the right end of $M$. Let $F'$ be the partition formed by letting $F'_i$ be $F_i$ with exactly $r^t - n/k_n$ of the new columns.
\item \textit{Bernoulli Rotations}: Fix a partition $[k_nr\ell] = F_1'' \cup F_2'' \cup \cdots \cup F_{k_n}''$ into $k_n$ parts each of size $r\ell$ and compute the matrix $M_{\text{R}} \in \mathbb{R}^{m \times k_n r\ell}$ as follows:
\begin{enumerate}
\item[(1)] For each row $i$ and part $F_j'$, apply $\pr{Bern-Rotations}$ to the vector $(M_{\text{PD}})_{i, F_j'}$ of entries in row $i$ and in columns from $F_j'$ with matrix parameter $K_{r, t}$, rejection kernel parameter $R_{\pr{rk}} = k_n mr^t$, Bernoulli probabilities $0 < q < p \le 1$, $\lambda = \sqrt{1 + (r - 1)^{-1}}$, mean parameter $\lambda \sqrt{r^t(r - 1)} \cdot \mu$ and output dimension $r\ell$.
\item[(2)] Set the entries of $(M_{\text{R}})_{i, F''_j}$ to be the entries in order of the vector output in (1).
\end{enumerate}
\item \textit{Permute and Output}: Form $X \in \mathbb{R}^{d \times N}$ by choosing $N$ distinct columns of $M_{\text{R}}$ uniformly at random, embedding the resulting matrix as the first $m$ rows of $X$ and sampling the remaining $d - m$ rows of $X$ i.i.d. from $\mN(0, I_N)$. Output the columns $(X_1, X_2, \dots, X_N)$ of $X$.
\end{enumerate}
\vspace{0.5mm}

\end{algbox}
\caption{Reduction from bipartite $k$-partite planted dense subgraph to exactly imbalanced sparse Gaussian mixtures.}
\label{fig:isgmreduction}
\end{figure}

In this section, we give our reduction from $k$\pr{-bpds} to the intermediate problem \pr{isgm}, which we will reduce from in subsequent sections to obtain several of our main computational lower bounds. We present our reduction to $\pr{isgm}$ with dense Bernoulli rotations applied with the design matrix $K_{r, t}$ from Definition \ref{defn:Krt}, and at the end of this section sketch the variant using the random design matrix alternative $R_{n, \epsilon}$ introduced in Section \ref{subsec:2-Rne}. Throughout this section, the input $k$\pr{-bpds} instance will be described by its $m \times n$ adjacency matrix of Bernoulli random variables. The problem \pr{isgm}, imbalanced sparse Gaussian mixtures, is a simple vs. simple hypothesis testing problem defined formally below. A similar distribution was also used in \cite{diakonikolas2017statistical} to construct an instance of robust sparse mean estimation inducing the tight statistical-computational gap in the statistical query model.

\begin{definition}[Imbalanced Sparse Gaussian Mixtures]
Given some $\mu \in \mathbb{R}$ and $\epsilon \in (0, 1)$, let $\mu'$ be such that $\epsilon \cdot \mu' + (1 - \epsilon) \cdot \mu = 0$. For each subset $S \subseteq [d]$, $\pr{isgm}_D(n, S, d, \mu, \epsilon)$ denotes the distribution over $X = (X_1, X_2, \dots, X_n)$ where $X_i \in \mathbb{R}^d$ where
$$X_1, X_2, \dots, X_n \sim_{\textnormal{i.i.d.}} \pr{mix}_{\epsilon}\left( \mN(\mu \cdot \mathbf{1}_S, I_d), \mN(\mu' \cdot \mathbf{1}_S, I_d) \right)$$
\end{definition}

We will use the notation $\pr{isgm}(n, k, d, \mu, \epsilon)$ to refer to the hypothesis testing problem between $H_0: X_1, X_2, \dots, X_n \sim_{\text{i.i.d.}} \mN(0, I_d)$ and an alternative hypothesis $H_1$ sampling the distribution above where $S$ is chosen uniformly at random among all $k$-subsets of $[d]$. Our reduction $k$\textsc{-bpds-to-isgm} is shown in Figure \ref{fig:isgmreduction}. The next theorem encapsulates the total variation guarantees of this reduction. A key parameter is the prime number $r$, which is used to parameterize the design matrices $K_{r, t}$ in the $\pr{Bern-Rotations}$ step.

To show the tightest possible statistical-computational gaps in applications of this theorem, we ideally would want to take $n$ such that $n = \Theta(k_nr^t)$. When $r$ is growing with $N$, this induces number theoretic constraints on our choices of parameters that require careful attention and will be discussed in Section \ref{subsec:3-rsme}. Because of this subtlety, we have kept the statement of our next theorem technically precise and in terms of all of the free parameters of the reduction $k$\pr{-bpds-to-isgm}. Ignoring these number theoretic constraints, the reduction $k$\pr{-bpds-to-isgm} can be interpreted as essentially mapping an instance of $k\pr{-bpds}$ with parameters $(m, n, k_m, k_n, p, q)$ with $k_n = o(\sqrt{n})$, $k_m = o(\sqrt{m})$ and planted row indices $S$ where $|S| = k_m$ to the instance $\pr{isgm}_D(N, S, d, \mu, \epsilon)$ where $\epsilon \in (0, 1)$ is arbitrary and can vary with $n$. The target parameters $N, d$ and $\mu$ satisfy that
$$d = \Omega(m), \quad N = o(n) \quad \text{and} \quad \mu \asymp \frac{1}{\sqrt{\log n}} \cdot \sqrt{\frac{\epsilon k_n}{n}}$$
All of our applications will handle the number theoretic constraints to set parameters so that they nearly satisfy these conditions. The slow-growing function $w(n)$ is so that Step 3 subsamples the produced samples by a large enough factor to enable an application of finite de Finetti's theorem.

We now state our total variation guarantees for $k$\textsc{-bpds-to-isgm}. Given a partition $F$ of $[n]$ with $[n] = F_1 \cup F_2 \cup \cdots \cup F_{k_n}$, let $\mU_n(F)$ denote the distribution of $k_n$-subsets of $[n]$ formed by choosing one member element of each of $F_1, F_2, \dots, F_{k_n}$ uniformly at random. Let $\mU_{n, k_n}$ denote the uniform distribution on $k_n$-subsets of $[n]$.


\begin{theorem}[Reduction from $k$\pr{-bpds} to \pr{isgm}] \label{thm:isgmreduction}
Let $n$ be a parameter, $r = r(n) \ge 2$ be a prime number and $w(n) = \omega(1)$ be a slow-growing function. Fix initial and target parameters as follows:
\begin{itemize}
\item \textnormal{Initial} $k\pr{-bpds}$ \textnormal{Parameters:} vertex counts on each side $m$ and $n$ that are polynomial in one another, dense subgraph dimensions $k_m$ and $k_n$ where $k_n$ divides $n$, edge probabilities $0 < q < p \le 1$ with $\min\{q, 1 - q\} = \Omega(1)$ and $p - q \ge (mn)^{-O(1)}$, and a partition $F$ of $[n]$.
\item \textnormal{Target} $\pr{isgm}$ \textnormal{Parameters:} $(N, d, \mu, \epsilon)$ where $\epsilon = 1/r$ and there is a parameter $t = t(N) \in \mathbb{N}$ with
$$wN \le \frac{k_nr(r^t - 1)}{r - 1}, \quad m \le d \le \textnormal{poly}(n), \quad n \le k_nr^t \le \textnormal{poly}(n)  \quad \textnormal{and}$$
$$0 \le \mu \le \frac{\delta}{2 \sqrt{6\log (k_nmr^t) + 2\log (p - q)^{-1}}} \cdot \frac{1}{\sqrt{r^t(r - 1)(1 + (r - 1)^{-1})}}$$
where $\delta = \min \left\{ \log \left( \frac{p}{q} \right), \log \left( \frac{1 - q}{1 - p} \right) \right\}$.
\end{itemize}
Let $\mathcal{A}(G)$ denote $k$\textsc{-bpds-to-isgm} applied with the parameters above to a bipartite graph $G$ with $m$ left vertices and $n$ right vertices. Then $\mathcal{A}$ runs in $\textnormal{poly}(m, n)$ time and it follows that
\begin{align*}
\TV\left( \mathcal{A}\left( \mathcal{M}_{[m] \times [n]}(S \times T, p, q) \right), \, \pr{isgm}_D(N, S, d, \mu, \epsilon) \right) &= O\left( w^{-1} + k_n^{-2}m^{-2}r^{-2t} \right) \\
\TV\left( \mathcal{A}\left( \textnormal{Bern}(q)^{\otimes m \times n} \right), \, \mN(0, I_d)^{\otimes N} \right) &= O\left( k_n^{-2}m^{-2}r^{-2t} \right)
\end{align*}
for all subsets $S \subseteq [m]$ with $|S| = k_m$ and subsets $T \subseteq [n]$ with $|T| = k_n$ and $|T \cap F_i| = 1$ for each $1 \le i \le k_n$.
\end{theorem}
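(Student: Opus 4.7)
The plan is to analyze the three stages of $k\pr{-bpds-to-isgm}$ separately and combine the total variation estimates via Lemma~\ref{lem:tvacc}. I will track the evolution of both the planted distribution $\mathcal{M}_{[m]\times[n]}(S\times T,p,q)$ and the null distribution $\textnormal{Bern}(q)^{\otimes m\times n}$ through: (i) column padding, (ii) blockwise dense Bernoulli rotations with design matrix $K_{r,t}$, (iii) uniform subsampling plus Gaussian padding to $d$ rows. Stage~(i) is exact: appending fresh $\textnormal{Bern}(q)$ columns preserves both the null and the planted law (with the hidden set $T$ unchanged), so it incurs zero total variation error.

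For stage~(ii), I would invoke Lemma~\ref{lem:bern-rotations} once for each pair $(i,j)\in[m]\times[k_n]$. Two facts are needed to set up this invocation correctly. First, by Lemma~\ref{lem:Krtsv}, the nonzero singular values of $K_{r,t}$ all equal $\sqrt{1+(r-1)^{-1}}=\lambda$, so the prescribed $\lambda$ is a valid upper bound. Second, by Lemma~\ref{lem:suborthogonalmatrices}(3) every column of $K_{r,t}$ contains exactly $\ell=(r^t-1)/(r-1)$ entries equal to $(1-r)/\sqrt{r^t(r-1)}$ and $(r-1)\ell$ entries equal to $1/\sqrt{r^t(r-1)}$. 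Combining these with the choice of mean parameter $\lambda\sqrt{r^t(r-1)}\mu$ in the call to $\pr{Bern-Rotations}$, the effective post-rotation mean vector has entries exactly $\mu$ (on a fraction $(r-1)/r=1-\epsilon$ of coordinates) and $(1-r)\mu$ (on a fraction $1/r=\epsilon$). Since $\epsilon(1-r)\mu+(1-\epsilon)\mu=0$, this matches the definition of the \textsc{isgm} mixture means $\{\mu,\mu'\}$. The planted-vs.-null branches of Lemma~\ref{lem:bern-rotations} then give that each invocation is within $O(r^t R_{\pr{rk}}^{-3})=O((k_nm)^{-3}r^{-2t})$ of the target Gaussian; tensorizing over the $mk_n$ independent invocations yields a cumulative stage-(ii) error of $O((k_nm)^{-2}r^{-2t})$. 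After this stage, the intermediate matrix $M_R$ has (to within this TV error) independent columns: under the planted branch, column $\ell$ in block $F_j''$ is $\mN(\xi_{\ell,j}\mathbf{1}_S,I_m)$ with $\xi_{\ell,j}\in\{\mu,\mu'\}$ determined by the sign of the corresponding entry of $K_{r,t}$; aggregated over all $k_n r\ell$ columns, exactly $k_n(r-1)\ell$ are of positive type and $k_n\ell$ of negative type.

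Stage~(iii) is the main obstacle and is exactly where the slow-growing factor $w$ enters. Under the null branch $M_R$ is already i.i.d. $\mN(0,I_m)$ columnwise, so subsampling and Gaussian padding give $\mN(0,I_d)^{\otimes N}$ exactly. Under the planted branch, conditioning on the vector of types $(\tau_l)_{l\in[k_nr\ell]}$ the columns are independent Gaussians, so it suffices to compare the law of the $N$ types obtained by uniform-without-replacement subsampling of a pool of size $M=k_nr\ell$ with a deterministic fraction $1/r$ of negatives, against an i.i.d. $\textnormal{Bern}(\epsilon)$ label sequence of length $N$. The assumption $wN\le k_nr\ell$ forces $N/M\le 1/w$, and a standard hypergeometric-versus-binomial bound (equivalently, a Diaconis--Freedman type finite de Finetti inequality) yields TV $O(N^2/M)=O(1/w)$ for the joint type sequence. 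Padding the last $d-m$ rows with i.i.d. $\mN(0,1)$ noise is exact, so the stage-(iii) error under the planted branch is $O(w^{-1})$.

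Assembling these three bounds through Lemma~\ref{lem:tvacc} delivers the $H_1$ bound $O(w^{-1}+k_n^{-2}m^{-2}r^{-2t})$ and the $H_0$ bound $O(k_n^{-2}m^{-2}r^{-2t})$ stated in the theorem. Polynomial runtime follows because $k_nr^t\le\text{poly}(n)$ and each call to $\pr{Bern-Rotations}$ runs in $\text{poly}(R_{\pr{rk}},r^t)=\text{poly}(n)$ time by Lemma~\ref{lem:bern-rotations}, while the parameter hypotheses ensure the mean bound required by Lemma~\ref{lem:5c} is met. The genuinely subtle point in the whole argument is the subsampling step: in contrast to the earlier reductions, the individual types are deterministic given $T$, and their ``randomization'' in the output comes entirely from the random choice of $N$ columns; the choice $\epsilon=1/r$ matches the exact pool proportion, so no bias is introduced, and the only error is the $O(1/w)$ loss from sampling without replacement.
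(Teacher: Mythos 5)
Your three-stage decomposition is exactly the one the paper uses (padding exact, Bernoulli rotations via Lemmas~\ref{lem:bern-rotations}, \ref{lem:suborthogonalmatrices} and \ref{lem:Krtsv}, then subsampling), your accounting of the rotation-stage error and the tracking of the column-type structure are correct, and your discussion of the mean-parameter hypothesis matches the theorem's requirement exactly. The proof proposal is essentially the paper's proof.

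There is, however, one concrete slip in stage~(iii). You assert that the hypergeometric-versus-binomial comparison gives a total variation bound of $O(N^2/M)$ and then equate this with $O(1/w)$. Under the stated hypothesis $wN \le M = k_n r\ell$ we have $N/M \le 1/w$, and therefore $N^2/M = N\cdot(N/M) \le N/w$, which is \emph{not} $O(1/w)$ unless $N=O(1)$. The bound $O(N^2/M)$ is what a naive collision/coupling argument yields (couple sampling with and without replacement, and the probability of a repeated draw is $O(N^2/M)$); this is too weak for the theorem. What is needed is the sharper hypergeometric-binomial comparison $\TV\bigl(\text{Hyp}(M, K, N),\, \text{Bin}(N, K/M)\bigr) \le 4N/M$, which is Theorem~(4) of \cite{diaconis1980finite} and is what the paper invokes; with this the stage-(iii) planted-branch error is $4N/(k_n r\ell) \le 4w^{-1}$ as claimed. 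With that substitution your argument closes.
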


In the rest of this section, let $\mathcal{A}$ denote the reduction $k\pr{-bpds-to-isgm}$ with input $(M, F)$ where $F$ is a partition of $[n]$ and output $(X_1, X_2, \dots, X_N)$. Let $\text{Hyp}(N, K, n)$ denote a hypergeometric distribution with $n$ draws from a population of size $N$ with $K$ success states. We will also need the upper bound on the total variation between hypergeometric and binomial distributions given by
$$\TV\left( \text{Hyp}(N, K, n), \text{Bin}(n, K/N) \right) \le \frac{4n}{N}$$
This bound is a simple case of finite de Finetti's theorem and is proven in Theorem (4) in \cite{diaconis1980finite}. We now proceed to establish the total variation guarantees for Bernoulli rotations and subsampling as in Steps 2 and 3 of $\mathcal{A}$ in the next two lemmas.

Before proceeding to prove these lemmas, we make a definition that will be used in the next few sections. Suppose that $M$ is a $b \times a$ matrix, $F$ and $F'$ are partitions of $[ka]$ and $[kb]$ into $k$ equally sized parts and $S \subseteq [kb]$ is such that $|S \cap F_i| = 1$ for each $1 \le i \le k$. Then define the vector $v = v_{S, F, F'}(M) \in \mathbb{R}^{kb}$ to be such that the restriction $v_{F'_i}$ to the elements of $F'_i$ is given by
$$v_{F'_i} = M_{\cdot, \sigma_{F_i}(j)} \quad \textnormal{where } j \text{ is the unique element in } S \cap F_i$$
Here, $M_{\cdot, j}$ denotes the $j$th column of $M$ and $\sigma_{F_i}$ denotes the order preserving bijection from $F_i$ to $[b]$. In other words, $v_{S, F, F'}$ is the vector formed by concatenating the columns of $M$ along the partition $F'$, where the elements $S \cap F_i$ select which column appears along each part $F_i'$. In this section, whenever $S \cap F_i$ has size one, we will abuse notation and also use $S \cap F_i$ to denote its unique element.

\begin{lemma}[Bernoulli Rotations for $\pr{isgm}$] \label{lem:isgm-rotations}
Let $F'$ and $F''$ be a fixed partitions of $[k_nr^t]$ and $[k_nr\ell]$ into $k_n$ parts of size $r^t$ and $r\ell$, respectively, and let $S \subseteq [m]$ be a fixed $k_m$-subset. Let $T \subseteq [k_n r^t]$ where $|T \cap F_i'| = 1$ for each $1 \le i \le k_n$. Let $\mathcal{A}_{\textnormal{2}}$ denote Step 2 of $k\pr{-bpds-to-isgm}$ with input $M_{\textnormal{PD}}$ and output $M_{\textnormal{R}}$. Suppose that $p, q$ and $\mu$ are as in Theorem \ref{thm:isgmreduction}, then it follows that
\begin{align*}
&\TV\left( \mathcal{A}_{\textnormal{2}} \left( \mathcal{M}_{[m] \times [k_n r^t]} \left( S \times T, \textnormal{Bern}(p), \textnormal{Bern}(q) \right) \right), \, \mL\left( \mu \sqrt{r^t(r - 1)} \cdot \mathbf{1}_S v_{T, F', F''}(K_{r, t})^\top + \mN(0, 1)^{\otimes m \times k_nr\ell} \right) \right) \\
&\quad \quad = O\left(k_n^{-2}m^{-2}r^{-2t} \right) \\
&\TV\left( \mathcal{A}_{\textnormal{2}} \left(\textnormal{Bern}(q)^{\otimes m \times k_nr^t} \right), \, \mN(0, 1)^{\otimes m \times k_nr\ell} \right) = O\left(k_n^{-2}m^{-2}r^{-2t} \right)
\end{align*}
\end{lemma}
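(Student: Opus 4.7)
The plan is to analyze Step 2 as a product of $m \cdot k_n$ independent invocations of $\pr{Bern-Rotations}$, one per pair $(i, j) \in [m] \times [k_n]$, each acting on the $r^t$-dimensional vector $(M_{\textnormal{PD}})_{i, F'_j}$. Because the entries of $M_{\textnormal{PD}}$ are mutually independent Bernoullis (with rates determined by $S \times T$), and each call to $\pr{Bern-Rotations}$ uses fresh internal randomness, the joint law of $M_{\textnormal{R}}$ factorizes as a product over these $m k_n$ blocks; I will first verify a per-block guarantee and then tensorize.

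For each block $(i, j)$, under the planted input $\mathcal{M}_{[m] \times [k_n r^t]}(S \times T, \textnormal{Bern}(p), \textnormal{Bern}(q))$, if $i \in S$ then, since $|T \cap F'_j| = 1$, the vector $(M_{\textnormal{PD}})_{i, F'_j}$ is distributed as $\pr{pb}(r^t, s_{ij}, p, q)$ with $s_{ij} = \sigma_{F'_j}(T \cap F'_j)$, and if $i \notin S$ it is $\textnormal{Bern}(q)^{\otimes r^t}$. Lemma \ref{lem:Krtsv} shows that the singular values of $K_{r,t}$ are bounded by $\lambda = \sqrt{1 + (r-1)^{-1}}$, so the mean parameter $\mu' := \lambda \sqrt{r^t(r-1)} \cdot \mu = \sqrt{r^{t+1}} \cdot \mu$ used inside $\pr{Bern-Rotations}$ satisfies the scalar bound required by Lemma \ref{lem:5c} at rejection-kernel parameter $R_{\pr{rk}} = k_n m r^t$ --- this is exactly what the hypothesis on $\mu$ in the theorem statement encodes, after multiplication by $\sqrt{r^t(r-1)(1 + (r-1)^{-1})} = \sqrt{r^{t+1}}$. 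Hence Lemma \ref{lem:bern-rotations} applies and gives that the output of the $(i,j)$-th block lies within $O(r^t R_{\pr{rk}}^{-3}) = O(k_n^{-3} m^{-3} r^{-2t})$ total variation of $\mN(\mu\sqrt{r^t(r-1)} \cdot (K_{r,t})_{\cdot, s_{ij}}, I_{r\ell})$ when $i \in S$ (the $\lambda^{-1}$ factor appearing in the conclusion of Lemma \ref{lem:bern-rotations} cancels the extra $\lambda$ baked into $\mu'$), and within the same error of $\mN(0, I_{r\ell})$ otherwise; the latter statement also handles every block under $H_0$.

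Applying Fact \ref{tvfacts} to tensorize across the $m k_n$ independent blocks then yields joint error at most $m k_n \cdot O(k_n^{-3} m^{-3} r^{-2t}) = O(k_n^{-2} m^{-2} r^{-2t})$, matching both bounds in the lemma. The $(i,j)$ block occupies coordinates $(i, F''_j)$ of $M_{\textnormal{R}}$, so for $i \in S$ the mean restricted to $F''_j$ equals $\mu\sqrt{r^t(r-1)} \cdot (K_{r,t})_{\cdot, s_{ij}}$, which by the definition of $v_{T, F', F''}$ is precisely $\mu\sqrt{r^t(r-1)}$ times the restriction of $v_{T, F', F''}(K_{r,t})$ to $F''_j$; for $i \notin S$ the mean is zero. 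Assembling over all $(i,j)$ produces the rank-one spike $\mu\sqrt{r^t(r-1)} \cdot \mathbf{1}_S v_{T, F', F''}(K_{r,t})^\top$ claimed in the lemma.

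The only delicate point is the bookkeeping that aligns the per-block column index $s_{ij} = \sigma_{F'_j}(T \cap F'_j)$ with the natural concatenation order used to define $v_{T, F', F''}$; since $\sigma_{F'_j}$ is the same order-preserving bijection $F'_j \to [r^t]$ employed in that definition, the identification is immediate. I do not anticipate any essential difficulty beyond this indexing check and the routine rearrangement of the $\mu$ bound into the form required by Lemma \ref{lem:5c}.
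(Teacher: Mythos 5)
Your proposal is correct and follows essentially the same route as the paper's proof: decompose $\mathcal{A}_2$ into the $m k_n$ independent per-block applications of $\pr{Bern-Rotations}$, invoke Lemma~\ref{lem:bern-rotations} with the singular value bound $\lambda = \sqrt{1 + (r-1)^{-1}}$ from Lemma~\ref{lem:Krtsv} and $R_{\pr{rk}} = k_n m r^t$, and then tensorize via Fact~\ref{tvfacts} to obtain the stated $O(k_n^{-2} m^{-2} r^{-2t})$ bound; your bookkeeping of how $\sqrt{r^t(r-1)(1 + (r-1)^{-1})} = \sqrt{r^{t+1}}$ converts the hypothesis on $\mu$ into the Lemma~\ref{lem:5c} condition, and how the $\lambda^{-1}$ in Lemma~\ref{lem:bern-rotations} cancels the $\lambda$ in the reduction's mean parameter, is also what the paper relies on. The identification of the assembled per-block means with the rank-one matrix $\mu\sqrt{r^t(r-1)}\cdot\mathbf{1}_S v_{T,F',F''}(K_{r,t})^\top$ via the order-preserving bijections $\sigma_{F'_j}$ is likewise the same argument.
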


\begin{proof}
First consider the case where $M_{\textnormal{PD}} \sim \mathcal{M}_{[m] \times [k_nr^t]} \left( S \times T, \textnormal{Bern}(p), \textnormal{Bern}(q) \right)$. Observe that the subvectors of $M_{\textnormal{PD}}$ are distributed as
$$(M_{\textnormal{PD}})_{i, F_j'} \sim \left\{ \begin{array}{ll} \pr{pb}\left(F_j', T \cap F_j', p, q\right) &\textnormal{if } i \in S \\ \textnormal{Bern}(q)^{\otimes r^t} &\textnormal{otherwise} \end{array} \right.$$
and are independent. Combining upper bound on the singular values of $K_{r, t}$ in Lemma \ref{lem:Krtsv}, Lemma \ref{lem:bern-rotations} applied with $R_{\pr{rk}} = k_n m r^t$ and the condition on $\mu$ in the statement of Theorem \ref{thm:isgmreduction} implies that
\begin{align*}
\TV\left( (M_{\textnormal{R}})_{i, F''_j}, \, \mN\left( \mu \sqrt{r^t(r - 1)} \cdot (K_{r, t})_{\cdot, T \cap F_j'}, I_{r\ell} \right) \right) &= O\left(k_n^{-3}m^{-3}r^{-2t} \right) \quad \textnormal{if } i \in S\\
\TV\left( (M_{\textnormal{R}})_{i, F''_j}, \, \mN\left( 0, I_{r\ell} \right) \right) &= O\left(k_n^{-3}m^{-3}r^{-2t} \right) \quad \textnormal{otherwise}
\end{align*}
Now observe that the subvectors $(M_{\textnormal{R}})_{i, F''_j}$ are also independent. Therefore the tensorization property of total variation in Fact \ref{tvfacts} implies that $\TV\left( M_{\textnormal{R}}, \mL(Z) \right) = O\left(k_n^{-2}m^{-2}r^{-2t} \right)$ where $Z$ is defined so that its subvectors $Z_{i, F_j''}$ are independent and distributed as
$$Z_{i, F_j''} \sim \left\{ \begin{array}{ll} \mN\left( \mu \sqrt{r^t(r - 1)} \cdot (K_{r, t})_{\cdot, T \cap F_j'}, I_{r\ell} \right) &\textnormal{if } i \in S \\ \mN\left( 0, I_{r\ell} \right) &\textnormal{otherwise} \end{array} \right.$$
Note that the entries of $Z$ are independent Gaussians each with variance $1$. Furthermore, the mean of $Z$ can be verified to be exactly $\mu \sqrt{r^t(r - 1)} \cdot \mathbf{1}_S v_{T, F', F''}(K_{r, t})^\top$. This completes the proof of the first total variation upper bound in the statement of the lemma. The second bound follows from the same argument above applied with $S = \emptyset$.
\end{proof}

\begin{lemma}[Subsampling for $\pr{isgm}$] \label{lem:subsampling}
Let $F', F'', S$ and $T$ be as in Lemma \ref{lem:isgm-rotations}. Let $\mathcal{A}_{\textnormal{3}}$ denote Step 3 of $k\pr{-pds-to-isgm}$ with input $M_{\textnormal{R}}$ and output $(X_1, X_2, \dots, X_N)$. Then
$$\TV\left( \mathcal{A}_{\textnormal{3}} \left( \tau \cdot \mathbf{1}_S v_{T, F', F''}(K_{r, t})^\top + \mN(0, 1)^{\otimes m \times k_n r\ell} \right), \pr{isgm}_D(N, S, d, \mu, \epsilon) \right) \le 4w^{-1}$$
where $\epsilon = 1/r$ and $\mu = \frac{\tau}{\sqrt{r^t(r - 1)}}$. Furthermore, it holds that $\mathcal{A}_{\textnormal{3}} \left( \mN(0, 1)^{\otimes m \times k_n r\ell} \right) \sim \mN(0, I_d)^{\otimes N}$.
\end{lemma}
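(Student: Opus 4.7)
The plan is to reduce the statement to a comparison of two scalar count distributions via an exchangeability argument. First I will describe the input distribution column-by-column. By Lemma \ref{lem:suborthogonalmatrices}, each column of $K_{r,t}$ contains exactly $\ell$ entries equal to $\frac{1-r}{\sqrt{r^t(r-1)}}$ and $(r-1)\ell$ entries equal to $\frac{1}{\sqrt{r^t(r-1)}}$. Consequently the vector $v = v_{T,F',F''}(K_{r,t}) \in \mathbb{R}^{k_n r \ell}$ has exactly $k_n \ell$ entries equal to $\frac{1-r}{\sqrt{r^t(r-1)}}$ and $k_n(r-1)\ell$ entries equal to $\frac{1}{\sqrt{r^t(r-1)}}$. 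Hence the $j$th column of $\tau \cdot \mathbf{1}_S v^\top + \mathcal{N}(0,1)^{\otimes m \times k_n r \ell}$ is an independent Gaussian with mean $\mu' \mathbf{1}_S$ in $k_n \ell$ of the column positions and mean $\mu \mathbf{1}_S$ in the remaining $k_n(r-1)\ell$, where $\mu = \tau/\sqrt{r^t(r-1)}$ and $\mu' = -(r-1)\mu$. A direct calculation shows $\epsilon \mu' + (1-\epsilon)\mu = 0$ for $\epsilon = 1/r$, which matches the definition of $\pr{isgm}_D$.

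Next, I will analyze Step 3. Introduce the ``type'' label $L_j \in \{\mu, \mu'\}$ for each column $j \in [k_n r \ell]$ according to the two values above; so the population has exactly $K := k_n \ell$ labels equal to $\mu'$ and $k_n(r-1)\ell$ labels equal to $\mu$. Under the interpretation that Step 3 selects a uniformly random \emph{ordered} $N$-tuple of distinct column indices, the induced sequence of types $(L_{s_1}, \dots, L_{s_N})$ is exchangeable. Let $M_1$ denote the number of $\mu'$-labels among the $N$ subsampled columns, so $M_1 \sim \mathrm{Hyp}(k_n r \ell, K, N)$. By exchangeability, the conditional distribution of the full sequence given $M_1 = m$ is uniform over all $\binom{N}{m}$ arrangements of $m$ labels $\mu'$ and $N-m$ labels $\mu$. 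On the $\pr{isgm}_D$ side, the analogous count satisfies $M_2 \sim \mathrm{Bin}(N, 1/r)$, and by i.i.d.\ exchangeability the conditional arrangement of types is likewise uniform given $M_2$. Conditional on the type sequence, the column distributions coincide in both models: independent Gaussians with mean $\mu \mathbf{1}_S$ or $\mu' \mathbf{1}_S$ in the first $m$ coordinates and independent standard normals in the remaining $d-m$ coordinates (the padding in Step 3 matches the tail of the $\mathcal{N}(\cdot, I_d)$ law because $\mathbf{1}_S$ is supported in $[m] \subseteq [d]$).

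By the coupling / data-processing inequality, these observations yield the total variation bound
\begin{equation*}
\TV\!\left(\mathcal{A}_3\!\left(\tau \cdot \mathbf{1}_S v^\top + \mathcal{N}(0,1)^{\otimes m \times k_n r \ell}\right), \pr{isgm}_D(N, S, d, \mu, \epsilon)\right) \le \TV\bigl(\mathrm{Hyp}(k_n r \ell, K, N),\, \mathrm{Bin}(N, 1/r)\bigr).
\end{equation*}
Invoking the finite de Finetti bound $\TV(\mathrm{Hyp}(N_{\mathrm{pop}}, K, n), \mathrm{Bin}(n, K/N_{\mathrm{pop}})) \le 4n/N_{\mathrm{pop}}$ from \cite{diaconis1980finite} with $N_{\mathrm{pop}} = k_n r \ell$, $K = k_n \ell$ and $n = N$ gives the upper bound $4N/(k_n r \ell) \le 4 w^{-1}$, using the hypothesis $wN \le k_n r \ell$. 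For the null case, if the input is $\mathcal{N}(0,1)^{\otimes m \times k_n r \ell}$, then the columns are already i.i.d.\ $\mathcal{N}(0, I_m)$, subsampling $N$ of them yields $N$ i.i.d.\ $\mathcal{N}(0, I_m)$ vectors, and padding with independent $\mathcal{N}(0, I_{d-m})$ entries produces exactly $\mathcal{N}(0, I_d)^{\otimes N}$.

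The main conceptual obstacle is confirming the conditional-uniformity identity in the input model; once the subsampling is interpreted as producing a uniformly random ordered tuple, exchangeability makes the conditional distribution given $M_1$ match that of $\pr{isgm}_D$ given $M_2$, and the entire argument collapses to a one-dimensional comparison of count distributions. If instead the subsampling were to preserve the natural ordering of the population, one would lose exchangeability; this can be resolved transparently by prepending a uniform random permutation (which does not affect the output distribution in later steps since $\pr{isgm}_D$ is i.i.d.), so the theorem's bound is insensitive to this interpretational choice.
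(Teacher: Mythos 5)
Your proof is correct and takes essentially the same approach as the paper: reduce the problem to comparing the count of $\mu'$-type columns, which is $\mathrm{Hyp}(k_n r\ell, k_n\ell, N)$ on the reduction side and $\mathrm{Bin}(N, 1/r)$ on the $\pr{isgm}_D$ side (the paper equivalently tracks the $\mu$-type count and compares $\mathrm{Hyp}(k_n r\ell, k_n(r-1)\ell, N)$ to $\mathrm{Bin}(N, 1-\epsilon)$, which is the same comparison), and then invoke the Diaconis--Freedman finite de Finetti bound. Your treatment is a bit more explicit about the exchangeability of the type sequence under the "ordered tuple" interpretation of Step 3, which is the correct reading needed to make the conditioning argument go through; the paper states the resulting factorization tersely without dwelling on this.
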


\begin{proof}
Suppose that $M_{\textnormal{R}} \sim \tau \cdot \mathbf{1}_S K_{T, F', F''}^\top + \mN(0, 1)^{\otimes m \times k_nr\ell}$. For fixed $S, T, F'$ and $F''$, the entries of $M_{\textnormal{R}}$ are independent. Observe that the columns of $M_{\text{R}}$ are independent and either distributed according $\mN(\mu \cdot \mathbf{1}_S, I_m)$ or $\mN(\mu' \cdot \mathbf{1}_S, I_m)$ where $\mu' =  \tau(1 - r)/\sqrt{r^t(r - 1)}$ depending on whether the entry of $v_{T, F', F''}(K_{r, t})$ at the index corresponding to the column is $1/\sqrt{r^t(r - 1)}$ or $(1 - r)/\sqrt{r^t(r - 1)}$.

By Lemma \ref{lem:suborthogonalmatrices}, it follows that each column of $K_{r, t}$ contains exactly $\ell$ entries equal to $(1 - r)/\sqrt{r^t(r - 1)}$. This implies that exactly $k_n(r - 1)\ell$ entries of $v_{T, F', F''}(K_{r, t})$ are equal to $1/\sqrt{r^t(r - 1)}$. Define $\mR_{N}(s)$ to be the distribution on $\mathbb{R}^N$ with a sample $v \sim \mR_{N}(s)$ generated by first choosing an $s$-subset $U$ of $[N]$ uniformly at random and then setting $v_i = 1/\sqrt{r^t(r - 1)}$ if $i \in U$ and $v_i = (1 - r)/\sqrt{r^t(r - 1)}$ if $i \not \in U$. Note that the number of columns distributed as $\mN(\mu \cdot \mathbf{1}_S, I_m)$ in $M_{\text{R}}$ chosen to be in $X$ is distributed according to $\text{Hyp}(k_nr\ell, k_n(r - 1)\ell, N)$. Step 3 of $\mathcal{A}$ therefore ensures that, if $M_{\textnormal{R}}$ is distributed as above, then
$$X \sim \mL\left( \tau \cdot \mathbf{1}_{S} \mR_N(\text{Hyp}(k_n\ell, k_n(r - 1)\ell, N))^\top + \mN(0, 1)^{\otimes d \times N} \right)$$
Observe that the data matrix for a sample from $\pr{isgm}_D(N, S, d, \mu, \epsilon)$ can be expressed similarly as
$$\pr{isgm}_D(N, S, d, \mu, \epsilon) = \mL\left( \tau \cdot \mathbf{1}_{S} \mR_n(\text{Bin}(N, 1 - \epsilon))^\top + \mN(0, 1)^{\otimes d \times N} \right)$$
where again we set $\mu = \tau/\sqrt{r^t(r - 1)}$. The conditioning property of $\TV$ in Fact \ref{tvfacts} now implies that
$$\TV\left( \mL(X), \pr{isgm}_D(N, S, d, \mu, \epsilon) \right) \le \TV\left(\text{Bin}(N, 1 - \epsilon), \text{Hyp}\left(k_nr\ell, k_n(r - 1)\ell, N\right) \right) \le \frac{4N}{k_nr\ell} \le 4w^{-1}$$
The last inequality follows from the application of Theorem (4) in \cite{diaconis1980finite} to hypergeometric distributions above along with the fact that $1 - \epsilon = (k_n(r - 1)\ell)/k_nr\ell$ and $wN \le k_n r\ell$. This completes the proof of the upper bound in the lemma statement. Now consider applying the above argument with $\tau = 0$. It follows that $\mathcal{A}_{\textnormal{3}} \left( \mN(0, 1)^{\otimes m \times k_n r\ell} \right) \sim \mN(0, 1)^{\otimes d \times N} = \mN(0, I_d)^{\otimes N}$, which completes the proof of the lemma.
\end{proof}

We now combine these lemmas to complete the proof of Theorem \ref{thm:isgmreduction}.

\begin{proof}[Proof of Theorem \ref{thm:isgmreduction}]
We apply Lemma \ref{lem:tvacc} to the steps $\mathcal{A}_i$ of $\mathcal{A}$ under each of $H_0$ and $H_1$ to prove Theorem \ref{thm:isgmreduction}. Define the steps of $\mathcal{A}$ to map inputs to outputs as follows
$$(M, F) \xrightarrow{\mathcal{A}_1} (M_{\text{PD}}, F') \xrightarrow{\mathcal{A}_2} (M_{\text{R}}, F'') \xrightarrow{\mathcal{A}_{\text{3}}} (X_1, X_2, \dots, X_N)$$
We first prove the desired result in the case that $H_1$ holds. Consider Lemma \ref{lem:tvacc} applied to the steps $\mathcal{A}_i$ above and the following sequence of distributions
\allowdisplaybreaks
\begin{align*}
\mathcal{P}_0 &= \mathcal{M}_{[m] \times [n]}(S \times T, \textnormal{Bern}(p), \textnormal{Bern}(q)) \\
\mathcal{P}_1 &= \mathcal{M}_{[m] \times [k_nr^t]} \left( S \times T, \textnormal{Bern}(p), \textnormal{Bern}(q) \right) \\
\mathcal{P}_2 &=\mu \sqrt{r^t(r - 1)} \cdot \mathbf{1}_{S} v_{T, F', F''}(K_{r, t})^\top + \mN(0, 1)^{\otimes m \times k_nr\ell} \\
\mathcal{P}_{\text{3}} &= \pr{isgm}_D(N, S, d, \mu, \epsilon)
\end{align*}
As in the statement of Lemma \ref{lem:tvacc}, let $\epsilon_i$ be any real numbers satisfying $\TV\left( \mathcal{A}_i(\mP_{i-1}), \mP_i \right) \le \epsilon_i$ for each step $i$. By construction, the step $\mathcal{A}_1$ is exact and we can take $\epsilon_1 = 0$. Lemma \ref{lem:isgm-rotations} yields that we can take $\epsilon_2 = O\left(k_n^{-2}m^{-2}r^{-2t} \right)$. Applying Lemma \ref{lem:subsampling} yields that we can take $\epsilon_{\text{3}} = 4w^{-1}$. By Lemma \ref{lem:tvacc}, we therefore have that
$$\TV\left( \mathcal{A}\left( \mathcal{M}_{[m] \times [n]}(S \times T, p, q) \right), \, \pr{isgm}_D(N, S, d, \mu, \epsilon) \right) = O\left( w^{-1} + k_n^{-2}m^{-2}r^{-2t} \right)$$
which proves the desired result in the case of $H_1$. Now consider the case that $H_0$ holds and Lemma \ref{lem:tvacc} applied to the steps $\mathcal{A}_i$ and the following sequence of distributions
$$\mathcal{P}_0 = \text{Bern}(Q)^{\otimes m \times n}, \quad \mathcal{P}_1 = \text{Bern}(Q)^{\otimes m \times k_nr^t}, \quad \mathcal{P}_2 = \mN(0, 1)^{\otimes m \times k_nr\ell} \quad \text{and} \quad \mathcal{P}_{\text{3}} = \mN(0, I_d)^{\otimes N}$$
As above, Lemmas \ref{lem:isgm-rotations} and \ref{lem:subsampling} imply that we can take
$$\epsilon_1 = 0, \quad \epsilon_2 = O\left(k_n^{-2}m^{-2}r^{-2t} \right) \quad \text{and} \quad \epsilon_{\text{3}} = 0$$
By Lemma \ref{lem:tvacc}, we therefore have that
$$\TV\left( \mathcal{A}\left( \textnormal{Bern}(q)^{\otimes m \times n} \right), \, \mN(0, I_d)^{\otimes N} \right) = O\left(k_n^{-2}m^{-2}r^{-2t} \right)$$
which completes the proof of the theorem.
\end{proof}

As discussed in Section \ref{subsec:2-Rne}, we can replace $K_{r, t}$ in $k\textsc{-bpds-to-isgm}$ with the random matrix alternative $R_{L, \epsilon}$. More precisely, let $k\textsc{-bpds-to-isgm}_R$ denote the reduction in Figure \ref{fig:isgmreduction} with the following changes:
\begin{itemize}
\item At the beginning of the reduction, rejection sample $R_{L, \epsilon}$ for at most $\Theta((\log L)^2)$ iterations until the criteria of Lemma \ref{lem:Rne} are met, as outlined in Section \ref{subsec:2-Rne}. Let $A \in \mathbb{R}^{L \times L}$ be the resulting matrix or stop the reduction if no such matrix is found. The latter case contributes $L^{-\omega(1)}$ to each of the total variation errors in Corollary \ref{thm:mod-isgmreduction}.
\item The dimensions $r\ell$ and $r^t$ of the matrix $K_{r, t}$ used in $\pr{Bern-Rotations}$ in Step 2 are both replaced throughout the reduction by the parameter $L$. This changes the output dimensions of $M_{\text{PD}}$ and $M_{\text{R}}$ in Steps 1 and 2 to both be $m \times k_n L$.
\item In Step 2, apply $\pr{Bern-Rotations}$ with $A$ instead of $K_{r, t}$ and let $\lambda = C$ where $C$ is the constant in Lemma \ref{lem:Rne}.
\end{itemize}
The reduction $k\textsc{-bpds-to-isgm}_R$ eliminates a number-theoretic constraint in $k\textsc{-bpds-to-isgm}$ arising from the fact the intermediate matrix $M_{\text{R}}$ has a dimension that must be of the form $k_n r^t$ for some integer $t$. In contrast, $k\textsc{-bpds-to-isgm}_R$ only requires that this dimension of $M_{\text{R}}$ be a multiple of $k_n$. This will remove the condition (\pr{t}) from our computational lower bounds for $\pr{rsme}$, which is only restrictive in the very small $\epsilon$ regime of $\epsilon = n^{-\Omega(1)}$. We will deduce this computational lower bound for \pr{rsme} implied by the reduction $k\textsc{-bpds-to-isgm}_R$ formally in Section \ref{subsec:3-rsme}.

The reduction $k\textsc{-bpds-to-isgm}_R$ can be analyzed using an argument identical to the one above, with Lemma \ref{lem:Rne} used in place of Lemma \ref{lem:Krtsv} and accounting for the additional $L^{-\omega(1)}$ total variation error incurred by failing to obtain a $R_{n, \epsilon}$ satisfying the criteria in Lemma \ref{lem:Rne}. Carrying this out yields the following corollary. We remark that the new condition $\epsilon \gg L^{-1} \log L$ in the corollary below will amount to the condition $\epsilon \gg N^{-1/2} \log N$ in our computational lower bounds. This is because, in our applications, we will typically set $N = \tilde{\Theta}(k_n L)$ and $k_n$ to be very close to but slightly smaller than $\sqrt{n} = \tilde{\Theta}(\sqrt{N})$, to ensure that the input $k\pr{-bpds}$ instance is hard. These conditions together with $\epsilon \gg L^{-1} \log L$ amount to the condition on the target parameters given by $\epsilon \gg N^{-1/2} \log N$.

\begin{corollary}[Reduction from $k$\pr{-bpds} to \pr{isgm} with $R_{L, \epsilon}$] \label{thm:mod-isgmreduction}
Let $n$ be a parameter and let $w(n) = \omega(1)$ be a slow-growing function. Fix initial and target parameters as follows:
\begin{itemize}
\item \textnormal{Initial} $k\pr{-bpds}$ \textnormal{Parameters:} $m, n, k_m, k_n, p, q$ and $F$ as in Theorem \ref{thm:isgmreduction}.
\item \textnormal{Target} $\pr{isgm}$ \textnormal{Parameters:} $(N, d, \mu, \epsilon)$ such that there is a parameter $L = L(N) \in \mathbb{N}$ such that $L(N) \to \infty$ and it holds that
$$\max\{wN, n\} \le k_n L \le \textnormal{poly}(n), \quad m \le d \le \textnormal{poly}(n), \quad \frac{w\log L}{L} \le \epsilon \le \frac{1}{2} \quad \textnormal{and}$$
$$0 \le \mu \le \frac{C \delta}{\sqrt{\log (k_nmL) + \log (p - q)^{-1}}} \cdot \sqrt{\frac{\epsilon}{L}}$$
for some sufficiently small constant $C > 0$, where $\delta$ is as in Theorem \ref{thm:isgmreduction}.
\end{itemize}
If $\mathcal{A}$ denotes $k\textsc{-bpds-to-isgm}_R$ applied with the parameters above, then $\mathcal{A}$ runs in $\textnormal{poly}(m, n)$ time and
\begin{align*}
\TV\left( \mathcal{A}\left( \mathcal{M}_{[m] \times [n]}(S \times T, p, q) \right), \, \pr{isgm}_D(N, S, d, \mu, \epsilon) \right) &= o(1) \\
\TV\left( \mathcal{A}\left( \textnormal{Bern}(q)^{\otimes m \times n} \right), \, \mN(0, I_d)^{\otimes N} \right) &= o(1)
\end{align*}
for all $k_m$-subsets $S \subseteq [m]$ and $k_n$-subsets $T \subseteq [n]$ with $|T \cap F_i| = 1$ for each $1 \le i \le k_n$.
\end{corollary}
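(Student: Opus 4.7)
The plan is to mirror the three-step decomposition used in the proof of Theorem \ref{thm:isgmreduction}, prepending a rejection sampling step that produces the design matrix $A$ and modifying the analysis of Bernoulli rotations and subsampling to account for the fact that $A$ now has random columns with only approximately $\epsilon L$ negative entries. Writing $\mathcal{A}$ for $k\textsc{-bpds-to-isgm}_R$ with steps $\mathcal{A}_0$ (sample $A$), $\mathcal{A}_1$ (pad), $\mathcal{A}_2$ (Bernoulli rotate with $A$) and $\mathcal{A}_3$ (subsample), I will apply Lemma \ref{lem:tvacc} along each hypothesis with error budgets $\epsilon_0,\epsilon_1,\epsilon_2,\epsilon_3$.

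For $\mathcal{A}_0$, Lemma \ref{lem:Rne} states that a single sample of $R_{L,\epsilon}$ satisfies both criteria with probability $1-o_L(1)$, so rejection sampling for $\Theta((\log L)^2)$ iterations yields a failure probability bounded by $L^{-\omega(1)}$; I would absorb this into $\epsilon_0 = L^{-\omega(1)}$. Conditional on the good event, fix an admissible matrix $A$, and $\mathcal{A}_1$ is exact so $\epsilon_1 = 0$. For $\mathcal{A}_2$, I would replay the argument of Lemma \ref{lem:isgm-rotations} verbatim, substituting $\lambda = C$ from Lemma \ref{lem:Rne} for $\sqrt{1+(r-1)^{-1}}$ and using Lemma \ref{lem:bern-rotations} together with the choice of mean parameter $C\mu \sqrt{(1-\epsilon)L/\epsilon}$. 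Checking that this mean parameter sits below the bound in Lemma \ref{lem:5c} (using the hypothesis on $\mu$ with $R_{\pr{rk}} = k_nmL$) and that the two nonzero entry values of $A$ are rescaled to $\mu$ and $-\mu(1-\epsilon)/\epsilon$ as required by $\pr{isgm}$, I would obtain $\epsilon_2 = O(k_n^{-2}m^{-2}L^{-2})$ in both hypotheses.

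The hard part is $\mathcal{A}_3$, because the count $K=K(A,T)$ of negative entries inside $v_{T,F',F''}(A)$ is no longer deterministic. Conditional on $A$, the subsampling step still produces a vector whose pattern of positive versus negative means follows $\mathrm{Hyp}(k_nL, k_nL-K, N)$, and the target pattern under $\pr{isgm}_D$ follows $\mathrm{Bin}(N, 1-\epsilon)$. The triangle inequality splits the relevant total variation into a hypergeometric-to-binomial term, bounded by $4N/(k_nL) \le 4/w$ via the finite de Finetti estimate of \cite{diaconis1980finite}, and a binomial-to-binomial term, bounded via Lemma \ref{lem:bintv} by $|K/(k_nL) - \epsilon|\sqrt{N/(2\epsilon(1-\epsilon))}$. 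Property (2) of Lemma \ref{lem:Rne} gives the deterministic estimate $|K - k_n\epsilon L| \le Ck_n\sqrt{\epsilon L \log L}$, but propagating this through the second term only yields $O(\sqrt{k_n\epsilon L/w^2})$, which is not small enough.

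The way around this obstacle is to exploit the fact that the $|T|=k_n$ relevant column counts $K_i$ of $R_{L,\epsilon}$ are nearly i.i.d.\ $\mathrm{Bin}(L-1,\epsilon)$ variables (up to the negligible diagonal adjustment in Definition \ref{defn:Rne}), so Bernstein's inequality yields concentration $|K - k_n\epsilon L| = O(\sqrt{k_n\epsilon L\log L})$ with probability $1-o(1)$ even after conditioning on the event in $\mathcal{A}_0$ (which holds with probability $1-o(1)$ and thus changes probabilities by at most $o(1)$). Using the hypothesis $\epsilon L \ge w\log L$ then gives $|K/(k_nL)-\epsilon| = O(\sqrt{\epsilon/(k_nLw^{-1})})$, from which the binomial-to-binomial term is $O(\sqrt{1/w}) = o(1)$. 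Applying the conditioning property of $\TV$ from Fact \ref{tvfacts} to average over the random $A$ therefore yields $\epsilon_3 = o(1)$. Assembling the $\epsilon_i$ through Lemma \ref{lem:tvacc} under each of $H_0$ (taking $S=\emptyset$ throughout) and $H_1$ completes the proof.
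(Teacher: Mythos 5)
Your diagnosis of where the argument actually requires new work — the count $K$ of negative entries in $v_{T,F',F''}(A)$ is no longer deterministic, so Lemma \ref{lem:subsampling} cannot be invoked verbatim — is correct, and it is the genuine content that the paper's terse phrase ``an argument identical to the one above'' elides. The decomposition into a hypergeometric-to-binomial term (handled by de Finetti) plus a binomial-to-binomial term (handled by Lemma \ref{lem:bintv}) is also the right framing. However, the concentration rate you cite does not close the gap.

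You claim Bernstein gives $|K - k_n\epsilon L| = O(\sqrt{k_n\epsilon L\log L})$ with probability $1-o(1)$, and from this you want the binomial-to-binomial term to be $o(1)$. Running the numbers: this bound gives $|K/(k_nL) - \epsilon| = O(\sqrt{\epsilon\log L/(k_nL)})$, and after multiplying by $\sqrt{N/(2\epsilon(1-\epsilon))}$ with $N \le k_nL/w$ and $\epsilon \le 1/2$, the binomial-to-binomial term is $O(\sqrt{\log L/w})$. This is $o(1)$ only when $w = \omega(\log L)$, but the hypothesis of the corollary allows $w$ to be an arbitrary $\omega(1)$ function, in particular one growing much slower than $\log L$. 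The intermediate line $|K/(k_nL)-\epsilon| = O(\sqrt{\epsilon/(k_nLw^{-1})})$ in your write-up also does not follow from the quoted concentration together with $\epsilon L \ge w\log L$ — the algebra there is off.

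The fix is to choose a concentration threshold adapted to $w$ rather than to $\log L$. Since $K$ is a sum of $k_n$ nearly-independent $\text{Bin}(L-1,\epsilon)$-like counts with total variance $\Theta(k_nL\epsilon(1-\epsilon))$, Chebyshev with threshold $a_n\sqrt{k_nL\epsilon(1-\epsilon)}$ for any $a_n = \omega(1)$ with $a_n = o(\sqrt{w})$ (e.g.\ $a_n = w^{1/4}$) gives failure probability $O(a_n^{-2}) = o(1)$, and on the good event the binomial-to-binomial term becomes $O(a_n\sqrt{N/(k_nL)}) = O(a_n/\sqrt{w}) = o(1)$. An even cleaner route is to skip conditioning entirely: by convexity of $\TV$ and Cauchy--Schwarz, $\bE_A\bigl[|K - k_n\epsilon L|\bigr] \lesssim \sqrt{k_nL\epsilon}$, so the averaged binomial-to-binomial term is $O\bigl(\sqrt{N/(k_nL)}\bigr) = O(w^{-1/2}) = o(1)$ directly, and the discrepancy between the unconditioned law of $R_{L,\epsilon}$ and the rejection-sampled law contributes only an additional $o(1)$ via the conditioning-on-an-event property of $\TV$. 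Either repair makes your outline go through; as stated, the $\sqrt{\log L}$-rate concentration does not.
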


\subsection{Sparse Mixtures of Regressions and Negative Sparse PCA}
\label{subsec:2-mixtures-slr}

\begin{figure}[t!]
\begin{algbox}
\textbf{Algorithm} $k\textsc{-bpds-to-mslr}$

\vspace{1mm}

\textit{Inputs}: Matrix $M \in \{0, 1\}^{m \times n}$, dense subgraph dimensions $k_m$ and $k_n$ where $k_n$ divides $n$ and the following parameters
\begin{itemize}
\item partition $F$, edge probabilities $0 < q < p \le 1$ and $w(n)$ as in Figure \ref{fig:isgmreduction}
\item target $\pr{mslr}$ parameters $(N, d, \gamma, \epsilon)$ and prime $r$ and $t \in \mathbb{N}$ where $N, d, r, t, \ell$ and $\epsilon = 1/r$ are as in Figure \ref{fig:isgmreduction} with the additional requirement that $N \le n$ and where $\gamma \in (0, 1)$ satisfies that
$$\gamma^2 \le c \cdot \min\left\{ \frac{k_m}{r^{t + 1}\log(k_nmr^t) \log N}, \, \frac{k_n k_m}{n \log(mn)} \right\}$$
for a sufficiently small constant $c > 0$.
\end{itemize}
\begin{enumerate}
\item \textit{Clone}: Compute the matrices $M_{\pr{isgm}} \in \{0, 1\}^{m \times n}$ and $M_{\pr{neg-spca}} \in \{0, 1\}^{m \times n}$ by applying $\pr{Bernoulli-Clone}$ with $t = 2$ copies to the entries of the matrix $M$ with input Bernoulli probabilities $p$ and $q$, and output probabilities $p$ and $Q = 1 - \sqrt{(1 - p)(1 - q)} + \mathbf{1}_{\{p = 1\}} \left( \sqrt{q} - 1 \right)$.
\item \textit{Produce} \textsc{isgm} \textit{Instance}: Form $(Z_1, Z_2, \dots, Z_N)$ where $Z_i \in \mathbb{R}^d$ as the output of $k\pr{-bpds-to-isgm}$ applied to the matrix $M_{\pr{isgm}}$ with partition $F$, edge probabilities $0 < Q < p \le 1$, slow-growing function $w$, target $\pr{isgm}$ parameters $(N, d, \mu, \epsilon)$ and $\mu > 0$ given by
$$\mu = 4 \gamma \cdot \sqrt{\frac{\log N}{k_m}}$$
\item \textit{Produce} \textsc{neg-spca} \textit{Instance}: Form $(W_1, W_2, \dots, W_n)$ where $W_i \in \mathbb{R}^d$ as the output of $\pr{bpds-to-neg-spca}$ applied to the matrix $M_{\pr{neg-spca}}$ with edge probabilities $0 < Q < p \le 1$, target dimension $d$ and parameter $\tau > 0$ satisfying that
$$\tau^2 = \frac{8n \gamma^2}{k_n k_m(1 - \gamma^2)}$$
\item \textit{Scale and Label} \textsc{isgm} \textit{Instance}: Generate $y_1, y_2, \dots, y_N \sim_{\text{i.i.d.}} \mN(0, 1 + \gamma^2)$ and truncate each $y_i$ to satisfy $|y_i| \le 2 \sqrt{(1 + \gamma^2) \log N}$. Generate $G_1, G_2, \dots, G_N \sim_{\text{i.i.d.}} \mN(0, I_d)$ and form $(Z_1', Z_2', \dots, Z_N')$ where $Z_i' \in \mathbb{R}^d$ as 
$$Z_i' = \frac{y_i}{4(1 + \gamma^2)} \sqrt{\frac{2}{\log N}} \cdot Z_i + \sqrt{1 - \frac{y_i^2}{4(1 + \gamma^2)^2\log N}} \cdot G_i$$
\item \textit{Merge and Output}: For each $1 \le i \le N$, let $X_i = \frac{1}{\sqrt{2}} \left( Z_i' + W_i \right)$ and output the $N$ labelled pairs $(X_1, y_1), (X_2, y_2), \dots, (X_N, y_N)$.
\end{enumerate}
\vspace{1mm}

\end{algbox}
\caption{Reduction from bipartite planted dense subgraph to mixtures of sparse linear regressions through imbalanced Gaussian mixtures and negative sparse PCA}
\label{fig:mixtures-slr-reduction}
\end{figure}


In this section, we combine the previous two reductions to $\pr{neg-spca}$ and $\pr{isgm}$ with some additional observations to produce a single reduction that will be used to prove two of our main results in Section \ref{subsec:3-slr} -- computational lower bounds for mixtures of SLRs and robust SLR. We begin this section by generalizing our definition of the distribution $\pr{mslr}_D(n, S, d, \gamma, 1/2)$ from Section \ref{subsec:2-formulations} to simultaneously capture the mixtures of SLRs distributions we will reduce to and our adversarial construction for robust SLR.

Recall from Section \ref{subsec:2-formulations} that $\pr{lr}_d(v)$ denotes the distribution of a single sample-label pair $(X, y) \in \mathbb{R}^d \times \mathbb{R}$ given by $y = \langle v, X \rangle + \eta$ where $X \sim \mN(0, I_d)$ and $\eta \sim \mN(0, 1)$. Our generalization of $\pr{mslr}_D$ will be parameterized by $\epsilon \in (0, 1)$. The canonical setup for mixtures of SLRs from Section \ref{subsec:2-formulations} corresponds to setting $\epsilon = 1/2$ and formally is restated in the following definition for convenience.

\begin{definition}[Mixtures of Sparse Linear Regressions with $\epsilon = 1/2$] \label{defn:mslr-balanced}
Let $\gamma \in \mathbb{R}$ be such that $\gamma > 0$. For each subset $S \subseteq [d]$, let $\pr{mslr}_D(n, S, d, \gamma, 1/2)$ denote the distribution over $n$-tuples of independent data-label pairs $(X_1, y_1), (X_2, y_2), \dots, (X_n, y_n)$ where $X_i \in \mathbb{R}^d$ and $y_i \in \mathbb{R}$ are sampled as follows:
\begin{itemize}
\item first sample $n$ independent Rademacher random variables $s_1, s_2, \dots, s_n \sim_{\textnormal{i.i.d.}} \textnormal{Rad}$; and
\item then form data-label pairs $(X_i, y_i) \sim \pr{lr}_d(\gamma s_i v_S)$ for each $1 \le i \le n$.
\end{itemize}
where $v_S \in \mathbb{R}^d$ is the $|S|$-sparse unit vector $v_S = |S|^{-1/2} \cdot \mathbf{1}_S$.
\end{definition}

Our more general formulation when $\epsilon < 1/2$ is described in the definition below. When $\epsilon < 1/2$, the distribution $\pr{mslr}_D(n, S, d, \gamma, \epsilon)$ can always be produced by an adversary in robust SLR. This observation will be discussed in more detail and used in Section \ref{subsec:3-slr} to show computational lower bounds for robust SLR. The reason we have chosen to write these two different distributions under a common notation is that the main reduction of this section, $k\textsc{-bpds-to-mslr}$, will simultaneously map to both mixtures of SLRs and robust SLR. Lower bounds for the mixture problem will be obtained by setting $r = 2$ in the reduction to $\pr{isgm}$ used as a subroutine in $k\textsc{-bpds-to-mslr}$, while lower bounds for robust sparse regression will be obtained by taking $r > 2$. These implications of $k\textsc{-bpds-to-mslr}$ are discussed further in Section \ref{sec:3-robust-and-supervised}.

\begin{definition}[Mixtures of Sparse Linear Regressions with $\epsilon < 1/2$] \label{defn:mslr-imbalanced}
Let $\gamma > 0$, $\epsilon \in (0, 1/2)$ and let $a$ denote $a = \epsilon^{-1}(1 - \epsilon)$. For each subset $S \subseteq [d]$, let $\pr{mslr}_D(n, S, d, \gamma, \epsilon)$ denote the distribution over $n$-tuples of data-label pairs $(X_1, y_1), (X_2, y_2), \dots, (X_n, y_n)$ sampled as follows:
\begin{itemize}
\item the pairs $(b_1, X_1, y_1), (b_2, X_2, y_2), \dots, (b_n, X_n, y_n)$ are i.i.d. and $b_1, b_2, \dots, b_n \sim_{\textnormal{i.i.d.}} \textnormal{Bern}(1 - \epsilon)$;
\item if $b_i = 1$, then $(X_i, y_u) \sim \pr{lr}_d(\gamma v_S)$ where $v_S$ is as in Definition \ref{defn:mslr-balanced}; and
\item if $b_i = 0$, then $(X_i, y_i)$ is jointly Gaussian with mean zero and $(d + 1) \times (d + 1)$ covariance matrix
$$\left[\begin{matrix} \Sigma_{XX} & \Sigma_{Xy} \\ \Sigma_{yX} & \Sigma_{yy} \end{matrix} \right] = \left[\begin{matrix} I_d + \frac{(a^2 - 1)\gamma^2}{1 + \gamma^2} \cdot v_S v_S^\top & -a\gamma \cdot v_S \\ -a\gamma \cdot v_S^\top & 1 + \gamma^2 \end{matrix} \right]$$
\end{itemize}
\end{definition}

The main reduction of this section from $k\pr{-bpds}$ to $\pr{mslr}$ is shown in Figure \ref{fig:mixtures-slr-reduction}. This reduction inherits the number theoretic constraints of our reduction to $\pr{isgm}$ mentioned in the previous section. These will be discussed in more detail when $k\textsc{-bpds-to-mslr}$ is used to deduce computational lower bounds in Section \ref{subsec:3-slr}. The following theorem gives the total variation guarantees for $k\textsc{-bpds-to-mslr}$.


\begin{theorem}[Reduction from $k\pr{-bpds}$ to $\pr{mslr}$] \label{thm:slr-reduction}
Let $n$ be a parameter, $r = r(n) \ge 2$ be a prime number and $w(n) = \omega(1)$ be a slow-growing function. Fix initial and target parameters as follows:
\begin{itemize}
\item \textnormal{Initial} $k\pr{-bpds}$ \textnormal{Parameters:} vertex counts on each side $m$ and $n$ that are polynomial in one another and satisfy the condition that $n \gg m^3$, subgraph dimensions $k_m$ and $k_n$ where $k_n$ divides $n$, constant densities $0 < q < p \le 1$ and a partition $F$ of $[n]$.
\item \textnormal{Target} $\pr{mslr}$ \textnormal{Parameters:} $(N, d, \gamma, \epsilon)$ where $\epsilon = 1/r$ and there is a parameter $t = t(N) \in \mathbb{N}$ with
$$N \le n, \quad wN \le \frac{k_nr(r^t - 1)}{r - 1}, \quad m \le d \le \textnormal{poly}(n), \quad \textnormal{and} \quad n \le k_nr^t \le \textnormal{poly}(n)$$
and where $\gamma \in (0, 1/2)$ satisfies that
$$\gamma^2 \le c \cdot \min\left\{ \frac{k_m}{r^{t + 1}\log(k_nmr^t) \log N}, \, \frac{k_n k_m}{n \log(mn)} \right\}$$
for a sufficiently small constant $c > 0$.
\end{itemize}
Let $\mathcal{A}(G)$ denote $k$\textsc{-bpds-to-mslr} applied with the parameters above to a bipartite graph $G$ with $m$ left vertices and $n$ right vertices. Then $\mathcal{A}$ runs in $\textnormal{poly}(m, n)$ time and it follows that
\begin{align*}
\TV\left( \mathcal{A}\left( \mathcal{M}_{[m] \times [n]}(S \times T, p, q) \right), \, \pr{mslr}_D(N, S, d, \gamma, \epsilon) \right) &= O\left( w^{-1} + k_n^{-2}m^{-2}r^{-2t} + m^{3/2} n^{-1/2} \right) \\
&\quad \quad + O\left( k_n(4e^{-3})^{n/2k_n} + N^{-1} \right) \\
\TV\left( \mathcal{A}\left( \textnormal{Bern}(q)^{\otimes m \times n} \right), \, \left( \mN(0, I_d) \otimes \mN\left(0, 1 + \gamma^2\right) \right)^{\otimes N} \right) &= O\left( k_n^{-2}m^{-2}r^{-2t} + m^{3/2} n^{-1/2} \right)
\end{align*}
for all subsets $S \subseteq [m]$ with $|S| = k_m$ and subsets $T \subseteq [n]$ with $|T| = k_n$ and $|T \cap F_i| = 1$ for each $1 \le i \le k_n$.
\end{theorem}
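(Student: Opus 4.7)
The plan is to apply Lemma~\ref{lem:tvacc} to the five steps $\mathcal{A}_1,\dots,\mathcal{A}_5$ of $k\textsc{-bpds-to-mslr}$ (Clone, Produce $\pr{isgm}$, Produce $\pr{neg-spca}$, Scale and Label, Merge) under each of $H_0$ and $H_1$, choosing idealized intermediate distributions that isolate the contributions of the previously analyzed sub-reductions. Step $\mathcal{A}_1$ is exact by Lemma~\ref{lem:bern-clone} applied with $t=2$, yielding two independent $(p,Q)$-Bernoulli planted-subgraph copies with the same latent $S,T$ and hence $\epsilon_1=0$. Step $\mathcal{A}_2$ applied to the first clone produces the idealized instance $\pr{isgm}_D(N,S,d,\mu,\epsilon)$ to within $\epsilon_2=O(w^{-1}+k_n^{-2}m^{-2}r^{-2t})$ by Theorem~\ref{thm:isgmreduction}, where the mean-bound hypothesis is verified using the first clause of the bound on $\gamma^2$ and the choice $\mu=4\gamma\sqrt{\log N/k_m}$. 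Step $\mathcal{A}_3$ applied to the second clone produces $\mN(0,I_d-\theta_S v_S v_S^\top)^{\otimes n}$ to within $\epsilon_3=O(m^{3/2}n^{-1/2}+k_n(4e^{-3})^{n/2k_n})$ by Theorem~\ref{thm:neg-spca}, using $n\gg m^3$ and verifying $\tau^2=8n\gamma^2/(k_n k_m(1-\gamma^2))$ against that theorem's hypothesis via the second clause of the $\gamma^2$ bound; a short computation shows the algebraic identity $\theta_S=\tau^2 k_n k_m/(4n+\tau^2 k_n k_m)=2\gamma^2/(1+\gamma^2)$, which will be essential for matching the target covariance.

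The heart of the argument is the verification that steps $\mathcal{A}_4$ and $\mathcal{A}_5$ convert these two idealized distributions into $\pr{mslr}_D(N,S,d,\gamma,\epsilon)$, realizing the decomposition sketched in Section~\ref{subsec:1-tech-decomposing}. Starting from an idealized $\pr{isgm}$ sample $Z_i$ with mixture indicator $R_i\in\{+,-\}$ (taking $+$ with probability $1-\epsilon$) and an independently sampled truncated label $y_i$, a direct calculation with the scaling coefficients $\alpha(y_i)=y_i\sqrt{2}/(4(1+\gamma^2)\sqrt{\log N})$ and $\beta(y_i)=\sqrt{1-2\alpha(y_i)^2}$ shows that $Z_i'$ has conditional mean $\alpha(y_i)\cdot R_i\mu\sqrt{k_m}\,v_S=R_i\gamma y_i\sqrt{2}/(1+\gamma^2)\cdot v_S$ in the $R_i=+$ branch and the analogous expression multiplied by the imbalance factor $-\epsilon^{-1}(1-\epsilon)$ in the $R_i=-$ branch. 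Adding the independent $\pr{neg-spca}$ sample $W_i\sim\mN(0,I_d-\theta_S v_S v_S^\top)$ and rescaling by $1/\sqrt2$ in step $\mathcal{A}_5$ yields $X_i$ whose conditional mean given $(y_i,R_i)$ is $\pm\gamma y_i/(1+\gamma^2)\cdot v_S$ and whose conditional covariance reduces, after using $\theta_S/2=\gamma^2/(1+\gamma^2)$, to $I_d-\gamma^2/(1+\gamma^2)\cdot v_S v_S^\top$ up to an $O(\alpha(y_i)^2)$ isotropic correction. This matches both branches of Definition~\ref{defn:mslr-imbalanced}: the $R_i=+$ branch is the standard $\pr{lr}_d(\gamma v_S)$ sample by Gaussian conditioning, while the $R_i=-$ branch is the imbalanced jointly Gaussian component, for which the Schur-complement identity $\Sigma_{XX}-\Sigma_{Xy}\Sigma_{yy}^{-1}\Sigma_{yX}=I_d-\gamma^2/(1+\gamma^2)\cdot v_S v_S^\top$ (with $a=\epsilon^{-1}(1-\epsilon)$) produces precisely the same conditional covariance. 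Truncation of $y_i$ to the interval $|y_i|\le 2\sqrt{(1+\gamma^2)\log N}$ costs an additional $O(N^{-1})$ in total variation by a standard Gaussian tail bound summed over the $N$ labels. The $H_0$ analysis is parallel and simpler: both clones reduce to independent $\mN(0,I_d)$ samples, and steps $\mathcal{A}_4,\mathcal{A}_5$ combine them into $\mN(0,I_d)$ samples paired with independent $\mN(0,1+\gamma^2)$ labels, with only the $\pr{isgm}$ and $\pr{neg-spca}$ errors contributing.

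The main obstacle I anticipate is the covariance bookkeeping in steps~4--5: after $v_S v_S^\top$ components cancel via the identity $\theta_S/2=\gamma^2/(1+\gamma^2)$, the identity component of the conditional covariance of $X_i$ still carries a residual isotropic term of order $\alpha(y_i)^2\asymp y_i^2/\log N$ that must either be arranged to vanish (by a slight renormalization of $\beta(y_i)$ so that $\alpha^2+\beta^2=1$ holds exactly rather than up to second order) or be absorbed into the $O(N^{-1})$ error budget via a per-sample Gaussian KL estimate for an $O(1/\log N)$-scale isotropic perturbation of the conditional covariance, summed over the $N$ samples. Once this covariance-matching step is in hand, aggregating $\epsilon_1,\dots,\epsilon_5$ via Lemma~\ref{lem:tvacc} produces the stated total variation bounds under both hypotheses and completes the proof.
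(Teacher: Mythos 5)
Your proposal is correct and follows essentially the same route as the paper's proof: cloning into two copies, applying Theorem~\ref{thm:isgmreduction} and Theorem~\ref{thm:neg-spca} to the respective copies, and then using the Gaussian-conditioning lemmas for the labeling steps, all stitched together via Lemma~\ref{lem:tvacc}, with the same per-step error contributions. The covariance obstacle you flag at the end is resolved in the paper exactly via the first route you outline: in the normalization of Equation~\eqref{eqn:observation-mslr}, the coefficient on $Z$ is $c\sqrt{2}$ (with $c=\gamma y'/(\mu'(1+\gamma^2))$) and the coefficient on $G$ is $\sqrt{1-2c^2}$, so their squares sum to exactly $1$ rather than $1-\alpha^2$; after the final $1/\sqrt{2}$ rescaling and adding the $W$ term with covariance $I_d-\tfrac{2\gamma^2}{1+\gamma^2}v_Sv_S^\top$, the conditional covariance of $X_i$ given $(y_i,R_i)$ equals $I_d-\tfrac{\gamma^2}{1+\gamma^2}v_Sv_S^\top$ with no residual isotropic term, which is what Lemmas~\ref{lem:planted-label} and~\ref{lem:imbalanced-planted-label} establish.
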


The proof of this theorem will be broken into several lemmas for clarity. The following four lemmas analyze the approximate Markov transition properties of Steps 4 and 5 of $k$\textsc{-bpds-to-mslr}. The first three lemmas establishes a total variation upper bound in the single sample case. The fourth lemma is a simple consequence of the first two and establishes the Markov transition properties for Steps 4 and 5 together.


\begin{lemma}[Planted Single Sample Labelling] \label{lem:planted-label}
Let $N$ be a parameter, $\gamma, \mu' \in (0, 1)$, $C > 0$ be a constant and $u \in \mathbb{R}^d$ be such that $\| u \|_2 = 1$ and $4C^2 \gamma^2 \le (\mu')^2/\log N $. Define the random variables $(X, y)$ and $(X', y')$ where $X, X' \in \mathbb{R}^d$ and $y, y' \in \mathbb{R}$ as follows:
\begin{itemize}
\item Let $X \sim \mN\left(0, I_d \right)$ and $\eta \sim \mN(0, 1)$ be independent, and define
$$y = \gamma \cdot \langle u, X \rangle + \eta$$
\item Let $y'$ be a sample from $\mN(0, 1 + \gamma^2)$ truncated to satisfy $|y'| \le C\sqrt{(1 + \gamma^2) \log N}$, and let $Z \sim \mN(\mu' \cdot u, I_d)$, $G \sim \mN(0, I_d)$ and $W \sim \mN\left(0, I_d - \frac{2\gamma^2}{1 + \gamma^2} \cdot uu^\top\right)$ be independent. Now let $X'$ be
\begin{equation} \label{eqn:observation-mslr}
X' = \frac{1}{\sqrt{2}} \left( \frac{\gamma \cdot y'\sqrt{2}}{\mu'(1 + \gamma^2)} \cdot Z + \sqrt{1 - 2\left( \frac{\gamma \cdot y'}{\mu'(1 + \gamma^2)} \right)^2} \cdot G + W \right)
\end{equation}
\end{itemize}
Then it follows that, as $N \to \infty$,
$$\TV\left( \mL(X, y), \mL(X', y') \right) = O\left( N^{-C^2/2} \right) $$
\end{lemma}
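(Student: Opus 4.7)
The plan is to apply the conditioning-on-a-random-variable form of the triangle inequality for total variation (part 3 of Fact \ref{tvfacts}) with $y$ and $y'$ playing the role of the conditioning variables, and then to show that (i) the marginal laws of $y$ and $y'$ differ only by Gaussian-tail mass, and (ii) the conditional laws of $X$ given $y=y_0$ and of $X'$ given $y'=y_0$ coincide exactly for every $y_0$ in the truncation window $|y_0|\le C\sqrt{(1+\gamma^2)\log N}$.

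For (i), the untruncated $y$ is $\mN(0,1+\gamma^2)$ while $y'$ is this same Gaussian conditioned on the event $A=\{|y|\le C\sqrt{(1+\gamma^2)\log N}\}$. By the conditioning-on-an-event property of $\TV$ and a standard Mills' ratio bound for the Gaussian tail, $\TV(\mL(y),\mL(y'))=\bP[A^c] = O(N^{-C^2/2})$. For $y_0\notin\operatorname{supp}(y')$, i.e.\ $|y_0|>C\sqrt{(1+\gamma^2)\log N}$, the convention in Fact \ref{tvfacts} assigns the inner $\TV$ the value $1$, and this again contributes at most $\bP[A^c]=O(N^{-C^2/2})$ after averaging over $y$.

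For (ii), the pair $(X,y)$ is jointly Gaussian with $\mathrm{Cov}(X,y)=\gamma u$ and $\mathrm{Var}(y)=1+\gamma^2$, so the standard Gaussian conditioning formula gives
\[
\mL(X\mid y=y_0) \;=\; \mN\!\left(\frac{\gamma y_0}{1+\gamma^2}\,u,\; I_d-\frac{\gamma^2}{1+\gamma^2}uu^\top\right).
\]
For $\mL(X'\mid y'=y_0)$, writing $\alpha=\gamma y_0/(\mu'(1+\gamma^2))$, the three summands in \eqref{eqn:observation-mslr} are independent conditional on $y'=y_0$, and a direct computation shows the mean equals $\frac{\gamma y_0}{\mu'(1+\gamma^2)}\cdot\mu' u=\frac{\gamma y_0}{1+\gamma^2}u$, while the covariance equals
\[
\alpha^2 I_d + \tfrac{1}{2}(1-2\alpha^2)I_d + \tfrac{1}{2}\!\left(I_d-\tfrac{2\gamma^2}{1+\gamma^2}uu^\top\right) \;=\; I_d-\tfrac{\gamma^2}{1+\gamma^2}uu^\top,
\]
so the two Gaussians match exactly and the inner $\TV$ is zero. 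The hypothesis $4C^2\gamma^2\le(\mu')^2/\log N$ guarantees $2\alpha^2\le \tfrac{1}{2}$ on the truncation window (so the square root in \eqref{eqn:observation-mslr} is real), and $\gamma\in(0,1)$ guarantees $I_d-\tfrac{2\gamma^2}{1+\gamma^2}uu^\top\succeq 0$ (so $W$ is a valid Gaussian).

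Putting the two pieces together via Fact \ref{tvfacts}.3 yields $\TV(\mL(X,y),\mL(X',y'))=O(N^{-C^2/2})$. There is no real obstacle here: once one notices that the coefficients $\tfrac{\sqrt 2}{\mu'(1+\gamma^2)}$ and $\sqrt{1-2\alpha^2}$ in \eqref{eqn:observation-mslr} are engineered precisely so that the three-term sum reproduces both the regression mean and the conditional covariance of $X\mid y$, everything reduces to a bookkeeping calculation and a Gaussian tail bound. The only place to be careful is checking the regime where the square root is real and the covariance of $W$ is PSD, which is exactly why the hypotheses $4C^2\gamma^2\le(\mu')^2/\log N$ and $\gamma<1$ are imposed.
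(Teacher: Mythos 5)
Your proof is correct and takes essentially the same route as the paper: establish that the conditional law $\mL(X'\mid y'=y_0)$ matches $\mL(X\mid y=y_0)$ exactly via the Gaussian regression formula and a mean/covariance bookkeeping check, then reduce the joint TV to the discrepancy between the marginal laws of $y$ and $y'$ (a truncation event), and finish with a Gaussian tail bound. The one place where you are actually slightly more careful than the paper is that you explicitly account for the out-of-support contribution in Fact \ref{tvfacts}(3), which adds a second $\bP[A^c]$ term; the paper silently writes $\TV(\mL(X,y),\mL(X',y'))\le\TV(\mL(y),\mL(y'))$, which drops this term, but since the two contributions are equal the big-$O$ conclusion is unaffected. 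Your observation that $2\alpha^2\le 1/2$ on the truncation window is a tighter version of the paper's $2\alpha^2\le 1$; either suffices for the square root to be real.
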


\begin{proof}
First observe that $4C^2 \gamma^2 \le (\mu')^2/\log N$ implies that since $|y'| \le C \sqrt{(1 + \gamma^2) \log N}$ holds almost surely and $\gamma \in (0, 1)$, it follows that
$$2\left( \frac{\gamma \cdot y'}{\mu'(1 + \gamma^2)} \right)^2 \le 2(1 + \gamma^2) C^2 \gamma^2 (\mu')^{-2} \log N \le 1$$
and hence $X'$ is well-defined almost surely.

Now note that since $y$ is a linear function of $X$ and $\eta$, which are independent Gaussians, it follows that the $d + 1$ entries of $(X, y)$ are jointly Gaussian. Since $\| u \|_2 = 1$, it follows that $\text{Var}(y) = 1 + \gamma^2$ and furthermore $\text{Cov}(y, X) = \bE[Xy] = \gamma \cdot u$. This implies that the covariance matrix of $(X, y)$ is given by
$$\left[\begin{matrix} I_d & \gamma \cdot u \\ \gamma \cdot u^\top & 1 + \gamma^2 \end{matrix} \right]$$
It is well known that $X | y$ is a Gaussian vector with mean and covariance matrix given by
$$\mL(X | y) = \mN\left( \frac{\gamma \cdot y}{1 + \gamma^2} \cdot u, \, I_d - \frac{\gamma^2}{1 + \gamma^2} \cdot uu^\top \right)$$
Now consider $\mL(X' | y')$. Let $Z = \mu' \cdot u + G'$ where $G' \sim \mN(0, I_d)$ and note that
$$X' = \frac{\gamma \cdot y'}{1 + \gamma^2} \cdot u + \frac{\gamma \cdot y'}{\mu'(1 + \gamma^2)} \cdot G' + \frac{1}{\sqrt{2}} \cdot \sqrt{1 - 2\left( \frac{\gamma \cdot y'}{\mu'(1 + \gamma^2)} \right)^2} \cdot G + \frac{1}{\sqrt{2}} \cdot W$$
Note that since $y', G', G$ and $W$ are independent, it follows that all of the entries of the second, third and fourth terms in the expression above are jointly Gaussian conditioned on $y'$. Therefore the entries of $X' | y'$ are also jointly Gaussian. Furthermore the second, third and fourth terms in the expression above for $X'$ have covariance matrices given by
$$\left( \frac{\gamma \cdot y'}{\mu'(1 + \gamma^2)} \right)^2 \cdot I_d, \quad \left( \frac{1}{2} - \left( \frac{\gamma \cdot y'}{\mu'(1 + \gamma^2)} \right)^2 \right) \cdot I_d \quad \textnormal{and} \quad \frac{1}{2} \cdot I_d - \frac{\gamma^2}{1 + \gamma^2} \cdot uu^\top$$
respectively, conditioned on $y'$. Since these three terms are independent conditioned on $y'$, it follows that $X' | y'$ has covariance matrix $I_d - \frac{\gamma^2}{1 + \gamma^2} \cdot uu^\top$ and therefore that
$$\mL(X' | y') = \mN\left( \frac{\gamma \cdot y}{1 + \gamma^2} \cdot u, \, I_d - \frac{\gamma^2}{1 + \gamma^2} \cdot uu^\top \right)$$
and is hence identically distributed to $\mL(X|y)$. Let $\Phi(x) = \frac{1}{\sqrt{2\pi}} \int_{-\infty}^x e^{-x^2/2} dx$ be the CDF of $\mN(0, 1)$. The conditioning property of total variation in Fact \ref{tvfacts} therefore implies that
\begin{align*}
\TV\left( \mL(X, y), \mL(X', y') \right) &\le \TV\left( \mL(y), \mL(y') \right) \\
&= \bP\left[ |y| > c \sqrt{(1 + \gamma^2) \log N} \right] \\
&= 2 \cdot \left( 1 - \Phi\left( C \sqrt{\log N} \right) \right) \\
&= O\left( N^{-C^2/2} \right) 
\end{align*}
where the first equality holds due to the conditioning on an event property of total variation in Fact \ref{tvfacts} and the last upper bound follows from the standard estimate $1 - \Phi(x) \le \frac{1}{\sqrt{2\pi}} \cdot x^{-1} \cdot e^{-x^2/2}$ for $x \ge 1$. This completes the proof of the lemma.
\end{proof}

The next lemma establishes single sample guarantees that will be needed to analyze the case in which $\epsilon < 1/2$. The proof of this lemma is very similar to that of Lemma \ref{lem:planted-label} and is deferred to Appendix \ref{sec:app-label-generation}.

\begin{lemma}[Imbalanced Planted Single Sample Labelling] \label{lem:imbalanced-planted-label}
Let $N, \gamma, \mu', C$ and $u$ be as in Lemma \ref{lem:planted-label} and let $\mu'' \in (0, 1)$. Define the random variables $(X, y)$ and $(X', y')$ as follows:
\begin{itemize}
\item Let $(X, y)$ where $X \in \mathbb{R}^d$ and $y \in \mathbb{R}$ be jointly Gaussian with mean zero and $(d + 1) \times (d + 1)$ covariance matrix given by
$$\left[\begin{matrix} \Sigma_{XX} & \Sigma_{Xy} \\ \Sigma_{yX} & \Sigma_{yy} \end{matrix} \right] = \left[\begin{matrix} I_d + \frac{(a^2 - 1)\gamma^2}{1 + \gamma^2} \cdot uu^\top & a\gamma \cdot u \\ a\gamma \cdot u^\top & 1 + \gamma^2 \end{matrix} \right]$$
\item Let $y', Z, G$ and $W$ be independent where $y', G$ and $W$ are distributed as in Lemma \ref{lem:planted-label} and $Z \sim \mN(\mu'' \cdot u, I_d)$. Let $X'$ be defined by Equation (\ref{eqn:observation-mslr}) as in Lemma \ref{lem:planted-label}.
\end{itemize}
Then it follows that, as $N \to \infty$,
$$\TV\left( \mL(X, y), \mL(X', y') \right) = O\left( N^{-C^2/2} \right)$$
\end{lemma}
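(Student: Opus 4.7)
The plan is to mirror the proof of Lemma \ref{lem:planted-label} by computing $\mL(X \mid y)$ and $\mL(X' \mid y')$ directly, showing they agree as Gaussian distributions, and then invoking the conditioning property of total variation. Since $(X, y)$ is jointly Gaussian with the given block covariance, the standard conditional Gaussian formula yields
\[
\mL(X \mid y) = \mN\!\left( \Sigma_{Xy} \Sigma_{yy}^{-1} y, \; \Sigma_{XX} - \Sigma_{Xy} \Sigma_{yy}^{-1} \Sigma_{yX} \right).
\]
Substituting the given block entries, the conditional mean is $\frac{a\gamma y}{1+\gamma^2} u$, and the rank-one cancellation
\[
\Sigma_{XX} - \Sigma_{Xy} \Sigma_{yy}^{-1} \Sigma_{yX} = I_d + \frac{(a^2 - 1)\gamma^2}{1+\gamma^2} uu^\top - \frac{a^2 \gamma^2}{1+\gamma^2} uu^\top = I_d - \frac{\gamma^2}{1+\gamma^2} uu^\top
\]
collapses the $a^2$ contribution and produces the same conditional covariance that appeared in Lemma \ref{lem:planted-label}. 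This cancellation is the key algebraic observation that lets the imbalanced case reduce cleanly to the balanced one; the extra $\frac{(a^2-1)\gamma^2}{1+\gamma^2} uu^\top$ term was built into $\Sigma_{XX}$ precisely so that the Schur complement absorbs it.

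For $\mL(X' \mid y')$, I would follow the decomposition from Lemma \ref{lem:planted-label} almost verbatim: write $Z = \mu'' u + G'$ with $G' \sim \mN(0, I_d)$, split $X'$ into the three conditionally independent Gaussian contributions coming from $Z$, $G$, and $W$, and verify that conditional on $y'$ the covariance is again $I_d - \frac{\gamma^2}{1+\gamma^2} uu^\top$. The mean of $Z$ plays no role in the covariance sum, so this part of the calculation is unchanged. The conditional mean, however, acquires a factor of $\mu''/\mu'$, becoming $\frac{\mu'' \gamma y'}{\mu'(1+\gamma^2)} u$, which matches the target $\frac{a\gamma y'}{1+\gamma^2} u$ under the calibration $\mu'' = a \mu'$ that links the two parameters in the intended use of the lemma.

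With $\mL(X \mid y) = \mL(X' \mid y')$ established, the conditioning property of $\TV$ from Fact \ref{tvfacts} reduces the bound to $\TV(\mL(y), \mL(y'))$. Since $y' \sim \mN(0, 1 + \gamma^2)$ truncated to $|y'| \le C\sqrt{(1+\gamma^2)\log N}$, the conditioning-on-an-event identity and a standard Gaussian tail estimate $1 - \Phi(C\sqrt{\log N}) = O(N^{-C^2/2})$ give the claimed bound, exactly as in Lemma \ref{lem:planted-label}. The well-definedness condition $4C^2\gamma^2 \le (\mu')^2/\log N$ is inherited from that lemma and guarantees that the square root in Equation (\ref{eqn:observation-mslr}) is real almost surely.

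The proof is essentially mechanical once the rank-one Schur-complement cancellation is spotted; the main ``obstacle'' is bookkeeping the $a$-dependence and recognizing that the formula defining $X'$ was set up so that the only change relative to the balanced case is a single $\mu''/\mu'$ factor in the conditional mean, with the conditional covariance and tail-bound arguments going through unchanged.
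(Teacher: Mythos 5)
Your proof is correct and follows essentially the same route as the paper's: compute both conditional laws $\mL(X\mid y)$ and $\mL(X'\mid y')$, observe that the Schur complement cancels the $a^2$ contribution so both equal $\mN\bigl(\tfrac{a\gamma y}{1+\gamma^2}u,\ I_d - \tfrac{\gamma^2}{1+\gamma^2}uu^\top\bigr)$, and reduce to $\TV(\mL(y),\mL(y'))$ via the conditioning property. The paper likewise sets $a=\mu''/\mu'$, so the calibration you flag is exactly the implicit definition used there.
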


We now state a similar lemma analyzing a single sample in Step 4 of $k$\textsc{-bpds-to-mslr} in the case where $X$ and $W$ are not planted. Its proof is also deferred to Appendix \ref{sec:app-label-generation}.

\begin{lemma}[Unplanted Single Sample Labelling] \label{lem:unplanted-label}
Let $N, \gamma, \mu', C$ and $u$ be as in Lemma \ref{lem:planted-label}. Suppose that $y'$ is a sample from $\mN(0, 1 + \gamma^2)$ truncated to satisfy $|y'| \le C\sqrt{(1 + \gamma^2) \log N}$ and $Z, G, W \sim_{\textnormal{i.i.d.}} \mN(0, I_d)$ are independent. Let $X'$ be defined by Equation (\ref{eqn:observation-mslr}) as in Lemma \ref{lem:planted-label}. Then, as $N \to \infty$,
$$\TV\left( \mL(X', y'), \mN(0, I_d) \otimes \mN(0, 1 + \gamma^2) \right) = O\left( N^{-C^2/2} \right) $$
\end{lemma}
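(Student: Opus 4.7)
The plan is to exploit the fact that in the unplanted setting all three constituents $Z, G, W$ are centered, so the problem reduces to a marginal truncation bound plus an exact conditional variance computation, with no mean term to track.

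\textbf{Step 1: Control the marginal of $y'$.} First I would show $\TV(\mL(y'), \mN(0, 1+\gamma^2)) = O(N^{-C^2/2})$. Since $y'$ is obtained from $Y \sim \mN(0, 1+\gamma^2)$ by conditioning on the event $\{|Y| \le C\sqrt{(1+\gamma^2)\log N}\}$, the conditioning-on-an-event property of total variation from Fact \ref{tvfacts} gives $\TV(\mL(y'), \mL(Y)) = \bP[|Y| > C\sqrt{(1+\gamma^2)\log N}]$. By the standard Mills-ratio tail bound $1 - \Phi(x) \le \frac{1}{\sqrt{2\pi}} x^{-1} e^{-x^2/2}$ applied to $Y/\sqrt{1+\gamma^2}$, this probability is $2(1 - \Phi(C\sqrt{\log N})) = O(N^{-C^2/2})$, exactly as at the end of the proof of Lemma \ref{lem:planted-label}.

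\textbf{Step 2: Show $X' \mid y' \sim \mN(0, I_d)$ for every admissible $y'$.} Fix any $y'$ with $|y'| \le C\sqrt{(1+\gamma^2)\log N}$ and set $\alpha = \gamma y' / (\mu'(1+\gamma^2))$. The hypothesis $4C^2\gamma^2 \le (\mu')^2/\log N$ inherited from Lemma \ref{lem:planted-label} yields $2\alpha^2 \le 1$, so the square root in Equation~(\ref{eqn:observation-mslr}) is well defined. Since $Z, G, W$ are independent standard Gaussians and the coefficients in Equation~(\ref{eqn:observation-mslr}) depend only on $y'$, conditional on $y'$ the vector $X'$ is jointly Gaussian with mean zero (this is where the unplanted case differs from Lemma \ref{lem:planted-label}: none of the three terms has a nonzero mean). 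Writing $X' = \alpha Z + \tfrac{1}{\sqrt{2}}\sqrt{1-2\alpha^2}\, G + \tfrac{1}{\sqrt{2}} W$, the three independent contributions to the conditional covariance are $\alpha^2 I_d$, $(\tfrac{1}{2}-\alpha^2) I_d$, and $\tfrac{1}{2} I_d$, summing to $I_d$. Hence $X'\mid y' \sim \mN(0, I_d)$ with no dependence on $y'$.

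\textbf{Step 3: Combine.} Because $\mL(X'\mid y')$ is the same distribution $\mN(0, I_d)$ for every admissible $y'$, we have $\mL(X', y') = \mN(0, I_d) \otimes \mL(y')$. Applying the conditioning-on-a-random-variable property of total variation from Fact \ref{tvfacts} with $Y = y'$ and $Y' \sim \mN(0, 1+\gamma^2)$, and using that the conditional laws of $X'$ given $y'$ and of an independent $\mN(0,I_d)$ given a $\mN(0, 1+\gamma^2)$ coincide, one obtains
\[
\TV\bigl(\mL(X', y'),\, \mN(0, I_d) \otimes \mN(0, 1+\gamma^2)\bigr) \le \TV\bigl(\mL(y'), \mN(0, 1+\gamma^2)\bigr) = O(N^{-C^2/2}),
\]
which is the claimed bound.

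There is no real obstacle here: this is the unplanted analogue of Lemmas \ref{lem:planted-label} and \ref{lem:imbalanced-planted-label}, and the absence of mean terms collapses the whole argument to a variance check that lands exactly on $I_d$ together with the Gaussian tail bound from Step 1.
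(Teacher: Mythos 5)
Your proposal is correct and follows essentially the same route as the paper: compute the conditional covariance of $X'$ given $y'$ to see it equals $I_d$ with no $y'$-dependence, then reduce the total variation bound to the marginal truncation error on $y'$ via the conditioning property in Fact \ref{tvfacts}. The only cosmetic difference is that you make the final reduction explicit via the conditioning-on-a-random-variable inequality, whereas the paper refers back to the analogous step in Lemmas \ref{lem:planted-label} and \ref{lem:imbalanced-planted-label}; the substance is identical.
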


Combining these three lemmas, we now can analyze Step 4 and Step 5 of $\mathcal{A}$. Let $\mathcal{A}_{\text{4-5}}(Z, W)$ denote Steps 4 and 5 of $\mathcal{A}$ with inputs $Z = (Z_1, Z_2, \dots, Z_N)$ and $W = (W_1, W_2, \dots, W_n)$ and output $\left((X_1, y_1), (X_2, y_2), \dots, (X_N, y_N)\right)$. The next lemma applies the previous two lemmas to establish the Markov transition properties of $\mathcal{A}_{\text{4-5}}$.

\begin{lemma}[Scaling and Labelling $\pr{isgm}$ Instances] \label{lem:isgm-label}
Let $r, N, d, \gamma, \epsilon, m, n, k_n, k_m$ and $S \subseteq [m]$ where $|S| = k_m$ be as in Theorem \ref{thm:slr-reduction} and let $\mu, \gamma, \theta > 0$ be such that
$$\mu = 4 \gamma \cdot \sqrt{\frac{\log N}{k_m}}, \quad \tau^2 = \frac{8n \gamma^2}{k_n k_m(1 - \gamma^2)} \quad \textnormal{and} \quad \theta = \frac{\tau^2 k_n k_m}{4n + \tau^2 k_n k_m}$$
If $Z \sim \pr{isgm}(N, S, d, \mu, \epsilon)$ and $W \sim \mN\left(0, \, I_d - \theta v_S v_S^\top\right)^{\otimes n}$, then
$$\TV\left( \mathcal{A}_{\textnormal{4-5}}(Z, W), \, \pr{mslr}_D(N, S, d, \gamma, \epsilon) \right) = O\left(N^{-1}\right)$$
If $Z \sim \mN(0, I_d)^{\otimes N}$ and $W \sim \mN(0, 1)^{\otimes d \times n}$, then
$$\TV\left( \mathcal{A}_{\textnormal{4-5}}(Z, W), \, \left( \mN(0, I_d) \otimes \mN(0, 1) \right)^{\otimes N} \right) = O\left(N^{-1}\right)$$
\end{lemma}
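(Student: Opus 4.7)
The overall strategy is to reduce to a single-sample total variation bound and then tensorize. By construction of Steps~4 and 5, the auxiliary randomness used to form the triple $(Z_i',y_i,W_i)$ is mutually independent across $i \in [N]$, and the input tuples $(Z_i,W_i)$ are i.i.d.\ (for $W_i$, $1\le i\le N$, we marginalize away the unused samples $W_{N+1},\dots,W_n$, which is valid since the $W_j$ are i.i.d.). Hence it suffices by the tensorization bound in Fact~\ref{tvfacts} to show that a single pair $(X_i,y_i)$ produced by the reduction is within $O(N^{-2})$ total variation of a sample from the target single-sample distribution, and then bound the overall error by $N\cdot O(N^{-2})=O(N^{-1})$.

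For the $H_1$ case, I will verify that the parameters chosen in Steps~2--5 are precisely those required to invoke Lemmas~\ref{lem:planted-label} and~\ref{lem:imbalanced-planted-label} with $u=v_S$. Writing $\mu\cdot\mathbf{1}_S=\mu\sqrt{k_m}\cdot v_S$, the effective planted mean in the $\pr{isgm}$ output is $\mu'_{\text{eff}}=\mu\sqrt{k_m}=4\gamma\sqrt{\log N}$, so the hypothesis $4C^2\gamma^2\le(\mu'_{\text{eff}})^2/\log N$ of Lemma~\ref{lem:planted-label} holds with $C=2$. Next, a direct computation using $\tau^2=8n\gamma^2/(k_nk_m(1-\gamma^2))$ gives
\[
\theta \;=\; \frac{\tau^2 k_n k_m}{4n+\tau^2 k_n k_m}
\;=\;\frac{2\gamma^2}{1+\gamma^2},
\]
so $W_i\sim\mathcal{N}\!\left(0,\,I_d-\tfrac{2\gamma^2}{1+\gamma^2}v_Sv_S^\top\right)$, which matches the $W$-distribution required by Lemma~\ref{lem:planted-label}. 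The coefficient $\tfrac{1}{4(1+\gamma^2)}\sqrt{2/\log N}$ appearing in Step~4 equals $\gamma\sqrt{2}/(\mu'_{\text{eff}}(1+\gamma^2))$, so the linear combination defining $X_i$ in Step~5 is exactly the one in Equation~(\ref{eqn:observation-mslr}). When $\epsilon=1/2$, I condition on the Rademacher sign $s_i$ determining the mixture component of $Z_i$; applying Lemma~\ref{lem:planted-label} with $u=s_iv_S$ (still a unit vector, with unchanged $W$-covariance) gives $\TV\le O(N^{-2})$ for each sign, and averaging over $s_i$ preserves the bound. When $\epsilon<1/2$, I split the mixture: on the $(1-\epsilon)$-event I apply Lemma~\ref{lem:planted-label} as above, and on the $\epsilon$-event I apply Lemma~\ref{lem:imbalanced-planted-label} with $\mu''=\mu'\sqrt{k_m}=-\mu(1-\epsilon)\epsilon^{-1}\sqrt{k_m}$, again with $C=2$. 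The two conditional bounds combine to give a single-sample total variation bound of $O(N^{-2})$ against the corresponding mixture component of $\pr{mslr}_D(N,S,d,\gamma,\epsilon)$.

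For the $H_0$ case, both $Z_i\sim\mathcal{N}(0,I_d)$ and $W_i\sim\mathcal{N}(0,I_d)$ are standard Gaussians, so Lemma~\ref{lem:unplanted-label} applies directly (the $W$ term in the displayed formula of Step~5 plays the role of $W$ in the lemma, and $G_i$ plays the role of $G$) and yields $\TV(\mathcal{L}(X_i,y_i),\mathcal{N}(0,I_d)\otimes\mathcal{N}(0,1+\gamma^2))=O(N^{-2})$. Tensorizing over $i\in[N]$ gives the second displayed bound.

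The main technical content is therefore entirely parameter bookkeeping: verifying the three algebraic identities $\mu'_{\text{eff}}=4\gamma\sqrt{\log N}$, $\theta=2\gamma^2/(1+\gamma^2)$, and the coefficient match in Equation~(\ref{eqn:observation-mslr}), together with the check that the truncation level $2\sqrt{(1+\gamma^2)\log N}$ chosen in Step~4 coincides with the truncation assumed in Lemmas~\ref{lem:planted-label}--\ref{lem:unplanted-label} with $C=2$. There is no probabilistic obstacle beyond this matching, and no coupling argument is required beyond the conditional-on-mixture-indicator reduction already described.
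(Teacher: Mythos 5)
Your proof takes essentially the same route as the paper: verify the algebraic identities $\mu\sqrt{k_m}=4\gamma\sqrt{\log N}$, $\theta = 2\gamma^2/(1+\gamma^2)$, and the coefficient match with Equation~(\ref{eqn:observation-mslr}); then apply Lemmas~\ref{lem:planted-label}, \ref{lem:imbalanced-planted-label}, and~\ref{lem:unplanted-label} single-sample with $C=2$; then condition on the mixture sign and tensorize via Fact~\ref{tvfacts}. One small imprecision: in the $\epsilon<1/2$ case you write $\mu''=-\mu(1-\epsilon)\epsilon^{-1}\sqrt{k_m}$, which is negative and hence violates the hypothesis $\mu''\in(0,1)$ of Lemma~\ref{lem:imbalanced-planted-label} as stated. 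The paper instead takes $\mu''=+\mu(1-\epsilon)\epsilon^{-1}\sqrt{k_m}$ together with $u=-v_S$, so that $\mu''u$ gives the desired negative planted mean while keeping $\mu''$ positive; since $uu^\top=v_S v_S^\top$, the covariance block $I_d + \frac{(a^2-1)\gamma^2}{1+\gamma^2}uu^\top$ and cross-term $a\gamma u = -a\gamma v_S$ then match Definition~\ref{defn:mslr-imbalanced} exactly. This is a cosmetic fix — the argument is otherwise sound and identical in structure to the paper's.
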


\begin{proof}
We treat the cases in which $\epsilon = 1/2$ and $\epsilon < 1/2$ as well as the two possible distributions of $(Z, W)$ in the lemma statement separately. We first consider the case where $\epsilon = 1/2$ and $r = 2$ and $Z \sim \pr{isgm}_D(N, S, d, \mu, \epsilon)$ and $W \sim \mN\left(0, \, I_d - \theta v_S v_S^\top\right)^{\otimes n}$. The $Z_i$ are independent and can be generated by first sampling $s_1, s_2, \dots, s_N \sim_{\text{i.i.d.}} \text{Bern}(1/2)$ and then setting
$$Z_i \sim \left\{ \begin{array}{ll} \mN(\mu \sqrt{k_m} \cdot v_S, I_d) &\text{if } s_i = 1 \\ \mN(-\mu \sqrt{k_m} \cdot v_S, I_d) &\text{if } s_i = 0 \end{array} \right.$$
where $v_S = k_m^{-1/2} \cdot \mathbf{1}_S$. Let $\mu' = \mu \sqrt{k_m}$. It can be verified that the settings of $\mu, \gamma$ and $\theta$ above ensure that
$$\frac{\gamma \sqrt{2}}{\mu'(1 + \gamma^2)} = \frac{1}{4(1 + \gamma^2)} \cdot \sqrt{\frac{2}{\log N}} \quad \text{and} \quad \theta = \frac{2\gamma^2}{1 + \gamma^2}$$
Let $X \sim \mN(0, I_d)$ and $\eta \sim \mN(0, 1)$ be independent. Applying Lemma \ref{lem:planted-label} with $\mu' = \mu \sqrt{k_m}$, $C = 2$, $u = v_S$ and $u = -v_S$, the equalities above and the definition of $X_i$ in Figure \ref{fig:mixtures-slr-reduction} now imply that
\begin{align*}
\TV\left( \mL(X_i, y_i | s_i = 1), \mL\left(X, \gamma \cdot \langle v_S, X \rangle + \eta \right) \right) &= O(N^{-2}) \\
\TV\left( \mL(X_i, y_i | s_i = 0), \mL\left(X, -\gamma \cdot \langle v_S, X \rangle + \eta \right) \right) &= O(N^{-2})
\end{align*}
for each $1 \le i \le N$. The conditioning property of total variation from Fact \ref{tvfacts} now implies that if $\mL_1 = \mL\left(X, \gamma \cdot \langle v_S, X \rangle + \eta \right)$ and $\mL_2 = \mL\left(X, -\gamma \cdot \langle v_S, X \rangle + \eta \right)$, then we have that
$$\TV\left( \mL(X_i, y_i), \pr{mix}_{1/2}(\mL_1, \mL_2) \right) = O(N^{-2})$$
For the given distribution on $(Z, W)$, observe that the pairs $(X_i, y_i)$ for $1 \le i \le N$ are independent by construction in $\mathcal{A}$. Thus the tensorization property of total variation from Fact \ref{tvfacts} implies that
$$\TV\left( \mL\left( (X_1, y_1), (X_2, y_2), \dots, (X_N, y_N) \right),\, \pr{mslr}(N, S, d, \gamma, 1/2) \right) = O(N^{-1})$$
where $\pr{mslr}_D(N, S, d, \gamma, 1/2) = \pr{mix}_{1/2}(\mL_1, \mL_2)^{\otimes N}$, which establishes the desired bound when $\epsilon = 1/2$ and for the first distribution of $(Z, W)$. 

The other two cases will follow by nearly identical arguments. Consider the case where $\epsilon$ is arbitrary and if $Z \sim \mN(0, I_d)^{\otimes N}$ and $W \sim \mN(0, 1)^{\otimes d \times n}$, applying Lemma \ref{lem:unplanted-label} with $C = 2$ and $\mu' = \mu \sqrt{k_m}$ yields that
$$\TV\left( \mL(X_i, y_i), \mN(0, I_d) \otimes \mN(0, 1) \right) = O(N^{-2})$$
Applying the tensorization property of total variation from Fact \ref{tvfacts} as above then implies the second bound in the lemma statement. Finally, consider the case in which $\epsilon < 1/2$, $r > 2$ and $(Z, W)$ is still distributed as $Z \sim \pr{isgm}_D(N, S, d, \mu, \epsilon)$ and $W \sim \mN\left(0, \, I_d - \theta v_S v_S^\top\right)^{\otimes n}$. If the $s_i$ are defined as above, then the $Z_i$ are distributed as
$$Z_i \sim \left\{ \begin{array}{ll} \mN\left(\mu \sqrt{k_m} \cdot v_S, I_d\right) &\text{if } s_i = 1 \\ \mN\left(-a\mu \sqrt{k_m} \cdot v_S, I_d\right) &\text{if } s_i = 0 \end{array} \right.$$
where $a = \epsilon^{-1}(1 - \epsilon)$. Now consider applying Lemma \ref{lem:imbalanced-planted-label} with $\mu' = \mu \sqrt{k_m}$, $\mu'' = a \mu' = \mu \epsilon^{-1}(1 - \epsilon)$, $C = 2$ and $u = - v_S$. This yields that
$$\TV\left( \mL(X_i, y_i | s_i = 0), \mL(X, y) \right) = O(N^{-2})$$
where $X$ and $y$ are as in the statement of Lemma \ref{lem:imbalanced-planted-label}. Combining this with the conditioning property of total variation from Fact \ref{tvfacts}, the application of Lemma \ref{lem:planted-label} in the first case above, the tensorization property of total variation from Fact \ref{tvfacts} as in the previous argument and Definition \ref{defn:mslr-imbalanced} yields that
$$\TV\left( \mL\left( (X_1, y_1), (X_2, y_2), \dots, (X_N, y_N) \right), \, \pr{mslr}(N, S, d, \gamma, \epsilon) \right) = O\left(N^{-1}\right)$$
which completes the proof of the lemma.
\end{proof}

With this lemma, the proof of Theorem \ref{thm:slr-reduction} reduces to an application of Lemma \ref{lem:tvacc} through a similar argument to the proof of Theorem \ref{thm:isgmreduction}.

\begin{proof}[Proof of Theorem \ref{thm:slr-reduction}]
Define the steps of $\mathcal{A}$ to map inputs to outputs as follows
$$M \xrightarrow{\mathcal{A}_1} (M_{\pr{isgm}}, M_{\pr{neg-spca}}) \xrightarrow{\mathcal{A}_2} \left(Z, M_{\pr{neg-spca}}\right) \xrightarrow{\mathcal{A}_3} \left(Z, W\right)  \xrightarrow{\mathcal{A}_{\text{4-5}}} \left((X_1, y_1), (X_2, y_2), \dots, (X_N, y_N) \right)$$
where $Z = (Z_1, Z_2, \dots, Z_N)$ and $W = (W_1, W_2, \dots, W_n)$ in Figure \ref{fig:mixtures-slr-reduction}. First note that the condition on $\gamma$ in the theorem statement along with the settings of $\mu$ and $\tau$ in Figure \ref{fig:mixtures-slr-reduction} imply that
\begin{align*}
\tau &\le \frac{\delta}{2 \sqrt{6\log (mn) + 2\log (p - Q)^{-1}}} \quad \text{where} \quad \delta = \min \left\{ \log \left( \frac{p}{Q} \right), \log \left( \frac{1 - Q}{1 - p} \right) \right\} \\
\mu &\le \frac{\delta}{2 \sqrt{6\log (k_nmr^t) + 2\log (p - Q)^{-1}}} \cdot \frac{1}{\sqrt{r^t(r - 1)(1 + (r - 1)^{-1})}}
\end{align*}
for a sufficiently small constant $c > 0$ since $0 < q < p \le 1$ are constants. Let $\theta$ and $v_S$ be as in Lemma \ref{lem:isgm-label}. Consider Lemma \ref{lem:tvacc} applied to the steps $\mathcal{A}_i$ above and the following sequence of distributions
\allowdisplaybreaks
\begin{align*}
\mathcal{P}_0 &= \mathcal{M}_{[m] \times [n]}(S \times T, \textnormal{Bern}(p), \textnormal{Bern}(q)) \\
\mathcal{P}_1 &= \mathcal{M}_{[m] \times [n]}(S \times T, \textnormal{Bern}(p), \textnormal{Bern}(Q)) \otimes \mathcal{M}_{[m] \times [n]}(S \times T, \textnormal{Bern}(p), \textnormal{Bern}(Q)) \\
\mathcal{P}_2 &= \pr{isgm}_D(N, S, d, \mu, \epsilon) \otimes \mathcal{M}_{[m] \times [n]}(S \times T, \textnormal{Bern}(p), \textnormal{Bern}(Q)) \\
\mathcal{P}_{\text{3}} &= \pr{isgm}_D(N, S, d, \mu, \epsilon) \otimes \mN\left(0, \, I_d - \theta v_S v_S^\top\right)^{\otimes n} \\
\mathcal{P}_{\text{4-5}} &= \pr{mslr}_D(N, S, d, \gamma, \epsilon)
\end{align*}
Combining the inequalities above for $\mu$ and $\tau$ with Lemmas \ref{lem:bern-clone} and \ref{lem:isgm-label} and Theorems \ref{thm:isgmreduction} and \ref{thm:neg-spca} implies that we can take
$$\epsilon_1 = 0, \quad \epsilon_2 = O\left( w^{-1} + k_n^{-2}m^{-2}r^{-2t} \right), \quad \epsilon_3 = O\left( m^{3/2} n^{-1/2} + k_n(4e^{-3})^{n/2k_n} \right) \quad \text{and} \quad \epsilon_{\text{4-5}} = O(N^{-1})$$
Applying Lemma \ref{lem:tvacc} now yields the first total variation upper bound in the theorem. Now consider Lemma \ref{lem:tvacc} applied to
\allowdisplaybreaks
\begin{align*}
\mathcal{P}_0 &= \text{Bern}(q)^{\otimes m \times n} \\
\mathcal{P}_1 &= \text{Bern}(Q)^{\otimes m \times n} \otimes \text{Bern}(Q)^{\otimes m \times n} \\
\mathcal{P}_2 &= \mN(0, I_d)^{\otimes N} \otimes \text{Bern}(Q)^{\otimes m \times n} \\
\mathcal{P}_{\text{3}} &= \mN(0, I_d)^{\otimes N} \otimes \mN(0, I_d)^{\otimes n} \\
\mathcal{P}_{\text{4-5}} &= \left( \mN(0, I_d) \otimes \mN(0, 1 + \gamma^2) \right)^{\otimes N}
\end{align*}
By Lemmas \ref{lem:bern-clone} and \ref{lem:isgm-label} and Theorems \ref{thm:isgmreduction} and \ref{thm:neg-spca}, we can take
$$\epsilon_1 = 0, \quad \epsilon_2 = O\left( k_n^{-2}m^{-2}r^{-2t} \right), \quad \epsilon_3 = O\left( m^{3/2} n^{-1/2} \right) \quad \text{and} \quad \epsilon_{\text{4-5}} = O(N^{-1})$$
Applying Lemma \ref{lem:tvacc} now yields the second total variation upper bound in the theorem and completes the proof of the theorem.
\end{proof}

As in the previous section, the random matrix $R_{L, \epsilon}$ can be used in place of $K_{r, t}$ in our reduction $k$\textsc{-bpds-to-mslr}. Specifically, replacing $k\pr{-bpds-to-isgm}$ in Step 2 with $k\pr{-bpds-to-isgm}_R$ and again replacing $r^t$ with the more flexible parameter $L$ yields an alternative reduction $k\textsc{-bpds-to-mslr}_R$. The guarantees below for this modified reduction follow from the same argument as in the proof of Theorem \ref{thm:slr-reduction}, using Corollary \ref{thm:mod-isgmreduction} in place of Theorem \ref{thm:isgmreduction}.

\begin{corollary}[Reduction from $k\pr{-bpds}$ to $\pr{mslr}$ with $R_{L, \epsilon}$] \label{thm:mod-slr-reduction}
Let $n$ be a parameter and let $w(n) = \omega(1)$ be a slow-growing function. Fix initial and target parameters as follows:
\begin{itemize}
\item \textnormal{Initial} $k\pr{-bpds}$ \textnormal{Parameters:} $m, n, k_m, k_n, p, q$ and $F$ as in Theorem \ref{thm:slr-reduction}.
\item \textnormal{Target} $\pr{mslr}$ \textnormal{Parameters:} $(N, d, \gamma, \epsilon)$ and a parameter $L = L(N) \in \mathbb{N}$ such that $N \le n$ and $(N, d, \epsilon, L)$ satisfy the conditions in Corollary \ref{thm:mod-isgmreduction}. Suppose that $\gamma \in (0, 1/2)$ satisfies that
$$\gamma^2 \le c \cdot \min\left\{ \frac{\epsilon k_m}{L \log(k_nmL) \log N}, \, \frac{k_n k_m}{n \log(mn)} \right\}$$
for a sufficiently small constant $c > 0$.
\end{itemize}
If $\mathcal{A}$ denotes $k\textsc{-bpds-to-mslr}_R$ applied with the parameters above, then $\mathcal{A}$ runs in $\textnormal{poly}(m, n)$ time and
\begin{align*}
\TV\left( \mathcal{A}\left( \mathcal{M}_{[m] \times [n]}(S \times T, p, q) \right), \, \pr{mslr}_D(N, S, d, \gamma, \epsilon) \right) &= o(1) \\
\TV\left( \mathcal{A}\left( \textnormal{Bern}(q)^{\otimes m \times n} \right), \, \left( \mN(0, I_d) \otimes \mN\left(0, 1 + \gamma^2\right) \right)^{\otimes N} \right) &= o(1)
\end{align*}
for all $k_m$-subsets $S \subseteq [m]$ and $k_n$-subsets $T \subseteq [n]$ with $|T \cap F_i| = 1$ for each $1 \le i \le k_n$.
\end{corollary}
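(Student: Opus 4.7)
The plan is to mirror the proof of Theorem \ref{thm:slr-reduction} verbatim, replacing only the single step that invokes $k\pr{-bpds-to-isgm}$ (and hence Theorem \ref{thm:isgmreduction}) with its random-matrix analogue $k\pr{-bpds-to-isgm}_R$ (and hence Corollary \ref{thm:mod-isgmreduction}). Concretely, I would decompose the reduction $\mathcal{A} = k\textsc{-bpds-to-mslr}_R$ into the same four blocks as before,
$$M \xrightarrow{\mathcal{A}_1} (M_{\pr{isgm}}, M_{\pr{neg-spca}}) \xrightarrow{\mathcal{A}_2} (Z, M_{\pr{neg-spca}}) \xrightarrow{\mathcal{A}_3} (Z, W) \xrightarrow{\mathcal{A}_{\text{4-5}}} \left((X_1,y_1),\dots,(X_N,y_N)\right),$$
and then chain total variation bounds along this sequence with Lemma \ref{lem:tvacc}.

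Before applying Lemma \ref{lem:tvacc}, I would verify that the hypothesis on $\gamma$ in the corollary is strong enough to meet the technical conditions of both Corollary \ref{thm:mod-isgmreduction} and Theorem \ref{thm:neg-spca}. For the $\pr{isgm}$ side, set $\mu = 4\gamma\sqrt{\log N/k_m}$ exactly as in Figure \ref{fig:mixtures-slr-reduction}; the bound $\gamma^2 \lesssim \epsilon k_m/(L \log(k_n m L) \log N)$ translates directly into $\mu \lesssim \sqrt{\epsilon/L}/\sqrt{\log(k_n m L)}$, which is precisely the range required by Corollary \ref{thm:mod-isgmreduction} (since $p,q$ are constants so $\delta = \Omega(1)$ and $(p-Q)^{-1} = O(1)$). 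For the $\pr{neg-spca}$ side, the choice $\tau^2 = 8n\gamma^2/(k_n k_m(1-\gamma^2))$ together with $\gamma^2 \lesssim k_n k_m/(n\log(mn))$ gives $\tau \lesssim 1/\sqrt{\log(mn)}$, meeting the requirement of Theorem \ref{thm:neg-spca}. The assumption $n \gg m^3$ is carried through to allow the Wishart/inverse Wishart comparison.

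With these checks in place, I would apply Lemma \ref{lem:tvacc} to the same sequence of intermediate distributions used in the proof of Theorem \ref{thm:slr-reduction}, namely
\begin{align*}
\mathcal{P}_0 &= \mathcal{M}_{[m]\times[n]}(S\times T, \textnormal{Bern}(p), \textnormal{Bern}(q)), \\
\mathcal{P}_1 &= \mathcal{M}_{[m]\times[n]}(S\times T, \textnormal{Bern}(p), \textnormal{Bern}(Q))^{\otimes 2}, \\
\mathcal{P}_2 &= \pr{isgm}_D(N, S, d, \mu, \epsilon) \otimes \mathcal{M}_{[m]\times[n]}(S\times T, \textnormal{Bern}(p), \textnormal{Bern}(Q)), \\
\mathcal{P}_3 &= \pr{isgm}_D(N, S, d, \mu, \epsilon) \otimes \mN\!\left(0, I_d - \theta v_S v_S^\top\right)^{\otimes n}, \\
\mathcal{P}_{\text{4-5}} &= \pr{mslr}_D(N, S, d, \gamma, \epsilon),
\end{align*}
with $\theta = \tau^2 k_n k_m/(4n + \tau^2 k_n k_m)$. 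Lemma \ref{lem:bern-clone} gives $\epsilon_1 = 0$; Corollary \ref{thm:mod-isgmreduction} gives $\epsilon_2 = o(1)$ (this is the only place the argument diverges from Theorem \ref{thm:slr-reduction}); Theorem \ref{thm:neg-spca} gives $\epsilon_3 = O(m^{3/2}n^{-1/2}) + k_n(4e^{-3})^{n/2k_n} = o(1)$; and Lemma \ref{lem:isgm-label} gives $\epsilon_{\text{4-5}} = O(N^{-1})$. Summing yields the first bound. The null bound is obtained identically with the sequence $\text{Bern}(q)^{\otimes m\times n} \to \text{Bern}(Q)^{\otimes m\times n} \otimes \text{Bern}(Q)^{\otimes m\times n} \to \mN(0,I_d)^{\otimes N} \otimes \text{Bern}(Q)^{\otimes m\times n} \to \mN(0,I_d)^{\otimes N}\otimes \mN(0,I_d)^{\otimes n} \to (\mN(0,I_d)\otimes \mN(0,1+\gamma^2))^{\otimes N}$, where the $\pr{isgm}$ null guarantee is again supplied by Corollary \ref{thm:mod-isgmreduction} instead of Theorem \ref{thm:isgmreduction}.

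The only non-routine point is making sure that the parameter regime in the corollary statement (in particular the condition on $\gamma$ involving $\epsilon/L$ rather than $1/r^{t+1}$, and the coupling $N \le n$, $\max\{wN, n\} \le k_n L$) still lets the clone probabilities, the $\mu$ constraint of $\pr{Bern-Rotations}$ under $R_{L,\epsilon}$, and the $n \gg m^3$ requirement of the inverse-Wishart comparison all hold simultaneously; I expect this to be purely bookkeeping since the substitution $r^t \leftrightarrow L$, $1/r \leftrightarrow \epsilon$ is exactly the one already handled in Corollary \ref{thm:mod-isgmreduction}. No other step of the original proof interacts with the design matrix, so no further modifications are needed.
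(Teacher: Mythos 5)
Your proposal is correct and takes essentially the same approach as the paper: the paper's proof is literally the one-line remark that the corollary "follows from the same argument as in the proof of Theorem \ref{thm:slr-reduction}, using Corollary \ref{thm:mod-isgmreduction} in place of Theorem \ref{thm:isgmreduction}," and you have spelled out exactly this substitution, including the parameter bookkeeping (translating the $\gamma$ hypothesis into the $\mu$ and $\tau$ constraints) that the paper leaves implicit.
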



\section{Completing Tensors from Hypergraphs}
\label{sec:2-hypergraph-planting}


\begin{figure}[t!]
\begin{algbox}
\textbf{Algorithm} \textsc{Advice-Complete-Tensor}

\vspace{1mm}

\textit{Inputs}: \pr{hpds} instance $H \in \mG_n^s$ with edge probabilities $0 < q < p \le 1$, an $(s - 1)$-set of advice vertices $V = \{v_1, v_2, \dots, v_{s - 1}\}$ of $H$

\begin{enumerate}
\item \textit{Clone Hyperedges}: Compute the $(s!)^2$ hypergraphs $H^{\sigma_1, \sigma_2} \in \mG_n^s$ for each pair $\sigma_1, \sigma_2 \in S_s$ by applying $\pr{Bernoulli-Clone}$ with $t = (s!)^2$ to the $\binom{N}{s}$ hyperedge indicators of $H$ with input Bernoulli probabilities $p$ and $q$ and output probabilities $p$ and
$$Q = 1 - (1 - p)^{1 - 1/t}(1 - q)^{1/t} + \mathbf{1}_{\{p = 1\}}\left( q^{1/t} - 1 \right)$$
\item \textit{Form Tensor Entries}: For each $I = (i_1, i_2, \dots, i_s) \in \left( [N] \backslash V \right)^s$, set the $(i_1, i_2, \dots, i_s)$th entry of the tensor $T$ with dimensions $(N - s + 1)^{\otimes s}$ to be the following hyperedge indicator
$$T_{i_1, i_2, \dots, i_s} = \mathbf{1}\left\{ \{v_1, v_2, \dots, v_{s - |P(I)|} \} \cup \{i_1, i_2, \dots, i_s\} \in E\left( H^{\tau_{\textnormal{P}}(I), \tau_{\textnormal{V}}(I)} \right) \right\}$$
where $P(I)$, $\tau_{\textnormal{P}}(I)$ and $\tau_{\textnormal{V}}(I)$ are as in Definition \ref{defn:tuple-stats}.
\item \textit{Output}: Output the order $s$ tensor $T$ with axes indexed by the set $[N] \backslash V$.
\end{enumerate}
\vspace{0.5mm}

\textbf{Algorithm} \textsc{Iterate-and-Reduce}

\vspace{1mm}

\textit{Inputs}: $k$\pr{-hpds} instance $H \in \mG_n^s$ with edge probabilities $0 < q < p \le 1$, partition $E$ of $[n]$ into $k$ equally-sized parts, a one-sided blackbox $\mathcal{B}$ for the corresponding planted tensor problem

\begin{enumerate}
\item For every $(s - 1)$-set of vertices $\{v_1, v_2, \dots, v_{s - 1}\}$ all from different parts of $E$, form the tensor $T$ by applying \textsc{Advice-Complete-Tensor} to $H$ and $\{v_1, v_2, \dots, v_{s - 1}\}$, remove the indices of $T$ that are in the same part of $E$ as at least one of $\{v_1, v_2, \dots, v_{s - 1}\}$ and run the blackbox $\mathcal{B}$ on the resulting tensor $T$.
\item Output $H_1$ if any application if $\mathcal{B}$ in Step 1 output $H_1$.
\end{enumerate}
\vspace{0.5mm}

\end{algbox}
\caption{The first reduction is a subroutine to complete the entries of a planted dense sub-hypergraph problem into a planted tensor problem given an advice set of vertices. The second reduction uses this subroutine to reduce solving a planted dense sub-hypergraph problem to producing a one-sided blackbox solving the planted tensor problem.}
\label{fig:hyp-to-tensors}
\end{figure}


In this section we introduce a key subroutine that will be used in our reduction to tensor PCA in Section \ref{sec:3-tensor}. The starting point for our reduction $k\pr{-hpds-to-tpca}$ is the hypergraph problem $k\pr{-hpds}$. The adjacency tensor of this instance is missing all entries with at least one pair of equal indices. The first procedure \textsc{Advice-Complete-Tensor} in this section gives a method of completing these missing entries and producing an instance of the planted sub-tensor problem, given access to a set of $s - 1$ vertices in the clique, where $s$ is the order of the target tensor. In order to translate this into a reduction, we iterate over all $(s - 1)$-sets of vertices and carry out this reduction for each one, as will be described in more detail later in this section. For the motivation and high-level ideas behind the reductions in this section, we refer to the discussion in Section \ref{subsec:1-tech-completing}.

In order to describe our reduction \textsc{Advice-Complete-Tensor}, we will need the following definition which will be crucial in indexing the missing entries of the tensor.

\begin{definition}[Tuple Statistics] \label{defn:tuple-stats}
Given a tuple $I = (i_1, i_2, \dots, i_s)$ where each $i_j \in U$ for some set $U$, we define the partition $P(I)$ and permutations $\tau_{\textnormal{P}}(I)$ and $\tau_{\textnormal{V}}(I)$ of $[s]$ as follows:
\begin{enumerate}
\item Let $P(I)$ be the unique partition of $[s]$ into nonempty parts $P_1, P_2, \dots, P_t$ where $i_k = i_l$ if and only if $k, l \in P_j$ for some $1 \le j \le t$, and let $|P(I)| = t$.
\item Given the partition $P(I)$, let $\tau_{\textnormal{P}}(I)$ be the permutation of $[s]$ formed by ordering the parts $P_j$ in increasing order of their largest element, and then listing the elements of the parts $P_j$ according to this order, where the elements of each individual part are written in decreasing order.
\item Let $P'_1, P'_2, \dots, P'_t$ be the ordering of the parts of $P(I)$ as defined above and let $v_1, v_2, \dots, v_t$ be such that $v_j = i_k$ for all $k \in P'_j$ or in other words $v_j$ is the common value of $i_k$ of all indices $k$ in the part $P'_j$. The values $v_1, v_2, \dots, v_t$ are by definition distinct and their ordering induces a permutation $\sigma$ on $[t]$. Let $\tau_{\textnormal{V}}(I)$ be the permutation on $[s]$ formed by setting $\left( \tau_{\textnormal{V}}(I) \right)_{[t]} = \sigma$ and extending $\sigma$ to $[s]$ by taking $\left( \tau_{\textnormal{V}}(I) \right)(j) = j$ for all $t < j \le s$.
\end{enumerate}
\end{definition}

Note that $|P(I)|$ is the number of distinct values in $I$ and thus $|P(I)| = |\{i_1, i_2, \dots, i_s\}|$ for each $I$. For example, if $I = (4, 4, 1, 2, 2, 5, 3, 5, 2)$ and $s = 9$, then $P(I)$, $\tau_{\textnormal{P}}(I)$ and $\tau_{\textnormal{V}}(I)$ are
\begin{align*}
P(I) &= \left\{ \{ 1, 2 \}, \{ 3 \}, \{4, 5, 9\}, \{6, 8\}, \{ 7 \} \right\}, \quad \tau_{\textnormal{P}}(I) = (2, 1, 3, 7, 8, 6, 9, 5, 4) \quad \text{and} \\
\tau_{\textnormal{V}}(I) &= (4, 1, 3, 5, 2, 6, 7, 8, 9)
\end{align*}

We now establish the main Markov transition properties of $\textsc{Advice-Complete-Tensor}$. Given a set $X$, let $\mathcal{E}_{X, s}$ be the set $\binom{X}{s}$ of all subsets of $X$ of size $s$.

\begin{lemma}[Completing Tensors with Advice Vertices] \label{lem:completing}
Let $0 < q < p \le 1$ be such that $\min\{q, 1 - q\} = \Omega_N(1)$ and let $s$ be a constant. Let $0 < Q < p$ be given by
$$Q = 1 - (1 - p)^{1 - 1/t}(1 - q)^{1/t} + \mathbf{1}_{\{p = 1\}}\left( q^{1/t} - 1 \right)$$
where $t = (s!)^2$. Let $V$ be an arbitrary $(s - 1)$-subset of $[N]$ and let $\mathcal{A}$ denote $\textsc{Advice-Complete-Tensor}$ with input $H$, output $T$, advice vertices $V$ and parameters $p$ and $q$. Then $\mathcal{A}$ runs in $\textnormal{poly}(N)$ time and satisfies
\begin{align*}
\mathcal{A}\left( \mathcal{M}_{\mathcal{E}_{[N], s}}\left( \mathcal{E}_{S \cup V, s}, \textnormal{Bern}(p), \textnormal{Bern}(q) \right) \right) &\sim \mathcal{M}_{\left( [N] \backslash V \right)^s}\left( S^{s}, \textnormal{Bern}(p), \textnormal{Bern}(Q) \right) \\
\mathcal{A}\left( \mathcal{M}_{\mathcal{E}_{[N], s}}\left( \textnormal{Bern}(q) \right) \right) &\sim \mathcal{M}_{\left( [N] \backslash V \right)^s}\left( \textnormal{Bern}(Q) \right)
\end{align*}
for all subsets $S \subseteq [N]$ disjoint from $V$.
\end{lemma}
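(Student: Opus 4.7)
The plan is to show that the output tensor $T$ has independent Bernoulli entries with the correct marginals. This reduces to two facts: (i) Step 1 produces $(s!)^2$ mutually independent clones, each with the correct target distribution $(p, Q)$, and (ii) the assignment in Step 2 maps distinct tensor indices $I \in ([N]\setminus V)^s$ to distinct (hyperedge, clone) pairs, so the entries of $T$ are independent Bernoulli queries of mutually independent random variables. The main obstacle is proving injectivity in (ii), which rests on the combinatorial design of the labels $(\tau_P(I), \tau_V(I))$.

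For (i), I would invoke Lemma \ref{lem:bern-clone} with $t = (s!)^2$ and output probabilities $(P, Q) = (p, Q)$ with $Q$ as specified in the algorithm. The defining identity $(1-Q)^t = (1-p)^{t-1}(1-q)$ (with the $\mathbf{1}_{\{p=1\}}$ correction handling the degenerate case $p = 1$) yields the first cloning inequality $\frac{1-p}{1-q} \le ((1-p)/(1-Q))^t$ immediately, while the second, $Q^t \geq p^{t-1} q$, follows from H\"older's inequality applied to $1 = (1-p) + p = (1-q) + q$ with exponents $(t/(t-1), t)$, giving $(1-p)^{1-1/t}(1-q)^{1/t} + p^{1-1/t} q^{1/t} \leq 1$. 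Lemma \ref{lem:bern-clone} then gives that the hypergraphs $\{H^{\sigma_1,\sigma_2}\}_{\sigma_1,\sigma_2 \in S_s}$ are mutually independent, with each distributed as $\mathcal{M}_{\mathcal{E}_{[N],s}}(\mathcal{E}_{S\cup V, s}, \textnormal{Bern}(p), \textnormal{Bern}(Q))$ under the planted hypothesis and as $\mathcal{M}_{\mathcal{E}_{[N],s}}(\textnormal{Bern}(Q))$ under the null.

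For (ii), suppose two indices $I, I' \in ([N]\setminus V)^s$ give the same hyperedge $e(I) = \{v_1, \ldots, v_{s-|P(I)|}\} \cup \{i_1, \ldots, i_s\} = e(I')$ and the same clone label $(\tau_P(I), \tau_V(I)) = (\tau_P(I'), \tau_V(I'))$. The permutation $\tau_P(I)$ canonically encodes the partition $P(I)$ by listing parts in increasing order of largest element with each part written in decreasing order, and $P(I)$ is recoverable from $\tau_P(I)$ by scanning left to right and starting a new part whenever the current entry exceeds the previous part's head. Hence $P(I) = P(I')$; let $k$ denote the common value of $|P(I)|$. Both hyperedges use the same advice prefix $\{v_1, \ldots, v_{s-k}\} \subseteq V$, so since $\{i_j\}, \{i'_j\} \subseteq [N]\setminus V$, stripping these advice vertices from both sides of $e(I) = e(I')$ yields $\{i_1, \ldots, i_s\} = \{i'_1, \ldots, i'_s\}$ as $k$-element sets. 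Finally, $\tau_V(I) = \tau_V(I')$ encodes the ordering of these $k$ distinct values within the canonical part enumeration given by $\tau_P$, so the assignment of values to parts coincides and $I = I'$.

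Combining (i) and (ii), each entry $T_I$ is an independent Bernoulli with success probability $p$ if $e(I) \in \mathcal{E}_{S \cup V, s}$ and $Q$ otherwise. Since $S \cap V = \emptyset$ and $\{i_1, \ldots, i_s\} \subseteq [N]\setminus V$, the containment $e(I) \subseteq S \cup V$ is equivalent to $\{i_1, \ldots, i_s\} \subseteq S$, i.e.\ $I \in S^s$. This gives exactly $\mathcal{M}_{([N]\setminus V)^s}(S^s, \textnormal{Bern}(p), \textnormal{Bern}(Q))$ under the planted hypothesis, and taking $S = \emptyset$ yields $\mathcal{M}_{([N]\setminus V)^s}(\textnormal{Bern}(Q))$ under the null. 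The $\textnormal{poly}(N)$ runtime is immediate since $s$ is constant, the cloning step runs in $\textnormal{poly}(N)$ time on $\binom{N}{s}$ input bits, and Step 2 processes each of $(N - s + 1)^s$ tensor entries in $O(1)$ time given the precomputed clones.
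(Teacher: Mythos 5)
Your proof is correct and takes essentially the same approach as the paper: invoke the Bernoulli cloning lemma to get $(s!)^2$ independent copies with the target edge probabilities, then show the labeling $I \mapsto (e(I), \tau_P(I), \tau_V(I))$ is injective so the entries of $T$ read off distinct independent Bernoulli random variables. One small addition you make over the paper's exposition is the explicit H\"older verification that the chosen $Q$ satisfies the hypotheses of Lemma \ref{lem:bern-clone} (the paper simply invokes the lemma with the parameters as specified in the algorithm); your left-to-right scanning rule for recovering $P(I)$ from $\tau_P(I)$ is also a slightly cleaner reformulation of the paper's longest-increasing-subsequence description, but the content is identical.
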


\begin{proof}
First note that Step 2 of $\mathcal{A}$ is well defined since the fact that $|P(I)| = |\{i_1, i_2, \dots, i_s\}|$ implies that $\{v_1, v_2, \dots, v_{s - |P(I)|} \} \cup \{i_1, i_2, \dots, i_s\}$ is always a set of size $s$. We first consider the case in which $H \sim \mathcal{M}_{\mathcal{E}_{[N], s}}\left( \mathcal{E}_{S \cup V, s}, \textnormal{Bern}(p), \textnormal{Bern}(q) \right)$. By Lemma \ref{lem:bern-clone}, it follows that the hyperedge indicators of $H^{\sigma_1, \sigma_2}$ are all independent and distributed as
$$\mathbf{1}\left\{ e \in E\left( H^{\sigma_1, \sigma_2} \right) \right\} \sim \left\{ \begin{array}{ll} \textnormal{Bern}(p) &\textnormal{if } e \subseteq S \cup V \\ \textnormal{Bern}(Q) &\textnormal{otherwise} \end{array} \right.$$
for each $\sigma_1, \sigma_2 \in S_s$ and subset $e \subseteq [N]$ with $|e| = s$. We now observe that $T$ agrees in its entrywise marginal distributions with $\mathcal{M}_{\left( [N] \backslash V \right)^s}\left( S^{s}, \textnormal{Bern}(p), \textnormal{Bern}(Q) \right)$. In particular, we have that:
\begin{itemize}
\item if $(i_1, i_2, \dots, i_s)$ is such that $i_j \in S$ for all $1 \le j \le s$ then we have that $\{v_1, v_2, \dots, v_{s - |P(I)|} \} \cup \{i_1, i_2, \dots, i_s\} \subseteq S \cup V$ and hence
$$T_{i_1, i_2, \dots, i_s} = \mathbf{1}\left\{ \{v_1, v_2, \dots, v_{s - |P(I)|} \} \cup \{i_1, i_2, \dots, i_s\} \in E\left( H^{\tau_{\textnormal{P}}(I), \tau_{\textnormal{V}}(I)} \right) \right\} \sim \textnormal{Bern}(p)$$
\item if $(i_1, i_2, \dots, i_s)$ is such that there is some $j$ such that $i_jj \not \in S$, then $\{v_1, v_2, \dots, v_{s - |P(I)|} \} \cup \{i_1, i_2, \dots, i_s\} \not \subseteq S \cup V$ and $T_{i_1, i_2, \dots, i_s} \sim \text{Bern}(Q)$.
\end{itemize}
It suffices to verify that the entries of $T$ are independent. Since all of the hyperedge indicators of the $H^{\sigma_1, \sigma_2}$ are independent, it suffices to verify that the entries of $T$ are equal to distinct hyperedge indicators.

To show this, we will show that $\{i_1, i_2, \dots, i_s\}$, $\tau_{\textnormal{P}}(I)$ and $\tau_{\textnormal{V}}(I)$ determine the tuple $I = (i_1, i_2, \dots, i_s)$, from which the desired result follows. Consider the longest increasing subsequence of $\tau_{\textnormal{P}}(I)$ starting with $\left(\tau_{\textnormal{P}}(I) \right)(1)$. The elements of this subsequence partition $\tau_{\textnormal{P}}(I)$ into contiguous subsequences corresponding to the parts of $P(I)$. Thus $\tau_{\textnormal{P}}(I)$ determines $P(I)$. Now the first $|P(I)|$ elements of $\tau_{\textnormal{V}}(I)$ along with $\{i_1, i_2, \dots, i_s\}$ determine the values $v_j$ in Definition \ref{defn:tuple-stats} corresponding to $I$ on each part of $P(I)$. This uniquely determines the tuple $I$. Therefore the entries $T_{i_1, i_2, \dots, i_s}$ all correspond to distinct hyperedge indicators and are therefore independent. Applying this argument with $S = \emptyset$ yields the second identity in the statement of the lemma. This completes the proof of the lemma.
\end{proof}

We now analyze the additional subroutine \textsc{Iterate-and-Reduce}. This will show it suffices to design a reduction with low total variation error in order to show computational lower bounds for Tensor PCA. Let $k\pr{-pst}^s_E(N, k, p, q)$ denote the following \textit{planted subtensor} hypothesis testing problem with hypotheses
$$H_0 : T \sim \mathcal{M}_{[N]^s}\left( \textnormal{Bern}(q) \right) \quad \text{and} \quad H_1 : T \sim \mathcal{M}_{[N]^s}\left( S^{s}, \textnormal{Bern}(p), \textnormal{Bern}(q) \right)$$
where $S$ is chosen uniformly at random from all $k$-subsets of $[N]$ intersecting each part of $E$ in one element. The next lemma captures our key guarantee of \textsc{Iterate-and-Reduce}.

%
%
%

\begin{lemma}[Hardness of One-Sided Blackboxes by Reduction] \label{cor:one-side-reduction}
Fix a pair $0 < q < p \le 1$ with $\min\{q, 1 - q\} = \Omega(1)$, a constant $s$ and let $Q$ be as in Figure \ref{fig:hyp-to-tensors}. Suppose that there is a reduction mapping both hypotheses of $k\pr{-pst}^s_E(N - (s - 1)N/k, k - s + 1, p, Q)$ with $k = o(\sqrt{N})$ to the corresponding hypotheses $H_0$ and $H_1$ of a testing problem $\mP$ within total variation $O(N^{-s})$. Then the $k\pr{-hpc}^s$ or $k\pr{-hpds}^s$ conjecture for constant $0 < q < p \le 1$ implies that there cannot be a $\textnormal{poly}(n)$ time algorithm $\mathcal{A}$ solving $\mP$ with a low false positive probability of $\bP_{H_0}[\mathcal{A}(X) = H_1] = O(N^{-s})$, where $X$ denotes the observed variable in $\mP$.
\end{lemma}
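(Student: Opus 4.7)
The plan is to compose \textsc{Iterate-and-Reduce} with the hypothesized total-variation reduction $\mathcal{R}$ from $k\pr{-pst}^s_E(N_0,k-s+1,p,Q)$ to $\mP$, where $N_0 = N-(s-1)N/k$, using as the inner blackbox $\mathcal{B} := \mathcal{A}\circ \mathcal{R}$. Call the resulting algorithm on inputs of $k\pr{-hpds}^s_E(N,k,p,q)$ by $\mathcal{B}'$. Because $s$ is constant, the number of admissible $(s-1)$-subsets $V$ (one vertex from each of $s-1$ distinct parts of $E$) is $\binom{k}{s-1}(N/k)^{s-1}=O(N^{s-1})$, each iteration is polynomial time, and hence $\mathcal{B}'$ runs in $\textnormal{poly}(N)$. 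I will show that $\mathcal{B}'$ has Type I$+$II error bounded away from $1$, which (since $k=o(\sqrt N)$) contradicts the $k\pr{-hpc}^s$/$k\pr{-hpds}^s$ conjecture.

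Under $H_0$ of $k\pr{-hpds}^s$, the second identity of Lemma~\ref{lem:completing} asserts that for every admissible $V$ the tensor produced by \textsc{Advice-Complete-Tensor} is exactly distributed as $\mathcal{M}_{([N]\setminus V)^s}(\textnormal{Bern}(Q))$; dropping the indices in the parts of $E$ containing $V$ preserves this, and the reduction hypothesis together with the data-processing inequality puts the result within $O(N^{-s})$ total variation of $H_0$ of $\mP$. The low-false-positive hypothesis on $\mathcal{A}$ then bounds the probability that $\mathcal{B}$ outputs $H_1$ on a single iteration by $O(N^{-s})$, and a union bound over the $O(N^{s-1})$ iterations gives $\bP_{H_0}[\mathcal{B}'(H)=H_1] = O(N^{-1}) = o(1)$.

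Under $H_1$ with planted clique $S^*$, the first identity of Lemma~\ref{lem:completing} gives that for any admissible $V\subseteq S^*$ the restricted completed tensor is exactly $\mathcal{M}_{([N]\setminus V)^s}((S^*\setminus V)^s,\textnormal{Bern}(p),\textnormal{Bern}(Q))$ conditional on $S^*$. Consider the joint law on $(V,S^*)$ in which $S^*$ is uniform among admissible $k$-subsets and $V$ is then uniform among the $\binom{k}{s-1}$ admissible $(s-1)$-subsets of $S^*$; equivalently, $V$ is uniform among admissible $(s-1)$-subsets of $[N]$ and $S^*\setminus V$ is uniform among admissible $(k-s+1)$-subsets of the parts of $E$ not containing $V$. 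Under this joint law the marginal distribution of the restricted completed tensor is \emph{exactly} the $H_1$-distribution of $k\pr{-pst}^s_E(N_0,k-s+1,p,Q)$. Writing $p(V,S^*) := \bP[\mathcal{B}(\textnormal{iter}(V,H))=H_1 \mid V,S^*]$, and combining the $O(N^{-s})$ total-variation guarantee of $\mathcal{R}$ with the fact that $\mathcal{A}$ solves $\mP$ (so Type I$+$II error $<1-c$ for some constant $c>0$ while Type I error is $O(N^{-s})$, giving true positive rate $\Omega(1)$), the true positive rate of $\mathcal{B}$ on the $H_1$-distribution of $k\pr{-pst}^s_E$ is $\Omega(1)$, hence
\begin{equation*}
\Omega(1) \;=\; \bE_{(V,S^*)}[\,p(V,S^*)\,] \;=\; \bE_{S^*}\!\Big[\tfrac{1}{\binom{k}{s-1}}\!\sum_{V\subseteq S^*} p(V,S^*)\Big] \;\le\; \bE_{S^*}\!\big[\max_{V\subseteq S^*} p(V,S^*)\big].
\end{equation*}
Since $\mathcal{B}'$ returns $H_1$ whenever any iteration does, $\bP_{H_1}[\mathcal{B}'(H)=H_1] \ge \bE_{S^*}[\max_{V\subseteq S^*}p(V,S^*)] = \Omega(1)$, so the Type I$+$II error of $\mathcal{B}'$ is $o(1) + (1-\Omega(1)) = 1-\Omega(1)$, yielding the contradiction.

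The delicate step is the passage from the reduction's guarantee --- which applies only to the mixture of $k\pr{-pst}^s_E$ instances with a uniformly random planted set --- to a lower bound for the composition on the $k\pr{-hpds}^s$ input distribution: for any fixed $V\subseteq S^*$ the resulting planted set $S^*\setminus V$ is specific, not uniform, so no per-instance lower bound on $p(V,S^*)$ is available. The fix is to average $V$ inside $S^*$ so that the marginal input law is exactly the $H_1$-distribution of $k\pr{-pst}^s_E$, and then use $\max\ge\mathrm{avg}$. The low-false-positive hypothesis on $\mathcal{A}$ is also essential: without a $o(N^{-(s-1)})$ false-positive rate the $O(N^{s-1})$-fold union bound under $H_0$ would be $\Theta(1)$ and the null analysis would fail.
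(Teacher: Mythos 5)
Your proof is correct and follows essentially the same approach as the paper: compose \textsc{Iterate-and-Reduce} with the blackbox $\mathcal{A}\circ\mathcal{R}$, use a union bound over the $O(N^{s-1})$ iterations to control the false-positive rate under $H_0$ (which is exactly why the low-false-positive hypothesis is needed), and show the true-positive rate under $H_1$ is $\Omega(1)$. The only cosmetic difference is in the $H_1$ step: the paper fixes $i^*$ to be the \emph{first} iteration whose advice lies entirely inside the planted set $S^*$, observes that under the natural iteration order the marginal of $T_{i^*}$ is exactly the $H_1$-distribution of $k\pr{-pst}^s$, and directly bounds $\bP_{H_1''}[\pr{ir}(H)=H_0'']\le \bP_{H_1}[\mathcal{A}(X)=H_0]+O(N^{-s})$; you instead average $V$ uniformly inside $S^*$ to recover the $k\pr{-pst}^s$ mixture and pass from the average to the maximum. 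Both are valid; the paper's choice of a single $i^*$ sidesteps explicitly discussing the $V$-dependent relabeling that makes your ``exactly the $H_1$-distribution'' claim precise, but that is a minor bookkeeping point. Your closing remark correctly identifies why the low-false-positive hypothesis is the non-negotiable ingredient.
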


\begin{proof}
Assume for contradiction that there is a such a $\textnormal{poly}(n)$ time algorithm $\mathcal{A}$ for $\mP$ with $\bP_{H_0}[\mathcal{A}(X) = H_1] = O(N^{-s})$ and Type I$+$II error
$$\bP_{H_0}[\mathcal{A}(X) = H_1] + \bP_{H_1}[\mathcal{A}(X) = H_0] \le 1 - \epsilon$$
for some $\epsilon = \Omega(1)$. Furthermore, let $\mathcal{R}$ denote the reduction described in the lemma. If $H_0'$ and $H_1'$ denote the hypotheses of $k\pr{-pst}^s_E(N - (s - 1)N/k, k - s + 1, p, Q)$ and $T$ denotes an instance of this problem, then $\mathcal{R}$ satisfies that
$$\TV\left( \mathcal{R}\left( \mL_{H_0'}(T) \right), \mL_{H_0}(T) \right) + \TV\left( \mathcal{R}\left( \mL_{H_1'}(T) \right), \mL_{H_1}(T) \right) = O(N^{-s})$$
Now consider applying \textsc{Iterate-and-Reduce} to: (1) a hard instance $H$ of $k\pr{-hpds}(N, k, p, q)$ with $k = o(\sqrt{N})$; and (2) the blackbox $\mathcal{B} = \mathcal{A} \circ \mathcal{R}$. Let $\pr{ir}(H) \in \{H_0'', H_1''\}$ denote the output of \textsc{Iterate-and-Reduce} on input $H$, and let $H_0''$ and $H_1''$ be the hypotheses of $k\pr{-hpds}(N, k, p, q)$. Furthermore, let $T_1, T_2, \dots, T_K$ denote the tensors formed in the $K = \left( \frac{N}{k} \right)^{s - 1} \binom{k}{s - 1}$ iterations of Step 1 of \textsc{Iterate-and-Reduce}. Note that each $T_i$ has all of its $s$ dimensions equal to $N - (s - 1)N/k$ since exactly $s - 1$ parts of $E$ of size $N/k$ are removed from $[N]$ in each iteration of Step 1 of \textsc{Iterate-and-Reduce}. First consider the case in which $H_0''$ holds. Each tensor in the sequence $T_1, T_2, \dots, T_K$ is marginally distributed as $\mathcal{M}_{[N - (s - 1)N/k]^s}\left( \textnormal{Bern}(Q) \right)$ by Lemma \ref{lem:completing}. By definition $\pr{ir}(H) = H_1''$ if and only if some application of $\mathcal{B}(T_i)$ outputs $H_1$. Now note that by a union bound, the definition of $\TV$ and the data-processing inequality, we have that
\begin{align*}
\bP_{H_0''}\left[\pr{ir}(H) = H_1''\right] &\le \sum_{i = 1}^K \bP_{H_0''}[\mathcal{A} \circ \mathcal{R}(T_i) = H_1] \\
&\le \sum_{i = 1}^K \left[ \bP_{H_0}[\mathcal{A}(X) = H_1] + \TV\left( \mathcal{R}\left( \mL_{H_0'}(T) \right), \mL_{H_0}(T) \right) \right] \\
&= O\left( K \cdot N^{-s} \right) = O(N^{-1})
\end{align*}
since $K = O(N^{s - 1})$. Now suppose that $H_1''$ holds and let $i^*$ be the first iteration of \textsc{Iterate-and-Reduce} in which each of the vertices $\{v_1, v_2, \dots, v_{s - 1}\}$ are in the planted dense sub-hypergraph of $H$. Lemma \ref{lem:completing} shows that $T_{i^*}$ is distributed as $\mathcal{M}_{[N - (s - 1)N/k]^s}\left( S^s, \text{Bern}(p), \textnormal{Bern}(Q) \right)$ where $S$ is chosen uniformly at random over all $(k - s + 1)$-subsets of $[N - (s - 1)N/k]$ with one element per part of the input partition $E$ associated with $H$. We now have that
\begin{align*}
\bP_{H_1''}\left[\pr{ir}(H) = H_0''\right] &\le 1 - \bP_{H_1''}\left[\pr{ir}(H) = H_1''\right] \le 1 - \bP_{H_1''}[\mathcal{A} \circ \mathcal{R}(T_{i^*}) = H_1] \\
&\le 1 - \bP_{H_1}[\mathcal{A}(X) = H_1] + \TV\left( \mathcal{R}\left( \mL_{H_1'}(T) \right), \mL_{H_1}(T) \right) \\
&= \bP_{H_1}[\mathcal{A}(X) = H_0] + O(N^{-s})
\end{align*}
Therefore the Type I$+$II error of \textsc{Iterate-and-Reduce} is
$$\bP_{H_0''}\left[\pr{ir}(H) = H_1''\right] + \bP_{H_1''}\left[\pr{ir}(H) = H_0''\right] = \bP_{H_1}[\mathcal{A}(X) = H_0] + O(N^{-1}) \le 1 - \epsilon + O(N^{-1})$$
and \textsc{Iterate-and-Reduce} solves $k\pr{-hpds}$, contradicting the $k\pr{-hpds}$ conjecture.
\end{proof}

%
%
%
%
%
%

\pagebreak

\part{Computational Lower Bounds from $\text{PC}_\rho$}
\label{part:lower-bounds}

\section{Secret Leakage and Hardness Assumptions}
\label{sec:2-secret-leakage}

In this section, we further discuss the conditions in the $\pr{pc}_\rho$ conjecture and provide evidence for it and for the specific hardness assumptions we use in our reductions. In Section \ref{subsec:2-sl-verifying}, we show that $k\pr{-hpc}^s$ is our strongest hardness assumption, explicitly give the $\rho$ corresponding to each of these hardness assumptions and show that the barriers in Conjecture \ref{conj:hard-conj} are supported by the $\pr{pc}_\rho$ conjecture for these $\rho$. In Section \ref{subsec:2-low-degree}, we give more general evidence for the $\pr{pc}_{\rho}$ conjecture through the failure of low-degree polynomial tests. We also discuss technical conditions in variants of the low-degree conjecture and how these relate to the $\pr{pc}_\rho$ conjecture. Finally, in Section \ref{subsec:2-sq}, we give evidence supporting several of the barriers in Conjecture \ref{conj:hard-conj} from statistical query lower bounds.

We remark that, as mentioned at the end of Section \ref{sec:1-PC}, all of our results and conjectures for $\pr{pc}_\rho$ appear to also hold for $\pr{pds}_\rho$ at constant edge densities $0 < q < p \le 1$. Evidence for these extensions to $\pr{pds}_\rho$ from the failure of low-degree polynomials and SQ algorithms can be obtained through computations analogous to those in Sections \ref{subsec:2-low-degree} and \ref{subsec:2-sq}.

\subsection{Hardness Assumptions and the $\pr{pc}_\rho$ Conjecture}
\label{subsec:2-sl-verifying}

In this section, we continue the discussion of the $\pr{pc}_{\rho}$ conjecture from Section \ref{sec:1-PC}. We first show that $k\pr{-hpc}^s$ reduces to the other conjectured barriers in Conjecture \ref{conj:hard-conj}. We then formalize the discussion in Section \ref{sec:1-PC} and explicitly construct secret leakage distributions $\rho$ such that the graph problems in Conjecture \ref{conj:hard-conj} can be obtained from instances of $\pr{pc}_\rho$ with these $\rho$. We then verify that the $\pr{pc}_\rho$ conjecture implies Conjecture \ref{conj:hard-conj} up to arbitrarily small polynomial factors. More precisely, we verify that these $\rho$, when constrained to be in the conjecturally hard parameter regimes in Conjecture \ref{conj:hard-conj}, satisfy the tail bound conditions on $p_\rho(s)$ in the $\pr{pc}_{\rho}$ conjecture.

\paragraph{The $k\pr{-hpc}^s$ Conjecture is the Strongest Hardness Assumption.} First note that when $s = 2$, our conjectured hardness for $k\pr{-hpc}^s$ is exactly our conjectured hardness for $k\pr{-pc}$ in Conjecture \ref{conj:hard-conj}. Thus it suffices to show that Conjecture \ref{conj:hard-conj} for $k\pr{-hpc}^s$ implies the conjecture for $k\pr{-bpc}$ and $\pr{bpc}$. This is the content of the following lemma.

\begin{lemma} \label{lem:khpc-strong}
Let $\alpha$ be a fixed positive rational number and $w = w(n)$ be an arbitrarily slow-growing function with $w(n) \to \infty$. Then there is a positive integer $s$ and a $\textnormal{poly}(n)$ time reduction from $k\pr{-hpc}^s(n, k, 1/2)$ with $k = o(\sqrt{n})$ to either $k\pr{-bpc}(M, N, k_M, k_N, 1/2)$ or $\pr{bpc}(M, N, k_M, k_N, 1/2)$ for some parameters satisfying $M = \Theta(N^\alpha)$ and $Cw^{-1} \sqrt{N} \le k_N = o(\sqrt{N})$ and $Cw^{-1} \sqrt{M} \le k_M = o(\sqrt{M})$ for some positive constant $C > 0$.
\end{lemma}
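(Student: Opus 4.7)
The plan is to produce the bipartite graph as an ``unfolding'' of the $s$-uniform hypergraph adjacency tensor along a chosen splitting of the $k$ parts of the input partition $E$. Write the target rational exponent as $\alpha = p/q$ in lowest terms, set $s_1 = p$, $s_2 = q$, and $s = s_1 + s_2$; if necessary, replace $(p, q)$ by $(2p, 2q)$ so that $s \geq 3$. Choose a clique size $k = k(n)$ satisfying $C' w(n)^{-1/\min(s_1,s_2)} \sqrt{n} \leq k = o(\sqrt{n})$ for a suitable constant $C' > 0$, partition the $k$ parts of $E$ into a left block of $k_L = \lfloor k/2 \rfloor$ parts and a right block of $k_R = k - k_L$ parts, and construct a bipartite graph $G'$ as follows. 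The left vertex set $L$ consists of all $s_1$-subsets $\{u_1, \dots, u_{s_1}\}$ with each $u_i$ lying in a distinct left part, and the right vertex set $R$ consists of all $s_2$-subsets with elements in distinct right parts; place an edge between $S_L \in L$ and $S_R \in R$ precisely when $S_L \cup S_R$ is a hyperedge of the input $k\pr{-hpc}^s$ instance. Finally, relabel $L$ and $R$ by independent uniformly random permutations.

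The main correctness observation is that, because the left and right part-collections are disjoint, the map $(S_L, S_R) \mapsto S_L \cup S_R$ is an injection from bipartite edges of $G'$ into $s$-subsets of $[n]$ having one vertex in each of $s$ distinct parts. Consequently distinct bipartite edges correspond to distinct hyperedges, so under $H_0$ the bipartite edges are independent $\textnormal{Bern}(1/2)$, matching the null distribution of $\pr{bpc}$. Under $H_1$, with hypergraph clique $C = \{c_1, \dots, c_k\}$ (one vertex per part), a bipartite edge is forced to be present iff $S_L \cup S_R \subseteq C$, which by disjointness of left and right parts is equivalent to $S_L \subseteq C \cap \mathrm{left}$ and $S_R \subseteq C \cap \mathrm{right}$. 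Thus the planted bipartite structure is exactly the complete biclique between $\binom{C \cap \mathrm{left}}{s_1}$ and $\binom{C \cap \mathrm{right}}{s_2}$, of sizes $k_M = \binom{k_L}{s_1}$ and $k_N = \binom{k_R}{s_2}$; after random relabeling this biclique is uniform among all $(k_M, k_N)$-bicliques, giving a $\pr{bpc}$ instance. Alternatively, grouping right vertices by the choice of $s_2$ right parts yields a partition of $R$ into $k_N$ equal blocks each containing exactly one planted vertex, realising the instance instead as $k\pr{-bpc}$ and allowing either option in the lemma statement to be used.

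The remaining work is parameter accounting. Writing $M = |L|$ and $N = |R|$, one has $M = \binom{k_L}{s_1}(n/k)^{s_1} \asymp n^{s_1}$ and $N \asymp n^{s_2}$, so $M = \Theta(N^{s_1/s_2}) = \Theta(N^\alpha)$. Moreover $k_M/\sqrt{M} \asymp (k^2/n)^{s_1/2}$ and $k_N/\sqrt{N} \asymp (k^2/n)^{s_2/2}$, which by the choice of $k$ lie in the window $[C w^{-1}, o(1)]$, delivering the bounds $Cw^{-1}\sqrt{M} \leq k_M = o(\sqrt{M})$ and $Cw^{-1}\sqrt{N} \leq k_N = o(\sqrt{N})$. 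The running time is $\textnormal{poly}(n)$ since $|L|, |R| \leq n^s$ with $s$ constant. The one subtlety to check is independence of the bipartite edges under $H_0$, and this is precisely what the left/right disjointness of parts buys us: it is what lets the symmetric hypergraph tensor be unfolded into a matrix whose entries are truly independent $\textnormal{Bern}(1/2)$ variables under $H_0$, so that the $\pr{bpc}$ (resp. $k\pr{-bpc}$) distribution is reproduced exactly rather than only approximately.
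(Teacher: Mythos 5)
Your construction is essentially the same as the paper's: both unfold the $s$-uniform adjacency tensor by splitting the $k$ parts into a ``left'' collection and a ``right'' collection, observe that disjointness of left and right parts makes the bipartite edge indicators distinct hyperedge indicators (hence independent under $H_0$), and check that the planted vertices of $k\pr{-hpc}^s$ induce a biclique. Using $s_1$-subsets with elements from distinct left parts rather than tuples over a fixed product $E(R_1)\times\cdots\times E(R_a)$ is an immaterial cosmetic variation; the asymptotics $M\asymp n^{s_1}$, $k_M\asymp k^{s_1}$ come out the same.

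There is, however, a genuine quantitative error in the parameter accounting. You take $k$ satisfying $k \geq C' w^{-1/\min(s_1,s_2)}\sqrt{n}$, but this is the wrong extremum and does not deliver the claimed lower bounds on $k_M/\sqrt{M}$ and $k_N/\sqrt{N}$. Setting $\lambda = k/\sqrt{n} < 1$, you correctly compute $k_M/\sqrt{M} \asymp \lambda^{s_1}$ and $k_N/\sqrt{N}\asymp \lambda^{s_2}$. To guarantee \emph{both} are $\geq Cw^{-1}$, you need $\lambda^{\max(s_1,s_2)} \gtrsim w^{-1}$, since $\lambda^{\max}$ is the smaller of the two. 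That forces $\lambda \gtrsim w^{-1/\max(s_1,s_2)}$, i.e.\ $k\gtrsim w^{-1/\max(s_1,s_2)}\sqrt{n}$, which is what the paper uses. Your threshold with $\min$ is weaker: for instance with $s_1=1,s_2=3$ and $k = C'w^{-1}\sqrt{n}$ one gets $k_N/\sqrt{N}\asymp w^{-3}$, which is \emph{not} $\geq Cw^{-1}$ for any fixed constant $C$ once $w\to\infty$. The fix is simply to replace $\min$ by $\max$ throughout; everything else in your argument stands. (One minor stylistic point: for the $k\pr{-bpc}$ output you should relabel only \emph{within} each part of the grouping by chosen $s_2$-subsets of right parts, not by a global permutation of $R$; a global permutation destroys the $k$-partite structure and gives $\pr{bpc}$, which is what you describe in the first option.)
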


\begin{proof}
We first describe the desired reduction to $k\pr{-bpc}$. Let $\alpha = a/b$ for two fixed integers $a$ and $b$, and let $H$ be an input instance of $k\pr{-hpc}^{a + b}_E(n, k, 1/2)$ where $E$ is a fixed known partition of $[n]$. Suppose that $H$ is a nearly tight instance with $w^{-1/\max(a, b)} \sqrt{n} \le k = o(\sqrt{n})$. Now consider the following reduction:
\begin{enumerate}
\item Let $R_1, R_2, \dots, R_{a + b}$ be a partition of $[k]$ into $a + b$ sets of sizes differing by at most $1$, and let $E(R_j) = \bigcup_{i \in R_j} E_i$ for each $j \in [a + b]$.
\item Form the bipartite graph $G$ with left vertex set indexed by $V_1 = E(R_1) \times E(R_2) \times \cdots \times E(R_a)$ and right vertex set $V_2 = E(R_{a + 1}) \times E(R_{a + 2}) \times \cdots \times E(R_{a + b})$ such that $(u_1, u_2, \dots, u_a) \in V_1$ and $(v_1, v_2, \dots, v_b) \in V_2$ are adjacent if and only if $\{u_1, \dots, u_a, v_1, \dots, v_b\}$ is a hyperedge of $H$.
\item Output $G$ with left parts $E_{i_1} \times E_{i_2} \times \cdots \times E_{i_a}$ for all $(i_1, i_2, \dots, i_a) \in R_1 \times R_2 \times \cdots \times R_a$ and right parts $E_{i_1} \times E_{i_2} \times \cdots \times E_{i_b}$ for all $(i_1, i_2, \dots, i_b) \in R_{a+1} \times R_{a+2} \times \cdots \times R_{a+b}$, after randomly permuting the vertex labels of $G$ within each of these parts.
\end{enumerate}
Note that since $a + b = \Theta(1)$, we have that $|E(R_i)| = \Theta(n)$ for each $i$ and thus $N = |V_2| = \Theta(n^b)$ and $M = |V_1| = \Theta(n^a) = \Theta(N^\alpha)$. Under $H_0$, each possible hyperedge of $H$ is included independently with probability $1/2$. Since the edge indicators of $G$ corresponds to a distinct hyperedge indicator of $H$ in Step 2 above, it follows that each edge of $G$ is also included with probability $1/2$ and thus $G \sim \mG_B(M, N, 1/2)$.

In the case of $H_1$, suppose that $H$ is distributed according to the hypergraph planted clique distribution with clique vertices $S \subseteq [n]$ where $S \sim \mU_n(E)$. Examining the definition of the edge indicators in Step 2 above yields that $G$ is a sample from $H_1$ of $k\pr{-bpc}(M, N, k_M, k_N, 1/2)$ conditioned on having left biclique set $\prod_{i = 1}^a (S \cap E(R_i))$ and right biclique set $\prod_{i = a + 1}^{a+b} (S \cap E(R_i))$. Observe that these sets have exactly one vertex in $G$ in common with each of the parts described in Step 3 above. Now note that since $S$ has one vertex per part of $E$, we have that $|S \cap E(R_i)| = |R_i| = \Theta(k)$ since $a + b = \Theta(1)$. Thus $k_M = \left| \prod_{i = 1}^a (S \cap E(R_i)) \right| = \Theta(k^a)$ and $k_N = \Theta(k^b)$. The bound on $k$ now implies that the two desired bounds on $k_N$ and $k_M$ hold for a sufficiently small constant $C > 0$. Thus the permutations in Step 3 produce a sample exactly from $k\pr{-bpc}(M, N, k_M, k_N, 1/2)$ in the desired parameter regime. If instead of only permuting vertex labels within each part, we randomly permute all left vertex labels and all right vertex labels in Step 3, the resulting reduction produces $\pr{bpc}$ instead of $k\pr{-bpc}$. The correctness of this reduction follows from the same argument as for $k\pr{-bpc}$.
\end{proof}

We remark that since $m$ and $n$ are polynomial in each other in the setup in Conjecture \ref{conj:hard-conj} for $k\pr{-bpc}$ and $\pr{bpc}$, the lemma above fills out a dense subset of this entire parameter regime -- where $m = \Theta(n^\alpha)$ for some rational $\alpha$. In the case where $\alpha$ is irrational, the reduction in Lemma \ref{lem:khpc-strong}, when composed with our other reductions beginning with $k\pr{-bpc}$ and $\pr{bpc}$, shows tight computational lower bounds up to arbitrarily small polynomial factors $n^{\epsilon}$ by approximating $\alpha$ arbitrarily closely with a rational number.

\paragraph{Hardness Conjectures as Instances of $\pr{pc}_\rho$.} We now will verify that each of the graph problems in Conjecture \ref{conj:hard-conj} can be obtained from $\pr{pc}_\rho$. To do this, we explicitly construct several $\rho$ and give simple reductions from the corresponding instances of $\pr{pc}_\rho$ to these graph problems. We begin with $k\pr{-pc}$, $\pr{bpc}$ and $k\pr{-bpc}$ as their discussion will be brief. \\

\noindent \textit{Secrets for $k\pr{-pc}$, $\pr{bpc}$ and $k\pr{-bpc}$.} Below are the $\rho$ corresponding to these three graph problems. Both $\pr{bpc}$ and $k\pr{-bpc}$ can be obtained by restricting to bipartite subgraphs of the $\pr{pc}_\rho$ instances with these $\rho$.
\begin{itemize}
\item \textbf{$k$-partite \pr{pc}:} Suppose that $k$ divides $n$ and $E$ is a partition of $[n]$ into $k$ parts of size $n/k$. By definition, $k\pr{-pc}_E(n, k, 1/2)$ is $\pr{pc}_\rho(n, k, 1/2)$ where $\rho = \rho_{k\pr{-pc}}(E, n, k)$ is the uniform distribution $\mU_n(E)$ over all $k$-sets of $[n]$ intersecting each part of $E$ in one element.
\item \textbf{bipartite \pr{pc}:} Let $\rho_{\pr{bpc}}(m, n, k_m, k_n)$ be the uniform distribution over all $(k_n + k_m)$-sets of $[n + m]$ with $k_n$ elements in $\{1, 2, \dots, n\}$ and $k_m$ elements in $\{n + 1, n + 2, \dots, n + m\}$. An instance of $\pr{bpc}(m, n, k_m, k_n, 1/2)$ can then be obtained by outputting the bipartite subgraph of $\pr{pc}_\rho(m + n, k_m + k_n, 1/2)$ with this $\rho$, consisting of the edges between left vertex set $\{n + 1, n + 2, \dots, n + m\}$ and right vertex set $\{1, 2, \dots, n\}$.
\item \textbf{$k$-part bipartite \pr{pc}:} Suppose that $k_n$ divides $n$, $k_m$ divides $m$, and $E$ and $F$ are partitions of $[n]$ and $[m]$ into $k_n$ and $k_m$ parts of equal size, respectively. Let $\rho_{k\pr{-bpc}}(E, F, m, n, k_m, k_n)$ be uniform over all $(k_n + k_m)$-subsets of $[n+m]$ with exactly one vertex in each part of both $E$ and $n + F$. Here, $n + F$ denotes the partition of $\{n + 1, n + 2, \dots, n + m\}$ induced by shifting indices in $F$ by $n$. As with $\pr{bpc}$, $k\pr{-bpc}(m, n, k_m, k_n, 1/2)$ can be realized as the bipartite subgraph of $\pr{pc}_\rho(m+n, k_m + k_n, 1/2)$, with this $\rho$, between the vertex sets $\{n + 1, n + 2, \dots, n + m\}$ and $\{1, 2, \dots, n\}$.
\end{itemize}

\vspace{3mm}

\noindent \textit{Secret for $k\pr{-hpc}^s$.} We first will give the secret $\rho$ corresponding to $k\pr{-hpc}^s$ for even $s$, which can be viewed as roughly the pushforward of $\mU_n(E)$ after unfolding the adjacency tensor of $k\pr{-hpc}^s$. The secret for odd $s$ will then be obtained through a slight modification of the even case.

Suppose that $s = 2t$. Given a set $S \subseteq [n]$, let $P_t^n(S)$ denote the subset of $[n^t]$ given by
$$P_t^n(S) = \left\{ 1 + \sum_{j = 0}^{t - 1} (a_j - 1) n^j : a_0, a_1, \dots, a_{t - 1} \in S \right\}$$
In other words, $P_t^n(S)$ is the set of all numbers $x$ in $[n^t]$ such that the base-$n$ representation of $x - 1$ only has digits in $S - 1$, where $S - 1$ is the set of all $s - 1$ where $s \in S$. Note that if $|S| = k$ then $|P_t^n(S)| = k^t$. Given a partition $E$ of $[n]$ into $k$ parts of size $n/k$, let $\rho_{k\pr{-hpc}^s}(E, n, k)$ be the distribution over $k^t$-subsets of $[n^t]$ sampled by choosing $S$ at random from $\mU_n(E)$ and outputting $P_t^n(S)$. Throughout the rest of this section, we will let $I(a_0, a_1, \dots, a_{t - 1})$ denote the sum $1 + \sum_{j = 0}^{t - 1} (a_j - 1) n^j$. We now will show that $k\pr{-hpc}^s_E(n, k, 1/2)$ can be obtained from $\pr{pc}_\rho(n^t, k^t, 1/2)$ where $\rho = \rho_{k\pr{-hpc}^s}(E, n, k)$. Intuitively, this instance of $\pr{pc}_\rho$ has a subset of edges corresponding to the unfolded adjacency tensor of $\pr{-hpc}^s_E$. More formally, consider the following steps.
\begin{enumerate}
\item Let $G$ be an input instance of $\pr{pc}_\rho(n^t, k^t, 1/2)$ and let $H$ be the output hypergraph with vertex set $[n]$.
\item Construct $H$ as follows: for each possible hyperedge $e = \{a_1, a_2, \dots, a_{2t}\}$, with $1 \le a_1 < a_2 < \cdots < a_{2t} \le n$, include $e$ in $H$ if and only if there is an edge between vertices $I(a_1, a_2, \dots, a_t)$ and $I(a_{t + 1}, a_{t + 2}, \dots, a_{2t})$ in $G$.
\end{enumerate}
Under $H_0$, it follows that $G \sim \mG(n^t, 1/2)$. Note that each hyperedge $e$ in Step 2 identifies a unique pair of distinct vertices $I(a_1, a_2, \dots, a_t)$ and $I(a_{t + 1}, a_{t + 2}, \dots, a_{2t})$ in $G$, and thus the hyperedges of $H$ are independently included with probability $1/2$. Under $H_1$, it follows that the instance of $\pr{pc}_\rho(n^t, k^t, 1/2)$ is sampled from the planted clique distribution with clique vertices $P_t^n(S)$ where $S \sim \mU_n(E)$. By the definition of $P_t^n(S)$, it follows that $I(a_1, a_2, \dots, a_t)$ is in this clique if and only if $a_1, a_2, \dots, a_t \in S$. Examining the edge indicators of $H$ then yields that $H$ is a sample from the hypergraph planted clique distribution with clique vertex set $S$. Since $S \sim \mU_n(E)$, under both $H_0$ and $H_1$, it follows that $H$ is a sample from $k\pr{-hpc}^s$.

Now suppose that $s$ is odd with $s = 2t + 1$. The idea in this case is to pair up adjacent digits in base-$n$ expansions and use these pairs to label the vertices of $k\pr{-hpc}^s$. More precisely suppose that $n = N^2$ and $k = K^2$ for some positive integers $K$ and $N$. Let $E$ be a fixed partition of $[n]$ into $k = K^2$ equally sized parts and let $\rho_{k\pr{-hpc}^s}(E, n, k)$ be $\rho_{k\pr{-hpc}^{2s}}(F, N, K)$ as defined above for the even number $2s$, where $F$ is a fixed partition of $[N]$ into $K$ equally sized parts. We now will show that $k\pr{-hpc}^s_E(n, k, 1/2)$ can be obtained from $\pr{pc}_\rho(N^s, K^s, 1/2)$ where $\rho = \rho_{k\pr{-hpc}^{2s}}(F, N, K)$. Let $I'$ be the analogue of $I$ for base-$N$ expansions i.e. let $I(b_0, b_1, \dots, b_{t - 1})$ denote the sum $1 + \sum_{j = 0}^{t - 1} (b_j - 1) N^j$. Consider the following steps.
\begin{enumerate}
\item Let $G$ be an instance of $\pr{pc}_\rho(N^{s}, K^{s}, 1/2)$ and let $H$ be the output hypergraph with vertex set $[n]$.
\item Let $\sigma : [n] \to [n]$ be a bijection such that, for each $i \in [k]$, we have that
$$\sigma(E_i) = \left\{ I'(b_0, b_1) : b_0 \in F_{c_0} \text{ and } b_1 \in F_{c_1} \right\}$$
where $c_0, c_1$ are the unique elements of $[K]$ with $i - 1 = (c_0 - 1) + (c_1 - 1)K$. 
\item Construct $H$ as follows. For each possible hyperedge $e = \{a_1, a_2, \dots, a_{s}\}$, with $1 \le a_1 < a_2 < \cdots < a_{s} \le n$, let $b_{2i - 1}, b_{2i}$ be the unique elements of $[N]$ with $I'(b_{2i - 1}, b_{2i}) = \sigma(a_i)$ for each $i$. Now include $e$ in $H$ if and only if there is an edge between the two vertices $I(b_1, b_2, \dots, b_s)$ and $I(b_{s + 1}, b_{s + 2}, \dots, b_{2s})$ in $G$.
\item Permute the vertex labels of $H$ within each part $F_i$ uniformly at random.
\end{enumerate}
Note that $\sigma$ always trivially exists because the $K^2$ sets $E_1, E_2, \dots, E_{K^2}$ and the $K^2$ sets $F'_{i, j} = \{I'(b_0, b_1) : b_0 \in F_{i} \text{ and } b_1 \in F_{j} \}$ for $1 \le i, j \le K$ are both partitions of $[n]$ into parts of size $N^2/K^2$. As in the case where $s$ is even, under $H_1$ we have that $G \sim \mG(N^{2s}, 1/2)$ and the hyperedges of $H$ are independently included with probability $1/2$, since Step 3 identifies distinct pairs of vertices for each hyperedge $e$. Under $H_1$, let $S \sim \mU_N(F)$ be such that the clique vertices in $G$ are $P_s^N(S)$. By the same reasoning as in the even case, after Step 3, the hypergraph $H$ is distributed as a sample from the hypergraph planted clique distribution with clique vertex set $\sigma^{-1}(I'(S, S))$ where $I'(S, S) = \{ I'(s_0, s_1) : s_0, s_1 \in S\}$. The definition of $\sigma$ now ensures that this clique has one vertex per part of $E$. Step 4 ensures that the resulting hypergraph is exactly a sample from $H_1$ of $k\pr{-hpc}^{s}$. We remark that the conditions $n = N^2$ and $k = K^2$ do not affect our lower bounds when composing the reduction above with our other reductions. This is due to the subsequence criterion for computational lower bounds in Condition \ref{cond:lb}.

\paragraph{Verifying the Conditions of the $\pr{pc}_\rho$ Conjecture.} We now verify that the $\pr{pc}_{\rho}$ conjecture corresponds to the hard regimes in Conjecture \ref{conj:hard-conj} up to arbitrarily small polynomial factors. To do this, it suffices to verify the tail bound on $p_{\rho}(s)$ in the $\pr{pc}_{\rho}$ conjecture for each $\rho$ described above, which is done in the theorem below. In the next section, we will show that a slightly stronger variant of the $\pr{pc}_\rho$ conjecture implies Conjecture \ref{conj:hard-conj} exactly, without the small polynomial factors.

\begin{theorem}[$\pr{pc}_{\rho}$ Conjecture and Conjecture \ref{conj:hard-conj}] \label{thm:verify}
Suppose that $m$ and $n$ are polynomial in one another and let $\epsilon > 0$ be an arbitrarily small constant. Let $\rho$ be any one of the following distributions:
\begin{enumerate}
\item $\rho_{k\pr{-pc}}(E, n, k)$ where $k = O(n^{1/2 - \epsilon})$;
\item $\rho_{\pr{bpc}}(m, n, k_m, k_n)$ where $k_n = O(n^{1/2 - \epsilon})$ and $k_m = O(m^{1/2 - \epsilon})$;
\item $\rho_{k\pr{-bpc}}(E, F, m, n, k_m, k_n)$ where $k_n = O(n^{1/2 - \epsilon})$ and $k_m = O(m^{1/2 - \epsilon})$; and
\item $\rho_{k\pr{-hpc}^t}(E, n, k, 1/2)$ for $t \ge 3$ where $k = O(n^{1/2 - \epsilon})$.
\end{enumerate}
Then there is a constant $\delta > 0$ such that: for any parameter $d = O_n((\log n)^{1 + \delta})$, there is some $p_0 = o_n(1)$ such that $p_{\rho}(s)$ satisfies the tail bounds
$$p_{\rho}(s) \le p_0 \cdot \left\{ \begin{array}{ll} 2^{-s^2} &\textnormal{if } 1 \le s^2 < d \\ s^{-2d-4} &\textnormal{if } s^2 \ge d \end{array} \right.$$
\end{theorem}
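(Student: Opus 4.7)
The plan is to handle each of the four secret-leakage distributions $\rho$ separately, reducing the verification of the $\pr{pc}_\rho$ tail-bound hypotheses to an elementary estimate on an explicit overlap distribution. In each case I will: (i) write down $p_\rho(s)$ (or a clean upper bound) using the combinatorial definition of $\rho$; (ii) substitute the sparsity bound $k=O(n^{1/2-\epsilon})$ from Conjecture~\ref{conj:hard-conj}; and (iii) verify the two tail inequalities on $p_\rho$ by taking logarithms and comparing the dominant terms. At the end I pick $\delta>0$ small enough that all four cases work simultaneously.

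For the base case $\rho=\rho_{k\pr{-pc}}(E,n,k)=\mU_n(E)$, two independent samples agree on part $E_i$ with probability $k/n$ and do so independently across $i$, so $|S\cap S'|\sim\textnormal{Bin}(k,k/n)$ and $p_\rho(s)\le\binom{k}{s}(k/n)^s\le(ek^2/(sn))^s\le (e/s)^s n^{-2\epsilon s}$. The bipartite cases $\rho_{\pr{bpc}}$ and $\rho_{k\pr{-bpc}}$ factor as a product over the two sides, and applying the same bound (with the hypergeometric distribution for $\pr{bpc}$, replaced in standard fashion by a binomial with matching mean) yields a convolution of two overlaps, giving $p_\rho(s)\le\sum_{s_1+s_2=s}(e/s_1)^{s_1}n^{-2\epsilon s_1}\cdot(e/s_2)^{s_2}m^{-2\epsilon s_2}$, which, using $m\asymp n^{\Theta(1)}$ and $s_1+s_2=s$, is again bounded by $s\cdot (e/s)^s n^{-2\epsilon' s}$ for a slightly smaller $\epsilon'>0$. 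For $\rho_{k\pr{-hpc}^t}$, the key observation is that $P_t^n(S)\cap P_t^n(S')=P_t^n(S\cap S')$, so $|P_t^n(S)\cap P_t^n(S')|=|S\cap S'|^t$; hence $p_\rho(s)=0$ unless $s=r^t$ for some integer $r\ge 0$, and $p_\rho(r^t)=\bP[\textnormal{Bin}(k,k/n)=r]\le (e/r)^r n^{-2\epsilon r}$.

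To verify the two tail inequalities, I take logarithms. In the regime $1\le s^2<d$ for cases 1--3, one computes
\[
\log_2\!\bigl(p_\rho(s)\cdot 2^{s^2}\bigr)\;\le\; s^2 - 2\epsilon s\log_2 n + s\log_2(e/s).
\]
Since $s\le\sqrt{d}=O((\log n)^{(1+\delta)/2})$, the term $2\epsilon s\log_2 n$ dominates $s^2$ as soon as $(1+\delta)/2<1$, i.e.\ $\delta<1$, and the bound collapses to $-\epsilon\log_2 n+O(1)$, giving $p_0=O(n^{-\epsilon})=o_n(1)$. In the regime $s^2\ge d$, I compute
\[
\log\!\bigl(p_\rho(s)\cdot s^{2d+4}\bigr)\;\le\;(2d+4-s)\log s + s(1-2\epsilon\log n),
\]
and split on whether $s\le 2d+5$ or $s>2d+5$. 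The second subcase is immediate since the first summand is negative and the second tends to $-\infty$ linearly in $s$. The first subcase gives $(2d+4-s)\log s=O(d\log d)=O((\log n)^{1+\delta}\log\log n)$, which must be dominated by $-2\epsilon s\log n\le -2\epsilon\sqrt{d}\log n=-\Omega((\log n)^{1+(1+\delta)/2})$; the required inequality $(1+\delta)/2+1>1+\delta$ again holds iff $\delta<1$. The hypergraph case is similar but with $s=r^t$ and $s^{2d+4}=r^{t(2d+4)}$, which after the same manipulation requires the stricter condition $\delta<1/(2t-1)$; this is what sets the eventual value of $\delta$.

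The main obstacle is the hypergraph case $k\pr{-hpc}^t$, where $s$ and the underlying overlap $r=s^{1/t}$ live at very different scales: the polynomial penalty $s^{2d+4}=r^{t(2d+4)}$ is $t$-fold larger in the exponent than in the matrix cases, so the $(k^2/n)^r$ decay only barely beats it. Getting a \emph{single} $\delta$ that works for all four cases amounts to taking $\delta<\min(1,\, 1/(2t-1))$, which is legitimate because each occurrence of $k\pr{-hpc}^t$ in our hardness assumptions fixes $t$ as a constant. The remaining work is bookkeeping: absorbing the $s\log(e/s)$ and $\log\log n$ lower-order terms, verifying that the constants line up under the polynomial equivalence of $m$ and $n$ in cases 2 and 3, and handling the boundary $s\approx\sqrt{d}$ where both tail regimes must agree up to constants.
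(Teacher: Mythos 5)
Your overall strategy coincides with the paper's: write down the overlap distribution $p_\rho$ explicitly in each case, bound it by $(k^2/n)^s$ (or a product thereof), and verify the two tail regimes by comparing the dominant terms after taking logarithms. The paper's proof of (1)--(3) is essentially what you have, except that for $\rho_{\pr{bpc}}$ the paper bounds $\Pr[\mathrm{Hyp}(n,k_n,k_n)=\ell]\le (k_n^2/n)^\ell$ directly rather than invoking a binomial replacement "in standard fashion" --- that invocation is not actually a valid substitution (hypergeometric and binomial with matching mean are not stochastically comparable in general), so you should replace it by the one-line computation $\binom{k_n}{\ell}\binom{n-k_n}{k_n-\ell}/\binom{n}{k_n}\le k_n^\ell\prod_{i<\ell}\frac{k_n-i}{n-i}\le(k_n^2/n)^\ell$, which is what actually appears in the paper.

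The concrete error is in case (4). The secret distribution for $k\pr{-hpc}^t$ is not $P_t^n(S)$; for \emph{even} $t$ it is $P_{t/2}^n(S)$ (the construction unfolds the order-$t$ adjacency tensor into an $n^{t/2}\times n^{t/2}$ matrix, so the secret lives in $[n^{t/2}]$). Consequently the overlap is $|S\cap S'|^{t/2}$, not $|S\cap S'|^t$. Working through your exponent comparison with the correct power $r=t/2$ in place of $t$ gives the constraint $\delta<1/(2r-1)=1/(t-1)$, not $\delta<1/(2t-1)$. Your version is a strictly stronger constraint, so the conclusion (existence of some $\delta>0$, since $t$ is constant) is unharmed, but the quantitative statement is off by a factor of two in the exponent. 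You should also record how odd $t$ is handled: the paper reduces $\rho_{k\pr{-hpc}^t}$ for odd $t$ to $\rho_{k\pr{-hpc}^{2t}}$ on $\sqrt{n}$ vertices (pairing base-$N$ digits, valid when $n$ and $k$ are perfect squares), and the same tail bounds carry over; without this you have only proved the even-order case.
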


\begin{proof}
We first prove the desired tail bounds hold for (1). Let $C > 0$ be a constant such that $k \le C n^{1/2 - \epsilon}$. Note that the probability that $S$ and $S'$ independently sampled from $\rho = \rho_{k\pr{-pc}}(E, n, k)$ intersect in their elements in $E_i$ is $1/|E_i| = k/n$ for each $1 \le i \le k$. Furthermore, these events are independent. Thus it follows that if $\rho = \rho_{k\pr{-pc}}(E, n, k)$, then $p_{\rho}$ is the PMF of $\text{Bin}(k, k/n)$. In particular, we have that
$$p_{\rho}(s) = \binom{k}{s} \left( \frac{k}{n} \right)^s \left( 1- \frac{k}{n} \right)^{k - s} \le k^s \cdot \left( \frac{k}{n} \right)^s = \left( \frac{k^2}{n} \right)^s \le C^{2s} \cdot n^{-2\epsilon s}$$
Let $p_0 = p_0(n)$ be a function tending to zero arbitrarily slowly. The bound above implies that $p_{\rho}(s) \le p_0 \cdot 2^{-s^2}$ as long as $s \le C_1 \log n$ for some sufficiently small constant $C_1 > 0$. Furthermore a direct computation verifies that $p_{\rho}(s) \le p_0 \cdot s^{-2d-4}$ as long as
$$s \ge \frac{C_2 d \log d}{\log n}$$
for some sufficiently large constant $C_2 > 0$. Thus if $d = O_n((\log n)^{1 + \delta})$ for some $\delta \in (0, 1)$, then $\frac{C_2 d \log d}{\log n} < \sqrt{d}$ and $C_1 \log n > \sqrt{d}$ for sufficiently large $n$. This implies the desired tail bound for (1).

The other three cases are similar. In the case of (3), if $S$ and $S'$ are independently sampled from $\rho = \rho_{k\pr{-bpc}}(E, F, m, n, k_m, k_n)$, then the probability that $S$ and $S'$ intersect in their elements in $E_i$ is $k_n/n$ for each $1 \le i \le k_n$, and the probability that they intersect in their elements in $n + F_i$ is $k_m/m$ for each $1 \le i \le k_m$. Thus $p_\rho$ is distributed as independent sum of samples from $\text{Bin}(k_m, k_m/m)$ and $\text{Bin}(k_n, k_n/n)$. It follows that
\begin{align}
p_{\rho}(s) &= \sum_{\ell = 0}^s \binom{k_n}{\ell} \left( \frac{k_n}{n} \right)^{\ell} \left( 1- \frac{k_n}{n} \right)^{k_n - \ell} \cdot \binom{k_m}{s - \ell} \left( \frac{k_m}{m} \right)^{s - \ell} \left( 1- \frac{k_m}{m} \right)^{k_m  - s + \ell} \nonumber \\
&\le \sum_{\ell = 0}^s \left( \frac{k_n^2}{n} \right)^{\ell} \left( \frac{k_m^2}{m} \right)^{s - \ell} \le s \cdot \max\left\{ \left( \frac{k_n^2}{n} \right)^s, \left( \frac{k_m^2}{m} \right)^s \right\} \label{eqn:case-3}
\end{align}
Repeating the bounding argument as in (1) shows that the desired tail bound holds for (3) if $d = O_n((\log n)^{1 + \delta})$ for some $\delta \in (0, 1)$. Since $m$ and $n$ are polynomial in one another implies that $\log m = \Theta(\log n)$, the $(k_m^2/m)^s$ term and the additional factor of $s$ do not affect this bounding argument other than changing the constants $C_1$ and $C_2$. In the case of (2), similar reasoning as in (3) yields that the distribution $p_{\rho}$ where $\rho = \rho_{\pr{bpc}}(m, n, k_m, k_n)$ is the independent sum of samples from $\text{Hyp}\left( n, k_n, k_n \right)$ and $\text{Hyp}\left( m, k_m, k_m \right)$. Now note that
$$\bP\left[ \text{Hyp}\left( n, k_n, k_n \right) = \ell \right] = \frac{\binom{k_n}{\ell} \binom{n - k_n}{k_n - \ell}}{\binom{n}{k_n}} \le \frac{k_n^{\ell} \binom{n - \ell}{k_n - \ell}}{\binom{n}{k_n}} = k_n^{\ell} \prod_{i = 0}^{\ell - 1} \frac{k_n - i}{n - i} \le \left( \frac{k_n^2}{n} \right)^{\ell}$$
This implies the same upper bound on $p_{\rho}(s)$ as in Equation (\ref{eqn:case-3}) also holds for $\rho$ in the case of (2). The argument above for (3) now establishes the desired tail bounds for (2).

We first handle the case in (4) where $t$ is even with $t = 2r$. We have that $\rho = \rho_{k\pr{-hpc}^t}(E, n, k, 1/2)$ can be sampled as $P^n_r(S) \subseteq [n^r]$ where $S \sim \mU_n(E)$. Thus $p_{\rho}(s)$ is the PMF of $|P^n_r(S) \cap P^n_r(S')|$ where $S, S' \sim_{\text{i.i.d.}} \mU_n(E)$. Furthermore the definition of $P^n_r$ implies that $|P^n_r(S) \cap P^n_r(S')| = |S \cap S'|^r$ and, from case (1), we have that $|S \cap S'| \sim \text{Bin}(k, k/n)$. It now follows that
$$p_{\rho}(s) = \left\{ \begin{array}{ll} \binom{k}{s^{1/r}} \left( \frac{k}{n} \right)^{s^{1/r}} \left( 1- \frac{k}{n} \right)^{k - s^{1/r}} &\text{if } s \text{ is an } r\text{th power} \\ 0 &\text{otherwise} \end{array} \right.$$
The same bounds as in case (1) therefore imply that $p_{\rho}(s) \le (k^2/n)^{s^{1/r}}$ for all $s \ge 0$. A similar analysis as in (1) now shows that $p_{\rho}(s) \le p_0 \cdot 2^{-s^2}$ holds if $s \le C_1 (\log n)^{r/(2r - 1)}$ for some sufficiently small constant $C_1 > 0$, and that $p_{\rho}(s) \le p_0 \cdot s^{-2d-4}$ holds if
$$s \ge C_2 \left( \frac{d \log d}{\log n} \right)^r$$
for some sufficiently large constant $C_2 > 0$. As long as $d = O_n((\log n)^{1 + \delta})$ for some $0 < \delta < 1/(2r - 1)$, we have that $C_2 \left( \frac{d \log d}{\log n} \right)^r < \sqrt{d}$ and $C_1 (\log n)^{r/(2r - 1)} > \sqrt{d}$ for sufficiently large $n$. Since $t$ and $r$ are constants here, $\delta$ can be taken to be constant as well. In the case where $t$ is odd, it follows that $\rho_{k\pr{-hpc}^t}(E, n, k, 1/2)$ is the same as $\rho_{k\pr{-hpc}^{2t}}(F, \sqrt{n}, \sqrt{k}, 1/2)$ for some partition $F$ as long as $n$ and $k$ are squares. The same argument establishes the desired tail bound for this prior, completing the case of (4) and proof of the theorem.
\end{proof}

\subsection{Low-Degree Polynomials and the $\pr{pc}_\rho$ Conjecture}
\label{subsec:2-low-degree}

In this section, we show that the low-degree conjecture -- that low-degree polynomials are optimal for a class of average-case hypothesis testing problems -- implies the $\pr{pc}_\rho$ conjecture. In particular, we will obtain a simple expression capturing the power of the optimal low-degree polynomial for $\pr{pc}_\rho$ in Proposition \ref{prop:sl-ld}. We then will apply this proposition to prove Theorem \ref{thm:sl-ld}, showing that the power of this optimal low-degree polynomial tends to zero under the tail bounds on $p_\rho$ in the $\pr{pc}_{\rho}$ conjecture. We also will discuss a stronger version of the $\pr{pc}_\rho$ conjecture that exactly implies Conjecture \ref{conj:hard-conj}. First, we informally introduce the low-degree conjecture and the technical conditions arising in its various formalizations in the literature.

\paragraph{Polynomial Tests and the Low-Degree Conjecture.} In this section, will draw heavily from similar discussions in \cite{hopkins2017efficient} and Hopkins's thesis \cite{hopkinsThesis}. Throughout, we will consider discrete hypothesis testing problems with observations taken without loss of generality to lie in the discrete hypercube $\{-1, 1\}^N$. For example, an $n$-vertex instance of planted clique can be represented in the discrete hypercube by the above-diagonal entries of its signed adjacency matrix when $N = \binom{n}{2}$. Given a hypothesis $H_0$, the term $D$-simple statistic refers to polynomials $f : \{-1, 1\}^N \to \mathbb{R}$ of degree at most $D$ in the coordinates of $\{-1, 1\}^N$ that are calibrated and normalized so that $\E_{H_0}f(X)=0$ and $\E_{H_0}f(X)^2=1$.

For a broad range of hypothesis testing problems, it has been observed in the literature that $D$-simple statistics seem to capture the full power of the SOS hierarchy \cite{hopkins2017efficient,hopkinsThesis}. This trend prompted a further conjecture that $D$-simple statistics often capture the full power of efficient algorithms, leading more concretely to the \textit{low-degree conjecture} which is stated informally below. This conjecture has been used to gather evidence of hardness for a number of natural detection problems and has generally emerged as a convenient tool to predict statistical-computational gaps \cite{hopkins2017efficient, hopkinsThesis, kunisky2019notes, bandeira2019computational}. Variants of this low-degree conjecture have appeared as Hypothesis 2.1.5 and Conjecture 2.2.4 in \cite{hopkinsThesis} and Conjectures 1.16 and 4.6 in \cite{kunisky2019notes}.

\begin{conjecture}[Informal -- Hypothesis 2.1.5 in \cite{hopkinsThesis}] \label{c:lowDeg}
For a broad class of hypothesis testing problems $H_0$ versus $H_1$, there is a test running in time $N^{\Ot(D)}$ with Type I$+$II error tending to zero if and only if there is a successful $D$-simple statistic i.e. a polynomial $f$ of degree at most $D$ such that $\E_{H_0}f(X)=0$ and $\E_{H_0}f(X)^2=1$ yet $\E_{H_1}f(X)\to \infty$.
\end{conjecture}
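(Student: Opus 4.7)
Since the final statement is an informal conjecture rather than a theorem, and an unconditional proof would constitute a major breakthrough in average-case complexity, the plan is to formalize it, establish the easy direction, and then identify why the hard direction must remain conjectural. First, restrict without loss of generality to $H_0$ a product measure on $\{-1,1\}^N$ so that the Walsh characters $\chi_S(x) = \prod_{i \in S} x_i$ form an orthonormal basis of $L^2(H_0)$ indexed by $S \subseteq [N]$. Assuming $L = dH_1/dH_0 \in L^2(H_0)$, expand $L = \sum_S \hat L(S)\,\chi_S$ and set $L^{\le D} = \sum_{|S| \le D} \hat L(S)\,\chi_S$. A short Cauchy--Schwarz argument shows the optimal $D$-simple statistic is $f^\star = L^{\le D}/\|L^{\le D}\|_{L^2(H_0)}$, achieving $\E_{H_1}[f^\star] = \|L^{\le D}\|_{L^2(H_0)}$. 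The precise form of the conjecture therefore becomes: an $N^{\Ot(D)}$-time test with Type I$+$II error $o(1)$ exists if and only if $\|L^{\le D}\|_{L^2(H_0)} \to \infty$.

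The easy direction -- successful low-degree statistic implies efficient test -- is a clean Chebyshev argument. If $\|L^{\le D}\|_2 \to \infty$, then $f^\star$ is a polynomial of degree at most $D$ on $N$ variables and thus evaluable in time $N^{O(D)}$; under $H_0$ we have $\E f^\star = 0$ and $\E(f^\star)^2 = 1$, so Chebyshev gives $\bP_{H_0}[|f^\star| \ge \tfrac{1}{2}\|L^{\le D}\|_2] \le 4/\|L^{\le D}\|_2^2 = o(1)$, while under $H_1$ the mean is $\|L^{\le D}\|_2$ with variance controllable by degree-$D$ hypercontractivity on the hypercube, so thresholding $f^\star$ at half its $H_1$-mean yields a test with vanishing error. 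The plan is to reproduce this verbatim from the standard low-degree framework (see e.g.\ \cite{hopkins2017efficient, hopkinsThesis}), noting only that the hypercontractivity step is what forces the ``broad class'' qualifier in the conjecture statement: $H_1$ must be regular enough that degree-$D$ polynomials behave under $H_1$ roughly as they do under $H_0$.

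The hard direction -- failure of all $D$-simple statistics implies no $N^{\Ot(D)}$-time test succeeds -- is genuinely conjectural and is the main obstacle. Since it would separate natural average-case complexity classes, no unconditional proof is possible without resolving longstanding open problems. My plan to assemble the best available evidence would be threefold: (i) verify the prediction on canonical problems (planted clique, tensor PCA, the stochastic block model, spiked matrix models) where the low-degree threshold provably matches the conjectured efficient threshold; (ii) prove rigorous implications from low-degree hardness to hardness in the statistical-query and sum-of-squares hierarchies via the pseudocalibration framework of \cite{barak2016nearly}, so that the conjecture becomes a unifying statement rather than a free-standing claim; and (iii) for the specific application to $\pr{pc}_\rho$ that motivates this section, reduce to the explicit calculation of $\|L^{\le D}\|_2^2$, which under $H_0 = \mG(n,1/2)$ decomposes as a sum over pairs $(R, R')$ of candidate clique-supports, grouped by intersection size $|R \cap R'|$. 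The tail bound on $p_\rho(s)$ in Conjecture \ref{conj:sl-conj} is precisely calibrated so that this sum is $1 + o(1)$, which is the computation that Proposition \ref{prop:sl-ld} and Theorem \ref{thm:sl-ld} will carry out, and which will convert the informal conjecture above into a concrete implication for the $\pr{pc}_\rho$ conjecture.
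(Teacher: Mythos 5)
You have correctly recognized that the statement is a stated conjecture (Hypothesis 2.1.5 from Hopkins's thesis) and not a theorem admitting a proof; the paper does not attempt to prove it either. It is cited, restated in a formalized version as Conjecture~\ref{c:low-deg-formal}, discussed with respect to its technical hypotheses and known counterexamples, and then used purely as an evidential framework. Of your three-part plan, only item~(iii) aligns with what the paper actually does: Proposition~\ref{prop:sl-ld} derives the explicit formula for $\|\lrd - 1\|_2^2$ for $\pr{pc}_\rho$ as a sum over pairs of candidate supports grouped by intersection size, and Theorem~\ref{thm:sl-ld} shows that the PMF tail bounds on $p_\rho$ in Conjecture~\ref{conj:sl-conj} force this quantity to be $o_n(1)$ for all $D = O_n(\log n)$. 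Items~(i) and~(ii) of your plan are not undertaken in the paper and would substantially expand its scope.

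Two corrections to your proposed setup. First, because $\E_{H_0} L = 1$, the constant Fourier coefficient of $L$ is $\hat L(\varnothing)=1$, so the mean-zero calibration $\E_{H_0}f=0$ requires working with $L^{\le D} - 1$ rather than $L^{\le D}$; the correct optimal value is $\|L^{\le D}-1\|_{L^2(H_0)} = \|\lrd - 1\|_2$, which is the quantity the paper actually writes and computes in Equation~\eqref{e:energy}. Second, your presentation of the forward direction as ``a clean Chebyshev argument'' understates the subtlety: Chebyshev under $H_0$ is free, but thresholding at half the $H_1$-mean requires $\mathrm{Var}_{H_1}(f^\star) = o(\|L^{\leq D}-1\|_2^2)$, which is not automatic for arbitrary $H_1$ and is in fact part of what the ``broad class'' qualifier and the noise operator $T_\delta$ in Conjecture~\ref{c:low-deg-formal} are meant to guarantee. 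You gesture at this via hypercontractivity, but the failure of this variance control is exactly the mechanism behind the counterexamples of \cite{holmgren2020counterexamples} discussed in the paper, so both directions of the informal equivalence, not just the hard one, are genuinely conjectural without further hypotheses on the pair $(H_0,H_1)$.
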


Detailed discussions of the low-degree conjecture and the connections between $D$-simple statistics and other types of algorithms can be found in \cite{kunisky2019notes} and \cite{holmgren2020counterexamples}. The informality in the conjecture above is the undefined ``broad class'' of hypothesis testing problems. In \cite{hopkinsThesis}, several candidate technical conditions defining this class were proposed and subsequently have been further refined in \cite{kunisky2019notes} and \cite{holmgren2020counterexamples}. These conditions are discussed in more detail later in this section.

The utility of the low-degree conjecture in predicting statistical-computational gaps arises from the fact that the optimal $D$-simple statistic can be explicitly characterized. By the Neyman-Pearson lemma, the optimal test with respect to Type I$+$II error is the the likelihood ratio test, which declares $H_1$ if $\lr(X) = \P_{H_1}(X)/\P_{H_0}(X) > 1$ and $H_0$ otherwise, given a sample $X$. Computing the likelihood ratio is typically intractable in problems in high-dimensional statistical inference. The low-degree likelihood ratio $\lrd$ is the orthogonal projection of the likelihood ratio onto the subspace of polynomials of degree at most $D$. When $H_0$ is a product distribution on the discrete hypercube $\{-1,1\}^N$, the following theorem asserts that $\lrd$ is the optimal test of a given degree. Here, the projection is with respect to the inner product $\la f,g\ra = \E_{H_0} f(X) g(X)$, which also defines a norm $\|f\|_2^2 = \la f,f\ra$.

\begin{theorem}[Page 35 of \cite{hopkinsThesis}] The optimal $D$-simple statistic is the low-degree likelihood ratio, i.e. it holds that
$$
\max_{{f\in \bR[x]_{{\leq D}}}\atop \E_{H_0}f(X)=0} \frac{\E_{H_1} f(X)}{\sqrt{\E_{H_0} f(X)^2}} = \|\lrd - 1\|_2
$$	
\end{theorem}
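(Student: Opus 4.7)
The plan is to rewrite the numerator as an inner product in the Hilbert space $L^2(H_0)$, apply orthogonal projection to restrict to degree $\leq D$, and then finish with Cauchy--Schwarz. Concretely, for any $f\in\bR[x]_{\leq D}$, I would write
$$\E_{H_1} f(X) \;=\; \E_{H_0}\!\left[\frac{d\P_{H_1}}{d\P_{H_0}}(X)\, f(X)\right] \;=\; \langle \lr, f\rangle,$$
where $\langle \cdot,\cdot\rangle$ is the $H_0$-inner product. This reduction depends on $H_1$ being absolutely continuous with respect to $H_0$, which is the case for all the planted problems under consideration (e.g. $\pr{pc}_\rho$), so I would briefly flag this as the only non-formal hypothesis.

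Next, since the low-degree projection $\lrd$ is the orthogonal projection of $\lr$ onto $\bR[x]_{\leq D}$ and $f\in\bR[x]_{\leq D}$, the residual $\lr-\lrd$ is orthogonal to $f$, giving $\langle \lr,f\rangle=\langle \lrd,f\rangle$. Now impose the centering constraint $\E_{H_0}f(X)=0$, i.e.\ $\langle f,1\rangle=0$. Because $1\in\bR[x]_{\leq D}$ and $\langle \lr,1\rangle=\E_{H_0}\lr=1$, the projection $\lrd$ has constant component equal to $1$, so $\langle \lrd, f\rangle = \langle \lrd-1, f\rangle$. Thus the maximization becomes
$$\max_{{f\in\bR[x]_{\leq D}}\atop \langle f,1\rangle=0}\; \frac{\langle \lrd-1,\, f\rangle}{\|f\|_2}.$$

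The final step is Cauchy--Schwarz in $L^2(H_0)$: the ratio is at most $\|\lrd-1\|_2$, with equality attained by choosing $f = \lrd - 1$, which automatically satisfies the centering constraint since $\lrd-1$ is orthogonal to $1$ (the constant component having been subtracted). This yields the claimed identity. The argument is essentially soft and uses only Hilbert-space geometry; I do not expect any real obstacle. The only subtlety is ensuring that the space $\bR[x]_{\leq D}$ is finite dimensional (so the projection $\lrd$ is well-defined in $L^2(H_0)$) and that $\lr\in L^2(H_0)$, i.e.\ $\chi^2(H_1\|H_0)<\infty$, so that $\lrd$ exists as the usual orthogonal projection; both hold in the discrete hypercube setting of the theorem.
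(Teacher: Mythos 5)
Your proof is correct and is essentially the standard Hilbert-space argument given in Hopkins's thesis: rewrite $\E_{H_1}f = \langle\lr,f\rangle$, use the projection property and the constraint $\langle f,1\rangle=0$ to reduce to $\langle\lrd-1,f\rangle$, then apply Cauchy--Schwarz with the maximizer $f=\lrd-1$. The paper cites this theorem without reproducing a proof, so there is no alternate argument in the paper to compare against; your caveats about $\lr\in L^2(H_0)$ and finite-dimensionality of $\bR[x]_{\leq D}$ are both automatic in the discrete hypercube setting and are exactly the right hypotheses to flag.
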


Thus existence of low-degree tests for a given problem boils down to computing the norm of the low-degree likelihood ratio. When $H_0$ is the uniform distribution on $\{-1, 1\}^N$, the norm above can be re-expressed in terms of the standard Boolean Fourier basis. Let the collection of functions $\{\chi_\alpha(X) = \prod_{e\in \alpha} X_e: \alpha \subseteq [N]\}$ denote this basis, which is orthonormal over the space $\{-1,1\}^{N}$ with inner product defined above. By orthonormality, any $\chi_\alpha$ with $1\leq |\alpha|\leq D$ satisfies that
$$
\la \chi_\alpha, \lrd -1\ra = \la \chi_\alpha, \lr \ra = \E_{H_0} \chi_\alpha(X) \lr(X) = \E_{H_1}\chi_\alpha(X)
$$
and $\E_{H_0} \lrd=\E_{H_1} 1=1$ so that  $\la 1, \lrd -1\ra=0$. 
It then follows by Parseval's identity that 
\begin{equation}\label{e:energy}
	\|\lrd - 1\|_2 = \left( \sum_{1\leq|\alpha|\leq D} \big(\E_{H_1}\chi_\alpha(X)\big)^2\right)^{1/2}
\end{equation}
which is exactly the Fourier energy up to degree $D$.

\paragraph{Technical Conditions, $S_n$-Invariance and Counterexamples.} While Conjecture \ref{c:lowDeg} is believed to accurately predict the computational barriers in nearly any natural high-dimensional statistical problem including all of the problems we consider, a precise set of criteria exactly characterizing this ``broad class'' has yet to be pinned down in the literature. The following was the first formalization of the low-degree conjecture, which appeared as Conjecture 2.2.4 in \cite{hopkinsThesis}.

\begin{conjecture}[Conjecture 2.2.4 in \cite{hopkinsThesis}] \label{c:low-deg-formal}
Let $\Omega$ be a finite set or $\mathbb{R}$, and let $k$ be a fixed integer. Let $N = \binom{n}{k}$, let $\nu$ be a product distribution on $\Omega^N$ and let $\mu$ be another distribution on $\Omega^N$. Suppose that $\mu$ is $S_n$-invariant and $(\log n)^{1 + \Omega(1)}$-wise almost independent with respect to $\nu$. Then no polynomial time test distinguishes $T_{\delta} \mu$ and $\nu$ with probability $1 - o(1)$, for any $\delta > 0$. Formally, for all $\delta > 0$ and every polynomial-time test $t : \Omega^N \to \{0, 1\}$ there exists $\delta' > 0$ such that for every large enough $n$,
$$\frac{1}{2} \bP_{x \sim \nu}\left[ t(x) = 0 \right] + \frac{1}{2} \bP_{x \sim T_{\delta} \mu}\left[ t(x) = 1 \right] \le 1 - \delta'$$
\end{conjecture}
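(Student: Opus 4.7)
The conjecture is the formal version of the low-degree heuristic and is generally regarded as a deep open problem rather than something admitting a short self-contained proof. With that caveat, the plan is to sketch what a proof attempt would look like and identify where it almost certainly gets stuck.

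The natural first step is to exploit the two structural hypotheses, $(\log n)^{1+\Omega(1)}$-wise almost independence and $S_n$-invariance, to bound the low-degree likelihood ratio $\lrd$ at degree $D = (\log n)^{1+\Omega(1)}$. By the Fourier-energy identity~(\ref{e:energy}), $\|\lrd - 1\|_2^2 = \sum_{1 \leq |\alpha| \leq D} (\bE_\mu \chi_\alpha)^2$, and the almost-independence assumption controls each summand directly: each Fourier coefficient $\bE_\mu \chi_\alpha$ is the discrepancy between the $|\alpha|$-th marginal of $\mu$ and the corresponding marginal of the product $\nu$, which by hypothesis is small. Next, $S_n$-invariance would let me collapse the $\binom{N}{\leq D}$ Fourier coefficients to only $\mathrm{poly}(D)$ orbit representatives, keeping the sum manageable. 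The smoothing operator $T_\delta$ should contribute a further factor of $\delta^{|\alpha|}$ that damps the tail. Put together, these ingredients should yield $\|\lrd - 1\|_2 = o(1)$ at this degree, so no $D$-simple statistic distinguishes $T_\delta \mu$ from $\nu$.

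The hard part, and the place where the proposal breaks down, is the jump from the failure of $D$-simple statistics to the failure of \emph{all} polynomial-time tests. This gap is precisely the content of the low-degree heuristic; it is not a formal reduction but a structural belief motivated by SOS lower bounds, spectral methods, and the empirical tightness of low-degree thresholds across planted problems. To convert the low-degree bound into a genuine computational lower bound I see roughly three routes, none of which closes the gap: (i) establish matching SOS lower bounds via pseudo-calibration as in \cite{barak2016nearly}, using the $S_n$-invariance to define the pseudo-expectation symmetrically; (ii) give a black-box reduction from some canonical hard distribution to $(T_\delta \mu, \nu)$, which would run into the worst-case-to-average-case obstructions reviewed in the introduction; or (iii) prove the low-degree heuristic in a restricted model such as statistical query or local algorithms. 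My realistic proposal is therefore to prove the quantitative low-degree statement outlined above together with a degree-$\Omega(D)$ SOS lower bound, and to treat the remainder of the conjecture as a hypothesis to be used downstream, which is the role it plays when the authors invoke it to motivate the $\pr{pc}_\rho$ conjecture.
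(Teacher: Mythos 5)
You are right that this is a \emph{conjecture}, not a theorem, and the paper does not prove it: it is Conjecture 2.2.4 imported verbatim from Hopkins's thesis \cite{hopkinsThesis} and is treated throughout as a hypothesis, used to motivate the $\pr{pc}_\rho$ conjecture rather than to derive any result unconditionally. Your assessment of the structure is accurate. The low-degree part of your sketch (bounding $\|\lrd - 1\|_2$ via the Fourier-energy identity, almost-independence controlling the coefficients, $S_n$-invariance collapsing orbits, and the $T_\delta$ operator damping the tail) is exactly the kind of argument one can and does carry out formally, and indeed the paper does something of this flavor in Proposition~\ref{prop:sl-ld} and Theorem~\ref{thm:sl-ld} for the specific case of $\pr{pc}_\rho$. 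You are also correct that the remaining step---passing from failure of all $D$-simple statistics to failure of all polynomial-time algorithms---is the unproven heart of the heuristic, and the three routes you list (pseudo-calibration SOS bounds, reductions from a canonical hard problem, restricted computational models) are precisely the standard partial substitutes. One small correction worth noting: as the paper discusses immediately after stating the conjecture, Conjecture~\ref{c:low-deg-formal} \emph{as literally stated} has since been falsified by \cite{holmgren2020counterexamples} for $\Omega = \mathbb{R}$, so any ``proof'' is not merely out of reach but impossible without strengthening the hypotheses (e.g., replacing $T_\delta$ with an Ornstein--Uhlenbeck-type noise operator). Your proposal to prove the quantitative low-degree bound together with an SOS lower bound and take the rest as a standing hypothesis is a sound and honest plan, and it matches how the paper actually uses the conjecture.
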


This conjecture has several key technical stipulations attempting to conservatively pin down the $\tilde{O}$ in Conjecture \ref{c:lowDeg} and a set of \textit{sufficient conditions} to be in this ``broad class''. We highlight and explain these key conditions below.
\begin{enumerate}
\item The distribution $\mu$ is required to be $S_n$-invariant. Here, a distribution $\mu$ on $\Omega^N$ is said to be $S_n$-invariant if $\bP_\mu(x) = \bP_\mu(\pi \cdot x)$ for all $\pi \in S_n$ and $x \in \Omega^N$, where $\pi$ acts on $x$ by identifying the coordinates of $x$ with the $k$-subsets of $[n]$ and permuting these coordinates according to the permutation on $k$-subsets induced by $\pi$. 
\item The $(\log n)^{1 + \Omega(1)}$-wise almost independence requirement on $\mu$ essentially enforces that polynomials of degree at most $(\log n)^{1 + \Omega(1)}$ are unable to distinguish between $\mu$ and $\nu$. More formally, a distribution $\mu$ is $D$-wise almost independent with respect to $\nu$ if every $D$-simple statistic $f$, calibrated and normalized with respect to $\nu$, satisfies that $\bE_{x \sim \mu} f(x) = O(1)$. 
\item Rather than $\mu$, the distribution the conjecture asserts is hard to distinguish from $\nu$ is the result $T_\delta \mu$ of applying the noise operator $T_{\delta}$. Here, the distribution $T_{\delta} \mu$ is defined by first sampling $x \sim \mu$, then sampling $y \sim \nu$ and replacing each $x_i$ with $y_i$ independently with probability $\delta$.
\end{enumerate}
These technical conditions are intended to conservatively rule out specific pathological examples. As mentioned in \cite{hopkinsThesis}, the purpose of $T_\delta$ is to destroy algebraic structure that may lead to efficient algorithms that cannot be implemented with low-degree polynomials. For example, if $\mu$ uniform over the solution set to a satisfiable system of equations mod $2$ and $\nu$ is the uniform distribution, it is possible to distinguish these two distributions through Gaussian elimination while the lowest $D$ for which a $D$-simple statistic does so can be as large as $D = \Omega(N)$. The noise operator $T_{\delta}$ rules out distributions with this kind of algebraic structure. The $(\log n)^{1 + \Omega(1)}$-wise requirement on the almost independence of $\mu$ and the $\tilde{O}(D)$ in Conjecture \ref{c:lowDeg} are both to account for the fact that some common polynomial time algorithms for natural hypothesis testing problems can only be implemented as degree $O(\log n)$ polynomials. For example, Section 4.2.3 of \cite{kunisky2019notes} shows that spectral methods can typically be implemented as degree $O(\log n)$ polynomials.

In \cite{hopkinsThesis}, it was mentioned that the $S_n$-invariance condition was included in Conjecture \ref{c:low-deg-formal} mainly because most canonical inference problems satisfy this property and, furthermore, that there were no existing counterexamples to the conjecture without it. Recently, \cite{holmgren2020counterexamples} gave two construction of hypothesis testing problems based on efficiently-correctable binary codes and Reed-Solomon codes. The first construction is for binary $\Omega$ and admits a polynomial-time test despite being $\Omega(n)$-wise almost independent. This shows that $T_{\delta}$ is insufficient to always rule out high-degree algebraic structure that can be used in efficient algorithms. However, this construction also is highly asymmetric and ruled out by $S_n$-invariance condition in Conjecture \ref{c:low-deg-formal}. The second construction is for $\Omega = \mathbb{R}$ and admits a polynomial-time test despite being both $\Omega(n)$-wise almost independent and $S_n$-invariant, thus falsifying Conjecture \ref{c:low-deg-formal} as stated. However, as discussed in \cite{holmgren2020counterexamples}, the conjecture can easily be remedied by replacing $T_{\delta}$ with another operator, such as the Ornstein-Uhlenbeck noise operator. In this work, only the case of binary $\Omega$ will be relevant to the $\pr{pc}_\rho$ conjecture.

\paragraph{The $\pr{pc}_\rho$ Conjecture, Technical Conditions and a Generalization.} The $\pr{pc}_\rho$ hypothesis testing problems and their planted dense subgraph generalizations $\pr{pds}_\rho$ that we consider in this work can be shown to satisfy a wide range of properties sufficient to rule out known counterexamples to the low-degree conjecture. In particular, these problems almost satisfy all three conservative conditions proposed in \cite{hopkinsThesis}, instead satisfying a milder requirement for sufficient symmetry than full $S_n$-invariance.
\begin{enumerate}
\item By definition, a general instance of $\pr{pc}_\rho$ with an arbitrary $\rho$ is only invariant to permutations $\pi \in S_n$ that $\rho$ is also invariant to. However, each of the specific hardness assumptions we use in our reductions corresponds to a $\rho$ with a large amount of symmetry and that is invariant to large subgroups of $S_n$. For example, $k\pr{-pc}$ and $k\pr{-pds}$ are invariant to permutations within each part $E_i$, each of which has size $n/k = \omega(\sqrt{n})$. This symmetry seems sufficient to break the error-correcting code approach used to construct counterexamples to the low-degree conjecture in \cite{holmgren2020counterexamples}.
\item As will be shown subsequently in this section, the conditions in the $\pr{pc}_\rho$ conjecture imply that a $\pr{pc}_\rho$ instance be $(\log n)^{1 + \Omega(1)}$-wise almost independent for it to be conjectured to be hard.
\item While $\pr{pc}_\rho$ is not of the form $T_\delta \mu$, its generalization $\pr{pds}_\rho$ at any pair of constant edge densities $0 < q < p < 1$ always is. All of our reductions also apply to input instances of $\pr{pds}_\rho$ and thus a $\pr{pds}_\rho$ variant of the $\pr{pc}_\rho$ conjecture is sufficient to deduce our computational lower bounds. That said, we do not expect that the computational complexity of $\pr{pc}_\rho$ and $\pr{pds}_\rho$ to be different as long as $p$ and $q$ are constant. 
\end{enumerate}
As mentioned in Section \ref{sec:1-PC}, while we restrict our formal statement of the $\pr{pc}_\rho$ conjecture to the specific hardness assumptions we need for our reductions, we believe it should hold generally for $\rho$ with sufficient symmetry. A candidate condition is that $\rho$ is invariant to a subgroup $H \subseteq S_n$ of permutations such that, for each index $i \in [n]$, there are at least $n^{\Omega(n)}$ permutations $\pi \in H$ with $\pi(i) \neq i$. This ensures that $\rho$ has a large number of nontrivial symmetries that are not just permuting coordinates known not to lie in the clique.

We also remark that there are many examples of hypothesis testing problems where the three conditions in \cite{hopkinsThesis} are violated but low-degree polynomials still seem to accurately predict the performance of the best known efficient algorithms. As mentioned in \cite{holmgren2020counterexamples}, the spiked Wishart model does not quite satisfy $S_n$-invariance but still low-degree predictions are conjecturally accurate. Ordinary $\pr{pc}$ is not of the form $T_\delta \mu$ and the low-degree conjecture accurately predicts the $\pr{pc}$ conjecture, which is widely believed to be true.

\paragraph{The Degree Requirement and a Stronger $\pr{pc}_\rho$ Conjecture.} Furthermore, the degree requirement for the almost independence condition of Conjecture \ref{c:low-deg-formal} is often not exactly necessary. It is discussed in Section 4.2.5 of \cite{kunisky2019notes} that, for sufficiently nice distributions $H_0$ and $H_1$, low-degree predictions are often still accurate when the almost independence condition is relaxed to only be $\omega(1)$-wise for any $\omega(1)$ function of $n$. This yields the following stronger variant of the $\pr{pc}_\rho$ conjecture.

\begin{conjecture}[Informal -- Stronger $\pr{pc}_\rho$ Conjecture] \label{conj:inf-strong-slpc}
For sufficiently symmetric $\rho$, there is no polynomial time algorithm solving $\pr{pc}_\rho(n, k, 1/2)$ if there is some function $w(n) = \omega_n(1)$ such that the tail bounds on $p_\rho(s)$ in Conjecture \ref{conj:sl-conj} are only guaranteed to hold for all $d \le w(n)$.
\end{conjecture}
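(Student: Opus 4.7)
The plan is to establish low-degree evidence for Conjecture \ref{conj:inf-strong-slpc} that parallels Theorem \ref{thm:sl-ld}. Concretely, I aim to prove that for any $D \le w(n)$, the low-degree likelihood ratio for $\pr{pc}_\rho(n,k,1/2)$ satisfies $\|\lrd - 1\|_2 = o_n(1)$ under the hypothesis of Conjecture \ref{conj:inf-strong-slpc}. Combined with the relaxed form of the low-degree conjecture from Section 4.2.5 of \cite{kunisky2019notes}, which only demands $\omega_n(1)$-wise almost independence to predict polynomial-time hardness for sufficiently symmetric problems, such a bound constitutes the same kind of evidence for Conjecture \ref{conj:inf-strong-slpc} as the Fourier analysis underlying Theorem \ref{thm:sl-ld} provides for the original $\pr{pc}_\rho$ conjecture.

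First, I would replay the Fourier analysis that proves Proposition \ref{prop:sl-ld}. Encoding $G \in \{-1,1\}^{\binom{n}{2}}$ via signed edge indicators, the character $\chi_\alpha(G) = \prod_{e \in \alpha} G_e$ satisfies $\E_{H_1}[\chi_\alpha(G) \mid S] = (-1)^{|\alpha|} \mathbf{1}\{\alpha \subseteq \binom{S}{2}\}$ for a planted clique $S \sim \rho$, while $\E_{H_0} \chi_\alpha(G) = 0$ for $\alpha \neq \emptyset$. Squaring, summing over $1 \le |\alpha| \le D$, and introducing an independent copy $S' \sim \rho$ collapses the Parseval identity \eqref{e:energy} to
$$\|\lrd - 1\|_2^2 \;=\; \E_{S, S' \sim \rho}\!\left[\,\sum_{j=1}^{\min(D,\,\binom{|S \cap S'|}{2})} \binom{\binom{|S \cap S'|}{2}}{j}\right] \;\le\; \sum_{s \ge 2} p_\rho(s)\cdot f(s,D),$$
where $f(s,D) \le 2^{\binom{s}{2}}$ when $\binom{s}{2} \le D$ and $f(s,D) \le (D+1)\,s^{2D}$ otherwise. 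This step is essentially algebraic and should mirror the computation sketched for Theorem \ref{thm:sl-ld}.

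Next, I would insert the tail bound on $p_\rho$ from Conjecture \ref{conj:sl-conj} with the specific parameter choice $d = D$, which is permissible by Conjecture \ref{conj:inf-strong-slpc} precisely because $D \le w(n)$. In the regime $s^2 < D$ the bound $p_\rho(s) \le p_0 \cdot 2^{-s^2}$ gives $p_\rho(s) \cdot f(s,D) \le p_0 \cdot 2^{\binom{s}{2} - s^2} = p_0 \cdot 2^{-s(s+1)/2}$, which is uniformly summable in $s$. In the regime $s^2 \ge D$ the bound $p_\rho(s) \le p_0 \cdot s^{-2D-4}$ gives $p_\rho(s) \cdot f(s,D) \le p_0 \cdot (D+1) \cdot s^{-4}$, which is summable with total mass $O(p_0 \cdot D)$. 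Combining the two ranges yields $\|\lrd - 1\|_2^2 = O(p_0 \cdot D)$, which is $o_n(1)$ for every $D \le w(n)$ once $w(n)$ is chosen small enough relative to the rate at which $p_0 \to 0$. Since $p_0 = o_n(1)$, such a $w(n) = \omega_n(1)$ always exists.

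The main obstacle is the coupling between the free parameter $d$ in the tail bound and the degree $D$ at which the polynomial decay $p_\rho(s) \le p_0 \cdot s^{-2d-4}$ is invoked in the Fourier sum. Both regimes of the bound must be applied with $d = D$ to close the estimate, so the evidence can only reach polynomials of degree at most $w(n)$ rather than the $(\log n)^{1+\delta}$ degrees available under the original $\pr{pc}_\rho$ conjecture; this is precisely why Conjecture \ref{conj:inf-strong-slpc} must be paired with the weakened $\omega_n(1)$-wise form of the low-degree conjecture. A secondary technical point is the boundary term at $s^2 \approx D$, where one must verify that the geometric bound $2^{-s(s+1)/2}$ and the polynomial bound $(D+1) s^{-4}$ match up to a universal constant; this is straightforward because both evaluate to $p_0 \cdot 2^{-\Theta(D)}$ up to polynomial factors at the crossover, so the two estimates telescope cleanly.
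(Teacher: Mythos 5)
The labeled statement is a conjecture, not a theorem, and the paper does not (indeed cannot) prove it; what you have supplied is low-degree \emph{evidence} for it, and you correctly frame it that way. What the paper does in the surrounding text is exactly what you describe: it invokes the $\omega_n(1)$-wise relaxation from Section 4.2.5 of \cite{kunisky2019notes} and implicitly points to the Theorem~\ref{thm:sl-ld} computation rerun with the weaker degree constraint. So your approach matches the paper's intent and reuses the right machinery, namely Proposition~\ref{prop:sl-ld} and the two-regime tail estimate on $p_\rho$.

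Two remarks on the details. First, your bound $f(s,D)\le (D+1)\,s^{2D}$ is looser than the one used in the proof of Theorem~\ref{thm:sl-ld}, which is $f_D(s)\le s^{2(D+1)}$ for $s\ge 3$, obtained by summing the geometric series $\sum_{\ell=1}^D \binom{s(s-1)/2}{\ell}\le \sum_{\ell=1}^D (s(s-1)/2)^\ell$ and absorbing the ratio into a single extra factor of $s^2$. With the tighter bound, the polynomial-tail regime contributes $p_0\sum_{s\ge 1}s^{-2D-4}\cdot s^{2(D+1)} = p_0\sum_{s\ge 1}s^{-2}=O(p_0)$, and the total is $\|\lrd-1\|_2^2=O(p_0)$ with \emph{no} dependence on $D$. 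This removes the need for your extra step of shrinking $w(n)$ to beat $1/p_0$: the bound is $o_n(1)$ uniformly over all $D\le w(n)$ for any $w(n)=\omega_n(1)$, which is cleaner and exactly parallels Theorem~\ref{thm:sl-ld}. Your workaround (replacing $w$ by $\min(w,\,1/\sqrt{p_0})$) is logically valid because the conjecture only requires \emph{some} $\omega_n(1)$ function, but it is an artifact of the looser estimate rather than a genuine necessity.

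Second, the ``boundary matching'' observation in your last sentence is spurious: at $s^2\approx D$ the exponential estimate gives $p_0\cdot 2^{-\Theta(D)}$ while the polynomial estimate gives $p_0\cdot O(1/D)$, and these are \emph{not} comparable up to polynomial factors. Nothing needs to match there, because the two ranges of $s$ are disjoint and each sum is finite on its own; no telescoping is required or claimed in the paper's proof of Theorem~\ref{thm:sl-ld}. Minor: the sign $(-1)^{|\alpha|}$ in $\E_{H_1}[\chi_\alpha\mid S]$ is extraneous under the paper's $\pm 1$ encoding (clique edges give $X_e=1$, so $\E_{P_S}[\chi_\alpha]=\mathbf{1}\{V(\alpha)\subseteq S\}$), though it is harmless after squaring.
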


We conjecture that the $\rho$ in Conjecture \ref{conj:hard-conj} are symmetric enough for this conjecture to hold. A nearly identical argument to that in Theorem \ref{thm:verify} can be used to show that this stronger $\pr{pc}_\rho$ conjecture implies the exact boundaries in Conjecture \ref{conj:hard-conj}, without the small polynomial error factors of $O(n^\epsilon)$ and $O(m^\epsilon)$.

We now make several notes on the degree requirement in the $\pr{pc}_\rho$ conjecture, as stated in Conjecture \ref{conj:sl-conj}. As will be shown later in this section, the tail bounds on $p_{\rho}(s)$ for a particular $d$ directly imply the $d$-wise almost independence of $\pr{pc}_\rho$. Now note that for any $\rho$ and $k \gg \log n$, there is always a $d$-simple statistic solving $\pr{pc}_\rho$ with $d = O((\log n)^2)$. Specifically, $\mG(n, 1/2)$ has its largest clique of size less than $(2 + \epsilon) \log_2 n$ with probability $1 - o_n(1)$ and any instance of $H_1$ of $\pr{pc}_\rho$ with $k \gg \log n$ has $n^{\omega(1)}$ cliques of size $\lceil 3 \log_2 n \rceil$. Furthermore, the number of cliques of this size can be expressed as a degree $O((\log n)^2)$ polynomial in the edge indicators of a graph. Similarly, the largest clique in an $s$-uniform Erd\H{o}s-R\'{e}nyi hypergraph is in general of size $O((\log n)^{1/(s - 1)})$ and a simple clique-counting test distinguishing this from the planted clique hypergraph distribution can be expressed as an $O((\log n)^{s/(s - 1)})$ degree polynomial. This shows that for all $\rho$, the problem $\pr{pc}_\rho$ is not $O((\log n)^2)$-wise almost independent. Furthermore, for any $\delta > 0$, there is some $\rho$ corresponding to a hypergraph variant of $\pr{pc}$ such that $\pr{pc}_\rho$ is not $O((\log n)^{1+ \delta})$-wise almost independent. Thus the tail bounds in Conjecture \ref{conj:sl-conj} never hold for $\delta \ge 1$ and, for any $\delta' > 0$, there is some $\rho$ requiring $\delta \le \delta'$ for these tail bounds to be true.

Finally, we remark that there are highly asymmetric examples of $\rho$ for which Conjecture \ref{conj:inf-strong-slpc} is not true. Suppose that $n$ is even, let $c > 0$ be an arbitrarily large integer and let $S_1, S_2, \dots, S_{n^c} \subseteq [n/2]$ be a known family of subsets of size $\lceil 3 \log_2 n \rceil$. Now let $\rho$ be sampled by taking the union of an $S_i$ chosen uniformly at random and a size $k - \lceil 3 \log_2 n \rceil$ subset of $\{n/2 + 1, n/2 + 2, \dots, n\}$ chosen uniformly at random. The resulting $\pr{pc}_\rho$ problem can be solved in polynomial time by exhaustively searching for the subset $S_i$. However, this $\rho$ only violates the tail bounds on $p_\rho$ in Conjecture \ref{conj:sl-conj} for $d = \Omega_n(\log n/\log \log n)$. If $S_1, S_2, \dots, S_{n^c}$ are sufficiently pseudorandom, then the structure of this $\rho$ only appears in the tails of $p_\rho(s)$ when $s \ge \lceil 3 \log_2 n \rceil$. In particular, the probability that $s \ge \lceil 3 \log_2 n \rceil$ under $p_\rho$ is at least the chance that two independent samples from $\rho$ choose the same $S_i$, which occurs with probability $n^{-c}$. It can be verified the the tail bound of $p_0 \cdot s^{-2d-4}$ in Conjecture \ref{conj:sl-conj} only excludes this possibility when $d = \Omega_n(\log n/\log \log n)$. We remark though that this $\rho$ is highly asymmetric and any mild symmetry assumption that would effectively cause the number of $S_i$ to be super-polynomial would break this example.

\paragraph{The Low-Degree Conjecture and $\pr{pc}_\rho$.} We now will characterize the power of the optimal $D$-simple statistics for $\pr{pc}_\rho$. The following proposition establishes an explicit formula for $\lrd$ in $\pr{pc}_\rho$, which will be shown in the subsequent theorem to naturally yield the PMF decay condition in the $\pr{pc}_\rho$ conjecture.

\begin{proposition} \label{prop:sl-ld}
Let $\lrd$ be the low-degree likelihood ratio for the hypothesis testing problem $\pr{pc}_\rho(n, k, 1/2)$ between $\mG(n, 1/2)$ and $\mG_\rho(n, k, 1/2)$. For any $D \ge 1$, it follows that
$$\|\lrd - 1\|_2^2 = \bE_{S, S' \sim \rho^{\otimes 2}} \left[ \# \textnormal{ of nonempty edge subsets of } S \cap S' \textnormal{ of size at most } D \right]$$
\end{proposition}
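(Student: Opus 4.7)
The plan is to apply the Fourier-analytic formula for $\|\lrd - 1\|_2^2$ given in Equation (\ref{e:energy}) and then explicitly compute each Fourier coefficient under the planted hypothesis $H_1 = \mG_\rho(n, k, 1/2)$. I would first fix the $\pm 1$ encoding $X_e = 2 \cdot \mathbf{1}\{e \in E\} - 1$ on the hypercube indexed by edges $e \in \binom{[n]}{2}$, so that under $H_0 = \mG(n, 1/2)$ the coordinates $X_e$ are i.i.d.\ $\mathrm{Unif}\{\pm 1\}$ and the characters $\chi_\alpha(X) = \prod_{e \in \alpha} X_e$, indexed by edge subsets $\alpha \subseteq \binom{[n]}{2}$, form the orthonormal Boolean Fourier basis. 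Then (\ref{e:energy}) gives
\[
\|\lrd - 1\|_2^2 = \sum_{1 \le |\alpha| \le D} \bigl(\bE_{H_1} \chi_\alpha(X)\bigr)^2.
\]

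Next, I would compute $\bE_{H_1}\chi_\alpha(X)$ by conditioning on the planted clique vertex set $S \sim \rho$. Given $S$, the edges inside $S$ are deterministically present, so $X_e = +1$ for all $e \subseteq S$, while every other $X_e$ is independent $\mathrm{Unif}\{\pm 1\}$. By independence of the edge variables conditional on $S$,
\[
\bE\bigl[\chi_\alpha(X) \,\big|\, S\bigr] = \prod_{e \in \alpha} \bE[X_e \mid S] = \mathbf{1}\{\alpha \subseteq E(S)\},
\]
where $E(S) = \binom{S}{2}$ denotes the edge set of the clique on $S$. Taking expectation over $S$ yields $\bE_{H_1}\chi_\alpha(X) = \bP_{S \sim \rho}[\alpha \subseteq E(S)]$.

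Finally, I would square this expression by introducing an independent copy $S' \sim \rho$, using
\[
\bigl(\bP_S[\alpha \subseteq E(S)]\bigr)^2 = \bP_{S, S' \sim \rho^{\otimes 2}}\bigl[\alpha \subseteq E(S) \cap E(S')\bigr],
\]
together with the elementary identity $E(S) \cap E(S') = E(S \cap S')$ (an edge lies in both cliques iff both endpoints lie in $S \cap S'$). Substituting into the Fourier energy formula, swapping the finite sum with the expectation over $(S, S')$, and recognizing the inner sum as the number of nonempty edge subsets $\alpha \subseteq E(S \cap S')$ with $|\alpha| \le D$ yields
\[
\|\lrd - 1\|_2^2 = \bE_{S, S' \sim \rho^{\otimes 2}} \bigl|\{\alpha : \emptyset \neq \alpha \subseteq E(S \cap S'),\ |\alpha| \le D\}\bigr|,
\]
which is the claimed identity. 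Every step is a direct computation; there is no real obstacle here, since the expectation $\bE[X_e \mid S]$ is either $0$ or $1$ and the rest is bookkeeping. The only mildly substantive point worth double-checking is the planted-clique convention for the edge marginals (specifically, that $X_e$ is determined given $S$ when $e \subseteq S$ and is still uniform $\pm 1$ otherwise), which will enter again in the downstream application to tail bounds on $p_\rho$.
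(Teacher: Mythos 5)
Your proof is correct and follows essentially the same route as the paper's: compute $\bE_{H_1}\chi_\alpha(X)$ by conditioning on the clique $S$ (obtaining $\mathbf{1}\{\alpha \subseteq E(S)\}$, which is the same as the paper's $\mathbf{1}\{V(\alpha)\subseteq S\}$), square by introducing an independent copy $S'$, and plug into the Fourier-energy formula~(\ref{e:energy}). The only cosmetic difference is that you phrase the containment condition in terms of edge sets rather than vertex sets.
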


\begin{proof}
In the notation above, let $N= \binom{n}{2}$ and identify $X \in \{-1, 1\}^N$ with the space of signed adjacency matrices $X$ of $n$-vertex graphs. Let $P_S$ be the distribution on graphs in this space induced by $\pr{pc}(n, k, 1/2)$ conditioned on the clique being planted on the vertices in the subset $S$ i.e. such that $X_{ij}=1$ if $i\in S$ and $j\in S$ and otherwise $X_{ij}=\pm 1$ with probability half each. Now let $\alpha\subseteq \mathcal{E}_0$ be a subset of possible edges. The set of functions $\{\chi_\alpha(X) = \prod_{e\in \alpha} X_e: \alpha \subseteq \mathcal{E}_0\}$ comprises the standard Fourier basis on $\{-1, 1\}^{\mathcal{E}_0}$. For each fixed clique $S$, because $\E_{P_S} X_e=0$ if $e\notin {S\choose 2}$ and non-clique edges are independent, we see that
$$\E_{P_S} [\chi_\alpha (X) ]= \mathbf{1} \{V(\alpha)\subseteq S\}$$
We therefore have that
$$\bE_{H_1} [\chi_\alpha (X) ] = \bE_{S \sim \rho} \E_{P_S} [\chi_\alpha (X) ] = \bE_{S \sim \rho} \left[ \mathbf{1} \{V(\alpha)\subseteq S\} \right] = \bP_{\rho}\left[ V(\alpha) \subseteq S \right]$$
Now suppose that $S'$ is drawn from $\rho$ independently of $S$. It now follows that
\begin{align*}
\bE_{H_1} [\chi_\alpha (X) ]^2 &= \bE_{S \sim \rho} \left[ \mathbf{1} \{V(\alpha)\subseteq S\} \right]^2 \\
&= \bE_{S \sim \rho} \left[ \mathbf{1} \{V(\alpha)\subseteq S\} \right] \cdot \bE_{S' \sim \rho} \left[ \mathbf{1} \{V(\alpha)\subseteq S' \} \right] \\
&= \bE_{S, S' \sim \rho^{\otimes 2}} \left[ \mathbf{1} \{V(\alpha)\subseteq S\} \cdot \mathbf{1} \{V(\alpha)\subseteq S'\} \right] \\
&= \bE_{S, S' \sim \rho^{\otimes 2}} \left[ \mathbf{1} \left\{V(\alpha)\subseteq S \cap S' \right\} \right]
\end{align*}
From Equation (\ref{e:energy}), we therefore have that
$$\|\lrd - 1\|_2^2 = \sum_{1\leq|\alpha|\leq D} \E_{H_1}\left[\chi_\alpha(X)\right]^2 = \bE_{S, S' \sim \rho^{\otimes 2}} \left[ \sum_{1\leq|\alpha|\leq D} \mathbf{1} \left\{V(\alpha)\subseteq S \cap S' \right\} \right]$$
Now observe that the sum
$$\sum_{1\leq|\alpha|\leq D} \mathbf{1} \left\{V(\alpha)\subseteq S \cap S' \right\}$$
counting the number of nonempty edge subsets of $S \cap S'$ of size at most $D$.
\end{proof}

This proposition now allows us to show the main result of this section, which is that the condition in the $\pr{pc}_\rho$ conjecture is enough to show the failure of low-degree polynomials for $\pr{pc}_\rho$. Combining the next theorem with Conjecture~\ref{c:lowDeg} would suggest that whenever the PMF decay condition of the $\pr{pc}_\rho$ condition holds, there is no polynomial time algorithm solving $\pr{pc}_\rho(n, k, 1/2)$.

\begin{theorem}[$\pr{pc}_\rho$ Implies Failure of Low-Degree] \label{thm:sl-ld}
Suppose that $\rho$ satisfies that for any parameter $d = O_n(\log n)$, there is some $p_0 = o_n(1)$ such that $p_{\rho}(s)$ satisfies the tail bounds
$$p_{\rho}(s) \le p_0 \cdot \left\{ \begin{array}{ll} 2^{-s^2} &\textnormal{if } 1 \le s^2 < d \\ s^{-2d-4} &\textnormal{if } s^2 \ge d \end{array} \right.$$
Let $\lrd$ be the low-degree likelihood ratio for the hypothesis testing problem $\pr{pc}_\rho(n, k, 1/2)$. Then it also follows that for any parameter $D = O_n(\log n)$, we have
$$\|\lrd - 1\|_2 = o_n(1)$$
\end{theorem}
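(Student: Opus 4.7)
The plan is to apply Proposition \ref{prop:sl-ld} and then bound the resulting expectation by splitting according to whether $|S\cap S'|^2$ is above or below the threshold $d$ that calibrates the tail bounds in the hypothesis. Let $s = |S\cap S'|$. The number of nonempty edge subsets of $S\cap S'$ of size at most $D$ equals $\sum_{j=1}^{D}\binom{\binom{s}{2}}{j}$, which vanishes for $s\le 1$. Grouping by the value of $s$ gives
\[
\|\lrd - 1\|_2^2 \;=\; \sum_{s\ge 2} p_{\rho}(s)\cdot \sum_{j=1}^{D}\binom{\binom{s}{2}}{j}.
\]
I would then choose $d := 3D$, which is $O_n(\log n)$ since $D = O_n(\log n)$, so the hypothesis supplies a $p_0 = o_n(1)$ for which the decay bounds on $p_\rho(s)$ apply.

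Next I would split the sum at the threshold $s^2 = d = 3D$. In the ``small $s$'' regime $s^2 < d$, I would use the crude upper bound $\sum_{j=1}^{D}\binom{\binom{s}{2}}{j} \le 2^{\binom{s}{2}} \le 2^{s^2/2}$ and pair it with $p_\rho(s) \le p_0\cdot 2^{-s^2}$, giving a per-term bound of $p_0\cdot 2^{-s^2/2}$, which sums to $O(p_0)$. In the ``large $s$'' regime $s^2\ge d$, I would use the standard binomial inequality $\sum_{j=0}^{D}\binom{N}{j}\le (eN/D)^D$ with $N=\binom{s}{2}$ to get $\sum_{j=1}^{D}\binom{\binom{s}{2}}{j} \le (es^2/(2D))^D$, and combine with the polynomial tail bound $p_\rho(s)\le p_0\cdot s^{-2d-4} = p_0\cdot s^{-6D-4}$ to obtain a per-term bound of $p_0\cdot s^{-4D-4}\cdot (e/(2D))^D$. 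Summing over $s\ge 2$ this is at most $p_0\cdot (e/(2D))^D \cdot O(1) = O(p_0)$, whether $D$ is bounded or grows.

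Adding the two contributions gives $\|\lrd-1\|_2^2 = O(p_0) = o_n(1)$, and taking square roots yields the theorem. The only nontrivial design choice is the calibration $d = \Theta(D)$: picking $d$ too small makes the exponential regime fail to dominate the $2^{\binom{s}{2}}$ factor, while picking $d$ too large risks leaving the range $d = O_n(\log n)$ permitted by the hypothesis. I do not expect any serious obstacle beyond verifying this calibration carefully; the remaining computation is essentially the standard two-regime bookkeeping used in low-degree analyses of planted problems, and both regimes cleanly produce an $O(p_0)$ contribution.
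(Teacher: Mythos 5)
Your proposal is correct and follows essentially the same route as the paper: both apply Proposition~\ref{prop:sl-ld} and then split the resulting sum at a threshold $\asymp D$, pairing the $2^{-s^2}$ decay against $2^{\binom{s}{2}}$ for small $s$ and the polynomial tail against a $\textnormal{poly}(s)$ bound on $f_D(s)$ for large $s$. The only cosmetic differences are that the paper takes $d = D$ and bounds $f_D(s) \le s^{2(D+1)}$ via a geometric series, whereas you take $d = 3D$ (a choice that conveniently guarantees $\binom{s}{2} \ge D$ on the large-$s$ range so that the inequality $\sum_{j \le D}\binom{N}{j} \le (eN/D)^D$ actually applies) and use this sharper binomial bound; both yield the same $O(p_0)$ conclusion.
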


\begin{proof}
First observe that the number of nonempty edge subsets of $S \cap S'$ of size at most $D$ can be expressed explicitly as
$$f_D(s) = \sum_{\ell = 1}^D \binom{s(s - 1)/2}{\ell}$$
if $s = |S \cap S'|$. Furthermore, we can crudely upper bound $f_D$ in two separate ways. Note that the number of nonempty edge subsets of $S \cap S'$ is exactly $2^{\binom{s}{2}} - 1$ if $s = |S \cap S'|$. Therefore we have that $f_D(s) \le 2^{\binom{s}{2}}$. Furthermore using the upper bound that $\binom{x}{\ell} \le x^\ell$, we have that if $s \ge 3$ then
$$f_D(s) = \sum_{\ell = 1}^D \binom{s(s - 1)/2}{\ell} \le \sum_{\ell = 1}^D \left( \frac{s(s - 1)}{2} \right)^\ell \le \frac{\left( \frac{s(s - 1)}{2} \right)^{D + 1} - 1}{\left( \frac{s(s - 1)}{2} \right) - 1} \le s^{2(D + 1)}$$
Combining these two crude upper bounds, we have that $f_D(s) \le \min\left\{ 2^{\binom{s}{2}}, s^{2(D + 1)} \right\}$. Also note that $f_D(0) = f_D(1) = 0$. Combining this with the given bounds on $p_{\rho}(s)$, we have that
\allowdisplaybreaks
\begin{align*}
\|\lrd - 1\|_2^2 &= \bE_{S, S' \sim \rho^{\otimes 2}} \left[ f_D(|S \cap S'|) \right] \\
&= \sum_{s = 2}^k p_{\rho}(s) \cdot f_D(s) \\
&\le p_0 \cdot \sum_{1 \le s^2 < D} 2^{-s^2} \cdot f_D(s) + p_0 \cdot \sum_{D \le s^2 \le k^2} s^{-2d-4} \cdot f_D(s) \\
&\le p_0 \cdot \sum_{1 \le s^2 < D} 2^{-s^2} \cdot 2^{\binom{s}{2}} + p_0 \cdot \sum_{D \le s^2 \le k^2} s^{-2D-4} \cdot s^{2(D + 1)} \\
&= p_0 \cdot \sum_{s = 1}^\infty 2^{-\binom{s+1}{2}} + p_0 \cdot \sum_{s = 1}^\infty s^{-2} = O_n(p_0)
\end{align*}
which completes the proof of the theorem.
\end{proof}

\subsection{Statistical Query Algorithms and the $\pr{pc}_\rho$ Conjecture}
\label{subsec:2-sq}

In this section, we verify that the lower bounds shown by \cite{feldman2013statistical} for \textsc{pc} for a generalization of statistical query algorithms hold essentially unchanged for SQ variants of $k\pr{-pc}, k\pr{-bpc}$ and $\pr{bpc}$. We remark at the end of this section why the statistical query model seems ill-suited to characterizing the computational barriers in problems that are tensor or hypergraph problems such as $k\pr{-hpc}$. Since it was shown in Section \ref{subsec:2-sl-verifying} that there are specific $\rho$ in $\pr{pc}_\rho$ corresponding to $k\pr{-hpc}$, it similarly follows that the SQ model seems ill-suited to characterizing the barriers $\pr{pc}_\rho$ for general $\rho$. Throughout this section, we focus on $k\pr{-pc}$, as lower bounds in the statistical query model for $k\pr{-bpc}$ and $\pr{bpc}$ will follow from nearly identical arguments.

\paragraph{Distributional Problems and SQ Dimension.} The Statistical Algorithm framework of \cite{feldman2013statistical} applies to distributional problems, where the input is a sequence of i.i.d. observations from a distribution $D$. In order to obtain lower bounds in the statistical query model supporting Conjecture \ref{conj:hard-conj}, we need to define a distributional analogue of \kpc. As in \cite{feldman2013statistical}, a natural distributional version can be obtained by considering a bipartite version of $k\pr{-pc}$, which we define as follows.

\begin{definition}[Distributional Formulation of $k\pr{-pc}$]
Let $k$ divide $n$ and fix a known partition $E$ of $[n]$ into $k$ parts $E_1, E_2, \dots, E_k$ with $|E_i|=n/k$. Let $S\subseteq [n]$ be a subset of indices with $|S\cap E_i|=1$ for each $i\in [k]$. The distribution $D_S$ over $\{0,1\}^n$ produces with probability $1-k/n$ a uniform point $X\sim \mathrm{Unif}(\{0,1\}^n)$ and with probability $k/n$ a point $X$ with $X_i=1$ for all $i\in S$ and $X_{S^c}\sim \mathrm{Unif}(\{0,1\})^{n-k}$. The distributional bipartite $k$-\textsc{pc} problem is to find the subset $S$ given some number of independent samples $m$ from $D_S$. 
\end{definition} 

In other words, the distribution $k\pr{-pc}$ problem is $k\pr{-bpc}$ with $n$ left and $n$ right vertices, a randomly-sized right part of the planted biclique and no $k$-partite structure on the right vertex set. We remark that many of our reductions, such as our reductions to $\pr{rsme}$, $\pr{neg-spca}$, $\pr{mslr}$ and $\pr{rslsr}$, only need the $k$-partite structure along one vertex set of $k\pr{-pc}$ or $k\pr{-bpc}$. This distributional formulation of $k\pr{-pc}$ is thus a valid starting point for these reductions. 

We now formally introduce the Statistical Algorithm framework of \cite{feldman2013statistical} and SQ dimension. Let $\cX=\{0,1\}^n$ denote the space of configurations and let $\cD$ be a set of distributions over $\cX$. Let $\cF$ be a set of solutions and $\cZ:\cD\to 2^\cF$ be a map taking each distribution $D\in \cD$ to a subset of solutions $\cZ(D)\subseteq \cF$ that are defined to be valid solutions for $D$. In our setting, $\cF$ corresponds to clique positions $S$ respecting the partition $E$. Furthermore, since each clique position is in one-to-one correspondence with distributions, there is a single clique $\cZ(D)$ corresponding to each distribution $D$. For $m>0$, the \emph{distributional search problem} $\cZ$ over $\cD$ and $\cF$ using $m$ samples is to find a valid solution $f\in \cZ(D)$ given access to $m$ random samples from an unknown $D\in \cD$.

Classes of algorithms in the framework of \cite{feldman2013statistical} are defined in terms of access to oracles. The most basic oracle is an unbiased oracle, which evaluates a simple function on a single sample as follows.

\begin{definition}[Unbiased Oracle]
Let $D$ be the true unknown distribution. A query to the oracle consists of any function $h:\cX\to \{0,1\}$, and the oracle then takes an independent random sample $X\sim D$ and returns $h(X)$.
\end{definition}

Algorithms with access to an unbiased oracle are referred to as \textit{unbiased statistical algorithms}. Since these algorithms access the sampled data only through the oracle, it is possible to prove \emph{unconditional} lower bounds using information-theoretic methods. Another oracle is the $VSTAT$, defined below, which is similar but also allowed to make an adversarial perturbation of the function evaluation. It is shown in \cite{feldman2013statistical} via a simulation argument that the two oracles are approximately equivalent. 

\begin{definition}[$VSTAT$ Oracle]
Let $D$ be the true distribution and $t>0$ a sample size parameter. A query to the $VSTAT(t)$ oracle again consists of any function $h:\cX\to [0,1]$, and the oracle returns an arbitrary value $v\in[\E_{ D}h(X)-\tau, \E_{ D}h(X)+\tau]$, where $\tau = \max\{1/t,\sqrt{\E_{ D}h(X)(1-\E_{ D}h(X))/t}\}$.
\end{definition}

We borrow some definitions from \cite{feldman2013statistical}. Given a distribution $D$, we define the inner product $\la f,g\ra_D = \E_{X\sim D}f(X) g(X)$ and the corresponding norm $\|f\|_D = \sqrt{\la f,f\ra_D}$. Given two distributions $D_1$ and $D_2$ both absolutely continuous with respect to $D$, their pairwise correlation is defined to be 
$$
\chi_D(D_1,D_2) = \Big|\Big\la \frac{D_1}D-1,\frac{D_2}D-1\Big\ra_D \Big|=|\la \Dh_1, \Dh_2\ra_D|\,.
$$ 
where $ \Dh_1 = \frac{D_1}D-1$. The \emph{average correlation} $\rho(\cD, D)$ of a set of distributions $\cD$ relative to distribution $D$ is then given by
$$
\rho(\cD, D) = \frac1{|\cD|^2} \sum_{D_1,D_2\in \cD}\chi_D(D_1,D_2) = \frac1{|\cD|^2} \sum_{D_1,D_2\in \cD}\Big|\Big\la \frac{D_1}D-1,\frac{D_2}D-1\Big\ra_D \Big|\,.
$$
Given these definitions, we can now introduce the key quantity from \cite{feldman2013statistical}, statistical dimension, which is defined in terms of average correlation. 

\begin{definition}[Statistical dimension]\label{d:statDimProblem}
Fix $\gamma>0,\eta>0$, and search problem $\cZ$ over set of solutions $\cF$ and class of distributions $\cD$ over $\cX$. 
We consider pairs $(D,\cD_D)$ consisting of a ``reference distribution" $D$ over $\cX$ and a finite set of distributions $\cD_D\subseteq \cD$ with the following property: for any solution $f\in \cF$, the set $\cD_f = \cD_D\setminus \cZ\inv (f)$ has size at least $(1-\eta)\cdot |\cD_D|$.
Let $\ell(D,\cD_D)$ be the largest integer $\ell$ so that for any subset $\cD'\subseteq \cD_f$ with $|\cD'|\geq |\cD_f|/\ell$, the average correlation is $|\rho(\cD',D)|<\gamma$ (if there is no such $\ell$ one can take $\ell=0$). The \emph{statistical dimension} with average correlation $\gamma$ and solution set bound $\eta$ is defined to be the largest $\ell(D,\cD_D)$ for valid pairs $(D,\cD_D)$ as described, and is  denoted by $\mathrm{SDA}(\cZ,\gamma,\eta)$.
\end{definition}

In \cite{feldman2013statistical}, it is shown that statistical dimension immediately yields a lower bound on the number of queries to an unbiased oracle or a $VSTAT$ oracle needed to solve a given distributional search problem.

\begin{theorem}[Theorems 2.7 and 3.17 of \cite{feldman2013statistical}]\label{t:sampleBound}
  Let $\cX$ be a domain and $\cZ$ a search problem over a set of solutions $\cF$ and a class of distributions $\cD$ over $\cX$. For $\gamma>0$ and $\eta\in (0,1)$, let $\ell = \mathrm{SDA}(\cZ,\gamma,\eta)$. Any (possibly randomized) statistical query algorithm that solves $\cZ$ with probability $\delta>\eta$ requires at least $\ell$ calls to the $VSTAT(1/(3\gamma))$ oracle to solve $\cZ$. 
  
Moreover, any statistical query algorithm requires at least  $m$ calls to the Unbiased Oracle for $m = \min\left\{ \frac{\ell(\delta- \eta)}{2(1-\eta)},\frac{(\delta-\eta)^2}{12\gamma}\right\}$. In particular, if $\eta \leq 1/6$, then any algorithm with success probability at least $2/3$ requires at least $\min\{ \ell/4,1/48\gamma\}$ samples from the Unbiased Oracle.
\end{theorem}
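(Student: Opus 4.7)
The plan is to follow the standard statistical-dimension argument from \cite{feldman2013statistical}, which is essentially an adversarial potential argument that tracks the set of distributions in $\cD_D$ still ``consistent'' with the query-response history of the algorithm. Without loss of generality I would fix a target solution $f \in \cF$ and argue against any deterministic SQ algorithm that succeeds with probability more than $\eta$ on distributions drawn from $\cD_f = \cD_D \setminus \cZ^{-1}(f)$; randomized algorithms then follow by averaging over their coins. For each query function $h : \cX \to [0,1]$ issued to $VSTAT(1/(3\gamma))$, define the ``bad set'' $A_{h,v} = \{D' \in \cD_f : |\E_{D'} h - v| > \tau(D',h)\}$ of distributions for which the oracle response $v$ is \emph{not} a valid response. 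The adversary is allowed to pick any admissible $v$ given the current reference distribution $D$, and I would pick $v$ to minimize $|A_{h,v}|$ among admissible responses.

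The main lemma is then to show that for any query $h$ and any subset $\cD' \subseteq \cD_f$ with $|\cD'| \ge |\cD_f|/\ell$, there exists an admissible response $v$ with $|A_{h,v} \cap \cD'| \le |\cD'|/\ell$. This is where the average-correlation hypothesis enters: one writes $\E_{D'} h - \E_D h = \la h - \E_D h, \Dh' \ra_D$ where $\Dh' = D'/D - 1$, so after centering $h$ one can bound the number of $D' \in \cD'$ whose query-value deviates substantially from $\E_D h$ in terms of the quadratic form $\sum_{D_1,D_2 \in \cD'} |\la \Dh_1, \Dh_2 \ra_D|$, which is exactly $|\cD'|^2 \cdot \rho(\cD',D) < |\cD'|^2 \gamma$ by the definition of $\mathrm{SDA}$. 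A Chebyshev-style second-moment argument together with the choice $\tau \ge 1/\sqrt{3\gamma |\cD'|}$ coming from $VSTAT(1/(3\gamma))$ then yields the desired $1/\ell$ bound. Iterating for $q < \ell$ queries, the set of consistent distributions never drops below $|\cD_f|/\ell \ge (1-\eta)|\cD_D|/\ell > 0$, so at least one $D' \in \cD_f$ remains consistent with the transcript, on which the algorithm's output cannot be valid.

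For the Unbiased Oracle bound, the plan is to run two arguments in parallel and take the better one, which produces the $\min\{\ell(\delta-\eta)/(2(1-\eta)), \, (\delta-\eta)^2/(12\gamma)\}$ expression. The first term comes from a simulation-style reduction: $m$ Unbiased Oracle samples can be used to simulate a $VSTAT(t)$ response for $t = \Theta(m)$, so the $\ell$-query $VSTAT(1/(3\gamma))$ lower bound immediately gives an $\Omega(\ell/\gamma^{-1}) = \Omega(\ell)$ sample lower bound, adjusted by the success probability slack $(\delta-\eta)/(1-\eta)$ that arises from amplifying a weak SQ algorithm into one with better failure probability. The second term comes from a direct information-theoretic argument: under the uniform prior on $\cD_f$, any statistical algorithm computing a valid solution induces a mutual-information lower bound between transcript and hidden parameter, while each Unbiased Oracle query on a random $D' \in \cD_f$ reveals at most $O(\gamma)$ bits on average, by a $\chi^2$ bound controlled by the average pairwise correlation $\rho(\cD',D) < \gamma$; balancing Fano's inequality against this per-query bound gives the $1/\gamma$ scaling.

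The main technical obstacle is the adversarial-response lemma in the first paragraph: one must verify that the adversary's chosen $v$ is always \emph{admissible} for $VSTAT(1/(3\gamma))$ with respect to the true $D$, not just the reference distribution. The cleanest route is the one taken in \cite{feldman2013statistical}, where one maintains the reference distribution $D$ throughout and exploits the fact that admissibility only depends on $\E_D h$ up to the tolerance $\tau$, so that any $v$ within tolerance of some $\E_{D'} h$ with $D' \in \cD'$ is automatically admissible. The rest of the argument is then a careful bookkeeping exercise combining the second-moment bound with the definition of $\mathrm{SDA}(\cZ,\gamma,\eta)$.
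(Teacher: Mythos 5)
This theorem is cited in the paper as ``Theorems 2.7 and 3.17 of \cite{feldman2013statistical}''; the paper gives no proof and uses it as an imported black box, so there is no in-paper proof for your sketch to be compared against. What I can assess is whether your reconstruction matches the source argument.

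Your VSTAT paragraph is a faithful high-level reconstruction of the Feldman et al.\ argument: maintain the set of distributions consistent with the transcript, have the adversary answer so as to keep that set large, and use the average-correlation hypothesis of $\mathrm{SDA}$ via a second-moment/Chebyshev bound to cap the per-query attrition at a $1/\ell$ fraction. The bookkeeping details (centering $h$, the $\tau \sim 1/\sqrt{3\gamma |\cD'|}$ tolerance, surviving $\ge (1-\eta)|\cD_D|/\ell$) are all of the right shape.

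The Unbiased Oracle paragraph has a genuine logical gap. You argue ``Unbiased samples can simulate $VSTAT(t)$, hence a VSTAT query lower bound transfers to an Unbiased sample lower bound.'' The simulation does run in the direction you describe -- Unbiased samples can emulate VSTAT responses -- but that direction of simulation produces \emph{upper} bounds on Unbiased-sample complexity from VSTAT algorithms, not lower bounds from VSTAT lower bounds. To transfer a VSTAT lower bound to an Unbiased lower bound via simulation you would need the opposite simulation (VSTAT emulating Unbiased), which is not available since a sample carries more information than a mean estimate within a tolerance. Moreover, if your transfer argument were valid it would give $m \gtrsim \ell/(3\gamma)$, which is stronger than the stated $m \gtrsim \ell(\delta-\eta)/(2(1-\eta))$ -- another sign the route is not right. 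The second term $(\delta-\eta)^2/(12\gamma)$ is also not obtained by a Fano/mutual-information argument in Feldman et al.; it comes from a direct adversarial second-moment argument on the Boolean oracle outputs, parallel in spirit to the VSTAT case but tracking bit-valued responses. In short, the Unbiased bound in the source is proved by a separate direct adversary argument (this is Theorem 3.17 there), not by reduction from the VSTAT bound nor by mutual-information accounting, and your sketch of that part does not go through as written.
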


We remark that the number of queries to an oracle is a lower bound on the runtime of the statistical algorithm in question. Furthermore, the number of ``samples'' $m$ corresponding to a $VSTAT(t)$ oracle is $t$, as this is the number needed to approximately obtain the confidence interval of width $2\tau$ in the definition of the $VSTAT$ oracle above.  

\paragraph{SQ Lower Bounds for Distributional $k\pr{-pc}$.} We now will use the theorem above to deduce SQ lower bounds for distributional $k\pr{-pc}$. Let $\cS$ be the set of all $k$-subsets of $[n]$ respecting the partition $E$ i.e. $\cS = \{S:|S|=k\text{ and } |S\cap E_i|=1\text{ for }i\in [k]\}$. Note that $|\cS| = (n/k)^k$. We henceforth use $D$ to denote the uniform distribution on $\{0,1\}^n$. The following lemma is as in \cite{feldman2013statistical}, except that we further restrict $S$ and $T$ to be in $\cS$ rather than arbitrary size $k$ subsets of $[n]$, which does not change the bound.

\begin{lemma}[Lemma 5.1 in \cite{feldman2013statistical}]\label{l:avgCorr}
For $S,T\in \cS$, $\chi_D(D_S,D_T) = |\la \Dh_S, \Dh_T\ra_D|\leq 2^{|S\cap T|} k^2 / n^2$.  	
\end{lemma}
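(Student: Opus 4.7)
The plan is to compute $\chi_D(D_S,D_T)$ in closed form by expanding the likelihood ratio $\hat D_S = D_S/D - 1$ explicitly, and then verify that the resulting expression is at most $(k/n)^2\cdot 2^{|S\cap T|}$.

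First I would write the density of $D_S$ with respect to the uniform distribution $D$ on $\{0,1\}^n$. Since $D_S$ is a mixture with weight $1-k/n$ on $D$ and weight $k/n$ on the distribution that fixes $X_i=1$ for all $i\in S$ and is uniform otherwise, one immediately gets
\[
\frac{D_S(x)}{D(x)} \;=\; (1-k/n) + (k/n)\cdot 2^{k}\cdot \mathbf{1}\{x_S=\mathbf{1}\},
\]
so that $\hat D_S(x) = (k/n)\bigl(2^{k}\mathbf{1}\{x_S=\mathbf{1}\}-1\bigr)$. A symmetric formula holds for $\hat D_T$.

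Next I would expand the inner product
\[
\langle \hat D_S,\hat D_T\rangle_D \;=\; (k/n)^2\,\E_{X\sim D}\bigl[(2^{k}\mathbf{1}\{X_S=\mathbf{1}\}-1)(2^{k}\mathbf{1}\{X_T=\mathbf{1}\}-1)\bigr]
\]
into four terms. Using the fact that under the uniform distribution $D$ the coordinates are independent unbiased bits, one has $\E[\mathbf{1}\{X_S=\mathbf{1}\}]=2^{-k}$, $\E[\mathbf{1}\{X_T=\mathbf{1}\}]=2^{-k}$, and most importantly
\[
\E[\mathbf{1}\{X_S=\mathbf{1}\}\mathbf{1}\{X_T=\mathbf{1}\}] \;=\; \E[\mathbf{1}\{X_{S\cup T}=\mathbf{1}\}] \;=\; 2^{-|S\cup T|} \;=\; 2^{-(2k-|S\cap T|)}.
\]
Plugging in, three of the four terms collapse and what remains is
\[
\langle \hat D_S,\hat D_T\rangle_D \;=\; (k/n)^2\bigl(2^{|S\cap T|}-1\bigr).
\]

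Finally I would observe that this quantity is non-negative, so the absolute value sign in the definition of $\chi_D$ is redundant, and the bound $2^{|S\cap T|}-1\le 2^{|S\cap T|}$ yields the claim. There is no real obstacle here: the only subtlety is making sure the restriction $S,T\in\cS$ (one coordinate per part of $E$) is used only in the sense that $|S|=|T|=k$, which is what the calculation requires, so the proof is essentially identical to Lemma~5.1 of \cite{feldman2013statistical}.
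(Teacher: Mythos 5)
Your computation is correct and is exactly the standard argument from Lemma 5.1 of \cite{feldman2013statistical}, which the paper cites without reproving: write $\hat D_S(x)=(k/n)(2^k\mathbf{1}\{x_S=\mathbf{1}\}-1)$, expand the inner product under the uniform $D$ to get $\langle\hat D_S,\hat D_T\rangle_D=(k/n)^2(2^{|S\cap T|}-1)\ge 0$, and drop the $-1$. You also correctly identify the only point the paper flags, namely that restricting to $S,T\in\cS$ only enters through $|S|=|T|=k$ and leaves the bound unchanged.
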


The following lemma is crucial to deriving the SQ dimension of distributional $k\pr{-pc}$ and is similar to Lemma 5.2 in \cite{feldman2013statistical}. Its proof is deferred to Appendix \ref{sec:appendix-4}.

\begin{lemma}[Modification of Lemma 5.2 in \cite{feldman2013statistical}]\label{l:avgCorrLargeSets}
	Let $\delta \geq 1/\log n$ and $k\leq n^{1/2 - \delta}$. For any integer $\ell \leq k$, $S\in \cS$, and set $A\subseteq \cS$ with $|A|\geq 2|\cS|/ n^{2\ell \delta}$, 
	$$
	\frac1{|A|} \sum_{T\in A} \big| \la \Dh_S, \Dh_T\ra_D \big| \leq 2^{\ell + 3}\frac{k^2}{n^2}\,.
	$$
\end{lemma}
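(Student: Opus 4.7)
The plan is to follow the same skeleton as in Lemma 5.2 of \cite{feldman2013statistical}, replacing the hypergeometric intersection counts for arbitrary $k$-subsets with the product counts that arise for the $k$-partite family $\cS$. By Lemma \ref{l:avgCorr}, $|\langle \hat{D}_S,\hat{D}_T\rangle_D| \leq 2^{|S\cap T|}k^2/n^2$, so it suffices to bound $\frac{1}{|A|}\sum_{T\in A} 2^{|S\cap T|}$ by $2^{\ell+3}$.

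First, I would partition $A$ by intersection size with $S$: write $A_j = \{T\in A : |S\cap T|=j\}$ so that $\sum_{T\in A} 2^{|S\cap T|} = \sum_{j=0}^{k} 2^j |A_j|$. For a fixed $S\in\cS$, the number of $T\in \cS$ with $|S\cap T|=j$ equals $\binom{k}{j}(n/k-1)^{k-j}$, since one chooses which $j$ parts of $E$ to agree with $S$ on and then picks any non-$S$-vertex in each of the remaining parts. Dividing by $|\cS|=(n/k)^k$, this means $|A_j|/|\cS| \leq p_j$, where $p_j = \binom{k}{j}(k/n)^j(1-k/n)^{k-j}$ is the PMF of $\mathrm{Bin}(k,k/n)$ --- the $k$-partite analogue of the hypergeometric bound used in \cite{feldman2013statistical}.

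I would then bound the sum in two pieces. For $j\leq \ell$, use the trivial bound $|A_j|\leq |A|$ to get $\sum_{j\leq\ell} 2^j|A_j| \leq |A|\cdot 2^{\ell+1}$. For $j\geq \ell+1$, use $|A_j|\leq |\cS|\,p_j$ together with $p_j\leq \binom{k}{j}(k/n)^j \leq (k^2/n)^j/j!$ and the hypothesis $k\leq n^{1/2-\delta}$ (so $k^2/n\leq n^{-2\delta}$) to obtain
\begin{equation*}
\sum_{j\geq \ell+1} 2^j |A_j| \;\leq\; |\cS|\sum_{j\geq \ell+1} \frac{(2n^{-2\delta})^j}{j!} \;\leq\; |\cS|\cdot \frac{e\,(2 n^{-2\delta})^{\ell+1}}{(\ell+1)!},
\end{equation*}
using $2n^{-2\delta}\leq 1$ to bound the tail of the exponential series by $e$ times its first term. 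Dividing by $|A|$ and invoking the hypotheses $|A|\geq 2|\cS|/n^{2\ell\delta}$ and $\delta\geq 1/\log n$ (so $n^{-2\delta}\leq e^{-2}$) gives $\frac{1}{|A|}\sum_{j\geq \ell+1}2^j|A_j| \leq \frac{e\cdot 2^{\ell+1}}{2(\ell+1)!}\,n^{-2\delta} \leq 2^{\ell+1}$, with room to spare. Combining the two pieces yields $\frac{1}{|A|}\sum_{T\in A} 2^{|S\cap T|} \leq 2^{\ell+1}+2^{\ell+1} = 2^{\ell+2}\leq 2^{\ell+3}$, and multiplying by $k^2/n^2$ from Lemma \ref{l:avgCorr} completes the proof.

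The computation is entirely routine once the $k$-partite count $\binom{k}{j}(n/k-1)^{k-j}$ is in hand; the main point to check is that the resulting $p_j$ still decays like $(k^2/n)^j$ (up to an inconsequential $j!$ factor), so that the hypothesis $|A|\geq 2|\cS|/n^{2\ell\delta}$ exactly compensates the tail of $\sum_{j\geq \ell+1}2^j p_j$. The only delicate step is tracking constants carefully enough to land at $2^{\ell+3}$ rather than a larger prefactor; this is why the proof needs $n^{-2\delta}\leq e^{-2}$ in order to absorb the $e/(\ell+1)!$ overhead into the base $2$ exponent.
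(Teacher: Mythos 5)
Your argument is correct and uses the same fundamental mechanism as the paper's proof: reduce via Lemma~\ref{l:avgCorr} to bounding $\frac{1}{|A|}\sum_{T\in A} 2^{|S\cap T|}$, partition by intersection size $j=|S\cap T|$, use the $k$-partite count $\binom{k}{j}(n/k-1)^{k-j}$ to show the layer sizes decay like $|\cS|\,n^{-2\delta j}/j!$, and invoke the lower bound on $|A|$ to control the tail. The paper organizes this by first arguing WLOG that $A$ is the greedy union of the highest-intersection layers (which forces the minimum layer index $\lambda_0$ to satisfy $\lambda_0\le\ell$) and then summing a geometric series, whereas you skip the greedy reduction and split directly at $j=\ell$, handling $j\le\ell$ with the trivial bound $2^j\le 2^\ell$ and $j>\ell$ with the binomial tail $\sum_{j>\ell}(2n^{-2\delta})^j/j!\le e\,(2n^{-2\delta})^{\ell+1}/(\ell+1)!$. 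Both routes land at $2^{\ell+3}$; yours is a modest streamlining of the same idea that avoids the extra step of arguing the greedy $A$ is the worst case.
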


This lemma now implies the following SQ dimension lower bound for distributional $k\pr{-pc}$.

\begin{theorem}[Analogue of Theorem 5.3 of \cite{feldman2013statistical}]\label{t:SQdim}
	For $\delta\geq 1/\log n$ and $k\leq n^{1/2-\delta}$ let $\cZ$ denote the distributional bipartite {\kpc } problem. If $\ell\leq k$ then $SDA(\cZ, 2^{\ell+3} k^2/n^2,\big(\frac nk\big) ^{-k} ) \geq n^{2\ell \delta}/8$. 
\end{theorem}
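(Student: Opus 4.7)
The plan is to mimic the proof of Theorem 5.3 of \cite{feldman2013statistical}, but instantiated with the $k$-partite clique family $\mathcal{S}=\{S\subseteq [n]:|S|=k,\ |S\cap E_i|=1\ \forall i\}$ rather than arbitrary $k$-subsets of $[n]$. Choose the reference distribution $D$ to be uniform on $\{0,1\}^n$ and the candidate set $\mathcal{D}_D=\{D_S:S\in\mathcal{S}\}$, so $|\mathcal{D}_D|=(n/k)^k$. Since $\mathcal{Z}^{-1}(f)=\{D_f\}$ for each solution $f=S\in\mathcal{S}$, the set $\mathcal{D}_f=\mathcal{D}_D\setminus\{D_f\}$ has size $(n/k)^k-1=(1-(n/k)^{-k})|\mathcal{D}_D|$, which verifies the solution-set bound with $\eta=(n/k)^{-k}$.

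Next I would verify the average-correlation condition with $\gamma=2^{\ell+3}k^2/n^2$ and $\ell^{\ast}=n^{2\ell\delta}/8$. Let $\mathcal{D}'=\{D_T:T\in A\}\subseteq\mathcal{D}_f$ be any subset with $|\mathcal{D}'|\geq|\mathcal{D}_f|/\ell^{\ast}$. Because $|\mathcal{S}|\geq 2$ we have $|\mathcal{D}_f|=|\mathcal{S}|-1\geq|\mathcal{S}|/2$, so
\[
|A|\geq\frac{|\mathcal{S}|-1}{\ell^{\ast}}\geq\frac{|\mathcal{S}|}{2\ell^{\ast}}=\frac{4|\mathcal{S}|}{n^{2\ell\delta}}\geq\frac{2|\mathcal{S}|}{n^{2\ell\delta}},
\]
which places $A$ in the hypothesis of Lemma 4.13. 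Applying that lemma with each fixed $S_1\in A$ gives
\[
\frac{1}{|A|}\sum_{S_2\in A}\big|\langle\hat D_{S_1},\hat D_{S_2}\rangle_D\big|\leq 2^{\ell+3}\frac{k^2}{n^2},
\]
and averaging over $S_1\in A$ yields $\rho(\mathcal{D}',D)\leq 2^{\ell+3}k^2/n^2$. (If strict inequality is required in Definition 4.12, I would absorb the factor of $2$ slack from the step $|A|\geq 4|\mathcal{S}|/n^{2\ell\delta}$ rather than the threshold $2|\mathcal{S}|/n^{2\ell\delta}$, which gives a strict bound after tracing through Lemma 4.13.) This establishes $\ell(D,\mathcal{D}_D)\geq\ell^{\ast}$ and hence $\mathrm{SDA}(\mathcal{Z},\gamma,\eta)\geq n^{2\ell\delta}/8$.

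The only nontrivial piece is verifying the hypothesis $|A|\geq 2|\mathcal{S}|/n^{2\ell\delta}$ of Lemma 4.13 cleanly, which is essentially arithmetic once the correct $\eta$ is chosen. The substantive combinatorial content, namely controlling the average pairwise correlation $\sum_{T\in A}2^{|S\cap T|}$ when $A$ is large relative to $\mathcal{S}$, is entirely isolated in Lemma 4.13 and Lemma 4.11, so the present theorem becomes a short bookkeeping argument. In particular, I do not expect any real obstacle here beyond checking that the $k$-partite restriction does not degrade any constants; the only change from the Feldman--Grigorescu--Reyzin--Vempala--Xiao analysis is that $|\mathcal{S}|=(n/k)^k$ rather than $\binom{n}{k}$, and the bound $\eta=(n/k)^{-k}$ is adjusted accordingly.
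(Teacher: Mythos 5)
Your proposal is correct and follows essentially the same route as the paper's (very short) proof: instantiate $\mathcal{D}_D$ with the $k$-partite family, read off $\eta=(n/k)^{-k}$, and invoke the average-correlation lemma after checking the cardinality hypothesis. You supply the arithmetic check that $|\mathcal{D}_f|/\ell^\ast\geq 2|\mathcal{S}|/n^{2\ell\delta}$, which the paper leaves implicit, and your observation about where to absorb slack if strict inequality were needed is a sound (if unnecessary) refinement.
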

\begin{proof}
For each clique position $S$ let $\cD_S = \cD\setminus\{D_S\}$. Then $|\cD_S| = \big(\frac nk\big) ^k -1=\big(1-\big(\frac nk\big) ^{-k}\big)|\cD|$. Now for any $\cD'$ with $|\cD'|\geq 2|\cS|/ n^{2\ell \delta}$ we can apply Lemma~\ref{l:avgCorrLargeSets} to conclude that $\rho(\cD',D)\leq 2^{\ell + 3}k^2/n^2$. By Definition~\ref{d:statDimProblem} of statistical dimension this implies the bound stated in the theorem. 
\end{proof}

Applying Theorem~\ref{t:sampleBound} to this statistical dimension lower bound yields the following hardness for statistical query algorithms.

\begin{corollary}[SQ Lower Bound for Recovery in Distributional \kpc]
For any constant $\delta>0$ and $k\leq n^{1/2-\delta}$, any SQ algorithm that solves the distributional bipartite {\kpc } problem requires $\Omega(n^2/k^2\log n)=\tilde \Omega(n^{1+2\delta})$ queries to the Unbiased Oracle.
\end{corollary}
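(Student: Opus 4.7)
The plan is to derive the corollary by combining the statistical dimension bound in Theorem \ref{t:SQdim} with the Unbiased Oracle lower bound from Theorem \ref{t:sampleBound}, selecting the free parameter $\ell$ to balance the two terms in the resulting lower bound. Theorem \ref{t:SQdim} gives us, for every positive integer $\ell \le k$, that
\[ \mathrm{SDA}\!\left(\cZ, \, 2^{\ell+3} k^2 / n^2, \, (n/k)^{-k}\right) \, \ge \, n^{2\ell \delta}/8. \]
Since $k\le n^{1/2-\delta}$ and $\delta$ is a positive constant, the quantity $\eta = (n/k)^{-k}$ is at most $1/6$ for all sufficiently large $n$, so the second half of Theorem \ref{t:sampleBound} applies with success probability threshold $2/3$. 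Substituting $\gamma = 2^{\ell+3} k^2 / n^2$ and $L = n^{2\ell\delta}/8$ into that theorem yields a lower bound on the number of Unbiased Oracle queries of order
\[ \Omega\!\left( \min\!\left\{ n^{2\ell\delta}, \; \frac{n^2}{2^{\ell} k^2} \right\} \right). \]

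The two expressions in the min are monotone in opposite directions in $\ell$, so I would select $\ell^\star = \lceil 1/(2\delta) \rceil + 1$, which is a bounded positive integer depending only on the constant $\delta$ (and hence satisfies $\ell^\star \le k$ once $n$ is large). At this choice, $2\ell^\star \delta \ge 1 + 2\delta$, so the first term is at least $n^{1+2\delta}$. Simultaneously $2^{\ell^\star}$ is a constant, so the second term is $\Theta(n^2/k^2) \ge n^{1+2\delta}$ by the hypothesis $k \le n^{1/2-\delta}$. Combining the two gives the query lower bound $\tilde \Omega(n^{1+2\delta})$, and since this is equivalent to $\Omega(n^2/(k^2\log n))$ up to the logarithmic factor absorbed by the tilde notation, this establishes the corollary.

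The calculation above is essentially the only content of the proof, so I do not expect any genuine obstacle; the only care required is (i) verifying that the chosen $\ell^\star$ is a valid integer parameter bounded by $k$ (which uses $k = \omega(1)$ in the regime of interest, and can otherwise be handled by noting that the problem is trivial for bounded $k$), and (ii) confirming that $\eta = (n/k)^{-k} \le 1/6$ so that the unconditional form of the Unbiased Oracle bound from Theorem \ref{t:sampleBound} applies. Both are immediate under the stated assumptions $\delta = \Theta(1)$ and $k\le n^{1/2-\delta}$.
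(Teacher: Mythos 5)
Your high-level strategy is the correct one and is the same one the paper has in mind (the paper simply defers to the analogous computation in \cite{feldman2013statistical}): feed the statistical-dimension bound from Theorem~\ref{t:SQdim} into the Unbiased Oracle portion of Theorem~\ref{t:sampleBound} and optimize the integer parameter $\ell$. The arithmetic for your specific choice $\ell^\star = \lceil 1/(2\delta)\rceil + 1$ is correct, and it does yield the query lower bound $\Omega(n^{1+2\delta})$, which suffices for the ``impossibility'' remark that follows the corollary.

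However, your closing sentence — that $\Omega(n^{1+2\delta})$ is ``equivalent to $\Omega(n^2/(k^2\log n))$ up to the logarithmic factor absorbed by the tilde notation'' — does not hold across the full range $k \le n^{1/2-\delta}$: when $k$ is well below $n^{1/2-\delta}$ (e.g.\ $k = \sqrt{\log n}$), one has $n^2/(k^2\log n) = n^2/\log^2 n$, which dwarfs $n^{1+2\delta}$. So your particular $\ell^\star$ establishes a strictly weaker bound than the one stated. The fix is to instead take $\ell^\star = \lceil 1/\delta\rceil$: then $2\ell^\star\delta \ge 2$, so the statistical-dimension term $n^{2\ell^\star\delta}/32 \ge n^2/32$ is never the bottleneck, and $\min\{\,n^{2\ell^\star\delta}/32,\ n^2/(384\cdot 2^{\ell^\star}k^2)\,\} = \Theta_\delta(n^2/k^2)$, since $2^{\ell^\star}$ is a constant depending only on $\delta$. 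In fact this even sharpens away the $\log n$; that loss comes from \cite{feldman2013statistical}, where $\delta$ is permitted to be as small as $1/\log n$ (so a constant $\ell$ is unavailable), and it is not needed here where $\delta$ is a fixed constant. Your check that $\eta = (n/k)^{-k} \le 1/6$ for large $n$ is fine, and the caveat that $\ell^\star \le k$ must hold is correctly flagged.
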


This is to be interpreted as impossible, as there are only $n$ right vertices vertices available in the actual bipartite graph. Because all the quantities in Theorem~\ref{t:SQdim} are the same as in \cite{feldman2013statistical} up to constants, the same logic as used there allows to deduce a statement regarding the hypothesis testing version, stated there as Theorems 2.9 and 2.10.

\begin{corollary}[SQ Lower Bound for Decision Variant of Distributional \kpc]
For any constant $\delta>0$, suppose $k\leq n^{1/2-\delta}$. Let $D=\mathrm{Unif}(\{0,1\}^n)$ and let $\cD$ be the set of all planted bipartite {\kpc } distributions (one for each clique position). Any SQ algorithm that solves the hypothesis testing problem between $\cD$ and $D$ with probability better than $2/3$ requires $\Omega(n^2/k^2)$ queries to the Unbiased Oracle.

A similar statement holds for VSTAT. There is a $t = n^{\Omega(\log n)}$ such that any randomized SQ algorithm that solves the hypothesis testing problem between $\cD$ and $D$ with probability better than $2/3$ requires at least $t$ queries to $VSTAT(n^{2-\delta}/k^2)$. 
\end{corollary}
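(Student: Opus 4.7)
The plan is to deduce both parts directly from the statistical dimension bound in Theorem \ref{t:SQdim} together with the hypothesis testing analogue of Theorem \ref{t:sampleBound}, which in \cite{feldman2013statistical} is captured by their Theorems 2.9 and 2.10. The key observation is that our distributional bipartite \kpc\ problem satisfies the \emph{same} SDA bound (up to universal constants) as the ordinary planted clique problem analyzed there, so the entire chain of reasoning from SDA bounds to SQ query lower bounds for the decision variant applies with only notational changes. In particular, the average correlation bound in Lemma~\ref{l:avgCorrLargeSets} is stated uniformly over arbitrary large subsets $A \subseteq \cS$ and so survives the passage from a recovery formulation (where $A$ is required to avoid a particular planted $S$) to a decision formulation (where no such restriction is present).

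First I would handle the unbiased oracle bound by choosing $\ell = 1$ in Theorem~\ref{t:SQdim}. This yields an SDA lower bound of $n^{2\delta}/8$ with average correlation parameter $\gamma = 16\,k^2/n^2$, and since $k = o(\sqrt{n})$ we have $n^{2\delta}/8 \gg 1/\gamma$. The decision analogue of the unbiased oracle statement in Theorem~\ref{t:sampleBound} then forces at least $\min\{\mathrm{SDA}/4,\; 1/(48\gamma)\} = \Omega(n^2/k^2)$ queries, which gives the first part of the corollary.

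Next I would handle the VSTAT bound by instead taking $\ell = c\,\delta \log_2 n$ for a sufficiently small constant $c = c(\delta) > 0$. With this choice $\gamma = 2^{\ell+3} k^2/n^2 \leq k^2 / (3 n^{2-\delta})$, so $VSTAT(1/(3\gamma))$ becomes the oracle $VSTAT(n^{2-\delta}/k^2)$ appearing in the corollary. The SDA lower bound of Theorem~\ref{t:SQdim} evaluates to $n^{2\ell\delta}/8 = n^{\Omega(\log n)}$, and the decision analogue of the first half of Theorem~\ref{t:sampleBound} converts this into the claimed lower bound of $t = n^{\Omega(\log n)}$ on the number of queries to $VSTAT(n^{2-\delta}/k^2)$.

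The main subtlety, and the one obstacle worth flagging, lies in making the transition from the search formulation of Theorem~\ref{t:sampleBound} to its hypothesis testing analogue fully rigorous at our parameter regime. In \cite{feldman2013statistical} this is done by viewing the decision task as a restricted search problem with two ``solutions'' (planted vs.\ unplanted), showing that average correlation bounds over arbitrary large subfamilies of $\cD$ control the corresponding SDA, and checking that the $\eta$ parameter can be taken to be negligibly small in the decision setting since there is effectively only one invalid solution per distribution. Our setup satisfies each of these ingredients: Lemma~\ref{l:avgCorrLargeSets} gives the required unrestricted average correlation bound, and $\eta = (n/k)^{-k}$ from Theorem~\ref{t:SQdim} is far smaller than any success probability threshold of interest. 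Once these checks are assembled, the quantitative bounds plug in as described above, completing the proof.
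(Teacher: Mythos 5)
Your overall strategy matches the paper's: invoke the statistical dimension bound of Theorem~\ref{t:SQdim} (which the paper has just shown matches the ordinary planted clique bound of Feldman et al.\ up to constants) and push it through the decision-variant machinery of \cite{feldman2013statistical}, Theorems 2.9 and 2.10. The paper's own proof is just the single remark ``Because all the quantities in Theorem~\ref{t:SQdim} are the same as in \cite{feldman2013statistical} up to constants, the same logic as used there allows to deduce a statement regarding the hypothesis testing version,'' so your attempt to spell out concrete parameter choices and flag the search-to-decision subtlety is a reasonable elaboration. Your VSTAT calculation is correct: with $\ell = c\delta\log_2 n$ for a small enough constant $c = c(\delta)$, the average-correlation parameter $\gamma = 2^{\ell+3}k^2/n^2$ is at most $k^2/(3n^{2-\delta})$ for large $n$, and the SDA lower bound $n^{2\ell\delta}/8 = n^{\Omega(\log n)}$ gives the claimed $t$.

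However, your unbiased-oracle parameter choice is wrong, and the inequality you assert to justify it is false. With $\ell = 1$, Theorem~\ref{t:SQdim} gives $\mathrm{SDA} \geq n^{2\delta}/8$ and $\gamma = 16k^2/n^2$, so $1/\gamma = n^2/(16k^2) \geq n^{1+2\delta}/16$ under $k \leq n^{1/2-\delta}$. Thus $n^{2\delta}/8 \ll 1/\gamma$, not $\gg$ as you claim. Consequently, $\min\{\mathrm{SDA}/4,\, 1/(48\gamma)\}$ is dominated by the SDA term, and your argument only yields $\Omega(n^{2\delta})$ queries, far short of the target $\Omega(n^2/k^2) \geq \Omega(n^{1+2\delta})$. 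The correct choice is $\ell = \Theta(1/\delta)$, e.g.\ $\ell = \lceil 1/\delta \rceil$ (a constant depending only on $\delta$, valid once $k \geq \lceil 1/\delta\rceil$): then $\mathrm{SDA} \geq n^{2\ell\delta}/8 \geq n^2/8 \geq n^2/(8k^2)$, while $\gamma = 2^{\ell+3}k^2/n^2 = O_\delta(k^2/n^2)$ so that $1/(48\gamma) = \Omega_\delta(n^2/k^2)$, and both terms in the min are $\Omega_\delta(n^2/k^2)$ as required.
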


We conclude this section by outlining how to extend these lower bounds to distributional versions of $k\pr{-bpc}$ and $\pr{bpc}$ and why the statistical query model is not suitable to deduce hardness of problems that are implicitly tensor or hypergraph problems such as $k\pr{-hpc}$.

\paragraph{Extending these SQ Lower Bounds.} Extending to the bipartite case is straightforward and follows by replacing the probability of including each right vertex from $k/n$ to $k_m/m$ where $k_m = O(m^{1/2 - \delta})$. This causes the upper bound in Lemma \ref{l:avgCorr} to become $\chi_D(D_S,D_T) = |\la \Dh_S, \Dh_T\ra_D|\leq 2^{|S\cap T|} k_m^2 / m^2$. Similarly, the upper bound in Lemma \ref{l:avgCorrLargeSets} becomes $2^{\ell + 3} k_m^2/m^2$, the relevant statistical dimension becomes $SDA(\cZ, 2^{\ell+3} k_m^2/n_m^2,\big(\frac nk\big) ^{-k} ) \geq n^{2\ell \delta}/8$ and the query lower bound in the final corollary becomes $\Omega(m^2/k_m^2 \log n) = \tilde{\Omega}(m^{1 + 2\delta})$ which yields the desired lower bound for $k\pr{-bpds}$. The lower bound for $\pr{bpds}$ follows by the same extension to the ordinary $\pr{pc}$ lower bound in \cite{feldman2013statistical}.

\paragraph{Hypergraph \pr{pc} and SQ Lower Bounds.} A key component of formulating SQ lower bounds is devising a distributional version of the problem with analogous limits in the SQ model. While there was a natural bipartite extension for $\pr{pc}$, for hypergraph \pr{pc}, such an extension does not seem to exist. Treating slices as individual samples yields a problem with statistical query algorithms that can detect a planted clique outside of polynomial time. Consider the function that given a slice, searches for a clique of size $k$ in the induced $(s - 1)$-uniform hypergraph on the neighbors of the vertex corresponding to the slice, outputting $1$ if such a clique is found. Without a planted clique, the probability a slice contains such a clique is exponentially small, while it is $k/n$ if there is a planted clique. An alternative is to consider individual entries as samples, but this discards the hypergraph structure of the problem entirely.

\section{Robustness, Negative Sparse PCA and Supervised Problems}
\label{sec:3-robust-and-supervised}

In this section, we apply reductions in Part \ref{part:reductions} to deduce computational lower bounds for robust sparse mean estimation, negative sparse PCA, mixtures of SLRs and robust SLR that follow from specific instantiations of the $\pr{pc}_\rho$ conjecture. Specifically, we apply the reduction $k\pr{-bpds-to-isgm}$ to deduce a lower bound for $\pr{rsme}$, the reduction $\pr{bpds-to-neg-spca}$ to deduce a lower bound for $\pr{neg-spca}$ and the reduction $k\pr{-bpds-to-mslr}$ to deduce lower bounds for $\pr{mslr}$, $\pr{uslr}$ and $\pr{rslr}$. This section is primarily devoted to summarizing the implications of these reductions and making explicit how their input parameters need to be set to deduce our lower bounds. The implications of these lower bounds and the relation between them and algorithms was previously discussed in Section \ref{sec:1-problems}. In cases where the discussion in Section \ref{sec:1-problems} was not exhaustive, such as the details of starting with different hardness assumptions, the number theoretic condition $\pr{(t)}$ or the adversary implied by our reductions for $\pr{rslr}$, we include omitted details in this section.

All lower bounds that will be shown in this section are \emph{computational lower bounds} in the sense introduced in the beginning of Section \ref{sec:1-problems}. To deduce our computational lower bounds from reductions, it suffices to verify the three criteria in Condition \ref{cond:lb}. We remark that this section is technical due to the number-theoretic constraints imposed by the prime number $r$ in our reductions. However, these technical details are tangential to the primary focus of the paper, which is reduction techniques.



\subsection{Robust Sparse Mean Estimation}
\label{subsec:3-rsme}

We first observe that the instances of $\pr{isgm}$ output by the reduction $k\pr{-bpds-to-isgm}$ are instances of $\pr{rsme}$ in Huber's contamination model. Let $r$ be a prime number and $\epsilon \ge 1/r$. It then follows that a sample from $\pr{isgm}_D(n, k, d, \mu, 1/r)$ is of the form
$$\pr{mix}_{\epsilon}\left( \mN(\mu \cdot \mathbf{1}_S, I_d), \mD_O \right)^{\otimes n} \quad \text{where} \quad \mD_O = \pr{mix}_{\epsilon^{-1} r^{-1}} \left( \mN(\mu \cdot \mathbf{1}_S, I_d), \mN(\mu' \cdot \mathbf{1}_S, I_d) \right)$$
for some possibly random $S$ with $|S| = k$ and where $(1 - r^{-1}) \mu + r^{-1} \cdot \mu' = 0$. Note that this is a distribution in the composite hypothesis $H_1$ of $\pr{rsme}(n, k, d, \tau, \epsilon)$ in Huber's contamination model with outlier distribution $\mD_O$ and where $\tau = \| \mu \cdot \mathbf{1}_S \|_2 = \mu \sqrt{k}$. This observation and the discussion in Section \ref{subsec:2-tvreductions} yields that it suffices to exhibit a reduction to $\pr{isgm}$ to show the lower bound for $\pr{rsme}$ in Theorem \ref{thm:rsme-lb}.

We now discuss the condition $\pr{(t)}$ and the number-theoretic constraint arising from applying Theorem \ref{thm:isgmreduction} to prove Theorem \ref{thm:rsme-lb}. As mentioned in Section \ref{subsec:1-problems-rsme}, while this condition does not restrict our computational lower bound for $\pr{rsme}$ in the main regime of interest where $\epsilon^{-1} = n^{o(1)}$, it also can be removed using the design matrices $R_{n, \epsilon}$ in place of $K_{r, t}$. Despite this, we introduce the condition $\pr{(t)}$ in this section as it will be a necessary condition in subsequent lower bounds in Part \ref{part:lower-bounds}.

As discussed in Section \ref{sec:2-supervised}, the prime power $r^t$ in $k\pr{-bpds-to-isgm}$ is intended to be a fairly close approximation to each of $k_n, \sqrt{n}$ and $\sqrt{N}$. We will now see that in order to show tight computational lower bounds for $\pr{rsme}$, this approximation needs to be very close to asymptotically exact, leading to the technical condition $\pr{(t)}$. First note that the level of signal $\mu$ produced by the reduction $k\pr{-bpds-to-isgm}$ is 
$$\mu \le \frac{\delta}{2 \sqrt{6\log (k_nmr^t) + 2\log (p - q)^{-1}}} \cdot \frac{1}{\sqrt{r^t(r - 1)(1 + (r - 1)^{-1})}} = \tilde{\Theta}\left( r^{-(t + 1)/2} \right)$$
where $\delta = \Theta(1)$ and the estimate above holds whenever $p$ and $q$ are constants. Therefore the corresponding $\tau$ is given by $\tau = \mu \sqrt{k} = \tilde{O}(k^{1/2} r^{-(t + 1)/2})$. Furthermore, in Theorem \ref{thm:isgmreduction}, the output number of samples $N$ is constrained to satisfy that $N = o(k_n r^t)$ and $n = O(k_n r^t)$. Combining this with the fact that in order to be starting with a hard $k\pr{-bpds}$ instance, we need $k_n = o(\sqrt{n})$ to hold, it is straightforward to see that these constraints together require that $N = o(r^{2t})$. If this is close to tight with $N = \tilde{\Theta}(r^{2t})$, the computational lower bound condition on $\tau$ becomes
$$\tau = \tilde{O}\left(k^{1/2} r^{-(t + 1)/2}\right) = \tilde{\Theta}\left(k^{1/2} \epsilon^{1/2} N^{-1/4} \right)$$
where we also use the fact that $\epsilon = \Theta(1/r)$. Note that this corresponds exactly to the desired computational lower bound of $N = \tilde{o}(k^2 \epsilon^2/\tau^4)$. Furthermore, if instead $N = \tilde{\Theta}(a^{-1}r^{2t})$ for some $a = \omega(1)$, then the lower bound we show degrades to $N = \tilde{o}(k^2 \epsilon^2/a\tau^4)$, and is suboptimal by a factor $a = \omega(1)$. Thus ideally we would like the pair of parameters $(N, r)$ to be such that there infinitely many $N$ with something like $N = \tilde{\Theta}(r^{2t})$ true for some positive integer $t \in \mathbb{N}$. This leads exactly to the condition $\pr{(t)}$ below.

\begin{definition}[Condition $\pr{(t)}$]
Suppose that $(N, r)$ is a pair of parameters with $N \in \mathbb{N}$ and $r = r(N)$ is non-decreasing. The pair $(N, r)$ satisfies $\pr{(t)}$ if either $r = N^{o(1)}$ as $N \to \infty$ or if $r = \tilde{\Theta}(N^{1/t})$ where $t \in \mathbb{N}$ is a constant even integer.
\end{definition}

The key property arising from condition $\pr{(t)}$ is captured in the following lemma.

\begin{lemma}[Property of $\pr{(t)}$] \label{lem:propT}
Suppose that $(N, r)$ satisfies $\pr{(t)}$ and let $r' = r'(N)$ be any non-decreasing positive integer parameter satisfying that $r' = \tilde{\Theta}(r)$. Then there are infinitely many values of $N$ with the following property: there exists $s \in \mathbb{N}$ such that $\sqrt{N} = \tilde{\Theta}\left( (r')^s \right)$.
\end{lemma}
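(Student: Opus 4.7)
The plan is to split according to the two cases in the definition of condition $\pr{(t)}$. Case 1 is $r = \tilde{\Theta}(N^{1/t})$ with $t$ an even positive integer; this case is immediate, since setting $s = t/2$ gives $(r')^s = \tilde{\Theta}(r^{t/2}) = \tilde{\Theta}(N^{1/2}) = \tilde{\Theta}(\sqrt{N})$ for every $N$, so the property holds for all $N$ (certainly infinitely many). Case 2 is $r = N^{o(1)}$, which contains the real content.

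For Case 2, I would construct an explicit sequence $\{N_s\}_{s \ge 1}$ of witness values of $N$, where $s$ plays the role of the integer exponent. Specifically, I would define $N_s = \min\{N \in \mathbb{N} : (r'(N))^s \le \sqrt{N}\}$. Since $r' = \tilde{\Theta}(r) = N^{o(1)}$ while $\sqrt{N}$ grows polynomially, $N_s$ is well-defined for every $s \ge 1$. At $N = N_s$, minimality gives $(r'(N_s - 1))^s > \sqrt{N_s - 1}$, and the non-decreasing assumption on $r'$ upgrades this to $(r'(N_s))^s > \sqrt{N_s - 1}$. Combined with the defining inequality $(r'(N_s))^s \le \sqrt{N_s}$, this sandwiches $(r'(N_s))^s$ between $\sqrt{N_s - 1}$ and $\sqrt{N_s}$, which differ by $O(1/\sqrt{N_s})$. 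Hence $\sqrt{N_s} = (r'(N_s))^s + O(1) = \Theta((r'(N_s))^s)$, which is even tighter than the required $\tilde{\Theta}$. Finally, $N_s \to \infty$ as $s \to \infty$ (since $r'(N) \ge 2$ for $N$ large enough in the main regime, giving $N_s \ge 4^s$), so $\{N_s\}$ contains infinitely many distinct values, each with the desired property.

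The main obstacle I foresee is handling the degenerate subcase of Case 2 in which $r$, and hence $r'$, grows too slowly for $r'(N) \ge 2$ to hold eventually. If $r$ is bounded by a constant, then since $r' = \tilde{\Theta}(r)$ is itself a non-decreasing positive integer, it is eventually constant equal to some $\rho$; as long as $\rho \ge 2$, one can simply take, for each large $N$, the integer $s = \lfloor \log \sqrt{N}/\log \rho \rfloor$, which satisfies $\sqrt{N}/(r')^s \in [1, \rho) = O(1)$, so every such $N$ works. The degenerate edge case $\rho = 1$ does not arise in our applications, since the relevant parameter in the reductions (e.g.\ $\epsilon^{-1}$ or the prime $r$ used in $K_{r, t}$) is bounded below by a constant at least $2$. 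Beyond this, the argument is routine, relying crucially on the monotonicity of $r'$ to prevent jumps from disrupting the tight $O(1)$ approximation that holds at the witness values $N_s$.
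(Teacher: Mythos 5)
Your proof is correct and, up to the concluding step of Case~2, follows the same strategy as the paper: both arguments select, for each fixed exponent $s$, the first $N$ at which $\sqrt{N} \ge r'(N)^s$ (your $N_s$ is exactly the paper's ``first $N$ with $f(N) \ge s$'' for $f(N) = \log N / 2\log r'(N)$). The difference lies in how the tight approximation at $N_s$ is extracted. The paper first proves an auxiliary observation --- whenever $f$ strictly increases, $r'$ must be constant --- deduces $r'(N_s) = r'(N_s - 1)$, and then combines $N_s - 1 < r'(N_s)^{2s} \le N_s$ with the integrality of $r'(N_s)^{2s}$ to get the exact identity $N_s = r'(N_s)^{2s}$. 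You skip all of this: monotonicity of $r'$ alone, applied directly, gives $(r'(N_s))^s \ge (r'(N_s-1))^s > \sqrt{N_s - 1}$, which together with $(r'(N_s))^s \le \sqrt{N_s}$ and $\sqrt{N_s} - \sqrt{N_s-1} = O(1/\sqrt{N_s})$ pins $(r'(N_s))^s$ to within $O(1/\sqrt{N_s})$ of $\sqrt{N_s}$. Your version dispenses with the $f$-monotonicity observation and the integrality argument, trading the paper's exact identity for an asymptotically equivalent $\Theta$ bound, which is all the lemma needs --- arguably a cleaner route. Your discussion of the $r' = 1$ edge case is appropriate; the paper's use of $f(N) = \log N / 2\log r'(N)$ and $g(r'(N))$ tacitly presupposes $r'(N) \ge 2$ as well, so this is an implicit assumption in both arguments that is satisfied in every application (where $r'$ is derived from a prime $r \ge 2$ or from $\epsilon^{-1} > 2$).
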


\begin{proof}
If $r = \tilde{\Theta}(N^{1/t})$ where $t \in \mathbb{N}$ is a constant even integer, then this property is satisfied trivially by taking $s = t/2$. Now suppose that $r = N^{o(1)}$ and note that this also implies that $r' = N^{o(1)}$. Now consider the function
$$f(N) = \frac{\log N}{2\log r'(N)}$$
Since $r' = N^{o(1)}$, it follows that $f(N) \to \infty$ as $N \to \infty$. Suppose that $N$ is sufficiently large so that $f(N) > 1$. Note that, for each $N$, either $r'(N + 1) \ge r'(N) + 1$ or $r'(N + 1) = r'(N)$. If $r'(N + 1) = r'(N)$, then $f(N + 1) > f(N)$. If $r'(N + 1) \ge r'(N) + 1$, then
$$\frac{f(N+1)}{f(N)} \le \frac{g(N)}{g(r'(N))} \quad \text{where} \quad g(x) = \frac{\log (x + 1)}{\log x}$$
Note that $g(x)$ is a decreasing function of $x$ for $x \ge 2$. Since $f(N) > 1$, it follows that $r'(N) < N$ and hence the above inequality implies that $f(N + 1) < f(N)$. Summarizing these observations, every time $f(N)$ increases it must follow that $r'(N + 1) = r'(N)$. Fix a sufficiently large positive integer $s$ and consider the first $N$ for which $f(N) \ge s$. It follows by our observation that $r'(N) = r'(N - 1)$ and furthermore that $f(N - 1) < s$. This implies that $N - 1 < r'(N)^{2s}$ and $N \ge r'(N)^{2s}$. Since $r'(N)$ is a positive integer, it then must follow that $N = r'(N)^{2s}$. Since such an $N$ exists for every sufficiently large $s$, this completes the proof of the lemma.
\end{proof}

This condition $\pr{(t)}$ will arise in a number of others problems that we map to, including robust SLR and dense stochastic block models, for a nearly identical reason. We now formally prove Theorem \ref{thm:rsme-lb}. All remaining proofs in this section will be of a similar flavor and where details are similar, we only sketch them to avoid redundancy.

\begin{reptheorem}{thm:rsme-lb} [Lower Bounds for $\pr{rsme}$]
If $k, d$ and $n$ are polynomial in each other, $k = o(\sqrt{d})$ and $\epsilon < 1/2$ is such that $(n, \epsilon^{-1})$ satisfies $\pr{(t)}$, then the $k\pr{-bpc}$ conjecture or $k\pr{-bpds}$ conjecture for constant $0 < q < p \le 1$ both imply that there is a computational lower bound for $\pr{rsme}(n, k, d, \tau, \epsilon)$ at all sample complexities $n = \tilde{o}(k^2 \epsilon^2/\tau^4)$.
\end{reptheorem}

\begin{proof}
To prove this theorem, we will to show that Theorem \ref{thm:isgmreduction} implies that $k\pr{-bpds-to-isgm}$ fills out all of the possible growth rates specified by the computational lower bound $n = \tilde{o}(k^2 \epsilon^2/\tau^4)$ and the other conditions in the theorem statement. As discussed earlier in this section, it suffices to reduce in total variation to $\pr{isgm}(n, k, d, \mu, 1/r)$ where $1/r \le \epsilon$ and $\mu = \tau/\sqrt{k}$.

Fix a constant pair of probabilities $0 < q < p \le 1$ and any sequence of parameters $(n, k, d, \tau, \epsilon)$ all of which are implicitly functions of $n$ such that $(n, \epsilon^{-1})$ satisfies $\pr{(t)}$ and $(n, k, d, \tau, \epsilon)$ satisfy the conditions
$$n \le c \cdot \frac{k^2 \epsilon^2}{\tau^4 \cdot (\log n)^{2+2c'}} \quad \text{and} \quad w k^2 \le d$$
for sufficiently large $n$, an arbitrarily slow-growing function $w = w(n) \to \infty$ at least satisfying that $w(n) = n^{o(1)}$, a sufficiently small constant $c > 0$ and a sufficiently large constant $c' > 0$. In order to fulfill the criteria in Condition \ref{cond:lb}, we now will specify:
\begin{enumerate}
\item a sequence of parameters $(M, N, k_M, k_N, p, q)$ such that the $k\pr{-bpds}$ instance with these parameters is hard according to Conjecture \ref{conj:hard-conj}; and 
\item a sequence of parameters $(n', k, d, \tau, \epsilon)$ with a subsequence that satisfies three conditions: (2.1) the parameters on the subsequence are in the regime of the desired computational lower bound for $\pr{rsme}$; (2.2) they have the same growth rate as $(n, k, d, \tau, \epsilon)$ on this subsequence; and (2.3) such that $\pr{rsme}$ with the parameters on this subsequence can be produced by the reduction $k\pr{-bpds-to-isgm}$ with input $k\pr{-bpds}(M, N, k_M, k_N, p, q)$.
\end{enumerate}
By the discussion in Section \ref{subsec:2-tvreductions}, this would be sufficient to show the desired computational lower bound. We choose these parameters as follows:
\begin{itemize}
\item let $r$ be a prime with $r \ge \epsilon^{-1}$ and $r \le 2\epsilon^{-1}$, which exists by Bertrand's postulate and can be found in $\text{poly}(\epsilon^{-1}) \le \text{poly}(n)$ time;
\item let $t$ be such that $r^t$ is the closest power of $r$ to $\sqrt{n}$, let $n' = \lfloor w^{-2} r^{2t} \rfloor$, let $k_N = \lfloor \sqrt{n'} \rfloor$ and let $N = wk_N^2 \le k_N r^t$; and
\item set $\mu = \tau/\sqrt{k}$, $k_M = k$ and $M = w k^2$.
\end{itemize}
The given inequality and parameter settings above rearrange to the following condition on $n'$:
$$n' \le w^{-2} r^{2t} = O\left( \frac{r^{2t}}{n} \cdot \frac{k^2 \epsilon^2}{\tau^4 \cdot (\log n)^{2+2c'}} \right)$$
Furthermore, the given inequality yields the constraint on $\mu$ that
$$\mu = \tau \cdot k^{-1/2} \le \frac{c^{1/4} \epsilon^{1/2}}{n^{1/4} (\log n)^{(1 + c')/2}} = \Theta \left( \frac{r^{t/2}}{n^{1/4}} \cdot \frac{1}{\sqrt{r^{t + 1} (\log n)^{1+c'}}} \right)$$
As long as $\sqrt{n} = \tilde{\Theta}(r^t)$ then: (2.1) the inequality above on $n'$ would imply that $(n', k, d, \tau, \epsilon)$ is in the desired hard regime; (2.2) $n$ and $n'$ have the same growth rate since $w = n^{o(1)}$; and (2.3) taking $c'$ large enough would imply that $\mu$ satisfies the conditions needed to apply Theorem \ref{thm:isgmreduction} to yield the desired reduction. By Lemma \ref{lem:propT}, there is an infinite subsequence of the input parameters such that $\sqrt{n} = \tilde{\Theta}(r^t)$. This verifies the three criteria in Condition \ref{cond:lb}. Following the argument in Section \ref{subsec:2-tvreductions}, Lemma \ref{lem:3a} now implies the theorem.
\end{proof}

As alluded to in Section \ref{subsec:1-problems-rsme}, replacing $K_{r, t}$ with $R_{n, \epsilon}$ in the applications of dense Bernoulli rotations in $k\pr{-bpds-to-isgm}$ removes condition $(\pr{t})$ from this lower bound. Specifically, applying $k\pr{-bpds-to-isgm}_R$ and Corollary \ref{thm:mod-isgmreduction} in place of $k\pr{-bpds-to-isgm}$ and replacing the dimension $r^t$ with $L$ in the argument above yields the lower bound shown below. Note that condition $(\pr{t})$ in Theorem \ref{thm:rsme-lb} is replaced by the looser requirement that $\epsilon = \tilde{\Omega}(n^{-1/2})$. As discussed at the end of Section \ref{subsec:3-rsme-reduction}, this requirement arises from the condition $\epsilon \gg L^{-1} \log L$ in Corollary \ref{thm:mod-isgmreduction}. We remark that the condition $\epsilon = \tilde{\Omega}(n^{-1/2})$ is implicit in $(\pr{t})$ and hence the following corollary is strictly stronger than Theorem \ref{thm:rsme-lb}.

\begin{corollary}[Lower Bounds for $\pr{rsme}$ without Condition ($\pr{t}$)] \label{cor:rsme-lb-mod}
If $k, d$ and $n$ are polynomial in each other, $k = o(\sqrt{d})$ and $\epsilon < 1/2$ is such that $\epsilon = \tilde{\Omega}(n^{-1/2})$, then the $k\pr{-bpc}$ conjecture or $k\pr{-bpds}$ conjecture for constant $0 < q < p \le 1$ both imply that there is a computational lower bound for $\pr{rsme}(n, k, d, \tau, \epsilon)$ at all sample complexities $n = \tilde{o}(k^2 \epsilon^2/\tau^4)$.
\end{corollary}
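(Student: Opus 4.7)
The plan is to recycle the proof of Theorem \ref{thm:rsme-lb} essentially verbatim, but substitute the modified reduction $k\pr{-bpds-to-isgm}_R$ and Corollary \ref{thm:mod-isgmreduction} in place of $k\pr{-bpds-to-isgm}$ and Theorem \ref{thm:isgmreduction}. The only substantive change compared to Theorem \ref{thm:rsme-lb} is that the intermediate dimension $r^t$ gets replaced by the free integer parameter $L$, and $R_{L,\epsilon}$ is available for any $L$ with $\epsilon L = \omega(\log L)$. This completely removes the number-theoretic matching step that produced condition $\pr{(t)}$, so no subsequence extraction via Lemma \ref{lem:propT} is required; the new condition $\epsilon = \tilde{\Omega}(n^{-1/2})$ is exactly what we need so that $\epsilon \gg L^{-1}\log L$ holds for the choice of $L$ below.

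Concretely, I will fix a constant pair $0 < q < p \le 1$ and an arbitrary sequence $(n,k,d,\tau,\epsilon)$ satisfying the hypotheses of the corollary and $n \le c \cdot k^2 \epsilon^2 / (\tau^4 (\log n)^{2+2c'})$ for some small constant $c>0$ and large constant $c'>0$. Pick a slow $w(n) = \omega(1)$ with $w = n^{o(1)}$, take a prime $r \in [\epsilon^{-1}, 2\epsilon^{-1}]$ via Bertrand, set $M = wk^2$, $k_M = k$, $k_N = \lfloor\sqrt{n}/w\rfloor$, $N = wk_N^2$, and $L = \lceil wN/k_N\rceil$. Then $(M,N,k_M,k_N,p,q)$ lies in the conjecturally hard regime of $k\pr{-bpds}$, and $k_N L \ge wN$, $m\le d$, $L = \tilde{\Theta}(\sqrt{n})$. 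Setting $\mu = \tau/\sqrt{k}$, the bound on $n$ rearranges to
\[
\mu \;=\; O\!\left(\epsilon^{1/2} n^{-1/4} (\log n)^{-(1+c')/2}\right) \;\le\; \frac{C\delta}{\sqrt{\log(k_N m L)+\log(p-q)^{-1}}}\cdot \sqrt{\epsilon/L}
\]
for $c'$ large enough, matching the parameter constraint in Corollary \ref{thm:mod-isgmreduction}. The hypothesis $\epsilon = \tilde{\Omega}(n^{-1/2})$ together with $L = \tilde{\Theta}(\sqrt{n})$ gives $\epsilon \ge w(n)\log L / L$, as required.

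Corollary \ref{thm:mod-isgmreduction} then yields a polynomial-time reduction in total variation from $k\pr{-bpds}(M,N,k_M,k_N,p,q)$ to $\pr{isgm}(n',k,d,\mu,1/r)$ for some $n' = \tilde{\Theta}(n)$ still in the target hard regime; as observed in the opening paragraph of Section \ref{subsec:3-rsme}, such an $\pr{isgm}$ instance is a valid instance of $\pr{rsme}(n',k,d,\tau,\epsilon)$ in Huber's contamination model with outlier distribution $\mD_O = \pr{mix}_{(\epsilon r)^{-1}}(\mN(\mu\mathbf 1_S, I_d),\mN(\mu'\mathbf 1_S,I_d))$. Verifying the three bullets of Condition \ref{cond:lb} and invoking Lemma \ref{lem:3a} then produces the claimed computational lower bound.

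The only conceivable obstacle is bookkeeping, namely checking that the parameter choices simultaneously satisfy every inequality in Corollary \ref{thm:mod-isgmreduction} (particularly $\max\{wN,n\}\le k_N L \le \mathrm{poly}(n)$, the signal ceiling on $\mu$, and $w\log L / L \le \epsilon \le 1/2$) while still landing $(n',k,d,\tau,\epsilon)$ in the hard regime with the same polynomial growth rate as $(n,k,d,\tau,\epsilon)$; the interval $\epsilon = \tilde{\Omega}(n^{-1/2})$ is precisely wide enough for a slowly enough growing $w$ to close every gap, which is why no further conditions beyond $\epsilon = \tilde{\Omega}(n^{-1/2})$ appear in the statement.
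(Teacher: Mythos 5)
Your proposal is correct and takes essentially the same route as the paper: the paper's own ``proof'' is only the sketch immediately preceding the corollary, which says to run the proof of Theorem~\ref{thm:rsme-lb} verbatim but with $k\pr{-bpds-to-isgm}_R$ and Corollary~\ref{thm:mod-isgmreduction} substituted for $k\pr{-bpds-to-isgm}$ and Theorem~\ref{thm:isgmreduction}, $r^t$ replaced by the free parameter $L$, and the Lemma~\ref{lem:propT} subsequence step dropped. Your parameter bookkeeping ($M=wk^2$, $k_N=\lfloor\sqrt{n}/w\rfloor$, $N=wk_N^2$, $L=\lceil wN/k_N\rceil$, $\mu=\tau/\sqrt{k}$) is consistent with that. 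One small unnecessary step: you invoke Bertrand's postulate to find a prime $r\in[\epsilon^{-1},2\epsilon^{-1}]$, carrying over that step from the proof of Theorem~\ref{thm:rsme-lb}, but Corollary~\ref{thm:mod-isgmreduction} (unlike Theorem~\ref{thm:isgmreduction}) makes no primality requirement --- the random design matrix $R_{L,\epsilon}$ is defined for arbitrary $\epsilon\in(0,1/2]$, so you may set the $\pr{isgm}$ corruption parameter equal to $\epsilon$ directly. This is harmless since using $\epsilon'=1/r\le\epsilon$ still yields a valid $\pr{rsme}$ instance at rate $\epsilon$, but it is extra machinery the modified reduction was designed to avoid.
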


We remark that only assuming the $k\pr{-pc}$ conjecture also yields hardness for $\pr{rsme}$. In particular $k\pr{-pc}$ can be mapped to the asymmetric bipartite case by considering the bipartite subgraph with $k/2$ parts on one size and $k/2$ on the other. Showing hardness for $\pr{rsme}$ from $k\pr{-pc}$ then reduces to the hardness yielded by $k\pr{-bpc}$ with $M = N$. Examining this restricted setting in the theorem above and passing through an analogous argument yields a computational lower bound at the slightly suboptimal rate
$$n = \tilde{o}\left(k^2 \epsilon/\tau^2\right) \quad \text{as long as} \quad \tau^2 \log n = o(\epsilon)$$
When $(\log n)^{-O(1)} \lesssim \epsilon \lesssim 1/\log n$, then the optimal $k$-to-$k^2$ gap is recovered up to $\text{polylog}(n)$ factors by this result. 

\subsection{Negative Sparse PCA}
\label{subsec:3-neg-spca}

In this section, we deduce Theorem \ref{thm:neg-spca-lb} on the hardness of $\pr{neg-spca}$ using the reduction $\pr{bpds-to-neg-spca}$ and Theorem \ref{thm:neg-spca}. Because this reduction does not bear the number-theoretic considerations of the reduction to $\pr{rsme}$, this proof will be substantially more straightforward.

\begin{reptheorem}{thm:neg-spca-lb} [Lower Bounds for $\pr{neg-spca}$]
If $k, d$ and $n$ are polynomial in each other, $k = o(\sqrt{d})$ and $k = o(n^{1/6})$, then the $\pr{bpc}$ or $\pr{bpds}$ conjecture for constant $0 < q < p \le 1$ both imply conjecture implies a computational lower bound for $\pr{neg-spca}(n, k, d, \theta)$ at all levels of signal $\theta = \tilde{o}(\sqrt{k^2/n})$.
\end{reptheorem}

\begin{proof}
We show that Theorem \ref{thm:neg-spca} implies that $\pr{bpds-to-neg-spca}$ fills out all of the possible growth rates specified by the computational lower bound $\theta = \tilde{o}(\sqrt{k^2/n})$ and the other conditions in the theorem statement. Fix a constant pair of probabilities $0 < q < p \le 1$ and a sequence of parameters $(n, k, d, \theta)$ all of which are implicitly functions of $n$ such that
$$\theta \le cw^{-1} \cdot \sqrt{\frac{k^2}{n (\log n)^2}}, \quad wk \le n^{1/6} \quad \text{and} \quad w k^2 \le d$$
for sufficiently large $n$, an arbitrarily slow-growing function $w = w(n) \to \infty$ where $w(n) = n^{o(1)}$ and a sufficiently small constant $c > 0$. In order to fulfill the criteria in Condition \ref{cond:lb}, we now will specify: a sequence of parameters $(M, N, k_M, k_N, p, q)$ such that the $\pr{bpds}$ instance with these parameters is hard according to Conjecture \ref{conj:hard-conj}, and such that $\pr{neg-spca}$ with the parameters $(n, k, d, \theta)$ can be produced by the reduction $\pr{bpds-to-neg-spca}$ applied to $\pr{bpds}(M, N, k_M, k_N, p, q)$. These parameters along with the internal parameter $\tau$ of the reduction can be chosen as follows:
\begin{itemize}
\item let $N = n$, $k_N = w^{-1} \sqrt{n}$, $k_M = k$ and $M = w k^2$; and
\item let $\tau > 0$ be such that
$$\tau^2 = \frac{4n\theta}{k_N k(1 - \theta)}$$
\end{itemize}
It is straightforward to verify that the inequality above upper bounding $\theta$ implies that $\tau \le 4c/\sqrt{\log n}$ and thus satisfies the condition on $\tau$ needed to apply Lemma \ref{lem:randomrotations} and Theorem \ref{thm:neg-spca} for a sufficiently small $c > 0$. Furthermore, this setting of $\tau$ yields
$$\theta = \frac{\tau^2 k_N k}{4n + \tau^2 k_N k}$$
Furthermore, note that $d \ge M$ and $n \gg M^3$ by construction. Applying Theorem \ref{thm:neg-spca} now verifies the desired property above. This verifies the criteria in Condition \ref{cond:lb} and, following the argument in Section \ref{subsec:2-tvreductions}, Lemma \ref{lem:3a} now implies the theorem.
\end{proof}

We remark the the constraint $k = o(n^{1/6})$, as mentioned in Section \ref{subsec:1-problems-negspca}, is a technical condition that we believe should not be necessary for the theorem to hold. This is similar to the constraint arising in the strong reduction to sparse PCA given by $\pr{Clique-to-Wishart}$ in \cite{brennan2019optimal}. In $\pr{Clique-to-Wishart}$, the random matrix comparison between Wishart and $\pr{goe}$ produced the technical condition that $k = o(n^{1/6})$ in a similar manner to how our comparison result between Wishart and inverse Wishart produces the same constraint here. We also remark that the reduction $\pr{Clique-to-Wishart}$ can be used here to yield the same hardness for $\pr{neg-spca}$ as in Theorem \ref{thm:neg-spca-lb} based only on the $\pr{pc}$ conjecture. This is achieved by the reduction that maps from $\pr{pc}$ to sparse PCA with $d = wk^2$ as a first step using $\pr{Clique-to-Wishart}$ and then uses the second step of $\pr{bpds-to-neg-spca}$ to map to $\pr{neg-spca}$.

\subsection{Mixtures of Sparse Linear Regressions and Robustness}
\label{subsec:3-slr}

In this section, we deduce Theorems \ref{thm:uslr-lb}, \ref{thm:mslr-lb} and \ref{thm:rslr-lb} on the hardness of unsigned, mixtures of and robust sparse linear regression, all using the reduction $k\pr{-bpds-to-mslr}$ with different parameters $(r, \epsilon)$ and Theorem \ref{thm:slr-reduction}. We begin by showing bounds for $\pr{uslr}(n, k, d, \tau)$.

We first make the following simple but important observation. Note that a single sample from $\pr{uslr}$ is of the form $y = | \tau \cdot \langle v_S, X \rangle + \mN(0, 1)|$, which has the same distribution as $|y'|$ where $y' = \tau r \cdot \langle v_S, X \rangle + \mN(0, 1)$ and $r$ is an independent Rademacher random variable. Note that $y'$ is a sample from $\pr{mslr}_D(n, k, d, \gamma, 1/2)$ with $\gamma = \tau$. Thus to show a computational lower bound for $\pr{uslr}(n, k, d, \tau)$, it suffices to show a lower bound for $\pr{mslr}(n, k, d, \tau)$.

\begin{theorem}[Lower Bounds for $\pr{uslr}$] \label{thm:uslr-lb}
If $k, d$ and $n$ are polynomial in each other, $k = o(\sqrt{d})$ and $k = o(n^{1/6})$, then the $k\pr{-bpc}$ or $k\pr{-bpds}$ conjecture for constant $0 < q < p \le 1$ both imply that there is a computational lower bound for $\pr{uslr}(n, k, d, \tau)$ at all sample complexities $n = \tilde{o}(k^2/\tau^4)$.
\end{theorem}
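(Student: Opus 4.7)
The plan is to deduce the $\pr{uslr}$ lower bound by combining a trivial reduction from $\pr{mslr}$ to $\pr{uslr}$ with an application of Theorem \ref{thm:slr-reduction}. More precisely, the key observation is the one stated immediately before the theorem: if $(X, y)$ is a sample from $\pr{mslr}_D(n, S, d, \tau, 1/2)$, then $(X, |y|)$ is a sample from the corresponding instance of $\pr{uslr}$, because $y = \tau s \langle v_S, X \rangle + \eta$ for an independent Rademacher $s$ and $|y|$ is equidistributed to $|\tau \langle v_S, X\rangle + \eta|$ by symmetry of $\mN(0,1)$. Similarly, if $y \sim \mN(0, 1+\tau^2)$ under the null, then $|y|$ has exactly the marginal distribution of the $\pr{uslr}$ null. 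Consequently, the entrywise map $(X_i, y_i) \mapsto (X_i, |y_i|)$ is a polynomial-time reduction from $\pr{mslr}(n, k, d, \tau)$ to $\pr{uslr}(n, k, d, \tau)$ that produces zero total variation error under both hypotheses; by the data-processing inequality and Lemma \ref{lem:3a}, it therefore suffices to exhibit a computational lower bound for $\pr{mslr}(n, k, d, \tau)$ at all sample complexities $n = \tilde{o}(k^2/\tau^4)$.

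To establish the $\pr{mslr}$ lower bound, I will invoke Theorem \ref{thm:slr-reduction} with $r = 2$ and $\epsilon = 1/2$, following the same recipe as the proof of Theorem \ref{thm:rsme-lb}. Fix constants $0 < q < p \le 1$ and a sequence of target parameters $(n, k, d, \tau)$ satisfying $\tau^2 \le cw^{-2}k/(\sqrt{n}(\log n)^{c'})$ for a slow-growing $w = w(n) = n^{o(1)} \to \infty$, a small constant $c > 0$ and large constant $c' > 0$, together with $k = o(\sqrt{d})$ and $wk \le n^{1/6}$. I will produce the $k\pr{-bpds}$ input parameters by setting $k_M = k$, $M = wk^2$, $N_{\text{in}} = w^2 n$, $k_N = \lfloor \sqrt{N_{\text{in}}}/w\rfloor$, and choosing $t \in \mathbb{N}$ so that $2^t$ is the closest power of $2$ to $\sqrt{N_{\text{in}}}$. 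Since $r = 2$ is constant, condition $\pr{(t)}$ is trivial. The target $\pr{mslr}$ parameters are $N = n$, $d$, $\gamma = \tau$, $\epsilon = 1/2$.

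The remaining step is routine verification that these parameters satisfy the hypotheses of Theorem \ref{thm:slr-reduction}. The bounds $N \le N_{\text{in}}$ and $wN \le k_N r (r^t-1)/(r-1) \asymp k_N 2^{t+1}$ hold by construction, $m \le d \le \textnormal{poly}(n)$ and $n \le k_n r^t \le \textnormal{poly}(n)$ are routine, and the condition $n \gg m^3$ on the input to the negative-sparse-PCA subroutine translates to $N_{\text{in}} \gg w^3 k^6$, which follows from $wk \le n^{1/6}$ and $N_{\text{in}} = w^2 n$ provided $w$ grows sufficiently slowly. The two constraints on $\gamma^2$ become $\tau^2 \lesssim k/(\sqrt{n}(\log n)^2)$ and $\tau^2 \lesssim k/(\sqrt{n} w^2 \log n)$ respectively; the binding one matches the desired barrier $n = \tilde{o}(k^2/\tau^4)$ up to the slow-growing factor $w$. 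The $k\pr{-bpds}$ instance with parameters $(M, N_{\text{in}}, k_M, k_N, p, q)$ is conjecturally hard by Conjecture \ref{conj:hard-conj} since $k_M = k = o(\sqrt{M})$ and $k_N = \tilde{\Theta}(\sqrt{N_{\text{in}}})/w$, and Theorem \ref{thm:slr-reduction} gives the reduction to $\pr{mslr}$ with the desired parameters within $o(1)$ total variation.

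There is no substantive obstacle: the whole proof is a parameter-setting exercise that mirrors the proof of Theorem \ref{thm:rsme-lb}, with the inherited constraint $k = o(n^{1/6})$ coming from the Wishart-to-inverse-Wishart convergence used in the $\pr{bpds-to-neg-spca}$ subroutine of $k\pr{-bpds-to-mslr}$ (Theorem \ref{thm:inverse-wishart}). The only mildly delicate point is checking that the polylogarithmic slack absorbed into $w$ is consistent across the three constraints $wk \le n^{1/6}$, $N_{\text{in}} = w^2 n \le \textnormal{poly}(n)$ and $\tau^2 \le cw^{-2}k/(\sqrt{n}\log n)$; this is straightforward because $w$ may be chosen as any $\omega(1)$ function with $w = n^{o(1)}$. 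Applying Lemma \ref{lem:3a} as in the argument of Theorem \ref{thm:rsme-lb}, and then composing with the trivial $\pr{mslr} \to \pr{uslr}$ reduction above, completes the proof, and Condition \ref{cond:lb} is satisfied by passing to the subsequence of $n$ needed to realize each prescribed polynomial growth rate of $(n,k,d,\tau)$.
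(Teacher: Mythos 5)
Your overall approach is exactly the paper's: observe that $(X, y) \mapsto (X, |y|)$ maps $\pr{mslr}(n, k, d, \tau)$ to $\pr{uslr}(n, k, d, \tau)$ with zero total variation error under both hypotheses (using symmetry of $\mN(0,1)$ and of the Rademacher sign), then reduce the $\pr{mslr}$ lower bound to an application of Theorem \ref{thm:slr-reduction} with $r = 2$ and $\epsilon = 1/2$, following the parameter-setting recipe from the proof of Theorem \ref{thm:rsme-lb}. The observation about condition $\pr{(t)}$ being trivial for $r = 2$ is correct, and you correctly trace the $k = o(n^{1/6})$ constraint to the Wishart/inverse-Wishart comparison in Theorem \ref{thm:inverse-wishart}.

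However, your specific choice of input parameters violates one of the hypotheses of Theorem \ref{thm:slr-reduction}. You set $N_{\textnormal{in}} = w^2 n$ for the right vertex count of the input $k\pr{-bpds}$, $k_N = \lfloor \sqrt{N_{\textnormal{in}}}/w \rfloor \approx \sqrt{n}$, and $2^t$ closest to $\sqrt{N_{\textnormal{in}}} = w\sqrt{n}$. Theorem \ref{thm:slr-reduction} requires the right vertex count to satisfy $n_{\textnormal{input}} \le k_n r^t$; here that means $N_{\textnormal{in}} \le k_N 2^t$. But $k_N 2^t \approx \sqrt{n} \cdot w\sqrt{n} = wn$, while $N_{\textnormal{in}} = w^2 n > wn$. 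So this constraint fails by a factor of $w$. The extra factor of $w$ buys you nothing: the constraint $N_{\textnormal{in}} \gg M^3 = w^3 k^6$ already follows from $wk \le n^{1/6}$ at $N_{\textnormal{in}} = wn$ (indeed, $w^6 k^6 \le n$ gives $w^3 k^6 \le n/w^3 \ll wn$). The correct calibration, which the paper uses, is $k_N = \lfloor\sqrt{n}\rfloor$, $2^t$ the smallest power of $2$ exceeding $w\sqrt{n}$, and $N_{\textnormal{in}} = w k_N^2 \approx wn$; this saturates $N_{\textnormal{in}} \le k_N 2^t$ exactly since $wk_N \le w\sqrt{n} < 2^t$, while still making the input $k\pr{-bpds}$ instance hard because $k_N/\sqrt{N_{\textnormal{in}}} \approx 1/\sqrt{w} \to 0$. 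With this single correction, the remainder of your verification goes through as claimed.
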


\begin{proof}
To prove this theorem, we will show that Theorem \ref{thm:slr-reduction} implies that $k\pr{-bpds-to-mslr}$ applied with $r = 2$ fills out all of the possible growth rates specified by the computational lower bound $n = \tilde{o}(k^2/\tau^4)$ and the other conditions in the theorem statement. As mentioned above, it suffices to reduce in total variation to $\pr{mslr}(n, k, d, \tau)$. Fix a constant pair of probabilities $0 < q < p \le 1$ and any sequence of parameters $(n, k, d, \tau)$ all of which are implicitly functions of $n$ with
$$n \le c \cdot \frac{k^2}{w^2 \cdot \tau^4 \cdot (\log n)^{4}}, \quad wk \le n^{1/6} \quad \text{and} \quad w k^2 \le d$$
for sufficiently large $n$, an arbitrarily slow-growing function $w = w(n) \to \infty$ and a sufficiently small constant $c > 0$. In order to fulfill the criteria in Condition \ref{cond:lb}, we now will specify: a sequence of parameters $(M, N, k_M, k_N, p, q)$ such that the $k\pr{-bpds}$ instance with these parameters is hard according to Conjecture \ref{conj:hard-conj}, and such that $\pr{mslr}$ with the parameters $(n, k, d, \tau, 1/2)$ can be produced by the reduction $k\pr{-bpds-to-mslr}$ applied with $r = 2$ to $\pr{bpds}(M, N, k_M, k_N, p, q)$. By the discussion in Section \ref{subsec:2-tvreductions}, this would be sufficient to show the desired computational lower bound. We choose these parameters as follows:
\begin{itemize}
\item let $t$ be such that $2^t$ is the smallest power of two greater than $w\sqrt{n}$, let $k_N = \lfloor \sqrt{n} \rfloor$ and let $N = wk_N^2 \le k_N 2^t$; and
\item set $k_M = k$ and $M = w k^2$.
\end{itemize}
Now note that $\tau^2$ is upper bounded by
$$\tau^2 \le \frac{c^{1/2} \cdot k}{wn^{1/2} \cdot (\log n)^{2}} = O\left( \frac{k_N k_M}{N \log (MN)} \right)$$
Furthermore, we have that
$$\tau^2 \le \frac{c^{1/2} \cdot k}{wn^{1/2} \cdot (\log n)^{2}} = \Theta\left( \frac{k_M}{2^{t + 1} \log (k_N M \cdot 2^t) \log n} \right)$$
Therefore $\tau$ satisfies the conditions needed to apply Theorem \ref{thm:slr-reduction} for a sufficiently small $c > 0$. Also note that $n \gg M^3$ and $d \ge M$ by construction. Applying Theorem \ref{thm:slr-reduction} now verifies the desired property above. This verifies the criteria in Condition \ref{cond:lb} and, following the argument in Section \ref{subsec:2-tvreductions}, Lemma \ref{lem:3a} now implies the theorem.
\end{proof}

The proof of the theorem above also directly implies Theorem \ref{thm:mslr-lb}. This yields our main computational lower bounds for $\pr{mslr}$, which are stated below.

\begin{reptheorem}{thm:mslr-lb} [Lower Bounds for $\pr{mslr}$]
If $k, d$ and $n$ are polynomial in each other, $k = o(\sqrt{d})$ and $k = o(n^{1/6})$, then the $k\pr{-bpc}$ or $k\pr{-bpds}$ conjecture for constant $0 < q < p \le 1$ both imply that there is a computational lower bound for $\pr{mslr}(n, k, d, \tau)$ at all sample complexities $n = \tilde{o}(k^2/\tau^4)$.
\end{reptheorem}

Now observe that the instances of $\pr{mslr}$ output by the reduction $k\pr{-bpds-to-mslr}$ applied with $r > 2$ are instances of $\pr{rslr}$ in Huber's contamination model. Let $r$ be a prime number and $\epsilon \ge 1/r$. Also let $X \sim \mN(0, I_d)$ and $y = \tau \cdot \langle v_S, X \rangle + \eta$ where $\eta \sim \mN(0, 1)$ where $|S| = k$. By Definition \ref{defn:mslr-imbalanced}, $\pr{mslr}_D(n, k, d, \tau, 1/r)$ is of the form
$$\pr{mix}_{\epsilon}\left( \mL(X, y), \mD_O \right)^{\otimes n} \quad \text{where} \quad \mD_O = \pr{mix}_{\epsilon^{-1} r^{-1}} \left( \mL(X, y), \mL' \right)$$
for some possibly random $S$ with $|S| = k$ and where $\mL'$ denotes the distribution on pairs $(X, y)$ that are jointly Gaussian with mean zero and $(d + 1) \times (d + 1)$ covariance matrix
$$\left[\begin{matrix} \Sigma_{XX} & \Sigma_{Xy} \\ \Sigma_{yX} & \Sigma_{yy} \end{matrix} \right] = \left[\begin{matrix} I_d + \frac{(a^2 - 1)\gamma^2}{1 + \gamma^2} \cdot v_S v_S^\top & -a\gamma \cdot v_S \\ -a\gamma \cdot v_S^\top & 1 + \gamma^2 \end{matrix} \right]$$
This yields a very particular construction of an adversary in Huber's contamination model, which we show in the next theorem yields a computational lower bound for $\pr{rslr}$. With the observations above, the proof of this theorem is similar to that of Theorem \ref{thm:rsme-lb} and is deferred to Appendix \ref{subsec:appendix-3-part-3}.

\begin{theorem}[Lower Bounds for $\pr{rslr}$ with Condition (\pr{t})]
If $k, d$ and $n$ are polynomial in each other, $\epsilon < 1/2$ is such that $(n, \epsilon^{-1})$ satisfies $\pr{(t)}$, $k = o(\sqrt{d})$ and $k = o(n^{1/6})$, then the $k\pr{-bpc}$ conjecture or $k\pr{-bpds}$ conjecture for constant $0 < q < p \le 1$ both imply that there is a computational lower bound for $\pr{rslr}(n, k, d, \tau, \epsilon)$ at all sample complexities $n = \tilde{o}(k^2 \epsilon^2/\tau^4)$.
\end{theorem}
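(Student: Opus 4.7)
The plan is to exploit the observation made in the paragraph immediately preceding the theorem: when $r$ is a prime with $1/r \le \epsilon$, every draw from $\pr{mslr}_D(n,k,d,\tau,1/r)$ (as given by Definition \ref{defn:mslr-imbalanced}) is a valid instance from the composite hypothesis $H_1$ of $\pr{rslr}(n,k,d,\tau,\epsilon)$ under Huber's contamination model, with outlier distribution $\mD_O = \pr{mix}_{\epsilon^{-1}r^{-1}}(\mL(X,y), \mL')$ as described. The null hypothesis $H_0$ of $\pr{mslr}$ is, by construction, exactly the same product of Gaussians as the null $H_0$ of $\pr{rslr}$. Hence, by the reduction-in-total-variation framework of Section \ref{subsec:2-tvreductions}, it suffices to produce a polynomial-time reduction from hard $k\pr{-bpds}$ instances to $\pr{mslr}(n,k,d,\tau,1/r)$ at parameters in the conjecturally hard $\pr{rslr}$ regime, which is precisely what $k\pr{-bpds-to-mslr}$ with $r > 2$ delivers via Theorem \ref{thm:slr-reduction}.

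The argument then proceeds in parallel with the proof of Theorem \ref{thm:rsme-lb}. Fix a sequence $(n,k,d,\tau,\epsilon)$ satisfying the conditions of the theorem together with the regime inequality $n \le c\cdot k^2\epsilon^2 / (w^2 \tau^4 (\log n)^4)$ for a slow-growing $w(n) = n^{o(1)}$ and small constant $c > 0$, with $wk \le n^{1/6}$ and $wk^2 \le d$. Using Bertrand's postulate, pick a prime $r$ with $\epsilon^{-1} \le r \le 2\epsilon^{-1}$ in $\text{poly}(n)$ time, and let $t \in \mathbb{N}$ be such that $r^t$ is the closest power of $r$ to $\sqrt{n}$. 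Set $k_N = \lfloor \sqrt{n} \rfloor$, $N = w k_N^2 \le k_N r^t$, $k_M = k$ and $M = wk^2$, so that $M \le d$ and $n \gg M^3$ (this is exactly where the hypothesis $k = o(n^{1/6})$ is consumed, propagating the Wishart vs.\ inverse-Wishart bound of Theorem \ref{thm:inverse-wishart} through $\pr{bpds-to-neg-spca}$). Along the infinite subsequence of $n$ furnished by Lemma \ref{lem:propT} applied to the pair $(n, r)$ satisfying $(\pr{t})$, one has $\sqrt{n} = \tilde{\Theta}(r^t)$; and because $w = n^{o(1)}$, the output sample count $N$ has the same growth rate as $n$.

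Verification of the hypotheses of Theorem \ref{thm:slr-reduction} on this subsequence is now routine. Using $r^{t+1} = \tilde{\Theta}(\epsilon^{-1}\sqrt{n})$, $k_N k_M = \tilde{\Theta}(k\sqrt{n})$, and $N = \tilde{\Theta}(n)$, the two constraints
$\gamma^2 \lesssim k_M / (r^{t+1}\log(k_NMr^t)\log N)$ and $\gamma^2 \lesssim k_N k_M / (n\log(mn))$
become, up to $\text{polylog}$ factors, $\gamma^2 \lesssim k\epsilon/\sqrt{n}$ and $\gamma^2 \lesssim k/\sqrt{n}$ respectively; both are implied by our regime hypothesis $\gamma^2 = \tau^2 \lesssim k^2\epsilon^2/n$ combined with $k \le k/\epsilon = o(\sqrt{n})$. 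The remaining hypotheses ($k_M = o(\sqrt{M})$, $k_N = o(\sqrt{N})$, constant densities $0 < q < p \le 1$, $m$ and $n$ polynomially related, $N \le n$) all hold by construction. Theorem \ref{thm:slr-reduction} thus produces a reduction whose total-variation error to $\pr{mslr}_D(N,S,d,\tau,1/r)$ is $o(1)$ under both hypotheses. Composing with the $H_1(\pr{mslr}) \subseteq H_1(\pr{rslr})$ inclusion of the first paragraph and invoking Lemma \ref{lem:3a} then fulfills all three criteria of Condition \ref{cond:lb}.

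The main delicate point, inherited from Theorem \ref{thm:rsme-lb}, is the number-theoretic synchronization between $r^t$ and $\sqrt{n}$: without Lemma \ref{lem:propT} the best available $t$ could force $r^t = \tilde{\Theta}(a^{-1}\sqrt{n})$ for some $a = \omega(1)$, which would degrade our upper bound on the allowable $\tau^2$ by the same factor $a$ and yield a computational lower bound suboptimal by $a$; property $(\pr{t})$ rules this out on an infinite subsequence, and the passage to a subsequence is harmless under Condition \ref{cond:lb}. By contrast, the constraint $k = o(n^{1/6})$ is of a different nature entirely, reflecting the current reach of the random-matrix comparison of Theorem \ref{thm:inverse-wishart}, and is not expected to be removable by the combinatorial parts of the reduction alone.
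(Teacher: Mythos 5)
Your overall strategy is exactly right---use $k\pr{-bpds-to-mslr}$ with $r>2$, observe via the discussion preceding the theorem that $\pr{mslr}_D(\cdot,\cdot,\cdot,\tau,1/r)$ with $1/r\le\epsilon$ is an admissible $\pr{rslr}$ $H_1$ distribution in Huber's model while the $H_0$'s coincide, and then run the Condition~\ref{cond:lb} recipe as in Theorem~\ref{thm:rsme-lb}, invoking Lemma~\ref{lem:propT} to obtain the synchronizing subsequence with $\sqrt{n}=\tilde\Theta(r^t)$. Your account of where $k=o(n^{1/6})$ is consumed and of the role of condition~$(\pr{t})$ is also correct.

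However, the parameter calibration has a genuine error. You set $k_N=\lfloor\sqrt{n}\rfloor$ and $N=wk_N^2$ (the right-vertex count of the $k\pr{-bpds}$ instance), and then assert $N\le k_Nr^t$. With $k_N\approx\sqrt n$ this requires $w\sqrt n\le r^t$, but you have chosen $r^t$ to be the closest power of $r$ to $\sqrt n$; on the good subsequence $r^t=\tilde\Theta(\sqrt n)$, so $w\sqrt n\le\tilde\Theta(\sqrt n)$ fails whenever $w$ is more than polylogarithmic (which the lemma permits). The same mismatch breaks the constraint $wN_{\mathrm{out}}\le k_Nr\ell$ from Theorem~\ref{thm:slr-reduction} once you plug in any output sample count comparable to $n$. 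The paper's proof sidesteps this by introducing a \emph{separate} output sample count $n'=\lfloor w^{-2}r^{2t}\rfloor$ and then setting $k_N=\lfloor\sqrt{n'}\rfloor\approx w^{-1}r^t$ and $N_v=wk_N^2\approx w^{-1}r^{2t}$, so that $N_v\approx k_Nr^t\approx wn'$ and every constraint is tight rather than violated; the parameter $n'$ differs from $n$ but has the same growth rate since $w=n^{o(1)}$, which is what Condition~\ref{cond:lb}'s second sequence allows. You never explicitly introduce $n'$, and your inconsistent use of the letter $N$ (sometimes the $k\pr{-bpds}$ vertex count, sometimes ``the output sample count'') obscures this. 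Fixing the proof requires adding $n'$ as above, replacing $k_N=\lfloor\sqrt n\rfloor$ by $k_N=\lfloor\sqrt{n'}\rfloor$, and then re-verifying the two $\gamma^2$ bounds (your heuristic $\gamma^2\lesssim k\epsilon/\sqrt n$ and $\gamma^2\lesssim k/\sqrt n$ are off by $w$ factors under your current $k_N$, though the polynomial growth rates remain right).
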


Our main computational lower bound for $\pr{rslr}$ follows from the same argument applied to the reduction $k\pr{-bpds-to-mslr}_R$ instead of $k\pr{-bpds-to-mslr}$ and using Corollary \ref{thm:mod-slr-reduction} instead of Theorem \ref{thm:slr-reduction}. As in Corollary \ref{cor:rsme-lb-mod}, this replaces condition $(\pr{t})$ with the weaker condition that $\epsilon = \tilde{\Omega}(n^{-1/2})$.

\begin{reptheorem}{thm:rslr-lb} [Lower Bounds for $\pr{rslr}$]
If $k, d$ and $n$ are polynomial in each other, $\epsilon < 1/2$ is such that $\epsilon = \tilde{\Omega}(n^{-1/2})$, $k = o(\sqrt{d})$ and $k = o(n^{1/6})$, then the $k\pr{-bpc}$ conjecture or $k\pr{-bpds}$ conjecture for constant $0 < q < p \le 1$ both imply that there is a computational lower bound for $\pr{rslr}(n, k, d, \tau, \epsilon)$ at all sample complexities $n = \tilde{o}(k^2 \epsilon^2/\tau^4)$.
\end{reptheorem}


\section{Community Recovery and Partition Models}
\label{sec:3-all-community}

In this section, we devise several reductions based on $\pr{Bern-Rotations}$ and $\pr{Tensor-Bern-Rotations}$ using the design matrices and tensors from Section \ref{sec:2-bernoulli-rotations} to reduce from $k\pr{-pc}, k\pr{-pds}, k\pr{-bpc}$ and $k\pr{-bpds}$ to dense stochastic block models, hidden partition models and semirandom planted dense subgraph. These reductions are briefly outlined in Section \ref{subsec:1-tech-design-matrices}.

Furthermore, the heuristic presented at the end of Section \ref{subsec:1-tech-design-matrices} predicts the computational barriers for the problems in this section. The $\ell_2$ norm of the matrix $\bE[X]$ corresponding to a $k\pr{-pc}$ instance is $\Theta(k)$, which is just below $\tilde{\Theta}(\sqrt{n})$ when this $k\pr{-pc}$ is near its computational barrier. Furthermore, it can be verified that the $\ell_2$ norm of the matrices $\bE[X]$ corresponding to the problems in this section are:
\begin{itemize}
\item If $\gamma = P_{11} - P_0$ in the $\pr{isbm}$ notation of Section \ref{subsec:1-problems-sbm}, then a direct calculation yields that the $\ell_2$ norm corresponding to $\pr{isbm}$ is $\Theta(n\gamma/k)$.
\item In $\pr{ghpm}$ and $\pr{bhpm}$, the corresponding $\ell_2$ norm can be verified to be $\Theta(K\gamma\sqrt{r})$.
\item In our adversarial construction for $\pr{semi-cr}$, the corresponding $\ell_2$ norm is $\Theta(k \gamma)$ where $\gamma = P_1 - P_0$.
\end{itemize}
Following the heuristic, setting these equal to $\tilde{\Theta}(\sqrt{n})$ yields the predicted computational barriers of $\gamma^2 = \tilde{\Theta}(k^2/n)$ in $\pr{isbm}$, $\gamma^2 = \tilde{\Theta}(n/rK^2)$ in $\pr{ghpm}$ and $\pr{bhpm}$ and $\gamma^2 = \tilde{\Theta}(n/k^2)$ in $\pr{semi-cr}$. We now present our reduction to $\pr{isbm}$.

\subsection{Dense Stochastic Block Models with Two Communities}
\label{sec:3-community}

We begin by recalling the definition of the imbalanced 2-block stochastic block model from Section \ref{subsec:1-problems-sbm}.

\begin{definition}[Imbalanced 2-Block Stochastic Block Model]
Let $k$ and $n$ be positive integers such that $k$ divides $n$. The distribution $\pr{isbm}_D(n, k, P_{11}, P_{12}, P_{22})$ over $n$-vertex graphs $G$ is sampled by first choosing an $(n/k)$-subset $C \subseteq [n]$ uniformly at random and sampling the edges of $G$ independently with the following probabilities
$$\bP\left[ \{i, j \} \in E(G) \right] = \left\{ \begin{array}{ll} P_{11} &\textnormal{if } i, j \in C \\ P_{12} &\textnormal{if exactly one of } i, j \textnormal{ is in } C \\ P_{22} &\textnormal{if } i, j \in [n] \backslash C \end{array} \right.$$
\end{definition}

Given a subset $C \subseteq [n]$ of size $n/k$, we let $\pr{isbm}_D(n, C, P_{11}, P_{12}, P_{22})$ denote $\pr{isbm}$ as defined above conditioned on the latent subset $C$.  As discussed in Section \ref{subsec:2-formulations}, this naturally leads to a composite hypothesis testing problem between
$$H_0 : G \sim \mG\left(n, P_0 \right) \quad \text{and} \quad H_1 : G \sim \pr{isbm}_D(n, k, P_{11}, P_{12}, P_{22})$$
where $P_0$ is any edge density in $(0, 1)$. This section is devoted to showing reductions from $k\pr{-pds}$ and $k\pr{-pc}$ to $\pr{isbm}$ formulated as this hypothesis testing problem. In particular, we will focus on $P_{11}, P_{12}, P_{22}$ and $P_0$ all of which are bounded away from $0$ and $1$ by a constant, and which satisfy that
\begin{equation} \label{eqn:deg-isbm}
P_0 = \frac{1}{k} \cdot P_{11} + \left( 1 - \frac{1}{k} \right) P_{12} = \frac{1}{k} \cdot P_{12} + \left( 1 - \frac{1}{k} \right) P_{22}
\end{equation}
These two constraints allow $P_{11}, P_{12}, P_{22}$ to be reparameterized in terms of a signal parameter $\gamma$ as
\begin{equation} \label{eqn:isbm-param}
P_{11} = P_0 + \gamma, \quad P_{12} = P_0 - \frac{\gamma}{k - 1} \quad \text{and} \quad P_{22} = P_0 + \frac{\gamma}{(k - 1)^2}
\end{equation}
There are two main reasons why we restrict to the parameter regime enforced by the density constraints in (\ref{eqn:deg-isbm}) -- it creates a model with nearly uniform expected degrees and which is a mean-field analogue of recovering the first community in the $k$-block stochastic block model.
\begin{itemize}
\item \textit{Nearly Uniform Expected Degrees}: Observe that, conditioned on $C$, the expected degree of a vertex $i \in [n]$ in $\pr{isbm}(n, k, P_{11}, P_{12}, P_{22})$ is given by
$$\bE\left[ \deg(i) | C \right] = \left\{ \begin{array}{ll} \left( \frac{n}{k} - 1 \right) \cdot P_{11} + \frac{n(k - 1)}{k} \cdot P_{12} &\textnormal{if } i \in C \\ \frac{n}{k} \cdot P_{12} + \left( \frac{n(k - 1)}{k} - 1 \right) \cdot P_{22} &\textnormal{if } i \in [n] \backslash C \end{array} \right.$$
Thus the density constraints in (\ref{eqn:deg-isbm}) ensure that these differ by at most $1$ from each other and from $(n - 1)P_0$. Thus all of the vertices in $\pr{isbm}(n, k, P_{11}, P_{12}, P_{22})$ and the $H_0$ model $\mG\left(n, P_0 \right)$ have approximately the same expected degree. This precludes simple degree and total edge thresholding tests that are optimal in models of single community detection that are not degree-corrected. As discussed in Section \ref{subsec:1-problems-semicr}, the planted dense subgraph model has a detection threshold that differs from the conjectured Kesten-Stigum threshold for recovery of the planted dense subgraph. Thus to obtain computational lower bounds for a hypothesis testing problem that give tight recovery lower bounds, calibrating degrees is crucial. The main result of this section can be viewed as showing approximate degree correction is sufficient to obtain the Kesten-Stigum threshold for $\pr{isbm}$ through a reduction from $k\pr{-pds}$ and $k\pr{-pc}$.
\item \textit{Mean-Field Analogue of First Community Recovery in $k\pr{-sbm}$}: As discussed in Section \ref{subsec:1-problems-sbm}, the imbalanced 2-block stochastic block model $\pr{isbm}_D(n, k, P_{11}, P_{12}, P_{22})$ is roughly a mean-field analogue of recovering the first community $C_1$ in a $k$-block stochastic block model. More precisely, consider a graph $G$ wherein the vertex set $[n]$ is partitioned into $k$ latent communities $C_1, C_2, \dots, C_k$ each of size $n/k$ and edges are then included in the graph $G$ independently such that intra-community edges appear with probability $p$ while inter-community edges appear with probability $q < p$. The distribution $\pr{isbm}_D(n, k, P_{11}, P_{12}, P_{22})$ can be viewed as a mean-field analogue of recovering a first community $C = C_1$ in the $k$-block model, when
$$P_{11} = p, \quad P_{12} = q \quad \text{and} \quad P_{22} = \frac{1}{k - 1} \cdot p + \left(1 - \frac{1}{k - 1} \right) q$$
Here, $P_{22}$ approximately corresponds to the average edge density on the subgraph of the $k$-block model restricted to $[n] \backslash C_1$. This analogy between $\pr{isbm}$ and $k\pr{-sbm}$ is also why we choose to parameterize $\pr{isbm}$ in terms of $k$ rather than the size $n/k$ of $C$.
\end{itemize}

As discussed in Section \ref{subsec:1-problems-sbm}, if $k = o(\sqrt{n})$, the conjectured recovery threshold for efficient recovery in $k\pr{-sbm}$ is the Kesten-Stigum threshold of
$$\frac{(p - q)^2}{q(1 - q)} \gtrsim \frac{k^2}{n}$$
while the statistically optimal rate of recovery is when this level of signal is instead $\tilde{\Omega}(k^4/n^2)$. Furthermore, the information-theoretic threshold and conjectured computational barrier are the same for $\pr{isbm}$ in the regime defined by (\ref{eqn:deg-isbm}). Parameterizing $\pr{isbm}$ in terms of $\gamma$ as in (\ref{eqn:isbm-param}), the Kesten-Stigum threshold can be expressed as $\gamma^2 = \tilde{\Omega}(k^2/n)$. The objective of this section is give a reduction from $k\pr{-pds}$ to $\pr{isbm}$ in the dense regime with $\min\{P_0, 1 - P_0\} = \Omega(1)$ up to the Kesten-Stigum threshold.

The first reduction of this section $k$\textsc{-pds-to-isbm} is shown in Figure \ref{fig:isbm-reduction} and maps to the case where $P_0 = 1/2$ and (\ref{eqn:isbm-param}) is only approximately true. In a subsequent corollary, a simple modification of this reduction will map to all $P_0$ with $\min\{P_0, 1 - P_0\} = \Omega(1)$ and show (\ref{eqn:isbm-param}) holds exactly. The following theorem establishes the approximate Markov transition properties of $k$\textsc{-pds-to-isbm}. The proof of this theorem follows a similar structure to the proof of Theorem \ref{thm:isgmreduction}. Recall that $\Phi(x) = \frac{1}{\sqrt{2\pi}} \int_{-\infty}^x e^{-x^2/2} dx$ denotes the standard normal CDF.

\begin{theorem}[Reduction to $\pr{isbm}$] \label{thm:isbm}
Let $N$ be a parameter and $r = r(N) \ge 2$ be a prime number. Fix initial and target parameters as follows:
\begin{itemize}
\item \textnormal{Initial} $k\pr{-bpds}$ \textnormal{Parameters:} vertex count $N$, subgraph size $k = o(N)$ dividing $N$, edge probabilities $0 < q < p \le 1$ with $\min\{q, 1 - q\} = \Omega(1)$ and $p - q \ge N^{-O(1)}$, and a partition $E$ of $[N]$.
\item \textnormal{Target} $\pr{isbm}$ \textnormal{Parameters:} $(n, r)$ where $\ell = \frac{r^t - 1}{r - 1}$ and $n = kr\ell$ for some parameter $t = t(N) \in \mathbb{N}$ satisfying that that
$$m \le kr^t \le kr\ell \le \textnormal{poly}(N)$$
where $m$ is the smallest multiple of $k$ larger than $\left( \frac{p}{Q} + 1 \right) N$ and where
$$Q = 1 - \sqrt{(1 - p)(1 - q)} + \mathbf{1}_{\{ p = 1\}} \left( \sqrt{q} - 1 \right)$$
\item \textnormal{Target} $\pr{isbm}$ \textnormal{Edge Strengths:} $(P_{11}, P_{12}, P_{22})$ given by
$$P_{11} = \Phi\left( \frac{\mu(r - 1)^2}{r^{t +1}}\right), \quad P_{12} = \Phi\left( - \frac{\mu(r - 1)}{r^{t+1}}\right) \quad \textnormal{and} \quad P_{22} = \Phi\left( \frac{\mu}{r^{t +1}}\right)$$
where $\mu \in (0, 1)$ satisfies that
$$\mu \le \frac{1}{2 \sqrt{6\log (kr\ell) + 2\log (p - Q)^{-1}}} \cdot \min \left\{ \log \left( \frac{p}{Q} \right), \log \left( \frac{1 - Q}{1 - p} \right) \right\}$$
\end{itemize}
Let $\mathcal{A}(G)$ denote $k$\textsc{-pds-to-isbm} applied to the graph $G$ with these parameters. Then $\mathcal{A}$ runs in $\textnormal{poly}(N)$ time and it follows that
\begin{align*}
\TV\left( \mathcal{A}\left( \mG_E(N, k, p, q) \right), \, \pr{isbm}_D(n, r, P_{11}, P_{12}, P_{22}) \right) &= O\left( \frac{k}{\sqrt{N}} + e^{-\Omega(N^2/km)} + (kr\ell)^{-1} \right) \\
\TV\left( \mathcal{A}\left( \mG(N, q) \right), \, \mG(n, 1/2) \right) &= O\left( e^{-\Omega(N^2/km)} + (kr\ell)^{-1} \right)
\end{align*}
\end{theorem}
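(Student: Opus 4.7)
The plan is to decompose $\mathcal{A} = k\pr{-pds-to-isbm}$ into three steps: a preprocessing step $\mathcal{A}_1$ via $\pr{To-}k\pr{-Partite-Submatrix}$, which produces an $m \times m$ Bernoulli matrix with planted $k$-partite one-per-part submatrix structure and diagonal filled in; a block-wise Bernoulli rotations step $\mathcal{A}_2$ that, for each pair $(E_i', E_j')$ of parts, vectorizes the corresponding $r^t \times r^t$ sub-block (after a further deterministic padding to $k r^t \times k r^t$) and applies $\pr{Bern-Rotations}$ with design matrix $K_{r,t} \otimes K_{r,t}$; and a thresholding step $\mathcal{A}_3$ that sets each output entry to $\mathbf{1}\{x > 0\}$ to produce an $n \times n$ adjacency matrix on $n = k r \ell$ vertices. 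I would then apply Lemma \ref{lem:tvacc} with intermediate distributions to combine per-step total variation bounds.

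For Step $\mathcal{A}_1$, Lemma \ref{lem:submatrix} yields error $O(k e^{-\Omega(N^2/km)} + k/\sqrt{N})$ under $H_1$ and $O(k e^{-\Omega(N^2/km)})$ under $H_0$, matching the two leading error terms in the theorem. For Step $\mathcal{A}_2$, I invoke Corollary \ref{cor:tensor-bern-rotations} on each of the $k^2$ vectorized blocks using the singular-value bound $\lambda = \sqrt{1 + 1/(r-1)}$ from Lemma \ref{lem:Krtsv} and rejection kernel parameter $R_{\pr{rk}} = \textnormal{poly}(k r \ell)$; tensorizing over the $k^2$ blocks and absorbing factors of $r^{2t}$ gives total error $O((kr\ell)^{-1})$. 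Under $H_1$ with clique vertices $c_1, \dots, c_k$ (one per part), the output of $\mathcal{A}_2$ on block $(i,j)$ is approximately $\mathcal{N}(\nu_{ij}, 1)^{\otimes r\ell \times r\ell}$ with $\nu_{ij} = \mu \lambda^{-2} \cdot (K_{r,t})_{\cdot, c_i} (K_{r,t})_{\cdot, c_j}^\top$; under $H_0$ it is approximately $\mathcal{N}(0,1)^{\otimes r\ell \times r\ell}$. Step $\mathcal{A}_3$ introduces no additional total variation error, since thresholding $\mathcal{N}(\nu, 1)$ at zero produces $\textnormal{Bern}(\Phi(\nu))$ exactly.

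To identify the thresholded output with $\pr{isbm}_D(n, r, P_{11}, P_{12}, P_{22})$, I would use Lemma \ref{lem:suborthogonalmatrices}(3): each column of $K_{r,t}$ has exactly $\ell$ entries equal to $(1 - r)/\sqrt{r^t(r-1)}$ and $(r-1)\ell$ entries equal to $1/\sqrt{r^t(r-1)}$, so the outer product $(K_{r,t})_{\cdot, c_i}(K_{r,t})_{\cdot, c_j}^\top$ takes exactly three values. Substituting $\mu \lambda^{-2} = \mu(r-1)/r$ and applying $\Phi$ produces precisely the three target edge strengths $P_{11}, P_{12}, P_{22}$, and under $H_0$ the thresholding yields $\mG(n, 1/2)$ with no error beyond $\mathcal{A}_2$. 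The main obstacle, and the point I would want to verify carefully, is the cross-block consistency of the induced community structure: the ``community'' rows within part $E_i'$ must be determined by $c_i$ alone (namely, the $\ell$ hyperplane rows of $K_{r,t}$ containing $P_{c_i}$), independent of the partner part $j$, so that the block-wise community patterns assemble into a single global community $C \subseteq [n]$ of size $k\ell = n/r$. This follows because the column index $c_i$ used in every block $(i, \cdot)$ is the same and the Kronecker product factorizes the outer-product structure across parts, at which point applying Lemma \ref{lem:tvacc} to the three-step chain delivers the stated total variation bounds.
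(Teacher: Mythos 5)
Your proposal is correct and follows essentially the same route as the paper's proof. The paper splits the padding into its own exact step (so it has a four-step decomposition rather than your three), and makes the cross-block consistency precise via the notation $v_{S, F', F''}(K_{r, t})$: after stacking all blocks, the mean matrix is the symmetric rank-one matrix $\frac{\mu(r-1)}{r} \cdot v v^\top$ where $v$ concatenates the columns $(K_{r,t})_{\cdot, c_i}$, which is exactly the factorization you invoke informally; otherwise the lemmas you cite (\ref{lem:submatrix}, \ref{cor:tensor-bern-rotations}, \ref{lem:Krtsv}, \ref{lem:suborthogonalmatrices}(3)) and the per-step error accounting match the paper's Lemmas \ref{lem:isbm-rotations} and \ref{lem:thresholding-isbm}.
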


\begin{figure}[t!]
\begin{algbox}
\textbf{Algorithm} $k$\textsc{-pds-to-isbm}

\vspace{1mm}

\textit{Inputs}: $k$\pr{-pds} instance $G \in \mG_N$ with dense subgraph size $k$ that divides $N$, and the following parameters
\begin{itemize}
\item partition $E$ of $[N]$ into $k$ parts of size $N/k$, edge probabilities $0 < q < p \le 1$
\item let $m$ be the smallest multiple of $k$ larger than $\left( \frac{p}{Q} + 1 \right) N$ where
$$Q = 1 - \sqrt{(1 - p)(1 - q)} + \mathbf{1}_{\{ p = 1\}} \left( \sqrt{q} - 1 \right)$$
\item output number of vertices $n = kr\ell$ where $r$ is a prime number $r$, $\ell = \frac{r^t - 1}{r - 1}$ for some $t \in \mathbb{N}$ and
$$m \le kr^t \le kr\ell \le \text{poly}(N)$$
\item mean parameter $\mu \in (0, 1)$ satisfying that
$$\mu \le \frac{1}{2 \sqrt{6\log n + 2\log (p - Q)^{-1}}} \cdot \min \left\{ \log \left( \frac{p}{Q} \right), \log \left( \frac{1 - Q}{1 - p} \right) \right\}$$
\end{itemize}

\begin{enumerate}
\item \textit{Symmetrize and Plant Diagonals}: Compute $M_{\text{PD1}} \in \{0, 1\}^{m \times m}$ with partition $F$ of $[m]$ as
$$M_{\text{PD1}} \gets \pr{To-}k\textsc{-Partite-Submatrix}(G)$$
applied with initial dimension $N$, partition $E$, edge probabilities $p$ and $q$ and target dimension $m$.
\item \textit{Pad}: Form $M_{\text{PD2}} \in \{0, 1\}^{kr^t \times kr^t}$ by embedding $M_{\text{PD1}}$ as the upper left principal submatrix of $M_{\text{PD2}}$ and then adding $kr^t - m$ new indices for columns and rows, with all missing entries sampled i.i.d. from $\text{Bern}(Q)$. Let $F'_i$ be $F_i$ with $r^t - m/k$ of the new indices. Sample $k$ random permutations $\sigma_i$ of $F_i'$ independently for each $1 \le i \le k$ and permute the indices of the rows and columns of $M_{\text{PD2}}$ within each part $F'_i$ according to $\sigma_i$.
\item \textit{Bernoulli Rotations}: Let $F''$ be a partition of $[kr\ell]$ into $k$ equally sized parts. Now compute the matrix $M_{\text{R}} \in \mathbb{R}^{kr\ell \times kr\ell}$ as follows:
\begin{enumerate}
\item[(1)] For each $i, j \in [k]$, apply $\pr{Tensor-Bern-Rotations}$ to the matrix $(M_{\text{PD2}})_{F_i', F_j'}$ with matrix parameter $A_1 = A_2 = K_{r, t}$, rejection kernel parameter $R_{\pr{rk}} = kr\ell$, Bernoulli probabilities $0 < Q < p \le 1$, output dimension $r\ell$, $\lambda_1 = \lambda_2 = \sqrt{1 + (r - 1)^{-1}}$ and mean parameter $\mu$.
\item[(2)] Set the entries of $(M_{\text{R}})_{F''_i, F''_j}$ to be the entries in order of the matrix output in (1).
\end{enumerate}
\item \textit{Threshold and Output}: Now construct the graph $G'$ with vertex set $[kr\ell]$ such that for each $i > j$ with $i, j \in [kr\ell]$, we have $\{i, j \} \in E(G')$ if and only if $(M_{\text{R}})_{ij} \ge 0$. Output $G'$ with randomly permuted vertex labels.
\end{enumerate}
\vspace{0.5mm}

\end{algbox}
\caption{Reduction from $k$-partite planted dense subgraph to the dense imbalanced 2-block stochastic block model.}
\label{fig:isbm-reduction}
\end{figure}

To prove this theorem, we begin by proving a lemma analyzing the dense Bernoulli rotations step of $k$\textsc{-pds-to-isbm}. Define $v_{S, F', F''}(M)$ as in Section \ref{subsec:3-rsme-reduction}. The proof of the next lemma follows similar steps to the proof of Lemma \ref{lem:isgm-rotations}.

\begin{lemma}[Bernoulli Rotations for $\pr{isbm}$] \label{lem:isbm-rotations}
Let $F'$ and $F''$ be a fixed partitions of $[kr^t]$ and $[kr\ell]$ into $k$ parts of size $r^t$ and $r\ell$, respectively, and let $S \subseteq [kr^t]$ where $|S \cap F_i'| = 1$ for each $1 \le i \le k$. Let $\mathcal{A}_{\textnormal{3}}$ denote Step 3 of $k\pr{-pds-to-isbm}$ with input $M_{\textnormal{PD2}}$ and output $M_{\textnormal{R}}$. Suppose that $p, Q$ and $\mu$ are as in Theorem \ref{thm:isbm}, then it follows that
\begin{align*}
&\TV\Big( \mathcal{A}_{\textnormal{3}} \left( \mathcal{M}_{[kr^t] \times [kr^t]} \left( S \times S, \textnormal{Bern}(p), \textnormal{Bern}(Q) \right) \right), \\
&\quad \quad \quad \quad \left. \mL\left( \frac{\mu(r -1)}{r} \cdot v_{S, F', F''}(K_{r, t}) v_{S, F', F''}(K_{r, t})^\top + \mN(0, 1)^{\otimes kr\ell \times kr\ell} \right) \right) = O\left((kr\ell)^{-1}\right) \\
&\TV\left( \mathcal{A}_{\textnormal{3}} \left(\textnormal{Bern}(Q)^{\otimes kr^t \times kr^t} \right), \, \mN(0, 1)^{\otimes kr\ell \times kr\ell} \right) = O\left((kr\ell)^{-1}\right)
\end{align*}
\end{lemma}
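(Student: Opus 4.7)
The plan is to decompose the action of Step 3 into $k^2$ independent applications of $\pr{Tensor-Bern-Rotations}$ on disjoint $r^t \times r^t$ sub-matrices of $M_{\textnormal{PD2}}$, invoke Corollary \ref{cor:tensor-bern-rotations} once per block, and then reassemble via the tensorization property of total variation from Fact \ref{tvfacts}. First I would observe that when $M_{\textnormal{PD2}} \sim \mathcal{M}_{[kr^t] \times [kr^t]}(S \times S, \textnormal{Bern}(p), \textnormal{Bern}(Q))$ with $|S \cap F'_i| = 1$ for each $i$, the blocks $(M_{\textnormal{PD2}})_{F'_i, F'_j}$ for $(i,j) \in [k]^2$ are mutually independent, and each is a planted-bit order-$2$ tensor: the single entry at position $(s_i, s_j)$, where $\{s_i\} = S \cap F'_i$, is $\textnormal{Bern}(p)$ and the remaining $r^{2t} - 1$ entries are i.i.d.\ $\textnormal{Bern}(Q)$. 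Hence its flattening input to $\pr{Tensor-Bern-Rotations}$ is exactly $\pr{pb}(r^{2t}, e, p, Q)$ for the index $e$ corresponding to $(s_i, s_j)$.

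Next I would apply Corollary \ref{cor:tensor-bern-rotations} with $s = 2$, input dimension $n = r^t$, output dimension $m = r\ell$, matrices $A_1 = A_2 = K_{r,t}$, and spectral bounds $\lambda_1 = \lambda_2 = \sqrt{1 + (r-1)^{-1}}$, which are valid by Lemma \ref{lem:Krtsv}. The rejection kernel parameter is $R_{\pr{rk}} = kr\ell$, and the hypothesis $\mu \le \frac{\delta}{2\sqrt{6 \log(kr\ell) + 2 \log(p-Q)^{-1}}}$ from Theorem \ref{thm:isbm} is precisely the condition needed by Lemma \ref{lem:5c} (after absorbing $\lambda_1 \lambda_2 = r/(r-1) = O(1)$). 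The corollary yields that the block output, flattened, is within total variation $O(r^{2t} (kr\ell)^{-3})$ of
\[
\mN\bigl(\mu (\lambda_1 \lambda_2)^{-1} \cdot (K_{r,t})_{\cdot, s_i} \otimes (K_{r,t})_{\cdot, s_j}, \; I_{(r\ell)^2}\bigr),
\]
and since $(\lambda_1\lambda_2)^{-1} = (r-1)/r$, reshaping the tensor product as an $r\ell \times r\ell$ matrix deposits the outer product $\tfrac{\mu(r-1)}{r}\,(K_{r,t})_{\cdot,s_i}\,(K_{r,t})_{\cdot,s_j}^\top$ plus i.i.d.\ standard Gaussian noise on the $(F''_i, F''_j)$ block of $M_{\textnormal{R}}$.

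Then, by independence of the $k^2$ blocks and the tensorization property of $\TV$, the full $M_{\textnormal{R}}$ is within total variation $O(k^2 \cdot r^{2t} (kr\ell)^{-3}) = O((kr\ell)^{-1})$ of the target mixture, where the simplification uses $r\ell \ge r^t$ (from $\ell \ge r^{t-1}$). The mean matrix is exactly $\tfrac{\mu(r-1)}{r}\, v_{S,F',F''}(K_{r,t})\, v_{S,F',F''}(K_{r,t})^\top$ because, by definition of $v_{S,F',F''}(K_{r,t})$, the $(F''_i, F''_j)$ block of the outer product $v v^\top$ equals precisely $(K_{r,t})_{\cdot,s_i}(K_{r,t})_{\cdot,s_j}^\top$. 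For the second statement (under $H_0$) the same argument applies verbatim with $S = \emptyset$, producing zero mean on every block and total variation bound $O((kr\ell)^{-1})$ against $\mN(0,1)^{\otimes kr\ell \times kr\ell}$.

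The only real obstacle is bookkeeping: verifying that the reshaped block-wise Gaussian means, once placed in their correct $(F''_i, F''_j)$ positions, precisely reconstruct the rank-one matrix $\tfrac{\mu(r-1)}{r}\, v_{S,F',F''}(K_{r,t})\, v_{S,F',F''}(K_{r,t})^\top$ rather than a permuted variant, and that the $k^2$-fold accumulation of $O(r^{2t}(kr\ell)^{-3})$ errors collapses to $O((kr\ell)^{-1})$. Both are direct once the definitions of $v_{S,F',F''}$ and the flattening convention inside $\pr{Tensor-Bern-Rotations}$ are aligned.
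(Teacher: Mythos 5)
Your proposal is correct and follows essentially the same route as the paper's proof: decompose $M_{\textnormal{PD2}}$ into the $k^2$ independent $r^t \times r^t$ blocks indexed by $(i,j) \in [k]^2$, recognize each as a planted-bit distribution $\pr{pb}(F_i' \times F_j', (S\cap F_i', S \cap F_j'), p, Q)$, apply Corollary~\ref{cor:tensor-bern-rotations} with $A_1=A_2=K_{r,t}$ and $\lambda_1 = \lambda_2 = \sqrt{1+(r-1)^{-1}}$ from Lemma~\ref{lem:Krtsv} to get per-block error $O(r^{2t}(kr\ell)^{-3})$, tensorize over the $k^2$ blocks using Fact~\ref{tvfacts} and $r\ell \ge r^t$ to arrive at $O((kr\ell)^{-1})$, and verify the block structure assembles to the rank-one mean $\frac{\mu(r-1)}{r}\, v_{S,F',F''}(K_{r,t})\, v_{S,F',F''}(K_{r,t})^\top$; the $H_0$ case is the same argument with an unplanted matrix. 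One minor quibble: the parenthetical ``after absorbing $\lambda_1\lambda_2 = O(1)$'' isn't needed — the bound on $\mu$ in Theorem~\ref{thm:isbm} applies to the mean parameter as fed directly into $\pr{Gaussianize}$, and the factor $(\lambda_1\lambda_2)^{-1} = (r-1)/r$ only rescales the output mean, not the rejection-kernel condition — but this does not affect the validity of the argument.
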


\begin{proof}
First consider the case where $M_{\textnormal{PD2}} \sim \mathcal{M}_{[kr^t] \times [kr^t]} \left( S \times S, \textnormal{Bern}(p), \textnormal{Bern}(Q) \right)$. Observe that the submatrices of $M_{\textnormal{PD2}}$ are distributed as follows
$$(M_{\textnormal{PD2}})_{F_i', F_j'} \sim \pr{pb}\left(F_i' \times F_j', (S \cap F_i', S \cap F_j'), p, Q\right)$$
and are independent. Combining upper bound on the singular values of $K_{r, t}$ in Lemma \ref{lem:Krtsv} with Corollary \ref{cor:tensor-bern-rotations} implies that
$$\TV\left( (M_{\textnormal{R}})_{F''_i, F''_j}, \, \mL\left( \frac{\mu(r -1)}{r} \cdot (K_{r, t})_{\cdot, S \cap F_i'} (K_{r, t})_{\cdot, S \cap F_j'}^\top + \mN(0, 1)^{\otimes r\ell \times r\ell} \right) \right) = O\left(r^{2t} \cdot (kr\ell)^{-3} \right)$$
Since the submatrices $(M_{\textnormal{R}})_{F''_i, F''_j}$ are independent, the tensorization property of total variation in Fact \ref{tvfacts} implies that $\TV\left( M_{\textnormal{R}}, \mL(Z) \right) = O\left(k^2r^{2t} \cdot (kr\ell)^{-3} \right) = O\left((kr\ell)^{-1}\right)$ where the submatrices $Z_{F''_i, F_j''}$ are independent and satisfy
$$Z_{F''_i, F_j''} \sim \mL\left( \frac{\mu(r -1)}{r} \cdot (K_{r, t})_{\cdot, S \cap F_i'} (K_{r, t})_{\cdot, S \cap F_j'}^\top + \mN(0, 1)^{\otimes r\ell \times r\ell} \right)$$
Note that the entries of $Z$ are independent Gaussians each with variance $1$ and $Z$ has mean given by $\mu(1 + r^{-1}) \cdot v_{S, F', F''}(K_{r, t}) v_{S, F', F''}(K_{r, t})^\top$, by the definition of $v_{S, F', F''}(K_{r, t})$. This proves the first total variation upper bound in the statement of the lemma. Now suppose that $M_{\textnormal{PD2}} \sim \textnormal{Bern}(Q)^{\otimes kr^t \times kr^t}$. Corollary \ref{cor:tensor-bern-rotations} implies that
$$\TV\left( (M_{\textnormal{R}})_{F''_i, F''_j}, \, \mN(0, 1)^{\otimes r\ell \times r\ell} \right) = O\left(r^{2t} \cdot (kr\ell)^{-3} \right)$$
for each $1 \le i, j \le k$. Since the submatrices $(M_{\textnormal{R}})_{F''_i, F''_j}$ of $M_{\textnormal{R}}$ are independent, it follows that
$$\TV\left( M_{\textnormal{R}}, \, \mN(0, 1)^{\otimes kr\ell \times kr\ell} \right) = O\left(k^2r^{2t} \cdot (kr\ell)^{-3} \right) = O\left((kr\ell)^{-1}\right)$$
by the tensorization property of total variation in Fact \ref{tvfacts}, completing the proof of the lemma.
\end{proof}

The next lemma is immediate but makes explicit the precise guarantees for Step 4 of $k\pr{-pds-to-isbm}$.

\begin{lemma}[Thresholding for $\pr{isbm}$] \label{lem:thresholding-isbm}
Let $F', F'', S$ and $T$ be as in Lemma \ref{lem:isbm-rotations}. Let $\mathcal{A}_{\textnormal{4}}$ denote Step 4 of $k\pr{-pds-to-isbm}$ with input $M_{\textnormal{R}}$ and output $G'$. Then
\begin{align*}
\mathcal{A}_{\textnormal{4}}\left( \frac{\mu(r -1)}{r} \cdot v_{S, F', F''}(K_{r, t}) v_{S, F', F''}(K_{r, t})^\top + \mN(0, 1)^{\otimes kr\ell \times kr\ell} \right) &\sim \pr{isbm}_D(kr\ell, r, P_{11}, P_{12}, P_{22}) \\
\mathcal{A}_{\textnormal{4}} \left( \mN(0, 1)^{\otimes kr\ell \times kr\ell} \right) &\sim \mG(kr\ell, 1/2)
\end{align*}
where $P_{11}, P_{12}$ and $P_{22}$ are as in Theorem \ref{thm:isbm}.
\end{lemma}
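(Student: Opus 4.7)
The plan is to directly unpack the distribution of the output graph $G'$ by analyzing how Step 4 converts the mean-plus-noise matrix $M_R$ into a random graph. The key observation is that the mean matrix $\frac{\mu(r-1)}{r} v v^\top$, where $v := v_{S,F',F''}(K_{r,t})$, is a rank-one matrix whose entries take exactly three distinct values, arranged in an $r$-block pattern induced by the two-valued structure of $v$.

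First I would identify the ``community'' $C \subseteq [kr\ell]$ induced by $v$. Part (3) of Lemma \ref{lem:suborthogonalmatrices} tells us that each column of $K_{r,t}$ contains exactly $\ell$ entries equal to $(1-r)/\sqrt{r^t(r-1)}$ and $(r-1)\ell$ entries equal to $1/\sqrt{r^t(r-1)}$. Since $v$ is formed by concatenating one column of $K_{r,t}$ inside each of the $k$ parts $F''_i$, it contains exactly $k\ell$ entries equal to $(1-r)/\sqrt{r^t(r-1)}$ and $k(r-1)\ell$ entries equal to $1/\sqrt{r^t(r-1)}$. Letting $C$ index the former set, we have $|C| = k\ell = (kr\ell)/r$, precisely the community size demanded by the distribution $\pr{isbm}_D(kr\ell, r, \cdot, \cdot, \cdot)$.

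Next I would compute the three possible values of the mean entries $m_{ij} := \frac{\mu(r-1)}{r} v_i v_j$ by direct algebra, obtaining $\mu(r-1)^2/r^{t+1}$ when $i,j \in C$, $-\mu(r-1)/r^{t+1}$ when exactly one of $i,j$ lies in $C$, and $\mu/r^{t+1}$ when $i,j \notin C$. These are exactly the arguments to $\Phi(\cdot)$ in the definitions of $P_{11}, P_{12}, P_{22}$ in Theorem \ref{thm:isbm}. Since $M_R$ has jointly independent entries distributed as $\mN(m_{ij}, 1)$, we have $\bP[(M_R)_{ij} \ge 0] = \Phi(m_{ij})$, which equals $P_{11}$, $P_{12}$, or $P_{22}$ respectively. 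Thresholding at $0$ in Step 4 over indices $i > j$ therefore produces a graph on $[kr\ell]$ whose edge indicators are jointly independent Bernoulli random variables with precisely the three-block structure of $\pr{isbm}_D$, conditioned on the community being exactly $C$. The final random vertex relabelling symmetrizes the location of $C$, inducing the uniform distribution over $(k\ell)$-subsets of $[kr\ell]$, and hence an unconditional sample from $\pr{isbm}_D(kr\ell, r, P_{11}, P_{12}, P_{22})$ by its definition.

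For the null case where $M_R \sim \mN(0,1)^{\otimes kr\ell \times kr\ell}$, the mean matrix vanishes and each edge indicator is independently $\textnormal{Bern}(\Phi(0)) = \textnormal{Bern}(1/2)$, yielding the Erd\H{o}s--R\'enyi graph $\mG(kr\ell, 1/2)$, which is invariant under the random vertex relabelling. There is no substantive obstacle here: the proof is essentially algebraic bookkeeping enabled by the design of $K_{r,t}$ in Section \ref{subsec:2-design-matrices}, which was tailored precisely so that the rank-one signal $vv^\top$ produces exactly three block values whose Gaussian quantiles match $P_{11}, P_{12}, P_{22}$.
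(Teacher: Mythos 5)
Your proof is correct and follows essentially the same route as the paper's: identify the $k\ell$ indices where $v_{S,F',F''}(K_{r,t})$ is negative (using part (3) of Lemma \ref{lem:suborthogonalmatrices}) as the planted community, compute the three block values of the rank-one mean matrix, observe they are exactly the arguments of $\Phi$ in $P_{11}, P_{12}, P_{22}$, and note that thresholding plus random relabelling produces $\pr{isbm}_D$. You spell out the block-value algebra slightly more explicitly than the paper does, but there is no substantive difference.
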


\begin{proof}
First observe that, since Lemma \ref{lem:suborthogonalmatrices} implies that each column of $K_{r, t}$ contains exactly $(r - 1)\ell$ entries equal to $1/\sqrt{r^t(r - 1)}$ and $\ell$ entries equal to $(1 - r)/\sqrt{r^t(r - 1)}$, it follows that $v_{S, F', F''}(K_{r, t})$ contains $k(r - 1)\ell$ entries equal to $1/\sqrt{r^t(r - 1)}$ and $k\ell$ entries equal to $(1 - r)/\sqrt{r^t(r - 1)}$. Therefore there is a subset $T \subseteq [kr\ell]$ with $|T| = k\ell$ such that the $kr\ell \times kr\ell$ mean matrix $Z = v_{S, F', F''}(K_{r, t}) v_{S, F', F''}(K_{r, t})^\top$ has entries
$$Z_{ij} = \frac{1}{r^t(r - 1)} \cdot \left\{ \begin{array}{ll}  (r - 1)^2 &\textnormal{if } i, j \in S \\ -(r - 1) &\textnormal{if } i \in S \text{ and } j \not \in S \text{ or } i \not \in S \text{ and } j \in S \\ 1 &\textnormal{if } i, j \not \in S \end{array} \right.$$
Since the vertices of $G'$ are randomly permuted, it follows by definition now that if
$$M_{\textnormal{R}} \sim \mL\left( \frac{\mu(r -1)}{r} \cdot v_{S, F', F''}(K_{r, t}) v_{S, F', F''}(K_{r, t})^\top + \mN(0, 1)^{\otimes kr\ell \times kr\ell} \right)$$
then $G' \sim \pr{isbm}_D(kr\ell, k\ell, P_{11}, P_{12}, P_{22})$, proving the first distributional equality in the lemma. The second distributional equality follows from the fact that $\Phi(0) = 1/2$.
\end{proof}

We now complete the proof of Theorem \ref{thm:isbm} using a similar application of Lemma \ref{lem:tvacc} as in the proof of Theorem \ref{thm:isgmreduction}.

\begin{proof}[Proof of Theorem \ref{thm:isbm}]
We apply Lemma \ref{lem:tvacc} to the steps $\mathcal{A}_i$ of $\mathcal{A}$ under each of $H_0$ and $H_1$. Define the steps of $\mathcal{A}$ to map inputs to outputs as follows
$$(G, E) \xrightarrow{\mathcal{A}_1} (M_{\text{PD1}}, F) \xrightarrow{\mathcal{A}_2} (M_{\text{PD2}}, F') \xrightarrow{\mathcal{A}_3} (M_{\text{R}}, F'') \xrightarrow{\mathcal{A}_{\text{4}}} G'$$
Under $H_1$, consider Lemma \ref{lem:tvacc} applied to the following sequence of distributions
\allowdisplaybreaks
\begin{align*}
\mathcal{P}_0 &= \mG_E(N, k, p, q) \\
\mathcal{P}_1 &= \mathcal{M}_{[m] \times [m]}(S \times S, \textnormal{Bern}(p), \textnormal{Bern}(Q)) \quad \text{where } S \sim \mU_m(F) \\
\mathcal{P}_2 &= \mathcal{M}_{[kr^t] \times [kr^t]}(S \times S, \textnormal{Bern}(p), \textnormal{Bern}(Q)) \quad \text{where } S \sim \mU_{kr^t}(F') \\
\mathcal{P}_3 &= \frac{\mu(r -1)}{r} \cdot v_{S, F', F''}(K_{r, t}) v_{S, F', F''}(K_{r, t})^\top + \mN(0, 1)^{\otimes kr\ell \times kr\ell} \quad \text{where } S \sim \mU_{kr^t}(F') \\
\mathcal{P}_{\text{4}} &= \pr{isbm}_D(kr\ell, r, P_{11}, P_{12}, P_{22})
\end{align*}
Applying Lemma \ref{lem:submatrix}, we can take
$$\epsilon_1 = 4k \cdot \exp\left( - \frac{Q^2N^2}{48pkm} \right) + \sqrt{\frac{C_Q k^2}{2m}}$$
where $C_Q = \max\left\{ \frac{Q}{1 - Q}, \frac{1 - Q}{Q} \right\}$. The step $\mathcal{A}_2$ is exact and we can take $\epsilon_2 = 0$. Applying Lemma \ref{lem:isbm-rotations} and averaging over $S \sim \mU_{kr^t}(F')$ using the conditioning property of total variation in Fact \ref{tvfacts} yields that we can take $\epsilon_3 = O\left((kr\ell)^{-1}\right)$. By Lemma \ref{lem:thresholding-isbm}, Step 4 is exact and we can take $\epsilon_4 = 0$. By Lemma \ref{lem:tvacc}, we therefore have that
$$\TV\left( \mathcal{A}\left( \mG_E(N, k, p, q) \right), \, \pr{isbm}(n, r, P_{11}, P_{12}, P_{22}) \right) = O\left( \frac{k}{\sqrt{N}} + e^{-\Omega(N^2/km)} + (kr\ell)^{-1} \right)$$
which proves the desired result in the case of $H_1$. Under $H_0$, consider the distributions
\allowdisplaybreaks
\begin{align*}
\mathcal{P}_0 &= \mG(N, q) \\
\mathcal{P}_1 &= \text{Bern}(Q)^{\otimes m \times m} \\
\mathcal{P}_2 &= \text{Bern}(Q)^{\otimes kr^t \times kr^t} \\
\mathcal{P}_3 &= \mN(0, 1)^{\otimes kr\ell \times kr\ell} \\
\mathcal{P}_{\text{4}} &= \mG(kr\ell, 1/2)
\end{align*}
As above, Lemmas \ref{lem:submatrix}, \ref{lem:isbm-rotations} and \ref{lem:thresholding-isbm} imply that we can take
$$\epsilon_1 = 4k \cdot \exp\left( - \frac{Q^2N^2}{48pkm} \right), \quad \epsilon_2 = 0, \quad \epsilon_3 = O\left((kr\ell)^{-1}\right) \quad \text{and} \quad \epsilon_{\text{4}} = 0$$
By Lemma \ref{lem:tvacc}, we therefore have that
$$\TV\left( \mathcal{A}\left( \mG(N, q) \right), \mG(n, 1/2) \right) = O\left( e^{-\Omega(N^2/kn)} + (kr\ell)^{-1} \right)$$
which completes the proof of the theorem.
\end{proof}

We now prove that a slight modification to this reduction will map to all $P_0$ with $\min\{P_0, 1 - P_0\} = \Omega(1)$ and to the setting where the density constraints in (\ref{eqn:isbm-param}) hold exactly.

\begin{corollary}[Reduction to Arbitrary $P_0$] \label{thm:isbm-mod}
Let $0 < q < p \le 1$ be constant and let $N, r, k, E, \ell$ and $n$ be as in Theorem \ref{thm:isbm} with the additional condition that $kr^{3/2} = o(r^{2t})$. Suppose that $P_0$ satisfies $\min\{P_0, 1 - P_0 \} = \Omega(1)$ and $\gamma \in (0, 1)$ satisfies that
$$\gamma \le \frac{c}{r^{t - 1} \sqrt{\log (k r \ell)}}$$
for a sufficiently small constant $c > 0$. Then there is a $\textnormal{poly}(N)$ time reduction $\mathcal{A}$ from graphs on $N$ vertices to graphs on $n$ vertices satisfying that
\begin{align*}
&\TV\left( \mathcal{A}\left( \mG_E(N, k, p, q) \right), \, \pr{isbm}_D\left(n, r, P_0 + \gamma, P_0 - \frac{\gamma}{k - 1}, P_0 + \frac{\gamma}{(k - 1)^2} \right) \right) \\
&\quad \quad = O\left( \frac{k \mu^3 r^{3/2}}{r^{2t}} + \frac{k}{\sqrt{N}} + e^{-\Omega(N^2/km)} + (kr\ell)^{-1} \right) \\
&\TV\left( \mathcal{A}\left( \mG(N, q) \right), \, \mG(n, P_0) \right) = O\left( e^{-\Omega(N^2/km)} + (kr\ell)^{-1} \right)
\end{align*}
\end{corollary}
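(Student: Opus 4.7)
The plan is to modify the reduction $k\pr{-pds-to-isbm}$ of Figure \ref{fig:isbm-reduction} at its final step, leaving Steps 1--3 untouched.

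First, to produce an arbitrary ambient density $P_0$ instead of $1/2$, shift the threshold in Step 4 from $0$ to $T := -\Phi^{-1}(P_0)$. This alone ensures that when $(M_R)_{ij} \sim \mN(0, 1)$, the edge indicator $\mathbf{1}_{(M_R)_{ij} \geq T}$ is $\mathrm{Bern}(P_0)$, accounting for the $H_0$ marginal. Second, to upgrade the approximate density relations of Theorem \ref{thm:isbm} to exact ones (namely (\ref{eqn:isbm-param})), replace the deterministic indicator $\mathbf{1}_{X \geq T}$ by a randomized map $\psi: \mathbb{R} \to \{0, 1\}$ whose induced Bernoulli probability $P(\mu) := \bE_{X \sim \mN(\mu, 1)} \psi(X)$ satisfies $P(0) = P_0$ and has vanishing quadratic Taylor coefficient $P''(0) = 0$. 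A concrete choice is $\psi(X) = \alpha\mathbf{1}_{X \geq T_1} + (1-\alpha)\mathbf{1}_{X \geq T_2}$ with $\alpha, T_1, T_2$ chosen to solve $P(0) = P_0$ and $\alpha T_1 \varphi(T_1) + (1-\alpha)T_2 \varphi(T_2) = 0$; such solutions exist for every $P_0$ bounded away from $\{0, 1\}$, and yield $P(\mu) = P_0 + c_1 \mu + O(\mu^3)$ for a fixed $c_1 = c_1(P_0) > 0$. Note that the original reduction's threshold at $0$ is already such a $\psi$ when $P_0 = 1/2$, since then $T = 0$ and the quadratic coefficient of $\Phi(\mu - T)$ vanishes automatically; so this second modification becomes nontrivial only when $P_0 \neq 1/2$.

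With $\psi$ in hand, the analysis of the modified Step 4 is a routine Taylor-expansion-plus-Pinsker calculation. Lemma \ref{lem:isbm-rotations} yields a Gaussian matrix whose entries have means $\mu_{11}, \mu_{12}, \mu_{22}$ in the fixed ratio $(r-1)^2 : -(r-1) : 1$ inherited from $K_{r,t} \otimes K_{r,t}$. Setting the mean parameter in Step 3 to $\mu = \gamma r^{t+1}/[c_1 (r-1)^2]$ aligns the linear term $c_1 \mu_{11} = \gamma$, and by the ratio structure automatically gives $c_1 \mu_{12} = -\gamma/(r-1)$ and $c_1 \mu_{22} = \gamma/(r-1)^2$; the target probabilities in the corollary are therefore hit with per-edge error $|P(\mu_{ij}) - P_{ij}^{\mathrm{tgt}}| = O(|\mu_{ij}|^3)$. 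By independence of edges conditional on the hidden community and Pinsker's inequality combined with $\KL(\mathrm{Bern}(P)\|\mathrm{Bern}(Q)) \asymp (P - Q)^2$ for $P_0$ bounded away from the endpoints,
\begin{equation*}
\TV \;\lesssim\; \sqrt{\sum_e \bigl(P_e^{\mathrm{out}} - P_e^{\mathrm{tgt}}\bigr)^2} \;\lesssim\; \sqrt{(k\ell)^2 \mu_{11}^6 + (k\ell)^2(r-1)\mu_{12}^6 + (k\ell)^2(r-1)^2 \mu_{22}^6} \;\lesssim\; \frac{k\mu^3 r^{3/2}}{r^{2t}},
\end{equation*}
where the last bound follows from $\ell \asymp r^{t-1}$ together with the ratios $\mu_{11} : \mu_{12} : \mu_{22}$ and the observation that the three summands are comparable after weighting by edge multiplicities.

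The remainder of the proof composes this Step 4 analysis with the unchanged guarantees for Steps 1--3 from the proof of Theorem \ref{thm:isbm} via Lemma \ref{lem:tvacc}, producing the three additional error terms $k/\sqrt{N}$, $e^{-\Omega(N^2/km)}$ and $(kr\ell)^{-1}$ identically. The main obstacle is the construction of $\psi$: one must verify that the moment-matching system defining $\alpha, T_1, T_2$ admits a solution with $\alpha \in [0, 1]$ and $T_1, T_2 \in \mathbb{R}$ for every constant $P_0 \in (0, 1)$ bounded away from the endpoints, and that the resulting probabilities $P(\mu_{ij})$ themselves lie in $[0, 1]$ throughout the parameter range permitted by the hypothesis on $\gamma$. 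Both checks follow from elementary continuity arguments together with the smallness $|\mu_{ij}| \lesssim (\log(kr\ell))^{-1/2}$ implied by the bound on $\gamma$ and the hypothesis $\min\{P_0, 1-P_0\} = \Omega(1)$; the auxiliary assumption $kr^{3/2} = o(r^{2t})$ in the corollary then guarantees that the dominant new error term $k\mu^3 r^{3/2}/r^{2t}$ is $o(1)$.
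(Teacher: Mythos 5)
Your approach is genuinely different from the paper's. Rather than modifying the threshold step inside $k\pr{-pds-to-isbm}$, the paper leaves Step~4 intact (threshold at $0$) and appends a post-processing step: if $P_0 \le 1/2$, keep each edge of the thresholded graph independently with probability $2P_0$. Since $\Phi(x) - 1/2$ is odd, the resulting per-entry map $\mu \mapsto 2P_0\Phi(\mu)$ automatically has vanishing quadratic Taylor coefficient, so the ``no $O(\mu^2)$ term'' property you engineer via a two-threshold mixture $\psi$ falls out for free. Both mechanisms are sound in principle, but yours requires a separate existence argument for $\psi$ that the paper sidesteps.

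However, there is a genuine gap in your Pinsker calculation, and it stems from a missing step that the paper does take. You align only the \emph{linear} term, so that $c_1\mu_{11} = \gamma$, leaving all three probabilities with an $O(\mu_{ij}^3)$ error. The paper instead chooses $\mu$ so that $P_{11} = P_0 + \gamma$ holds \emph{exactly} (via $\Phi^{-1}$), making the Type-$11$ error identically zero. This matters: your claim that the three summands $(k\ell)^2\mu_{11}^6$, $(k\ell)^2(r-1)\mu_{12}^6$, $(k\ell)^2(r-1)^2\mu_{22}^6$ are ``comparable after weighting by edge multiplicities'' is false. Since $\mu_{11} : \mu_{12} : \mu_{22} = (r-1)^2 : -(r-1) : 1$, the ratio of the first summand to the second is $\mu_{11}^6/((r-1)\mu_{12}^6) = (r-1)^5$, and the second to the third is likewise $(r-1)^5$; these are wildly incomparable when $r$ grows. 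The Type-$11$ term dominates, and with $\ell \asymp r^{t-1}$, $\mu_{11} \asymp \mu r/r^t$ it evaluates to $k^2\mu^6 r^{4-4t}$, whose square root is $k\mu^3 r^{2-2t}$ --- a factor $\sqrt{r}$ larger than the target $k\mu^3 r^{3/2-2t}$. Under the corollary's hypothesis $kr^{3/2} = o(r^{2t})$, this larger error is not guaranteed to vanish; you would instead need $kr^2 = o(r^{2t})$. The fix is to take $\mu$ as the preimage $P^{-1}(P_0+\gamma)$ scaled by $r^{t+1}/(r-1)^2$ rather than aligning only the linear term; then the Type-$11$ contribution vanishes, the Type-$12$ per-edge error becomes $O(\mu_{11}^3/(r-1))$ spread over $\asymp n^2(r-1)/r^2$ edges, and after Pinsker this dominates and gives exactly $O(k\mu^3 r^{3/2}/r^{2t})$ as stated.
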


\begin{proof}
Consider the reduction $\mathcal{A}$ that adds a simple post-processing step to $k$\textsc{-pds-to-isbm} as follows. On input graph $G$ with $N$ vertices:
\begin{enumerate}
\item Form the graph $G_1$ by applying $k$\textsc{-pds-to-isbm} to $G$ with parameters $N, r, k, E, \ell, n$ and $\mu$ where $\mu$ is given by
$$\mu = \frac{r^{t + 1}}{(r - 1)^2} \cdot \Phi^{-1}\left( \frac{1}{2} + \frac{1}{2} \cdot \min\{P_0, 1 - P_0\}^{-1} \cdot \gamma \right)$$
and $\Phi^{-1}$ is the inverse of the standard normal CDF.
\item If $P_0 \le 1/2$, output the graph $G_2$ formed by independently including each edge of $G_1$ in $G_2$ with probability $2P_0$. If $P_0 > 1/2$, form $G_2$ instead by including each edge of $G_1$ in $G_2$ and including each non-edge of $G_1$ in $G_2$ as an edge independently with probability $2P_0 - 1$.
\end{enumerate}
This clearly runs in $\text{poly}(N)$ time and it suffices to establish its approximate Markov transition properties. Let $\mathcal{A}_1$ and $\mathcal{A}_2$ denote the two steps above with input-output pairs $(G, G_1)$ and $(G_1, G_2)$, respectively. Let $C \subseteq [n]$ be a fixed subset of size $n/r$ and define
\begin{align*}
&P_{11} = \Phi\left( \frac{\mu(r - 1)^2}{r^{t +1}}\right), \quad P_{12} = \Phi\left( - \frac{\mu(r - 1)}{r^{t+1}}\right) \quad \textnormal{and} \quad P_{22} = \Phi\left( \frac{\mu}{r^{t +1}}\right) \\
&P_{11}' = P_0 + \gamma, \quad P_{12}' = P_0 - \frac{\gamma}{r - 1} \quad \text{and} \quad P_{22}' = P_0 + \frac{\gamma}{(r - 1)^2}
\end{align*}
We will show that
\begin{equation} \label{eqn:density-comparison}
\TV\left( \mathcal{A}_2\left( \pr{isbm}_D\left(n, C, P_{11}, P_{12}, P_{22} \right) \right), \, \pr{isbm}_D\left(n, C, P_{11}', P_{12}', P_{22}' \right) \right) = O\left( \frac{k \mu^3 r^{3/2}}{r^{2t}} \right) = o(1)
\end{equation}
where the upper bound is $o(1)$ since $kr^{3/2} = o(r^{2t})$. First consider the case where $P_0 \le 1/2$. Step 2 above yields by construction that
$$\mathcal{A}_2\left( \pr{isbm}_D\left(n, C, P_{11}, P_{12}, P_{22} \right) \right) \sim \pr{isbm}_D\left(n, C, 2P_0 P_{11}, 2P_0 P_{12}, 2P_0 P_{22} \right)$$
Suppose that $X(r) \in \{0, 1\}^m$ is sampled by first sampling $X' \sim \text{Bin}(m, r)$ and then letting $X$ be selected uniformly at random from all elements of $\{0, 1\}^m$ with support size $X'$. It follows that $X(r) \sim \text{Bern}(r)^{\otimes m}$ since both distributions are permutation-invariant and their support sizes have the same distribution. Now the data-processing inequality in Fact \ref{tvfacts} implies that
$$\TV\left( \text{Bern}(r)^{\otimes m}, \, \text{Bern}(r')^{\otimes m} \right) = \TV\left( X(r), X(r') \right) \le \TV\left( \text{Bin}(m, r), \text{Bin}(m, r') \right)$$
which can be upper bounded with Lemma \ref{lem:bintv}. Using the fact that the edge indicators of $\pr{isbm}$ conditioned on $C$ are independent, the tensorization property in Fact \ref{tvfacts} and Lemma \ref{lem:bintv}, we now have that
\allowdisplaybreaks
\begin{align*}
&\TV\left( \pr{isbm}_D\left(n, C, 2P_0 P_{11}, 2P_0 P_{12}, 2P_0 P_{22} \right), \, \pr{isbm}_D\left(n, C, P_{11}', P_{12}', P_{22}' \right) \right) \\
&\quad \quad \le \TV\left( \text{Bern}(2P_0 P_{11})^{\otimes \binom{n/r}{2}}, \, \text{Bern}(P_{11}')^{\otimes \binom{n/r}{2}} \right) + \TV\left( \text{Bern}(2P_0 P_{12})^{\otimes \frac{n^2(r - 1)}{r^2}}, \, \text{Bern}(P_{12}')^{\otimes \frac{n^2(r - 1)}{r^2}} \right) \\
&\quad \quad \quad \quad + \TV\left( \text{Bern}(2P_0 P_{22})^{\otimes \binom{n(1 - 1/r)}{2}}, \, \text{Bern}(P_{22}')^{\otimes \binom{n(1 - 1/r)}{2}} \right) \\
&\quad \quad \le \left| 2P_0 P_{11} - P_{11}' \right| \cdot \sqrt{\frac{\binom{n/r}{2}}{2P'_{11}(1 - P'_{11})}} + \left| 2P_0 P_{12} - P_{12}' \right| \cdot \sqrt{\frac{n^2(r - 1)}{2r^2 P'_{12}(1 - P'_{12})}} \\
&\quad \quad \quad \quad + \left| 2P_0 P_{22} - P_{22}' \right| \cdot \sqrt{\frac{\binom{n(1 - 1/r)}{2}}{2P'_{22}(1 - P'_{22})}} \\
&\quad \quad \le \left| 2P_0 P_{11} - P_{11}' \right| \cdot O\left( \frac{n}{r} \right) + \left| 2P_0 P_{12} - P_{12}' \right| \cdot O\left( \frac{n}{\sqrt{r}} \right) + \left| 2P_0 P_{22} - P_{22}' \right| \cdot O(n)
\end{align*}
where the third inequality uses the fact that $P'_{11}, P'_{12}$ and $P'_{22}$ are each bounded away from $0$ and $1$. Observe that the definition of $\mu$ ensures
$$\frac{1}{2} + \frac{1}{2P_0} \cdot \gamma = \Phi\left( \frac{\mu (r - 1)^2}{r^{t + 1}} \right)$$
which implies that $2P_0 P_{11} = P_{11}'$. We now use a standard Taylor approximation for the error function $\Phi(x) - 1/2$ around zero, given by $\Phi(x) = \frac{1}{2} + \frac{x}{\sqrt{2\pi}} + O(x^3)$ when $x \in (-1, 1)$. Observe that
\begin{align*}
\left| 2P_0 P_{12} - P_{12}' \right| &= 2P_0 \cdot \left| \Phi\left( - \frac{\mu(r - 1)}{r^{t+1}}\right) - \frac{1}{2} + \frac{\gamma}{2P_0 (r - 1)} \right| \\
&= 2P_0 \cdot \left| \Phi\left( - \frac{\mu(r - 1)}{r^{t+1}}\right) - \frac{1}{2} + \frac{1}{r - 1} \left( \Phi\left( \frac{\mu (r - 1)^2}{r^{t + 1}} \right) - \frac{1}{2} \right) \right| \\
&= O\left( \frac{\mu^3 r^2}{r^{3t}} \right)
\end{align*}
An analogous computation shows that $\left| 2P_0 P_{22} - P_{22}' \right| = O\left( \mu^3/r^{3t - 1} \right)$. Combining all of these bounds now yields Equation (\ref{eqn:density-comparison}) after noting that $n = kr\ell = O(kr^t)$ implies that $n\mu^3r^{3/2}/r^{3t} = O(kr^{3/2}/r^{2t})$. A nearly identical argument considering the complement of the graph $G_1$ and replacing with $P_0$ with $1 - P_0$ establishes Equation (\ref{eqn:density-comparison}) in the case when $P_0 > 1/2$. Now observe that
$$\mathcal{A}_2 \left( \mG(n, 1/2) \right) \sim \mG(n, P_0)$$
by definition. Now consider applying Lemma \ref{lem:tvacc} to the steps $\mathcal{A}_1$ and $\mathcal{A}_2$ using an analogous recipe as in the proof of Theorem \ref{thm:isbm}. We have that $\epsilon_1$ is bounded by Theorem \ref{thm:isbm} and $\epsilon_2$ is bounded by the argument above. Note that in order to apply Theorem \ref{thm:isbm} here, it must follow that the required bound on $\mu$ is met. Observe that
$$\gamma = 2P_0 \left( \Phi\left( \frac{\mu (r - 1)^2}{r^{t + 1}} \right) - \frac{1}{2} \right) = \Theta\left( \frac{\mu}{r^{t - 1}} \right)$$
and hence if $\gamma$ satisfies the upper bound in the statement of the corollary for a sufficiently small constant $c$, then $\mu$ satisfies the requirement in Theorem \ref{thm:isbm} since $p$ and $q$ are constant. This application of Lemma \ref{lem:tvacc} now yields the desired two approximate Markov transition properties and completes the proof of the corollary.
\end{proof}

We now show that setting parameters in the reduction of Corollary \ref{thm:isbm-mod} as in the recipe set out in Theorems \ref{thm:rsme-lb} and \ref{thm:uslr-lb} now shows that we can fill out the parameter space for $\pr{isbm}$ obeying the edge density constraints of (\ref{eqn:isbm-param}) below the Kesten-Stigum threshold. This proves the following computational lower bound for $\pr{isbm}$. We remark that typically the parameter regime of interest for the $k$-block stochastic block model is when $k = n^{o(1)}$, and thus the conditions \pr{(t)} and $k = o(n^{1/3})$ are only mild restrictions here. Note that the condition $\pr{(t)}$ here is the same condition that was introduced in Section \ref{subsec:3-rsme}.

\begin{reptheorem}{thm:isbm-lb} [Lower Bounds for $\pr{isbm}$]
Suppose that $(n, k)$ satisfy condition \pr{(t)}, that $k$ is prime or $k = \omega_n(1)$ and $k = o(n^{1/3})$, and suppose that $P_0 \in (0, 1)$ satisfies $\min\{P_0, 1 - P_0 \} = \Omega_n(1)$. Consider the testing problem $\pr{isbm}(n, k, P_{11}, P_{12}, P_{22})$ where
$$P_{11} = P_0 + \gamma, \quad P_{12} = P_0 - \frac{\gamma}{k - 1} \quad \text{and} \quad P_{22} = P_0 + \frac{\gamma}{(k - 1)^2}$$
Then the $k\pr{-pc}$ conjecture or $k\pr{-pds}$ conjecture for constant $0 < q < p \le 1$ both imply that there is a computational lower bound for $\pr{isbm}(n, k, P_{11}, P_{12}, P_{22})$ at all levels of signal below the Kesten-Stigum threshold of $\gamma^2 = \tilde{o}(k^2/n)$.
\end{reptheorem}

\begin{proof}
It suffices to show that the reduction $\mathcal{A}$ in Corollary \ref{thm:isbm-mod} applied with $r \ge 2$ fills out all of the possible growth rates specified by the computational lower bound $\gamma^2 = \tilde{o}(k^2/n)$ and the other conditions in the theorem statement. Fix a constant pair of probabilities $0 < q < p \le 1$ and any sequence of parameters $(n, k, \gamma, P_0)$ all of which are implicitly functions of $n$ such that $(n, k)$ satisfies $\pr{(t)}$ and
$$\gamma^2 \le \frac{k^2}{w' \cdot n \log n}, \quad 2(w')^2 k \le n^{1/3} \quad \text{and} \quad \min\{P_0, 1 - P_0 \} = \Omega_n(1)$$
for sufficiently large $n$ and $w' = w'(n) = (\log n)^{c}$ for a sufficiently large constant $c > 0$. Now let $w = w(n) \to \infty$ be an arbitrarily slow-growing increasing positive integer-valued function at least satisfying that $w(n) = n^{o(1)}$. As in the proof of Theorem \ref{thm:rsme-lb}, we now specify the following in order to fulfill the criteria in Condition \ref{cond:lb}:
\begin{enumerate}
\item a sequence $(N, k_N)$ such that the $k\pr{-pds}(N, k_N, p, q)$ is hard according to Conjecture \ref{conj:hard-conj}; and 
\item a sequence $(n', k', \gamma, P_0)$ with a subsequence that satisfies three conditions: (2.1) the parameters on the subsequence are in the regime of the desired computational lower bound for $\pr{isbm}$; (2.2) they have the same growth rate as $(n, k, \gamma, P_0)$ on this subsequence; and (2.3) such that $\pr{isbm}$ with the parameters on this subsequence can be produced by $\mathcal{A}$ with input $k\pr{-pds}(N, k_N, p, q)$.
\end{enumerate}
As discussed in Section \ref{subsec:2-tvreductions}, this is sufficient to prove the theorem. We choose these parameters as follows:
\begin{itemize}
\item let $k' = r$ be the smallest prime satisfying that $k \le r \le 2k$, which exists by Bertrand's postulate and can be found in $\text{poly}(n)$ time;
\item let $t$ be such that $r^t$ is the closest power of $r$ to $\sqrt{n}$ and let
$$k_N = \left\lfloor \frac{1}{2}\left( 1 + \frac{p}{Q} \right)^{-1} w^{-2} \cdot \min\left\{ r^t, \sqrt{n} \right\} \right\rfloor$$
where $Q = 1 - \sqrt{(1 - p)(1 - q)} + \mathbf{1}_{\{ p = 1\}} \left( \sqrt{q} - 1 \right)$; and
\item let $n' = k_N r\ell$ where $\ell = \frac{r^t - 1}{r - 1}$ and let $N = wk_N^2$.
\end{itemize}
Note that we have that $w^2 r \le n^{1/3}$ since $r \le 2k$. Now observe that we have the following bounds
\allowdisplaybreaks
\begin{align*}
n' &\asymp k_N r^t \asymp \left( w^{-2} \cdot \min\left\{ \frac{r^t}{\sqrt{n}}, 1 \right\} \cdot \frac{r^t}{\sqrt{n}} \right) n \\
k_N r^{3/2} &\lesssim w^{-2} \cdot \min\left\{ r^t, \sqrt{n} \right\} \cdot w^{-3} \sqrt{n} \lesssim \left( w^{-4} \cdot \frac{n}{r^{2t}} \right) r^{2t} \\
m &\le 2\left( \frac{p}{Q} + 1 \right) wk_N^2 \le \left( w^{-3} \cdot \frac{\sqrt{n}}{r^t} \right) k_N r^t \\
k_N r \ell &\le \text{poly}(N) \\
\gamma^2 &\le \frac{k^2}{w' \cdot n\log n} = \frac{1}{w' \cdot r^{2t - 2} \log(k_N r \ell)} \cdot \frac{r^{2t} \log(k_N r \ell)}{n\log n} \\
\gamma^2 &\lesssim \frac{r^2}{w' \cdot n' \log n'} \left( w^{-2} \cdot \min\left\{ \frac{r^t}{\sqrt{n}}, 1 \right\} \cdot \frac{r^t}{\sqrt{n}} \right) \cdot \frac{\log n'}{\log n} \lesssim \frac{r^2}{w' \cdot w^2 \cdot n' \log n'} \cdot \frac{r^t}{\sqrt{n}} 
\end{align*}
where $m$ is the smallest multiple of $k_N$ larger $\left( \frac{p}{Q} + 1 \right) N$. Now observe that as long as $\sqrt{n} = \tilde{\Theta}(r^t)$ then: (2.1) the last inequality above on $\gamma^2$ would imply that $(n', k', \gamma, P_0)$ is in the desired hard regime; (2.2) $n$ and $n'$ have the same growth rate since $w = n^{o(1)}$, and $k$ and $k' = r$ have the same growth rate since either $k' = k$ or $k' = \Theta(k) = \omega(1)$; and (2.3) the middle four bounds above imply that taking $c$ large enough yields the conditions needed to apply Corollary \ref{thm:isbm-mod} to yield the desired reduction. By Lemma \ref{lem:propT}, there is an infinite subsequence of the input parameters such that $\sqrt{n} = \tilde{\Theta}(r^t)$, which concludes the proof as in Theorem \ref{thm:rsme-lb}.
\end{proof}

%

\subsection{Testing Hidden Partition Models}
\label{sec:3-hidden-partition}

In this section, we establish statistical-computational gaps based on the $k\pr{-pc}$ and $k\pr{-pds}$ conjectures for detection in the Gaussian and bipartite hidden partition models introduced in Sections \ref{subsec:1-problems-hidden-partition} and \ref{subsec:2-formulations}. These two models are bipartite analogues of the subgraph variants of the $k$-block stochastic block model in the constant edge density regime. Specifically, they are multiple-community variants of the subgraph stochastic block model considered in \cite{brennan2018reducibility}.

The motivation for considering these two models is to illustrate the versatility of Bernoulli rotations as a reduction primitive. These two models are structurally very different from planted clique yet can be produced through Bernoulli rotations for appropriate choices of the output mean vectors $A_1, A_2, \dots, A_m$. The mean vectors specified in the reduction are vectorizations of the slices of the design tensor $T_{r, t}$ constructed based on the incidence geometry of $\mathbb{F}_r^t$. The definition of $T_{r, t}$ and several of its properties can be found in Section \ref{subsec:2-design-tensors}. The reduction in this section demonstrates that natural applications of Bernoulli rotations can require more involved constructions than $K_{r, t}$ in order to produce tight computational lower bounds.

We begin by reviewing the definitions of the two main models considered in this section -- Gaussian and bipartite hidden partition models -- which were introduced in Sections \ref{subsec:1-problems-hidden-partition} and \ref{subsec:2-formulations}.

\begin{definition}[Gaussian Hidden Partition Models] \label{defn:ghpm}
Let $n, r$ and $K$ be positive integers, let $\gamma \in \mathbb{R}$ and let $C = (C_1, C_2, \dots, C_r)$ be a sequence of disjoint $K$-subsets of $[n]$. Let $D = (D_1, D_2, \dots, D_r)$ be another such sequence. The distribution $\pr{ghpm}_D(n, r, C, D, \gamma)$ over matrices $M \in \mathbb{R}^{n \times n}$ is such that $M_{ij} \sim_{\textnormal{i.i.d.}} \mN(\mu_{ij}, 1)$ where
$$\mu_{ij} = \left\{ \begin{array}{ll} \gamma &\textnormal{if } i \in C_h \textnormal{ and } j \in D_h \textnormal{ for some } h \in [r] \\ -\frac{\gamma}{r - 1} &\textnormal{if } i \in C_{h_1} \textnormal{ and } j \in D_{h_2} \textnormal{ where } h_1 \neq h_2 \\ 0 &\textnormal{otherwise} \end{array} \right.$$
for each $i, j \in [n]$. Furthermore, let $\pr{ghpm}_D(n, r, K, \gamma)$ denote the mixture over $\pr{ghpm}_D(n, r, C, D, \gamma)$ induced by choosing $C$ and $D$ independently and uniformly at random.
\end{definition}

\begin{definition}[Bipartite Hidden Partition Models] \label{defn:bhpm}
Let $n, r, K, C$ and $D$ be as in Definition \ref{defn:ghpm} and let $P_0, \gamma \in (0, 1)$ be such that $\gamma/r \le P_0 \le 1 - \gamma$. The distribution $\pr{bhpm}_D(n, r, C, D, P_0, \gamma)$ over bipartite graphs $G$ with two parts of size $n$, each indexed by $[n]$, such that each edge $(i, j)$ is included in $G$ independently with the following probabilities
$$\bP\left[ (i, j) \in E(G) \right] = \left\{ \begin{array}{ll} P_0 + \gamma &\textnormal{if } i \in C_h \textnormal{ and } j \in D_h \textnormal{ for some } h \in [r] \\ P_0 - \frac{\gamma}{r - 1} &\textnormal{if } i \in C_{h_1} \textnormal{ and } j \in D_{h_2} \textnormal{ where } h_1 \neq h_2 \\ P_0 &\textnormal{otherwise} \end{array} \right.$$
for each $i, j \in [n]$. Let $\pr{bhpm}_D(n, r, K, P_0, \gamma)$ denote the mixture over $\pr{bhpm}_D(n, r, C, D, P_0, \gamma)$ induced by choosing $C$ and $D$ independently and uniformly at random.
\end{definition}

The problems we consider in this section are the two simple hypothesis testing problems $\pr{ghpm}$ and $\pr{bhpm}$ from Section \ref{subsec:2-formulations}, given by
$$\begin{array}{lll}
H_0: M \sim \mN(0, 1)^{\otimes n \times n} &\text{and} &H_1: M \sim \pr{ghpm}(n, r, K, \gamma) \\
H_0: G \sim \mG_B(n, n, P_0) &\text{and} &H_1: G \sim \pr{bhpm}(n, r, K, P_0, \gamma)
\end{array}$$
An important remark is that the hypothesis testing formulations above for these two problems seem to have different computational and statistical barriers from the tasks of recovering $C$ and $D$. We now state the following lemma, giving guarantees for a natural polynomial-time test and exponential time test for $\pr{ghpm}$. The proof of this lemma is tangential to the main focus of this section -- computational lower bounds for $\pr{ghpm}$ and $\pr{bhpm}$ -- and is deferred to Appendix \ref{subsec:appendix-3-part-3}.

\begin{lemma}[Tests for \pr{ghpm}] \label{lem:ghpm-test}
Given a matrix $M \in \mathbb{R}^{n \times n}$, let $s_C(M) = \sum_{i, j = 1}^n M_{ij}^2 - n^2$ and
$$s_I(M) = \max_{C, D} \left\{ \sum_{h = 1}^r \sum_{i \in C_h} \sum_{j \in D_h} M_{ij} \right\}$$
where the maximum is over all pairs $(C, D)$ of sequences of disjoint $K$-subsets of $[n]$. Let $w = w(n)$ be any increasing function with $w(n) \to \infty$ as $n \to \infty$. We prove the following:
\begin{enumerate}
\item If $M \sim \pr{ghpm}_D(n, r, K, \gamma)$, then with probability $1 - o_n(1)$ it holds that
$$s_C(M) \ge rK^2\gamma^2 + \frac{rK^2}{r - 1} \cdot \gamma^2 - w\left(n + \gamma K \sqrt{r} + \frac{K\gamma}{r} \right) \quad \textnormal{and} \quad s_I(M) \ge rK^2 \gamma - wr^{1/2} K$$
\item If $M \sim \mN(0, 1)^{\otimes n \times n}$, then with probability $1 - o_n(1)$ it holds that
$$s_C(M) \le wn \quad \textnormal{and} \quad s_I(M) \le 2r K^{3/2} w\sqrt{\left(\log n + \log r \right)}$$
\end{enumerate}
\end{lemma}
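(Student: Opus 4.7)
The plan is to establish the bounds for $s_C(M)$ via a moment computation combined with Chebyshev's inequality, and the bounds for $s_I(M)$ via a single-trial second moment under $H_1$ together with a Gaussian union bound under $H_0$.

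I would first handle $s_C(M)$. Under $H_1$, each entry $M_{ij}$ is an independent Gaussian with some mean $\mu_{ij} \in \{\gamma,\, -\gamma/(r-1),\, 0\}$ and variance $1$, so $\bE[M_{ij}^2] = 1 + \mu_{ij}^2$ and $\mathrm{Var}(M_{ij}^2) = 2 + 4\mu_{ij}^2$. Partitioning the entries by mean gives $rK^2$ matched-community entries of mean $\gamma$, $r(r-1)K^2$ cross-community entries of mean $-\gamma/(r-1)$, and $n^2 - r^2 K^2$ background entries; summing the exact contributions yields $\bE[s_C(M)] = rK^2\gamma^2 + rK^2\gamma^2/(r-1)$ and $\mathrm{Var}(s_C(M)) = O(n^2 + rK^2\gamma^2 + rK^2\gamma^2/(r-1))$. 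The square root of the variance is then of order $n + K\gamma\sqrt{r}$ plus a lower-order term, and Chebyshev's inequality with slack $w$ delivers the claimed lower bound with probability $1 - o_n(1)$. Under $H_0$, the same reasoning gives $\bE[s_C(M)] = 0$ and $\mathrm{Var}(s_C(M)) = 2n^2$, and Chebyshev immediately yields $s_C(M) \le wn$ with probability $1 - o_n(1)$.

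Next I would analyze $s_I(M)$. Under $H_1$, restricting the maximum to the planted pair $(C^*, D^*)$ leaves a sum of independent Gaussians that is distributed exactly as $\mN(rK^2\gamma, rK^2)$, so Chebyshev with slack $w$ yields the desired lower bound $rK^2\gamma - wr^{1/2}K$. Under $H_0$, I would proceed by a union bound over admissible pairs $(C, D)$. The number of ordered $r$-tuples of disjoint $K$-subsets of $[n]$ is at most $\binom{n}{rK}\,(rK)!/(K!)^r$, which is bounded above by $\binom{n}{rK}\, r^{rK}$ via the standard multinomial-coefficient estimate, so the logarithm of the count of admissible pairs $(C, D)$ is $O(rK(\log n + \log r))$. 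For each fixed pair, the corresponding sum is $\mN(0, rK^2)$, and combining the standard Gaussian tail $\mathbb{P}[|Z| > t\sigma] \le 2 e^{-t^2/2}$ with this union bound yields a maximum of order $\sqrt{rK^2}\cdot\sqrt{rK(\log n + \log r)} = rK^{3/2}\sqrt{\log n + \log r}$, which matches the claimed upper bound after inserting the factor $w$.

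The main point needing care is the multinomial count in the union bound for $s_I$ under $H_0$: the $\sqrt{\log n + \log r}$ rather than $\sqrt{rK\log n}$ dependence arises from the ordered disjointness of the parts via the estimate $(rK)!/(K!)^r \le r^{rK}$, and a cruder bound such as $\binom{n}{K}^r$ would wash out the correct logarithmic dependence on $r$. Everything else reduces to routine second-moment and subgaussian tail estimates, independence of the Gaussian entries of $M$ doing most of the work throughout.
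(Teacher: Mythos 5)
Your argument coincides with the paper's step for step: a mean-and-variance computation followed by Chebyshev for $s_C$ under both hypotheses, a single-trial Chebyshev at the planted pair for $s_I$ under $H_1$, and a Gaussian-tail union bound over all admissible $(C,D)$ using the count $\bigl(\binom{n}{rK}r^{rK}\bigr)^2$ for $s_I$ under $H_0$. The only nit is your closing remark: $\binom{n}{K}^r$ is also a valid upper bound on the number of ordered tuples of disjoint $K$-sets (it overcounts by dropping disjointness), and its logarithm $r\log\binom{n}{K}$ is of the same order $rK\log(n/K)$ as $\log\binom{n}{rK}+rK\log r$, so either estimate delivers the stated $2rK^{3/2}w\sqrt{\log n+\log r}$ — the $\log r$ is an artifact of the paper's crude packaging $(nr)^{2rK}$, not a necessity of the argument.
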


This lemma implies upper bounds on the computational and statistical barriers for $\pr{ghpm}$. Specifically, it implies that the variance test $s_C$ succeeds above $\gamma_{\text{comp}}^2 = \tilde{\Theta}(n/rK^2)$ and the search test $s_I$ succeeds above $\gamma_{\text{IT}}^2 = \tilde{\Theta}(1/K)$. Thus, showing that there is a computational barrier at this level of signal $\gamma_{\text{comp}}$ is sufficient to show that there is a nontrivial statistical-computational gap for $\pr{ghpm}$. For $P_0$ with $\min\{P_0, 1 - P_0\} = \Omega(1)$, analogous tests show the same upper bounds on $\gamma_{\text{comp}}$ and $\gamma_{\text{IT}}$ for $\pr{bhpm}$.

Consider the case when $n = rK$, which corresponds to a testing variant of the bipartite $k$-block stochastic block model. In this case, the upper bounds shown by the previous lemma coincide at $\gamma_{\text{comp}}^2, \gamma_{\text{IT}}^2 = O(r/n)$ and hence do not support the existence of a statistical-computational gap. The subgraph formulation in which $rK \ll n$ seems crucial to yielding a testing problem with a statistical-computational gap. We also remark that while this testing formulation when $n = rK$ may not have a gap, the task of recovering $C$ and $D$ likely shares the gap conjectured in the $k$-block stochastic block model. Specifically, the conjectured computational barrier at the Kesten-Stigum threshold lies at $\gamma^2 = \tilde{\Theta}(r^2/n)$, which lies well above the $r/n$ limit in the testing formulation.

\begin{figure}[t!]
\begin{algbox}
\textbf{Algorithm} $k$\textsc{-pds-to-ghpm}

\vspace{1mm}

\textit{Inputs}: $k$\pr{-pds} instance $G \in \mG_N$ with dense subgraph size $k$ that divides $N$, and the following parameters
\begin{itemize}
\item partition $E$, edge probabilities $0 < q < p \le 1$, $Q \in (0, 1)$ and $m$ as in Figure \ref{fig:isbm-reduction}
\item refinement parameter $s$ and number of vertices $n = ksr^t$ where $r$ is a prime number, $\ell = \frac{r^t - 1}{r - 1}$ for some $t \in \mathbb{N}$ satisfy that $m \le ks(r - 1)\ell \le \text{poly}(N)$
\item mean parameter $\mu \in (0, 1)$ as in Figure \ref{fig:isbm-reduction}
\end{itemize}

\begin{enumerate}
\item \textit{Symmetrize and Plant Diagonals}: Compute $M_{\text{PD1}} \in \{0, 1\}^{m \times m}$ and $F$ as in Step 1 of Figure \ref{fig:isbm-reduction}.
\item \textit{Pad and Further Partition}: Form $M_{\text{PD2}}$ and $F'$ as in Step 2 of Figure \ref{fig:isbm-reduction} modified so that $M_{\text{PD2}}$ is a $ks(r-1)\ell \times ks(r-1)\ell$ matrix and each $F'_i$ has size $s(r-1)\ell$. Let $F^s$ be the partition of $[ks(r - 1)\ell]$ into $ks$ parts of size $(r - 1)\ell$ by refining $F'$ by splitting each of its parts into $s$ parts of equal size arbitrarily.
\item \textit{Bernoulli Rotations}: Let $F^o$ be a partition of $[ksr^t]$ into $ks$ equally sized parts. Now compute the matrix $M_{\text{R}} \in \mathbb{R}^{ksr^t \times ksr^t}$ as follows:
\begin{enumerate}
\item[(1)] For each $i, j \in [ks]$, flatten the $(r-1)\ell \times (r-1)\ell$ submatrix $(M_{\text{P}})_{F_i^s, F_j^s}$ into a vector $V_{ij} \in \mathbb{R}^{(r-1)^2 \ell^2}$ and let $A = M_{r, t}^\top \in \mathbb{R}^{r^{2t} \times (r-1)^2 \ell^2}$ as in Definition \ref{defn:unfolded-Trt}.
\item[(2)] Apply $\pr{Bern-Rotations}$ to $V_{ij}$ with matrix $A$, rejection kernel parameter $R_{\pr{rk}} = ksr^t$, Bernoulli probabilities $0 < Q < p \le 1$, output dimension $r^{2t}$, $\lambda = \sqrt{1 + (r - 1)^{-1}}$ and mean parameter $\mu$.
\item[(3)] Set the entries of $(M_{\text{R}})_{F^o_i, F^o_j}$ to be the entries of the output in (2) unflattened into a matrix.
\end{enumerate}
\item \textit{Permute and Output}: Output the matrix $M_{\text{R}}$ with its rows and columns independently permuted uniformly at random.
\end{enumerate}
\vspace{0.5mm}

\end{algbox}
\caption{Reduction from $k$-partite planted dense subgraph to gaussian hidden partition models.}
\label{fig:sbmtesting}
\end{figure}

The rest of this section is devoted to giving our main reduction $k$\textsc{-pds-to-ghpm} showing a computational barrier at $\gamma^2 = \tilde{o}(n/rK^2)$. This reduction is shown in Figure \ref{fig:sbmtesting} and its approximate Markov transition guarantees are stated in the theorem below. The intuition behind why our reduction is tight to the algorithm $s_C$ is as follows. Bernoulli rotations are approximately $\ell_2$-norm preserving in the signal to noise ratio if the output dimension is comparable to the input dimension with $m \asymp n$. Much of the effort in constructing $T_{r, t}$ and $M_{r, t}$ in Section \ref{subsec:2-design-tensors} was devoted to the linear functions $L$ which are crucial in designing $M_{r, t}$ to be nearly square and hence achieve $m \asymp n$ in Bernoulli rotations. Any reduction that is approximately $\ell_2$-norm preserving in the signal to noise ratio will be tight to a variance test such as $s_C$.

The key to the reduction $k$\textsc{-pds-to-ghpm} lies in the construction of $T_{r, t}$ and $M_{r, t}$ in Section \ref{subsec:2-design-tensors}. The rest of the proof of the following theorem is similar to the proofs in the previous section. We omit details that are similar for brevity. We recall from Section \ref{subsec:2-notation} that, given a matrix $M \in \mathbb{R}^{n \times n}$, the matrix $M_{S, T} \in \mathbb{R}^{k \times k}$ where $S, T$ are $k$-subsets of $[n]$ refers to the minor of $M$ restricted to the row indices in $S$ and column indices in $T$. Furthermore, $(M_{S, T})_{i, j} = M_{\sigma_S(i), \sigma_T(j)}$ where $\sigma_S : [k] \to S$ is the unique order-preserving bijection and $\sigma_T$ is analogously defined.

\begin{theorem}[Reduction to $\pr{ghpm}$] \label{thm:ghpm}
Let $N$ be a parameter and $r = r(N) \ge 2$ be a prime number. Fix initial and target parameters as follows:
\begin{itemize}
\item \textnormal{Initial} $k\pr{-bpds}$ \textnormal{Parameters:} $k, N, p, q$ and $E$ as in Theorem \ref{thm:isbm}.
\item \textnormal{Target} $\pr{ghpm}$ \textnormal{Parameters:} $(n, r, K, \gamma)$ where $n = ksr^t$, $K = kr^{t - 1}$ and $\ell = \frac{r^t - 1}{r - 1}$ for some parameters $t = t(N), s = s(N) \in \mathbb{N}$ satisfying that that
$$m \le ks(r - 1)\ell \le \textnormal{poly}(N)$$
where $m$ and $Q$ are as in Theorem \ref{thm:ghpm}. The target level of signal $\gamma$ is given by $\gamma = \frac{\mu(r - 1)}{r^t\sqrt{r}}$ where
$$\mu \le \frac{1}{2 \sqrt{6\log (ksr^t) + 2\log (p - Q)^{-1}}} \cdot \min \left\{ \log \left( \frac{p}{Q} \right), \log \left( \frac{1 - Q}{1 - p} \right) \right\}$$
\end{itemize}
Let $\mathcal{A}(G)$ denote $k$\textsc{-pds-to-ghpm} applied to the graph $G$ with these parameters. Then $\mathcal{A}$ runs in $\textnormal{poly}(N)$ time and it follows that
\begin{align*}
\TV\left( \mathcal{A}\left( \mG_E(N, k, p, q) \right), \, \pr{ghpm}_D(n, r, K, \gamma) \right) &= O\left( \frac{k}{\sqrt{N}} + e^{-\Omega(N^2/km)} + (ksr^t)^{-1} \right) \\
\TV\left( \mathcal{A}\left( \mG(N, q) \right), \, \mN(0, 1)^{\otimes n \times n} \right) &= O\left( e^{-\Omega(N^2/km)} + (ksr^t)^{-1} \right)
\end{align*}
\end{theorem}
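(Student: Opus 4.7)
The plan is to apply Lemma \ref{lem:tvacc} to the four steps of $\mathcal{A} = k\textsc{-pds-to-ghpm}$, decomposing it as $(G, E) \xrightarrow{\mathcal{A}_1} (M_{\text{PD1}}, F) \xrightarrow{\mathcal{A}_2} (M_{\text{PD2}}, F^s) \xrightarrow{\mathcal{A}_3} M_{\text{R}} \xrightarrow{\mathcal{A}_4} M$, following the template of the proofs of Theorems \ref{thm:isgmreduction} and \ref{thm:isbm}. Under $H_0$, I would track the intermediate distributions $\mG(N, q) \to \text{Bern}(Q)^{\otimes m \times m} \to \text{Bern}(Q)^{\otimes ks(r-1)\ell \times ks(r-1)\ell} \to \mN(0, 1)^{\otimes n \times n} \to \mN(0, 1)^{\otimes n \times n}$. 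Step 1's error is bounded by Lemma \ref{lem:submatrix} and contributes the $e^{-\Omega(N^2/km)}$ term; Step 2's padding to a block-partitioned matrix is exact; Step 3 applies $\pr{Bern-Rotations}$ independently to each of the $(ks)^2$ input sub-blocks using design matrix $A = M_{r, t}^\top$ with singular-value bound $\lambda = \sqrt{1 + (r-1)^{-1}}$ from Lemma \ref{lem:Mrtsv}, and Lemma \ref{lem:bern-rotations} combined with the tensorization property of $\TV$ in Fact \ref{tvfacts} yields a total error $O((ks)^2 \cdot r^{2t} \cdot (ksr^t)^{-3}) = O((ksr^t)^{-1})$; the random row/column permutation in Step 4 is measure-preserving for $\mN(0,1)^{\otimes n \times n}$.

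The nontrivial analysis is under $H_1$, where I would track $\mG_E(N, k, p, q) \to \mathcal{M}_{[m]^2}(S \times S, \text{Bern}(p), \text{Bern}(Q))$ with $S \sim \mU_m(F)$, then extend under the partition $F'$, then to a Gaussian matrix whose mean admits an $r$-block structure, then to $\pr{ghpm}_D(n, r, K, \gamma)$ with $K = kr^{t-1}$. Since $|S \cap F'_i| = 1$ for each $i$, the unique element of $S$ in $F'_i$ decomposes as $(t_i, s'_i)$ with $t_i \in [s]$ selecting a refined sub-part of $F^s$ and $s'_i \in [(r-1)\ell]$ its internal position, both coordinates being independent and uniform by the uniformity of $S \sim \mU_{ks(r-1)\ell}(F')$. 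Each input sub-block $V_{ab}$ to $\pr{Bern-Rotations}$ is then a planted-bit vector at position $(s'_i, s'_j)$ exactly when $(a, b) = ((i-1)s + t_i,\, (j-1)s + t_j)$ for some $i, j \in [k]$, and is pure $\text{Bern}(Q)$ noise otherwise; Lemma \ref{lem:bern-rotations} produces a Gaussian block with mean $\mu \lambda^{-1}$ times the corresponding slice of $T_{r, t}$ in the former case, and $\mN(0, I)$ in the latter.

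The main obstacle, and the technical heart of the proof, is ensuring that these $k^2$ signal-bearing output blocks stitch together into a valid $\pr{ghpm}$ mean matrix. This is precisely the role of Lemma \ref{lem:comm-align-tensors}: for any realization of the random indices $(s'_1, \ldots, s'_k) \in [(r-1)\ell]^k$, the concatenation of the corresponding slices of $T_{r, t}$ (ordered consistently with the indexing of $M_{r, t}^\top$ in Definition \ref{defn:unfolded-Trt}) is an $r$-block matrix of size $kr^t \times kr^t$ with $r$ equal-sized communities of size $K = kr^{t-1}$. A direct calculation using Definition \ref{defn:Trt} and $\lambda = \sqrt{r/(r-1)}$ confirms that the within-community mean equals $\mu \lambda^{-1} \cdot (r-1)/(r^t\sqrt{r-1}) = \mu(r-1)/(r^t\sqrt{r}) = \gamma$ and the across-community mean equals $-\mu \lambda^{-1}/(r^t \sqrt{r-1}) = -\gamma/(r-1)$, matching Definition \ref{defn:ghpm}. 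Outside the $kr^t \times kr^t$ signal-bearing sub-matrix the mean is zero, also as in $\pr{ghpm}$. Step 4's independent random permutations of rows and columns then render the positions of the $r$ row-communities and $r$ column-communities independent and uniformly distributed over sequences of disjoint $K$-subsets of $[n]$, exactly matching $\pr{ghpm}_D(n, r, K, \gamma)$. Summing $\epsilon_1$ from Lemma \ref{lem:submatrix} (contributing both $k/\sqrt{N}$ and $e^{-\Omega(N^2/km)}$ under $H_1$), $\epsilon_2 = 0$, $\epsilon_3 = O((ksr^t)^{-1})$ from tensorized Bernoulli rotations, and $\epsilon_4 = 0$ via Lemma \ref{lem:tvacc} then yields the stated total variation bounds.
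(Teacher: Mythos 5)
Your proposal is correct and follows essentially the same strategy as the paper's proof: the same four-step decomposition of $k\textsc{-pds-to-ghpm}$, an application of Lemma \ref{lem:tvacc} to the same intermediate distribution sequence, Lemma \ref{lem:submatrix} and dense Bernoulli rotations with $A = M_{r,t}^\top$ and singular-value bound from Lemma \ref{lem:Mrtsv} for the two substantive error terms, and Lemma \ref{lem:comm-align-tensors} as the key structural fact that the $k^2$ signal blocks stitch together into an $r$-block matrix whose entries yield exactly $\gamma$ and $-\gamma/(r-1)$. The paper factors out the Step-3 analysis as a standalone lemma (Lemma \ref{lem:ghpm-rotations}) with explicit notation for the stitched mean matrix $\mathbf{M}_{S, F^s, F^o}(T_{r,t})$, but the underlying argument is the same.
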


In order to state the approximate Markov transition guarantees of the Bernoulli rotations step of $k$\textsc{-pds-to-ghpm}, we need the formalism from Section \ref{subsec:2-design-tensors} to describe the matrix $M_{r, t}$, tensor $T_{r, t}$ and their community alignment properties. While this will require a plethora of cumbersome notation, the goal of the ensuing discussion is simple -- we will show that Lemma \ref{lem:comm-align-tensors} guarantees that stitching together the individual applications of $\pr{Bern-Rotations}$ in Step 3 of $k$\textsc{-pds-to-ghpm} yields a valid instance of $\pr{ghpm}$.

Recall $\mathcal{C}(M^{1, 1}, M^{1, 2}, \dots, M^{ks, ks})$ denotes the concatenation of $k^2s^2$ matrices $M^{i, j} \in \mathbb{R}^{r^t \times r^t}$ into a $ksr^t \times ksr^t$ matrix, as introduced in Section \ref{subsec:2-design-tensors}. Given a partition $F$ of $[ksr^t]$ into $ks$ equally sized parts, let $\mathcal{C}_{F}(M^{1, 1}, M^{1, 2}, \dots, M^{ks, ks})$ denote the concatenation of the $M^{i, j}$, where now the entries of $M^{i, j}$ appear in $\mathcal{C}_{F}$ on the index set $F_i \times F_j$. For consistency, we fix a canonical embedding of the row and column indices of $\mathbb{R}^{r^t \times r^t}$ to $F_i \times F_j$ by always preserving the order of indices.

Let $F^o$ and $F^s$ be fixed partitions of $[ksr^t]$ and $[ks(r - 1)\ell]$ into $k$ parts of size $r^t$ and $(r - 1)\ell$, respectively, and let $S \subseteq [ks(r - 1)\ell]$ be such that $|S| = k$ and $S$ intersects each part of $F^s$ in at most one element. Now let $\mathbf{M}_{S, F^s, F^o}(T_{r, t}) \in \mathbb{R}^{ksr^t \times ksr^t}$ be the matrix
$$\mathbf{M}_{S, F^s, F^o}(T_{r, t}) = \mathcal{C}_{F^o}\left(M^{1, 1}, M^{1, 2}, \dots, M^{ks, ks}\right) \quad \text{where} \quad M^{i, j} = \left\{ \begin{array}{ll} T_{r, t}^{(V_{t_i}, V_{t_j}, L_{ij})} &\text{if } S \cap F_i^s \neq \emptyset \\ 0 &\text{otherwise} \end{array} \right.$$
where $t_i, t_j$ and $L_{ij}$ are given by:
\begin{itemize}
\item let $\sigma : [ks(r - 1)\ell] \to [ks(r - 1)\ell]$ be the unique bijection transforming the partition $F^s$ to the canonical contiguous partition $\{1, \dots, (r - 1)\ell\} \cup \cdots \cup \{(ks - 1)(r - 1)\ell + 1, \dots, ks(r -1)\ell\}$ while preserving ordering on each part $F^s_i$ for $1 \le i \le ks$;
\item let $s'_i$ be the unique element in $\sigma(S \cap F^s_i)$ for each $i$ for which this intersection is nonempty, and let $s_i$ be the unique positive integer with $1 \le s_i \le (r - 1)\ell$ and $s_i \equiv s_i' \pmod{(r - 1)\ell}$; and
\item $t_i, t_j$ and $L_{ij}$ are as in Lemma \ref{lem:comm-align-tensors} given these $s_i$ i.e. $t_i$ and $t_j$ are the unique $1 \le t_i, t_j \le \ell$ such that $t_i \equiv s_i \pmod{\ell}$ and $t_j \equiv s_j \pmod{\ell}$ and $L_{ij} : \mathbb{F}_r \to \mathbb{F}_r$ is given by $L_{ij}(x) = a_i x + a_j$ where $a_i = \lceil s_i/\ell \rceil$ and $a_j = \lceil s_j/\ell \rceil$.
\end{itemize}
The next lemma makes explicit the implications of Lemma \ref{lem:bern-rotations} and Lemma \ref{lem:comm-align-tensors} for the approximate Markov transition guarantees of Step 3 in $k$\textsc{-pds-to-ghpm}. The proof follows a similar structure to the proof of Lemma \ref{lem:isbm-rotations} and we omit identical details.

\begin{lemma}[Bernoulli Rotations for $\pr{ghpm}$] \label{lem:ghpm-rotations}
Let $F^o$ and $F^s$ be a fixed partitions of $[ksr^t]$ and $[ks(r - 1)\ell]$ into $k$ parts of size $r^t$ and $(r - 1)\ell$, respectively, and let $S \subseteq [ksr^t]$ be such that $|S| = k$ and $|S \cap F_i^s| \le 1$ for each $1 \le i \le ks$. Let $\mathcal{A}_{\textnormal{3}}$ denote Step 3 of $k\pr{-pds-to-ghpm}$ with input $M_{\textnormal{PD2}}$ and output $M_{\textnormal{R}}$. Suppose that $p, Q$ and $\mu$ are as in Theorem \ref{thm:isbm}, then it follows that
\begin{align*}
&\TV\Big( \mathcal{A}_{\textnormal{3}} \left( \mathcal{M}_{[ks(r - 1)\ell] \times [ks(r - 1)\ell]} \left( S \times S, \textnormal{Bern}(p), \textnormal{Bern}(Q) \right) \right), \\
&\quad \quad \quad \quad \left. \mL\left( \mu \sqrt{\frac{r - 1}{r}} \cdot \mathbf{M}_{S, F^s, F^o}(T_{r, t}) + \mN(0, 1)^{\otimes ksr^t \times ksr^t} \right) \right) = O\left((ksr^t)^{-1}\right) \\
&\TV\left( \mathcal{A}_{\textnormal{3}} \left(\textnormal{Bern}(Q)^{\otimes ks(r - 1)\ell \times ks(r - 1)\ell} \right), \, \mN(0, 1)^{\otimes ksr^t \times ksr^t} \right) = O\left((ksr^t)^{-1}\right)
\end{align*}
and furthermore, for all such subsets $S$, it holds that the matrix $\mathbf{M}_{S, F^s, F^o}(T_{r, t})$ has zero entries other than in a $kr^t \times kr^t$ submatrix, which is also $r$-block as defined in Section \ref{subsec:2-design-tensors}.
\end{lemma}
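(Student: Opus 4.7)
The plan is to decompose Step 3 of $k$\textsc{-pds-to-ghpm} into $(ks)^2$ independent applications of $\pr{Bern-Rotations}$, one per block $(M_{\textnormal{PD2}})_{F_i^s, F_j^s}$, analyze each block individually via Lemma \ref{lem:bern-rotations}, and stitch them together with the tensorization property of total variation in Fact \ref{tvfacts}. Because $F^s$ is a fixed partition and, under either input hypothesis, the entries of $M_{\textnormal{PD2}}$ are mutually independent, the flattened vectors $V_{ij} \in \{0,1\}^{(r-1)^2\ell^2}$ are independent across $(i,j) \in [ks]^2$; this independence is what makes the tensorization step quantitatively tight.

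Next I would classify each block by the intersection pattern of $S$ with $F^s$. Because $|S \cap F_i^s| \le 1$ for every $i$, if either $S \cap F_i^s$ or $S \cap F_j^s$ is empty then $V_{ij} \sim \textnormal{Bern}(Q)^{\otimes (r-1)^2\ell^2}$; otherwise, writing $S \cap F_i^s = \{\alpha\}$ and $S \cap F_j^s = \{\beta\}$, the vector $V_{ij}$ is distributed as $\pr{pb}((r-1)^2\ell^2, e(\alpha,\beta), p, Q)$ where $e(\alpha,\beta)$ is the canonical flattened index of $(\alpha, \beta)$ within $F_i^s \times F_j^s$. By Lemma \ref{lem:Mrtsv}, the largest singular value of $A = M_{r,t}^\top$ is $\lambda = \sqrt{1 + (r-1)^{-1}}$, so Lemma \ref{lem:bern-rotations} applied with $R_{\pr{rk}} = ksr^t$ and the given bound on $\mu$ yields that the output block is within $O((r-1)^2\ell^2 \cdot (ksr^t)^{-3}) = O(r^{2t}(ksr^t)^{-3})$ total variation of $\mN\!\left(\mu\sqrt{(r-1)/r}\cdot A_{\cdot, e(\alpha,\beta)}, I_{r^{2t}}\right)$ in the planted case and of $\mN(0, I_{r^{2t}})$ otherwise. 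Unflattening $A_{\cdot, e(\alpha,\beta)}$ into an $r^t \times r^t$ matrix, Definition \ref{defn:unfolded-Trt} identifies the resulting matrix as the slice $T_{r,t}^{(V_{t_i}, V_{t_j}, L_{ij})}$ with $(t_i, t_j, L_{ij})$ determined from $(\alpha, \beta)$ exactly as defined before the lemma statement.

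Summing the per-block errors over all $(ks)^2$ blocks via Fact \ref{tvfacts} gives a total error of $O(k^2 s^2 r^{2t} (ksr^t)^{-3}) = O((ksr^t)^{-1})$, which yields both claimed total variation bounds. For the structural claim, the nonzero blocks of $\mathbf{M}_{S, F^s, F^o}(T_{r, t})$ correspond exactly to pairs $(i,j)$ such that $S \cap F_i^s, S \cap F_j^s \neq \emptyset$; since there are $k$ such indices on each side and each block is $r^t \times r^t$, the nonzero entries lie in a $kr^t \times kr^t$ submatrix. The $r$-block property of this submatrix is then precisely the content of Lemma \ref{lem:comm-align-tensors}: the $s_i$ produced from $S$ and the associated $(t_i, L_{ij})$ match the hypothesis of that lemma verbatim.

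The main obstacle is the multi-level indexing bookkeeping: four partitions ($F^s$, $F^o$, the coordinate partition of $\mathbb{F}_r^t$, and the affine-block partition induced by each $L_{ij}$) interact, and one must verify that the unflattening convention of Step 3(3) of $k$\textsc{-pds-to-ghpm} is consistent with the row-indexing convention of Definition \ref{defn:unfolded-Trt}, so that the column of $M_{r,t}^\top$ at position $e(\alpha, \beta)$ really does unflatten to $T_{r,t}^{(V_{t_i}, V_{t_j}, L_{ij})}$ with the $(t_i, t_j, L_{ij})$ prescribed by $(\alpha, \beta)$. Once this matching is pinned down, the total variation estimates are a direct application of Lemma \ref{lem:bern-rotations} and tensorization, and the $r$-block structure follows immediately from Lemma \ref{lem:comm-align-tensors} with no further computation.
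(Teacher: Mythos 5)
Your proposal is correct and follows essentially the same route as the paper's proof: decompose Step 3 into independent per-block applications of $\pr{Bern-Rotations}$, identify the relevant column of $A = M_{r,t}^\top$ with the slice $T_{r,t}^{(V_{t_i}, V_{t_j}, L_{ij})}$ via Definition \ref{defn:unfolded-Trt}, bound each block by Lemmas \ref{lem:bern-rotations} and \ref{lem:Mrtsv}, sum via tensorization, and invoke Lemma \ref{lem:comm-align-tensors} for the structural claim. You flag but do not carry out the index-matching verification; the paper's proof likewise asserts this consistency of conventions rather than spelling it out, so there is no substantive gap relative to the published argument.
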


\begin{proof}
Define $s_i', s_i, t_i$ and $L_{ij}$ as in the preceding discussion for all $i, j$ with $S \cap F_i^s$ and $S \cap F_j^s$ nonempty. Let (1) and (2) denote the following two cases:
\begin{enumerate}
\item $M_{\textnormal{PD2}} \sim \mathcal{M}_{[ks(r - 1)\ell] \times [ks(r - 1)\ell]} \left( S \times S, \textnormal{Bern}(p), \textnormal{Bern}(Q) \right)$; and
\item $M_{\textnormal{PD2}} \sim \textnormal{Bern}(Q)^{\otimes ks(r - 1)\ell \times ks(r - 1)\ell}$.
\end{enumerate}
Now define the matrix $M_{\text{R}}'$ with independent entries such that
$$\left( M_{\text{R}}' \right)_{F_i^s, F_j^s} \sim \left\{ \begin{array}{ll} \mu \sqrt{\frac{r - 1}{r}} \cdot T_{r, t}^{(V_{t_i}, V_{t_j}, L_{ij})} + \mN(0, 1)^{\otimes r^t \times r^t} &\text{if (1) holds, } S \cap F_i^s \neq \emptyset \text{ and } S \cap F_j^s \neq \emptyset \\ \mN(0, 1)^{\otimes r^t \times r^t} &\text{otherwise if (1) holds or if (2) holds} \end{array} \right.$$
for each $1 \le i, j \le ks$. The vectorization and ordering conventions we adopt imply that if $S \cap F_i^s \neq \emptyset$ and $S \cap F_j^s \neq \emptyset$, then the unflattening of the row with index $(s_i - 1) (r - 1)\ell + s_j$ in $M_{r, t}$ is the approximate output mean of $\mathcal{A}_3$ on the minor $( M_{\text{R}} )_{F_i^s, F_j^s}$ when applying Lemma \ref{lem:bern-rotations} under (1). By Definition \ref{defn:unfolded-Trt} and the definitions of $a_i, t_i$ and $L_{ij}$, this unflattened row is exactly the matrix
$$M^{i, j} = T_{r, t}^{(V_{t_i}, V_{t_j}, L_{ij})}$$
Combining this observation with Lemmas \ref{lem:bern-rotations} and \ref{lem:Mrtsv} yields that under both (1) and (2), we have that
$$\TV\left( \left( M_{\text{R}} \right)_{F_i^s, F_j^s}, \left( M_{\text{R}}' \right)_{F_i^s, F_j^s} \right) = O\left( r^{2t} \cdot (ksr^t)^{-3} \right)$$
for all $1 \le i, j \le ks$. Through the same argument as in Lemma \ref{lem:isbm-rotations}, the tensorization property of total variation in Fact \ref{tvfacts} now yields that $\TV\left( \mL(M_{\text{R}}), \mL(M_{\text{R}}') \right) = O\left( (ksr^t)^{-1} \right)$ under both (1) and (2). Now note that the definition of $\mathcal{C}_{F^o}$ implies that
$$M_{\text{R}}' \sim \left\{ \begin{array}{ll} \mu \sqrt{\frac{r - 1}{r}} \cdot \mathbf{M}_{S, F^s, F^o}(T_{r, t}) + \mN(0, 1)^{\otimes ksr^t \times ksr^t} &\text{if (1) holds} \\ \mN(0, 1)^{\otimes ksr^t \times ksr^t} &\text{if (2) holds} \end{array} \right.$$
which completes the proof of the approximate Markov transition guarantees in the lemma statement. Now note that $\mathbf{M}_{S, F^s, F^o}(T_{r, t})$ is zero everywhere other than on the union $U$ of the $F^o_i$ over the $i$ such that $S \cap F^s_i \neq \emptyset$. There are exactly $k$ such $i$ and thus $|U| = kr^t$. Note that $r$-block matrices remain $r$-block matrices under permutations of column and row indices, and therefore Lemma \ref{lem:comm-align-tensors} implies the same conclusion if $\mathcal{C}$ is replaced by $\mathcal{C}_{F^o}$. Applying Lemma \ref{lem:comm-align-tensors} to the submatrix of $\mathbf{M}_{S, F^s, F^o}(T_{r, t})$ restricted to the indices of $U$ now completes the proof of the lemma.
\end{proof}

We now complete the proof of Theorem \ref{thm:ghpm}, again applying Lemma \ref{lem:tvacc} as in the proofs of Theorems \ref{thm:isgmreduction} and \ref{thm:isbm}. In this theorem, we let $\mathcal{U}_n^k(F)$ denote the uniform distribution over subsets $S \subseteq [n]$ of size $k$ intersecting each part of the partition $F$ in at most one element. When $F$ has exactly $k$ parts, this definition recovers the previously defined distribution $\mathcal{U}_n(F)$.

\begin{proof}[Proof of Theorem \ref{thm:ghpm}]
Let the steps of $\mathcal{A}$ to map inputs to outputs as follows
$$(G, E) \xrightarrow{\mathcal{A}_1} (M_{\text{PD1}}, F) \xrightarrow{\mathcal{A}_2} (M_{\text{PD2}}, F^s) \xrightarrow{\mathcal{A}_3} (M_{\text{R}}, F^o) \xrightarrow{\mathcal{A}_{\text{4}}} M_{\text{R}}'$$
where here $M_{\text{R}}'$ denotes the permuted form of $M_{\text{R}}$ after Step 4. Under $H_1$, consider Lemma \ref{lem:tvacc} applied to the following sequence of distributions
\allowdisplaybreaks
\begin{align*}
\mathcal{P}_0 &= \mG_E(N, k, p, q) \\
\mathcal{P}_1 &= \mathcal{M}_{[m] \times [m]}(S \times S, \textnormal{Bern}(p), \textnormal{Bern}(Q)) \quad \text{where } S \sim \mU_m(F) \\
\mathcal{P}_2 &= \mathcal{M}_{[ks(r - 1)\ell] \times [ks(r - 1)\ell]}(S \times S, \textnormal{Bern}(p), \textnormal{Bern}(Q)) \quad \text{where } S \sim \mU_{ks(r - 1)\ell}^k(F^s) \\
\mathcal{P}_3 &=  \mu \sqrt{\frac{r - 1}{r}} \cdot \mathbf{M}_{S, F^s, F^o}(T_{r, t}) + \mN(0, 1)^{\otimes ksr^t \times ksr^t} \quad \text{where } S \sim \mU_{ks(r - 1)\ell}^k(F^s) \\
\mathcal{P}_{\text{4}} &= \pr{ghpm}_D\left(ksr^t, r, kr^{t - 1}, \frac{\mu(r - 1)}{r^t \sqrt{r}} \right)
\end{align*}
Let $C_Q = \max\left\{ \frac{Q}{1 - Q}, \frac{1 - Q}{Q} \right\}$ and consider setting
$$\epsilon_1 = 4k \cdot \exp\left( - \frac{Q^2N^2}{48pkm} \right) + \sqrt{\frac{C_Q k^2}{2m}}, \quad \epsilon_2 = 0, \quad \epsilon_3 = O\left( (ksr^t)^{-1} \right) \quad \text{and} \quad \epsilon_4 = 0$$
As in the proof of Theorem \ref{thm:isbm}, Lemma \ref{lem:submatrix} implies this is a valid choice of $\epsilon_1$ and $\mathcal{A}_2$ is exact so we can take $\epsilon_2 = 0$. The choice of $\epsilon_3$ is valid by applying Lemma \ref{lem:ghpm-rotations} and averaging over $S \sim \mU^k_{ks(r - 1)\ell}(F^s)$ using the conditioning property of total variation in Fact \ref{tvfacts}. Now note that the $kr^t \times kr^t$ $r$-block submatrix of $\mathbf{M}_{S, F^s, F^o}(T_{r, t})$ has entries $\frac{r - 1}{r^t\sqrt{r - 1}}$ and $-\frac{1}{r^t \sqrt{r - 1}}$. Thus the matrix $\mu \sqrt{\frac{r - 1}{r}} \cdot \mathbf{M}_{S, F^s, F^o}(T_{r, t})$ is of the form of the mean matrix $(\mu_{ij})_{1 \le i, j \le ksr^t}$ in Definition \ref{defn:ghpm} for some choice of $C$ and $D$ where $K = kr^{t - 1}$ and
$$\gamma = \mu \sqrt{\frac{r - 1}{r}} \cdot \frac{r - 1}{r^t\sqrt{r - 1}} = \frac{\mu(r - 1)}{r^t \sqrt{r}}$$
This implies that permuting the rows and columns of $\mP_3$ yields $\mP_4$ exactly with $\epsilon_4 = 0$. Applying Lemma \ref{lem:tvacc} now yields the first bound in the theorem statement. Under $H_0$, consider the distributions
$$\mathcal{P}_0 = \mG(N, q), \quad \mathcal{P}_1 = \text{Bern}(Q)^{\otimes m \times m}, \quad \mathcal{P}_2 = \text{Bern}(Q)^{\otimes ks(r - 1)\ell \times ks(r - 1)\ell}, \quad \mathcal{P}_3 = \mathcal{P}_{\text{4}} = \mN(0, 1)^{\otimes ksr^t \times ksr^t}$$
As above, Lemmas \ref{lem:submatrix} and \ref{lem:ghpm-rotations} imply that we can take $\epsilon_1 = 4k \cdot \exp\left( - \frac{Q^2N^2}{48pkm} \right)$ and $\epsilon_2, \epsilon_3$ and $\epsilon_4$ as above. Lemma \ref{lem:tvacc} now yields the second bound in the theorem statement.
\end{proof}

We now append a final post-processing step to the reduction $k$\textsc{-pds-to-ghpm} to map to $\pr{bhpm}$. The proof of the following corollary is similar to that of Corollary \ref{thm:isbm-mod} and is deferred to Appendix \ref{subsec:appendix-3-part-3}.

\begin{corollary}[Reduction from $\pr{ghpm}$ to $\pr{bhpm}$] \label{cor:bhpm}
Let $0 < q < p \le 1$ be constant and let the parameters $k, N, E, r, \ell, n, s$ and $K$ be as in Theorem \ref{thm:ghpm} with the additional condition that $k\sqrt{r} = o(r^{2t})$. Let $\gamma \in (0, 1)$ be such that
$$\gamma \le \frac{c(r - 1)}{r^t \sqrt{r\log(ksr^t)}}$$
for a sufficiently small constant $c > 0$. Suppose that $P_0$ satisfies $\min\{P_0, 1 - P_0 \} = \Omega(1)$. Then there is a $\textnormal{poly}(N)$ time reduction $\mathcal{A}$ from graphs on $N$ vertices to graphs on $n$ vertices satisfying that
\begin{align*}
\TV\left( \mathcal{A}\left( \mG_E(N, k, p, q) \right), \, \pr{bhpm}_D(n, r, K, P_0, \gamma) \right) &= O\left( \frac{k \mu^3\sqrt{r}}{r^{2t}} + \frac{k}{\sqrt{N}} + e^{-\Omega(N^2/km)} + (ksr^t)^{-1} \right) \\
\TV\left( \mathcal{A}\left( \mG(N, q) \right), \, \mG_B(N, N, P_0) \right) &= O\left( e^{-\Omega(N^2/km)} + (ksr^t)^{-1} \right)
\end{align*}
\end{corollary}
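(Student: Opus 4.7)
The plan is to append a simple post-processing step to $k\textsc{-pds-to-ghpm}$ that first thresholds the Gaussian output at $0$ to produce a bipartite graph and then performs a density adjustment to hit the target ambient density $P_0$. More precisely, on input $G$, the reduction $\mathcal{A}$ first runs $k\textsc{-pds-to-ghpm}$ with the parameter $\mu$ chosen so that $2P_0 \cdot \Phi\!\left(\mu(r-1)/(r^t\sqrt{r})\right) = P_0 + \gamma$ when $P_0 \le 1/2$ (and an analogous choice using the complement when $P_0 > 1/2$); call the output $M \in \mathbb{R}^{n\times n}$. Next, form the bipartite graph $G_1$ with edge indicators $\mathbf{1}\{M_{ij} \ge 0\}$, and finally form $G_2$ by independently retaining each edge of $G_1$ with probability $2P_0$ (or, if $P_0 > 1/2$, retain each edge of $G_1$ and insert each non-edge independently with probability $2P_0 - 1$). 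Output $G_2$.

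Under $H_1$, Theorem \ref{thm:ghpm} guarantees that the intermediate $M$ is within the stated total variation of $\pr{ghpm}_D(n, r, K, \mu(r-1)/(r^t\sqrt{r}))$, for which the thresholding step produces a bipartite graph whose edges are independent with three distinct probabilities: $\Phi(\mu(r-1)^2/r^{t+1})$ on the in-community entries, $\Phi(-\mu(r-1)/r^{t+1})$ on the cross-community entries, and $\Phi(0) = 1/2$ on the non-subgraph entries. The density adjustment step then scales each of these to $2P_0 \cdot \Phi(\cdot)$, and by the choice of $\mu$ the in-community density matches the target $P_0 + \gamma$ exactly. For the remaining two entry types I will Taylor-expand $\Phi(x) = 1/2 + x/\sqrt{2\pi} + O(x^3)$ around zero; since $\mu (r-1)/r^{t+1} = O(\gamma)$ by construction, this shows
\[
\bigl|\, 2P_0\,\Phi(-\mu(r-1)/r^{t+1}) - (P_0 - \gamma/(r-1)) \,\bigr| = O\!\left(\mu^3/r^{3(t-1)}\right),
\]
and similarly for the third entry type, matching the computation in Corollary \ref{thm:isbm-mod}. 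The bound on $\gamma$ in the statement of the corollary will ensure that $\mu$ satisfies the constraint required by Theorem \ref{thm:ghpm}.

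To convert these pointwise density mismatches into a total variation bound between the output of $\mathcal{A}$ under $H_1$ and $\pr{bhpm}_D(n, r, K, P_0, \gamma)$, I will apply Lemma \ref{lem:bintv} (via the data-processing reduction from Bernoulli products to binomials used in Corollary \ref{thm:isbm-mod}) to each of the three blocks of entries, exploiting $\min\{P_0, 1-P_0\} = \Omega(1)$ so the denominator $P'(1-P')$ is bounded below. This contributes $O(n \cdot \mu^3/r^{3(t-1)} \cdot r^{-1/2})$ from the cross-community block (with its extra $\sqrt{r}$ saving from the block size $n^2(r-1)/r^2$) and a comparable contribution from the intra-$[n]\setminus\bigcup C_h$ block. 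Using $n = ksr^t$, both collapse to $O(k\mu^3\sqrt{r}/r^{2t})$, which is $o(1)$ by the hypothesis $k\sqrt{r} = o(r^{2t})$. Under $H_0$, the thresholding step applied to $\mN(0,1)^{\otimes n\times n}$ gives $\mG_B(n, n, 1/2)$ exactly, and the density adjustment step produces $\mG_B(n, n, P_0)$ exactly. Combining these with Lemma \ref{lem:tvacc} and the total variation bounds of Theorem \ref{thm:ghpm} yields the two claimed upper bounds.

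The main obstacle is purely bookkeeping: confirming that the choice of $\mu$ matching the in-community density to $P_0 + \gamma$ still satisfies the upper bound on $\mu$ needed to invoke Theorem \ref{thm:ghpm} (this requires unpacking $\gamma = \Theta(\mu(r-1)/(r^t\sqrt{r}))$ and comparing to the constant-factor log bound on $\mu$), and verifying that the $O(\mu^3/r^{3(t-1)})$ cubic Taylor error, once summed across the $\Theta(n^2/r)$ cross-community entries and $\Theta(n^2)$ outside entries and pushed through Lemma \ref{lem:bintv}, matches the target rate $O(k\mu^3\sqrt{r}/r^{2t})$ advertised in the statement. Nothing conceptually new beyond Corollary \ref{thm:isbm-mod} is required; the proof is a close analog with the design matrix replaced by the design tensor and the signal scaling inherited from Theorem \ref{thm:ghpm}.
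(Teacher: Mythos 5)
Your high-level plan is exactly the paper's: threshold the Gaussian output of $k\textsc{-pds-to-ghpm}$ at $0$, adjust the ambient density to $P_0$ by independent edge deletion (or insertion when $P_0 > 1/2$), choose $\mu$ so the in-community density hits $P_0 + \gamma$ exactly, bound the residual mismatch on the remaining entries via Lemma \ref{lem:bintv} and tensorization, and stitch with Lemma \ref{lem:tvacc} against Theorem \ref{thm:ghpm}. However, the bookkeeping you claim is ``purely mechanical'' contains several compounding errors, and as written it does not reach the advertised rate.

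First, you have transcribed the $\Phi$ arguments from the $\pr{isbm}$ reduction (Corollary \ref{thm:isbm-mod}): you write the thresholded probabilities as $\Phi\bigl(\mu(r-1)^2/r^{t+1}\bigr)$ and $\Phi\bigl(-\mu(r-1)/r^{t+1}\bigr)$, but Theorem \ref{thm:ghpm} produces a $\pr{ghpm}$ instance with signal $\gamma' = \mu(r-1)/(r^t\sqrt{r})$, so the correct arguments are $\pm\mu(r-1)/(r^t\sqrt{r})$ for the in-community entries and $-\mu/(r^t\sqrt{r})$ for the cross-community entries; the $r^{t+1}$-scaling belongs to $\pr{isbm}$. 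Second, and more fundamentally, you treat the ``intra-$[n]\setminus\bigcup C_h$'' block as a source of density mismatch. In $\pr{ghpm}$ (unlike $\pr{isbm}$) the off-subgraph entries have mean exactly zero, so thresholding gives $1/2$ exactly and the density adjustment gives $P_0$ exactly; there is no mismatch there at all. The only block contributing to the total variation error is the cross-community block. Third, you take that block to have $\Theta(n^2/r)$ entries, which would be the case if the planted structure filled the whole graph, but here the $r$ communities occupy only $rK$ out of $n$ vertices on each side, so the cross-community block has $K^2 r(r-1) = \Theta(k^2 r^{2t})$ entries, which is much smaller than $n^2/r = \Theta(k^2 s^2 r^{2t-1})$.

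These errors do not cancel. Your per-entry mismatch $O(\mu^3/r^{3(t-1)})$ together with $\sqrt{n^2/r}$ entries gives $O(ks\mu^3 r^{5/2}/r^{2t})$ after substituting $n = ksr^t$, which is larger than the claimed $O(k\mu^3\sqrt{r}/r^{2t})$ by a factor of $sr^2$ and need not be $o(1)$ under the hypothesis $k\sqrt{r} = o(r^{2t})$. The correct computation is: with the $\pr{ghpm}$ arguments the linear terms in the Taylor expansion cancel by the choice of $\mu$, leaving a per-entry cross-community mismatch of $O(\mu^3\sqrt{r}/r^{3t})$, and multiplying by $\sqrt{K^2 r(r-1)} = \Theta(kr^t)$ via Lemma \ref{lem:bintv} gives $O(k\mu^3\sqrt{r}/r^{2t})$ as advertised. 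So the route is right, but you need to (a) discard the spurious off-subgraph contribution, (b) use the $\pr{ghpm}$ signal scale $r^t\sqrt{r}$ rather than $r^{t+1}$, and (c) use the actual cross-community block size.
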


Collecting the results of this section, we arrive at the following computational lower bounds for $\pr{ghpm}$ and $\pr{bhpm}$ matching the efficient test $s_C$ in Lemma \ref{lem:ghpm-test}.

\begin{reptheorem}{thm:ghpm-lb} [Lower Bounds for $\pr{ghpm}$ and $\pr{bhpm}$]
Suppose that $r^2 K^2 = \tilde{\omega}(n)$ and $(\lceil r^2 K^2/n \rceil, r)$ satisfies condition \pr{(t)}, suppose $r$ is prime or $r = \omega_n(1)$ and suppose that $P_0 \in (0, 1)$ satisfies $\min\{P_0, 1 - P_0 \} = \Omega_n(1)$. Then the $k\pr{-pc}$ conjecture or $k\pr{-pds}$ conjecture for constant $0 < q < p \le 1$ both imply that there is a computational lower bound for each of $\pr{ghpm}(n, r, K, \gamma)$ for all levels of signal $\gamma^2 = \tilde{o}(n/rK^2)$. This same lower bound also holds for $\pr{bhpm}(n, r, K, P_0, \gamma)$ given the additional condition $n = o(rK^{4/3})$.
\end{reptheorem}

\begin{proof}
The proof of this theorem will follow that of Theorem \ref{thm:isbm-lb} with several modifications. We begin by showing a lower bound for $\pr{ghpm}$. It suffices to show that the reduction $k\pr{-pds-to-ghpm}$ fills out all of the possible growth rates specified by the computational lower bound $\gamma^2 = \tilde{o}(n/rK^2)$ and the other conditions in the theorem statement. Fix a constant pair of probabilities $0 < q < p \le 1$ and any sequence of parameters $(n, r, K, \gamma)$ all of which are implicitly functions of $n$ such that $(\lceil r^2 K^2/n \rceil, r)$ satisfies $\pr{(t)}$ and
$$\gamma^2 \le c \cdot \frac{n}{w' \cdot rK^2 \log n} \quad \text{and} \quad r^2 K^2 \ge w'n$$
for sufficiently large $n$ and $w' = w'(n) = (\log n)^{c}$ for a sufficiently large constant $c > 0$. Now let $w = w(n) \to \infty$ be an arbitrarily slow-growing increasing positive integer-valued function at least satisfying that $w(n) = n^{o(1)}$. As in Theorem \ref{thm:isbm-lb}, we now specify the following parameters which are sufficient to establish the lower bound for $\pr{ghpm}$:
\begin{enumerate}
\item a sequence $(N, k_N)$ such that $k\pr{-pds}(N, k_N, p, q)$ is hard according to Conjecture \ref{conj:hard-conj}; and 
\item a sequence $(n', r', K', \gamma, s, t, \mu)$ with a subsequence that satisfies three conditions: (2.1) the parameters on the subsequence are in the regime of the desired computational lower bound for $\pr{ghpm}$; (2.2) the parameters $(n', r', K', \gamma)$ have the same growth rate as $(n, r, K, \gamma)$ on this subsequence; and (2.3) such that $\pr{ghpm}(n', r', K', \gamma)$ with the parameters on this subsequence can be produced by $k\pr{-pds-to-ghpm}$ with input $k\pr{-pds}(N, k_N, p, q)$ applied with additional parameters $s, t$ and $\mu$.
\end{enumerate}
We choose these parameters as follows:
\begin{itemize}
\item let $r' = r$ be the smallest prime satisfying that $r \le r' \le 2r$, which exists by Bertrand's postulate and can be found in $\text{poly}(n)$ time;
\item let $t$ be such that $(r')^{t}$ is the closest power of $r'$ to $r'K/\sqrt{n}$, let $s = \lceil n/r'K \rceil$ and let $\mu = \frac{\gamma (r')^t \sqrt{r'}}{r' - 1}$;
\item now let $k_N$ be given by
$$k_N = \left\lfloor \frac{1}{2}\left( 1 + \frac{p}{Q} \right)^{-1} w^{-2} \cdot \min\left\{ \frac{K}{(r')^{t - 1}}, \sqrt{n} \right\} \right\rfloor$$
where $Q = 1 - \sqrt{(1 - p)(1 - q)} + \mathbf{1}_{\{ p = 1\}} \left( \sqrt{q} - 1 \right)$; and
\item let $K' = k_N(r')^{t-1}$, let $n' = k_N s (r')^t$ and let $N = wk_N^2$.
\end{itemize}
Now observe that we have the following bounds
\allowdisplaybreaks
\begin{align*}
n' &\asymp k_N s (r')^t \asymp \left( w^{-2} \cdot \min\left\{ 1, \frac{(r')^{t -1}\sqrt{n}}{K} \right\} \right) n \\
K' &\asymp k_N (r')^{t - 1} = \frac{n'}{r's} \asymp \left( w^{-2} \cdot \min\left\{ 1, \frac{(r')^{t -1}\sqrt{n}}{K} \right\} \right) K \\
m &\le 2\left( \frac{p}{Q} + 1 \right) wk_N^2 \le \left( w^{-1} \cdot \min\left\{ \frac{K}{(r')^{t - 1} \sqrt{n}}, 1 \right\} \cdot \frac{K}{(r')^{t - 1} \sqrt{n}} \right) k_N s (r' - 1)\ell \\
k_N s (r' - 1)\ell&\le \text{poly}(N) \\
(r')^2 (K')^2 &\ge \left( \frac{r'K'}{rK} \right)^2 \cdot \frac{n}{n'} \cdot w' n' \\
\mu &= \frac{\gamma (r')^t \sqrt{r'}}{r' - 1} \le \sqrt{\frac{r'}{r}} \cdot \frac{(r')^{t - 1}\sqrt{n}}{K} \cdot \frac{2}{(w')^{1/2} \sqrt{\log n}} \\
\gamma^2 &\lesssim \frac{n'}{w' \cdot r'(K')^2 \log n'} \cdot \frac{r'}{r} \cdot \frac{n'}{n} \cdot \frac{(K')^2}{K^2} \cdot \frac{\log n'}{\log n}
\end{align*}
where $m$ is the smallest multiple of $k_N$ larger $\left( \frac{p}{Q} + 1 \right) N$ and $\ell = \frac{(r')^t - 1}{r' - 1}$. Now observe that as long as $r'K/\sqrt{n} = \tilde{\Theta}((r')^t)$ then: (2.1) the last inequality above on $\gamma^2$ would imply that $(n', r', K', \gamma)$ is in the desired hard regime; (2.2) the pairs of parameters $(n, n')$, $(K, K')$ and $(r, r')$ have the same growth rates since $w = n^{o(1)}$ and either $r' = r$ or $r' = \Theta(r) = \omega(1)$; and (2.3) the third through sixth bounds above imply that taking $c$ large enough yields the conditions needed to apply Corollary \ref{thm:isbm-mod} to yield the desired reduction. By Lemma \ref{lem:propT}, there is an infinite subsequence of the input parameters such that $r'K/\sqrt{n} = \tilde{\Theta}((r')^t)$, which concludes the proof of the lower bound for $\pr{ghpm}$ as in Theorems \ref{thm:rsme-lb} and \ref{thm:isbm-lb}.

The computational lower bound for $\pr{bhpm}$ follows from the same argument applied to $\mathcal{A}$ from Corollary \ref{cor:bhpm} with the following modification. The conditions in the theorem statement for $\pr{bhpm}$ add the initial condition that $rK^{4/3} \ge w'n$. The parameter settings above then imply that $k_N \sqrt{r'} = \tilde{o}((r')^{2t})$ holds on the parameter subsequence with $r'K/\sqrt{n} = \tilde{\Theta}((r')^t)$. The same reasoning above then yields the desired computational lower bound for $\pr{bhpm}$ and completes the proof of the theorem.
\end{proof}

\subsection{Semirandom Single Community Recovery}
\label{sec:semirandom}

In this section, we show that the $k$\pr{-pc} and $k$\pr{-pds} conjectures with constant edge density imply the $\pr{pds}$ Recovery Conjecture under a semirandom adversary in the regime of constant ambient edge density. The $\pr{pds}$ Recovery Conjecture and formulations of semirandom single community recovery here are as they were introduced in Sections \ref{subsec:1-problems-semicr} and \ref{subsec:2-formulations}. Our reduction from $k$\pr{-pds} to \pr{semi-cr} is shown in Figure \ref{fig:semirandreduction}. On a high level, our main observation is that an adversary in $\pr{semi-cr}$ with subgraph size $k$ can simulate the problem of detecting for the presence of a hidden $\pr{isbm}$ instance on a subgraph with $O(k)$ in an $n$-vertex Erd\H{o}s-R\'{e}nyi graph. Furthermore, combining the Bernoulli rotations step with $K_{3, t}$ as in $k\pr{-pds-to-isbm}$ with the partition refinement of $k\pr{-pds-to-ghpm}$ can be shown to map to this detection problem. Furthermore, it faithfully recovers the Kesten-Stigum bound from the $\pr{pds}$ Recovery Conjecture as opposed to the slower detection rate. The key proofs in this section resemble similar proofs in the previous two sections. We omit details that are similar for brevity.

Before proceeding with the main proofs of this section, we discuss the relationship between our results and the reduction of \cite{cai2015computational}. In \cite{cai2015computational}, the authors prove a detection-recovery gap in the context of sub-Gaussian submatrix localization based on the hardness of finding a planted $k$-clique in a random $n/2$-regular graph. This degree-regular formulation of $\pr{pc}$ was previously considered in \cite{deshpande2015finding} and differs in a number of ways from \pr{pc}. For example, it is unclear how to generate a sample from the degree-regular variant in polynomial time. We remark that the reduction of \cite{cai2015computational}, when instead applied the usual formulation of \pr{pc} produces a matrix with highly dependent entries. Specifically, the sum of the entries of the output matrix has variance $n^2/\mu$ where $\mu \ll 1$ is the mean parameter for the submatrix localization instance whereas an output matrix with independent entries of unit variance would have a sum of entries of variance $n^2$. Note that, in general, any reduction beginning with $\pr{pc}$ that also preserves the natural $H_0$ hypothesis cannot show the existence of a detection-recovery gap, as any lower bounds for localization would also apply to detection.

Formally, the goal of this section is to show that the reduction $k\pr{pds-to-semi-cr}$ in Figure \ref{fig:semirandreduction} maps from $k$\pr{-pc} and $k$\pr{-pds} to the following distribution under $H_1$, for a particular choice of $\mu_1, \mu_2$ and $\mu_3$ just below the $\pr{pds}$ Recovery Conjecture. We remark that $k\pr{-pds-to-semi-cr}$ maps to the specific case where $P_0 = 1/2$. This reduction is extended in Corollary \ref{cor:semi-cr-gen} to handle $P_0 \neq 1/2$ with $\min\{P_0, 1 - P_0\} = \Omega(1)$.

\begin{definition}[Target $\pr{semi-cr}$ Instance]
Given positive integers $k, k' \le n$ and $P_0, \mu_1, \mu_2, \mu_3 \in (0, 1)$ satisfying that $\mu_1, \mu_2 \le P_0 \le 1 - \mu_3$, let $\pr{tsi}(n, k, k', P_0, \mu_1, \mu_2, \mu_3)$ be the distribution over $G \in \mG_n$ sampled as follows:
\begin{enumerate}
\item choose two disjoint subsets $S \subseteq [n]$ and $S' \subseteq [n]$ of sizes $|S| = k$ and $|S'| = k'$, respectively, uniformly at random; and
\item include the edge $\{i, j\}$ in $E(G)$ independently with probability $p_{ij}$ where
$$p_{ij} = \left\{ \begin{array}{ll} P_0 &\textnormal{if } (i, j) \in S'^2 \\ P_0 - \mu_1 &\textnormal{if } (i, j) \in [n]^2 \backslash (S \cup S')^2 \\ P_0 - \mu_2 &\textnormal{if } (i, j) \in S \times S' \textnormal{ or } (i, j) \in S' \times S \\ P_0 + \mu_3 &\textnormal{if } (i, j) \in S^2 \end{array} \right.$$
\end{enumerate}
\end{definition}

Note that this distribution can be produced by a semirandom adversary in $\pr{semi-cr}(n, k, P_0 + \mu_3, P_0)$ under $H_1$ as follows:
\begin{enumerate}
\item samples $S'$ of size $k'$ uniformly at random from all $k'$-subsets of $[n] \backslash S$ where $S$ is the vertex set of the planted dense subgraph; and
\item if the edge $\{i, j \}$ is in $E(G)$, remove it from $G$ independently with probability $q_{ij}$ where
$$q_{ij} = \left\{ \begin{array}{ll} 0 &\text{if } (i, j) \in S^2 \cup S'^2 \\ \mu_1/P_0 &\text{if } (i, j) \not \in (S \cup S')^2 \\ \mu_2/P_0 &\text{if } (i, j) \in S \times S' \text{ or } (i, j) \in S' \times S \end{array} \right.$$
\end{enumerate}
Note that $\mG(n, P_0')$ can be produced by the adversary under $H_0$ of $\pr{semi-cr}(n, k, P_0 + \mu_1, P_0)$ as long as $P_0' \le P_0$ by removing all edges independently with probability $1 - P_0'/P_0$. Thus it suffices to map to a testing problem between some $\pr{tsi}(n, k, k', P_0, \mu_1, \mu_2, \mu_3)$ and $\mG(n, P_0')$.

The next theorem establishes our main Markov transition guarantees for the reduction $k\pr{pds-to-semi-cr}$, which map to such a testing problem when $P_0 = 1/2$.

\begin{theorem}[Reduction to $\pr{semi-cr}$] \label{thm:semi-cr-reduction}
Let $N$ be a parameter and fix other parameters as follows:
\begin{itemize}
\item \textnormal{Initial} $k\pr{-bpds}$ \textnormal{Parameters:} $k, N, p, q$ and $E$ as in Theorem \ref{thm:isbm}.
\item \textnormal{Target} $\pr{semi-cr}$ \textnormal{Parameters:} $(n, K, 1/2 + \gamma, 1/2)$ where $n = 3ks \cdot \frac{3^t - 1}{2}$ and $K = (3^t - 1)k$ for some parameters $t = t(N), s = s(N) \in \mathbb{N}$ satisfying that
$$m \le 3^t ks \le n \le \textnormal{poly}(N)$$
where $m$ and $Q$ are as in Theorem \ref{thm:ghpm}. The target level of signal $\gamma$ is given by $\gamma = \Phi\left( \frac{\mu}{3^{t}} \right) - 1/2$ and the target $\pr{tsi}$ densities are 
$$\mu_1 = \Phi\left( \frac{\mu}{3^{t+1}} \right) - \frac{1}{2} \quad \textnormal{and} \quad \mu_2 = \mu_3 = \Phi\left( \frac{\mu}{3^{t}} \right) - \frac{1}{2}$$
where $\mu \in (0, 1)$ satisfies that
$$\mu \le \frac{1}{2 \sqrt{6\log n + 2\log (p - Q)^{-1}}} \cdot \min \left\{ \log \left( \frac{p}{Q} \right), \log \left( \frac{1 - Q}{1 - p} \right) \right\}$$
\end{itemize}
Let $\mathcal{A}(G)$ denote $k$\textsc{-pds-to-semi-cr} applied to the graph $G$ with these parameters. Then $\mathcal{A}$ runs in $\textnormal{poly}(N)$ time and it follows that
\begin{align*}
\TV\left( \mathcal{A}\left( \mG_E(N, k, p, q) \right), \, \pr{tsi}(n, K, K/2, 1/2, \mu_1, \mu_2, \mu_3) \right) &= O\left( \frac{k}{\sqrt{N}} + e^{-\Omega(N^2/km)} + (3^t ks)^{-1} \right) \\
\TV\left( \mathcal{A}\left( \mG(N, q) \right), \, \mG\left(n, 1/2 - \mu_1 \right) \right) &= O\left( e^{-\Omega(N^2/km)} + (3^t ks)^{-1} \right)
\end{align*}
\end{theorem}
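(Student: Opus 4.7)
The proof will follow the template established by Theorems \ref{thm:isbm} and \ref{thm:ghpm}. The reduction $k\pr{-pds-to-semi-cr}$ will consist of four steps: (i) symmetrize the adjacency matrix of the input $k$-partite planted dense subgraph and plant diagonals using $\pr{To-}k\pr{-Partite-Submatrix}$ (Lemma \ref{lem:submatrix}); (ii) pad the matrix and refine the partition using the parameter $s$, as in the corresponding step of $k\pr{-pds-to-ghpm}$; (iii) apply $\pr{Tensor-Bern-Rotations}$ block-by-block with matrix parameters $A_1 = A_2 = K_{3,t} \otimes I_s$ (specializing Section \ref{subsec:2-design-matrices} to $r=3$), so that the effective rotation matrix is $(K_{3,t} \otimes I_s) \otimes (K_{3,t} \otimes I_s)$; and (iv) threshold each entry at a carefully chosen cutoff $c = \mu/3^{t+1}$ and randomly permute the vertex labels to produce the output graph.

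The heart of the analysis is a Bernoulli-rotations lemma analogous to Lemmas \ref{lem:isbm-rotations} and \ref{lem:ghpm-rotations}. By Lemma \ref{lem:Krtsv}, $K_{3,t}$ has largest singular value $\lambda = \sqrt{3/2}$, and the Kronecker factor $I_s$ does not increase this, so $A_1 \otimes A_2$ has singular values bounded by $\lambda^2 = 3/2$. Applying Corollary \ref{cor:tensor-bern-rotations} to each of the $k^2$ input sub-blocks produces an approximate Gaussian output with mean matrix proportional to the outer product of selected columns of $K_{3,t} \otimes I_s$. By Lemma \ref{lem:suborthogonalmatrices}, each column of $K_{3,t}$ contains exactly $\ell$ ``negative'' entries of value $(1-r)/\sqrt{r^t(r-1)}$ and $(r-1)\ell = 2\ell$ ``positive'' entries of value $1/\sqrt{r^t(r-1)}$. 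The outer product therefore has three distinct nonzero values occurring in proportions $(r-1)^2 : 2(r-1) : 1 = 4:4:1$; after scaling by $\mu\lambda^{-2} = 2\mu/3$, the three product values are $\mu/3^{t+1}$, $-2\mu/3^{t+1}$, and $4\mu/3^{t+1}$, with a zero-mean baseline elsewhere.

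Combined with thresholding at $c = \mu/3^{t+1}$, these four mean values produce exactly the four target edge densities of $\pr{tsi}$: the baseline becomes $\Phi(-\mu/3^{t+1}) = 1/2 - \mu_1$, and the three signal regions become $\Phi(0) = 1/2 = P_0$, $\Phi(-\mu/3^t) = 1/2 - \mu_2$, and $\Phi(\mu/3^t) = 1/2 + \mu_3$, matching the $\mu_1, \mu_2, \mu_3$ prescribed in the theorem. A careful bookkeeping of the row and column ``sign patterns'' induced by $K_{3,t}$ across different input blocks --- using that the set of positive-sign rows within a block depends only on the biclique-column index of that block row --- identifies sets $S$ and $S'$ of sizes $2k\ell = K$ and $k\ell = K/2$ respectively, so that the three product regions in the mean matrix correspond in the correct counts to $S^2$, $S \times S' \cup S' \times S$, and $S'^2$ of $\pr{tsi}$, after the random permutation of Step (iv). The $H_0$ case is easier: tensor Bernoulli rotations applied to a $\textnormal{Bern}(Q)$ input produces approximately $\mN(0,1)^{\otimes n\times n}$ noise, which after thresholding at $c$ becomes $\mG(n, 1/2 - \mu_1)$.

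The overall total variation bound is then assembled via Lemma \ref{lem:tvacc} along the distributional sequence $\mG_E(N,k,p,q) \to \mathcal{M}_{[m]\times[m]}(\cdot) \to \mathcal{M}_{[ksr^t]\times[ksr^t]}(\cdot) \to (\text{mean matrix}) + \mN(0,1)^{\otimes n\times n} \to \pr{tsi}$, taking $\epsilon_1$ from Lemma \ref{lem:submatrix}, $\epsilon_2 = 0$ for the deterministic padding, $\epsilon_3 = O((3^t k s)^{-1})$ from applying Corollary \ref{cor:tensor-bern-rotations} blockwise with rejection-kernel parameter $R_{\pr{rk}} = 3^t k s$ and tensorizing over $k^2 s^2$ sub-blocks (analogous to the accounting in Lemma \ref{lem:ghpm-rotations}), and $\epsilon_4 = 0$ for the deterministic thresholding and random permutation. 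The main technical obstacle will be the explicit matching between the three distinct outer-product mean values and the four distinct edge densities of $\pr{tsi}$ --- including the delicate shift of the baseline away from $1/2$ --- which is what forces the specific relationship $\mu_1 = \Phi(\mu/3^{t+1}) - 1/2$ versus $\mu_2 = \mu_3 = \Phi(\mu/3^t) - 1/2$ appearing in the theorem, and which explains why the parameter $r = 3$ (yielding three distinct product values) is the natural choice for this reduction.
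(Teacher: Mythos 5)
Your proposal follows the paper's proof essentially step for step: the four-stage pipeline ($\pr{To-}k\pr{-Partite-Submatrix}$, pad-and-refine, $\pr{Tensor-Bern-Rotations}$ with $K_{3,t}$ and $\lambda = \sqrt{3/2}$, threshold at $\mu/3^{t+1}$), the identification of the three distinct products $\mu/3^{t+1}$, $-2\mu/3^{t+1}$, $4\mu/3^{t+1}$ arising from the two column values of $K_{3,t}$, the matching of these to the four $\pr{tsi}$ edge densities, and the assembly via Lemma \ref{lem:tvacc} with a Bernoulli-rotations lemma analogous to Lemma \ref{lem:ghpm-rotations} all mirror the paper's Lemma \ref{lem:rotthres} and Theorem \ref{thm:semi-cr-reduction} proof. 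The only cosmetic difference is that you describe the effective rotation as $(K_{3,t} \otimes I_s) \otimes (K_{3,t} \otimes I_s)$ while the pseudocode applies $K_{3,t}$ per refined sub-block — these are the same operation, and the paper itself uses your phrasing in Section \ref{subsec:1-tech-design-matrices}.
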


\begin{figure}[t!]
\begin{algbox}
\textbf{Algorithm} $k$\textsc{-pds-to-semi-cr}

\vspace{1mm}

\textit{Inputs}: $k$\pr{-pds} instance $G \in \mG_N$ with dense subgraph size $k$ that divides $N$, and the following parameters
\begin{itemize}
\item partition $E$, edge probabilities $0 < q < p \le 1$, $Q \in (0, 1)$ and $m$ as in Figure \ref{fig:isbm-reduction}
\item refinement parameter $s$ and number of vertices $n = 3ks \cdot \frac{3^t - 1}{2}$ for some $t \in \mathbb{N}$ satisfy that $m \le 3^t ks \le n \le \text{poly}(N)$
\item mean parameter $\mu \in (0, 1)$ as in Figure \ref{fig:isbm-reduction}
\end{itemize}

\begin{enumerate}
\item \textit{Symmetrize and Plant Diagonals}: Compute $M_{\text{PD1}} \in \{0, 1\}^{m \times m}$ and $F$ as in Step 1 of Figure \ref{fig:isbm-reduction}.
\item \textit{Pad and Further Partition}: Form $M_{\text{PD2}}$ and $F'$ as in Step 2 of Figure \ref{fig:isbm-reduction} modified so that $M_{\text{PD2}}$ is a $3^t ks \times 3^t ks$ matrix and each $F'_i$ has size $3^t s$. Let $F^s$ be the partition of $[3^t ks]$ into $ks$ parts of size $3^t$ by refining $F'$ by splitting each of its parts into $s$ parts of equal size arbitrarily.
\item \textit{Bernoulli Rotations}: Let $F^o$ be a partition of $[n]$ into $ks$ equally sized parts. Now compute the matrix $M_{\text{R}} \in \mathbb{R}^{n \times n}$ as follows:
\begin{enumerate}
\item[(1)] For each $i, j \in [k]$, apply $\pr{Tensor-Bern-Rotations}$ to the matrix $(M_{\text{P}})_{F_i^s, F_j^s}$ with matrix parameter $A_1 = A_2 = K_{3, t}$, Bernoulli probabilities $0 < Q < p \le 1$, output dimension $\frac{1}{2} (3^{t} - 1)$, $\lambda_1 = \lambda_2 = \sqrt{3/2}$ and mean parameter $\mu$.
\item[(2)] Set the entries of $(M_{\text{R}})_{F^o_i, F^o_j}$ to be the entries in order of the matrix output in (1).
\end{enumerate}
\item \textit{Threshold and Output}: Output the graph generated by Step 4 of Figure \ref{fig:isbm-reduction} modified so that $G'$ has vertex set $[n]$ and $M_{\text{R}}$ is thresholded at $\frac{\mu}{3^{t + 1}}$.
\end{enumerate}
\vspace{0.5mm}

\end{algbox}
\caption{Reduction from $k$-partite planted dense subgraph to semirandom community recovery.}
\label{fig:semirandreduction}
\end{figure}

To prove this theorem, we prove a lemma analyzing the Bernoulli rotations step in Figure \ref{fig:semirandreduction}. The proof of this lemma is similar to those of Lemmas \ref{lem:isbm-rotations} and \ref{lem:ghpm-rotations}. We omit details that are identical. Recall from Section \ref{subsec:3-rsme-reduction} the definition of the vector $v_{S, F^s, F^o}(M) \in \mathbb{R}^{ab}$ where $F^s$ and $F^o$ are partitions of $[ab]$ into $a$ equally sized parts and $S$ is a set intersecting each $F^s_i$ in exactly one element. Here we extend this definition to sets $S$ intersecting each $F^s_i$ in at most one element, by setting
$$\left( v_{S, F^s, F^o}(M) \right)_{F_i^o} = \left\{ \begin{array}{ll} M_{\cdot, S \cap F_i^s} &\text{if } S \cap F_i \neq \emptyset \\ 0 &\text{if } S \cap F_i = \emptyset \end{array} \right.$$
for each $1 \le i \le a$. We now can state the approximate Markov transition guarantees for the Bernoulli rotations step of $k\pr{-pds-to-semi-cr}$ in this notation.

\begin{lemma}[Bernoulli Rotations for $\pr{semi-cr}$] \label{lem:rotthres}
Let $F^s$ and $F^o$ be a fixed partitions of $[3^t ks]$ and $[n]$ into $ks$ parts of size $3^t$ and $\frac{1}{2}(3^t - 1)$, respectively, and let $S \subseteq [3^t ks]$ where $|S| = k$ and $|S \cap F_i^s| \le 1$ for each $1 \le i \le ks$. Let $\mathcal{A}_{\textnormal{3}}$ denote Step 3 of $k\pr{-pds-to-semi-cr}$ with input $M_{\textnormal{PD2}}$ and output $M_{\textnormal{R}}$. Suppose that $p, Q$ and $\mu$ are as in Theorem \ref{thm:semi-cr-reduction}, then it follows that
\begin{align*}
&\TV\Big( \mathcal{A}_{\textnormal{3}} \left( \mathcal{M}_{[3^t ks] \times [3^t ks]} \left( S \times S, \textnormal{Bern}(p), \textnormal{Bern}(Q) \right) \right), \\
&\quad \quad \quad \quad \left. \mL\left( \frac{2\mu}{3} \cdot v_{S, F^s, F^o}(K_{3, t}) v_{S, F^s, F^o}(K_{3, t})^\top + \mN(0, 1)^{\otimes n \times n} \right) \right) = O\left((3^t k s)^{-1}\right) \\
&\TV\left( \mathcal{A}_{\textnormal{3}} \left(\textnormal{Bern}(Q)^{\otimes 3^t ks \times 3^t ks} \right), \, \mN(0, 1)^{\otimes n \times n} \right) = O\left((3^t k s)^{-1}\right)
\end{align*}
\end{lemma}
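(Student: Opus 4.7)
The plan is to mirror the structure of the proofs of Lemmas \ref{lem:isbm-rotations} and \ref{lem:ghpm-rotations}, combining a block-by-block application of Corollary \ref{cor:tensor-bern-rotations} (Tensor Bernoulli Rotations) with independence across blocks and the tensorization property of total variation from Fact \ref{tvfacts}. First, observe that the blocks $(M_{\textnormal{PD2}})_{F_i^s, F_j^s} \in \{0,1\}^{3^t \times 3^t}$ are mutually independent for $1 \le i, j \le ks$ and, under the planted hypothesis, each block is distributed as $\pr{pb}_2\bigl(F_i^s \times F_j^s,\, (S \cap F_i^s, S \cap F_j^s), p, Q\bigr)$ when both intersections are nonempty, and as $\textnormal{Bern}(Q)^{\otimes 3^t \times 3^t}$ otherwise; under the null hypothesis every block is $\textnormal{Bern}(Q)^{\otimes 3^t \times 3^t}$.

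Second, I apply Corollary \ref{cor:tensor-bern-rotations} with $s = 2$, $A_1 = A_2 = K_{3,t}$, and $\lambda_1 = \lambda_2 = \sqrt{3/2}$, which by Lemma \ref{lem:Krtsv} is a valid singular-value upper bound. The rejection-kernel parameter is $R_{\pr{rk}} = 3^t ks$, which by the hypothesis on $\mu$ satisfies the precondition of Lemma \ref{lem:bern-rotations}. For planted blocks this gives total variation $O(3^{2t} \cdot (3^t ks)^{-3})$ to $\mN\bigl(\tfrac{\mu}{\lambda_1 \lambda_2} (K_{3,t})_{\cdot, S \cap F_i^s} (K_{3,t})_{\cdot, S \cap F_j^s}^\top,\, I\bigr) = \mN\bigl(\tfrac{2\mu}{3} (K_{3,t})_{\cdot, S \cap F_i^s} (K_{3,t})_{\cdot, S \cap F_j^s}^\top, I\bigr)$, and for null blocks the same bound to $\mN(0, I)$.

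Third, combining the $k^2 s^2$ independent block transitions via the tensorization property yields the desired total-variation error $O\bigl(k^2 s^2 \cdot 3^{2t} \cdot (3^t ks)^{-3}\bigr) = O\bigl((3^t k s)^{-1}\bigr)$. It remains to verify that the aggregated mean matrix over all blocks equals exactly $\tfrac{2\mu}{3} \cdot v_{S, F^s, F^o}(K_{3,t}) \, v_{S, F^s, F^o}(K_{3,t})^\top$, which follows directly from unpacking the extended definition of $v_{S, F^s, F^o}$: its $F_i^o$-restriction is the column $(K_{3,t})_{\cdot, S \cap F_i^s}$ when $S \cap F_i^s \neq \emptyset$ and zero otherwise, so the outer product $vv^\top$ restricted to $F_i^o \times F_j^o$ matches the target mean of each block (including the zero mean in blocks where $S$ misses either part index).

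The only nonroutine step is this bookkeeping on blocks where $S \cap F_i^s$ is empty, since these contribute no signal yet must still be routed through Corollary \ref{cor:tensor-bern-rotations} with $\mu = 0$ (or equivalently, tracked as unplanted blocks in the tensorization). The ordering conventions used to embed the $r\ell \times r\ell$ output of each local $\pr{Tensor-Bern-Rotations}$ call into $(M_{\text{R}})_{F_i^o, F_j^o}$ must be consistent with the ordering implicit in the definition of $v_{S, F^s, F^o}$; as in Lemma \ref{lem:isbm-rotations}, we adopt the canonical order-preserving embedding. The null-hypothesis bound follows from the identical argument with $S = \emptyset$, collapsing all blocks to $\mN(0, I)$.
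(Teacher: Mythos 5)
Your proof is correct and takes essentially the same approach as the paper's: a block-by-block application of Corollary \ref{cor:tensor-bern-rotations} with singular value bound $\lambda_1\lambda_2 = 3/2$ from Lemma \ref{lem:Krtsv}, followed by tensorization of total variation over the $(ks)^2$ independent blocks, exactly as in the paper's Lemmas \ref{lem:isbm-rotations} and \ref{lem:ghpm-rotations}. Your bookkeeping of the extended definition of $v_{S,F^s,F^o}$ (zero on parts missed by $S$) and the resulting block-diagonal structure of the mean matrix, as well as the null case via $S = \emptyset$, matches what the paper implicitly relies on.
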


\begin{proof}
Let (1) and (2) denote the following two cases:
\begin{enumerate}
\item $M_{\textnormal{PD2}} \sim \mathcal{M}_{[3^t ks] \times [3^t ks]} \left( S \times S, \textnormal{Bern}(p), \textnormal{Bern}(Q) \right)$; and
\item $M_{\textnormal{PD2}} \sim \textnormal{Bern}(Q)^{\otimes 3^t ks \times 3^t ks}$.
\end{enumerate}
Now define the matrix $M_{\text{R}}'$ with independent entries such that
$$M_{\text{R}}' \sim \left\{ \begin{array}{ll} \frac{2\mu}{3} \cdot v_{S, F^s, F^o}(K_{3, t}) v_{S, F^s, F^o}(K_{3, t})^\top + \mN(0, 1)^{\otimes n \times n} &\text{if (1) holds} \\ \mN(0, 1)^{\otimes n \times n} &\text{if (2) holds} \end{array} \right.$$
Similarly to Lemma \ref{lem:ghpm-rotations}, Lemmas \ref{lem:bern-rotations} and \ref{lem:Krtsv} yields that under both (1) and (2), we have that
$$\TV\left( \left( M_{\text{R}} \right)_{F_i^s, F_j^s}, \left( M_{\text{R}}' \right)_{F_i^s, F_j^s} \right) = O\left( 3^{2t} \cdot (3^t ks)^{-3} \right)$$
for all $1 \le i, j \le ks$. The tensorization property of total variation in Fact \ref{tvfacts} now yields that
$$\TV\left( \mL(M_{\text{R}}), \mL(M_{\text{R}}') \right) = O\left( (3^t ks)^{-1} \right)$$
under both (1) and (2), proving the lemma.
\end{proof}

We now complete the proof of Theorem \ref{thm:semi-cr-reduction}, which follows a similar structure as in Theorem \ref{thm:isbm}.

\begin{proof}[Proof of Theorem \ref{thm:semi-cr-reduction}]
Let the steps of $\mathcal{A}$ to map inputs to outputs as follows
$$(G, E) \xrightarrow{\mathcal{A}_1} (M_{\text{PD1}}, F) \xrightarrow{\mathcal{A}_2} (M_{\text{PD2}}, F^s) \xrightarrow{\mathcal{A}_3} (M_{\text{R}}, F^o) \xrightarrow{\mathcal{A}_{\text{4}}} G'$$
Under $H_1$, consider Lemma \ref{lem:tvacc} applied to the following sequence of distributions
\allowdisplaybreaks
\begin{align*}
\mathcal{P}_0 &= \mG_E(N, k, p, q) \\
\mathcal{P}_1 &= \mathcal{M}_{[m] \times [m]}(S \times S, \textnormal{Bern}(p), \textnormal{Bern}(Q)) \quad \text{where } S \sim \mU_m(F) \\
\mathcal{P}_2 &= \mathcal{M}_{[3^t ks] \times [3^t ks]}(S \times S, \textnormal{Bern}(p), \textnormal{Bern}(Q)) \quad \text{where } S \sim \mU_{3^t ks}^k(F^s) \\
\mathcal{P}_3 &= \frac{2\mu}{3} \cdot v_{S, F^s, F^o}(K_{3, t}) v_{S, F^s, F^o}(K_{3, t})^\top + \mN(0, 1)^{\otimes n \times n} \quad \text{where } S \sim \mU_{3^t ks}^k(F^s) \\
\mathcal{P}_{\text{4}} &= \pr{tsi}(n, K, K/2, 1/2, \mu_1, \mu_2, \mu_3)
\end{align*}
Let $C_Q = \max\left\{ \frac{Q}{1 - Q}, \frac{1 - Q}{Q} \right\}$ and consider setting
$$\epsilon_1 = 4k \cdot \exp\left( - \frac{Q^2N^2}{48pkm} \right) + \sqrt{\frac{C_Q k^2}{2m}}, \quad \epsilon_2 = 0, \quad \epsilon_3 = O\left( (3^t ks)^{-1} \right) \quad \text{and} \quad \epsilon_4 = 0$$
Lemma \ref{lem:submatrix} implies this is a valid choice of $\epsilon_1$, $\mathcal{A}_2$ is exact so we can take $\epsilon_2 = 0$ and $\epsilon_3$ is valid by applying Lemma \ref{lem:rotthres} and averaging over $S \sim  \mU_{3^t ks}^k(F^s)$ using the conditioning property of total variation in Fact \ref{tvfacts}. Now note that for each $S$ the definition of $v_{S, F^s, F^o}(K_{3, t})$ implies that there are sets $S_1$ and $S_2$ with $|S_1| = (3^t - 1)k$ and $|S_2| = \frac{3^t - 1}{2} \cdot k$ such that
$$\left( \frac{2\mu}{3} \cdot v_{S, F^s, F^o}(K_{3, t}) v_{S, F^s, F^o}(K_{3, t})^\top \right)_{ij} = \frac{\mu}{3^{t + 1}} + \left\{ \begin{array}{ll} \mu/3^t &\text{if } i, j \in S_1 \\ -\mu/3^t &\text{if } (i, j) \in S_1 \times S_2 \text{ or } (i, j) \in S_2 \times S_1 \\ 0 &\text{if } i, j \in S_2 \\ -\mu/3^{t + 1} &\text{if } i, j \not \in (S_1 \cup S_2) \end{array} \right.$$
for each $1 \le i, j \le n$. Permuting the rows and columns of $\mP_3$ therefore yields $\mP_4$ exactly with $\epsilon_4 = 0$. Lemma \ref{lem:tvacc} thus establishes the first bound. Under $H_0$, consider the distributions
\begin{align*}
&\mathcal{P}_0 = \mG(N, q), \quad \mathcal{P}_1 = \text{Bern}(Q)^{\otimes m \times m}, \quad \mathcal{P}_2 = \text{Bern}(Q)^{\otimes 3^t ks \times 3^t ks}, \\
&\mathcal{P}_3 = \mN(0, 1)^{\otimes n \times n} \quad \text{and} \quad \mP_4 = \mG\left(n, 1/2 - \mu_1 \right)
\end{align*}
As in Theorems \ref{thm:isbm} and \ref{thm:ghpm}, Lemmas \ref{lem:submatrix} and \ref{lem:rotthres} imply $\epsilon_1 = 4k \cdot \exp\left( - \frac{Q^2N^2}{48pkm} \right)$ and the choices of $\epsilon_2, \epsilon_3$ and $\epsilon_4$ above are valid. Lemma \ref{lem:tvacc} now yields the second bound and completes the proof of the theorem.
\end{proof}

We now add a simple final step to $k\pr{pds-to-semi-cr}$, reducing to arbitrary $P_0 \neq 1/2$. The guarantees for this modified reduction are captured in the following corollary.

\begin{corollary}[Arbitrary Bounded $P_0$] \label{cor:semi-cr-gen}
Define all parameters as in Theorem \ref{thm:semi-cr-reduction} and let $P_0 \in (0, 1)$ be such that $\eta = \min \{P_0, 1 - P_0\} = \Omega(1)$. Then there is a $\textnormal{poly}(N)$ time reduction $\mathcal{A}$ from graphs on $N$ vertices to graphs on $n$ vertices satisfying that
\begin{align*}
\TV\left( \mathcal{A}\left( \mG_E(N, k, p, q) \right), \, \pr{tsi}(n, K, K/2, P_0, 2\eta\mu_1, 2\eta\mu_2, 2\eta\mu_3) \right) &= O\left( \frac{k}{\sqrt{N}} + e^{-\Omega(N^2/km)} + (3^t ks)^{-1} \right) \\
\TV\left( \mathcal{A}\left( \mG(N, q) \right), \, \mG\left(n, P_0 - 2\eta \mu_1 \right) \right) &= O\left( e^{-\Omega(N^2/km)} + (3^t ks)^{-1} \right)
\end{align*}
\end{corollary}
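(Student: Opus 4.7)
The plan is to define $\mathcal{A}$ by appending a single, distribution-preserving post-processing step to $k\textsc{-pds-to-semi-cr}$, mirroring the strategy used in Corollary \ref{thm:isbm-mod}. The key observation is that, unlike the $\pr{isbm}$ setting where the thresholded edge probabilities only approximately satisfied the density constraint and a Taylor expansion was needed, the output distribution $\pr{tsi}(n, K, K/2, P_0, \cdots)$ has \emph{independently prescribable} edge probabilities on each of its four regions. Consequently the correction step can be made exact, and the total variation loss will come entirely from Theorem \ref{thm:semi-cr-reduction} via the data-processing inequality.

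Concretely, I would define $\mathcal{A}$ to first run $k\textsc{-pds-to-semi-cr}$ to obtain a graph $G_1$ on $n$ vertices, then apply the following independent edge perturbation depending on whether $P_0 \le 1/2$ or $P_0 > 1/2$:
\begin{itemize}
\item If $P_0 \le 1/2$ (so $\eta = P_0$): form $G_2$ by independently keeping each edge of $G_1$ with probability $2P_0$ and deleting it otherwise.
\item If $P_0 > 1/2$ (so $\eta = 1 - P_0$): form $G_2$ by keeping every edge of $G_1$ and independently adding each non-edge as an edge with probability $2P_0 - 1$.
\end{itemize}
In each case an edge appearing in $G_1$ with probability $p'$ appears in $G_2$ with probability $2P_0 \cdot p'$ (first case) or $(2P_0 - 1) + (2 - 2P_0) p'$ (second case). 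A direct substitution of $p' \in \{1/2, 1/2 - \mu_1, 1/2 - \mu_2, 1/2 + \mu_3\}$ into these affine maps yields, in both cases, the four target probabilities $\{P_0, P_0 - 2\eta\mu_1, P_0 - 2\eta\mu_2, P_0 + 2\eta\mu_3\}$ exactly. Thus if $G_1 \sim \pr{tsi}(n, K, K/2, 1/2, \mu_1, \mu_2, \mu_3)$ then $G_2 \sim \pr{tsi}(n, K, K/2, P_0, 2\eta\mu_1, 2\eta\mu_2, 2\eta\mu_3)$, and if $G_1 \sim \mG(n, 1/2 - \mu_1)$ then $G_2 \sim \mG(n, P_0 - 2\eta\mu_1)$, in both instances as an exact distributional equality.

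Since the post-processing step is a Markov kernel acting independently on each edge, the data-processing inequality from Fact \ref{tvfacts} implies that the total variation distance between $\mathcal{A}(\cdot)$ and the target distributions is upper bounded by the corresponding total variation distance from Theorem \ref{thm:semi-cr-reduction}. This immediately yields the two bounds stated in the corollary, and the whole reduction runs in $\textnormal{poly}(N)$ time since it only adds $O(n^2)$ independent Bernoulli trials on top of $k\textsc{-pds-to-semi-cr}$.

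There is essentially no technical obstacle here: the only thing to verify is the exactness of the affine edge-probability map in both the $P_0 \le 1/2$ and $P_0 > 1/2$ cases, and that the constraints $\mu_1, \mu_2 \le P_0 \le 1 - \mu_3$ required for $\pr{tsi}$ to be well-defined are preserved. The latter follows from $\eta = \Omega(1)$ together with the fact that $\mu_1, \mu_2, \mu_3 = o(1)$ under the hypotheses on $\mu$ in Theorem \ref{thm:semi-cr-reduction}, so the modified densities $2\eta\mu_i$ are still $o(1)$ and lie safely within $[0, \min\{P_0, 1 - P_0\}]$ for large $N$.
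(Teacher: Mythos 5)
Your proposal is correct and follows essentially the same approach as the paper: the paper's proof also appends the post-processing step $\mathcal{A}_2$ from Corollary \ref{thm:isbm-mod} to $k$\textsc{-pds-to-semi-cr}, verifies that this affine edge-probability map carries $\pr{tsi}(n, K, K/2, 1/2, \mu_1, \mu_2, \mu_3)$ and $\mG(n, 1/2 - \mu_1)$ exactly to the target distributions, and then invokes Theorem \ref{thm:semi-cr-reduction} and Lemma \ref{lem:tvacc}. You spelled out the exactness check on the four edge-probability regions a bit more explicitly than the paper does, but the argument is the same.
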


\begin{proof}
This corollary follows from the same reduction in the first part of the proof of Corollary \ref{thm:isbm-mod}. Consider the reduction $\mathcal{A}$ that adds a simple post-processing step to $k$\textsc{-pds-to-semi-cr} as follows. On input graph $G$ with $N$ vertices:
\begin{enumerate}
\item Form the graph $G_1$ by applying $k$\textsc{-pds-to-cr} to $G$ with parameters $N, k, E, \ell, n, s, t$ and $\mu$.
\item Form $G_2$ as in $\mathcal{A}_2$ of Corollary \ref{thm:isbm-mod}.
\end{enumerate}
This clearly runs in $\text{poly}(N)$ time and the second step can be verified to map $\pr{tsi}(n, K, K/2, 1/2, \mu_1, \mu_2, \mu_3)$ to $\pr{tsi}(n, K, K/2, P_0, 2\eta\mu_1, 2\eta\mu_2, 2\eta\mu_3)$ and $\mG\left(n, 1/2 - \mu_1 \right)$ to $\mG\left(n, P_0 - 2\eta \mu_1 \right)$ exactly. Applying Theorem \ref{thm:semi-cr-reduction} and Lemma \ref{lem:tvacc} to each of these two steps proves the bounds in the corollary statement.
\end{proof}

Summarizing the results of this section, we arrive at the desired computational lower bound for $\pr{semi-cr}$. The proof of the next theorem follows the usual recipe for deducing computational lower bounds and is deferred to Appendix \ref{subsec:appendix-3-part-3}.

\begin{reptheorem}{thm:semi-cr-lb} [Lower Bounds for $\pr{semi-cr}$]
If $k$ and $n$ are polynomial in each other with $k = \Omega(\sqrt{n})$ and $0 < P_0 < P_1 \le 1$ where $\min\{P_0, 1 - P_0 \} = \Omega(1)$, then the $k\pr{-pc}$ conjecture or $k\pr{-pds}$ conjecture for constant $0 < q < p \le 1$ both imply that there is a computational lower bound for $\pr{semi-cr}(n, k, P_1, P_0)$ at $\frac{(P_1 - P_0)^2}{P_0(1 - P_0)} = \tilde{o}(n/k^2)$.
\end{reptheorem}

\section{Tensor Principal Component Analysis}
\label{sec:3-tensor}


\begin{figure}[t!]
\begin{algbox}
\textbf{Algorithm} $k$\textsc{-pst-to-tpca}

\vspace{1mm}

\textit{Inputs}: $k$\pr{-pst} instance $T \in \{0, 1\}^{N^{\otimes s}}$ of order $s$ with planted sub-tensor size $k$ that divides $N$, and the following parameters
\begin{itemize}
\item partition $F$ of $[N]$ into $k$ parts of size $N/k$ and edge probabilities $0 < q < p \le 1$
\item output dimension $n$ and a parameter $t \in \mathbb{N}$ satisfying that
$$n \le D = 2k(2^t - 1), \quad N \le 2^t k \quad \text{and} \quad t = O(\log N)$$
\item target level of signal $\theta \in (0, 1)$ where
$$\theta \le \frac{c \cdot \delta}{2^{st/2} \cdot \sqrt{t + \log (p - q)^{-1}}}$$
for a sufficiently small constant $c > 0$, where $\delta = \min \left\{ \log \left( \frac{p}{q} \right), \log \left( \frac{1 - q}{1 - p} \right) \right\}$.
\end{itemize}

\begin{enumerate}
\item \textit{Pad}: Form $T_{\text{PD}} \in \{0, 1\}^{2^t k \times 2^t k}$ by embedding $T$ as the upper left principal sub-tensor of $T_{\text{PD}}$ and then adding $2^t k - N$ new indices along each axis of $T$ and filling all missing entries with i.i.d. samples from $\text{Bern}(q)$. Let $F'_i$ be $F_i$ with $2^t - N/k$ of the new indices. Sample $k$ random permutations $\sigma_i$ of $F_i'$ independently for each $1 \le i \le k$ and permute the indices along each axis of $T_{\text{PD}}$ within each part $F'_i$ according to $\sigma_i$.
\item \textit{Bernoulli Rotations}: Let $F''$ be a partition of $[D]$ into $k$ equally sized parts. Now compute the matrix $T_{\text{R}} \in \mathbb{R}^{K^{\otimes s}}$ as follows:
\begin{enumerate}
\item[(1)] For each block index $(i_1, i_2, \dots, i_s) \in [k]$, apply $\pr{Tensor-Bern-Rotations}$ to the tensor $(T_{\text{PD}})_{F_{i_1}', F_{i_2}', \dots, F_{i_s}'}$ with matrix parameters $A_1 = A_2 = \cdots = A_s = K_{2, t}$, rejection kernel parameter $R_{\pr{rk}} = (2^t k)^s$, Bernoulli probabilities $0 < Q < p \le 1$, output dimension $D/k = 2(2^t - 1)$, singular value upper bounds $\lambda_1 = \lambda_2 = \cdots = \lambda_s = \sqrt{2}$ and mean parameter $\mu = \theta \cdot 2^{s(t+1)/2}$.
\item[(2)] Set the entries of $(T_{\text{R}})_{F_{i_1}'', F_{i_2}'', \dots, F_{i_s}''}$ to be the entries in order of the tensor output in (1).
\end{enumerate}
\item \textit{Subsample, Sign and Output}: Randomly choose a subset $U \subseteq [D]$ of size $|U| = n$ and randomly sample a vector $b \sim \text{Unif}\left[ \{-1, 1\}\right]^{\otimes D}$ output the tensor $b^{\otimes s} \odot T_{\text{R}}$ restricted to the indices in $U$, or in other words $\left(b^{\otimes s} \odot T_{\text{R}}\right)_{U, U, \dots, U}$, where $\odot$ denotes the entrywise product of two tensors.
\end{enumerate}
\vspace{0.5mm}

\end{algbox}
\caption{Reduction from $k$-partite Bernoulli planted sub-tensor to tensor PCA.}
\label{fig:tpca-reduction}
\end{figure}

In this section, we: (1) give our reduction $k$\textsc{-pst-to-tpca} from $k$-partite planted sub-tensor to tensor PCA; (2) combine this with the completing hypergraphs technique of Section \ref{sec:2-hypergraph-planting} to prove our main computational lower bound for the hypothesis testing formulation of tensor PCA, Theorem \ref{thm:tpca-lb}; and (3) we show that Theorem \ref{thm:tpca-lb} implies computational lower bounds for the recovery formulation of tensor PCA. We remark that the heuristic at the end of Section \ref{subsec:1-tech-design-matrices} yields the predicted computational barrier for $\pr{tpca}$. Specifically, the $\ell_2$ norm for the data tensor $\bE[X]$ corresponding to $k\pr{-hpc}^s$ is $\Theta(k^{s/2})$ which is $\tilde{\Theta}(n^{s/4})$ just below the conjectured computational barrier for $k\pr{-hpc}^s$. Furthermore, the corresponding $\ell_2$ norm for $H_1$ of $\pr{tpca}^s$ is $\tilde{\Theta}(\theta n^{s/2})$. Equating these norms correctly predicts the computational barrier of $\theta = \tilde{\Theta}(n^{-s/4})$.

Our reduction $k$\textsc{-pst-to-tpca} is shown in Figure \ref{fig:tpca-reduction}, which applies dense Bernoulli rotations with Kronecker products of the matrices $K_{2, t}$ to the planted sub-tensor problem. The following theorem establishes the approximate Markov transition properties of this reduction. Its proof is similar to the proofs of Theorems \ref{thm:isgmreduction} and \ref{thm:isbm}. We omit details that are similar for brevity.

\begin{theorem}[Reduction to Tensor PCA] \label{thm:tpca}
Fix initial and target parameters as follows:
\begin{itemize}
\item \textnormal{Initial} $k\pr{-pst}$ \textnormal{Parameters:} dimension $N$, sub-tensor size $k$ that divides $N$, order $s$, a partition $F$ of $[N]$ into $k$ parts of size $N/k$ and edge probabilities $0 < q < p \le 1$ where $\min\{q, 1 - q\} = \Omega_N(1)$.
\item \textnormal{Target} $\pr{tpca}$ \textnormal{ Parameters:} dimension $n$ and a parameter $t = t(N) \in \mathbb{N}$ satisfying that
$$n \le D = 2k(2^t - 1), \quad N \le 2^t k \quad \text{and} \quad t = O(\log N)$$
and target level of signal $\theta \in (0, 1)$ where
$$\theta \le \frac{c \cdot \delta}{2^{st/2} \cdot \sqrt{t + \log (p - q)^{-1}}}$$
for a sufficiently small constant $c > 0$, where $\delta = \min \left\{ \log \left( \frac{p}{q} \right), \log \left( \frac{1 - q}{1 - p} \right) \right\}$.
\end{itemize}
Let $\mathcal{A}(T)$ denote $k$\textsc{-pst-to-tpca} applied to the tensor $T$ with these parameters. Then $\mathcal{A}$ runs in $\textnormal{poly}(N)$ time and it follows that
\begin{align*}
\TV\left( \mathcal{A}\left( \mathcal{M}_{[N]^s}\left( S^{s}, \textnormal{Bern}(p), \textnormal{Bern}(q) \right) \right), \, \pr{tpca}^s_D(n, \theta) \right) &= O\left( k^{-2s} 2^{-2st} \right) \\
\TV\left( \mathcal{A}\left( \mathcal{M}_{[N]^s}\left( \textnormal{Bern}(q) \right) \right), \, \mN(0, 1)^{\otimes n^{\otimes s}} \right) &= O\left( k^{-2s} 2^{-2st} \right)
\end{align*}
for any set $S \subseteq [N]$ with $|S \cap E_i| = 1$ for each $1 \le i \le k$.
\end{theorem}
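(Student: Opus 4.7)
The plan is to follow the same recipe used in Theorems \ref{thm:isgmreduction}, \ref{thm:isbm} and \ref{thm:ghpm}: decompose the reduction into its three steps, identify a chain of intermediate distributions $\mathcal{P}_0, \mathcal{P}_1, \mathcal{P}_2, \mathcal{P}_3$, bound the total variation introduced at each step, and combine the bounds via Lemma \ref{lem:tvacc}. Let the three steps be written as $T \xrightarrow{\mathcal{A}_1} (T_{\textnormal{PD}}, F') \xrightarrow{\mathcal{A}_2} (T_{\textnormal{R}}, F'') \xrightarrow{\mathcal{A}_3} X$ where $X \in \mathbb{R}^{n^{\otimes s}}$ is the final output. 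The analysis of Step 1 is exact: adding i.i.d.\ $\textnormal{Bern}(q)$ entries and permuting within each $F_i'$ maps $\mathcal{M}_{[N]^s}(S^s, \textnormal{Bern}(p), \textnormal{Bern}(q))$ to $\mathcal{M}_{[2^tk]^s}(S''^s, \textnormal{Bern}(p), \textnormal{Bern}(q))$ where $S''$ is now distributed as $\mathcal{U}_{2^tk}(F')$ uniformly over subsets with $|S'' \cap F_i'|=1$ for each $i$. The null case maps $\textnormal{Bern}(q)^{\otimes N^{\otimes s}}$ exactly to $\textnormal{Bern}(q)^{\otimes (2^tk)^{\otimes s}}$, so $\epsilon_1 = 0$.

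For Step 2, I would apply Corollary \ref{cor:tensor-bern-rotations} block by block. Since each $K_{2,t}$ has singular values bounded by $\lambda = \sqrt{1 + (2-1)^{-1}} = \sqrt{2}$ by Lemma \ref{lem:Krtsv}, and since the parameter condition on $\theta$ translates (via $\mu = \theta \cdot 2^{s(t+1)/2}$) into exactly the bound on the mean parameter required by Corollary \ref{cor:tensor-bern-rotations}, we get that the block $(F_{i_1}'', \dots, F_{i_s}'')$ of $T_{\textnormal{R}}$ is within $O(2^{ts} \cdot (2^t k)^{-3s}) = O(k^{-3s} 2^{-2ts})$ total variation of $\mN(\theta \cdot u_{e_{i_1}} \otimes \cdots \otimes u_{e_{i_s}}, I^{\otimes s})$ in the planted case, where $e_{i_j}$ is the unique element of $S'' \cap F_{i_j}'$ and $u_e = 2^{t/2} K_{2,t}^{(\cdot, e)} \in \{-1,+1\}^{D/k}$. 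Independence of the blocks plus the tensorization property of total variation in Fact \ref{tvfacts} yields a total error of $\epsilon_2 = O(k^s \cdot k^{-3s} 2^{-2ts}) = O(k^{-2s} 2^{-2ts})$. In the null case the same calculation gives the same bound with respect to $\mN(0, I^{\otimes s})$.

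Step 3 is where the key algebraic observation is used: in the planted case the approximate mean of $T_{\textnormal{R}}$ has the exact rank-$1$ tensor form $\theta \cdot w^{\otimes s}$ where $w \in \{-1,+1\}^D$ is the concatenation $w_{F_i''} = u_{e_i}$. Entrywise multiplication by $b^{\otimes s}$ where $b \sim \textnormal{Unif}(\{-1,+1\}^D)$ therefore transforms the spike into $\theta \cdot (b \odot w)^{\otimes s}$, and because $b \odot w$ is uniformly Rademacher for every fixed $w \in \{-1,+1\}^D$, averaging over $b$ and $S''$ realizes the Rademacher prior on the spike. The Gaussian noise $\mN(0,1)^{\otimes D^{\otimes s}}$ is invariant under $b^{\otimes s} \odot \cdot$ since the map preserves the standard Gaussian measure coordinatewise. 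Restricting to the random subset $U$ of size $n$ and applying the random coordinate permutation implicit in the Rademacher prior matches $\pr{tpca}^s_D(n,\theta)$ exactly in the planted case and $\mN(0,1)^{\otimes n^{\otimes s}}$ exactly in the null case, so $\epsilon_3 = 0$.

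The final step is to combine $\epsilon_1 = 0$, $\epsilon_2 = O(k^{-2s}2^{-2ts})$, and $\epsilon_3 = 0$ using Lemma \ref{lem:tvacc} to conclude the two total variation bounds in the theorem. The main obstacle I anticipate is verifying carefully in Step 3 that the distribution of $b \odot w$ averaged over both the random permutations from Step 1 (which randomize $e_i$ within each part, hence randomize $w$) and the independent Rademacher $b$ is indeed uniform on $\{-1,+1\}^D$, so that after subsampling on a uniformly random $U \subseteq [D]$ the marginal is uniform on $\{-1,+1\}^n$ as required by the definition of $\pr{tpca}^s_D(n,\theta)$; this requires only that for each fixed $w$, $b \odot w$ has the uniform Rademacher distribution, which is immediate, but one must also check that the randomness of $b$ and of the subset $U$ are independent of everything else. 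Aside from this bookkeeping, the argument is a direct application of the existing primitives.
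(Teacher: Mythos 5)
Your proposal is correct and follows essentially the same approach as the paper's proof: the same decomposition into three steps, the same chain of intermediate distributions, the same use of Corollary \ref{cor:tensor-bern-rotations} with $\lambda = \sqrt{2}$ to get $\epsilon_2 = O(k^{-2s}2^{-2st})$, and the same Rademacher randomization argument in Step~3 (your $w = 2^{t/2}v_{S,F',F''}(K_{2,t})\in\{-1,+1\}^D$ and $b\odot w \sim \mathrm{Unif}[\{-1,+1\}^D]$ matching the paper's Lemma \ref{lem:signing}). The "bookkeeping" you flag at the end is exactly what Lemma \ref{lem:signing} establishes, and is handled by the independence of $b$, $U$, and the Gaussian noise from everything else by construction, so there is no gap.
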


We now prove two lemmas stating the guarantees for the dense Bernoulli rotations step and final step of $k$\textsc{-pst-to-tpca}. Define $v_{S, F', F''}(M)$ as in Section \ref{subsec:3-rsme-reduction}. Note that the matrix $K_{2,t}$ has dimensions $2(2^t - 1) \times 2^t$. The proof of the next lemma follows from the same argument as in the proof of Lemma \ref{lem:isgm-rotations}.

\begin{lemma}[Bernoulli Rotations for $\pr{tpca}$] \label{lem:tpca-rotations}
Let $F'$ and $F''$ be fixed partitions of $[2^t k]$ and $[D]$ into $k$ parts of size $2^t$ and $2(2^t - 1)$, respectively, and let $S \subseteq [2^t k]$ where $|S \cap F_i'| = 1$ for each $1 \le i \le k$. Let $\mathcal{A}_{\textnormal{2}}$ denote Step 2 of $k$\textsc{-pst-to-tpca} with input $T_{\textnormal{PD}}$ and output $T_{\textnormal{R}}$. Suppose that $p, q$ and $\theta$ are as in Theorem \ref{thm:tpca}, then it follows that
\begin{align*}
&\TV\Big( \mathcal{A}_{\textnormal{2}} \left( \mathcal{M}_{[2^t k]^s} \left( S^s, \textnormal{Bern}(p), \textnormal{Bern}(q) \right) \right), \\
&\quad \quad \quad \quad \left. \mL\left( 2^{st/2} \theta \cdot v_{S, F', F''}(K_{2, t})^{\otimes s} + \mN(0, 1)^{\otimes D^{\otimes s}} \right) \right) = O\left( k^{-2s} 2^{-2st} \right) \\
&\TV\left( \mathcal{A}_{\textnormal{2}} \left( \mathcal{M}_{[2^t k]^s} \left( \textnormal{Bern}(q) \right) \right), \, \mN(0, 1)^{\otimes D^{\otimes s}} \right) = O\left( k^{-2s} 2^{-2st} \right) 
\end{align*}
\end{lemma}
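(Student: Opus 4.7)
The plan is to analyze Step 2 block-by-block, apply Corollary \ref{cor:tensor-bern-rotations} to each block, and then stitch the block outputs together using independence and tensorization. By construction, Step 2 of $k$\textsc{-pst-to-tpca} partitions the input tensor $T_{\textnormal{PD}}$ into $k^s$ sub-tensors indexed by $(F'_{i_1}, F'_{i_2}, \dots, F'_{i_s})$ for $(i_1, \dots, i_s) \in [k]^s$, where each sub-tensor lies in $\{0, 1\}^{(2^t)^{\otimes s}}$, and applies $\pr{Tensor-Bern-Rotations}$ to each block independently with the matrices $A_1 = \cdots = A_s = K_{2, t}$.

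First, I would identify the distribution of each block under the two hypotheses. Under the planted hypothesis, because $|S \cap F'_i| = 1$ for each $i \in [k]$, the block indexed by $(i_1, \dots, i_s)$ contains exactly one planted entry at position $e = (e_1, \dots, e_s)$ with $e_j \in S \cap F'_{i_j}$, and is independent $\text{Bern}(q)$ elsewhere; that is, it is distributed as $\pr{pb}_s(2^t, e, p, q)$. Under the null, every block is distributed as $\text{Bern}(q)^{\otimes (2^t)^{\otimes s}}$, and all blocks are independent. Since $\lambda_1 = \cdots = \lambda_s = \sqrt{2}$ is a valid singular value upper bound for $K_{2, t}$ by Lemma \ref{lem:Krtsv} (which gives nonzero singular values $\sqrt{1 + (2-1)^{-1}} = \sqrt{2}$), the Kronecker product has $\lambda = 2^{s/2}$. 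The mean parameter supplied to $\pr{Tensor-Bern-Rotations}$ is $\mu = \theta \cdot 2^{s(t+1)/2}$, and the condition on $\theta$ in Theorem \ref{thm:tpca} (with the choice $R_{\pr{rk}} = (2^t k)^s$, so that $\log R_{\pr{rk}} = O(t + \log k)$ and $t = O(\log N)$) is exactly tuned to make $\mu$ meet the hypothesis of Corollary \ref{cor:tensor-bern-rotations}.

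Next, I would apply Corollary \ref{cor:tensor-bern-rotations} to each block with input dimension $n_0 = 2^t$, output dimension $D/k = 2(2^t - 1)$, and $R_{\pr{rk}} = (2^t k)^s$. For a planted block, the corollary gives that the output is within total variation $O(n_0^s \cdot R_{\pr{rk}}^{-3}) = O((2^t)^s (2^t k)^{-3s}) = O(k^{-3s} 2^{-2st})$ of $\mN\left( \mu \lambda^{-1} \cdot (K_{2,t})_{\cdot, e_1} \otimes \cdots \otimes (K_{2,t})_{\cdot, e_s}, I_{(D/k)^{\otimes s}} \right)$, where $\mu \lambda^{-1} = \theta \cdot 2^{st/2}$. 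For a null block, the output is within the same total variation of $\mN(0, I_{(D/k)^{\otimes s}})$. Summing these errors over the $k^s$ independent blocks via the tensorization property of total variation in Fact \ref{tvfacts} yields the claimed overall bound $O(k^{-2s} 2^{-2st})$ in both cases.

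Finally, I would verify that stitching together the block means reproduces $2^{st/2} \theta \cdot v_{S, F', F''}(K_{2, t})^{\otimes s}$. This is essentially bookkeeping: by the definition of $v_{S, F', F''}(M)$ in Section \ref{subsec:3-rsme-reduction}, the restriction of $v_{S, F', F''}(K_{2, t})^{\otimes s}$ to the index set $F''_{i_1} \times \cdots \times F''_{i_s}$ equals $(K_{2, t})_{\cdot, e_1} \otimes \cdots \otimes (K_{2, t})_{\cdot, e_s}$, where $e_j$ is the unique element of $S \cap F'_{i_j}$ (this uses the consistent ordering convention for $v_{S, F', F''}$). Since $\text{Step 2 }$ places the output of the block reduction on $F''_{i_1} \times \cdots \times F''_{i_s}$ in the same ordering, the stitched mean matches the claimed Kronecker product exactly; by independence across blocks, the stitched Gaussian noise tensor is $\mN(0, I_{D^{\otimes s}})$. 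The main obstacle is really just this last bookkeeping step — ensuring the block-local Kronecker structure glues to the global one $v_{S, F', F''}(K_{2, t})^{\otimes s}$ — but this follows directly from the convention for $v_{S, F', F''}$ and the fact that $|S \cap F'_i| = 1$ for every $i$.
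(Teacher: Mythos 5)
Your proposal is correct and follows essentially the same route as the paper's own proof: decompose $T_{\textnormal{PD}}$ into $k^s$ independent blocks, apply Corollary \ref{cor:tensor-bern-rotations} to each with $R_{\pr{rk}} = (2^t k)^s$ and $\lambda = 2^{s/2}$ from Lemma \ref{lem:Krtsv}, tensorize over the blocks via Fact \ref{tvfacts}, and then check that the block-local Kronecker means glue to $2^{st/2}\theta\cdot v_{S,F',F''}(K_{2,t})^{\otimes s}$ using $\mu\lambda^{-1}=\theta\cdot 2^{st/2}$. Your accounting of the per-block error $O((2^t)^s R_{\pr{rk}}^{-3}) = O(k^{-3s}2^{-2st})$ and its $k^s$-fold sum to $O(k^{-2s}2^{-2st})$ matches the paper.
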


\begin{proof}
This lemma follows from the same argument as in the proof of Lemma \ref{lem:isgm-rotations}. We outline the details that differ. Specifically, consider the case in which $T_{\textnormal{PD}} \sim \mathcal{M}_{[2^t k]^s} \left( S^s, \textnormal{Bern}(p), \textnormal{Bern}(q) \right)$. Observe that
$$(T_{\textnormal{PD2}})_{F'_{i_1}, F'_{i_2}, \dots, F'_{i_s}} \sim \pr{pb}\left(F_{i_1}' \times F'_{i_2} \times \cdots \times F_{i_s}', (S \cap F_{i_1}', S \cap F_{i_2}', \dots, S \cap F_{i_s}'), p, q\right)$$
for all $(i_1, i_2, \dots, i_s) \in [k]^s$.
The singular value upper bound on $K_{2, t}$ in Lemma \ref{lem:Krtsv} and the same application of Corollary \ref{cor:tensor-bern-rotations} as in Lemma \ref{lem:isgm-rotations} yields that
$$\TV\left( (T_{\textnormal{R}})_{F''_{i_1}, \dots, F''_{i_s}}, \, \mL\left( 2^{-s/2} \mu \cdot (K_{2, t})_{\cdot, S \cap F_{i_1}'} \otimes \cdots \otimes (K_{2, t})_{\cdot, S \cap F_{i_s}'} + \mN(0, 1)^{\otimes (D/k)^{\otimes s}} \right) \right) = O\left( k^{-3s} 2^{-2st} \right)$$
for all $(i_1, i_2, \dots, i_s) \in [k]^s$ since $\prod_{j = 1}^s \lambda_j = 2^{s/2}$. Note that the exponent of $8$ is guaranteed by changing the parameter in Gaussian rejection kernels from $n$ to $n^{10}$ to decrease their total variation error. Note that this step still runs in $\text{poly}(n^{10})$ time. Combining this bound for all such $(i_1, i_2, \dots, i_s)$ and the tensorization property of total variation in Fact \ref{tvfacts} yields that
$$\TV\left( T_{\textnormal{R}}, \, \mL\left( 2^{-s/2} \mu \cdot v_{S, F', F''}(K_{2, t})^{\otimes s} + \mN(0, 1)^{\otimes D^{\otimes s}} \right) \right) = O\left( k^{-2s} 2^{-2st} \right)$$
Combining this with the fact that $\mu = \theta \cdot 2^{s(t + 1)/2}$ now yields the first bound in the lemma. The second bound follows by the same argument but now applying Corollary \ref{cor:tensor-bern-rotations} to the distribution $(T_{\textnormal{PD2}})_{F'_{i_1}, \dots, F'_{i_s}} \sim \text{Bern}(q)^{(D/k)^{\otimes s}}$. This completes the proof of the lemma.
\end{proof}

\begin{lemma}[Signing for $\pr{tpca}$] \label{lem:signing}
Let $F', F''$ and $S$ be as in Lemma \ref{lem:tpca-rotations} and let $p, q$ and $\theta$ be as in Theorem \ref{thm:tpca}. Let $\mathcal{A}_{\textnormal{3}}$ denote Step 3 of $k$\textsc{-pst-to-tpca} with input $T_{\textnormal{R}}$ and output given by the output $T'$ of $\mathcal{A}$. Then
\begin{align*}
\mathcal{A}_{\textnormal{3}} \left( 2^{st/2} \theta \cdot v_{S, F', F''}(K_{2, t})^{\otimes s} + \mN(0, 1)^{\otimes D^{\otimes s}} \right) &\sim \pr{tpca}^s_D(n, \theta) \\
\mathcal{A}_{\textnormal{3}} \left( \mN(0, 1)^{\otimes D^{\otimes s}} \right) &\sim \mN(0, 1)^{\otimes n^{\otimes s}}
\end{align*}
\end{lemma}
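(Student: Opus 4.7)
Proof proposal for Lemma \ref{lem:signing}:

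The plan is to show that Step 3 of $k$\textsc{-pst-to-tpca} exactly implements the map from the "rotated" distribution to the canonical tensor PCA distribution, which amounts to two observations: that $v_{S, F', F''}(K_{2, t})$ has entries in $\{\pm 1/\sqrt{2^t}\}$ so that multiplication by $b$ followed by subsampling produces a Rademacher vector on $[n]$, and that tensorized Rademacher signs preserve the isotropic Gaussian noise. First, I would unpack the structure of $v = v_{S, F', F''}(K_{2, t})$. By Definition \ref{defn:Krt} with $r = 2$, every entry of $K_{2, t}$ lies in $\{\pm 1/\sqrt{2^t}\}$, and by construction $v$ is formed by concatenating columns of $K_{2,t}$ indexed by the elements of $S$ along parts of $F''$. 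Hence $w := 2^{t/2} v \in \{-1, 1\}^D$ and the signal tensor satisfies $2^{st/2}\theta \cdot v^{\otimes s} = \theta \cdot w^{\otimes s}$.

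Next, I would analyze the planted case. Suppose $T_{\mathrm{R}} = \theta \cdot w^{\otimes s} + G$ where $G \sim \mN(0,1)^{\otimes D^{\otimes s}}$. Applying the entrywise product with $b^{\otimes s}$ and using the identity $b^{\otimes s} \odot w^{\otimes s} = (b \odot w)^{\otimes s}$ gives
\[
b^{\otimes s} \odot T_{\mathrm{R}} = \theta \cdot (b \odot w)^{\otimes s} + b^{\otimes s} \odot G.
\]
Since $b \sim \mathrm{Unif}[\{-1,1\}]^{\otimes D}$ is independent of $G$ and $w \in \{-1,1\}^D$ is fixed, the entrywise product $b \odot w$ is again uniform on $\{-1,1\}^D$. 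By the sign-symmetry of the Gaussian, for every fixed $b \in \{-1,1\}^D$, the tensor $b^{\otimes s} \odot G$ has the same distribution $\mN(0,1)^{\otimes D^{\otimes s}}$ as $G$, and this distribution is independent of $b$; therefore $b^{\otimes s} \odot G$ is independent of $b \odot w$. Restricting both terms to the randomly chosen subset $U$ with $|U| = n$ yields a Rademacher vector $u := (b \odot w)|_U \sim \mathrm{Unif}[\{-1,1\}]^{\otimes n}$ and an independent $\mN(0,1)^{\otimes n^{\otimes s}}$ noise tensor, so
\[
\left(b^{\otimes s} \odot T_{\mathrm{R}}\right)_{U,\ldots,U} = \theta \cdot u^{\otimes s} + G'
\]
with $u$ and $G'$ independent; this is exactly a sample from $\pr{tpca}^s_D(n, \theta)$.

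The null case is immediate by the same sign-symmetry: if $T_{\mathrm{R}} \sim \mN(0,1)^{\otimes D^{\otimes s}}$, then $b^{\otimes s} \odot T_{\mathrm{R}} \sim \mN(0,1)^{\otimes D^{\otimes s}}$ regardless of $b$, and its restriction to any fixed $n$-subset of indices along each axis is $\mN(0,1)^{\otimes n^{\otimes s}}$; averaging over the random choice of $U$ leaves the distribution unchanged. There is no real obstacle here beyond bookkeeping: the key combinatorial fact is that $K_{2,t}$ is a $\pm 1/\sqrt{2^t}$ matrix, which converts the fixed structured vector $w$ into a uniform Rademacher vector after multiplication by $b$, and the Gaussian sign-symmetry takes care of the noise. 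Both distributional equalities are therefore exact, so no total variation loss is incurred in this step.
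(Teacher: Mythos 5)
Your proposal is correct and follows essentially the same argument as the paper: it uses the fact that $K_{2,t}$ has entries in $\{\pm 2^{-t/2}\}$ so that $b\odot v$ becomes uniform Rademacher, invokes the sign-symmetry of isotropic Gaussian noise, and then subsamples to $n$ coordinates. The only cosmetic difference is that you renormalize $v$ to the $\pm 1$ vector $w$ at the outset (making the bookkeeping slightly cleaner), whereas the paper carries the $2^{st/2}\theta$ factor and renormalizes at the end; the underlying steps are identical.
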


\begin{proof}
Suppose that $T_{\textnormal{R}} \sim \mL\left( 2^{st/2} \theta \cdot v_{S, F', F''}(K_{2, t})^{\otimes s} + \mN(0, 1)^{\otimes D^{\otimes s}} \right)$ and let $b \sim \text{Unif}\left[ \{-1, 1\}\right]^{\otimes D}$ be as in Step 3 of $\mathcal{A}$. The symmetry of zero-mean Gaussians and independence among the entries of $\mN(0, 1)^{\otimes D^{\otimes s}}$ imply that
$$b^{\otimes s} \odot T_{\textnormal{R}} \sim \mL\left( 2^{st/2} \theta \cdot u^{\otimes s} + b^{\otimes s} \odot \mN(0, 1)^{\otimes D^{\otimes s}} \right) = \mL\left( 2^{st/2} \theta \cdot u^{\otimes s} + \mN(0, 1)^{\otimes D^{\otimes s}} \right)$$
where $u = b \odot v_{S, F', F''}(K_{2, t})$ and the two terms $u^{\otimes s}$ and $\mN(0, 1)^{\otimes D^{\otimes s}}$ above are independent. Now note that each entry of $v_{S, F', F''}(K_{2, t})$ is either $\pm 2^{-t/2}$ by the definition of $K_{2, t}$. This implies that $2^{t/2} u$ is distributed as $\text{Unif}\left[ \{-1, 1\}\right]^{\otimes D}$ and hence that
$$\mL\left(b^{\otimes s} \odot T_{\textnormal{R}} \right) = \mL\left( \theta \cdot b^{\otimes s} + \mN(0, 1)^{\otimes D^{\otimes s}} \right) = \pr{tpca}^s_D(D, \theta)$$
Subsampling the same set $U$ of $n$ coordinates of this tensor along each axis by definition yields $ \pr{tpca}(n, \theta)$, proving the first claim in the lemma. The second claim is immediate by the fact that if $T_{\textnormal{R}} \sim \mN(0, 1)^{\otimes D^{\otimes s}}$ then it also holds that $b^{\otimes s} \odot T_{\textnormal{R}} \sim \mN(0, 1)^{\otimes D^{\otimes s}}$. This completes the proof of the lemma.
\end{proof}

We now complete the proof of Theorem \ref{thm:tpca} by applying Lemma \ref{lem:tvacc} as in Theorems \ref{thm:isgmreduction} and \ref{thm:isbm}.

\begin{proof}[Proof of Theorem \ref{thm:tpca}]
Define the steps of $\mathcal{A}$ to map inputs to outputs as follows
$$(T, F) \xrightarrow{\mathcal{A}_1} (T_{\text{PD}}, F) \xrightarrow{\mathcal{A}_2} (T_{\text{R}}, F'') \xrightarrow{\mathcal{A}_{\text{3}}} T'$$
Consider Lemma \ref{lem:tvacc} applied to the following sequence of distributions
\allowdisplaybreaks
\begin{align*}
\mathcal{P}_0 &= \mathcal{M}_{[N]^s}\left( S^{s}, \textnormal{Bern}(p), \textnormal{Bern}(q) \right) \\
\mathcal{P}_1 &= \mathcal{M}_{[2^t k]^s} \left( S^s, \textnormal{Bern}(p), \textnormal{Bern}(q) \right) \quad \text{where } S \sim \mU_{2^t k}(F') \\
\mathcal{P}_2 &= 2^{st/2} \theta \cdot v_{S, F', F''}(K_{2, t})^{\otimes s} + \mN(0, 1)^{\otimes D^{\otimes s}} \quad \text{where } S \sim \mU_{2^t k}(F') \\
\mathcal{P}_3 &= \pr{tpca}^s_D(n, \theta)
\end{align*}
Consider applying Lemmas \ref{lem:tpca-rotations} and \ref{lem:signing} while averaging over $S \sim \mU_{2^t k}(F')$ and applying the conditioning property of total variation in Fact \ref{tvfacts}. This yields that we may take $\epsilon_1 = 0$, $\epsilon_2 = O\left( k^{-2s} 2^{-2st} \right)$ and $\epsilon_3 = 0$. Applying Lemma \ref{lem:tvacc} proves the first bound in the theorem. Now consider the following sequence of distributions
$$\mathcal{P}_0 = \mathcal{M}_{[N]^s}\left( \textnormal{Bern}(q) \right), \quad \mathcal{P}_1 = \mathcal{M}_{[2^t k]^s}\left( \textnormal{Bern}(q) \right), \quad \mathcal{P}_2 = \mN(0, 1)^{\otimes D^{\otimes s}} \quad \text{and} \quad \mathcal{P}_3 = \mN(0, 1)^{\otimes n^{\otimes s}}$$
Lemmas \ref{lem:tpca-rotations} and \ref{lem:signing} imply we can again take $\epsilon_1 = 0$, $\epsilon_2 = O\left( k^{-2s} 2^{-2st} \right)$ and $\epsilon_3 = 0$. The second bound in the theorem now follows from Lemma \ref{lem:tvacc}.
\end{proof}

We now apply this theorem to deduce our main computational lower bounds for tensor PCA by verifying its guarantees are sufficient to apply Lemma \ref{cor:one-side-reduction}.

\begin{reptheorem}{thm:tpca-lb} [Lower Bounds for $\pr{tpca}$]
Let $n$ be a parameter and $s \ge 3$ be a constant, then the $k\pr{-hpc}^s$ or $k\pr{-hpds}^s$ conjecture for constant $0 < q < p \le 1$ both imply a computational lower bound for $\pr{tpca}^s(n, \theta)$ at all levels of signal $\theta = \tilde{o}(n^{-s/4})$ against $\textnormal{poly}(n)$ time algorithms $\mathcal{A}$ solving $\pr{tpca}^s(n, \theta)$ with a low false positive probability of $\bP_{H_0}[\mathcal{A}(T) = H_1] = O(n^{-s})$.
\end{reptheorem}

\begin{proof}
We will verify that the approximate Markov transition guarantees for $k$\textsc{-pst-to-tpca} in Theorem \ref{thm:tpca} are sufficient to apply Lemma \ref{cor:one-side-reduction} for the set of $\mP = \pr{tpca}^s(n, \theta)$ with parameters $(n, \theta)$ that fill out the region $\theta = \tilde{o}(n^{-s/4})$. Fix a constant pair of probabilities $0 < Q < p \le 1$, a constant positive integer $s$ and any sequence of parameters $(n, \theta)$ where $\theta \in (0, 1)$ is implicitly a function of $n$ with
$$\theta \le \frac{c}{w^{s/2}n^{s/4} \sqrt{\log n}}$$
for sufficiently large $n$, an arbitrarily slow-growing function $w = w(n) \to \infty$ and a sufficiently small constant $c > 0$. Now consider the parameters $(N,  k)$ and input $t$ to $k\pr{-pst-to-tpca}$ defined as follows:
\begin{itemize}
\item let $t$ be such that $2^t$ is the smallest power of two greater than $w\sqrt{n}$; and
\item let $k = \lceil w^{-1} \sqrt{n} \rceil$ and let $N$ be the largest multiple of $k$ less than $n$.
\end{itemize}
Now observe that these choices of parameters ensure that $k$ divides $N$, it holds that $k = o(\sqrt{N})$ and
$$N \le n \le 2^t k \le D = 2k(2^t - 1)$$
Furthermore, we have that $N = \Theta(n)$ and $2^t = \Theta(w \sqrt{n})$. For a sufficiently small choice of $c > 0$, we also have that
$$\theta \le \frac{c}{w^{s/2}n^{s/4} \sqrt{\log n}} \le \frac{c' \cdot \delta}{2^{st/2} \cdot \sqrt{t + \log(p - Q)^{-1}}}$$
where $c' > 0$ is the constant and $\delta$ is as in Theorem \ref{thm:tpca}. This verifies all of the conditions needed to apply Theorem \ref{thm:tpca}, which implies that $k$\textsc{-pst-to-tpca} maps $k\pr{-pst}_E^s(N, k, p, Q)$ to $\pr{tpca}^s(n, \theta)$ under both $H_0$ and $H_1$ to within total variation error $O\left( k^{-2s} 2^{-2st} \right) = O(n^{-2s})$. By Lemma \ref{cor:one-side-reduction}, the $k\pr{-hpds}^s$ conjecture for $k\pr{-hpds}^s_E(N', k', p, q)$ where $N = N' - (s - 1)N'/k'$ and $k = k' - s + 1$ now implies that there are is no $\textnormal{poly}(n)$ time algorithm $\mathcal{A}$ solving $\pr{tpca}^s(n, \theta)$ with a low false positive probability of $\bP_{H_0}[\mathcal{A}(T) = H_1] = O(n^{-s})$. This completes the proof of the theorem.
\end{proof}

We conclude this section with the following lemma observing that this theorem implies a computational lower bound for estimating $v$ in $\pr{tpca}^s(n, \theta)$ where $\theta = \tilde{\omega}(n^{-s/2})$ and $\theta = \tilde{o}(n^{-s/4})$. Note that the requirement $\theta = \tilde{\omega}(n^{-s/2})$ is weaker than the condition $\theta = \tilde{\omega}(n^{(1-s)/2})$, which is necessary for recovering $v$ to be information-theoretically possible, as discussed in Section \ref{subsec:1-problems-tpca}. The next lemma shows that any estimator yields a test in the hypothesis testing formulation of tensor PCA that must have a low false positive probability of error, since thresholding $\langle \hat{v}, T\rangle$ where $\hat{v}$ is an estimator of $v$, yields a means to distinguish $H_0$ and $H_1$ with high probability. We remark that the requirement $\langle v, \hat{v} \rangle = \Omega(\| v \|_2)$ is weaker than the condition $\| v - \hat{v} \cdot \sqrt{n} \|_2 = o(\sqrt{n})$ when $\hat{v}$ is a unit vector and $v \in \{-1, 1\}^n$. Thus any estimation algorithm with $\ell_2$ error $o(\sqrt{n})$, directly yields an algorithm $\mathcal{A}_E$ satisfying the conditions of the lemma.

\begin{lemma}[One-Sided Blackboxes from Estimation in Tensor PCA] \label{lem:one-side-estimation}
Let $s \ge 2$ be a fixed constant and suppose that there is a $\textnormal{poly}(n)$ time algorithm $\mathcal{A}_E$ that, on input sampled from $\theta v^{\otimes s} + \mN(0, 1)^{\otimes n^{\otimes s}}$ where $v \in \{-1, 1\}^n$ is fixed but unknown to $\mathcal{A}_E$ and $\theta = \omega(n^{-s/2} \sqrt{s \log n})$, outputs a unit vector $\hat{v} \in \mathbb{R}^n$ with $\langle v, \hat{v} \rangle = \Omega(\| v \|_2)$. Then there is a $\textnormal{poly}(n)$ time algorithm $\mathcal{A}_D$ solving $\pr{tpca}^s(n, \theta)$ with a low false positive probability of $\bP_{H_0}[\mathcal{A}_D(T) = H_1] = O(n^{-s})$.
\end{lemma}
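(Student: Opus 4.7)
The plan is to turn the estimator $\mathcal{A}_E$ into a one-sided tester via a standard \emph{data-splitting} trick combined with Gaussian cloning. Given an input $T$, first generate an independent auxiliary tensor $G \sim \mathcal{N}(0,1)^{\otimes n^{\otimes s}}$ and form the two tensors
\[
T_1 \;=\; \tfrac{1}{\sqrt{2}}(T+G), \qquad T_2 \;=\; \tfrac{1}{\sqrt{2}}(T-G).
\]
A direct covariance computation shows that under $H_0$ (i.e.\ $T\sim \mathcal{N}(0,1)^{\otimes n^{\otimes s}}$) the pair $(T_1, T_2)$ is two independent standard Gaussian tensors, while under $H_1$ (i.e.\ $T = \theta v^{\otimes s}+\mathcal{N}(0,1)^{\otimes n^{\otimes s}}$) we have $T_1, T_2 \sim_{\textnormal{i.i.d.}} (\theta/\sqrt{2})\, v^{\otimes s} + \mathcal{N}(0,1)^{\otimes n^{\otimes s}}$. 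Since $\theta/\sqrt{2}=\omega(n^{-s/2}\sqrt{s\log n})$ is of the same order as $\theta$, the hypothesis of the lemma still applies to $T_1$ after splitting.

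The algorithm $\mathcal{A}_D$ then proceeds as follows: run $\mathcal{A}_E$ on $T_1$ to obtain a unit vector $\hat{v}\in\mathbb{R}^n$; compute the scalar test statistic $Y := \langle \hat{v}^{\otimes s}, T_2\rangle$; and output $H_1$ if $Y \ge \sqrt{2 s \log n}$ and $H_0$ otherwise. The key property is that $\hat{v}$ is a function of $T_1$ alone, hence independent of $T_2$ under both hypotheses.

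Under $H_0$, condition on $T_1$ (and hence on $\hat{v}$); since $\hat{v}$ is a unit vector, $\|\hat{v}^{\otimes s}\|_F = 1$, and so $Y \mid \hat{v}\sim \mathcal{N}(0,1)$. By the standard Gaussian tail bound $1-\Phi(x)\le e^{-x^2/2}$,
\[
\bP_{H_0}[\mathcal{A}_D(T)=H_1] \;\le\; e^{-s\log n} \;=\; n^{-s},
\]
which is the required one-sided guarantee. Under $H_1$, with probability $1-o_n(1)$ the estimator satisfies $\langle v, \hat{v}\rangle = \Omega(\|v\|_2) = \Omega(\sqrt{n})$, and conditionally on this event $Y = (\theta/\sqrt{2})\langle v,\hat{v}\rangle^{s} + Z$ with $Z\sim \mathcal{N}(0,1)$. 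The deterministic part is $\Omega(\theta \, n^{s/2}) = \omega(\sqrt{s\log n})$, so $Y$ exceeds the threshold $\sqrt{2s\log n}$ with probability $1-o_n(1)$, giving Type~I$+$II error bounded away from $1$.

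The proof is essentially routine once the splitting idea is in place; the only subtlety to check is the parameter bookkeeping, namely that $\theta/\sqrt{2}$ still satisfies the hypothesis of the lemma (immediate, since the condition is $\theta=\omega(n^{-s/2}\sqrt{s\log n})$ and $\sqrt{2}$ is an absolute constant) and that the recovery guarantee $\langle v,\hat{v}\rangle = \Omega(\sqrt{n})$ produces a signal $\Omega(\theta n^{s/2})$ in $Y$ that dominates the target threshold $\sqrt{2 s \log n}$. I do not anticipate any real obstacle; the main (minor) point to be careful about is ensuring $\hat{v}$ is independent of $T_2$, which is precisely what Gaussian cloning achieves, so that the Type~I tail bound reduces to a one-dimensional Gaussian tail estimate.
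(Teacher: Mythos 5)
Your proposal is correct and follows essentially the same route as the paper's own proof: Gaussian cloning to split $T$ into independent $T_1, T_2$, running $\mathcal{A}_E$ on $T_1$, and thresholding the one-dimensional statistic $\langle \hat{v}^{\otimes s}, T_2 \rangle$ so that the Type~I bound reduces to a standard Gaussian tail estimate. The only cosmetic differences are the choice of threshold ($\sqrt{2s\log n}$ here versus $2\sqrt{s\log n}$ in the paper, both giving the required $O(n^{-s})$) and your more careful tracking of the $\theta/\sqrt{2}$ factor under $H_1$, which, as you note, is harmless since $\sqrt{2}$ is an absolute constant.
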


\begin{proof}
Let $T$ be an instance of $\pr{tpca}^s(n, \theta)$ with $T = \theta v^{\otimes s} + G$ under $H_1$ and $T = G$ under $H_0$ where $G \sim \mN(0, 1)^{\otimes n^{\otimes s}}$. Consider the following algorithm $\mathcal{A}_D$ for $\pr{tpca}^s(n, \theta)$:
\begin{enumerate}
\item Independently sample $G' \sim \mN(0, 1)^{\otimes n^{\otimes s}}$ and form $T_1 = \frac{1}{\sqrt{2}} (T + G')$ and $T_2 = \frac{1}{\sqrt{2}} (T - G')$.
\item Compute $\hat{v}(T_1)$ as the output of $\mathcal{A}_E$ applied to $T_1$.
\item Output $H_0$ if $\langle \hat{v}(T_1)^{\otimes s}, T_2 \rangle < 2\sqrt{s \log n}$ and output $H_1$ otherwise.
\end{enumerate}
First note that the entries of $\frac{1}{\sqrt{2}} (G + G')$ and $\frac{1}{\sqrt{2}} (G - G')$ are jointly Gaussian but uncorrelated, which implies that these two tensors are independent. This implies that $T_1$ and $T_2$ are independent. Since $\hat{v}(T_1)$ is a unit vector and independent of $T_2$, it follows that $\langle \hat{v}(T_1)^{\otimes s}, T_2 \rangle$ is distributed as $\mN(0, 1)$ conditioned on $\hat{v}(T_1)$ if $T$ is distributed according to $H_0$ of $\pr{tpca}^s(n, \theta)$. Now we have that
$$\bP_{H_0}[\mathcal{A}_D(T) = H_1] = \mP\left[ \langle \hat{v}(T_1)^{\otimes s}, T_2 \rangle \ge 2 \sqrt{s \log n} \right] = O(n^{-2s})$$
where the second equality follows from standard Gaussian tail bounds. If $T$ is distributed according to $H_1$, then $\langle \hat{v}(T_1)^{\otimes s}, T_2 \rangle \sim \mN( \theta \langle \hat{v}(T_1), v \rangle^s, 1)$. In this case, $\mathcal{A}_E$ ensures that $\langle \hat{v}(T_1), v \rangle^s = \Omega(n^{s/2})$ since $\| v \|_2 = \sqrt{n}$, and therefore $\theta \langle \hat{v}(T_1), v \rangle^s = \omega(\sqrt{s \log n})$. It therefore follows that
$$\bP_{H_1}[\mathcal{A}_D(T) = H_0] \le \mP\left[ \langle \hat{v}(T_1)^{\otimes s}, T_2 \rangle - \theta \langle \hat{v}(T_1), v \rangle^s < - 2 \sqrt{s \log n} \right] = O(n^{-2s})$$
Thus $\mathcal{A}_D$ has Type I$+$II error that is $o(1)$ and the desired low false positive probability, which completes the proof of the lemma.
\end{proof}

\section{Universality of Lower Bounds for Learning Sparse Mixtures}
\label{sec:universality}

In this section, we combine our reduction to $\pr{isgm}$ from Section \ref{subsec:3-rsme-reduction} with symmetric 3-ary rejection kernels, which were introduced and analyzed in Section \ref{subsec:srk}. We remark that the $k$-partite promise in $k\pr{-pds}$ is crucially used in our reduction to obtain this universality. In particular, this promise ensures that the entries of the intermediate $\pr{isgm}$ instance are from one of three distinct distributions, when conditioned on the part of the mixture the sample is from. This is necessary for our application of symmetric 3-ary rejection kernels. An overview of the ideas in this section can be found in Section \ref{subsec:1-tech-universality}.

Our general lower bound holds given tail bounds on the likelihood ratios between the planted and noise distributions, and applies to a wide range of natural distributional formulations of learning sparse mixtures. For example, our general lower bound recovers the tight computational lower bounds for sparse PCA in the spiked covariance model from \cite{gao2017sparse, brennan2018reducibility, brennan2019optimal}. The results in this section can also be interpreted as a universality principle for computational lower bounds in sparse PCA. We prove the approximate Markov transition guarantees for our reduction to $\pr{glsm}$ in Section \ref{subsec:universalitybounds} and discuss the universality conditions needed for our lower bounds in Section \ref{subsec:universalitydiscussion}.

\subsection{Reduction to Generalized Learning Sparse Mixtures}
\label{subsec:universalitybounds}

\begin{figure}[t!]
\begin{algbox}
\textbf{Algorithm} $k$\textsc{-bpds-to-glsm}

\vspace{1mm}

\textit{Inputs}: Matrix $M \in \{0, 1\}^{m \times n}$, dense subgraph dimensions $k_m$ and $k_n$ where $k_n$ divides $n$ and the following parameters
\begin{itemize}
\item partition $F$, edge probabilities $0 < q < p \le 1$ and $w(n)$ as in Figure \ref{fig:isgmreduction}
\item target $\pr{glsm}$ parameters $(N, k_m, d)$ satisfying $wN \le n$ and $m \le d$, a mixture distribution $\mD$ and target distributions $\{ \mP_{\nu} \}_{\nu \in \mathbb{R}}$ and $\mQ$
\end{itemize}

\begin{enumerate}
\item \textit{Map to Gaussian Sparse Mixtures}: Form the sample $Z_1, Z_2, \dots, Z_N \in \mathbb{R}^d$ by setting
$$(Z_1, Z_2, \dots, Z_N) \gets k\pr{-bpds-to-isgm}(M, F)$$
where $k\pr{-bpds-to-isgm}$ is applied with $r = 2$, slow-growing function $w(n)$, $t = \lceil \log_2(n/k_n) \rceil$, target parameters $(N, k_m, d)$, $\epsilon = 1/2$ and $\mu = c_1\sqrt{\frac{k_n}{n \log n}}$ for a sufficiently small constant $c_1 > 0$.
\item \textit{Truncate and 3-ary Rejection Kernels}: Sample $\nu_1, \nu_2, \dots, \nu_N \sim_{\text{i.i.d.}} \mD$, truncate the $\nu_i$ to lie within $[-1, 1]$ and form the vectors $X_1, X_2, \dots, X_N \in \mathbb{R}^d$ by setting
$$X_{ij} \gets 3\pr{-srk}(\pr{tr}_{\tau}(Z_{ij}), \mP_{\nu_i}, \mP_{-\nu_i}, \mQ)$$
for each $i \in [N]$ and $j \in [d]$. Here $3\pr{-srk}$ is applied with $N_{\text{it}} = \lceil 4 \log (dN) \rceil$ iterations and with the parameters
\begin{align*}
a &= \Phi(\tau) - \Phi(-\tau), \quad \mu_1 = \frac{1}{2} \left( \Phi(\tau + \mu) - \Phi(\tau - \mu) \right), \\
\mu_2 &= \frac{1}{2} \left( 2 \cdot \Phi(\tau) - \Phi(\tau + \mu) - \Phi(\tau - \mu) \right)
\end{align*}
\item \textit{Output}: The vectors $(X_1, X_2, \dots, X_N)$.
\end{enumerate}
\vspace{0.5mm}

\end{algbox}
\caption{Reduction from $k$-part bipartite planted dense subgraph to general learning sparse mixtures.}
\label{fig:universalityreduction}
\end{figure}

In this section, we combine symmetric 3-ary rejection kernels with the reduction $k\pr{-bpds-to-isgm}$ to map from $k\pr{-bpds}$ to generalized sparse mixtures. The details of this reduction $k$\textsc{-bpds-to-glsm} are shown in Figure \ref{fig:universalityreduction}. As mentioned in Sections \ref{subsec:1-tech-universality} and \ref{subsec:srk}, to reduce to sparse mixtures near their computational barrier, it is crucial to produce multiple planted distributions. Previous rejection kernels do not have enough degrees of freedom to map to three output distributions given their binary inputs. The symmetric 3-ary rejection kernels introduced in Section \ref{subsec:srk} overcome this issue by mapping three input to three output distributions. In particular, we will see in this section that their approximate Markov transition guarantees established in Lemma \ref{lem:srk} exactly lead to tight computational lower bounds for $\pr{glsm}$. Throughout this section, we will adopt the definitions of $\pr{glsm}$ and $\pr{glsm}_D$ introduced in Sections \ref{subsec:1-problems-universality} and \ref{subsec:2-formulations}.

In order to establish computational lower bounds for $\pr{glsm}$, it is crucial to define a meaningful notion of the level of signal in a set of target distributions $\mD, \mQ$ and $\{ \mP_{\nu} \}_{\nu \in \mathbb{R}}$. This level of signal was defined in Section \ref{subsec:1-problems-universality} and is reviewed below for convenience. We remark that this definition will turn out to coincide with the conditions needed to apply symmetric 3-ary rejection kernels. This notion of signal also implicitly defines the universality class over which our computational lower bounds hold.

\begin{repdefinition}{defn:univ-signal} [Universal Class and Level of Signal]
Given a parameter $N$, define the collection of distributions $\mathcal{U} = (\mD, \mQ, \{ \mP_{\nu} \}_{\nu \in \mathbb{R}})$ implicitly parameterized by $N$ to be in the universality class $\pr{uc}(N)$ if
\begin{itemize}
\item the pairs $(\mP_{\nu}, \mQ)$ are all computable pairs, as in Definition \ref{def:computable}, for all $\nu \in \mathbb{R}$;
\item $\mD$ is a symmetric distribution about zero and $\bP_{\nu \sim \mD}[\nu \in [-1, 1]] = 1 - o(N^{-1})$; and
\item there is a level of signal $\tau_{\mathcal{U}} \in \mathbb{R}$ such that for all $\nu \in [-1, 1]$ such that for any fixed constant $K > 0$, it holds that
$$\left| \frac{d\mP_{\nu}}{d\mQ} (x) - \frac{d\mP_{-\nu}}{d\mQ} (x) \right| = O_N\left(\tau_{\mathcal{U}} \right) \quad \textnormal{and} \quad \left|\frac{d\mP_{\nu}}{d\mQ} (x) + \frac{d\mP_{-\nu}}{d\mQ} (x) - 2 \right| = O_N\left( \tau_{\mathcal{U}}^2 \right)$$
with probability at least $1 - O\left(N^{-K}\right)$ over each of $\mP_{\nu}, \mP_{-\nu}$ and $\mQ$.
\end{itemize}
\end{repdefinition}

In our reduction $k\pr{-bpds-to-isgm}$, we truncate Gaussians to generate the input distributions $\text{Tern}$. In Figure \ref{fig:universalityreduction}, $\pr{tr}_{\tau} : \mathbb{R} \to \{-1, 0, 1\}$ denotes the truncation map given by
$$\pr{tr}_{\tau}(x) = \left\{ \begin{array}{ll} 1 &\text{if } x > |\tau| \\ 0 &\text{if } -|\tau| \le x \le |\tau| \\ -1 &\text{if } x < -|\tau| \end{array} \right.$$
The following simple lemma on truncating symmetric triples of Gaussian distributions will be important in the proofs in this section. Its proof is a direct computation and is deferred to Appendix \ref{subsec:appendix-3-part-3}.

\begin{lemma}[Truncating Gaussians] \label{lem:truncgauss}
Let $\tau > 0$ be constant, $\mu > 0$ be tending to zero and let $a, \mu_1, \mu_2$ be such that
\begin{align*}
&\pr{tr}_\tau(\mN(\mu, 1)) \sim \textnormal{Tern}(a, \mu_1, \mu_2) \\
&\pr{tr}_\tau(\mN(-\mu, 1)) \sim \textnormal{Tern}(a, -\mu_1, \mu_2) \\
&\pr{tr}_\tau(\mN(0, 1)) \sim \textnormal{Tern}(a, 0, 0)
\end{align*}
Then it follows that $a > 0$ is constant, $0 < \mu_1 = \Theta(\mu)$ and $0 < \mu_2 = \Theta(\mu^2)$.
\end{lemma}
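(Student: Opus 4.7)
The plan is to verify the three conclusions by directly matching the definition of $\textnormal{Tern}(a, \mu_1, \mu_2)$ to the probability mass function of $\pr{tr}_\tau$ applied to each of the three Gaussians, and then to Taylor expand in $\mu$. Writing $\Phi$ for the standard normal CDF and $\varphi$ for its density, the PMF of $\pr{tr}_\tau(\mN(\mu,1))$ has
\begin{align*}
\bP[X=1] &= 1-\Phi(\tau-\mu), \\
\bP[X=-1] &= 1-\Phi(\tau+\mu), \\
\bP[X=0] &= \Phi(\tau-\mu) + \Phi(\tau+\mu) - 1,
\end{align*}
and analogous expressions (with $\mu$ replaced by $-\mu$ and $0$) hold for the other two Gaussians. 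Matching these against the definition of $\textnormal{Tern}$ in Section~\ref{subsec:srk}, the condition $\pr{tr}_\tau(\mN(0,1)) \sim \textnormal{Tern}(a, 0, 0)$ immediately forces $a = 2\Phi(\tau) - 1$, which is a positive constant since $\tau > 0$. Taking the half-difference and half-sum of the $\bP[X=1]$ and $\bP[X=-1]$ equations in the $\mu$-case then yields the closed forms
\[
\mu_1 = \tfrac{1}{2}\bigl(\Phi(\tau+\mu) - \Phi(\tau-\mu)\bigr), \qquad \mu_2 = \Phi(\tau) - \tfrac{1}{2}\bigl(\Phi(\tau-\mu) + \Phi(\tau+\mu)\bigr),
\]
and one should double-check the $-\mu$ case gives the claimed $\textnormal{Tern}(a,-\mu_1,\mu_2)$, which follows by symmetry since both $\mu_1$ and $\mu_2$ are even or odd in $\mu$ as required.

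Next I will Taylor expand $\Phi(\tau \pm \mu)$ around $\tau$ to third order. Using $\Phi'(\tau) = \varphi(\tau)$ and $\Phi''(\tau) = \varphi'(\tau) = -\tau\varphi(\tau)$, the odd terms cancel in the sum and the even terms cancel in the difference, giving
\[
\mu_1 = \varphi(\tau)\,\mu + O(\mu^3), \qquad \mu_2 = \tfrac{\tau\,\varphi(\tau)}{2}\,\mu^2 + O(\mu^4).
\]
Since $\tau > 0$ is constant, $\varphi(\tau)$ is a positive constant and $\tau\varphi(\tau)/2$ is a positive constant, so both leading coefficients are strictly positive and independent of $\mu$.

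From the two expansions, for all sufficiently small $\mu > 0$ the lower-order terms dominate, giving $\mu_1 = \Theta(\mu) > 0$ and $\mu_2 = \Theta(\mu^2) > 0$, which together with $a = 2\Phi(\tau) - 1 > 0$ constant completes all three claims. The proof involves no real obstacles; the only minor point to be careful about is to keep enough Taylor terms (through order $\mu^2$) to resolve $\mu_2$, since the $O(\mu)$ and constant contributions to $\Phi(\tau-\mu) + \Phi(\tau+\mu)$ both cancel against $2\Phi(\tau)$ and only the second-order term survives.
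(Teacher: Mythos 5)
Your proposal is correct and follows essentially the same path as the paper: match the PMF of $\pr{tr}_\tau$ applied to each Gaussian against the definition of $\textnormal{Tern}$, extract the identical closed forms for $a$, $\mu_1$, $\mu_2$, and then determine the leading-order behaviour in $\mu$. The only variation is in the final asymptotic step, where you Taylor-expand $\Phi(\tau\pm\mu)$ (using $\Phi''(\tau)=-\tau\varphi(\tau)$ to get the $\Theta(\mu^2)$ coefficient and its sign), whereas the paper bounds the same quantities via the integral representations $\mu_1 = \frac{1}{2\sqrt{2\pi}}\int_{\tau-\mu}^{\tau+\mu}e^{-t^2/2}\,dt$ and $\mu_2 = \frac{1}{2\sqrt{2\pi}}\int_{\tau}^{\tau+\mu}e^{-t^2/2}(e^{t\mu-\mu^2/2}-1)\,dt$; both are standard and equivalent.
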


We now will prove our main approximate Markov transition guarantees for $k\textsc{-bpds-to-glsm}$. The proof follows from combining Theorem \ref{thm:isgmreduction}, Lemma \ref{lem:srk} and an application of tensorization of $\TV$.

\begin{theorem}[Reduction from $k$\pr{-bpds} to $\pr{glsm}$] \label{lem:univlem}
Let $n$ be a parameter and $w(n) = \omega(1)$ be a slow-growing function. Fix initial and target parameters as follows:
\begin{itemize}
\item \textnormal{Initial} $k\pr{-bpds}$ \textnormal{Parameters:} vertex counts on each side $m$ and $n$ that are polynomial in one another, dense subgraph dimensions $k_m$ and $k_n$ where $k_n$ divides $n$, constant edge probabilities $0 < q < p \le 1$ and a partition $F$ of $[n]$.
\item \textnormal{Target} $\pr{glsm}$ \textnormal{Parameters:} $(N, d)$ satisfying $wN \le n$, $N \ge n^{c'}$ for some constant $c' > 0$ and $m \le d \le \textnormal{poly}(n)$, target distribution collection $\mathcal{U} = (\mD, \mQ, \{ \mP_{\nu} \}_{\nu \in \mathbb{R}}) \in \pr{uc}(N)$ satisyfing that
$$0 < \tau_{\mU} \le c \cdot \sqrt{\frac{k_n}{n\log n}}$$
for a sufficiently small constant $c > 0$.
\end{itemize}
Let $\mathcal{A}(M)$ denote $k\textsc{-bpds-to-glsm}$ applied to the adjacency matrix $M$ with these parameters. Then $\mathcal{A}$ runs in $\textnormal{poly}(m, n)$ time and it follows that
\begin{align*}
\TV\left( \mathcal{A}\left( \mathcal{M}_{[m] \times [n]}(S \times T, p, q) \right), \, \pr{glsm}_D(N, S, d, \mU) \right) &= o(1) + O\left( w^{-1} + k_n^{-2}m^{-2}r^{-2t} + n^{-2} + N^{-3} d^{-3} \right) \\
\TV\left( \mathcal{A}\left( \textnormal{Bern}(q)^{\otimes m \times n} \right), \, \mQ^{\otimes d \times N} \right) &= O\left( k_n^{-2}m^{-2}r^{-2t} + n^{-2} + N^{-3} d^{-3} \right)
\end{align*}
for all subsets $S \subseteq [m]$ with $|S| = k_m$ and subsets $T \subseteq [n]$ with $|T| = k_n$ and $|T \cap F_i| = 1$ for each $1 \le i \le k_n$.
\end{theorem}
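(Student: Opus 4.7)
The plan is to apply Lemma~\ref{lem:tvacc} to the two steps of $k\textsc{-bpds-to-glsm}$ separately, using Theorem~\ref{thm:isgmreduction} to control Step~1 and the symmetric 3-ary rejection kernel guarantees of Lemma~\ref{lem:srk} to control Step~2. First, I would invoke Theorem~\ref{thm:isgmreduction} with $r=2$, $\epsilon = 1/2$, $t = \lceil \log_2(n/k_n)\rceil$, and $\mu = c_1\sqrt{k_n/(n\log n)}$. The conditions $wN \le n \le k_n 2^t$, $m \le d \le \mathrm{poly}(n)$, and the upper bound on $\mu$ in Theorem~\ref{thm:isgmreduction} are all satisfied for a sufficiently small constant $c_1 > 0$. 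This shows that under $H_1$ of $k\pr{-bpds}$ with planted sets $S\subseteq[m]$ and $T\subseteq[n]$, the intermediate sequence $(Z_1,\dots,Z_N)$ is within total variation $O(w^{-1} + k_n^{-2}m^{-2}r^{-2t})$ of $\pr{isgm}_D(N, S, d, \mu, 1/2)$, and under $H_0$ it is within the same distance of $\mN(0, I_d)^{\otimes N}$.

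For Step~2, I would condition on the mixture signs $s_1,\dots,s_N \in \{-1,+1\}$ realized inside $\pr{isgm}_D$, so that $Z_{ij} \sim \mN(s_i \mu \cdot \mathbf 1\{j\in S\}, 1)$ independently. Applying Lemma~\ref{lem:truncgauss} with $\tau$ as chosen in Figure~\ref{fig:universalityreduction} yields that $\pr{tr}_\tau(Z_{ij})$ is distributed as $\textnormal{Tern}(a, s_i\mu_1, \mu_2)$ when $j\in S$ and as $\textnormal{Tern}(a,0,0)$ when $j\notin S$, with $a = \Theta(1)$, $\mu_1 = \Theta(\mu)$, and $\mu_2 = \Theta(\mu^2)$. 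Conditioning further on $\nu_i\in[-1,1]$ (which occurs jointly for all $i$ with probability $1 - o(1)$ by the universality hypothesis and a union bound), the assumption $\mU \in \pr{uc}(N)$ with $\tau_\mU \le c\sqrt{k_n/(n\log n)}$ implies, for any constant $K$, that the set $S$ defined in Lemma~\ref{lem:srk} satisfies $\bP_{x\sim\mP_{\pm\nu_i}, \mQ}[x\notin S] = O(N^{-K})$: this is because the universality bounds $|d\mP_\nu/d\mQ - d\mP_{-\nu}/d\mQ| = O(\tau_\mU) = O(\mu_1)$ and $|d\mP_\nu/d\mQ + d\mP_{-\nu}/d\mQ - 2| = O(\tau_\mU^2) = O(\mu_2)$ are exactly the defining conditions of $S$ up to a constant factor absorbed into $c_1$.

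With $N_{\text{it}} = \lceil 4\log(dN)\rceil$ iterations and taking $K$ large (say $K=10$), Lemma~\ref{lem:srk} yields $\TV(3\pr{-srk}(\textnormal{Tern}(a, s_i\mu_1, \mu_2)), \mP_{s_i\nu_i}) = O(N^{-K} \mu^{-2}) = o(N^{-3}d^{-3})$ for each entry $(i,j)$ with $j\in S$, and likewise $\TV(3\pr{-srk}(\textnormal{Tern}(a,0,0)), \mQ) = o(N^{-3}d^{-3})$ for $j\notin S$. Since the $Z_{ij}$ (and hence the $3\pr{-srk}$ calls) are independent across $(i,j)$ conditioned on $s_i, \nu_i$, the tensorization property in Fact~\ref{tvfacts} gives an $o(1) + O(N^{-3}d^{-3} \cdot Nd)$ bound across the entire $d\times N$ output. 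Finally, because $\mD$ is symmetric about zero, the marginal law of $s_i\nu_i$ with $s_i\sim\textnormal{Rad}$ independent of $\nu_i\sim\mD$ equals $\mD$, so under $H_1$ the coupled output matches $\pr{glsm}_D(N, S, d, \mU)$ exactly at the conditional level, after marginalizing over $s_i$. The $H_0$ case follows from the same argument with all contributions collapsing to $\mQ$.

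The main obstacle is bookkeeping the coupling between the Rademacher signs $s_i$ intrinsic to the $\pr{isgm}$ output and the independent $\nu_i\sim\mD$ sampled in Step~2: one needs the symmetry of $\mD$ to show that conditioning on $s_i$ and then applying $3\pr{-srk}$ with parameter $\nu_i$ produces the correct marginal $\mP_{s_i\nu_i}$ with $s_i\nu_i\sim\mD$. A secondary technicality is handling the truncation of $\mD$ to $[-1,1]$ and the rare events where the universality tail bounds fail, both of which contribute only $o(N^{-1})$ per sample and get absorbed into the $o(1)$ term after summing over $N$ samples. Combining all contributions via Lemma~\ref{lem:tvacc} with $\mP_0 = \mathcal M_{[m]\times[n]}(S\times T, p, q)$, an intermediate $\pr{isgm}_D(N,S,d,\mu,1/2)$ distribution, and $\mP_2 = \pr{glsm}_D(N,S,d,\mU)$ yields the first bound in the theorem statement, and the analogous unplanted chain yields the second.
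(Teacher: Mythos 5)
Your proposal is correct and follows essentially the same route as the paper: Lemma~\ref{lem:tvacc} applied to the two steps, Theorem~\ref{thm:isgmreduction} with $r=2$ for Step~1, conditioning on the mixture signs and the $\nu_i$, Lemma~\ref{lem:truncgauss} to pass to $\textnormal{Tern}$ inputs, Lemma~\ref{lem:srk} entrywise with tensorization, and the symmetry of $\mD$ to recover $\pr{glsm}_D$ after marginalizing; the truncation of $\mD$ to $[-1,1]$ and the hypothesis $N \ge n^{c'}$ handle the residual $o(1)$ terms exactly as you indicate. The only small imprecision is that the "constant factor" aligning $\tau_{\mU}$ with $\mu_1,\mu_2$ should be absorbed by taking the constant $c$ in the $\tau_{\mU}$ bound small relative to the fixed $c_1$ in the reduction (not by adjusting $c_1$), but this does not affect the argument.
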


\begin{proof}
Let $\mathcal{A}_1$ denote Step 1 of $\mathcal{A}$ with input $M$ and output $(Z_1, Z_2, \dots, Z_N)$. First note that $2^t = \Theta(n/k_n)$ by the definition of $t$ and $\log m = \Theta(\log n)$ since $m$ and $n$ are polynomial in one another. Thus for a small enough choice of $c_1 > 0$, we have
$$\mu = c_1 \cdot \sqrt{\frac{k_n}{n \log n}} \le \frac{2^{-(t + 1)/2}}{2 \sqrt{6\log (k_n m \cdot 2^t) + 2\log (p - q)^{-1}}} \cdot \min \left\{ \log \left( \frac{p}{q} \right), \log \left( \frac{1 - q}{1 - p} \right) \right\}$$
since $p$ and $q$ are constants. Therefore $\mu$ satisfies the conditions needed to apply Theorem \ref{thm:isgmreduction} to $\mathcal{A}_1$. Now let $\mathcal{A}_2$ denote Step 2 of $\mathcal{A}$ with input $(Z_1, Z_2, \dots, Z_N)$ and output $(X_1, X_2, \dots, X_N)$. First suppose that $(Z_1, Z_2, \dots, Z_N) \sim \pr{isgm}_D(N, S, d, \mu, 1/2)$ or in other words where
$$Z_i \sim_{\text{i.i.d.}} \pr{mix}_{1/2}\left( \mN( \mu \cdot \mathbf{1}_S, I_d), \mN( -\mu \cdot \mathbf{1}_S, I_d) \right)$$
For the next part of this argument, we condition on: (1) the entire vector $\nu = (\nu_1, \nu_2, \dots, \nu_N)$; and (2) the subset $P \subseteq [N]$ of sample indices corresponding to the positive part $\mN(\mu \cdot \mathbf{1}_S, I_d)$ of the mixture. Let $\mathcal{C}(\nu, P)$ denote the event corresponding to this conditioning. After truncating according to $\pr{tr}_{\tau}$, by Lemma \ref{lem:truncgauss} the resulting entries are distributed as
$$\pr{tr}_{\tau}(Z_{ij}) \sim \left\{ \begin{array}{ll} \text{Tern}(a, \mu_1, \mu_2) &\text{if } (i, j) \in S \times P \\ \text{Tern}(a, -\mu_1, \mu_2) &\text{if } (i, j) \in S \times P^C \\ \text{Tern}(a, 0, 0) &\text{if } i \not \in S \end{array} \right.$$
Furthermore, these entries are all independent conditioned on $(\nu, P)$. Since $\tau$ is constant, Lemma \ref{lem:truncgauss} also implies that $a \in (0, 1)$ is constant, $\mu_1 = \Theta(\mu)$ and $\mu_2 = \Theta(\mu^2)$. Let $S_\nu$ be
$$S_{\nu} = \left\{ x \in X : 2|\mu_1| \ge \left| \frac{d\mP_{\nu}}{d\mQ} (x) - \frac{d\mP_{-\nu}}{d\mQ} (x) \right| \quad \textnormal{and} \quad \frac{2|\mu_2|}{\max\{a, 1 - a\}} \ge \left|\frac{d\mP_{\nu}}{d\mQ} (x) + \frac{d\mP_{-\nu}}{d\mQ} (x) - 2 \right| \right\}$$
as in Lemma \ref{lem:srk}. Since $\mU = (\mD, \mQ, \{ \mP_{\nu} \}_{\nu \in \mathbb{R}}) \in \pr{uc}(N)$ has level of signal $\tau_{\mU} \le c' \cdot \mu$ for a sufficiently small constant $c' > 0$, we have by definition that $\{x \in S_{\nu_i}\}$ occurs with probability at least $1 - \delta_1$ where $\delta_1 = O(n^{-4 - K_1})$ over each of $\mP_{\nu_i}, \mP_{-\nu_i}$ and $\mQ$, where $K_1 > 0$ is a constant for which $d = O(n^{K_1})$. Here, we are implicitly using the fact that $N \ge n^{c'}$ for some constant $c' > 0$.

Now consider applying Lemma \ref{lem:srk} to each application of $3\pr{-srk}$ in Step 2 of $\mathcal{A}$. Note that $|\mu_1|^{-1} = O(\sqrt{n \log n})$ and $|\mu_2|^{-1} = O(n \log n)$ since $\mu = \Omega(\sqrt{k_n/n\log n})$ and $k_n \ge 1$. Now consider the $d$-dimensional vectors $X_1', X_2', \dots, X_N'$ with independent entries distributed as
$$X'_{ij} \sim \left\{ \begin{array}{ll} \mP_{\nu_i} &\text{if } (i, j) \in S \times P \\ \mP_{-\nu_i} &\text{if } (i, j) \in S \times P^C \\ \mQ &\text{if } i \not \in S \end{array} \right.$$
The tensorization property of $\TV$ from Fact \ref{tvfacts} implies that
\begin{align*}
&\TV\left( \mL(X_1, X_2, \dots, X_N | \nu, P), \mL(X_1', X_2', \dots, X_N'| \nu, P) \right) \\
&\quad \quad \le \sum_{i = 1}^N \sum_{j = 1}^d \TV\left( \mL(X_{ij} | \nu, P), \mL(X_{ij}' | \nu, P) \right) \\
&\quad \quad \le \sum_{i = 1}^N \sum_{j = 1}^d \TV\left( 3\pr{-srk}(\pr{tr}_{\tau}(Z_{ij}), \mP_{\nu_i}, \mP_{-\nu_i}, \mQ), \mL(X_{ij}' | \nu, P) \right) \\
&\quad \quad \le Nd \left[ 2\delta_1 \left(1 + |\mu_1|^{-1} + |\mu_2|^{-1} \right) + \left( \frac{1}{2} + \delta_1 \left( 1 + |\mu_1|^{-1} + |\mu_2|^{-1} \right) \right)^{N_{\text{it}}} \right] \\
&\quad \quad = O\left( n^{-2} + N^{-3} d^{-3} \right)
\end{align*}
since $N \le n$, $\delta_1 = O(n^{-4} d^{-1})$, $N_{\text{it}} = \lceil 4 \log(dN) \rceil$ and by the total variation upper bounds in Lemma \ref{lem:srk}. 

We now will drop the conditioning on $(\nu, P)$ and average over $\nu \sim \mD'$ and $P \sim \text{Unif}\left[2^{[N]}\right]$. Observe that, when not conditioned on $(\nu, P)$, it holds that
$$(X_1', X_2', \dots, X_N') \sim \pr{glsm}_D\left(N, S, d, \left( \mD', \mQ, \{ \mP_{\nu} \}_{\nu \in \mathbb{R}} \right) \right)$$
where $\mD'$ is $\mD$ conditioned to lie in $[-1, 1]$. Note that here we used the fact that $\mD$ and therefore $\mD'$ is symmetric about zero. Coupling the latent $\nu_1, \nu_2, \dots, \nu_N$ sampled from $\mD$ and $\mD'$ and then applying the tensorization property of Fact \ref{tvfacts} yields that
\begin{align*}
&\TV\left( \pr{glsm}_D\left(N, S, d, \left( \mD', \mQ, \{ \mP_{\nu} \}_{\nu \in \mathbb{R}} \right) \right), \pr{glsm}_D\left(N, S, d, \left( \mD, \mQ, \{ \mP_{\nu} \}_{\nu \in \mathbb{R}} \right) \right) \right) \\
&\quad \quad \le \TV( \mD^{\otimes n}, \mD'^{\otimes n}) \le N \cdot \TV( \mD, \mD') \le N \cdot o(N^{-1}) = o(1)
\end{align*}
where $\TV( \mD, \mD') = o(N^{-1})$ follow from the conditioning property of $\TV$ from Fact \ref{tvfacts} and the fact that $\bP_{\nu \sim \mD}[\nu \in [-1, 1]] = 1 - o(N^{-1})$. The triangle inequality and conditioning property of $\TV$ in Fact \ref{tvfacts} now imply that
\begin{align*}
&\TV\left( \mathcal{A}_2\left( \pr{isgm}_D(N, S, d, \mu, 1/2) \right), \pr{glsm}_D\left(N, S, d, \mU \right) \right) \\
&\quad \quad \le \TV\left( \mL(X_1, X_2, \dots, X_N), \mL(X_1', X_2', \dots, X_N') \right) + \TV\left( \mL(X_1', X_2', \dots, X_N'), \pr{glsm}_D\left(N, S, d, \mU \right) \right) \\
&\quad \quad \le \bE_{\nu \sim \mD'} \, \bE_{P \sim \text{Unif}\left[ 2^{[N]} \right]} \, \TV\left( \mL(X_1, X_2, \dots, X_N | \nu, P), \mL(X_1', X_2', \dots, X_N'| \nu, P) \right) \\
&\quad \quad \quad \quad + \TV\left( \pr{glsm}_D\left(N, S, d, \left( \mD', \mQ, \{ \mP_{\nu} \}_{\nu \in \mathbb{R}} \right) \right), \pr{glsm}_D\left(N, S, d, \mU \right) \right) \\
&\quad \quad = o(1) + O\left( n^{-2} + N^{-3} d^{-3} \right)
\end{align*}
Now consider the case when $Z_1, Z_2, \dots, Z_N \sim_{\text{i.i.d.}} \mN(0, I_d)$. Repeating the argument above with $S = \emptyset$ and observing that $(X'_1, X_2', \dots, X_N') \sim \mQ^{\otimes N}$ yields that
$$\TV\left( \mathcal{A}_2\left( \mN(0, I_d)^{\otimes N} \right), \mQ^{\otimes d \times N} \right) = O\left( n^{-2} + N^{-3} d^{-3} \right)$$
We now apply Lemma \ref{lem:tvacc} to the steps $\mathcal{A}_1$ and $\mathcal{A}_2$ under each of $H_0$ and $H_1$, as in the proof of Theorem \ref{thm:isgmreduction}. Under $H_1$, consider Lemma \ref{lem:tvacc} applied to the following sequence of distributions
$$\mathcal{P}_0 = \mathcal{M}_{[m] \times [n]}(S \times T, p, q), \quad \mathcal{P}_1 = \pr{isgm}_D(N, S, d, \mu, 1/2) \quad \text{and} \quad \mathcal{P}_2 = \pr{glsm}_D\left(N, S, d, \mU \right)$$
By Theorem \ref{thm:isgmreduction} and the argument above, we can take
$$\epsilon_1 = O\left( w^{-1} + k_n^{-2}m^{-2}r^{-2t} + n^{-2} + N^{-3} d^{-3} \right) \quad \text{and} \quad \epsilon_2 = o(1) + O\left( n^{-2} + N^{-3} d^{-3} \right)$$
By Lemma \ref{lem:tvacc}, we therefore have that
$$\TV\left( \mathcal{A}\left( \mathcal{M}_{[m] \times [n]}(S \times T, p, q) \right), \, \pr{glsm}_D(N, S, d, \mU) \right) = o(1) + O\left( w^{-1} + k_n^{-2}m^{-2}r^{-2t} + n^{-2} + N^{-3} d^{-3} \right)$$
which proves the desired result in the case of $H_1$. Under $H_0$, similarly applying Theorem \ref{thm:isgmreduction}, the argument above and Lemma \ref{lem:tvacc} to the distributions
$$\mathcal{P}_0 = \textnormal{Bern}(q)^{\otimes m \times n}, \quad \mathcal{P}_1 = \mN(0, I_d)^{\otimes N} \quad \text{and} \quad \mathcal{P}_2 = \mQ^{\otimes d \times N}$$
yields the total variation bound
$$\TV\left( \mathcal{A}\left( \textnormal{Bern}(q)^{\otimes m \times n} \right), \, \mQ^{\otimes d \times N} \right) = O\left( k_n^{-2}m^{-2}r^{-2t} + n^{-2} + N^{-3} d^{-3} \right)$$
which completes the proof of the theorem.
\end{proof}

We now use this theorem to deduce our universality principle for lower bounds in $\pr{glsm}$. The proof of this next theorem is similar to that of Theorems \ref{thm:rsme-lb} and \ref{thm:uslr-lb} and is deferred to Appendix \ref{subsec:appendix-3-part-3}.

\begin{reptheorem}{thm:glsm-lb} [Computational Lower Bounds for \pr{glsm}]
Let $n, k$ and $d$ be polynomial in each other and such that $k = o(\sqrt{d})$. Suppose that the collections of distributions $\mU = (\mD, \mQ, \{ \mP_{\nu} \}_{\nu \in \mathbb{R}})$ is in $\pr{uc}(n)$. Then the $k\pr{-bpc}$ conjecture or $k\pr{-bpds}$ conjecture for constant $0 < q < p \le 1$ both imply a computational lower bound for $\pr{glsm}\left(n, k, d, \mU \right)$ at all sample complexities $n = \tilde{o}\left(\tau_{\mU}^{-4}\right)$.
\end{reptheorem}

\subsection{The Universality Class UC$(n)$ and Level of Signal $\tau_{\mU}$}
\label{subsec:universalitydiscussion}

The result in Theorem \ref{thm:glsm-lb} shows universality of the computational sample complexity of $n = \tilde{\Omega}(\tau_{\mU}^{-4})$ for learning sparse mixtures under the mild conditions of $\pr{uc}(n)$. In this section, we discuss this lower bound, its implications, the universality class $\pr{uc}(n)$ and the level of signal $\tau_{\mU}$.

\paragraph{Remarks on UC$(n)$ and $\tau_{\mU}$.} The conditions for $\mU = (\mD, \mQ, \{ \mP_{\nu} \}_{\nu \in \mathbb{R}}) \in \pr{uc}(n)$ and the definition of $\tau_{\mU}$ have the following two notable properties.
\begin{itemize}
\item \textit{They are defined in terms of marginals}: The class $\pr{uc}(n)$ and $\tau_{\mU}$ are defined entirely in terms of the likelihood ratios $d\mP_\nu/d\mQ$ between the planted and non-planted marginals. In particular, they are independent of the sparsity level $k$ and other high-dimensional properties of the distribution $\pr{glsm}$ constructed from the $\mP_{\nu}$ and $\mQ$. Theorem \ref{thm:glsm-lb} thus establishes a computational lower bound for $\pr{glsm}$ at a sample complexity entirely based on properties of the marginals of $\mP_{\nu}$ and $\mQ$.
\item \textit{Their dependence on $n$ is negligible}: The parameter $n$ only enters the definitions of $\pr{uc}(n)$ and $\tau_{\mU}$ through requirements on tail probabilities. When the likelihood ratios $d\mP_\nu/d\mQ$ are relatively concentrated, the dependence of the conditions in $\pr{uc}(n)$ and $\tau_{\mU}$ on $n$ is nearly negligible. If the ratios $d\mP_\nu/d\mQ$ are concentrated under $\mP_{\nu}$ and $\mQ$ with exponentially decaying tails, then the tail probability bound requirement of $O(n^{-K})$ only appears as a $\text{polylog}(n)$ factor in $\tau_{\mU}$. This will be the case in the examples that appear later in this section.
\end{itemize}

\paragraph{$\mD$ and Parameterization over $[-1,1]$.} $\mD$ and the indices of $\mP_{\nu}$ can be reparameterized without changing the underlying problem. The assumption that $\mD$ is symmetric and mostly supported on $[-1, 1]$ is for notational convenience. As in the case of $\tau_{\mU}$ and the examples later in this section, the tail probability requirement of $o(n^{-1})$ for $\mD$ only appears as a $\text{polylog}(n)$ factor in the computational lower bound of $n = \tilde{\Omega}(\tau_{\mU}^{-4})$ if $\mD$ is concentrated with exponential tails.

While the output vectors $(X_1, X_2, \dots, X_N)$ of our reduction $k$\textsc{-bpds-to-glsm} are independent, their coordinates have dependence induced by the mixture $\mD$. The fact that our reduction samples the $\nu_i$ implies that if these values were revealed to the algorithm, the problem would still remain hard: an algorithm for the latter could be used together with the reduction to solve \kpc. However, even given the $\nu_i$ for the $i$th sample, our reduction is such that whether the planted marginals in the $i$th sample are distributed according to $\mP_{\nu_i}$ or $\mP_{-\nu_i}$ remains unknown to the algorithm. Intuitively, our setup chooses to parameterize the distribution $\mD$ over $[-1, 1]$ such that the sign ambiguity between $\mP_{\nu_i}$ or $\mP_{-\nu_i}$ is what is producing hardness below the sample complexity of $n = \tilde{\Omega}(\tau_{\mU}^{-4})$.

\paragraph{Implications for Concentrated LLR.} We now give several remarks on $\tau_{\mU}$ in the case that the log-likelihood ratios (LLR) $\log d\mP_{\nu}/d\mQ (x)$ are sufficiently well-concentrated if $x \sim \mQ$ or $x \sim \mP_{\nu}$. Suppose that $\mU = (\mD, \mQ, \{ \mP_{\nu} \}_{\nu \in \mathbb{R}}) \in \pr{uc}(n)$, fix some arbitrarily large constant $c > 0$ and fix some $\nu \in [-1,1]$. If $S_{\mQ}$ is the common support of the $\mP_{\nu}$ and $\mQ$, define $S$ to be
$$S = \left\{ x \in S_{\mQ} : c \cdot \tau_{\mU} \ge \left| \frac{d\mP_{\nu}}{d\mQ} (x) - \frac{d\mP_{-\nu}}{d\mQ} (x) \right| \quad \textnormal{and} \quad c \cdot \tau_{\mU}^2 \ge \left|\frac{d\mP_{\nu}}{d\mQ} (x) + \frac{d\mP_{-\nu}}{d\mQ} (x) - 2 \right| \right\}$$
Suppose that $\tau_{\mU} = \Omega(n^{-K})$ for some constant $K > 0$ and let $c$ be large enough that $S$ occurs with probability at least $1 - O(n^{-K})$ under each of $\mP_{\nu}, \mP_{-\nu}$ and $\mQ$. Note that such a constant $c$ is guaranteed by Definition \ref{defn:univ-signal}. Now observe that
\begin{align*}
\TV\left( \mP_{\nu}, \mP_{-\nu} \right) &= \frac{1}{2} \cdot \bE_{x \in \mQ} \left[ \left| \frac{d\mP_{\nu}}{d\mQ} (x) - \frac{d\mP_{-\nu}}{d\mQ} (x) \right| \right] \\
&\le \frac{1}{2} \cdot \bE_{x \in \mQ} \left[ \left| \frac{d\mP_{\nu}}{d\mQ} (x) - \frac{d\mP_{-\nu}}{d\mQ} (x) \right| \cdot \mathbf{1}_S(x) \right] + \frac{1}{2} \cdot \mP_{\nu}\left[S^C\right] + \frac{1}{2} \cdot \mP_{-\nu}\left[S^C\right] \\
&\le c \cdot \tau_{\mU} + O\left(n^{-K}\right) = O\left(\tau_{\mU}\right)
\end{align*}
A similar calculation with the second condition defining $S$ shows that
$$\TV\left( \pr{mix}_{1/2}\left(\mP_{\nu}, \mP_{-\nu} \right), \mQ \right) = O\left( \tau_{\mU}^2 \right)$$
If the LLRs $\log d\mP_{\nu}/d\mQ$ are sufficiently well-concentrated, then the random variables
$$\left| \frac{d\mP_{\nu}}{d\mQ} (x) - \frac{d\mP_{-\nu}}{d\mQ} (x) \right| \quad \text{and} \quad \left|\frac{d\mP_{\nu}}{d\mQ} (x) + \frac{d\mP_{-\nu}}{d\mQ} (x) - 2 \right|$$
will also concentrate around their means if $x \sim \mQ$. LLR concentration also implies that this is true if $x \sim \mP_\nu$ or $x \sim \mP_{-\nu}$. Thus, under sufficient concentration, the definition of the level of signal $\tau_{\mU}$ reduces to the much more interpretable pair of upper bounds
$$\TV\left( \mP_{\nu}, \mP_{-\nu} \right) = O\left(\tau_{\mU}\right) \quad \text{and} \quad \TV\left( \pr{mix}_{1/2}\left(\mP_{\nu}, \mP_{-\nu} \right), \mQ \right) = O\left(\tau_{\mU}^2 \right)$$
These conditions directly measure the amount of statistical signal present in the planted marginals $\mP_{\nu}$. The relevant calculations for an example application of Theorem \ref{thm:glsm-lb} when the LLR concentrates is shown below for sparse PCA. In \cite{brennan2019universality}, various assumptions of concentration of the LLR and analogous implications for computational lower bounds in submatrix detection are analyzed in detail. We refer the reader to Sections 3 and 9 of \cite{brennan2019universality} for the calculations needed to make the discussion here precise.

We remark that, assuming sufficient concentration on the LLR, the analysis of the $k$-sparse eigenvalue statistic from \cite{berthet2013complexity} yields an information-theoretic upper bound for $\pr{glsm}$. Given $\pr{glsm}$ samples $(X_1, X_2, \dots, X_n)$, consider forming the LLR-processed samples $Z_i$ with
$$Z_{ij} = \bE_{\nu \sim \mD} \left[ \log \frac{d\mP_{\nu}}{d\mQ} (X_{ij}) \right]$$
for each $i \in [n]$ and $j \in [d]$. Now consider taking the $k$-sparse eigenvalue of the samples $Z_1, Z_2, \dots, Z_n$. Under sub-Gaussianity assumptions on the $Z_{ij}$, the analysis in Theorem 2 of \cite{berthet2013complexity} applies. Similarly, the analysis in Theorem 5 of \cite{berthet2013complexity} continues to hold, showing that the semidefinite programming algorithm for sparse PCA yields an algorithmic upper bound for $\pr{glsm}$. As information-theoretic limits and algorithms are not the focus of this paper, we omit the technical details needed to make this rigorous.

In many setups captured by $\pr{glsm}$ such as sparse PCA, learning sparse mixtures of Gaussians and learning sparse mixtures of Rademachers, these analyses and our lower bound in Theorem \ref{thm:glsm-lb} together yield a $k$-to-$k^2$ statistical-computational gap. How our lower bound yields a $k^2$ dependence in the computational barriers for these problems is discussed below.

\paragraph{Sparse PCA and Specific Distributions.} One specific example captured by our universality principle and that falls under the concentrated LLR setup discussed above is sparse PCA in the spiked covariance model. The statistical-computational gaps of sparse PCA have been characterized based on the planted clique conjecture in a line of work \cite{berthet2013optimal, berthet2013complexity, wang2016statistical, gao2017sparse, brennan2018reducibility, brennan2019optimal}. We show that our universality principle faithfully recovers the $k$-to-$k^2$ gap for sparse PCA shown in \cite{berthet2013optimal, berthet2013complexity, wang2016statistical, gao2017sparse, brennan2018reducibility} assuming the $k\pr{-bpds}$ conjecture. As discussed in Section \ref{sec:2-secret-leakage}, also the $k\pr{-bpc}$, $k\pr{-pds}$ or $k\pr{-pc}$ conjectures therefore yields nontrivial lower bounds. We remark that \cite{brennan2019optimal} shows stronger hardness based on weaker forms of the $\pr{pc}$ conjecture.

We show in the next lemma that sparse PCA corresponds to $\pr{glsm}\left(n, k, d, \mU \right)$ for a proper choice of $\mU = (\mD, \mQ, \{ \mP_{\nu} \}_{\nu \in \mathbb{R}}) \in \pr{uc}(n)$ and $\tau_{\mU}$ so that the lower bound $n = \tilde{\Omega}(\tau_{\mU}^{-4})$ exactly corresponds to the conjectured computational barrier in Sparse PCA. Recall that the hypothesis testing problem $\pr{spca}(n, k, d, \theta)$ has hypotheses
\begin{align*}
&H_0 : (X_1, X_2, \dots, X_n) \sim_{\textnormal{i.i.d.}} \mN(0, I_d) \\
&H_1 : (X_1, X_2, \dots, X_n) \sim_{\textnormal{i.i.d.}} \mN\left(0, I_d + \theta vv^\top\right)
\end{align*}
where $v$ is a $k$-sparse unit vector in $\mathbb{R}^d$ chosen uniformly at random among all such vectors with nonzero entries equal to $1/\sqrt{k}$. 

\begin{lemma}[Lower Bounds for Sparse PCA]
If, then $\pr{spca}(n, k, d, \theta)$ can be expressed as $\pr{glsm}(n, k, d, \mU)$ where $\mU = (\mD, \mQ, \{ \mP_{\nu} \}_{\nu \in \mathbb{R}}) \in \pr{uc}(n)$ is given by
$$\mP_{\nu} = \mN\left( 2\nu \sqrt{\frac{\theta \log n}{k}}, 1 \right) \textnormal{ for all } \nu \in \mathbb{R}, \quad \mQ = \mN(0, 1) \quad \textnormal{and} \quad \mD = \mN\left(0, \frac{1}{4\log n} \right)$$
and has valid level of signal $\tau_{\mU} = \Theta\left( \sqrt{\frac{\theta (\log n)^2}{k}} \right)$ if it holds that $\theta (\log n)^2 = o(k)$.
\end{lemma}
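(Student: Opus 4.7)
The plan is first to show that $\pr{spca}(n, k, d, \theta)$ is distributionally equivalent to $\pr{glsm}(n, k, d, \mU)$ with the proposed $\mU$, and then to verify the three conditions of Definition \ref{defn:univ-signal} and compute $\tau_{\mU}$. For the first step, recall that under $H_1$ of sparse PCA, $X_i \sim \mN(0, I_d + \theta v v^\top)$ admits the signal-plus-noise decomposition $X_i = \sqrt{\theta} \, g_i \, v + W_i$ with $g_i \sim \mN(0,1)$ and $W_i \sim \mN(0, I_d)$ independent. Conditional on $g_i$, the entries $X_{ij}$ are independent with $X_{ij} \sim \mN(\sqrt{\theta} \, g_i / \sqrt{k}, 1)$ for $j \in S = \textnormal{supp}(v)$ and $X_{ij} \sim \mN(0,1)$ otherwise. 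Setting $\nu_i = g_i / (2\sqrt{\log n})$ yields $\nu_i \sim \mN(0, 1/(4\log n)) = \mD$, and the conditional distribution of the planted coordinates is exactly $\mP_{\nu_i} = \mN(2\nu_i\sqrt{\theta \log n /k}, 1)$. Since $H_0$ of $\pr{spca}$ is $\mQ^{\otimes d}$ per sample, this identifies $\pr{spca}$ with $\pr{glsm}$ under the stated $\mU$.

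Next, I would verify $\mU \in \pr{uc}(n)$. Computability of $(\mP_\nu, \mQ)$ is immediate since both are explicit univariate Gaussians with closed-form densities. The distribution $\mD$ is symmetric around zero by construction, and a standard Mills-ratio estimate gives $\bP_{\nu \sim \mD}[|\nu| > 1] \le 2(1 - \Phi(2\sqrt{\log n})) = O(n^{-2}/\sqrt{\log n}) = o(n^{-1})$, as required. The substantive work is the likelihood-ratio tail bound. A direct computation gives
\begin{equation*}
\frac{d\mP_\nu}{d\mQ}(x) = \exp\!\left(\mu_\nu x - \mu_\nu^2/2\right), \qquad \mu_\nu := 2\nu\sqrt{\theta \log n /k}.
\end{equation*}
Consequently
\begin{equation*}
\frac{d\mP_\nu}{d\mQ}(x) - \frac{d\mP_{-\nu}}{d\mQ}(x) = 2 e^{-\mu_\nu^2/2} \sinh(\mu_\nu x), \quad \frac{d\mP_\nu}{d\mQ}(x) + \frac{d\mP_{-\nu}}{d\mQ}(x) - 2 = 2\bigl(e^{-\mu_\nu^2/2}\cosh(\mu_\nu x) - 1\bigr).
\end{equation*}

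The main work is bounding these on a high-probability set. For $\nu \in [-1,1]$ we have $|\mu_\nu| \le 2\sqrt{\theta \log n/k}$, and the hypothesis $\theta (\log n)^2 = o(k)$ gives $|\mu_\nu| = o(1/\sqrt{\log n})$. For any fixed constant $K > 0$, a standard Gaussian tail bound shows that under each of $\mQ$, $\mP_\nu$, $\mP_{-\nu}$, the event $|x| \le C_K \sqrt{\log n}$ holds with probability $1 - O(n^{-K})$, where we use $|\mu_\nu| = o(1)$ to absorb the mean shift into the choice of $C_K$. On this event, $|\mu_\nu x| = O(\sqrt{\theta (\log n)^2/k}) = o(1)$, so Taylor expanding $\sinh$, $\cosh$, and $e^{-\mu_\nu^2/2}$ to leading order yields
\begin{equation*}
\left|\frac{d\mP_\nu}{d\mQ}(x) - \frac{d\mP_{-\nu}}{d\mQ}(x)\right| = 2|\mu_\nu x|(1 + o(1)) = O\!\left(\sqrt{\tfrac{\theta(\log n)^2}{k}}\right),
\end{equation*}
\begin{equation*}
\left|\frac{d\mP_\nu}{d\mQ}(x) + \frac{d\mP_{-\nu}}{d\mQ}(x) - 2\right| = \bigl|(\mu_\nu x)^2 - \mu_\nu^2\bigr|(1+o(1)) + O(\mu_\nu^4 x^4) = O\!\left(\tfrac{\theta(\log n)^2}{k}\right),
\end{equation*}
the second bound using that both $(\mu_\nu x)^2$ and $\mu_\nu^2$ are $O(\mu_\nu^2 \log n) = O(\tau_\mU^2)$. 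This identifies $\tau_\mU = \Theta(\sqrt{\theta(\log n)^2/k})$ and completes the verification that $\mU \in \pr{uc}(n)$.

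The primary obstacle is bookkeeping in the Taylor expansions: the second bound involves a cancellation between $\cosh(\mu_\nu x) - 1 \approx (\mu_\nu x)^2/2$ and $e^{-\mu_\nu^2/2} - 1 \approx -\mu_\nu^2/2$, and one must confirm that no term of smaller order than $\tau_\mU^2$ appears with a coefficient that violates the bound. The condition $\theta(\log n)^2 = o(k)$ is precisely what keeps $|\mu_\nu x|$ small enough on the high-probability set for the $O(\tau_\mU^2)$ estimate to hold uniformly in $\nu \in [-1,1]$; without it the $\cosh$ expansion fails and the level of signal is no longer well defined in the sense of Definition \ref{defn:univ-signal}.
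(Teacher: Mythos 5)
Your proof is correct and follows the same approach as the paper: write $X$ under $H_1$ in signal-plus-noise form with the latent mixing variable scaled to land in $\mD = \mN(0, 1/(4\log n))$, then verify the two likelihood-ratio conditions of Definition \ref{defn:univ-signal} by Taylor expansion on the high-probability set $|x| = O(\sqrt{\log n})$. If anything, your accounting of the $\cosh$ cancellation is slightly more explicit than the paper's (which asserts $\Theta(t^2)$ where the exact leading term is $(tx)^2 - t^2$; both yield $O(\tau_\mU^2)$ after using $|x| = O(\sqrt{\log n})$, so this is cosmetic).
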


\begin{proof}
Note that if $X \sim \mN\left(0, I_d + \theta vv^\top \right)$ then $X$ can be written as
$$X = 2\sqrt{\theta \log n} \cdot gv + G \quad \text{where } g \sim \mN\left(0, \frac{1}{4\log n} \right) \text{ and } G \sim \mN(0, I_d)$$
and where $g$ and $G$ are independent. This follows from the fact that the random variable on the right-hand side above is a jointly Gaussian vector with covariance matrix given by the sum of the covariance matrices of the individual terms. This observation implies that $\pr{spca}(n, k, d, \theta)$ is exactly the problem $\pr{glsm}(n, k, d, \mU)$. Now observe that the probability that $x \sim \mD$ satisfies $x \in [-1, 1]$ is $1 - o(n^{-1})$ by standard Gaussian tail bounds. Fix some $\nu\in [-1, 1]$ and let $t = 2\nu \sqrt{\frac{\theta \log n}{k}}$. Note that
$$\left|\frac{d\mP_\nu}{d\mQ}(x) - \frac{d\mP_{-\nu}}{d\mQ}(x) \right| = \left| e^{tx - t^2/2} - e^{-tx - t^2/2}\right| = \Theta \left( |tx| \right)$$
if $|tx| = o(1)$. As long as $x = O(\sqrt{\log n})$, it follows that $|tx| = O(\tau_{\mU}) = o(1)$ from the definition of $\tau_{\mU}$ and fact that $\theta (\log n)^2 = o(k)$. Note that $x = O(\sqrt{\log n})$ occurs with probability at least $1 - O(n^{-K})$ for any constant $K > 0$ under each of $\mP_{\nu}$ where $\nu \in [-1, 1]$ and $\mQ$ by standard Gaussian tail bounds. Now observe that
$$\left|\frac{d\mP_\nu}{d\mQ}(x) + \frac{d\mP_{-\nu}}{d\mQ}(x) - 2\right| = \left| e^{tx - t^2/2} + e^{-tx - t^2/2} - 2\right| = \Theta(t^2)$$
holds if $|tx| = o(1)$, which is true as long as $x = O(\sqrt{\log n})$ and thus holds with probability $1 - O(n^{-K})$ for any fixed $K > 0$. Since $t^2 = O(\tau_{\mU}^2)$ for any $\nu \in [-1, 1]$, this completes the proof that $\mU \in \pr{uc}(n)$ with level of signal $\tau_{\mU}$.
\end{proof}

Combining this lemma with Theorem \ref{thm:glsm-lb} yields the $k\pr{-bpds}$ conjecture implies a computational lower bound for Sparse PCA at the barrier $n = \tilde{o}(k^2/\theta^2)$ as long as $\theta(\log n)^2 = o(k)$ and $k = o(\sqrt{d})$, which matches the planted clique lower bounds in \cite{berthet2013optimal, berthet2013complexity, wang2016statistical, gao2017sparse, brennan2018reducibility}. Similar calculations to those in the above corollary can be used to identify the computational lower bound implied by Theorem \ref{thm:glsm-lb} for many other choices of $\mU = (\mD, \mQ, \{ \mP_{\nu} \}_{\nu \in \mathbb{R}}) \in \pr{uc}(n)$. Some examples are:
\begin{itemize}
\item Balanced sparse Gaussian mixtures where $\mQ = \mN(0, 1)$, $\mP_{\nu} = \mN(\theta \nu, 1)$ where $\mD$ is any symmetric distribution over $[-1, 1]$ can be shown to satisfy that $\tau_{\mU} = \Theta\left(\theta \sqrt{\log n}\right)$ if $\theta \sqrt{\log n} = o(1)$.
\item The Bernoulli case where $\mQ = \text{Bern}(1/2)$, $\mP_{\nu} = \text{Bern}(1/2 + \theta \nu)$ and $\mD$ is any symmetric distribution over $[-1, 1]$ can be shown to satisfy that $\tau_{\mU} = \Theta\left(\theta \right)$ if $\theta \le 1/2$.
\item Sparse mixtures of exponential distributions where $\mQ = \text{Exp}(\lambda)$, $\mP_{\nu} = \text{Exp}(\lambda + \theta \nu)$ and $\mD$ is any symmetric distribution over $[-1, 1]$ can be shown to satisfy that $\tau_{\mU} = \tilde{\Theta}\left( \theta \lambda^{-1} \log n \right)$ if it holds that $\theta \log n = o(\lambda)$.
\item Sparse mixtures of centered Gaussians with difference variances where $\mQ = \mN(0, 1)$, $\mP_{\nu} = \mN(0, 1 + \theta \nu)$ and $\mD$ is any symmetric distribution over $[-1, 1]$ can be shown to satisfy that $\tau_{\mU} = \Theta\left(\theta \log n \right)$ if $\theta \log n = o(1)$.
\end{itemize}
We remark that $\tau_{\mU}$ can be calculated for many more choices of $\mD, \mQ$ and $\mP_{\nu}$ using the computations outlined in the discussion above on the implications of our result for concentrated LLR.

\section{Computational Lower Bounds for Recovery and Estimation}
\label{subsec:2-estimation}

In this section, we outline several ways to deduce that our reductions to the hypothesis testing formulations in the previous section imply computational lower bounds for natural recovery and estimation formulations of the problems introduced in Section \ref{sec:1-problems}. We first introduce a notion of average-case reductions in total variation between recovery problems and note that most of our reductions satisfy these stronger conditions in addition to those in Section \ref{subsec:2-tvreductions}. We then discuss alternative methods of obtaining hardness of recovery and estimation in the problems that we consider directly from computational lower bounds for detection.  

In the previous section, we showed that lower bounds for our detection formulations of $\pr{rsme}$ and $\pr{glsm}$ directly imply lower bounds for natural estimation and recovery variants, respectively. In Section \ref{sec:3-tensor}, we showed that our lower bounds against blackboxes solving the detection formulation of tensor PCA with a low false positive probability of error directly implies hardness of estimating $v$ in $\ell_2$ norm. As discussed in Section \ref{subsec:1-problems-hidden-partition}, the problems of recovering the hidden partitions in $\pr{ghpm}$ and $\pr{bhpm}$ have very different barriers than the testing problem we consider in this work. In this section, we will discuss recovery and estimation hardness for the remaining problems from Section \ref{sec:1-problems}.

\subsection{Our Reductions and Computational Lower Bounds for Recovery}

Similar to the framework in Section \ref{subsec:2-tvreductions} for reductions showing hardness of detection, there is a natural notion of a reduction in total variation transferring computational lower bounds between recovery problems. Let $\mP(n, \tau)$ denote the recovery problem of estimating $\theta \in \Theta_\mP$ within some small loss $\ell_{\mP}(\theta, \hat{\theta}) \le \tau$ given an observation from the distribution $\mP_D(\theta)$. Here, $n$ is any parameterization such that this observation has size $\text{poly}(n)$ and, as per usual, $\ell_\mP$, $\Theta_\mP$ and $\tau$ are implicitly functions of $n$. Define the problem $\mP'(N, \tau')$ analogously. The following is the definition of a reduction in total variation between $\mP$ and $\mP'$.

\begin{definition}[Reductions in Total Variation between Recovery Problems] \label{defn:tvreductions-recovery}
A $\textnormal{poly}(n)$ time algorithm $\mathcal{A}$ sending valid inputs for $\mP(n, \tau)$ to valid inputs for $\mP'(N, \tau')$ is a reduction in total variation from $\mP$ to $\mP'$ if the following criteria are met for all $\theta \in \Theta_{\mP}$:
\begin{enumerate}
\item There is a distribution $\mD(\theta)$ over $\Theta_{\mP'}$ such that
$$\TV\left( \mathcal{A}(\mP_D(\theta)), \, \bE_{\theta' \sim \mD(\theta)} \, \mP'_D(\theta')\right) = o_n(1)$$
\item There is a $\textnormal{poly}(n)$ time randomized algorithm $\mathcal{B}(X, \hat{\theta'})$ mapping instances $X$ of $\mP(n, \tau)$ and $\hat{\theta'} \in \Theta_{\mP'}$ to $\hat{\theta} \in \Theta_{\mP}$ with the following property: if $X \sim \mP_D(\theta)$, $\theta'$ is an arbitrary element of $\textnormal{supp} \, \mD(\theta)$ and $\hat{\theta'}$ is guaranteed to satisfy that $\ell_{\mP'}(\theta', \hat{\theta'}) \le \tau'$, then $\mathcal{B}(X, \hat{\theta'})$ outputs some $\hat{\theta}$ with $\ell_{\mP}(\theta, \hat{\theta}) \le \tau$ with probability $1 - o_n(1)$.
\end{enumerate}
\end{definition}

While this definition has a number of technical conditions, it is conceptually simple. A randomized algorithm $\mathcal{A}$ is a reduction in total variation from $\mP$ to $\mP'$ if it maps a sample from the conditional distribution $\mP_D(\theta)$ approximately to a sample from a mixture of $\mP_D(\theta')$, where the mixture is over a distribution $\mD(\theta)$ determined by $\theta$. Furthermore, there must be an efficient way $\mathcal{B}$ to recover a good estimate $\hat{\theta}$ of $\theta$ given a good estimate $\hat{\theta'}$ of $\theta'$ and the original instance $X$ of $\mP$. The reason that (2) must be true for any $\theta' \in \textnormal{supp} \, \mD(\theta)$ is that, to transfer recovery hardness from $\mP$ to $\mP'$, the algorithm $\mathcal{B}$ will be applied to the output $\theta'$ of a blackbox solving $\mP'$ applied to $\mathcal{A}(X)$. In this setting, $\theta'$ and $X$ are dependent and allowing $\theta' \in \textnormal{supp} \, \mD(\theta)$ in the definition above accounts for this. Note that, as per usual, $\mathcal{A}$ must satisfy the properties in the definition above oblivious to $\theta$. The following lemma shows that Definition \ref{defn:tvreductions-recovery} fulfills its objective and transfers hardness of recovery from $\mP$ to $\mP'$. Its proof is simple and deferred to Appendix \ref{subsec:appendix-2-tv}.

\begin{lemma} \label{lem:tvreductions-recovery}
Suppose that there is reduction $\mathcal{A}$ from $\mP(n, \tau)$ to $\mP'(N, \tau')$ satisfying the conditions in Definition \ref{defn:tvreductions-recovery}. If there is a polynomial time algorithm $\mathcal{E}'$ solving $\mP'(N, \tau')$ with probability at least $p$, then there is a polynomial time algorithm $\mathcal{E}$ solving $\mP(n, \tau)$ with probability at least $p - o_n(1)$.
\end{lemma}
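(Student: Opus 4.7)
The plan is to construct the algorithm $\mathcal{E}$ for $\mP(n, \tau)$ as the natural composition $\mathcal{E}(X) = \mathcal{B}(X, \mathcal{E}'(\mathcal{A}(X)))$, where $\mathcal{A}$ is the reduction, $\mathcal{E}'$ is the hypothesized solver for $\mP'$, and $\mathcal{B}$ is the back-translation algorithm guaranteed by condition (2) of Definition \ref{defn:tvreductions-recovery}. Since $\mathcal{A}$, $\mathcal{E}'$, and $\mathcal{B}$ all run in polynomial time and the output of $\mathcal{A}$ has size $\mathrm{poly}(n)$ because $n$ and $N$ are polynomial in one another (implicit in the setup), $\mathcal{E}$ runs in $\mathrm{poly}(n)$ time as required.

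For the success probability analysis, I would fix an arbitrary $\theta \in \Theta_\mP$ and let $X \sim \mP_D(\theta)$. By condition (1) of Definition \ref{defn:tvreductions-recovery},
\[
\TV\!\left(\mathcal{A}(\mP_D(\theta)),\; \bE_{\theta' \sim \mD(\theta)}\,\mP'_D(\theta')\right) = o_n(1).
\]
Because $\mathcal{E}'$ solves $\mP'(N,\tau')$ with probability at least $p$ on inputs drawn from $\mP'_D(\theta')$ for any fixed $\theta' \in \Theta_{\mP'}$, averaging over $\theta' \sim \mD(\theta)$ shows that on the mixture it outputs some $\hat{\theta'}$ with $\ell_{\mP'}(\theta',\hat{\theta'}) \le \tau'$ with probability at least $p$ as well. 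Applying the data-processing / coupling characterization of total variation to the event ``$\mathcal{E}'$ returns a valid $\hat{\theta'}$ for the underlying $\theta'$,'' this success event occurs with probability at least $p - o_n(1)$ when the input to $\mathcal{E}'$ is instead $\mathcal{A}(X)$, with $\theta'$ drawn from a coupling so that $(X,\theta')$ has the joint law induced by the mixture representation.

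Conditional on this event, $\hat{\theta'}$ satisfies $\ell_{\mP'}(\theta',\hat{\theta'}) \le \tau'$ for some $\theta' \in \mathrm{supp}\,\mD(\theta)$ that is in general correlated with $X$, which is precisely the regime handled by condition (2): $\mathcal{B}(X,\hat{\theta'})$ then outputs $\hat\theta$ with $\ell_{\mP}(\theta,\hat\theta) \le \tau$ with probability $1 - o_n(1)$. A union bound over the two failure events yields that $\mathcal{E}$ outputs a valid estimate of $\theta$ with probability at least $p - o_n(1)$, uniformly in $\theta \in \Theta_\mP$, completing the proof.

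The only subtle point will be the coupling step, which is where condition (2) specifically allows $\theta'$ to be an \emph{arbitrary} element of $\mathrm{supp}\,\mD(\theta)$ rather than an independent draw. This is what legitimizes passing from the mixture statement about $\mathcal{E}'$ to a statement about $\mathcal{E}' \circ \mathcal{A}$ on actual samples from $\mP_D(\theta)$ while still being able to invoke $\mathcal{B}$'s guarantee. I expect this coupling bookkeeping to be the main (though mild) obstacle; the rest is an immediate data-processing and union-bound argument.
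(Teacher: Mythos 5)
Your proposal is correct and follows essentially the same route as the paper's proof: compose $\mathcal{E} = \mathcal{B}(\cdot, \mathcal{E}'(\mathcal{A}(\cdot)))$, couple $\mathcal{A}(X)$ with a sample $Y$ from the mixture $\bE_{\theta' \sim \mD(\theta)}\mP'_D(\theta')$ so that they agree with probability $1 - o_n(1)$, transfer $\mathcal{E}'$'s per-$\theta'$ guarantee to the mixture and then across the coupling, and finish with condition (2) and a union bound. Your phrasing of the coupling (``$(X,\theta')$ has the joint law induced by the mixture representation'') is slightly loose — the paper couples $(X, \text{randomness of } \mathcal{A}, Y)$ so that $\mathcal{A}(X)=Y$ w.h.p., with $\theta'$ being the latent under which $Y$ was drawn — but the intended argument is the same and the imprecision is cosmetic.
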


The recovery variants of the problems we consider all take the form of $\mP(n, \tau)$. For example, $\Theta_{\mP}$ is the set of $k$-sparse vectors of bounded norm and $\ell_{\mP}$ is $\ell_2$ in $\pr{mslr}$, and $\Theta_{\mP}$ is the set of $(n/k)$-subsets of $[n]$ and $\ell_{\mP}$ is the size of the symmetric difference between two $(n/k)$-subsets in $\pr{isbm}$. In $\pr{rslr}$, $\Theta_{\mP}$ can be taken to be the set of al $(u, \mathcal{A})$ where $u$ is a $k$-sparse vector of bounded norm and $\mathcal{A}$ is a valid adversary. The loss $\ell_{\mP}$ is then independent of $\mathcal{A}$ and given by the $\ell_2$ norm on $u$. Throughout Parts \ref{part:reductions} and \ref{part:lower-bounds}, the guarantees we proved for our reductions among the hypothesis testing formulations from Section \ref{subsec:2-formulations} generally took the form of condition (1) in Definition \ref{defn:tvreductions-recovery}. Some reductions had a post-processing step where coordinates in the output instance are randomly permuted or subsampled, but these can simply be removed to yield a guarantee matching the form of (1). In light of this and Lemma \ref{lem:tvreductions-recovery}, it suffices to show that our reductions also satisfy condition (2) in Definition \ref{defn:tvreductions-recovery}. We outline how to construct these algorithms $\mathcal{B}$ for each of our remaining problems below.

\paragraph{Reductions from $\pr{bpc}$ and $k\pr{-bpc}$.} All of our reductions from $\pr{bpc}$ and $k\pr{-bpc}$ to $\pr{rsme}$, $\pr{neg-spca}$, $\pr{mslr}$ and $\pr{rslr}$ map from an instance with left biclique vertex set $S$ with $|S| = k_m$ to an instance with hidden vector $u = \gamma \cdot k_m^{-1/2} \cdot \mathbf{1}_S$ for some $\gamma \in (0, 1)$. In the notation of Definition \ref{defn:tvreductions-recovery}, $\mD(S)$ is a point mass on $u$. We now outline how such reductions imply hardness of estimation up to any $\ell_2$ error $\tau' = o(\gamma)$.

To verify condition (2) of Definition \ref{defn:tvreductions-recovery}, it suffices to give an efficient algorithm $\mathcal{B}$ recovering $S$ and the right biclique vertices $S'$ from the original $\pr{bpc}$ or $k\pr{-bpc}$ instance $G$ and an estimate $\hat{u}$ satisfying that $\| \hat{u} - \gamma \cdot k_m^{-1/2} \cdot \mathbf{1}_S \|_2 \le \tau'$. Suppose that $|S| = k_m$ and $|S'|$ are both $\omega(\log n)$. Let $\hat{S}$ be the set of the largest $k_m$ entries of $\hat{u}$ and note that $\| \gamma^{-1} \cdot \hat{u} - k_m^{-1/2} \cdot \mathbf{1}_S \|_2 = o(1)$, which can be verified to imply that at least $(1 - o(1))k_m$ of $\hat{S}$ must be in $S$. A union bound and Chernoff bound can be used to show that, in a $\pr{bpc}$ instance with left and right biclique sets $S$ and $S'$, there is no right vertex in $[n] \backslash S'$ with at least $3k_m/4$ neighbors in $S$ with probability $1 - o_n(1)$ if $k_m \gg \log n$. Therefore $S'$ is exactly the set of right vertices with at least $5k_m/6$ neighbors in $\hat{S}$ with probability $1 - o_n(1)$. Taking the common neighbors of $S'$ now recovers $S$ with high probability. Thus this procedure of taking the $k_m$ largest entries $\hat{S}$ of $\hat{u}$, taking right vertices with many neighbors in $\hat{S}$ and then taking their common neighborhoods, exactly solves the $\pr{bpc}$ and $k\pr{-bpc}$ recovery problems. We remark that hardness for these exact recovery problems follows from detection hardness, as bipartite Erd\H{o}s-R\'{e}nyi random graphs do not contain bicliques of left and right sizes $\omega(\log n)$ with probability $1 - o_n(1)$.

We remark that for the values of $\gamma$ in our reductions, the condition $\tau = o(\gamma)$ implies tight computational lower bounds for estimation in $\pr{rsme}$, $\pr{neg-spca}$, $\pr{mslr}$ and $\pr{rslr}$. In particular, for $\pr{rsme}$, $\pr{mslr}$ and $\pr{rslr}$, we may take $\tau'$ to be arbitrarily close to $\tau$ in our detection lower bound as long as $\tau' = o(\tau)$. For $\pr{neg-spca}$, a natural estimation analogue is to estimate some $k$-sparse $v$ within $\ell_2$ norm $\tau'$ given $n$ i.i.d. samples from $\mN(0, I_d + vv^\top)$. For this estimation formulation, we may take $\tau' = o(\sqrt{\theta})$ where $\theta$ is as in our detection lower bound.

\paragraph{Reductions from $k\pr{-pc}$.} We now outline how to construct such an algorithm $\mathcal{B}$ for $\pr{isbm}$. We only sketch the details of this construction as a more direct and simpler way to deduce hardness of recovery for $\pr{isbm}$ will be discussed in the next section. We remark that a similar construction of $\mathcal{B}$ also verifies condition (2) for our reduction to $\pr{semi-cr}$. 

For simplicity, first consider $k\pr{-pds-to-isbm}$ without the initial $\pr{To-}k\pr{-Partite-Submatrix}$ step and the random permutations of vertex labels in Steps 2 and 4. Let $S \subseteq [kr^t]$ be the vertex set of the planted dense subgraph in $M_{\text{PD2}}$ and let $F'$ and $F''$ be the given partitions of the indices $[kr^t]$ of $M_{\text{PD2}}$ and the vertices $[kr\ell]$ of the output graph, respectively. Lemma \ref{lem:isbm-rotations} shows that the output instance of $\pr{isbm}$ has its smaller hidden community $C_1$ of size $k\ell$ on the vertices corresponding to the negative entries of the vector $v_{S, F', F''}(K_{r, t})$. Note that, as a function of this set $S$, the mixture distribution $\mD(S)$ is again a point mass. We now will outline how to approximately recover $S$ given a close estimate $\hat{C}_1$ of $C_1$. Suppose that $\hat{C}_1$ is a $k\ell$-subset of $[kr\ell]$ such that $|C_1 \cap \hat{C}_1| \ge (1 - o(1))k\ell$. Construct the vector $\hat{v}$ given by
$$\hat{v}_i = \frac{1}{\sqrt{r^t(r - 1)}} \cdot \left\{ \begin{matrix} 1 & \textnormal{if } i \not\in \hat{C}_1 \\ 1 - r & \textnormal{if } i \in \hat{C}_1 \end{matrix} \right.$$
Since $\ell = \Theta(r^{t - 1})$, a direct calculation shows that $\| \hat{v} - v_{S, F', F''}(K_{r, t}) \|_2 = o(\sqrt{k})$. For each part $F_i''$, consider the vector in $\mathbb{R}^{r\ell}$ formed by restricting $\hat{v}$ to the indices in $F_i''$ and identifying these indices with $[r\ell]$ in increasing order. For each such vector, find the closest column of $K_{r, t}$ to this vector in $\ell_2$ norm. If the index of this column is $j$, add the $j$th smallest element of $F_i'$ to $\hat{S}$. We claim that the resulting set $\hat{S}$ contains at least $(1 - o(1))k$ elements of $S$. The singular values of $K_{r, t}$ computed in Lemma \ref{lem:Krtsv} can be used to show that any two columns of $K_{r, t}$ are separated by an $\ell_2$ distance of $\Omega(1)$. Any part $F_i'$ for which the correct $j \in S \cap F_i'$ was not added to $\hat{S}$ must have satisfied that $\hat{v}$ restricted to the part $F_i''$ was an $\ell_2$ distance of $\Omega(1)$ from the corresponding restriction of $v_{S, F', F''}(K_{r, t})$. Since $\| \hat{v} - v_{S, F', F''}(K_{r, t}) \|_2 = o(\sqrt{k})$, the number of such $j$ incorrectly added to $\hat{S}$ is $o(k)$, verifying the claim.

Now consider $k\pr{-pds-to-isbm}$ with its first step and the random permutations. Since the random index permutation in $\pr{To-}k\pr{-Partite-Submatrix}$ and the subsequent random permutations in Steps 2 and 4 are all generated by the reduction, they can also be remembered and used in the algorithm $\mathcal{B}$ recovering the clique of the input $k\pr{-pc}$ instance. When combined with the subroutine recovering $\hat{S}$ from $\hat{C}_1$, these permutations are sufficient to identify a set of $k$ vertices overlapping with the clique in at least $(1 - o(1))k$ vertices. Now using a similar procedure to the one mentioned above for $\pr{bpc}$, together with the input $k\pr{-pc}$ instance $G$, this is sufficient to exactly recover the hidden clique vertices.

\subsection{Relationship Between Detection and Recovery}

As shown in the previous section, computational lower bounds from recovery can generally be deduced from our reductions because they are also reductions in total variation between recovery problems. We now will outline how our computational lower bounds for detection all either directly or almost directly imply hardness of recovery. As in Section 10 of \cite{brennan2018reducibility}, our approach is to produce two independent instances $X$ and $X'$ from $\mP_D(\theta)$ without knowing $\theta$, to use $X$ to recover an estimate $\hat{\theta}$ of $\theta$ and then to verify that $\hat{\theta}$ is a good estimate of $\theta$ using $X'$. If $\hat{\theta}$ is confirmed to closely approximate $\theta$ using $X'$, then output $H_1$, and otherwise output $H_0$. This recipe shows detection is easier than recovery as long as there are efficient ways to produce the pair $(X, X')$ and to verify $\hat{\theta}$ is a good estimate given a fresh sample $X'$. In general, the purpose of cloning into the pair $(X, X')$ is to sidestep the fact that $X$ and $\hat{\theta}$ are dependent random variables, which complicates analyzing the verification step. In contrast, $\hat{\theta}$ and $X'$ are conditionally independent given $\theta$. We now show that this recipe applies to each of our problems.

\paragraph{Sample Splitting.} In problems with samples, a natural way to produce $X$ and $X'$ is to simply split the set of samples into two groups. This yields a means to directly transfer computational lower bounds from detection to recovery for $\pr{rsme}$, $\pr{neg-spca}$, $\pr{mslr}$ and $\pr{rslr}$. As we already discussed one way our reductions imply computational lower bounds for the recovery variants of these problems in the previous section, we only sketch the main ideas here. 

We first show an efficient algorithm for recovery in $\pr{mslr}$ yields an efficient algorithm for detection. Consider the detection problem $\pr{mslr}(2n, k, d, \tau)$, and assume there is a blackbox $\mathcal{E}$ solving the recovery problem $\pr{mslr}(n, k, d, \tau')$ with probability $1 - o_n(1)$ for some $\tau' = o(\tau)$. If the samples from $\pr{mslr}(2n, k, d, \tau)$ are $(X_1, y_1), (X_2, y_2), \dots, (X_{2n}, y_{2n})$, apply $\mathcal{E}$ to $(X_1, y_1), \dots, (X_{n}, y_{n})$ to produce an estimate $\hat{u}$. Under $H_1$, there is some true $u = \tau \cdot k^{-1/2} \cdot \mathbf{1}_S$ for some $k$-set $S$ and it holds that $\| \hat{u} - u \|_2 = o(\tau)$. As in the previous section, taking the largest $k$ coordinates of $\hat{u}$ yields a set $\hat{S}$ containing at least $(1 - o(1))k$ elements of $S$. The idea is now that we almost know the true set $S$, detection using the second group of $n$ samples essentially reduces to $\pr{mslr}$ without sparsity and is easy down to the information-theoretic limit. More precisely, consider using the second half of the samples to form the statistic
$$Z = \frac{1}{\tau^2 (1 + \tau^2)} \sum_{i = n + 1}^{2n} \left( y_i^2 - 1 - \tau^2 \right) \cdot \left\langle (X_i)_{\hat{S}}, \hat{u}_{\hat{S}} \right\rangle^2$$
where $v_{\hat{S}}$ denotes the vector equal to $v$ on the indices in $\hat{S}$ and zero elsewhere. Note that conditioned on $S$, the second group of $n$ samples is independent of $\hat{S}$. Under $H_0$, it can be verified that $\bE[Z] = 0$ and $\text{Var}[Z] = O(n)$. Under $H_1$, it can be verified that $\| \hat{u} \|_2$ and $\| \hat{u}_{\hat{S}} \|_2$ are both $(1 + o(1))\tau$ and furthermore that $\langle u, \hat{u}_{\hat{S}} \rangle \ge (1 - o(1)) \tau^2$. Now note that since $y_i = R_i \cdot \langle X_i, u \rangle + g_i$ where $g_i \sim \mN(0, 1)$ and $R_i \sim \text{Rad}$, we have that
\begin{align*}
\left( y_i^2 - 1 - \tau^2 \right) \cdot \left\langle (X_i)_{\hat{S}}, \hat{u}_{\hat{S}} \right\rangle^2 &= \langle X_i, u \rangle^2 \cdot \left\langle X_i, \hat{u}_{\hat{S}} \right\rangle^2 - \tau^2 \cdot \left\langle X_i, \hat{u}_{\hat{S}} \right\rangle^2 + 2R_i g_i \cdot \langle X_i, u \rangle \cdot \left\langle X_i, \hat{u}_{\hat{S}} \right\rangle^2 \\
&\quad \quad + (g_i^2 - 1) \cdot \left\langle X_i, \hat{u}_{\hat{S}} \right\rangle^2 
\end{align*}
The last two terms are mean zero and the second term has expectation $-(1 + o(1))\tau^4$ since $\| \hat{u}_{\hat{S}} \|_2 = (1 + o(1))\tau$. Directly expanding the first term in terms of the components of $X_i$ yields that its expectation is given by $2 \langle u, \hat{u}_{\hat{S}} \rangle^2 + \| u \|_2^2 \cdot \| \hat{u}_{\hat{S}} \|_2^2 \ge 3(1 - o(1))\tau^4$. Combining these computations yields that $\bE[Z] \ge 2n(1 - o(1)) \tau^2$, and it can again be verified that $\text{Var}[Z] = O(n)$. Chebyshev's inequality now yields that thresholding $Z$ at $n\tau^2$ distinguishes $H_0$ and $H_1$ as long as $\tau^2 \sqrt{n} \gg 1$. Since the information-theoretic limit of the detection formulation of $\pr{mslr}$ is when $n = \Theta(k \log d/\tau^4)$ \cite{fan2018curse}, whenever this problem is possible it holds that $\tau^2 \sqrt{n} \gg 1$. Therefore, whenever detection is possible, the reduction outlined above shows how to produce a test solving detection in $\pr{mslr}$ using an estimator with $\ell_2$ error $\tau' = o(\tau)$.

Similar reductions transfer hardness of recovery to detection for $\pr{neg-spca}$, $\pr{rsme}$ and $\pr{rslr}$. For $\pr{neg-spca}$ and $\pr{rsme}$, the same argument as above can be shown to work with the test statistic given by $Z = \sum_{i = 1}^{2n} \langle X_i, \hat{u}_{\hat{S}} \rangle^2$, and the same $Z$ used above for $\pr{mslr}$ suffices in the case of $\pr{rslr}$. We remark that to show these statistics $Z$ solve the detection variants of $\pr{rsme}$ and $\pr{rslr}$, it is important to use detection formulations incorporating the exact form of our adversarial constructions, which are $\pr{isgm}$ in the case of $\pr{rsme}$ and the adversary described in Section \ref{sec:2-supervised} in the case of $\pr{rslr}$. An arbitrary adversary could corrupt instances of $\pr{rsme}$ and $\pr{rslr}$ to cause these statistics $Z$ to not distinguish between $H_0$ and $H_1$. Because our detection lower bounds apply to these fixed adversaries rather than requiring an arbitrary adversary, this argument yields the desired hardness of estimation for $\pr{rsme}$ and $\pr{rslr}$.

\paragraph{Post-Reduction Cloning.} In problems without samples, producing the pair $(X, X')$ requires an additional reduction step. We now outline how to produce such a pair and verification step for $\pr{isbm}$. The high-level idea is to stop our reduction to $\pr{isbm}$ before the final thresholding step, apply Gaussian cloning as in Section 10 of \cite{brennan2018reducibility}, then to continue the reduction with both copies, eventually using one to verify the output of a recovery blackbox applied to the other. A similar argument can be used to show computational lower bounds for recovery in $\pr{semi-cr}$.

Consider the reduction $k\pr{-pds-to-isbm}$ without the final thresholding step, outputting the matrix $M_{\text{R}} \in \mathbb{R}^{kr\ell \times kr\ell}$ at the end of Step 3. Now consider adding the following three steps to this reduction, given access to a recovery blackbox $\mathcal{E}$. More precisely, given an instance of $\pr{isbm}(n, k, P_{11}, P_{12}, P_{22})$ with 
$$P_{11} = P_0 + \gamma, \quad P_{12} = P_0 - \frac{\gamma}{k - 1} \quad \text{and} \quad P_{22} = P_0 + \frac{\gamma}{(k - 1)^2}$$
as in Section \ref{sec:3-community}, suppose $\mathcal{E}$ is guaranteed to output an $(n/k)$-subset of vertices $\hat{C}_1 \subseteq [n]$ with $|C_1 \cap \hat{C}_1| \ge (1 + \epsilon)n/k^2$ with probability $1 - o_n(1)$ for some $\epsilon = \Omega(1)$. Here, $C_1$ is the true hidden smaller community of the input $\pr{isbm}$ instance. Observe that when $\epsilon = \Theta(1)$, the blackbox $\mathcal{E}$ has the weak guarantee of recovering marginally more than a trivial $1/k$ fraction of $C_1$. This exactly matches the notion of weak recovery discussed in Section \ref{subsec:1-problems-sbm}.
\begin{enumerate}
\item Sample $W \sim \mN(0, 1)^{\otimes n \times n}$ and form
$$M_{\text{R}}^1 = \frac{1}{\sqrt{2}} \left( M_{\text{R}} + W \right) \quad \text{and} \quad M_{\text{R}}^2 = \frac{1}{\sqrt{2}} \left( M_{\text{R}} - W \right)$$
\item Using each of $M_{\text{R}}^1$ and $M_{\text{R}}^2$, complete the reduction $k\pr{-pds-to-isbm}$ omitting the random permutation in Step 4, and complete the additional steps from Corollary \ref{thm:isbm-mod} replacing $\mu$ with $\mu/\sqrt{2}$. Let the two output graphs be $G^1$ and $G^2$.
\item Let $\hat{C}_1$ be the output of $\mathcal{E}$ applied to $G^1$. Output $H_0$ if the subgraph of $G^2$ restricted to $\hat{C}_1$ has at least $M$ edges, and output $H_1$ otherwise.
\end{enumerate}
We now outline how this solves the detection variant of $\pr{isbm}$. Let $C_1$ be the true hidden smaller community of the instance that $k\pr{-pds-to-isbm}$ would produce if completed using $M_{\text{R}}$. We claim that $G^1$ and $G^2$ are $o(1)$ total variation from independent copies of $\pr{isbm}(n, C_1, P_{11}, P_{12}, P_{22})$ where $P_{11}, P_{12}$ and $P_{22}$ are as above and $\gamma$ is as in Corollary \ref{thm:isbm-mod}, but defined using $\mu/\sqrt{2}$ instead of $\mu$. To see this, note that $M_{\text{R}}$ is $o(1)$ total variation from the distribution
$$M'_{\text{R}} = \frac{\mu(r - 1)}{r} \cdot v(C_1) v(C_1)^\top + Y \quad \text{where} \quad v(C_1)_i = \frac{1}{\sqrt{r^t(r - 1)}} \cdot \left\{ \begin{matrix} 1 & \textnormal{if } i \not\in C_1 \\ 1 - r & \textnormal{if } i \in C_1 \end{matrix} \right.$$
by Lemma \ref{lem:isbm-rotations}, where $Y \sim \mN(0, 1)^{\otimes n \times n}$ and $t$ is the internal parameter used in $k\pr{-pds-to-isbm}$. Now it follows that $M_{\text{R}}^1$ and $M_{\text{R}}^2$ are respectively $o(1)$ total variation from
\begin{align*}
\left( M_{\text{R}}^1 \right)' &= \frac{\mu(r - 1)}{r\sqrt{2}} \cdot v(C_1) v(C_1)^\top + \frac{1}{\sqrt{2}} \left( Y + W \right) \quad \text{and} \\
\left( M_{\text{R}}^2 \right)' &= \frac{\mu(r - 1)}{r\sqrt{2}} \cdot v(C_1) v(C_1)^\top + \frac{1}{\sqrt{2}} \left( Y - W \right)
\end{align*}
The entries of $\frac{1}{\sqrt{2}} \left( Y + W \right)$ and $\frac{1}{\sqrt{2}} \left( Y - W \right)$ are all jointly Gaussian and have variance $1$. Furthermore, they can all be verified to be uncorrelated, implying that these two matrices are independent copies of $\mN(0, 1)^{\otimes n \times n}$ and thus $\left( M_{\text{R}}^1 \right)'$ and $\left( M_{\text{R}}^2 \right)'$ are independent conditioned on $C_1$. Note that $\mu$ has essentially been scaled down by a factor of $\sqrt{2}$ in both of these instances as well. Thus Step 2 above ensures that $G^1$ and $G^2$ are $o(1)$ total variation from independent copies of $\pr{isbm}(n, C_1, P_{11}, P_{12}, P_{22})$. 

Now consider Step 3 above applied to two exact independent copies of $\pr{isbm}(n, C_1, P_{11}, P_{12}, P_{22})$. The guarantee for $\mathcal{E}$ ensures that $|C_1 \cap \hat{C}_1| \ge (1 + \epsilon)n/k^2$ with probability $1 - o_n(1)$. The variance of the number of edges in the subgraph of $G^2$ restricted to $\hat{C}_1$ is $O(n^2/k^2)$ under both $H_0$ and $H_1$, and the expected number of edges in this subgraph is $P_0 \binom{n/k}{2}$ under $H_0$. Under $H_1$, the expected number of edges is
\begin{align*}
\bE\left[ |E(G[\hat{C}_1])| \right] &= \binom{|C_1 \cap \hat{C}_1|}{2} P_{11} + |C_1 \cap \hat{C}_1| \cdot \left( \frac{n}{k} - |C_1 \cap \hat{C}_1| \right) P_{12} + \binom{\frac{n}{k} - |C_1 \cap \hat{C}_1|}{2} P_{22} \\
&= P_0 \binom{n/k}{2} + \frac{\gamma}{2(k - 1)^2} \cdot \left( k|C_1 \cap \hat{C}_1| - \frac{n}{k} \right)^2 - \frac{\gamma}{2(k - 1)} \cdot \left( (k - 2) \cdot |C_1 \cap \hat{C}_1| + \frac{n}{k} \right) \\
&= P_0 \binom{n/k}{2} + \Omega\left( \frac{\gamma \epsilon^2 n^2}{k^4} \right)
\end{align*}
where the last bound holds since $\epsilon = \Omega(1)$ and $k^2 \ll n$.

By Chebyshev's inequality, Step 3 solves the hypothesis testing problem exactly when this difference $\Omega( \gamma \epsilon^2 n^2/k^4)$ grows faster than the $O(n/k)$ standard deviations in the number of edges in the subgraph under $H_0$ and $H_1$. This implies that Step 3 succeeds if it holds that $\gamma \epsilon^2 \gg k^3/n$. The Kesten-Stigum threshold corresponds to $\gamma^2 = \tilde{\Theta}(k^2/n)$ and therefore as long as $\epsilon^4 n = \tilde{\omega}(k^4)$, this argument solves the detection problem just below the Kesten-Stigum threshold. When $\epsilon = \Theta(1)$, this argument shows a computational lower bound up to the Kesten-Stigum threshold for weak recovery in $\pr{isbm}$. Since $k^2 = o(n)$ is always true in our formulation of $\pr{isbm}$, setting $\epsilon = \Theta(\sqrt{k})$ yields that for all $k$ it is hard to recover a $\Theta(1/\sqrt{k})$ fraction of the hidden community $C_1$. This guarantee is much stronger than the analysis in the previous section, which only showed hardness for a blackbox recovering a $1 - o(1)$ fraction of the hidden community. We remark that the same trick used in Step 1 above to produce two independent copies of a matrix with Gaussian noise was used to show estimation lower bounds for tensor PCA in Section \ref{sec:3-tensor}.

\paragraph{Pre-Reduction Cloning.} We remark that there is a general alternative method to obtain the pairs $(X, X')$ in our reductions that we sketch here. Consider applying Bernoulli cloning either directly to the input $\pr{pc}$ or $\pr{pds}$ instance or to the output of $\pr{To-}k\pr{-Partite-Submatrix}$, in the case of reductions from $k\pr{-pc}$, and then running the remaining parts of our reductions on each of the two resulting copies. Ignoring post-processing steps where we permute vertex labels or subsample the output instance, this general approach can be used to yield two copies of the outputs of our reductions that have the same hidden structure and are conditionally independent given this hidden structure. The same verification steps outlined above can then be applied to obtain our computational lower bounds for recovery.

\pagebreak

\section*{Acknowledgements}

We are greatly indebted to Jerry Li for introducing the conjectured statistical-computational gap for robust sparse mean estimation and for discussions that helped lead to this work. We thank Ilias Diakonikolas for pointing out the statistical query model construction in \cite{diakonikolas2017statistical}. We thank the anonymous reviewers for helpful feedback that greatly improved the exposition. We also thank Frederic Koehler, Sam Hopkins, Philippe Rigollet, Enric Boix-Adser\`{a}, Dheeraj Nagaraj, Rares-Darius Buhai, Alex Wein, Ilias Zadik, Dylan Foster and Austin Stromme for inspiring discussions on related topics. This work was supported in part by MIT-IBM Watson AI Lab and NSF CAREER award CCF-1940205.

\bibliography{GB_BIB.bib}
\bibliographystyle{alpha}

\pagebreak

\part{Appendix}

\appendix

\section{Deferred Proofs from Part \ref{part:reductions}}
\label{sec:appendix-2}

\subsection{Proofs of Total Variation Properties}
\label{subsec:appendix-2-tv}

In this section, we present several deferred proofs from Sections \ref{subsec:2-tvreductions} and \ref{subsec:2-estimation}. We first prove Lemma \ref{lem:tvacc}. 

\begin{proof}[Proof of Lemma \ref{lem:tvacc}]
This follows from a simple induction on $m$. Note that the case when $m = 1$ follows by definition. Now observe that by the data-processing and triangle inequalities of total variation, we have that if $\mathcal{B} = \mathcal{A}_{m-1} \circ \mathcal{A}_{m-2} \circ \cdots \circ \mathcal{A}_1$ then
\begin{align*}
\TV\left( \mathcal{A}(\mP_0), \mP_m \right) &\le \TV\left( \mathcal{A}_m \circ \mathcal{B}(\mP_0), \mathcal{A}_m(\mP_{m - 1}) \right) + \TV\left(\mathcal{A}_m(\mP_{m - 1}), \mP_m \right) \\
&\le \TV\left( \mathcal{B}(\mP_0), \mP_{m - 1} \right) + \epsilon_m \\
&\le \sum_{i = 1}^m \epsilon_i
\end{align*}
where the last inequality follows from the induction hypothesis applied with $m - 1$ to $\mathcal{B}$. This completes the induction and proves the lemma.
\end{proof}

We now prove Lemma \ref{lem:bernproduct} upper bounding the total variation distance between vectors of unplanted and planted samples from binomial distributions.

\begin{proof}[Proof of Lemma \ref{lem:bernproduct}]
Given some $P \in [0, 1]$, we begin by computing $\chi^2\left( \textnormal{Bern}(P) + \textnormal{Bin}(m - 1, Q), \textnormal{Bin}(m, Q) \right)$. For notational convenience, let $\binom{a}{b} = 0$ if $b > a$ or $b < 0$. It follows that
\allowdisplaybreaks
\begin{align*} 
&1 + \chi^2\left( \textnormal{Bern}(P) + \textnormal{Bin}(m - 1, Q), \textnormal{Bin}(m, Q) \right) \\
&\quad \quad = \sum_{t = 0}^{m} \frac{\left((1 - P) \cdot \binom{m - 1}{t} Q^t (1 - Q)^{m - 1 - t} + P \cdot \binom{m - 1}{t - 1} Q^{t - 1} (1 - Q)^{m - t} \right)^2}{\binom{m}{t} Q^t (1 - Q)^{m - t}} \\
&\quad \quad = \sum_{t = 0}^{m} \binom{m}{t} Q^t (1 - Q)^{m - t} \left( \frac{m - t}{m} \cdot \frac{1 - P}{1 - Q} + \frac{t}{m} \cdot \frac{P}{Q} \right)^2 \\
&\quad \quad = \bE\left[ \left( \frac{m - X}{m} \cdot \frac{1 - P}{1 - Q} + \frac{X}{m} \cdot \frac{P}{Q} \right)^2 \right] \\
&\quad \quad = \bE\left[ \left( 1 + \frac{X - mQ}{m} \cdot \frac{P - Q}{Q(1 - Q)} \right)^2 \right] \\
&\quad \quad = 1 + \frac{2(P - Q)}{mQ(1 - Q)} \cdot \bE[X - mQ] + \frac{(P - Q)^2}{m^2Q^2(1 - Q)^2} \cdot \bE\left[(X - Qm)^2\right] \\
&\quad \quad = 1 + \frac{(P - Q)^2}{mQ(1 - Q)}
\end{align*}
where $X \sim \textnormal{Bin}(m, Q)$ and the second last equality follows from $\bE[X] = Qm$ and $\bE[(X - Qm)^2] = \text{Var}[X] = Q(1 - Q)m$. The concavity of $\log$ implies that $\KL(\mP, \mQ) \le \log\left( 1 + \chi^2(\mP, \mQ) \right) \le \chi^2(\mP, \mQ)$ for any two distributions with $\mP$ absolutely continuous with respect to $\mQ$. Pinsker's inequality and tensorization of $\KL$ now imply that
\begin{align*}
&2 \cdot \TV\left( \otimes_{i = 1}^k \left( \textnormal{Bern}(P_i) + \textnormal{Bin}(m - 1, Q) \right), \textnormal{Bin}(m, Q)^{\otimes k} \right)^2 \\
&\quad \quad \le \KL\left( \otimes_{i = 1}^k \left( \textnormal{Bern}(P_i) + \textnormal{Bin}(m - 1, Q) \right), \textnormal{Bin}(m, Q)^{\otimes k} \right) \\
&\quad \quad = \sum_{i = 1}^k \KL\left( \textnormal{Bern}(P_i) + \textnormal{Bin}(m - 1, Q), \textnormal{Bin}(m, Q) \right) \\
&\quad \quad \le \sum_{i = 1}^k \chi^2\left( \textnormal{Bern}(P_i) + \textnormal{Bin}(m - 1, Q), \textnormal{Bin}(m, Q) \right) = \sum_{i = 1}^k \frac{(P_i - Q)^2}{mQ(1 - Q)}
\end{align*}
which completes the proof of the lemma.
\end{proof}

We now prove Lemma \ref{lem:bintv} on the total variation distance between two binomial distributions.

\begin{proof}[Proof of \ref{lem:bintv}]
By applying the data processing inequality for $\TV$ to the function taking the sum of the coordinates of a vector, we have that
\begin{align*}
2 \cdot \TV\left( \textnormal{Bin}(n, P), \textnormal{Bin}(n, Q) \right)^2 &\le 2 \cdot \TV\left( \textnormal{Bern}(P)^{\otimes n}, \textnormal{Bern}(Q)^{\otimes n} \right)^2 \\
&\le \KL\left( \textnormal{Bern}(P)^{\otimes n}, \textnormal{Bern}(Q)^{\otimes n} \right) \\
&= n \cdot \KL\left( \textnormal{Bern}(P), \textnormal{Bern}(Q) \right) \\
&\le n \cdot \chi^2\left( \textnormal{Bern}(P), \textnormal{Bern}(Q) \right) \\
&= n \cdot \frac{(P - Q)^2}{Q(1 - Q)}
\end{align*}
The second inequality is an application of Pinsker's, the first equality is tensorization of $\KL$ and the third inequality is the fact that $\chi^2$ upper bounds $\KL$ by the concavity of $\log$. This completes the proof of the lemma.
\end{proof}

We conclude this section with a proof of Lemma \ref{lem:tvreductions-recovery}, establishing the key property of reductions in total variation among recovery problems.

\begin{proof}[Proof of Lemma \ref{lem:tvreductions-recovery}]
As in the proof of Lemma \ref{lem:3a} from \cite{brennan2018reducibility}, this lemma follows from a simple application of the definition of $\TV$. Suppose that there is such an $\mathcal{E}'$. Now consider the algorithm $\mathcal{E}$ that proceeds as follows on an input $X$ of $\mP(n, \tau)$:
\begin{enumerate}
\item compute $\mathcal{A}(X)$ and the output $\hat{\theta'}$ of $\mathcal{E}'$ on input $\mathcal{A}(X)$; and
\item output the result $\hat{\theta} \gets \mathcal{B}(X, \hat{\theta'})$.
\end{enumerate}
Suppose that $X \sim \mP_D(\theta)$ for some $\theta \in \Theta_{\mP}$. Consider a coupling of $X$, the randomness of $\mathcal{A}$ and $Y \sim \bE_{\theta' \sim \mD(\theta)} \, \mP'_D(\theta')$ such that $\P[\mathcal{A}(X) \neq Y] = o_n(1)$. Since $Y$ is distributed as a mixture of $\mP'_D(\theta')$, conditioned on $\theta'$, it holds that $\mathcal{E}'$ succeeds with probability
$$\bP\left[ \ell_{\mP'}(\mathcal{E}'(Y), \theta') \le \tau' \, \Big| \, \theta' \right] \ge p$$
Marginalizing this over $\theta'$ yields that $\bP\left[ \ell_{\mP'}(\mathcal{E}'(Y), \theta') \le \tau' \text{ for some } \theta' \in \textnormal{supp} \, \mD(\theta) \right] \ge p$. Now since $\mathcal{A}(X) = Y$ is a probability $1 - o_n(1)$ event, we have that the intersection of this and the event above occurs with probability $p - o_n(1)$. Therefore
$$\bP\left[ \ell_{\mP'}(\theta', \hat{\theta'}) \le \tau' \text{ for some } \theta' \in \textnormal{supp} \, \mD(\theta) \right] \ge \bP\left[ \mathcal{A}(X) = Y \textnormal{ and } \mathcal{E}' \textnormal{ succeeds} \right] \ge p - o_n(1)$$
Now note that the definition of $\mathcal{B}$ implies that
\begin{align*}
\bP\left[ \ell_{\mP}(\theta, \hat{\theta}) \le \tau \right] &\ge \bP\left[ \ell_{\mP'}(\theta', \hat{\theta'}) \le \tau' \text{ for some } \theta' \in \textnormal{supp} \, \mD(\theta) \text{ and } \mathcal{B} \text{ succeeds} \right] \\
&\ge \bP\left[ \ell_{\mP'}(\theta', \hat{\theta'}) \le \tau' \text{ for some } \theta' \in \textnormal{supp} \, \mD(\theta) \right] -  \bP\left[ \mathcal{B} \text{ fails} \right] \\
&\ge p - o_n(1)
\end{align*}
which completes the proof of the lemma.
\end{proof}

\subsection{Proofs for To-$k$-Partite-Submatrix}
\label{subsec:appendix-2-k-partite}

In this section, we prove Lemma \ref{lem:submatrix}, which establishes the approximate Markov transition properties of the reduction $\pr{To-}k\textsc{-Partite-Submatrix}$. We first establish analogue of Lemma 6.4 from \cite{brennan2019universality} in the $k$-partite case to analyze the planted diagonal entries in Step 2 of $\pr{To-}k\textsc{-Partite-Submatrix}$.

\begin{lemma}[Planting $k$-Partite Diagonals] \label{lem:plantingdiagonals}
Suppose that $0 < Q < P \le 1$ and $n \ge \left( \frac{P}{Q} + 1 \right) N$ is such that both $N$ and $n$ are divisible by $k$ and $k \le QN/4$. Suppose that for each $t \in [k]$,
$$z_1^t \sim \textnormal{Bern}(P), \quad z_2^t \sim \textnormal{Bin}(N/k - 1, P) \quad \textnormal{and} \quad z_3^t \sim \textnormal{Bin}(n/k, Q)$$
are independent. If $z_4^t = \max \{ z_3^t - z_1^t - z_2^t, 0 \}$, then it follows that
\begin{align*}
\TV\left( \otimes_{t = 1}^k \mL(z_1^t, z_2^t + z_4^t), \left( \textnormal{Bern}(P) \otimes \textnormal{Bin}(n/k - 1, Q) \right)^{\otimes k} \right) &\le 4k \cdot \exp \left( - \frac{Q^2N^2}{48Pkn} \right) + \sqrt{\frac{C_Q k^2}{2n}} \\
\TV\left( \otimes_{t = 1}^k \mL(z_1^t + z_2^t + z_4^t), \textnormal{Bin}(n/k, Q)^{\otimes k} \right) &\le 4k \cdot \exp \left( - \frac{Q^2N^2}{48Pkn} \right)
\end{align*}
where $C_Q = \max \left\{ \frac{Q}{1 - Q}, \frac{1 - Q}{Q} \right\}$.
\end{lemma}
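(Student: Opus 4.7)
The central idea is to introduce the ``good event'' $A_t = \{z_1^t + z_2^t \le z_3^t\}$, on which $z_4^t = z_3^t - z_1^t - z_2^t$, so that $z_2^t + z_4^t = z_3^t - z_1^t$ and $z_1^t + z_2^t + z_4^t = z_3^t$. First I would show $\bP[A_t^c]$ is exponentially small. Since $z_1^t + z_2^t \sim \textnormal{Bin}(N/k, P)$ has mean $PN/k$ and $z_3^t \sim \textnormal{Bin}(n/k,Q)$ has mean $Qn/k$, the hypothesis $n \ge (P/Q+1)N$ forces a gap of at least $QN/k$ between these means. Setting the threshold at the midpoint $\Delta = QN/(2k)$ and applying the standard multiplicative Chernoff bounds for the upper tail of $\textnormal{Bin}(N/k,P)$ and the lower tail of $\textnormal{Bin}(n/k,Q)$ (both with $\delta\le 1/2$ thanks to $Q<P$ and $N\le n$) yields
\[
\bP[A_t^c] \le \exp\!\bigl(-Q^2N/(12Pk)\bigr)+\exp\!\bigl(-QN^2/(8kn)\bigr) \le 2\exp\!\bigl(-Q^2 N^2/(48Pkn)\bigr),
\]
where the final comparison uses $N\le n$ and $Q<P\le 1$.

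For the second inequality, I would couple: on $A_t$ we have $z_1^t+z_2^t+z_4^t = z_3^t$ exactly, so the identity coupling for each $t$, together with a union bound over $t\in[k]$ and tensorization of $\TV$ (Fact~\ref{tvfacts}), gives
\[
\TV\bigl(\otimes_{t=1}^k \mL(z_1^t+z_2^t+z_4^t),\, \textnormal{Bin}(n/k,Q)^{\otimes k}\bigr) \le k\cdot \bP[A_1^c],
\]
which together with the Chernoff estimate above yields the claimed $4k\exp(\cdot)$ bound with room to spare.

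For the first inequality the same coupling onto $\bigcap_t A_t$ replaces $(z_1^t,z_2^t+z_4^t)$ with $(z_1^t,z_3^t-z_1^t)$ at cost $k\bP[A_1^c]$. It then remains to bound
\[
\TV\bigl(\otimes_t \mL(z_1^t, z_3^t - z_1^t),\, (\textnormal{Bern}(P)\otimes\textnormal{Bin}(n/k-1,Q))^{\otimes k}\bigr).
\]
The key step is to apply the bijection $(a,b)\mapsto(a,a+b)$ coordinatewise on each pair; this preserves $\TV$ and transforms the left-hand side into $\otimes_t\mL(z_1^t,z_3^t)$ (the two coordinates independent) and the right-hand side into $\otimes_t\mL(B^t,B^t+Y^t)$ where $B^t\sim\textnormal{Bern}(P)$, $Y^t\sim\textnormal{Bin}(n/k-1,Q)$ are independent. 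Both distributions now have the same first marginal, so couple $z_1^t = B^t$ and condition on the vector $(B^1,\dots,B^k)\in\{0,1\}^k$. Conditionally the problem reduces to comparing $\otimes_t\textnormal{Bin}(n/k,Q)$ with $\otimes_t(\textnormal{Bern}(B^t)+\textnormal{Bin}(n/k-1,Q))$, which is exactly the setup of Lemma~\ref{lem:bernproduct} (applied with the deterministic ``Bernoulli parameters'' $P_i=B^i\in\{0,1\}$), giving a conditional bound of $\sqrt{\sum_t (B^t-Q)^2/(2(n/k)Q(1-Q))}$. Jensen's inequality then pulls the expectation inside the square root, and a direct computation
\[
\bE_{B\sim\textnormal{Bern}(P)}\!\bigl[(B-Q)^2\bigr] = P(1-Q)^2 + (1-P)Q^2 \le Q(1-Q)\cdot C_Q,
\]
delivers the $\sqrt{C_Q k^2/(2n)}$ term. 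Combining with the $k\bP[A_1^c]$ coupling cost finishes the first bound.

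The main obstacle I anticipate is the last paragraph: reducing the joint comparison with dependent coordinates to a clean application of Lemma~\ref{lem:bernproduct}. The bijection that moves the ``planted Bernoulli'' from one coordinate into a summand of the other, followed by conditioning on the shared Bernoulli and invoking Jensen, is the essential trick; the Chernoff step for $A_t^c$ is routine once the gap $QN/k$ has been identified from the hypothesis on $n$.
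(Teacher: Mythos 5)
Your proposal is correct and follows essentially the same path as the paper's proof: define the good event $\{z_1^t+z_2^t\le z_3^t\}$, bound its complement by Chernoff using the gap $QN/k$ guaranteed by $n\ge(P/Q+1)N$, and invoke Lemma~\ref{lem:bernproduct} with $\{0,1\}$-valued planted parameters $P_t=v_t$ to handle the Bernoulli-diagonal dependence. The paper uses the conditioning-on-an-event property of $\TV$ (costing $2\bP[\mathcal{E}^C]$ via the triangle inequality, hence the constant $4k$) and bounds $(v_t-Q)^2\le Q(1-Q)C_Q$ deterministically for each $v$ rather than via Jensen, but these are cosmetic packaging differences from your coupling bound and expectation argument.
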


\begin{proof}
Throughout this argument, let $v$ denote a vector in $\{0, 1\}^k$. Now define the event
$$\mathcal{E} = \bigcap_{t = 1}^k \left\{ z_3^t = z_1^t + z_2^t + z_4^t \right\}$$
Now observe that if $z_3^t \ge Qn/k - QN/2k + 1$ and $z_2^t \le P(N/k - 1) + QN/2k$ then it follows that $z_3^t \ge 1 + z_2^t \ge v_t + z_2^t$ for any $v_t \in \{0, 1\}$ since $Qn \ge (P+Q)N$. Now union bounding the probability that $\mathcal{E}$ does not hold conditioned on $z_1$ yields that
\allowdisplaybreaks
\begin{align*}
\bP\left[ \mathcal{E}^C \Big| z_1 = v \right] &\le \sum_{t = 1}^k \bP\left[ z_3^t < v_t + z_2^t \right] \\
&\le \sum_{t = 1}^k \bP\left[ z_3^t < \frac{Qn}{k} - \frac{QN}{2k} + 1 \right] + \sum_{t = 1}^k \bP\left[ z_2^t > P\left(\frac{N}{k} - 1\right) + \frac{QN}{2k} \right] \\
&\le k \cdot \exp\left( - \frac{\left(QN/2k - 1 \right)^2}{3Qn/k} \right) + k \cdot \exp\left( - \frac{\left(QN/2k \right)^2}{2P(N/k - 1)} \right) \\
&\le 2k \cdot \exp \left( - \frac{Q^2N^2}{48Pkn} \right)
\end{align*}
where the third inequality follows from standard Chernoff bounds on the tails of the binomial distribution. Marginalizing this bound over $v \sim \mL(z_1) = \text{Bern}(P)^{\otimes k}$, we have that
$$\bP\left[ \mathcal{E}^C \right] = \bE_{v \sim \mL(z_1)} \bP\left[ \mathcal{E}^C \Big| z_1 = v \right] \le 2k \cdot \exp \left( - \frac{Q^2N^2}{48Pkn} \right)$$
Now consider the total variation error induced by conditioning each of the product measures $\otimes_{t = 1}^k \mL(z_1^t + z_2^t + z_4^t)$ and $\otimes_{t = 1}^k \mL(z_3^t)$ on the event $\mathcal{E}$. Note that under $\mathcal{E}$, by definition, we have that $z_3^t = z_1^t + z_2^t + z_4^t$ for each $t \in [k]$. By the conditioning property of $\TV$ in Fact \ref{tvfacts}, we have
\begin{align*}
\TV\left( \otimes_{t = 1}^k \mL(z_1^t + z_2^t + z_4^t), \mL\left( \left(z_3^t : t \in [k]\right) \Big| \mathcal{E} \right) \right) &\le \bP\left[ \mathcal{E}^C \right] \\
\TV\left( \otimes_{t = 1}^k \mL(z_3^t), \mL\left( \left(z_3^t : t \in [k]\right) \Big| \mathcal{E} \right) \right) &\le \bP\left[ \mathcal{E}^C \right]
\end{align*}
The fact that $\otimes_{t = 1}^k \mL(z_3^t) = \text{Bin}(n/k, Q)^{\otimes k}$ and the triangle inequality now imply that
$$\TV\left( \otimes_{t = 1}^k \mL(z_1^t + z_2^t + z_4^t), \textnormal{Bin}(n/k, Q)^{\otimes k} \right) \le 2 \cdot \bP\left[ \mathcal{E}^C \right] \le 4k \cdot \exp \left( - \frac{Q^2N^2}{48Pkn} \right)$$
which proves the second inequality in the statement of the lemma. It suffices to establish the first inequality. A similar conditioning step as above shows that for all $v \in \{0, 1\}^k$, we have that
\begin{align*}
\TV\left( \otimes_{t = 1}^k \mL\left(v_t + z_2^t + z_4^t \Big| z_1^t = v_t\right), \mL\left( \left(v_t + z_2^t + z_4^t : t \in [k]\right) \Big| z_1 = v \text{ and } \mathcal{E} \right) \right) &\le \bP\left[ \mathcal{E}^C \Big| z_1 = v \right] \\
\TV\left( \otimes_{t = 1}^k \mL\left(z_3^t \Big| z_1^t = v_t \right), \mL\left( \left(z_3^t : t \in [k]\right) \Big| z_1 = v \text{ and } \mathcal{E} \right) \right) &\le \bP\left[ \mathcal{E}^C \Big| z_1 = v \right]
\end{align*}
The triangle inequality and the fact that $z_3 \sim \text{Bin}(n/k, Q)^{\otimes k}$ is independent of $z_1$ implies that
$$\TV\left( \otimes_{t = 1}^k \mL\left(v_t + z_2^t + z_4^t \Big| z_1^t = v_t\right), \text{Bin}(n/k, Q)^{\otimes k} \right) \le 4k \cdot \exp \left( - \frac{Q^2N^2}{48Pkn} \right)$$
By Lemma \ref{lem:bernproduct} applied with $P_t = v_t \in \{0, 1\}$, we also have that
$$\TV\left( \otimes_{t = 1}^k \left( v_t + \text{Bin}(n/k - 1, Q) \right), \text{Bin}(n/k, Q)^{\otimes k} \right) \le \sqrt{\sum_{t = 1}^k \frac{k(v_t - Q)^2}{2nQ(1 - Q)}} \le \sqrt{\frac{C_Q k^2}{2n}}$$
The triangle now implies that for each $v \in \{0, 1\}^k$,
\allowdisplaybreaks
\begin{align*}
&\TV\left( \otimes_{t = 1}^k \mL\left(z_2^t + z_4^t \Big| z_1^t = v_t\right), \text{Bin}(n/k - 1, Q)^{\otimes k} \right) \\
&\quad \quad = \TV\left( \otimes_{t = 1}^k \mL\left(v_t + z_2^t + z_4^t \Big| z_1^t = v_t\right), \otimes_{t = 1}^k \left( v_t + \text{Bin}(n/k - 1, Q) \right) \right) \\
&\quad \quad \le 4k \cdot \exp \left( - \frac{Q^2N^2}{48Pkn} \right) + \sqrt{\frac{C_Q k^2}{2n}}
\end{align*}
We now marginalize over $v \sim \mL(z_1) = \text{Bern}(P)^{\otimes k}$. The conditioning on a random variable property of $\TV$ in Fact \ref{tvfacts} implies that
\begin{align*}
&\TV\left( \otimes_{t = 1}^k \mL(z_1^t, z_2^t + z_4^t), \left( \textnormal{Bern}(P) \otimes \textnormal{Bin}(n/k - 1, Q) \right)^{\otimes k} \right) \\
&\quad \quad \le \bE_{v \sim \text{Bern}(P)^{\otimes k}} \, \TV\left( \otimes_{t = 1}^k \mL\left(z_2^t + z_4^t \Big| z_1^t = v_t\right), \text{Bin}(n/k - 1, Q)^{\otimes k} \right)
\end{align*}
which, when combined with the inequalities above, completes the proof of the lemma.
\end{proof}

We now apply this lemma to prove Lemma \ref{lem:submatrix}. The proof of this lemma is a $k$-partite variant of the argument used to prove Theorem 6.1 in \cite{brennan2019universality}. However, it involves several technical subtleties that do not arise in the non $k$-partite case.

\begin{proof}[Proof of Lemma \ref{lem:submatrix}]
Fix some subset $R \subseteq [N]$ such that $|R \cap E_i| = 1$ for each $i \in [k]$. We will first show that $\mathcal{A}$ maps an input $G \sim \mG(N, R, p, q)$ approximately in total variation to a sample from the planted submatrix distribution $\mathcal{M}_{[n] \times [n]} \left(\mU_n(F), \textnormal{Bern}(p), \textnormal{Bern}(Q) \right)$. By AM-GM, we have that
$$\sqrt{pq} \le \frac{p + q}{2} = 1 - \frac{(1 - p) + (1 - q)}{2} \le 1 - \sqrt{(1 - p)(1 - q)}$$
If $p \neq 1$, it follows that $P = p > Q = 1 - \sqrt{(1 - p)(1 - q)}$. This implies that $\frac{1 - p}{1 - q} = \left( \frac{1 - P}{1 - Q} \right)^2$ and the inequality above rearranges to $\left( \frac{P}{Q} \right)^2 \le \frac{p}{q}$. If $p = 1$, then $Q = \sqrt{q}$ and $\left( \frac{P}{Q} \right)^2 = \frac{p}{q}$. Furthermore, the inequality $\frac{1 - p}{1 - q} \le \left( \frac{1 - P}{1 - Q} \right)^2$ holds trivially. Therefore we may apply Lemma \ref{lem:graphcloning}, which implies that $(G_1, G_2) \sim \mG(N, R, p, Q)^{\otimes 2}$.

Let the random set $U = \{ \pi_1^{-1}(R \cap E_1), \pi_2^{-1}(R \cap E_2), \dots, \pi_k^{-1}(R \cap E_k) \}$ denote the support of the $k$-subset of $[n]$ that $R$ is mapped to in the embedding step of $\pr{To-}k\textsc{-Partite-Submatrix}$. Now fix some $k$-subset $R' \subseteq [n]$ with $|R' \cap F_i| = 1$ for each $i \in [k]$ and consider the distribution of $M_{\text{PD}}$ conditioned on the event $U = R'$. Since $(G_1, G_2) \sim \mG(n, R, p, Q)^{\otimes 2}$, Step 2 of $\pr{To-}k\textsc{-Partite-Submatrix}$ ensures that the off-diagonal entries of $M_{\text{PD}}$, given this conditioning, are independent and distributed as follows:
\begin{itemize}
\item $M_{ij} \sim \text{Bern}(p)$ if $i \neq j$ and $i, j \in R'$; and
\item $M_{ij} \sim \text{Bern}(Q)$ if $i \neq j$ and $i \not \in R'$ or $j \not \in R'$.
\end{itemize}
which match the corresponding entries of $\mathcal{M}_{[n] \times [n]} \left(R' \times R', \textnormal{Bern}(p), \textnormal{Bern}(Q) \right)$. Furthermore, these entries are independent of the vector $\text{diag}(M_{\text{PD}}) = \left( (M_{\text{PD}})_{ii} : i \in [k] \right)$ of the diagonal entries of $M_{\text{PD}}$. It therefore follows that
\begin{align*}
&\TV\left( \mL \left( M_{\text{PD}} \Big| U = R' \right), \mathcal{M}_{[n] \times [n]} \left(R' \times R', \textnormal{Bern}(p), \textnormal{Bern}(Q) \right) \right) \\
&\quad \quad = \TV\left( \mL \left( \text{diag}(M_{\text{PD}}) \Big| U = R' \right), \mathcal{M}_{[n]} \left(R', \textnormal{Bern}(p), \textnormal{Bern}(Q) \right) \right)
\end{align*}
Let $(S_1', S_2', \dots, S_k')$ be any tuple of fixed subsets such that $|S_t'| = N/k$, $S_i' \subseteq F_t$ and $R' \cap F_t \in S_t'$ for each $t \in [k]$. Now consider the distribution of $\text{diag}(M_{\text{PD}})$ conditioned on both $U = R'$ and $(S_1, S_2, \dots, S_k) = (S_1', S_2', \dots, S_k')$. It holds by construction that the $k$ vectors $\text{diag}(M_{\text{PD}})_{F_t}$ are independent for $t \in [k]$ and each distributed as follows:
\begin{itemize}
\item $\text{diag}(M_{\text{PD}})_{S_t'}$ is an exchangeable distribution on $\{0, 1\}^{N/k}$ with support of size $s_1^t \sim \text{Bin}(N/k, p)$, by construction. This implies that $\text{diag}(M_{\text{PD}})_{S_t'} \sim \text{Bern}(p)^{\otimes N/k}$. This can trivially be restated as $\left(M_{R' \cap F_t, R' \cap F_t}, \text{diag}(M_{\text{PD}})_{S_t' \backslash R'}\right) \sim \text{Bern}(p) \otimes \text{Bern}(p)^{\otimes N/k - 1}$.
\item $\text{diag}(M_{\text{PD}})_{F_t \backslash S_t'}$ is an exchangeable distribution on $\{0, 1\}^{N/k}$ with support of size $z_4^t = \max\{s_2^t - s_1^t, 0\}$. Furthermore, $\text{diag}(M_{\text{PD}})_{F_t \backslash S_t'}$ is independent of $\text{diag}(M_{\text{PD}})_{S_t'}$.
\end{itemize}
For each $t \in [k]$, let $z_1^t = M_{R' \cap F_t, R' \cap F_t} \sim \text{Bern}(p)$ and $z_2^t \sim \text{Bin}(N/k - 1, p)$ be the size of the support of $\text{diag}(M_{\text{PD}})_{S_t' \backslash R'}$. As shown discussed in the first point above, we have that $z_1^t$ and $z_2^t$ are independent and $z_1^t + z_2^t = s_1^t$.

Now consider the distribution of $\text{diag}(M_{\text{PD}})$ relaxed to only be conditioned on $U = R'$, and no longer on $(S_1, S_2, \dots, S_k) = (S_1', S_2', \dots, S_k')$. Conditioned on $U = R'$, the $S_t$ are independent and each uniformly distributed among all $N/k$ size subsets of $F_t$ that contain the element $R' \cap F_t$. In particular, this implies that the distribution of $\text{diag}(M_{\text{PD}})_{F_t \backslash R'}$ is an exchangeable distribution on $\{0, 1\}^{n/k - 1}$ with support size $z_2^t + z_4^t$ for each $t$. Note that any $v \sim \mathcal{M}_{[n]} \left(R', \textnormal{Bern}(p), \textnormal{Bern}(Q) \right)$ also satisfies that $v_{F_t \backslash R'}$ is exchangeable. This implies that $\mathcal{M}_{[n]} \left(R', \textnormal{Bern}(p), \textnormal{Bern}(Q) \right)$ and $\text{diag}(M_{\text{PD}})$ are identically distributed when conditioned on their entries with indices in $R'$ and on their support sizes within the $k$ sets of indices $F_t \backslash R'$. The conditioning property of Fact \ref{tvfacts} therefore implies that
\begin{align*}
&\TV\left( \mL \left( \text{diag}(M_{\text{PD}}) \Big| U = R' \right), \mathcal{M}_{[n]} \left(R', \textnormal{Bern}(p), \textnormal{Bern}(Q) \right) \right) \\
&\quad \quad \le \TV\left( \otimes_{t = 1}^k \mL(z_1^t, z_2^t + z_4^t), \left( \textnormal{Bern}(p) \otimes \textnormal{Bin}(n/k - 1, Q) \right)^{\otimes k} \right) \\
&\quad \quad \le 4k \cdot \exp \left( - \frac{Q^2N^2}{48Pkn} \right) + \sqrt{\frac{C_Q k^2}{2n}}
\end{align*}
by the first inequality in Lemma \ref{lem:plantingdiagonals}. Now observe that $U \sim \mU_n(F)$ and thus marginalizing over $R' \sim \mL(U) = \mU_n(F)$ and applying the conditioning property of Fact \ref{tvfacts} yields that
\begin{align*}
&\TV\left( \mathcal{A}(G(N, R, p, q)), \mathcal{M}_{[n] \times [n]} \left(\mU_n(F), \textnormal{Bern}(p), \textnormal{Bern}(Q) \right) \right) \\
&\quad \quad \le \bE_{R' \sim \mU_n(F)} \, \TV\left( \mL \left( M_{\text{PD}} \Big| U = R' \right), \mathcal{M}_{[n] \times [n]} \left(R' \times R', \textnormal{Bern}(p), \textnormal{Bern}(Q) \right) \right)
\end{align*}
since $M_{\text{PD}} \sim \mathcal{A}(\mG(N, R, p, q))$. Applying an identical marginalization over $R \sim \mU_N(E)$ completes the proof of the first inequality in the lemma statement.

It suffices to consider the case where $G \sim \mG(N, q)$, which follows from an analogous but simpler argument. By Lemma \ref{lem:graphcloning}, we have that $(G_1, G_2) \sim \mG(N, Q)^{\otimes 2}$. It follows that the entries of $M_{\text{PD}}$ are distributed as $(M_{\text{PD}})_{ij} \sim_{\text{i.i.d.}} \text{Bern}(Q)$ for all $i \neq j$ independently of $\text{diag}(M_{\text{PD}})$. Now note that the $k$ vectors $\text{diag}(M_{\text{PD}})_{F_t}$ for $t \in [k]$ are each exchangeable and have support size $s_1^t + \max\{ s_2^t - s_1^t, 0 \} = z_1^t + z_2^t + z_4^t$ where $z_1^t \sim \text{Bern}(p)$, $z_2^t \sim \text{Bin}(N/k - 1, p)$ and $s_2^t \sim \text{Bin}(n/k, Q)$ are independent. By the same argument as above, we have that
\begin{align*}
\TV\left( \mL(M_{\text{PD}}), \text{Bern}(Q)^{\otimes n \times n} \right) &= \TV\left( \mL(\text{diag}(M_{\text{PD}})), \text{Bern}(Q)^{\otimes n} \right) \\
&= \TV\left( \otimes_{t = 1}^k \mL\left( z_1^t + z_2^t + z_4^t \right), \text{Bin}(n/k, Q) \right) \\
&\le 4k \cdot \exp \left( - \frac{Q^2N^2}{48Pkn} \right)
\end{align*}
by Lemma \ref{lem:plantingdiagonals}. Since $M_{\text{PD}} \sim \mathcal{A}(\mG(N, q))$, this completes the proof of the lemma.
\end{proof}

\subsection{Proofs for Symmetric 3-ary Rejection Kernels}
\label{subsec:appendix-3-ary}

In this section, we establish the approximate Markov transition properties for symmetric 3-ary rejection kernels introduced in Section \ref{subsec:srk}.

\begin{proof}[Proof of Lemma \ref{lem:srk}]
Define $\mL_1, \mL_2 : X \to \mathbb{R}$ to be
$$\mL_1(x) = \frac{d\mP_+}{d\mQ} (x) - \frac{d\mP_-}{d\mQ} (x) \quad \text{and} \quad \mL_2(x) = \frac{d\mP_+}{d\mQ} (x) + \frac{d\mP_-}{d\mQ} (x) - 2$$
Note that if $x \in S$, then the triangle inequality implies that
\begin{align*}
P_A(x, 1) &\le \frac{1}{2} \left( 1 + \frac{a}{4|\mu_2|} \cdot |\mL_2(x)| + \frac{1}{4|\mu_1|} \cdot |\mL_1(x)| \right) \le 1 \\
P_A(x, 1) &\ge \frac{1}{2} \left( 1 - \frac{a}{4|\mu_2|} \cdot |\mL_2(x)| - \frac{1}{4|\mu_1|} \cdot |\mL_1(x)| \right) \ge 0
\end{align*}
Similar computations show that $0 \le P_A(x, 0) \le 1$ and $0 \le P_A(x, -1) \le 1$, implying that each of these probabilities is well-defined. Now let $R_1 = \bP_{X \sim \mP_+}[X \in S]$, $R_0 = \bP_{X \sim \mQ}[X \in S]$ and $R_{-1} = \bP_{X \sim \mP_-}[X \in S]$ where $R_1, R_0, R_{-1} \ge 1 - \delta$ by assumption.

We now define several useful events. For the sake of analysis, consider continuing to iterate Step 2 even after $z$ is set for the first time for a total of $N$ iterations. Let $A_i^1$, $A_i^0$ and $A_i^{-1}$ be the events that $z$ is set in the $i$th iteration of Step 2 when $B = 1$, $B = 0$ and $B = -1$, respectively. Let $B_i^1 = (A_1^1)^C \cap (A_2^1)^C \cap \cdots \cap (A^1_{i - 1})^C \cap A_i^1$ be the event that $z$ is set for the first time in the $i$th iteration of Step 2. Let $C^1 = A_1^1 \cup A_2^1 \cup \cdots \cup A_N^1$ be the event that $z$ is set in some iteration of Step 2. Define $B_i^0$, $C^0$, $B_i^{-1}$ and $C^{-1}$ analogously. Let $z_0$ be the initialization of $z$ in Step 1.

Now let $Z_1 \sim \mD_1 = \mL(3\textsc{-srk}(1))$, $Z_0 \sim \mD_0 = \mL(3\textsc{-srk}(0))$ and $Z_{-1} \sim \mD_{-1} = \mL(3\textsc{-srk}(-1))$. Note that $\mL(Z_t|B_i^t) = \mL(Z_t|A_i^t)$ for each $t \in \{-1, 0, 1\}$ since $A_i^t$ is independent of $A_1^t, A_2^t, \dots, A_{i-1}^t$ and the sample $z'$ chosen in the $i$th iteration of Step 2. The independence between Steps 2.1 and 2.3 implies that
\allowdisplaybreaks
\begin{align*}
\bP\left[A_i^1\right] &= \bE_{x \sim \mQ}\left[ \frac{1}{2} \left( 1+ \frac{a}{4\mu_2} \cdot \mL_2(x) + \frac{1}{4\mu_1} \cdot \mL_1(x) \right) \cdot \mathbf{1}_{S}(x) \right] \\
&= \frac{1}{2} R_0 + \frac{a}{8\mu_2} \left( R_1 + R_{-1} - 2R_0 \right) + \frac{1}{8\mu_1} \left( R_1 - R_{-1} \right) \ge \frac{1}{2} - \frac{\delta}{2} \left( 1 + \frac{a}{2}|\mu_2|^{-1} + \frac{1}{4}|\mu_1|^{-1} \right) \\
\bP\left[A_i^0 \right] &= \bE_{x \sim \mQ}\left[ \frac{1}{2} \left( 1 - \frac{1 - a}{4\mu_2} \cdot \mL_2(x) \right) \cdot \mathbf{1}_{S}(x) \right] \\
&= \frac{1}{2} R_0 - \frac{1 - a}{8\mu_2} \left( R_1 + R_{-1} - 2R_0 \right) \ge \frac{1}{2} - \frac{\delta}{2} \left( 1 + \frac{1 - a}{4} \cdot |\mu_2|^{-1} \right) \\
\bP\left[A_i^{-1}\right] &= \bE_{x \sim \mQ}\left[ \frac{1}{2} \left( 1+ \frac{a}{4\mu_2} \cdot \mL_2(x) - \frac{1}{4\mu_1} \cdot \mL_1(x) \right) \cdot \mathbf{1}_{S}(x) \right] \\
&= \frac{1}{2} R_0 + \frac{a}{8\mu_2} \left( R_1 + R_{-1} - 2R_0 \right) - \frac{1}{4\mu_1} \left( R_1 - R_{-1} \right) \ge \frac{1}{2} - \frac{\delta}{2} \left( 1 + \frac{a}{2}|\mu_2|^{-1} + \frac{1}{4}|\mu_1|^{-1} \right)
\end{align*}
The independence of the $A_i^t$ for each $t \in \{-1, 0, 1\}$ implies that
$$1 - \bP\left[ C^t \right] = \prod_{i = 1}^N \left( 1 - \bP\left[A_i^t\right] \right) \le \left( \frac{1}{2} + \frac{\delta}{2} \left( 1 + \frac{1}{2}|\mu_2|^{-1} + |\mu_1|^{-1} \right) \right)^N$$
Note that $\mL(Z_t|A_i^t)$ are each absolutely continuous with respect to $\mQ$ or each $t \in \{-1, 0, 1\}$, with Radon-Nikodym derivatives given by
\allowdisplaybreaks
\begin{align*}
\frac{d\mL(Z_1|B_i^1)}{d\mQ} (x) = \frac{d\mL(Z_1|A_i^1)}{d\mQ} (x) &= \frac{1}{2\cdot \bP\left[A_i^1\right]} \left( 1+ \frac{a}{4\mu_2} \cdot \mL_2(x) + \frac{1}{4\mu_1} \cdot \mL_1(x) \right) \cdot \mathbf{1}_S(x) \\
\frac{d\mL(Z_0|B_i^0)}{d\mQ} (x) = \frac{d\mL(Z_0|A_i^0)}{d\mQ} (x) &= \frac{1}{2\cdot \bP\left[A_i^1\right]} \left( 1 - \frac{1 - a}{4\mu_2} \cdot \mL_2(x) \right) \cdot \mathbf{1}_S(x) \\
\frac{d\mL(Z_{-1}|B_i^{-1})}{d\mQ} (x) = \frac{d\mL(Z_{-1}|A_i^{-1})}{d\mQ} (x) &= \frac{1}{2\cdot \bP\left[A_i^1\right]} \left( 1+ \frac{a}{4\mu_2} \cdot \mL_2(x) - \frac{1}{4\mu_1} \cdot \mL_1(x) \right) \cdot \mathbf{1}_S(x)
\end{align*}
Fix one of $t \in \{-1, 0, 1\}$ and note that since the conditional laws $\mL(Z_t|B_i^t)$ are all identical, we have that
$$\frac{d\mD_t}{d\mQ} (x) = \bP\left[C^t \right] \cdot \frac{d\mL(Z_t|B_1^t)}{d\mQ} (x) + \left( 1 - \bP\left[C^t \right] \right) \cdot \mathbf{1}_{z_0}(x)$$
Therefore it follows that
\begin{align*}
\TV\left( \mD_t, \mL(Z_t|B_1^t) \right) &= \frac{1}{2} \cdot \bE_{x \sim \mQ} \left[\left| \frac{d\mD_t}{d\mQ} (x) - \frac{d\mL(Z_t|B_1^t)}{d\mQ} (x) \right| \right] \\
&\le \frac{1}{2} \left( 1 - \bP\left[ C^t \right] \right) \cdot \bE_{x \sim \mQ} \left[ \mathbf{1}_{z_0}(x) + \frac{d\mL(Z_t|B_1^t)}{d\mQ} (x) \right] = 1 - \bP\left[ C^t \right]
\end{align*}
by the triangle inequality. Since $1+ \frac{a}{4\mu_2} \cdot \mL_2(x) + \frac{1}{4\mu_1} \cdot \mL_1(x) \ge 0$ for $x \in S$, we have that
\allowdisplaybreaks
\begin{align*}
&\bE_{x \sim \mQ} \left[\left| \frac{d\mL(Z_1|B_1^1)}{d\mQ} (x) - \left( 1+ \frac{a}{4\mu_2} \cdot \mL_2(x) + \frac{1}{4\mu_1} \cdot \mL_1(x) \right) \right| \right] \\ 
&\quad \quad = \left|\frac{1}{2\cdot \bP\left[A_i^1\right]} - 1 \right| \cdot \bE_{x \sim \mQ^*_n} \left[\left( 1+ \frac{a}{4\mu_2} \cdot \mL_2(x) + \frac{1}{4\mu_1} \cdot \mL_1(x) \right) \cdot \mathbf{1}_S(x) \right] \\
&\quad \quad \quad \quad + \bE_{x \sim \mQ} \left[ \left| 1+ \frac{a}{4\mu_2} \cdot \mL_2(x) + \frac{1}{4|\mu_1|} \cdot \mL_1(x) \right| \cdot \mathbf{1}_{S^C}(x) \right] \\
&\quad \quad \le \left| \frac{1}{2} - \bP[A_i^1] \right| + \bE_{x \sim \mQ} \left[ \left( 1+ \frac{a}{4|\mu_2|} \cdot \left( \frac{d\mP_+}{d\mQ} (x) + \frac{d\mP_-}{d\mQ} (x) +2 \right) \right) \cdot \mathbf{1}_{S^C}(x) \right] \\
&\quad \quad \quad \quad + \bE_{x \sim \mQ} \left[ \frac{1}{4|\mu_1|} \cdot \left( \frac{d\mP_+}{d\mQ} (x) + \frac{d\mP_-}{d\mQ} (x) \right) \cdot \mathbf{1}_{S^C}(x) \right] \\
&\quad \quad \le \frac{\delta}{2} \left( 1 + \frac{a}{2}|\mu_2|^{-1} + \frac{1}{4}|\mu_1|^{-1} \right) + \delta \left( 1 + a|\mu_2|^{-1} + \frac{1}{2}|\mu_1|^{-1} \right) = \delta \left( \frac{3}{2} + \frac{5}{4} |\mu_2|^{-1} + \frac{5}{8} |\mu_1|^{-1} \right)
\end{align*}
By analogous computations, we have that
\allowdisplaybreaks
\begin{align*}
\bE_{x \sim \mQ} \left[\left| \frac{d\mL(Z_0|B_1^0)}{d\mQ} (x) - \left( 1 - \frac{1 - a}{4\mu_2} \cdot \mL_2(x) \right) \right| \right] &\le 2\delta \left(1 + |\mu_1|^{-1} + |\mu_2|^{-1} \right) \\ 
\bE_{x \sim \mQ} \left[\left| \frac{d\mL(Z_{-1}|B_1^{-1})}{d\mQ} (x) - \left( 1+ \frac{a}{4\mu_2} \cdot \mL_2(x) - \frac{1}{4\mu_1} \cdot \mL_1(x) \right) \right| \right] &\le 2\delta \left(1 + |\mu_1|^{-1} + |\mu_2|^{-1} \right) 
\end{align*}
Now observe that
\allowdisplaybreaks
\begin{align*}
\frac{d\mP_+}{d\mQ}(x) &= \left( \frac{1 - a}{2} + \mu_1 + \mu_2 \right) \cdot \left( 1+ \frac{a}{4\mu_2} \cdot \mL_2(x) + \frac{1}{4\mu_1} \cdot \mL_1(x) \right) + (a - 2\mu_2) \cdot \left( 1 - \frac{1 - a}{4\mu_2} \cdot \mL_2(x) \right) \\
&\quad \quad + \left( \frac{1 - a}{2} - \mu_1 + \mu_2 \right) \cdot \left( 1+ \frac{a}{4\mu_2} \cdot \mL_2(x) - \frac{1}{4\mu_1} \cdot \mL_1(x) \right) \\
1 &= \frac{1 - a}{2} \cdot \left( 1+ \frac{a}{4\mu_2} \cdot \mL_2(x) + \frac{1}{4\mu_1} \cdot \mL_1(x) \right) + a \cdot \left( 1 - \frac{1 - a}{4\mu_2} \cdot \mL_2(x) \right) \\
&\quad \quad +\frac{1 - a}{2} \cdot \left( 1+ \frac{a}{4\mu_2} \cdot \mL_2(x) - \frac{1}{4\mu_1} \cdot \mL_1(x) \right) \\
\frac{d\mP_-}{d\mQ}(x) &= \left( \frac{1 - a}{2} - \mu_1 + \mu_2 \right) \cdot \left( 1+ \frac{a}{4\mu_2} \cdot \mL_2(x) + \frac{1}{4\mu_1} \cdot \mL_1(x) \right) + (a - 2\mu_2) \cdot \left( 1 - \frac{1 - a}{4\mu_2} \cdot \mL_2(x) \right) \\
&\quad \quad + \left( \frac{1 - a}{2} + \mu_1 + \mu_2 \right) \cdot \left( 1+ \frac{a}{4\mu_2} \cdot \mL_2(x) - \frac{1}{4\mu_1} \cdot \mL_1(x) \right)
\end{align*}
Let $\mD^*$ be the mixture of $\mL(Z_1 | B_1^1), \mL(Z_0 | B_1^0)$ and $\mL(Z_{-1} | B_1^{-1})$ with weights $\frac{1 - a}{2} + \mu_1 + \mu_2, a - 2\mu_2$ and $\frac{1 - a}{2} - \mu_1 + \mu_2$, respectively. It then follows by the triangle inequality that
\allowdisplaybreaks
\begin{align*}
&\TV\left( 3\textsc{-srk}(\textnormal{Tern}(a, \mu_1, \mu_2)), \mP_+ \right) \\
&\quad \quad \le \TV\left( \mD^*, \mP_+ \right) + \TV\left( \mD^*, 3\textsc{-srk}(\textnormal{Tern}(a, \mu_1, \mu_2)) \right) \\
&\quad \quad \le \left( \frac{1 - a}{2} + \mu_1 + \mu_2 \right) \cdot \bE_{x \sim \mQ} \left[\left| \frac{d\mL(Z_1|B_1^1)}{d\mQ} (x) - \left( 1+ \frac{a}{4\mu_2} \cdot \mL_2(x) + \frac{1}{4\mu_1} \cdot \mL_1(x) \right) \right| \right] \\
&\quad \quad \quad \quad + \left( a - 2\mu_2 \right) \cdot \bE_{x \sim \mQ} \left[\left| \frac{d\mL(Z_0|B_1^0)}{d\mQ} (x) - \left( 1 - \frac{1 - a}{4\mu_2} \cdot \mL_2(x) \right) \right| \right] \\
&\quad \quad \quad \quad + \left( \frac{1 - a}{2} - \mu_1 + \mu_2 \right) \cdot \bE_{x \sim \mQ} \left[\left| \frac{d\mL(Z_{-1}|B_1^{-1})}{d\mQ} (x) - \left( 1+ \frac{a}{4\mu_2} \cdot \mL_2(x) - \frac{1}{4\mu_1} \cdot \mL_1(x) \right) \right| \right] \\
&\quad \quad \quad \quad + \left( \frac{1 - a}{2} + \mu_1 + \mu_2 \right) \cdot \TV\left( \mD_1, \mL(Z_1|B_1^1) \right)  + \left( a - 2\mu_2 \right) \cdot \TV\left( \mD_1, \mL(Z_0|B_1^0) \right) \\
&\quad \quad \quad \quad + \left( \frac{1 - a}{2} - \mu_1 + \mu_2 \right) \cdot \TV\left( \mD_{-1}, \mL(Z_{-1}|B_1^{-1}) \right) \\
&\quad \quad \le 2\delta \left(1 + |\mu_1|^{-1} + |\mu_2|^{-1} \right) + \left( \frac{1}{2} + \delta \left( 1 + |\mu_1|^{-1} + |\mu_2|^{-1} \right) \right)^N
\end{align*}
A symmetric argument shows analogous upper bounds on both $\TV\left( 3\textsc{-srk}(\textnormal{Tern}(a, -\mu_1, \mu_2)), \mP_- \right)$ and $\TV\left( 3\textsc{-srk}(\textnormal{Tern}(a, 0, 0)), \mQ \right)$, completing the proof of the lemma.
\end{proof}

\subsection{Proofs for Label Generation}
\label{sec:app-label-generation}

In this section, we give the two deferred proofs from Section \ref{subsec:2-mixtures-slr}.

\begin{proof}[Proof of Lemma \ref{lem:imbalanced-planted-label}]
This lemma follows from a similar argument to Lemma \ref{lem:planted-label}. As in Lemma \ref{lem:planted-label}, the given conditions on $C, \gamma, \mu'$ and $N$ imply that
$$2\left( \frac{\gamma \cdot y'}{\mu'(1 + \gamma^2)} \right)^2 \le 1$$
and thus $X'$ is well-defined almost surely. First observe that if $Z = \mu'' \cdot u + G'$ where $G' \sim \mN(0, I_d)$ then
$$X' = \frac{a\gamma \cdot y'}{1 + \gamma^2} \cdot u + \frac{\gamma \cdot y'}{\mu'(1 + \gamma^2)} \cdot G' + \frac{1}{\sqrt{2}} \cdot \sqrt{1 - 2\left( \frac{\gamma \cdot y'}{\mu'(1 + \gamma^2)} \right)^2} \cdot G + \frac{1}{\sqrt{2}} \cdot W$$
where $a = \mu''/\mu'$. Thus by the same argument as in Lemma \ref{lem:planted-label}, we have that
$$\mL(X' | y') = \mN\left( \frac{a\gamma \cdot y}{1 + \gamma^2} \cdot u, \, I_d - \frac{\gamma^2}{1 + \gamma^2} \cdot uu^\top \right)$$
Now note that by the conditioning property of multivariate Gaussians, we have that
$$\mL(X|y) = \mN\left(\Sigma_{Xy}\Sigma_{yy}^{-1} \cdot y, \, \Sigma_{XX} - \Sigma_{Xy} \Sigma_{yy}^{-1} \Sigma_{yX} \right)$$
It is easily verified that
$$\Sigma_{Xy}\Sigma_{yy}^{-1} = \frac{a\gamma}{1 + \gamma^2} \cdot u \quad \text{and} \quad \Sigma_{XX} - \Sigma_{Xy} \Sigma_{yy}^{-1} \Sigma_{yX} = I_d - \frac{\gamma^2}{1 + \gamma^2} \cdot uu^\top$$
and thus $\mL(X|y)$ and $\mL(X'|y')$ are equidistributed. Since $y \sim \mN(0, 1 + \gamma^2)$, it follows by the same application of the conditioning property in Fact \ref{tvfacts} as in Lemma \ref{lem:planted-label} implies that
$$\TV\left( \mL(X, y), \mL(X', y') \right) \le \TV\left( \mL(y), \mL(y') \right) = O\left( N^{-C^2/2} \right)$$
which completes the proof of the lemma.
\end{proof}

\begin{proof}[Proof of Lemma \ref{lem:unplanted-label}]
This lemma follows from a similar argument to Lemma \ref{lem:planted-label}. As in Lemmas \ref{lem:planted-label} and \ref{lem:imbalanced-planted-label}, the given conditions imply that $X'$ is well-defined almost surely. Conditioned on $y'$, it holds that $Z, G$ and $W$ are independent. Therefore the three terms in the definition of $X'$ are independent and distributed as 
\begin{align*}
&\frac{\gamma \cdot y'}{\mu'(1 + \gamma^2)} \cdot Z \sim \mN\left(0, \, \left( \frac{\gamma \cdot y'}{\mu'(1 + \gamma^2)} \right)^2 \cdot I_d  \right), \\
&\frac{1}{\sqrt{2}} \cdot \sqrt{1 - 2\left( \frac{\gamma \cdot y'}{\mu'(1 + \gamma^2)} \right)^2} \cdot G \sim \mN\left(0, \, \frac{1}{2} \cdot I_d - \left( \frac{\gamma \cdot y'}{\mu'(1 + \gamma^2)} \right)^2 \cdot I_d \right) \quad \text{and} \\
&\frac{1}{\sqrt{2}} \cdot W \sim \mN\left(0, \, \frac{1}{2} \cdot I_d \right)
\end{align*}
conditioned on $y'$. It follows that $X' | y' \sim \mN(0, I_d)$ and thus $X'$ is independent of $y'$. Now let $X \in \mathbb{R}^d$ and $y \in \mathbb{R}$ be such that $X \sim \mN(0, I_d)$ and $y \sim \mN(0, 1 + \gamma^2)$ are independent. The same application of the conditioning property in Fact \ref{tvfacts} as in Lemmas \ref{lem:planted-label} and \ref{lem:imbalanced-planted-label} now completes the proof of the lemma.
\end{proof}

\section{Deferred Proofs from Part \ref{part:lower-bounds}}
\label{sec:appendix-3}

\subsection{Proofs from Secret Leakage and the $\pr{pc}_\rho$ Conjecture}
\label{sec:appendix-4}

In this section, we present the deferred proof of Lemma \ref{l:avgCorrLargeSets} from Section \ref{sec:2-secret-leakage}. The proof of this lemma is similar to the proof of Lemma 5.2 in \cite{feldman2013statistical}.

\begin{proof}[Proof of Lemma \ref{l:avgCorrLargeSets}]
The proof is almost identical to Lemma 5.2 in \cite{feldman2013statistical} and we give a sketch here. Lemma~\ref{l:avgCorr} implies that $\sum_{T\in A}\big| \la \Dh_S, \Dh_T\ra_D \big| \leq \sum_{T\in A} 2^{|S\cap T|} k^2 / n^2$. If the only constraint on $A$ is its cardinality, then the maximum value for the RHS is obtained by adding $S$ to $A$, next $\{T:|T\cap S|=k-1\}$, and so forth with decreasing size of $|T\cap S|$, and we assume that $A$ is defined in this manner. Letting $T_\lambda = \{T: |T\cap S|=\lambda\}|$, set $\lambda_0 = \min\{\lambda: T_\lambda\neq \varnothing\}$ so that $T_\lambda\subseteq A$ for $\lambda>\lambda_0$. We bound the ratio
$$
\frac{|T_j|}{|T_{j+1}|} = \frac{{k\choose j}\big(\frac nk\big)^{k-j}}{{k\choose j+1}\big(\frac nk\big)^{k-j-1}}\geq \frac{jn}{k^2}=j n^{2\delta}\quad \text{hence}\quad |T_j|\leq \frac{|T_0|}{(j-1)! n^{2\delta j}}\leq \frac{|\cS|}{(j-1)! n^{2\delta j}}\,.
$$ 
Now
$$
|A|\leq \sum_{j\geq \lambda_0} |T_j| \leq |\cS|n^{-2\delta \lambda_0}\sum_{j\geq \lambda_0} \frac1{(j-1)!n^{2\delta(j-\lambda_0)}}\leq 2 |\cS|n^{-2\delta \lambda_0}
$$ for $n$ greater than some constant. Thus if $|A|\geq 2|\cS|/ n^{2\ell \delta}$, we must conclude that $\ell \geq \lambda_0$. We bound the quantity $\sum_{T\in A} 2^{|S\cap T|} \leq \sum_{j=\lambda_0}^k 2^j|T_j\cap A|\leq 2^{\lambda_0}|T_{\lambda_0}\cap A|+\sum_{j=\lambda_0+1}^k 2^j|T_j|\leq 2^{\lambda_0}|A| + 2^{\lambda_0+2}|T_{\lambda_0+1}|\leq 2^{\lambda_0+3}|A|\leq 2^{\ell +3}|A|$. Here we used that $|T_{j+1}|\leq |T_j| n^{-2\delta}$ to bound by a geometric series and also that $T_{\lambda_0+1}\subseteq A$. Rearranging and combining with the inequality at the start of the proof concludes the argument.
\end{proof}

\subsection{Proofs for Reductions and Computational Lower Bounds}
\label{subsec:appendix-3-part-3}

In this section, we present a number of deferred proofs from Part \ref{part:lower-bounds}. The majority of these proofs are similar to other proofs presented in the main body of the paper.

\begin{proof}[Proof of Theorem \ref{thm:rslr-lb}]
To prove this theorem, we will to show that Theorem \ref{thm:slr-reduction} implies that $k\pr{-bpds-to-mslr}$ applied with $r > 2$ fills out all of the possible growth rates specified by the computational lower bound $n = \tilde{o}(k^2 \epsilon^2/\tau^4)$ and the other conditions in the theorem statement. As discussed above, it suffices to reduce in total variation to $\pr{mslr}(n, k, d, \tau, 1/r)$ where $1/r \le \epsilon$.

Fix a constant pair of probabilities $0 < q < p \le 1$ and any sequence of parameters $(n, k, d, \tau, \epsilon)$ all of which are implicitly functions of $n$ such that $(n, \epsilon^{-1})$ satisfies $\pr{(t)}$ and $(n, k, d, \tau, \epsilon)$ satisfy the conditions
$$n \le c \cdot \frac{k^2 \epsilon^2}{w^2 \cdot \tau^4 \cdot (\log n)^{4+2c'}}, \quad wk \le n^{1/6} \quad \text{and} \quad w k^2 \le d$$
for sufficiently large $n$, an arbitrarily slow-growing function $w = w(n) \to \infty$ at least satisfying that $w(n) = n^{o(1)}$, a sufficiently small constant $c > 0$ and a sufficiently large constant $c' > 0$. The rest of this proof will follow that of Theorem \ref{thm:rsme-lb} very closely. In order to fulfill the criteria in Condition \ref{cond:lb}, we specify $M, N, k_M, k_N$ and $n'$ exactly as in Theorem \ref{thm:rsme-lb}. As in Theorem \ref{thm:rsme-lb}, we have the inequalities
$$n' \le w^{-2} r^{2t} = O\left( \frac{r^{2t}}{n} \cdot \frac{k^2 \epsilon^2}{\tau^4 \cdot (\log n)^{2+2c'}} \right)$$
$$\tau \le \frac{c^{1/4} \epsilon^{1/2} k^{1/2}}{n^{1/4} (\log n)^{(2 + c')/2}} = \Theta \left( \frac{r^{t/2}}{n^{1/4}} \cdot \frac{k_M^{1/2}}{\sqrt{r^{t + 1} (\log n)^{2+c'}}} \right)$$
Furthermore, we also have that
$$\tau^2 \le \frac{c^{1/2} \cdot k}{wn^{1/2} \cdot (\log n)^{2+c'}} = O\left( \frac{r^t}{n} \cdot \frac{k_N k_M}{N \log (MN)} \right)$$
As long as $\sqrt{n} = \tilde{\Theta}(r^t)$ then: (2.1) the inequality above on $n'$ would imply that $(n', k, d, \tau, \epsilon)$ is in the desired hard regime; (2.2) $n$ and $n'$ have the same growth rate since $w = n^{o(1)}$; and (2.3) $n \gg M^3$, $d \ge M$ and taking $c'$ large enough would imply that $\tau$ satisfies the bounds needed to apply Theorem \ref{thm:slr-reduction} to yield the desired reduction. By Lemma \ref{lem:propT}, there is an infinite subsequence of the input parameters such that $\sqrt{n} = \tilde{\Theta}(r^t)$, which concludes the proof as in Theorem \ref{thm:rsme-lb}.
\end{proof}

\begin{proof}[Proof of Lemma \ref{lem:ghpm-test}]
First suppose that $M \sim \pr{ghpm}_D(n, r, C, D, \gamma)$ where $C$ and $D$ are each sequences of $r$ disjoint sets of size $K$. Since the $M_{ij}$ are independent for $1 \le i, j \le n$, we now have that
\begin{align*}
\bE[s_C(M)] &= \sum_{i, j = 1}^n \bE\left[M_{ij}^2 - 1\right] = rK^2 \cdot \gamma^2 + \frac{rK^2}{r - 1} \cdot \gamma^2 \\
\text{Var}\left[ s_C(M) \right] &= \sum_{i, j = 1}^n \text{Var}\left[M_{ij}^2 - 1\right] = rK^2 \cdot 4\gamma^2 + \frac{rK^2}{(r - 1)^3} \cdot \gamma^2 + 2n^2
\end{align*}
Here, we have used the following facts. If $X \sim \mN(0, 1)$, then
\begin{align*}
&\bE[(\gamma + X)^2 - 1] = \gamma^2, \quad \bE\left[\left(\frac{\gamma}{r - 1} + X\right)^2 - 1\right] = \frac{\gamma^2}{(r - 1)^2} \\
&\text{Var}[X^2 - 1] = 2, \quad \text{Var}[(\gamma + X)^2 - 1] = 4\gamma^2 + 2, \quad \text{Var}\left[\left(\frac{\gamma}{r - 1} + X\right)^2 - 1\right] = \frac{\gamma^2}{(r - 1)^4} + 2
\end{align*}
Note that $s_C(M)$ is invariant to permuting the rows and columns of $M$ and thus $s_C(M)$ is equidistirbuted under $M \sim \pr{ghpm}_D(n, r, C, D, \gamma)$ and $M \sim \pr{ghpm}_D(n, r, K, \gamma)$. Now Chebyshev's inequality implies the desired lower bound on $s_C(M)$ in (1) holds with probability $1 - o_n(1)$. Now observe that
$$s_I(M) \ge \sum_{h = 1}^r \sum_{i \in C_h} \sum_{j \in D_h} M_{ij} = Y$$
holds almost surely by definition when $M \sim \pr{ghpm}_D(n, r, C, D, \gamma)$. Note that $Y \sim \mN(rK^2 \gamma, rK^2)$ conditioned on $C$ and $D$ and therefore it holds that $Y \ge rK^2 \gamma - wr^{1/2} K$ with probability $1 - o_n(1)$. The second lower bound in (1) now follows since $s_I(M)$ is equidistirbuted under $M \sim \pr{ghpm}_D(n, r, C, D, \gamma)$ and $M \sim \pr{ghpm}_D(n, r, K, \gamma)$.

Now suppose that $M \sim \mN(0, 1)^{\otimes n \times n}$. In this case, $s_C(M) + n^2$ is distributed as $\chi^2(n^2)$ and the first upper bound in (2) holds by Chebyshev's inequality and the fact that $\chi^2(n^2)$ has variance $2n^2$. Now note
$$Y(C, D) = \sum_{h = 1}^r \sum_{i \in C_h} \sum_{j \in D_h} M_{ij} \sim \mN(0, rK^2)$$
Standard gaussian tail bounds imply that
\begin{align*}
\bP\left[ Y(C, D) > 2r K^{3/2}w \sqrt{\left(\log n + \log r \right)} \right] &\le \frac{1}{\sqrt{2\pi}} \exp\left( -\frac{1}{2rK^2} \left( 2r K^{3/2} w \sqrt{\left(\log n + \log r \right)} \right)^2 \right) \\
&\le (nr)^{-2rKw^2}
\end{align*}
A crude upper bound on the number of pairs $(C, D)$ is
$$\left( \binom{n}{rK} r^{rK} \right)^2 = o\left( (nr)^{2rK} \right)$$
and therefore a union bound implies that $s_I(M) = \max_{C, D} Y(C, D) \le 2r K^{3/2}w \sqrt{\left(\log n + \log r \right)}$ with probability $1 - o_n(1)$. This completes the proof of the lemma.
\end{proof}

\begin{proof}[Proof of Corollary \ref{cor:bhpm}]
Consider the following reduction $\mathcal{A}$ that adds a simple post-processing step to $k$\textsc{-pds-to-ghpm} as in Corollary \ref{thm:isbm-mod}. On input graph $G$ with $N$ vertices:
\begin{enumerate}
\item Form the graph $M_{\text{R}}$ by applying $k$\textsc{-pds-to-ghpm} to $G$ with parameters $N, r, k, E, \ell, n, s$ and $\mu$ where $\mu$ is given by
$$\mu = \frac{r^{t} \sqrt{r}}{(r - 1)} \cdot \Phi^{-1}\left( \frac{1}{2} + \frac{1}{2} \cdot \min\{P_0, 1 - P_0\}^{-1} \cdot \gamma \right)$$
and $\Phi^{-1}$ is the inverse of the standard normal CDF.
\item Let $G_1$ be the graph where each edge $(i, j)$ with is in $G_1$ if and only if $(M_{\text{R}})_{ij} \ge 0$. Now form $G_2$ as in Step 2 of Corollary \ref{thm:isbm-mod}, while restricting to edges between the two parts.
\end{enumerate}
This clearly runs in $\text{poly}(N)$ time and it suffices to establish its approximate Markov transition properties. Let $\mathcal{A}_1$ denote the first step with input $G$ and output $M_{\text{R}}$, and let $\mathcal{A}_2$ denote the second step with input $M_{\text{R}}$ and output $G_2$. Let $C$ and $D$ be two fixed sequences, each consisting of $r$ disjoint subsets of $[ksr^t]$ of size $kr^{t - 1}$. Let $P_1, P_2 \in (0, 1)$ be
$$P_{1} = \Phi\left( \frac{\mu(r - 1)}{r^t \sqrt{r}}\right) \quad \text{and} \quad P_{2} = \Phi\left( - \frac{\mu}{r^t \sqrt{r}} \right)$$
Note that by the definition of $\mu$, we have that $P_1 = \frac{1}{2} + \frac{1}{2} \cdot \min\{P_0, 1 - P_0\}^{-1} \cdot \gamma$. Now note that $\mathcal{A}_2$ applied to $M_{\text{R}} \sim \pr{ghpm}_D(ksr^t, r, C, D, \gamma)$ yields an instance of $\pr{bhpm}_D(ksr^t, r, C, D, \gamma)$ with the following modified edge probabilities:
\begin{enumerate}
\item The edge probabilities between vertices $C_h$ and $D_h$ for each $1 \le h \le r$ are still $P_0 + \gamma$.
\item The edge probabilities between $C_{h_1}$ and $D_{h_2}$ for each $h_1 \neq h_2$ are now
$$P_0 + 2\min\{P_0, 1 - P_0\} \cdot \left( \Phi\left( - \frac{\mu}{r^t \sqrt{r}} \right) - \frac{1}{2} \right) = P_0 + 2 \min\{P_0, 1 - P_0\} \cdot \left( P_2 - \frac{1}{2} \right)$$
\item All other edge probabilities are still $P_0$.
\end{enumerate}
We now apply a similar sequence of inequalities as in Corollary \ref{thm:isbm-mod}. For now assume that $P_0 \le 1/2$. Using the fact that all of the edge indicators of this model and the usual definition of $\pr{bhpm}$ are independent, the tensorization property in Fact \ref{tvfacts} and Lemma \ref{lem:bintv}, we now have that
\allowdisplaybreaks
\begin{align*}
&\TV\left( \mathcal{A}_2\left( \pr{ghpm}_D(ksr^t, r, C, D, \gamma) \right), \, \pr{bhpm}_D(ksr^t, r, C, D, \gamma) \right) \\
&\quad \quad \le \TV\left( \text{Bern}\left( P_0 - \frac{\gamma}{r - 1} \right)^{\otimes k^2r^{2t - 1}(r - 1)}, \, \text{Bern}\left( P_0 + 2 P_0 \cdot \left( P_2 - \frac{1}{2} \right) \right)^{\otimes k^2r^{2t - 1}(r - 1)} \right) \\
&\quad \quad \le \left| \frac{\gamma}{r - 1} + 2P_0 \cdot \left( P_2 - \frac{1}{2} \right) \right| \cdot \sqrt{\frac{k^2r^{2t - 1}(r - 1)}{2\left( P_0 - \frac{\gamma}{r - 1} \right) \left(1 - P_0 + \frac{\gamma}{r - 1} \right)}} \\
&\quad \quad \le \left| \frac{\gamma}{r - 1} + 2P_0 \cdot \left( P_2 - \frac{1}{2} \right) \right| \cdot O\left( kr^{t} \right)
\end{align*}
where the third inequality uses the fact that $P_0$ is bounded away from $0$ and $1$ and $\gamma = o(1)$. Now note that
$$\frac{\gamma}{r - 1} = \frac{2P_0}{r - 1} \cdot \left( \Phi\left( \frac{\mu(r - 1)}{r^t \sqrt{r}}\right) - \frac{1}{2} \right)$$
Using the standard Taylor approximation for $\Phi(x) - 1/2$ around zero when $x \in (-1, 1)$, we have
\begin{align*}
\left| \frac{\gamma}{r - 1} + 2P_0 \cdot \left( P_2 - \frac{1}{2} \right) \right| &= 2P_0 \cdot \left| \frac{1}{r - 1} \left( \Phi\left( \frac{\mu(r - 1)}{r^t \sqrt{r}}\right) - \frac{1}{2} \right) - \left( \Phi\left( - \frac{\mu}{r^t \sqrt{r}} \right) - \frac{1}{2} \right) \right| \\
&= O\left( \frac{\mu^3 \sqrt{r}}{r^{3t}} \right)
\end{align*}
Therefore we have that
$$\TV\left( \mathcal{A}_2\left( \pr{ghpm}_D(ksr^t, r, C, D, \gamma) \right), \, \pr{bhpm}_D(ksr^t, r, C, D, \gamma) \right) = O\left( \frac{k\mu^3 \sqrt{r}}{r^{2t}} \right)$$
A nearly identical argument considering the complement of the graph $G_1$ and replacing with $P_0$ with $1 - P_0$ establishes this bound in the case when $P_0 > 1/2$. Observe that $\mathcal{A}_2 \left( \mathcal{N}(0, 1)^{\otimes n \times n} \right) \sim \mG_B(n, n, P_0)$. Now consider applying Lemma \ref{lem:tvacc} to the steps $\mathcal{A}_1$ and $\mathcal{A}_2$ as in Corollary \ref{thm:isbm-mod}. It can be verified that the given bound on $\gamma$ yields the condition on $\mu$ needed to apply Theorem \ref{thm:ghpm} if $c > 0$ is sufficiently small. Thus $\epsilon_1$ is bounded by Theorem \ref{thm:ghpm} and $\epsilon_2$ is bounded by the argument above after averaging over $C$ and $D$ and applying the conditioning property of Fact \ref{tvfacts}. This application of Lemma \ref{lem:tvacc} therefore yields the desired two approximate Markov transition properties and completes the proof of the corollary.
\end{proof}

\begin{proof}[Proof of Theorem \ref{thm:semi-cr-lb}]
As discussed in the beginning of this section, it suffices to map to $\mG(n, P_0 - \mu_1)$ under $H_0$ and $\pr{tsi}(n, k, k_1, P_0, \mu_1, \mu_2, \mu_3)$ under $H_1$ where $\mu_3 = P_1 - P_0$ and $\mu_1, \mu_2 \ge 0$. Thus it suffices to show that the reduction $\mathcal{A}$ in Corollary \ref{cor:semi-cr-gen} fills out all of the possible growth rates specified by the computational lower bound $\frac{(P_1 - P_0)^2}{P_0(1 - P_0)} = \tilde{o}(n/k^2)$ and the other conditions in the theorem statement. Fix a constant pair of probabilities $0 < q < p \le 1$ and any sequence of parameters $(n, k, P_1, P_0)$ all of which are implicitly functions of $n$ such that
$$\frac{(P_1 - P_0)^2}{P_0(1 - P_0)} \le c \cdot \frac{n}{w^3 \cdot k^2 \log n} \quad \text{and} \quad \min\{P_0, 1 - P_0 \} = \Omega_n(1)$$
for sufficiently large $n$, sufficiently small constant $c > 0$ and an arbitrarily slow-growing increasing positive integer-valued function $w = w(n) \to \infty$ at least satisfying that $w(n) = n^{o(1)}$. As in the proof of Theorem \ref{thm:rsme-lb}, it suffices to specify:
\begin{enumerate}
\item a sequence $(N, k_N)$ such that the $k\pr{-pds}(N, k_N, p, q)$ is hard according to Conjecture \ref{conj:hard-conj}; and 
\item a sequence $(n', k', P_1, P_0, s, t, \mu)$ satisfying: (2.1) the parameters $(n', k', P_1, P_0)$ are in the regime of the desired computational lower bound for $\pr{semi-cr}$; (2.2) $(n', k')$ have the same growth rates as $(n, k)$; and (2.3) such that $\mG(n', P_0 - \mu_1)$ and $\pr{tsi}(n', k', k'/2, P_0, \mu_1, \mu_2, P_1 - P_0)$, where $k'$ is even and $\mu_1, \mu_2 \ge 0$, can be produced by $\mathcal{A}$ with input $k\pr{-pds}(N, k_N, p, q)$.
\end{enumerate}
We choose these parameters as follows:
\begin{itemize}
\item let $t$ be such that $3^t$ is the smallest power of 3 larger than $k/\sqrt{n}$ and let $s = \lceil 2n/3k \rceil$;
\item let $\mu \in (0, 1)$ be given by
$$\mu = 3^t \cdot \Phi^{-1} \left( \frac{1}{2} + \frac{1}{2} \cdot \min\{P_0, 1 - P_0 \}^{-1} (P_1 - P_0) \right)$$
\item now let
$$k_N = \left\lfloor \frac{1}{2}\left( 1 + \frac{p}{Q} \right)^{-1} w^{-2} \cdot \sqrt{n} \right\rfloor$$
where $Q = 1 - \sqrt{(1 - p)(1 - q)} + \mathbf{1}_{\{ p = 1\}} \left( \sqrt{q} - 1 \right)$; and
\item let $n' = 3k_N s \cdot \frac{3^t - 1}{2}$, let $k' = (3^t - 1)k_N$ and let $N = wk_N^2$.
\end{itemize}
Note that $3^t = \Theta(k/\sqrt{n})$, $s = \Theta(n/k)$ and $3^t k_N s \le \text{poly}(N)$. Note that this choice of $\mu$ implies that
$$P_1 = P_0 + 2\min\{P_0, 1 - P_0 \} \cdot \left( \Phi\left( \frac{\mu}{3^t} \right) - \frac{1}{2} \right)$$
which implies that the instance of $\pr{tsi}$ output by $\mathcal{A}$ has edge density $P_1$ on its $k'$-vertex the planted dense subgraph. It follows that
\begin{align*}
n' &\asymp 3^t k_N s \asymp \frac{k}{\sqrt{n}} \cdot \frac{n}{k} w^{-2} \cdot \sqrt{n} \asymp w^{-2} \cdot n \quad \text{and} \quad k' \asymp 3^t k_N \asymp w^{-2} k \\
\frac{(P_1 - P_0)^2}{P_0(1 - P_0)} &\le c \cdot \frac{n}{w^3 \cdot k^2 \log n} \lesssim c \cdot \frac{n'}{w \cdot (k')^2 \log n'} \\
m &\le 2\left( \frac{p}{Q} + 1 \right) wk_N^2 \le w^{-1} \sqrt{n} \cdot k_N \le 3^t k_N s \\
\mu &\lesssim 3^t \cdot (P_1 - P_0) \lesssim 3^t \cdot \frac{\sqrt{n}}{w^{3/2} \cdot k \sqrt{\log n'}} \le \frac{c}{w^{3/2} \sqrt{\log n'}}
\end{align*}
where the last bound above follows from the fact that $\Phi(x) - 1/2 \sim x$ if $|x| \to 0$. Here, $m$ is the smallest multiple of $k_N$ larger $\left( \frac{p}{Q} + 1 \right) N$. Now note that: (2.1) the third inequality above on $(P_1 - P_0)^2/P_0(1 - P_0)$ implies that $(n', k', P_1, P_0)$ is in the desired hard regime; (2.2) $(n, n')$ and $(k, k')$ have the same growth rates since $w = n^{o(1)}$; and (2.3) the last two bounds above imply that taking $c$ small enough yields the conditions needed to apply Corollary \ref{cor:semi-cr-gen} to yield the desired reduction. This completes the proof of the theorem.
\end{proof}

\begin{proof}[Proof of Lemma \ref{lem:truncgauss}]
The parameters $a, \mu_1, \mu_2$ for which these distributional statements are true are given by
\allowdisplaybreaks
\begin{align*}
a &= \Phi(\tau) - \Phi(-\tau) \\
\mu_1 &= \frac{1}{2} \left( (1 - \Phi(\tau - \mu)) - \Phi(-\tau - \mu) \right) = \frac{1}{2} \left( \Phi(\tau + \mu) - \Phi(\tau - \mu) \right) \\
\mu_2 &= \frac{1}{2} \left( \Phi(\tau) - \Phi(-\tau) \right) - \frac{1}{2} \left( \Phi(\tau + \mu) - \Phi(-\tau + \mu) \right) = \frac{1}{2} \left( 2 \cdot \Phi(\tau) - \Phi(\tau + \mu) - \Phi(\tau - \mu) \right)
\end{align*}
Now note that
$$\mu_1 = \frac{1}{2} \left( \Phi(\tau + \mu) - \Phi(\tau - \mu) \right) = \frac{1}{2\sqrt{2\pi}} \int_{\tau - \mu}^{\tau + \mu} e^{-t^2/2} dt = \Theta(\mu)$$
and is positive since $e^{-t^2/2}$ is bounded on $[\tau - \mu, \tau + \mu]$ as $\tau$ is constant and $\mu \to 0$. Furthermore, note that
\begin{align*}
\mu_2 &= \frac{1}{2} \left( 2 \cdot \Phi(\tau) - \Phi(\tau + \mu) - \Phi(\tau - \mu) \right) = \frac{1}{2\sqrt{2\pi}} \int_{\tau - \mu}^{\tau} e^{-t^2/2} dt - \frac{1}{2\sqrt{2\pi}} \int_{\tau}^{\tau + \mu} e^{-t^2/2} dt \\
&= \frac{1}{2\sqrt{2\pi}} \int_{\tau}^{\tau + \mu} \left( e^{-(t - \mu)^2/2} - e^{-t^2/2}\right) dt = \frac{1}{2\sqrt{2\pi}} \int_{\tau}^{\tau + \mu}  e^{-t^2/2} \left(e^{t\mu - \mu^2/2} - 1 \right) dt 
\end{align*}
Now note that as $\mu \to 0$ and for $t \in [\tau, \tau + \mu]$, it follows that $0 < e^{t\mu - \mu^2/2} - 1= \Theta(\mu)$. This implies that $0 < \mu_2 = \Theta(\mu^2)$, as claimed.
\end{proof}

\begin{proof}[Proof of Theorem \ref{thm:glsm-lb}]
To prove this theorem, we will to show Theorem \ref{lem:univlem} implies that $k\textsc{-bpds-to-glsm}$ fills out all of the possible growth rates specified by the computational lower bound $n = \tilde{o}\left(\tau_{\mU}^{-4}\right)$ and the other conditions in the theorem statement, as in the proof of Theorems \ref{thm:rsme-lb} and \ref{thm:uslr-lb}. Fix a constant pair of probabilities $0 < q < p \le 1$ and any sequence $(n, k, d, \mU)$ where $\mU = \left( \mD, \mQ, \{ \mP_{\nu} \}_{\nu \in \mathbb{R}} \right) \in \pr{uc}(n)$ all of which are implicitly functions of $n$ with
$$n \le \frac{c}{\tau_{\mU}^4 \cdot w^2 \cdot (\log n)^{2}} \quad \text{and} \quad w k^2 \le d$$
for sufficiently large $n$, an arbitrarily slow-growing function $w = w(n) \to \infty$ and a sufficiently small constant $c > 0$. Now consider specifying the parameters $M, N, k_M, k_N$ and $t$ exactly as in Theorem \ref{thm:uslr-lb}. Now note that under these parameter settings, we have that
$$\tau_{\mU} \le \frac{c^{1/4}}{n^{1/4} w^{1/2} \sqrt{\log n}} \le 2c^{1/4} \cdot \sqrt{\frac{k_N}{N \log N}}$$
Therefore $\tau_{\mU}$ satisfies the conditions needed to apply Theorem \ref{lem:univlem} for a sufficiently small $c > 0$. The other parameters $(n, k, d, \mU)$ and $(M, N, k_M, k_N, p, q)$ can also be verified to satisfy the conditions of this theorem. We now have that $k\pr{-bpds}(M, N, k_M, k_N, p, q)$ is hard according to Conjecture \ref{conj:hard-conj}, and that $\pr{glsm}(n, k, d, \mU)$ can be produced by the reduction $k\textsc{-bpds-to-glsm}$ applied to $\pr{bpds}(M, N, k_M, k_N, p, q)$. This verifies the criteria in Condition \ref{cond:lb} and, following the argument in Section \ref{subsec:2-tvreductions}, Lemma \ref{lem:3a} now implies the theorem.
\end{proof}

\end{document}